\documentclass[nofootinbib,superscriptaddress,floatfix,pra,table,aps, onecolumn]{revtex4-2}

\usepackage{scrextend}
\usepackage{soul}
\usepackage[pdftex]{graphicx}
\usepackage{dcolumn}
\usepackage{bm}
\usepackage{epsfig}
\usepackage{amsmath}
\usepackage{amssymb} 
\usepackage{amsfonts}
\usepackage{pifont} 
\usepackage{mathtools}
\usepackage{multirow}
\usepackage{float}
\usepackage[utf8]{inputenc} 
\usepackage{epstopdf}
\usepackage{dsfont}
\usepackage{MnSymbol}
\usepackage{xspace}
\usepackage[most]{tcolorbox}
\tcbuselibrary{breakable}
\usepackage{multirow}
\usepackage{enumitem}
\usepackage{moreenum}
\usepackage{csquotes}
\usepackage{tabularray}
\usepackage{booktabs}
\usepackage[linesnumbered,lined,boxed,commentsnumbered,ruled,vlined]{algorithm2e}
\usepackage{dashbox}
\usepackage[dvipsnames]{xcolor}

\usepackage{amsthm}
\newtheorem{theorem}{Theorem}
\newtheorem*{theorem*}{Theorem}
\newtheorem{lemma}[theorem]{Lemma}
\newtheorem*{lemma*}{Lemma}
\newtheorem{conjecture}[theorem]{Conjecture}
\newtheorem{corollary}[theorem]{Corollary}
\newtheorem*{corollary*}{Corollary}
\newtheorem{proposition}[theorem]{Proposition}
\newtheorem*{proposition*}{Proposition}
\newtheorem{definition}[theorem]{Definition}
\newtheorem*{definition*}{Definition}
\newtheorem{example}[theorem]{Example}
\newtheorem*{example*}{Example}

\newtcbtheorem[]{tcexample}{Example}{breakable, colback=orange!3!white,colframe=orange!85!black, fonttitle=\bfseries}{Exa}

\DeclareMathOperator{\spn}{span}

\def\g{\mathfrak{g}}

\def\gl{\mathfrak{l}}

\def\Z{\mathbb{Z}}
\def\N{\mathbb{N}}
\def\C{\mathbb{C}}
\def\R{\mathbb{R}}
\def\Q{\mathbb{Q}}

\def\dd{\mathrm{d}}

\renewcommand{\sl}[2]{\mathfrak{sl}({#1},{#2})}

\newcommand{\der}[1]{\frac{\dd}{\dd{#1}}}
\newcommand{\x}[2]{\chi_{#1}(#2)}

\newcommand{\ket}[1]{\ensuremath{| #1 \rangle}{}}

\newcommand{\lie}[1]{\langle #1 \rangle_{\mathrm{Lie}}{}}
\newcommand{\kap}{\mathcal{K}}

\newcommand{\im}[1]{\mathrm{Im}(#1)}
\newcommand{\re}[1]{\mathrm{Re}(#1)}

\let\oldcite\cite
\renewcommand{\cite}[1]{\textup{\oldcite{#1}}}

\usepackage{bbm}

\newcommand{\up}{\simeq}

\newcommand{\graphlegend}[1]
{\begin{tikzpicture}[scale=0.75, every node/.style={transform shape}]
    \def\axy{0.3} 
    \def\bxy{0.325}   
    \coordinate (x) at (-2,0);
    \coordinate (z) at (2,0);	
    \node[ellipse, minimum width=0.6cm, minimum height=0.65cm, very thick,  draw = #1] (E2) at (x) {\large$x$};
    \node[ellipse, minimum width=0.6cm, minimum height=0.65cm, very thick, draw = #1] (E1) at (z) {\large$z$};
    \path (x) ++({\axy*cos(0)}, {\bxy*sin(0)}) coordinate (P1);
    \path (z) ++({\axy*cos(-180)}, {\bxy*sin(-180)}) coordinate (P2);
    \draw[->, very thick, draw=#1] (0+0.1,0)-- (0+0.3,0);
    \draw[very thick, draw = #1] (P1) -- (P2);
    \node at (0,0.35) {\large$y$};
    \node at (4,0) {\large$\hat{=}$};
    \node at (8.5,0) {\large$[x,y]=\kappa z$\quad with $\kappa\in\mathbb{F}^*$.};
\end{tikzpicture}
}

\newcommand{\graphlegendmod}[1]
{\begin{tikzpicture}[scale=0.75, every node/.style={transform shape}]
    \def\axy{0.3} 
    \def\bxy{0.325}   
    \coordinate (x) at (-2,0);
    \coordinate (z) at (2,0);	
    \node[ellipse, minimum width=0.6cm, minimum height=0.65cm, very thick,  draw = #1, fill = {rgb,255: red,77; green,160; blue,204}] (E2) at (x) {};
    \node (E2) at (x) {\large$\g_j$};
    \node[ellipse, minimum width=0.6cm, minimum height=0.65cm, very thick, draw = #1,  fill = {rgb,255: red,77; green,160; blue,204}] (E1) at (z) {}; 
    \node (E1) at (z) {\large$\g_k$};
    \path (x) ++({\axy*cos(0)}, {\bxy*sin(0)}) coordinate (P1);
    \path (z) ++({\axy*cos(-180)}, {\bxy*sin(-180)}) coordinate (P2);
    \draw[->, very thick, draw=#1] (0+0.1,0)-- (0+0.3,0);
    \draw[very thick, draw = #1] (P1) -- (P2);
    \node at (0,0.4) {\large$\g_\ell$};
    \node at (4,0) {\large$\hat{=}$};
    \node at (8.5,0) {\large$\{0\}\neq[\g_j,\g_\ell]\subseteq\g_k$.};
\end{tikzpicture}
}

\usepackage[bookmarks=false,pdfstartview={FitH}]{hyperref}
 
\usepackage{tikz,xcolor}
\usetikzlibrary{patterns,patterns.meta,decorations.pathreplacing,angles,quotes, plotmarks, arrows.meta, calc, hobby}
\usetikzlibrary{shapes.geometric}
\usetikzlibrary{cd, babel, math} 
\definecolor{lime}{HTML}{A6CE39}
\DeclareRobustCommand{\orcidicon}{%
	\begin{tikzpicture}
	\draw[lime, fill=lime] (0,0) 
	circle [radius=0.16] 
	node[white] {{\fontfamily{qag}\selectfont \tiny ID}};
	\draw[white, fill=white] (-0.0625,0.095) 
	circle [radius=0.007];
	\end{tikzpicture}
	\hspace{-2mm}
}
\foreach \x in {A, ..., Z}{%
	\expandafter\xdef\csname orcid\x\endcsname{\noexpand\href{https://orcid.org/\csname orcidauthor\x\endcsname}{\noexpand\orcidicon}}
}

\widowpenalty10000
\clubpenalty10000

\begin{document}
\title{On the structural properties of Lie algebras via associated labeled directed graphs}
\date{\today}

\author{Tim Heib\,\orcidH{}}
\affiliation{Institute for Quantum Computing Analytics (PGI-12), Forschungszentrum J\"ulich, 52425 J\"ulich, Germany}
\affiliation{Theoretical Physics, Universit\"at des Saarlandes, 66123 Saarbr\"ucken, Germany}
\author{David Edward Bruschi\,\orcidB{}}
\affiliation{Institute for Quantum Computing Analytics (PGI-12), Forschungszentrum J\"ulich, 52425 J\"ulich, Germany}
\affiliation{Theoretical Physics, Universit\"at des Saarlandes, 66123 Saarbr\"ucken, Germany}

\begin{abstract}
    We present a method for associating labeled directed graphs to finite-dimensional Lie algebras, thereby enabling rapid identification of key structural algebraic features. To formalize this approach, we introduce the concept of graph-admissible Lie algebras and analyze properties of valid graphs given the antisymmetry property of the Lie bracket as well as the Jacobi identity. Based on these foundations, we develop graph-theoretic criteria for solvability, nilpotency, presence of ideals, simplicity, semisimplicity, and reductiveness of an algebra. Practical algorithms are provided for constructing such graphs and those associated with the lower central series and derived series via an iterative pruning procedure. This  visual framework allows for an intuitive understanding of Lie algebraic structures that goes beyond purely visual advantages, since it enables a simpler and swifter grasping of the algebras of interest beyond  computational-heavy approaches. Examples, which include the Schrödinger and Lorentz algebra, illustrate the applicability of these tools to physically relevant cases. We further explore applications in physics, where the method facilitates computation of similtude relations essential for determining quantum mechanical time evolution via the Lie algebraic factorization method. Extensions to graded Lie algebras and related conjectures are discussed. Our approach bridges algebraic and combinatorial perspectives, offering both theoretical insights and computational tools into this area of mathematical physics.

\end{abstract}
\maketitle


\section*{Introduction}

Lie algebras are fundamental objects in mathematics and theoretical physics that arise naturally in the study of continuous symmetries \cite{Lie:1891,Hall:Lie:groups:16}, differential equations \cite{Olver:1993}, and quantum mechanical systems, where they are ubiquitous \cite{Woit:2017,Isham:1999,Srednicki:2007,Bruschi:Xuereb:2024}. Their structural properties, such as simplicity, solvability, or nilpotency, play a crucial role in classification theory and representation theory \cite{Knapp:1996}, and find applications across diverse fields, including control theory \cite{Brockett:1973}, the theory of differential equations \cite{Wei:Norman:1963,Wei:Norman:1964}, continuous variable quantum information \cite{Adesso:Ragy:2014}, or particle physics \cite{Georgi:2000}. Thus, understanding and visualizing these properties is not only essential for gaining theoretical insight and easing computational algorithms, but also helpful for conceptual purposes, since visual representations can facilitate understanding of the structures of interest. Here, we aim to construct a visual representation by associating appropriate labeled directed graphs to Lie algebras.

The use of graphs to represent mathematical structures is a well-established and powerful technique. Graphs have long served as tools for encoding algebraic and combinatorial relationships, with notable examples including Feynman diagrams in quantum field theory \cite{Feynman:1949}, Cayley graphs for groups \cite{Cayley:1878}, and weight space diagrams in representation theory \cite{Hall:Lie:groups:16}. More recently, graph-based methods have been employed to classify Pauli Lie algebras \cite{Aguilar:2024}, describe quantum electrical circuits \cite{Ciani:2025}, or to determine the controllability of arrays of coupled qubits \cite{GagoEncinas:2023}. Graphs also serve as effective visualization tools for complex concepts, such as the depiction of unitary irreducible representations for certain Lie algebras related to the symmetries of the Schrödinger equation \cite{Bekaert:2012}, or the Fano plane and its connection to the multiplication table of specific Lie algebras \cite{Dahm:2010}.

This work introduces a framework for associating finite-dimensional Lie algebras with labeled directed graphs. The goal is not to produce a unique or basis-independent representation, but rather to construct a graph that reflects aspects of the internal structure of a given Lie algebra in a way that renders its properties visually accessible. Notably, the construction presented here is neither unique nor basis-independent: the same graph may, in certain cases, correspond to two distinct Lie algebras and, conversely, a single Lie algebra may be associated with multiple distinct graphs. Nevertheless, the proposed method enables the identification of solvable and nilpotent Lie algebras, the detection of ideals, and the formulation of graph-theoretic criteria for simplicity, semisimplicity, and reductiveness, thus highlighting the fact that Lie algebras that share the same graph also share similar structural properties.

The connection between graphs and Lie algebras has been explored in various contexts and through diverse implementations. Complex semisimple Lie algebras are, for example, traditionally classified using graphical tools such as Dynkin diagrams, which correspond to specific root systems \cite{Humphreys:1972}. This approach has been generalized to real semisimple Lie algebras using Satake diagrams \cite{helgason2024differential}.

An association between Lie algebras and graphs, more closely related to the idea presented here, has been introduced before \cite{Dani:Mainkar:2005}. In this case, one can construct unlabeled and undirected graphs and associate them with two-step nilpotent Lie algebras. This approach has been adopted \cite{Pouseele:2009,Aldi:2025,Barrionuevo:2025} and extended  \cite{Chakrabarti:2019,Molina:2023} in further work to incorporate labeled and directed edges. In this extended framework, vertices of the graph are identified with basis elements of a Lie algebra, and edges between two vertices are labeled with the result of the corresponding Lie bracket; the direction of the edge encodes the sign of the bracket. One advantage of this construction over the method presented here is that the labels of the edges do not have to coincide with the vertex elements. This approach has also been generalized to certain three-step nilpotent Lie algebras using Schreier graphs \cite{Ray:2015}, and extended to solvable Lie algebras \cite{Grantcharov:2017}. It is, however, important to note that this is not the only effort to associate nilpotent Lie algebras with graphs;  alternative constructions have been also pursued elsewhere \cite{Diaz:2003,Towers:2025}.

Another approach that explores the connection between graphs and Lie algebras begins with a simple graph and attempts to associate a Lie algebra to it \cite{Gintz:2018,Khovanova:1982}. This is done in order to classify these Lie algebras, which are denoted with the ambiguous term \enquote{graph Lie algebras}. However, they should not be confused with \enquote{graph algebras} in the sense of \cite{Raeburn:2005}, where one isn't concerned with Lie algebras but with $C^*$-algebras, or graph algebras as discussed in \cite{Kim:1980}. Moreover, the term \enquote{graph Lie algebra} has been used the literature in various other contexts \cite{Duchamp:2002,Blumer:2025}, further contributing to the ambiguity in terminology. In contrast, the present work starts with a Lie algebra given a priori and constructs a graph in a consistent manner that encodes its bracket structure. This approach is therefore more closely related to the visualization technique utilized in \cite{Bock:2022}, where basis elements are represented as vertices, and the Lie bracket operations are visualized using lines that start at two distinct vertices and join at the element that results from performing the corresponding Lie bracket. However, that method is restricted to Engel-type algebras only, which are a subclass of nilpotent Lie algebras.

To address and overcome the limitations of the aforementioned approaches, we propose a construction based on labeled directed graphs. In this framework, the vertices of the graph represent a spanning set of the Lie algebra. The starting vertex of an edge corresponds to the element acting via the adjoint action, the edge-label indicates the input element of the adjoint action (also taken from the vertex set), and the target vertex represents the result of the bracket, up to a non-zero proportionality constant. This construction is conceptually similar to the one considered in \cite{Boza:2014}, although that approach employs unlabeled edges, is restricted to Lie algebras over finite fields, and determines all actions via the same adjoint actions of a single external element not represented as a vertex. Our idea is also inspired by commutativity graphs, which connect elements of a Lie algebra whenever their bracket is non-zero \cite{Wang:2020}, and the commutator graphs discussed in \cite{West:2025}, which - similar to \cite{Boza:2014} - link elements based on the adjoint action of a specific element from the Lie algebra. It is important to note, however, that the term \enquote{commutator graph} is not used uniformly in the literature. In some contexts, it refers to graphs that connect elements which commute \cite{Pisanski:1989,Kurzynski:2012}, i.e. whose Lie bracket would vanish - effectively the opposite of the commutativity graph discussed in \cite{Wang:2020}. These constructions are among the closest relatives to the framework presented here. Our approach, therefore, can be seen as a direct adaptation of the idea of commutator graphs developed in \cite{Kazi:2025}, although generalized from strings of Pauli operators to arbitrary Lie algebras.

Other intriguing approaches exist that investigate the interplay between graphs and Lie algebras, such as the idea presented in  \cite{Marcolli:2015}, where one incorporates the concept of Graph Grammers, originally developed in the field of theoretical computer science, into the construction of Lie algebraic structures, or the one studied in \cite{Caceres:2013}, where the combinatorial structures are the main focus.

This work is organized as follows: In section~\ref{sec:background}, we recall the essential definitions and conventions pertaining to Lie algebra and graph theory. In section~\ref{sec:graph:admissible}, we introduce the central concept of \emph{graph-admissibility}, which serves to characterize whether a given Lie algebra can be faithfully represented by a finite labeled directed graph. Section~\ref{sec:valid:graphs} establishes criteria for determining the validity of such graph representations. In section~\ref{sec:structural:properties}, we examine key structural properties of Lie algebras through their associated graphs, including the identification of central elements, the computation of the derived and lower central series, the detection of ideals, and the formulation of criteria for simplicity and semisimplicity. Section~\ref{sec:prominent:examples} presents applications of this approach to prominent examples, such as the Schrödinger algebra and the Lorentz algebra, and discusses applications to the classification of finite-dimensional Lie algebras of physical relevance. Section~\ref{sec:discussion}, discusses limitations, possible extensions and outlines a series of open questions and conjectures that emerge from the graph-theoretic framework. Finally, section~\ref{sec:outlook} offers some concluding remarks.

\begin{figure}[htpb]
    \centering
    \includegraphics[width=0.875\linewidth]{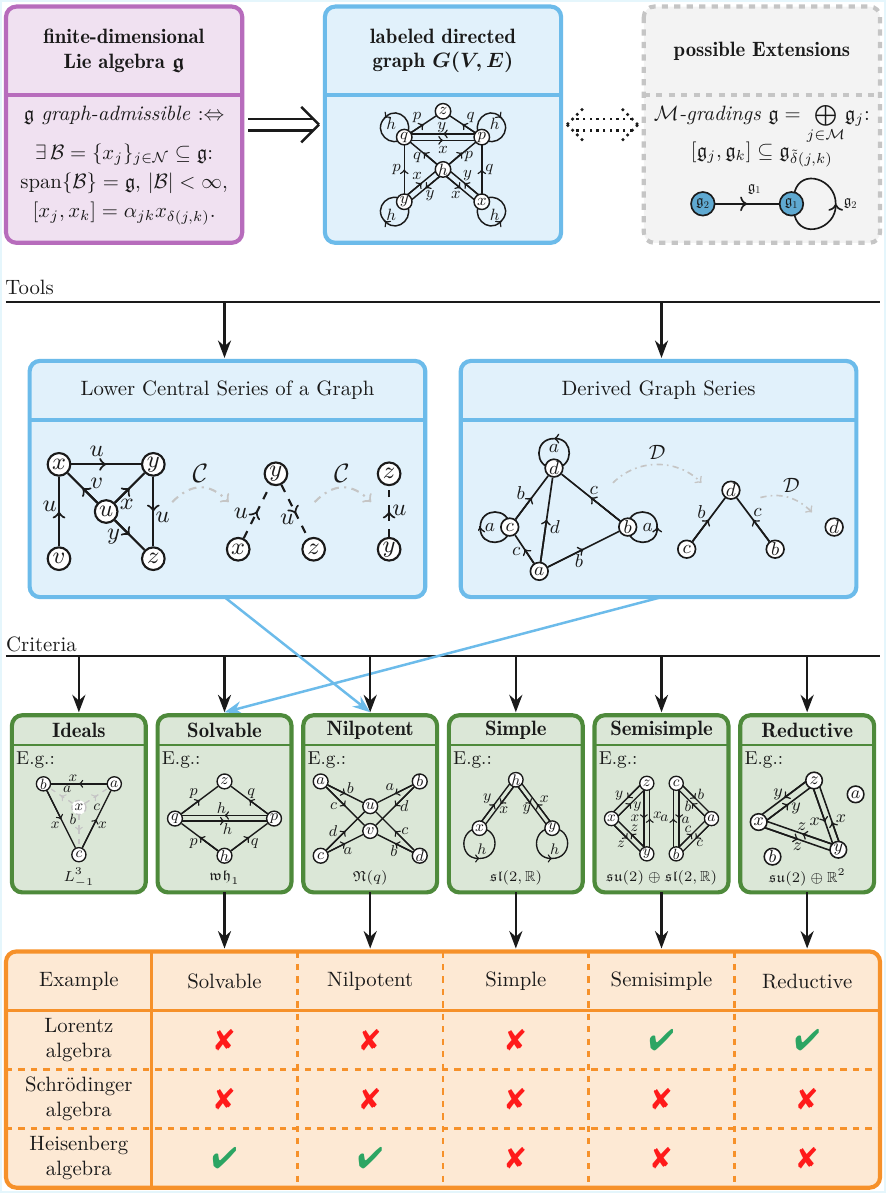}
    \caption{Schematic overview of the workflow presented in this paper. We consider finite-dimensional Lie algebras and begin by introducing the notion of graph-admissibility. Based on this concept, we construct labeled directed graphs associated with these algebras. Subsequently, we develop graph-theoretical tools, such as the lower central series and the derived series for graphs, in order to analyze structural properties. These tools enable the formulation of classification criteria for Lie algebras using graph-theoretical methods. Finally, we consider some specific examples and discuss possible extensions of these methods.}
    \label{fig:schamtics}
\end{figure}

\section{Background concepts}\label{sec:background}
We begin by recalling the most essential definitions relevant to this work, as well as introducing the conventions used. This encompasses several definitions for classifying Lie algebras and graphs. In addition, we adopt the following conventions: We denote any field with the symbol $\mathbb{F}$, and the set $\mathbb{F}$ without the zero element as $\mathbb{F}^*:=\mathbb{F}\setminus\{0\}$. We also use the symbol \enquote{$\propto$} to express a proportionality relation of two vectors. However, we emphasize that we write $x\propto y$ if and only if there exists a constant $\kappa\in\mathbb{F}^*$ such that $x=\kappa y$ for two vectors $x,y$ in a vector space $V$ over the field $\mathbb{F}$. Consequently, the only vector proportional to the zero vector is the zero vector itself.

\subsection{Fundamental concepts of Lie algebras}
In this subsection we introduce the foundational definitions and algebraic structures that underpin the study of Lie algebras. These include the Lie bracket, the center, and key series used to classify Lie algebras based on their structural properties. These definitions will serve as the backbone for the associated graphs in the later sections. 
\begin{definition}[Lie algebra]
    A \emph{Lie algebra} is a vector space $\g$ over a field $\mathbb{F}$ equipped with an antisymmetric biliniear binary operation $[\;\cdot\;,\;\cdot\;]:\g\times\g\to\g$ that satisfies the \emph{Jacobi identity}
    \begin{align}\label{eqn:Jacobi:identity}
        [x,[y,z]]+[y,[z,x]]+[z,[x,y]]=0
    \end{align}
    for all $x,y,z\in\g$. This operation is called the \emph{Lie bracket} \cite{Knapp:1996,Isham:1999}.
\end{definition} 
Such algebras can also be regarded as inner product spaces when equipped with an appropriate scalar product. For instance, one can use the scalar product defined on the basis elements by $\langle x_j,x_k\rangle=\delta_{jk}$, in case of a finite basis, where $\delta_{jk}$ denotes the Kronecker delta.

\begin{tcolorbox}[breakable, colback=Cerulean!3!white,colframe=Cerulean!85!black,title=\textbf{Remark:} The commutator as Lie bracket]
    In physics, the term Lie bracket is often used interchangeably with the commutator \cite{Arnold:1989}. For example, for square matrices, the commutator is defined as $$[\,\cdot\,,\,\cdot\,]:\C^{n\times n}\times \C^{n\times n}\to \C^{n\times n},\,(\boldsymbol{M}_1,\boldsymbol{M}_2)\mapsto[\boldsymbol{M}_1,\boldsymbol{M}_2]:=\boldsymbol{M}_1\boldsymbol{M}_2-\boldsymbol{M}_2\boldsymbol{M}_1.$$ 
    However, the Lie bracket is not restricted to this operation. Other bilinear operations also satisfy the axioms of a Lie bracket. A notable example is the \emph{vector cross product} \cite{Marsden:1999}, which, according to \cite{Arfken:2013}, is defined by $$\times: \C^3\times \C^3\to \C^3,(v=v^je_j,w=w^je_j)\mapsto v\times w ={\varepsilon_{jk}}^\ell v^j w^ke_\ell,$$ 
    where we employ Einstein's summation convention \cite{Einstein:1916}. Here, $e_j$ denotes the canonical basis vectors of $\C^3$ and $\epsilon_{jk\ell}$ is the completely antisymmetric Levi-Civita symbol \cite{Arfken:2013}. This example illustrates that the Lie bracket concept extends beyond commutators to other algebraic structures.
\end{tcolorbox}

We are interested in exploring the structural properties of a given Lie algebra. A basic example of such a structural property is the center of the Lie algebra.
\begin{definition}[Center]
    Let $\g$ be a Lie algebra. The \emph{center} of $\g$ is defined as the set $\mathcal{Z}(\g)$ of all elements in $\g$ whose Lie bracket with all other elements in the algebra vanishes identically. That is
    \begin{align}
        \mathcal{Z}(\g):=C_\g(\g):=\{x\in\g\,\mid\,[x,y]=0\text{ for all }y\in\g\},
    \end{align}
    where $C_\g(\g)$ is the \emph{centralizer} of $\g$ in $\g$ \cite{Knapp:1996}. If $\mathcal{Z}(\g)=\g$, we say that $\g$ is \emph{abelian} \cite{Knapp:1996}.
\end{definition}
Other structural properties of Lie algebras can be studied by considering, for example, the derived series or the lower central series. These are generally defined by the operation $[\mathfrak{h}_1,\mathfrak{h}_2]:=\spn\{[h_1,h_2]|\;h_1\in\mathfrak{h}_1,\,h_2\in\mathfrak{h}_2\}$, where $\mathfrak{h}_1,\mathfrak{h}_2\subseteq\g$ are two subalgebras of the Lie algebra $\g$ \cite{Knapp:1996}. It is clear that $[\mathfrak{h}_1,\mathfrak{h}_2]$ is itslef a Lie algebra. 
\begin{definition}[Derived series]
    The \emph{derived series} $(\mathcal{D}^k\g)_{k\in\mathbb{N}_{\geq0}}$ of a Lie algebra $\g$,  also known as the \emph{commutator series} \cite{Knapp:1996}, consists of the algebras constructed by the following recursive procedure 
    \begin{align}
        \mathcal{D}^{k+1}\g:=[\mathcal{D}^k\g,\mathcal{D}^k\g]\quad\text{for all }k\in\N_{\geq0},\quad\text{where}\quad\mathcal{D}^0\g:=\g.
    \end{align}
    The algebras $\mathcal{D}^k\g$ are furthermore denoted as the \emph{$k$-th derived algebra} \cite{Serre:1992}.
\end{definition}

In addition to the derived series, we are also interested in the lower central series, which is defined through a similar recursive construction. 

\begin{definition}[Lower central series]
    The \emph{lower central series} $(\mathcal{C}^k\g)_{k\in\mathbb{N}_{\geq0}}$ of a Lie algebra $\g$  \cite{Humphreys:1972}, also known as the \emph{descending central series} \cite{Serre:1992}, is recursively defined by:
    \begin{align}
        \mathcal{C}^{k+1}\g:=[\g,\mathcal{C}^k\g]\quad\text{for all }k\in\N,\quad\text{where}\quad\mathcal{C}^0\g:=\g.
    \end{align}
\end{definition}

These series --- the derived and lower central series --- are fundamental tools to classify Lie algebras according to their structural properties, namely solvability and nilpotency of a Lie algebra.

\begin{definition}[Solvable and Nilpotent algebra]\label{def:solvable:nilpotent}
    A Lie algebra is \emph{solvable} if its derived series terminates after a finite number of steps; i.e., there exists an integer $k_*\in\N_{\geq0}$ such that $\mathcal{D}^{k_*}\g=\{0\}$, where $\{0\}$ denotes the trivial Lie algebra.
    Similarly, a Lie algebra is \emph{nilpotent} if the lower central series terminates after a finite number of steps; i.e., there exists an integer $k_*\in\N_{\geq0}$, such that $\mathcal{C}^{k_*}\g=\{0\}$. In the case of nilpotent algebras, the smallest such integer is referred to as the \emph{index} of the Lie algebra.
\end{definition}

Note that the notion of nilpotency is analogous to the nilpotency of linear maps, as formalized by Engel's theorem \cite{Humphreys:1972}. It is evident that $\mathcal{C}^1\g=\mathcal{D}^1\g$ and $\mathcal{C}^k\g\subseteq\mathcal{D}^k\g$ for all $k\in \N_{\geq0}$, which shows that every nilpotent Lie algebra is also solvable, although the converse does not hold: there exist solvable Lie algebras that are not nilpotent such as the two-dimensional affine Lie algebra $\mathfrak{aff}(\mathbb{F})$ \cite{Andrada:2005}.

Another important concept in the study and classification of Lie algebras is that of an \emph{ideal}. The ideal plays a central role in defining and understanding structural notions such as \enquote{simplicity}, \enquote{semisimplicity}, and the \enquote{radical} of a Lie algebra.
\begin{definition}[Ideal]
    Let $\g$ be a Lie algebra. An ideal $\mathfrak{h}$ in $\g$ is a subspace $\mathfrak{h}\subseteq \g$ satisfying $[\mathfrak{g},\mathfrak{h}]\subseteq \mathfrak{h}$ \cite{Knapp:1996}.
\end{definition}

Typical examples of ideals include the center $\mathcal{Z}(\g)$ of a Lie algebra, the first derived algebra $\mathcal{D}^1\g$, or the kernel of a Lie-algebra homomorphism \cite{Knapp:1996}. Closely related to the concept of an ideal is the notion of a \emph{normalizer}, which identifies the largest subalgebra in which a given subalgebra behaves as an ideal.
\begin{definition}[Normalizer]
    Let $\g$ be a Lie algebra and $\mathfrak{h}\subsetneq \g$ be a proper subalgebra of $\g$. The \emph{normalizer} of $\mathfrak{h}$ in $\g$, denoted by $N_\g(\mathfrak{h})$, is defined  as the largest subalgebra of $\g$ in which $\mathfrak{h}$ is an ideal.
\end{definition}
Consider the set:
\begin{align}\label{definition:normalizer}
    N_\g(\mathfrak{h}):=\{x\in\g\,\mid\,[x,y]\in\mathfrak{h}\text{ for all }y\in\mathfrak{h}\}.
\end{align}
It is easy to show that the definition \eqref{definition:normalizer}, together with the Jacobi identity, imply that the set $N_\g(\mathfrak{h})$ is an algebra and coincides with the normalizer of $\mathfrak{h}$ in $\g$ \cite{Serre:1992}. 

A particularly important ideal in the structure theory of Lie algebras is the \emph{radical}, which captures the largest solvable part of a Lie algebra.
\begin{definition}[Radical]
    Let $\g$ be a Lie algebra. The \emph{radical} of $\g$, denoted $\operatorname{rad}(\g)$, is defined as the largest solvable ideal of $\g$ \cite{Knapp:1996}.
\end{definition}
The radical of a given Lie algebra exists and is unique, as guaranteed by Proposition 1.12 from \cite{Knapp:1996}. We now define the notions of \emph{simple} and \emph{semisimple} Lie algebras, which play a pivotal role in the structural classification of finite-dimensional Lie algebras. A complete classification of these Lie algebras over the fields of the real and complex numbers has been established \cite{Cartan:1894,Killing:erster:1888,Killing:zweiter:1888,Killing:dritter:1889,Killing:vierter:1890,Magyar:1989}. Moreover, it is known that any finite-dimensional Lie algebra over a field of characteristic zero can be expressed as a semidirect sum of a semisimple Lie algebra and its radical. This is known as the Levi-Mal'tsev decomposition \cite{Kuzmin:1977}. These results form the foundation for understanding the decomposition and internal structure of general Lie algebras.
\begin{definition}[Simple algebra, Semisimple algebra]\label{def:simple:and:semisimple}
    Let $\g$ be a finite-dimensional Lie algebra. This algebra is \emph{simple} if it is non-abelian and contains no proper non-zero ideals. If $\g$ has no non-zero solvable ideals, i.e., $\operatorname{rad}(\g)=\{0\}$, we say that the Lie algebra $\g$ is \emph{semisimple}.
\end{definition}
It follows immediately that every simple Lie algebra is also semisimple. The final structural classification of Lie algebras we want to discuss is that of \emph{reductive} versus \emph{non-reductive} Lie algebras. This distinction generalizes the concept of semisimplicity by allowing for a non-trivial center, while still retaining a semisimple component.
\begin{definition}[Reductive algebra]
    Let $\g$ be a finite-dimensional Lie algebra. The algebra $\g$ is a \emph{reductive algebra} if to each ideal $\mathfrak{i}_1$ in $\g$ there exists a corresponding ideal $\mathfrak{i}_2$ in $\g$ such that $\g=\mathfrak{i}_1\oplus\mathfrak{i}_2$.
\end{definition}
Note that if $\g$ is reductive, then $\g=\mathcal{D}^1\g\oplus\mathcal{Z}(\g)$, where $\mathcal{D}^1\g$ is semisimple \cite{Knapp:1996}.

\subsection{Graphs}
A \emph{labeled directed graph} $G(V,E)$ is defined by an ordered tuple $(V,E)$, where $V$ is a set of vertices and $E$ is a collection of labeled directed edges. Each edge is a three-tuple $e=(v_\mathrm{s},v_\mathrm{l},v_\mathrm{e})$, where the first entry $v_\mathrm{s}$ denotes the starting vertex (i.e., the vertex from which the edge originates), the second entry $v_\mathrm{l}$ is the vertex that labels the edge, and the third entry $v_\mathrm{e}$ is the vertex at which the edge terminates (i.e., the vertex to which the edge points to). Accordingly, we associate with every graph $G(V,E)$ the three functions $\varpi_\mathrm{s},\varpi_\mathrm{l},\varpi_\mathrm{e}:E\to V$, which map each edge $e\in E$ to a vertex from $V$, corresponding to its starting vertex, labeling vertex, and ending vertex respectively. Concretely, we have
\begin{align}
   \varpi_\mathrm{s}:e\equiv(v_\mathrm{s},v_\mathrm{l},v_\mathrm{e})&\mapsto \varpi_\mathrm{s}(e):=v_\mathrm{s},\;&\;
   \varpi_\mathrm{l}:e\equiv(v_\mathrm{s},v_\mathrm{l},v_\mathrm{e})&\mapsto \varpi_\mathrm{l}(e):=v_\mathrm{l},\;&\;
   \varpi_\mathrm{e}:e\equiv(v_\mathrm{s},v_\mathrm{l},v_\mathrm{e})&\mapsto \varpi_\mathrm{e}(e):=v_\mathrm{e}.
\end{align}
It is worth noting that, in the literature, a labeled directed graph $G(V,E)$ is also defined as the triple $G(V,E,\varphi)$, where $V$ denotes the set of vertices, $E$ the set edges (typically consisting of tuples containing only the starting and ending vertices), and the label of each edge is determined a corresponding labeling function $\varphi$ \cite{Molina:2023}. In some contexts, the function $\varphi$ is also used to describe the direction of a given edge \cite{Williamson:2010}. Our approach differs from the ones just mentioned.

We now introduce the notions of \enquote{walk}, \enquote{path}, \enquote{trail}, \enquote{circuit}, and \enquote{cycle}. Since our focus is exclusively on directed graphs, we restrict ourselves to their directed counterparts.
\begin{definition}\label{def:walk:trail:path:cycle}
    Let $G(V,E)$ be a labeled directed graph. Let $S:=(v_1,e_1,v_2,e_2,\ldots,e_{s},v_{s+1})$ be a finite, non-trivial sequence of vertices $\{v_j\}_{j=1}^{s+1}\subseteq V$ and edges $\{e_j\}_{j=1}^s\subseteq E$. In line with the literature  \cite{Williamson:2010}, we define the following:
    \begin{enumerate}[label = (\roman*)]
        \item If $\varpi_\mathrm{s}(e_j)=v_j$ and $\varpi_{\mathrm{e}}(e_j)=v_{j+1}$ for all $j\in\{1,\ldots,s\}$, we call $S$ a \emph{directed walk}.
        \item If $S$ is a directed walk and $|\{v_j\}_{j=1}^{s+1}|=s+1$, then $S$ is a \emph{directed path}.
        \item If $S$ is a directed walk and $|\{e_j\}_{j=1}^s|=s$, then $S$ is a \emph{directed trail}.
        \item If $S$ is a directed walk and $v_1=v_{s+1}$, then $S$ is a \emph{closed directed walk}.
        \item If $S$ is a directed trail that is also a closed directed walk, then $S$ is a \emph{directed circuit}.
        \item If $S$ is a directed circuit and $S'=(v_1,e_1,\ldots,e_{s-1},v_s)$ a directed path, then $S$ is a \emph{cycle}.
    \end{enumerate}
    The \emph{length} of $S$ is defined as $\operatorname{len}(S):=s$.
\end{definition}

In certain contexts, it is more convenient to consider the subgraphs induced by sequences of vertices and edges, rather than the sequences themselves.

\begin{definition}[Induced graphs]\label{def:induced:graph}
    Let $G(V,E)$ be a labeled directed graph, and let $S:=(v_1,e_1,v_2,\ldots,e_s,v_{s+1})$ be a non-trivial sequence of vertices $\{v_j\}_{j=1}^{s+1}\subseteq V$ and edges $\{e_j\}_{j=1}^s\subseteq E$ that satisfy $\varpi_\mathrm{s}(e_j),\varpi_\mathrm{e}(e_j)\in \{v_j\}_{j=1}^{s+1}$. The \emph{graph induced by} $S$ is defined as $G(\{v_j\}_{j=1}^{s+1},\{e_j\}_{j=1}^s)=:G_S$. 
\end{definition}

Since we are exclusively concerned with labeled directed graphs, it is furthermore useful to introduce the concept of \enquote{self-contained subgraphs}, which are subgraphs, where the labels of the edges are only given by the vertices of the subgraph itself.

\begin{definition}\label{def:self:contained}[Self-contained subgraph]
    Let $G(V,E)$ be a labeled directed graph, and let $G_S\equiv G(\Tilde{V},\Tilde{E})\subseteq G(V,E)$ be a subgraph induced by a non-trivial sequence $S$ of vertices and edges. The subgraph $G_S$ is \emph{self-contained} if and only if $\varpi_\mathrm{l}(e)\in \Tilde{V}$ for all edges $e\in\Tilde{E}$.
\end{definition}

Note that a self-contained subgraph induced by a directed walk $W$ does not necessarily imply that $W$ is a closed directed walk. For instance, consider the directed walk $W=(v_1,e_1=(v_1,v_2,v_2),v_2)$: it induces a self-contained subgraph but is not closed.

Finally, we introduce the concept of \emph{unconnected subgraphs} of a directed graph, which, as shown in Proposition~\ref{prop:semisimple:lie:algebra:unconnected:subgraphs} and Lemma~\ref{lem:direct:sum:if:unconnected}, is closely related to the direct sums of the associated Lie algebras.
\begin{definition}\label{def:unconnected:subgraphs}
    Let $G(V,E)$ be a labeled directed graph, and let $G(V_1,E_1), G(V_2,E_2)$ be two subgraphs of $G(V,E)$. These subgraphs are \emph{unconnected} if $V_1\cap V_2=\emptyset$ and, for all $v_1\in V_1$ and $v_2\in V_2$ there exists no edge $e\in E$ such that either $\varpi_\mathrm{s}(e)=v_1$ and $\varpi_\mathrm{e}(e)=v_2$, or vice versa.
\end{definition}
Note that unconnected subgraphs are closely related to the concept of \emph{disconnected components}. These are unconnected subgraphs of a given graph $G(V,E)$ that do not contain further unconnected subgraphs themselves and are not subgraphs of any larger unconnected subgraph \cite{Bollobas:1998,Harary:1955}.

\begin{definition}[Equivalent graphs]\label{def:equivalent:graphs}
    Let $G(V_1,E_1)$ and $G(V_2,E_2)$ be two labeled directed graphs. These two graphs are isomorphic if and only if there exists a bijection $\varphi:V_1\to V_2$, such that $E_2=\{(\varpi_\mathrm{s}(\varphi(e)),\varpi_\mathrm{l}(\varphi(e)),\varpi_\mathrm{e}(\varphi(e)))\;\mid\;e\in E_1\}$. We call the mapping $\varphi$ an isomorphism and write, in the case of isomorphic graphs, $G(V_1,E_1)\cong G(V_2,E_2)$ \cite{Bang:Jensen:2009}. We say two graphs are equivalent if they are isomorphic.
\end{definition}

\section{Formulation of the Central idea}\label{sec:graph:admissible}

This section lays the groundwork for the graph-theoretic framework developed in this paper. It introduces the core concept of \emph{graph-admissibility}, which classifies finite-dimensional Lie algebras according to whether they can be associated with a finite labeled directed graph. The first part of this section formalizes this notion, while the second examines the concrete construction of graphs that encode the structural properties of such Lie algebras.

\subsection{Graph-admissible Lie algebras}\label{sec:graph:admissible:subsec:notion}
Let us begin by introducing a specific classification of $n$-dimensional Lie algebras. This classification is central to the graph-based approach developed in this work, as it determines whether a Lie algebra can be associated with a finite labeled directed graph in a way that preserves its structural properties.

\begin{definition}[Graph-admissible Lie algebra]\label{def:graph:admissible}
    Let $\g$ be an $n$-dimensional Lie algebra. We say that $\g$ is \emph{minimal-graph-admissible} if it admits a basis $\{x_j\}_{j=1}^n$ such that the Lie bracket satisfies:
    \begin{align}
        [x_j,x_k]&=\alpha_{jk} x_{\delta(j,k)}\quad\quad\text{ for all }j,k\in\mathcal{N}:=\{1,\ldots,n\},\label{eqn:desired:basis}
    \end{align}
    where $\boldsymbol{\alpha}\in\mathbb{F}^{n\times n}$ is an antisymmetric matrix and $\delta:\mathcal{N}\times\mathcal{N}\to\mathcal{N}$ is symmetric function with concrete action $\delta:(j,k)\mapsto\delta(j,k)=j'\in\mathcal{N}$ that maps pairs of indices to another index.

    If $\g$ does not necessarily admit such a basis satisfying the relations \eqref{eqn:desired:basis}, but instead an overcomplete basis $\{\Tilde{x}_j\}_{j=1}^m$ of non-zero elements with $m\geq n$ satisfying
    \begin{align}
        [\Tilde{x}_j,\Tilde{x}_k]=\Tilde{\alpha}_{jk}\Tilde{x}_{\delta(j,k)}\quad\quad\text{ for all }j,k\in\mathcal{M}:=\{1,\ldots,m\},\label{eqn:desired:basis:overcomplete}\quad\quad\text{ and }\quad\quad\Tilde{x}_j\propto \Tilde{x}_k \quad\quad\text{if and only if }j=k,
    \end{align}
    where $\Tilde{\boldsymbol{\alpha}}\in\mathbb{F}^{m\times m}$ is an antisymmetric matrix and $\delta:\mathcal{M}\times\mathcal{M}\to\mathcal{M}$ is a symmetric function on the space of indices as defined above, then we call $\g$ \emph{redundant-graph-admissible}.

    Any finite-dimensional Lie algebra that is either minimal-graph-admissible or redundant-graph-admissible is called \emph{graph-admissible}. Finally, we say that an algebra is \emph{non-graph-admissible} if it is not graph admissible. 
    
    To ensure consistency and for later convenience, we define $\delta(j,k):=0$ whenever $\alpha_{jk}=0$ (or $\Tilde{\alpha}_{jk}=0$ respectively), and set $x_0:=0=:\Tilde{x}_0$. This allows us to extend the domain of $\delta$ to include zero: $\delta:\mathcal{N}\cup\{0\}\times\mathcal{N}\cup\{0\}\to\mathcal{N}\cup\{0\}$ or $\delta:\mathcal{M}\cup\{0\}\times\mathcal{M}\cup\{0\}\to\mathcal{M}\cup\{0\}$ if $\g$ is minimal-graph-admissible or respectively redundant-graph-admissible.
\end{definition}
The condition of graph-admissibility is restrictive, however, it has a powerful consequence: it ensures that the Lie bracket of any two basis elements is proportional to a third basis element, thus allowing us to encode the algebra as a graph where the vertices represent basis elements and the edges represent the bracket relations of these basis elements. The classification of graph-admissible and non-graph-admissible Lie algebras is therefore crucial because it determines whether a Lie algebra can be faithfully represented as a graph. Naturally, the first question that arises is:
\begin{quote}
    \textbf{Q}: \emph{Do there exist Lie algebras that can be uniquely classified as minimal-graph-admissible, redundant-graph-admissible but not minimal-graph-admissible, or non-graph-admissible}?
\end{quote}

As we will demonstrate later, the answer to this question is \enquote{yes}. To begin addressing it, we first show that there exists both minimal-graph-admissible Lie algebras and Lie algebras that are not minimal-graph-admissible. This assertion is formalized in the following theorem, which is proven by providing concrete examples. 

\begin{theorem}\label{thm:existncae:of:non:minimal:graph:admissible:}
    There exist finite-dimensional Lie algebras that are minimal-graph-admissible and ones that are not minimal-graph-admissible.
\end{theorem}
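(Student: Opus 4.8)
The plan is to prove this existence statement by exhibiting explicit examples on both sides of the dichotomy, since the statement is purely existential and the definitions are concrete.

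\medskip

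\textbf{The easy direction (minimal-graph-admissible algebras exist).} First I would exhibit a small, well-known Lie algebra whose structure constants manifestly have the form \eqref{eqn:desired:basis}. The cleanest candidate is $\su(2)$ (equivalently $\so(3)$) with its standard basis $\{x_1,x_2,x_3\}$ satisfying $[x_j,x_k]={\varepsilon_{jk}}^\ell x_\ell$; here $\boldsymbol{\alpha}$ is read off from the Levi-Civita symbol and $\delta(j,k)$ is the unique index $\ell \notin \{j,k\}$ (and $\delta(j,j)=0$), which is visibly symmetric. One could equally cite $\mathfrak{aff}(\mathbb{F})$ with basis $\{x_1,x_2\}$, $[x_1,x_2]=x_2$, so that $\delta(1,2)=2$. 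Either way this half is a one-line verification that the required basis exists. I would probably state both to emphasize that the class contains semisimple, solvable-non-nilpotent, and abelian examples.

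\medskip

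\textbf{The hard direction (a non-minimal-graph-admissible algebra exists).} Here I need a finite-dimensional Lie algebra $\g$ such that \emph{no} basis puts every bracket $[x_j,x_k]$ proportional to a single basis vector. The natural candidate is $\sl(2,\mathbb{F})$ or, to make the obstruction most transparent, a Heisenberg-type or $\mathfrak{sl}(2)$-related low-dimensional algebra where one can show a dimension/rank obstruction directly: in $\sl(2,\mathbb{C})$ with $\{h,e,f\}$, one has $[e,f]=h$, $[h,e]=2e$, $[h,f]=-2f$, which in this particular basis already satisfies \eqref{eqn:desired:basis}! So $\sl(2)$ is in fact minimal-graph-admissible — I must avoid that trap. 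A genuinely safe choice is $\g = \so(4)\cong \su(2)\oplus\su(2)$? No — that is again graph-admissible termwise. The real obstruction appears when the derived algebra forces a bracket to be a genuine sum: I would take $\g=\sl(3,\mathbb{F})$ or, more economically, a carefully chosen $4$- or $5$-dimensional solvable algebra. The key step is a basis-independent obstruction argument: suppose $\{x_j\}$ is a graph-admissible basis; then for each fixed $j$ the operator $\ad_{x_j}$ sends each basis vector to a multiple of a basis vector, so $\ad_{x_j}$ is a "generalized permutation" (monomial) matrix in this basis. I would then derive a contradiction from a structural invariant — e.g. compute the rank or the characteristic polynomial of some $\ad_x$, or use that simultaneously all $\ad_{x_j}$ being monomial forces the $\{x_j\}$ to be a common eigenbasis-up-to-permutation, which contradicts, say, the trace form or the known eigenvalue structure of the Killing form.

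\medskip

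\textbf{Main obstacle.} The genuine difficulty is \emph{the second example}: it is easy to guess an algebra that "looks" non-graph-admissible in the standard basis, yet admits some clever exotic basis realizing \eqref{eqn:desired:basis}. So the crux is producing a \emph{provable} obstruction that rules out all bases at once. I expect the cleanest route is a counting/linear-algebra argument on the family $\{\ad_{x_j}\}$: in a graph-admissible basis every $\ad_{x_j}$ is monomial (at most one nonzero entry per column), hence the set $\{\ad_{x_j}\}$ spans a subspace of $\g\mathfrak{l}(n,\mathbb{F})$ of a very constrained type, and comparing $\dim\lie{\ad_{x_j}}$ or the image of $\ad$ against what is forced by, e.g., $\g$ being perfect ($\mathcal{D}^1\g=\g$) of dimension $n$ will yield the contradiction. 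Concretely I would try $\g=\su(3)$ (dimension $8$): show that if all $\ad_{x_j}$ were monomial then the Killing form, which is nondegenerate and proportional to the trace form, would have to be monomial-diagonalizable in a way incompatible with the root-space structure of $\su(3)$, where there exist pairs of root vectors whose bracket is a nonzero combination of two Cartan generators. Making that incompatibility rigorous — showing it survives the change of basis — is the step I expect to require the most care, and it is where the author's proof (which the excerpt promises proceeds "by providing concrete examples") presumably does the real work.
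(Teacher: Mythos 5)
Your first half is fine: exhibiting $\su(2)$ (or the abelian algebra, or $\mathfrak{aff}(\mathbb{F})$) settles the existence of minimal-graph-admissible algebras in one line, exactly as in the paper (which uses the one-dimensional abelian algebra).

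The second half, however, has a genuine gap: you never actually produce a Lie algebra together with a complete, basis-independent obstruction, and the candidates you gravitate toward are the wrong ones. Your "monomial matrix" observation (that in an admissible basis every $\ad_{x_j}$ has at most one nonzero entry per column) is correct as far as it goes, but the claim that this forces a common eigenbasis-up-to-permutation, or that it clashes with the Killing form, is asserted rather than proved — and for your leading candidate it is in fact false. Over $\C$, the algebra $\sl{3}{\C}$ (and hence the complexification relevant to $\su(3)$) carries the Weyl--Heisenberg/Pauli fine grading by $\Z_3\times\Z_3$, in which all eight homogeneous components are one-dimensional and the bracket of any two components lands in a single component; such a grading is precisely a basis satisfying \eqref{eqn:desired:basis}, so $\sl{3}{\C}$ \emph{is} minimal-graph-admissible, even though the Chevalley basis (where $[e_{\alpha+\beta},e_{-\alpha-\beta}]=h_\alpha+h_\beta$) is not of the required form. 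This is exactly the trap you flagged for $\sl{2}{\C}$, recurring one rank higher. The paper avoids semisimple algebras entirely and instead takes the real solvable family $L^3_\alpha$ with $\alpha<-1/4$: assuming an admissible basis exists, it reduces the question to whether $\ad$ restricted to the two-dimensional abelian part has a real eigenvector, i.e.\ to whether $\kappa^2+\kappa-\alpha=0$ has a real root, which fails precisely because the discriminant $1+4\alpha$ is negative; a short case analysis (comparing with the classification of three-dimensional real solvable algebras) then rules out the remaining bracket patterns. Note the obstruction is essentially a real-versus-complex phenomenon — the same algebra over $\C$ \emph{is} minimal-graph-admissible — which your proposal, aimed at complex or compact semisimple examples, does not exploit and cannot replace without a substantially different (and currently missing) argument.
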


\begin{proof}
    The one-dimensional abelian Lie algebra $\mathbb{F}$ is clearly graph-admissible. To demonstrate that there exist Lie algebras that are not minimal-graph-admissible, we consider a specific family of solvable Lie algebras. Let $L_\alpha^3=\spn\{x_1,x_2,x_3\}$ be a real solvable Lie algebra equipped with the Lie bracket defined by
    \begin{align}
        [x_1,x_2]&=0,\;&\;[x_1,x_3]&=-x_2,\;&\;[x_2,x_3]&=-\alpha x_1-x_2,
    \end{align}
    where $\alpha$ is a real parameter \cite{DeGraaf:2004}. We choose now $\alpha<-1/4$ and suppose that $L_\alpha^3$ is minimal-graph-admissible. In other words, $L_\alpha^3$ admits a basis $\{\Tilde{x}_1,\Tilde{x}_2,\Tilde{x}_3\}$ that satisfies the relations \eqref{eqn:desired:basis}. Each $\Tilde{x}_j$ can be expressed as a linear combination of the original basis:
    \begin{align*}
        \Tilde{x}_j&=c_1^{(j)}x_1+c_2^{(j)}x_2+c_3^{(j)}x_3\qquad\text{for}\qquad j\in\{1,2,3\},
    \end{align*}
    where the $c_k^{(j)}$ are appropriate real coefficients. Next, we compute the Lie bracket $[\Tilde{x}_p,\Tilde{x}_q]$ using the bilinearity of the bracket and the original structure constants:
    \begin{align*}
        [\Tilde{x}_p,\Tilde{x}_q]&=\left(-c_1^{(p)}c_3^{(q)}+c_3^{(p)}c_1^{(q)}\right)x_2+\left(c_2^{(p)}c_3^{(q)}-c_3^{(p)}c_2^{(q)}\right)(-\alpha x_1-x_2)\\
        &=-\alpha\left(c_2^{(p)}c_3^{(q)}-c_3^{(p)}c_2^{(q)}\right)x_1-\left(c_1^{(p)}c_3^{(q)}-c_3^{(p)}c_1^{(q)}+c_2^{(p)}c_3^{(q)}-c_3^{(p)}c_2^{(q)}\right)x_2.
    \end{align*}
    Since $L_\alpha^3$ is non-abelian and $\{\Tilde{x}_1,\Tilde{x}_2,\Tilde{x}_3\}$ is, by assumption, a basis satisfying the relations \eqref{eqn:desired:basis}, one must have $[\Tilde{x}_p,\Tilde{x}_q]=\Tilde{\alpha}_{pq}\Tilde{x}_{\Tilde{\delta}(p,q)}\neq0$ for at least one pair $(p,q)\in\{1,2,3\}^{\times 2}$ with $p\neq q$. Hence, one must have $c_\ell^{(3)}=0$ for at least one $\ell\in\{1,2,3\}$ and to ensure linear independence also $c_3^{(q)}\neq0$ for some $q\in\{1,2,3\}$. Without loss of generality, let us assume that $c_3^{(1)}=0$ and $ c_3^{(2)}\neq 0$.

    We can now consider the Lie bracket $[\Tilde{x}_1,\Tilde{x}_2]$. Here, one has:
    \begin{align*}
        [\Tilde{x}_1,\Tilde{x}_2]&=-c_3^{(2)}\left(\alpha c_2^{(1)}x_1+\left(c_1^{(1)}+c_2^{(1)}\right)x_2\right).
    \end{align*}
    Suppose this Lie bracket vanishes. This would imply that $c_2^{(1)}=0=c_1^{(1)}+c_2^{(1)}$ and consequently $c_1^{(1)}=0$. However, this in turn would imply $\Tilde{x}_1=0$, which is a contradiction to the requirement that $\{\Tilde{x}_1,\Tilde{x}_2,\Tilde{x}_3\}$ spans the three-dimensional Lie algebra $L_\alpha^3$. Therefore, the commutator $[\Tilde{x}_1,\Tilde{x}_2]$ must be proportional to either $\Tilde{x}_1$ or $\Tilde{x}_3$, since $[\Tilde{x}_1,\Tilde{x}_2]=\Tilde{\alpha}_{12}\Tilde{x}_2$ with $\Tilde{\alpha}_{12}\neq 0$ is clearly prohibited by the condition $c_3^{(2)}\neq 0$. We consider the remaining two cases:
    \begin{enumerate}[label = (\roman*)]
        \item $[\Tilde{x}_1,\Tilde{x}_2]=\Tilde{\alpha}_{12}\Tilde{x}_1$ with $\Tilde{\alpha}_{12}\neq 0$. Here, we need to solve the following equations:
        \begin{align*}
            c_1^{(1)}&=-\frac{c_3^{(2)}}{\Tilde{\alpha}_{12}}\alpha\, c_2^{(1)},\;&\;c_2^{(1)}&=-\frac{c_3^{(2)}}{\Tilde{\alpha}_{12}}\left(c_1^{(1)}+c_2^{(1)}\right).
        \end{align*}
        Solving for $c_2^{(1)}$, we find:
        \begin{align*}
            c_2^{(1)}\left(1+\frac{c_3^{(2)}}{\Tilde{\alpha}_{12}}\left(1-\alpha\frac{c_3^{(2)}}{\Tilde{\alpha}_{12}}\right)\right)=0.
        \end{align*}
        If $c_2^{(1)}=0$, then $\Tilde{x}_1=0$, which is not allowed. Thus, the bracket $[\Tilde{x}_1,\Tilde{x}_2]$ can only be proportional to $\Tilde{x}_1$, if the following quadratic equation has a real solution:
        \begin{align*}
            0&=\left(\frac{c_3^{(2)}}{\Tilde{\alpha}_{12}}\right)^2-\frac{1}{\alpha}\frac{c_3^{(2)}}{\Tilde{\alpha}_{12}}-\frac{1}{\alpha}=\left(\frac{c_3^{(2)}}{\Tilde{\alpha}_{12}}-\frac{1}{2\alpha}\right)^2-\frac{1+4\alpha}{4\alpha^2}
        \end{align*}
        However, for $\alpha<-1/4$, the discriminant is strictly negative, so there is no real solution \cite{Cox:2025}. One can therefore not have $[\Tilde{x}_1,\Tilde{x}_2]=\Tilde{\alpha}_{12}\Tilde{x}_1$ with $\Tilde{\alpha}_{12}\neq0$.
        \item $[\Tilde{x}_1,\Tilde{x}_2]=\Tilde{\alpha}_{12}\Tilde{x}_3$ with $\Tilde{\alpha}_{12}\neq0$. This is only possible if
        \begin{align*}
            c_1^{(3)}&=-\frac{c_3^{(2)}}{\Tilde{\alpha}_{12}}\alpha c_2^{(1)},\;&\; c_2^{(3)}&= -\frac{c_3^{(2)}}{\Tilde{\alpha}_{12}}\left(c_1^{(1)}+c_2^{(1)}\right),\;&\;c_3^{(3)}&=0.
        \end{align*}
    \end{enumerate}
    We must therefore have case (ii) and consequently $[\Tilde{x}_1,\Tilde{x}_3]=0$. Suppose now $[\Tilde{x}_2,\Tilde{x}_3]=0$. In this case, the structure constants would force $L_\alpha^3$ to be isomorphic to the real Lie algebra $L_0^4$, which is not possible \cite{DeGraaf:2004}. Thus, one needs to consider the three remaining possibilities for the bracket $[\Tilde{x}_2,\Tilde{x}_3]$:
    \begin{enumerate}[label=(\alph*)]
        \item $[\Tilde{x}_2,\Tilde{x}_3]=\Tilde{\alpha}_{23}\Tilde{x}_1$ with $\Tilde{\alpha}_{23}\neq 0$. This implies
        \begin{align*}
            [\Tilde{x}_2,\Tilde{x}_1]&=-\Tilde{\alpha}_{12}\Tilde{x}_3,\;&\;[\Tilde{x}_2,-\Tilde{\alpha}_{12}\Tilde{x}_3]&=-\Tilde{\alpha}_{12}\Tilde{\alpha}_{23}\Tilde{x}_1,\;&\;[\Tilde{x}_1,-\Tilde{\alpha}_{12}\Tilde{x}_3]&=0.
        \end{align*}
        One can therefore construct a Lie-algebra isomorphism mapping: $\phi(\Tilde{x}_1)=e_1$, $\phi(\Tilde{x}_2)=e_3$, and $\phi(\Tilde{x}_3)=-e_2/\Tilde{\alpha}_{12}$. Thus $L_\alpha^3$ would be isomorphic to $L_{-\Tilde{\alpha}_{12}\Tilde{\alpha}_{23}}^4$, which is not possible, since $L_\alpha^3$ and $L_\beta^4$ are never isomorphic for any $\alpha,\beta$, see \cite{DeGraaf:2004}.
        \item $[\Tilde{x}_2,\Tilde{x}_3]=\Tilde{\alpha}_{23}\Tilde{x}_2$ with $\Tilde{\alpha}_{23}\neq 0$. This is not possible, since $c_2^{(3)}\neq0$, but $[\Tilde{x}_p,\Tilde{x}_q]\in\spn\{x_1,x_2\}$ for all $p,q\in\{1,2,3\}$.
        \item $[\Tilde{x}_2,\Tilde{x}_3]=\Tilde{\alpha}_{23}\Tilde{x}_3$ with $\Tilde{\alpha}_{23}\neq 0$. This implies:
        \begin{align*}
            \left[\frac{1}{\Tilde{\alpha}_{23}}\Tilde{x}_2,\Tilde{\alpha}_{23}\Tilde{x}_1\right]&=-\Tilde{\alpha}_{12}\Tilde{x}_3,\;&\;\left[\frac{1}{\Tilde{\alpha}_{23}}\Tilde{x}_2,-\Tilde{\alpha}_{12}\Tilde{x}_3\right]&=-\Tilde{\alpha}_{12}\Tilde{x}_3,\;&\;\left[-\Tilde{\alpha}_{12}\Tilde{x}_3,\Tilde{\alpha}_{23}\Tilde{x}_1\right]&=0.
        \end{align*}
        Thus, one would be able to find an isomorphism from $L_\alpha^3$ to $L_0^3$ with the identification $\Tilde{x}_2/\Tilde{\alpha}_{23}\leftrightarrow e_3$, $\Tilde{\alpha}_{23}\Tilde{x}_1\leftrightarrow e_1$, and $-\Tilde{\alpha}_{12}\Tilde{x}_3\leftrightarrow e_2$. However, it is known that $L_\alpha^3\cong L_\beta^3$ if and only if $\alpha=\beta$, see \cite{DeGraaf:2004}. Thus, we find again a contradiction, since $\alpha<-1/4$.
    \end{enumerate}
    The results obtained above prove that $L_\alpha^3$ does not admit a basis that satisfies the relations \eqref{eqn:desired:basis} for $\alpha<-1/4$.
\end{proof}

After establishing that not every finite-dimensional Lie algebra is minimal-graph-admissible, as shown for the real Lie algebra $L_\alpha^3$ with $\alpha<-1/4$ \cite{DeGraaf:2004}, it is natural to ask whether there are Lie algebras that are graph-admissible in a more general sense, i.e., if there are redundant-graph-admissible Lie algebras that are not minimal-graph-admissible, which would justify the distinction of graph-admissible Lie algebras into these two classes. The following corollary addresses this question directly. It is, furthermore, noteworthy that by Definition~\ref{def:graph:admissible} every Lie algebra that is minimal-graph-admissible is also redundant-graph-admissible.

\begin{corollary}\label{cor:existance:of:redundant:but:not:mimimak:graph:admissibel:algebras}
    There exist Lie algebras that are graph-admissible but not minimal-graph-admissible.
\end{corollary}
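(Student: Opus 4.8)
The plan is to reuse the three-dimensional real algebra $L_\alpha^3$ from the proof of Theorem~\ref{thm:existncae:of:non:minimal:graph:admissible:}, which establishes that $L_\alpha^3$ is \emph{not} minimal-graph-admissible for every $\alpha<-1/4$. It therefore suffices to exhibit one such $\alpha$ for which $L_\alpha^3$ is redundant-graph-admissible, since redundant-graph-admissibility already entails graph-admissibility. The structure to exploit is that $V:=\mathrm{span}\{x_1,x_2\}$ is an abelian ideal of $L_\alpha^3$ and $x_3$ acts on it by the linear map $T:=\mathrm{ad}_{x_3}|_V$, whose characteristic polynomial is $\lambda^2-\lambda-\alpha$. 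Minimal-graph-admissibility fails precisely because, for $\alpha<-1/4$, this polynomial has no real root, so $T$ stabilizes no line in $V$; the idea is to pick $\alpha$ so that $T$ instead \emph{permutes} a finite family of lines. The most economical choice is $\alpha=-1$: the roots of $\lambda^2-\lambda+1$ are the primitive sixth roots of unity $e^{\pm i\pi/3}$, so $T^{3}=-\mathrm{id}_V$, as one checks directly from $T=\bigl(\begin{smallmatrix}0&-1\\1&1\end{smallmatrix}\bigr)$ in the basis $\{x_1,x_2\}$.

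With $\alpha=-1$ fixed, I would set $v_1:=x_1$, $v_2:=Tv_1=x_2$, and $v_3:=T^2v_1=-x_1+x_2$. Since $T$, and hence $T^2$, has no real eigenvector, the vectors $v_1,v_2,v_3$ are pairwise non-proportional; moreover $v_1,v_2$ already span $V$, so $\mathcal{B}:=\{v_1,v_2,v_3,x_3\}$ is a spanning set of $L_{-1}^3$ consisting of $m=4>3=n$ non-zero, pairwise non-proportional elements. It remains to check the closure relation \eqref{eqn:desired:basis:overcomplete} on $\mathcal{B}$: all brackets among $v_1,v_2,v_3$ vanish because $V$ is abelian, $[x_3,x_3]=0$, and the mixed brackets are $[v_1,x_3]=-Tv_1=-v_2$, $[v_2,x_3]=-Tv_2=-v_3$, and $[v_3,x_3]=-Tv_3=-T^3v_1=v_1$, each proportional to an element of $\mathcal{B}$ (with the symmetric index function $\delta$ returning $0$ on all vanishing brackets and $\tilde{\boldsymbol\alpha}$ the obvious antisymmetric matrix). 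Hence $L_{-1}^3$ is redundant-graph-admissible, and combined with Theorem~\ref{thm:existncae:of:non:minimal:graph:admissible:} applied to $\alpha=-1<-1/4$, it is graph-admissible but not minimal-graph-admissible, which is the assertion.

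The one genuine subtlety --- and the step where a naive attempt stalls --- is the selection of $\alpha$: for a \emph{generic} $\alpha<-1/4$ the eigenvalues of $T$ have argument an irrational multiple of $\pi$, the $T$-orbit of every line in $V$ is infinite, and no finite overcomplete basis can close up, so $L_\alpha^3$ is in fact non-graph-admissible. One must therefore land on the resonant locus where that argument is a rational multiple of $\pi$; $\alpha=-1$ gives the smallest such orbit (three lines), which is exactly why four vectors --- one more than $\dim L_{-1}^3$ --- already suffice and the construction is as short as possible.
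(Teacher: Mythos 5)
Your proof is correct and follows essentially the same route as the paper: the paper's own argument also takes $L_{-1}^3$ together with the overcomplete basis $\{x_1,x_2,x_3,x_1-x_2\}$ (your $v_3=-x_1+x_2$ is the same element up to sign) and invokes Theorem~\ref{thm:existncae:of:non:minimal:graph:admissible:} for the failure of minimal-graph-admissibility. The only difference is cosmetic: you spell out the bracket verification and the role of the root-of-unity condition on $\operatorname{ad}_{x_3}|_V$ (which the paper treats in Lemma~\ref{lem:L:alpha:three:graph:admissible:if:and:only:if}), whereas the paper leaves the closure check as a straightforward calculation.
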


\begin{proof}
    This statement is a direct consequence of the analysis in the proof of Theorem~\ref{thm:existncae:of:non:minimal:graph:admissible:}. Recall that for certain values of the parameter, specifically for $\alpha<-1/4$, the real Lie algebra $L_\alpha^3$ does not admit a minimal-graph-admissible basis. However, in some cases, it is still possible to find an overcomplete basis for such an algebra that satisfies the weaker condition of redundant-graph-admissibility given by the relations \eqref{eqn:desired:basis:overcomplete}. For example,  consider $\alpha=-1$. In this case, the Lie algebra $L_{-1}^3$ is redundant-graph-admissible with the overcomplete basis $\{x_1,x_2,x_3,x_1-x_2\}$ defined by the basis $\{x_1,x_2,x_3\}$ given in the proof of Theorem~\ref{thm:existncae:of:non:minimal:graph:admissible:}. This basis allows all nontrivial Lie brackets of these basis elements to be written as proportional to one of these elements, which can be verified by a straightforward calculation. By the proof of Theorem~\ref{thm:existncae:of:non:minimal:graph:admissible:}, it is furthermore clear that no minimal basis exists with this property. Thus, the existence of such an overcomplete basis demonstrates that $L_{-1}^3$ is graph-admissible but not minimal-graph-admissible.
\end{proof}

The proof of Theorem~\ref{thm:existncae:of:non:minimal:graph:admissible:} also highlights that whether a Lie algebra is minimal-graph-admissible can depend on the field over which the Lie algebra is defined. This subtlety underscores the importance of distinguishing between minimal- and redundant-graph-admissibility.

After having seen that not all finite-dimensional Lie algebras are minimal-graph-admissible, it is important to recognize that this property is not exceptional. In fact, there exist important classes of Lie algebras that are minimal-graph-admissible, such as abelian Lie algebras or finite-dimensional subalgebras of the Weyl algebra. 

\begin{proposition}
    Every finite-dimensional abelian Lie algebra is minimal-graph-admissible
\end{proposition}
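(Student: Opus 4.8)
The plan is to observe that minimal-graph-admissibility is essentially vacuous in the abelian case, and that the conventions built into Definition~\ref{def:graph:admissible} are precisely what make this work. Concretely, let $\g$ be a finite-dimensional abelian Lie algebra of dimension $n$, so that $[x,y]=0$ for all $x,y\in\g$. Choose \emph{any} basis $\{x_j\}_{j=1}^n$ of $\g$ whatsoever. I would then exhibit the data required by Definition~\ref{def:graph:admissible}: take $\boldsymbol{\alpha}\in\mathbb{F}^{n\times n}$ to be the zero matrix, which is trivially antisymmetric, and take $\delta:\mathcal{N}\times\mathcal{N}\to\mathcal{N}$ to be, say, the constant map $\delta(j,k)=0$ for all $j,k$ (allowed by the extension of the domain of $\delta$ to $\{0\}$ stipulated at the end of the definition, since $\alpha_{jk}=0$ for every pair). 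With the convention $x_0:=0$, the right-hand side of \eqref{eqn:desired:basis} reads $\alpha_{jk}x_{\delta(j,k)}=0\cdot x_0=0$, which matches $[x_j,x_k]=0$ for all $j,k\in\mathcal{N}$. Hence $\g$ satisfies \eqref{eqn:desired:basis} and is minimal-graph-admissible.

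The only point that warrants a sentence of care, rather than being a genuine obstacle, is checking that $\delta$ may legitimately be taken symmetric and that its target-value $0$ is consistent with the convention $x_0:=0$; both are immediate from the final paragraph of Definition~\ref{def:graph:admissible}, which was introduced for exactly this purpose. It is also worth remarking that the statement is basis-independent: any basis of an abelian algebra works, so in particular the one-dimensional abelian algebra $\mathbb{F}$ and all $\mathbb{F}^n$ with trivial bracket are covered, consistent with the opening sentence of the proof of Theorem~\ref{thm:existncae:of:non:minimal:graph:admissible:}.

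I would therefore expect the proof in the paper to be just two or three lines: choose a basis, set $\boldsymbol{\alpha}=0$ and $\delta\equiv 0$, and invoke the conventions $x_0:=0$ and $\delta(j,k):=0$ when $\alpha_{jk}=0$. No step is difficult; if anything, the "hard part" is purely presentational, namely making explicit which piece of the (somewhat elaborate) Definition~\ref{def:graph:admissible} is being used at each stage so that the reader sees the relations \eqref{eqn:desired:basis} are satisfied rather than merely asserted.
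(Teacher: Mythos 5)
Your proposal is correct and follows essentially the same route as the paper, which simply notes that the claim is immediate from Definition~\ref{def:graph:admissible} because all brackets vanish in an abelian algebra; your version just makes explicit the choice $\boldsymbol{\alpha}=0$, $\delta\equiv 0$, and the convention $x_0:=0$.
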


\begin{proof}
    This results immediately from Definition~\ref{def:graph:admissible}, since in an abelian Lie algebra, the Lie bracket of any two elements vanishes.
\end{proof}

\begin{proposition}
    Every finite-dimensional Lie subalgebra of the complex one-mode Weyl algebra $A_1$ is minimal-graph-admissible.
\end{proposition}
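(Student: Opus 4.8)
The plan is to classify the finite-dimensional Lie subalgebras of $A_1$ precisely enough to read off minimal-graph-admissibility in each case; the two structures on $A_1$ that drive everything are the \emph{Bernstein filtration} and the \emph{weight grading}. Realise $A_1=\C\langle a,a^\dagger\rangle/([a,a^\dagger]-1)$ with its normal-ordered PBW basis $\{(a^\dagger)^ja^k\}_{j,k\geq0}$ and set $F_d:=\gspan\{(a^\dagger)^ja^k:j+k\leq d\}$. Since $[a,a^\dagger]$ has degree $0$ one gets $[F_d,F_e]\subseteq F_{d+e-2}$, so $F_2=\gspan\{1,a,a^\dagger,a^2,(a^\dagger)^2,a^\dagger a\}$ is a $6$-dimensional Lie subalgebra, and with the Weyl-ordered element $N':=a^\dagger a+\tfrac12$ one computes $[a^2,(a^\dagger)^2]=4N'$, $[N',a^2]=-2a^2$, $[N',(a^\dagger)^2]=2(a^\dagger)^2$, $[N',a]=-a$, $[N',a^\dagger]=a^\dagger$, $[a,a^\dagger]=1$; hence $F_2\cong\C\oplus(\mathfrak{sl}_2\ltimes\mathfrak{h}_1)$ and $F_2$ is minimal-graph-admissible via the basis $\{1,a,a^\dagger,a^2,(a^\dagger)^2,N'\}$. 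The second structure is the weight grading $A_1=\bigoplus_{w\in\Z}A_1^{(w)}$ with $w=(\deg_{a^\dagger})-(\deg_a)$, which is the eigenspace decomposition of $\operatorname{ad}_N$ for $N:=a^\dagger a$ (so $\operatorname{ad}_N$ is the scalar $w$ on $A_1^{(w)}$) and whose degree-zero component $A_1^{(0)}=\C[N]$ is commutative.

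Now let $\g\subseteq A_1$ be finite-dimensional. Passing to the associated graded $\operatorname{gr}\g=\bigoplus_d(\g\cap F_d)/(\g\cap F_{d-1})$ yields a finite-dimensional \emph{graded} Lie subalgebra of $\operatorname{gr}A_1=(\C[p,q],\{\,\cdot\,,\,\cdot\,\})$ of the same dimension, where $\{R_d,R_e\}\subseteq R_{d+e-2}$, $R_0=\C$ is Poisson-central, $R_1$ is the standard $2$-dimensional $\mathfrak{sl}_2$-module, $R_2\cong\mathfrak{sl}_2(\C)$, and each $R_d$ is the irreducible $(d{+}1)$-dimensional $\mathfrak{sl}_2$-module. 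Write $\mathfrak{h}=\operatorname{gr}\g=\bigoplus_d\mathfrak{h}_d$. The component $\mathfrak{h}_2$ is a subalgebra of $\mathfrak{sl}_2$, so it either contains a semisimple element or is contained in a nilpotent line. In the first case, after conjugating by an element of $SL_2$ — hence of $\operatorname{Aut}\C[p,q]$, which preserves both the isomorphism type and minimal-graph-admissibility — we may assume $pq\in\mathfrak{h}_2$; then $\mathfrak{h}$ is stable under $\operatorname{ad}_{pq}$ and therefore bi-graded, $\mathfrak{h}=\bigoplus_{d,\mu}\mathfrak{h}_{d,\mu}$ with $\mathfrak{h}_{d,\mu}=\mathfrak{h}_d\cap(R_d)_\mu$ the intersection of $\mathfrak{h}_d$ with a one-dimensional weight space of $R_d$. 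Choosing one nonzero vector from each nonzero $\mathfrak{h}_{d,\mu}$ gives a basis in which the bracket of two basis vectors lies in some $\mathfrak{h}_{d+e-2,\mu+\nu}$, a space of dimension $\leq1$, hence is a scalar multiple of a single basis vector (or $0$): $\mathfrak{h}$ is minimal-graph-admissible. (When the $\mathfrak{sl}_2$-module structure forces a piece of degree $\geq3$ to be a full copy of $R_d$, the relation $\{R_d,R_d\}\ni p^{2d-2}\neq0$ and iteration contradict finite-dimensionality, so this reduces to the low-degree situation $\mathfrak{h}\subseteq R_0\oplus R_1\oplus R_2$ already covered by the description of $F_2$ above.)

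In the remaining case $\mathfrak{h}_2$ is $0$ or a nilpotent line $\C\ell^2$ with $\ell$ a linear form. Here every homogeneous component of degree $\neq2$ contributes to the adjoint representation only a shift of the grading, hence acts nilpotently, and the nilpotent $\mathfrak{h}_2$ acts nilpotently on each $R_d$; using in addition that for a fixed linear form $\ell$ the kernel of the Poisson derivation $\{\ell,\cdot\,\}$ on $\C[p,q]$ is exactly $\C[\ell]$, one shows that $\mathfrak{h}$ must be either abelian (for instance a span of powers of a single linear form) or a subalgebra of a Heisenberg-type algebra sitting in $R_0\oplus R_1$ (possibly together with a nilpotent line in $R_2$). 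All of these are minimal-graph-admissible — the abelian ones trivially, the Heisenberg-type ones via a basis of monomials — which exhausts the possibilities for $\mathfrak{h}=\operatorname{gr}\g$.

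It then remains to transfer minimal-graph-admissibility from $\operatorname{gr}\g$ to $\g$. Since minimal-graph-admissibility is an isomorphism invariant, it is enough to see that the ``degree drops'' hidden by the associated-graded construction do not change the isomorphism type — equivalently, that $\g$ is conjugate under $\operatorname{Aut}A_1$ to a weight-graded subalgebra, so that the monomial-type basis of $\operatorname{gr}\g$ lifts after correcting lower-order terms by finitely many Weyl-ordering adjustments (as in the identity $[a^2,(a^\dagger)^2]=4N'$). I expect this transfer, together with the structural bookkeeping of the two preceding paragraphs, to be the main obstacle. The conceptual reason it goes through — and the same point that rules out the ``bad'' non-graph-admissible Lie algebras (the $3$-dimensional $\mathfrak{r}_3$ with $[x_1,x_2]=x_2,\ [x_1,x_3]=x_2+x_3$, and its higher analogues, whose failure is caused by an adjoint operator having a genuine Jordan block with nonzero eigenvalue) — is that in $A_1$ the only source of nonzero adjoint eigenvalues is the degree-$2$ component, which lies in a copy of $\mathfrak{sl}_2(\C)$, and in $\mathfrak{sl}_2(\C)$ an element is either semisimple or nilpotent, never $\lambda\cdot\mathrm{id}+\text{nilpotent}$ with $\lambda\neq0$; so no adjoint operator on a finite-dimensional subalgebra of $A_1$ can exhibit the Jordan structure that would obstruct graph-admissibility.
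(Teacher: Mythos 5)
Your route is genuinely different from the paper's: the paper simply invokes the known classification of the finite-dimensional Lie subalgebras of $A_1$ (its Ref.~[TST:2006]) and checks minimal-graph-admissibility case by case, whereas you try to derive the statement structurally from the Bernstein filtration and the weight grading, via the associated graded Poisson algebra $(\C[p,q],\{\cdot,\cdot\})$. That would be a nice self-contained argument if it closed — but it does not close as written.

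The decisive gap is the step you yourself flag as "the main obstacle": transferring minimal-graph-admissibility from $\operatorname{gr}\mathfrak{g}$ back to $\mathfrak{g}$. The associated graded of a filtered Lie algebra is in general only a degeneration of $\mathfrak{g}$, not an isomorphic copy, so even a complete proof that $\operatorname{gr}\mathfrak{g}$ admits a basis satisfying \eqref{eqn:desired:basis} says nothing, by itself, about $\mathfrak{g}$ (minimal-graph-admissibility is an isomorphism invariant, but $\operatorname{gr}\mathfrak{g}\not\cong\mathfrak{g}$ in general). Your proposed fix — that every finite-dimensional $\mathfrak{g}\subseteq A_1$ is conjugate under $\operatorname{Aut}A_1$ to a weight-graded subalgebra, so that the monomial basis of $\operatorname{gr}\mathfrak{g}$ lifts after finitely many lower-order corrections — is asserted, not proved, and it is essentially the hard content of the very classification the paper cites; the single identity $[a^2,(a^\dagger)^2]=4N'$ does not substitute for it. Two further steps are also only sketched: the classification of the graded pieces in the case where $\mathfrak{h}_2$ is zero or a nilpotent line (the claim that $\mathfrak{h}$ is then abelian or Heisenberg-type plus a nilpotent line is plausible but not derived, and the parenthetical dismissal of degree-$\geq3$ components assumes $\mathfrak{h}_d$ is a full $\mathfrak{sl}_2$-submodule, which only holds when $\mathfrak{h}_2$ is all of $\mathfrak{sl}_2$), and the closing "conceptual reason" conflates the semisimple/nilpotent dichotomy for homogeneous degree-$2$ elements with the Jordan structure of $\operatorname{ad}_x|_{\mathfrak{g}}$ for a general non-homogeneous $x\in\mathfrak{g}$, whose semisimple and nilpotent filtration parts need not commute. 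So the proposal is an attractive alternative strategy, but at present it either needs the conjugation-to-graded statement proved or, like the paper, an explicit appeal to the classification of finite-dimensional subalgebras of $A_1$.
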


\begin{proof}
    The classification of every non-abelian finite-dimensional Lie subalgebra of the one-mode complex Weyl algebra $A_1$ is known \cite{TST:2006}. The observation that any finite-dimensional abelian Lie algebra is minimal-graph-admissible, as it can be verified by delving into the examples given in \cite{TST:2006}, concludes the proof.
\end{proof}

The previous lemma raises the question of whether every finite-dimensional Lie subalgebra of the complex $n$-mode Weyl algebra $A_n$ is minimal-graph-admissible. This is a natural question, given the central role of the Weyl algebra in mathematical physics and the fact that its structure becomes increasingly rich as the number of modes increases \cite{Bruschi:Xuereb:2024}. That having been said, the case at hand here is subtle. While the one-mode Weyl algebra $A_1$ is highly restrictive, allowing only certain types of Lie subalgebras \cite{TST:2006}, higher-mode Weyl algebras with $n>1$ admit a much broader class of finite-dimensional Lie subalgebras. This leads us to the following conjecture:
\begin{conjecture}
    Every finite-dimensional Lie subalgebra of the complex $n$-mode Weyl algebra $A_n$ is graph-admissible but not necessarily minimal-graph-admissible.
\end{conjecture}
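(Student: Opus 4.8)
Since this is a conjecture, I describe the route I would pursue rather than a finished argument. The plan is to combine the Levi--Mal'tsev decomposition of a finite-dimensional Lie algebra over $\C$ with the grading of $A_n$ induced by the number operators. Write a finite-dimensional Lie subalgebra $\mathfrak{g}\subseteq A_n$ as $\mathfrak{g}=\mathfrak{s}\ltimes\mathfrak{r}$ with $\mathfrak{s}$ semisimple and $\mathfrak{r}=\operatorname{rad}(\mathfrak{g})$ solvable, treat the semisimple factor, the solvable radical, and the semidirect coupling separately, and then assemble a single overcomplete basis obeying the relations~\eqref{eqn:desired:basis:overcomplete}.

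\noindent\textbf{Semisimple factor.} For $\mathfrak{s}$ I would take a Chevalley basis $\{h_i\}\cup\{e_\alpha:\alpha\in\Phi\}$ relative to a Cartan subalgebra and then \emph{enlarge} it to the overcomplete family $\{h_\alpha:\alpha\in\Phi^+\}\cup\{e_\alpha:\alpha\in\Phi\}$, where $h_\alpha$ denotes the coroot of $\alpha$. All brackets then have the required shape: $[h_\alpha,h_\beta]=0$; $[h_\alpha,e_\gamma]=\langle\gamma,\alpha^\vee\rangle e_\gamma$; $[e_\gamma,e_{-\gamma}]=\pm h_\gamma$, which is proportional to a member of the family since $h_{-\beta}=-h_\beta$; and $[e_\gamma,e_\delta]$ is either $0$ or proportional to $e_{\gamma+\delta}$. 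The coroots attached to distinct positive roots are pairwise non-proportional, so \eqref{eqn:desired:basis:overcomplete} holds and every semisimple Lie algebra over $\C$ is redundant-graph-admissible. This already settles the conjecture for reductive subalgebras, e.g. $\mathfrak{gl}_k\hookrightarrow A_k$ via $E_{ij}\mapsto a_i^\dagger a_j$, for which $[E_{ij},E_{kl}]=\delta_{jk}E_{il}-\delta_{il}E_{kj}$.

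\noindent\textbf{Solvable radical via the torus grading.} This is where the actual work lies, and where the analogue of the pathological family $L_\alpha^3$ from Theorem~\ref{thm:existncae:of:non:minimal:graph:admissible:} must be excluded. The structure I would exploit is that $A_n$ carries a faithful $\Z^n$-grading under the commuting derivations $[\,N_i,\,\cdot\,]$ with $N_i=a_i^\dagger a_i$, each graded component being spanned by normal-ordered monomials. I would argue that, after an automorphism of $A_n$ (for instance an element of the affine-symplectic/metaplectic group), a maximal toral subalgebra of $\mathfrak{g}$ can be placed inside $\operatorname{span}\{N_1,\dots,N_n\}$, so that $\mathfrak{g}$ acquires a generalized weight-space decomposition, and then refine each weight space to a basis built from monomials and their iterated brackets. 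The key claim to prove is that the bracket of any two such basis elements, though a priori a polynomial rather than a single monomial, can be absorbed by adjoining finitely many further monomials to the already-overcomplete basis, the process terminating because $\mathfrak{g}$ is finite-dimensional and the Bernstein-filtration degree of iterated brackets is controlled.

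\noindent\textbf{Coupling and the non-minimal clause.} For the term $[\mathfrak{s},\mathfrak{r}]$ I would decompose $\mathfrak{r}$ into $\mathfrak{s}$-submodules and use weight bases; the snag is that $e_\alpha$ need not send a weight basis vector to a multiple of a single weight vector once weight multiplicities exceed one, which is again repaired by permitting the basis to be overcomplete. Finally, the clause \emph{``not necessarily minimal-graph-admissible''} should follow by exhibiting $\mathfrak{gl}_3$ (equivalently $\mathfrak{sl}_3$) realized in $A_3$ through $E_{ij}\mapsto a_i^\dagger a_j$: the three directions $E_{11}-E_{22}$, $E_{22}-E_{33}$, $E_{11}-E_{33}$ are linearly dependent yet each must occur as a bracket $[E_{ij},E_{ji}]$, so no basis of the minimal form~\eqref{eqn:desired:basis} can exist, by an argument in the spirit of the proof of Theorem~\ref{thm:existncae:of:non:minimal:graph:admissible:}. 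I expect the principal obstacle to be the solvable-radical step: ruling out, for arbitrary finite-dimensional solvable subalgebras of $A_n$, any $L_\alpha^3$-type failure, i.e. showing that the number-operator grading always supplies enough internal structure to guarantee (redundant) graph-admissibility.
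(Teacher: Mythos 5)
The statement you are addressing is a conjecture: the paper offers no proof of it, only motivation (Joseph's rank bound for $A_1$, the abelian centralizer condition, and the expectation that higher-rank simple subalgebras of $A_n$ with $n>1$ require redundant bases), so there is no argument of the paper to measure your sketch against. Your treatment of the semisimple factor is correct and in fact tightens the paper's own proof of Theorem~\ref{thm:semisimple:lie:algebra:graph:admissible}: by adjoining the coroots $h_\alpha$ for \emph{all} positive roots you ensure that every bracket $[e_\gamma,e_{-\gamma}]$ is proportional to a single member of the spanning family, a point the Chevalley--Serre argument in the paper leaves implicit; adding the center then handles reductive subalgebras. The genuinely hard part --- arbitrary, in particular solvable, finite-dimensional subalgebras of $A_n$ --- you leave open, as you acknowledge, and the route you propose is itself not obviously available: a solvable subalgebra may have trivial toral part, and a semisimple element of $A_n$ need not be conjugate under $\operatorname{Aut}(A_n)$ into $\operatorname{span}\{N_1,\dots,N_n\}$, so the weight-space scaffolding you want to hang the argument on is not guaranteed to exist. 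As a proof, the proposal therefore has the same status as the conjecture.

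The one claim you present as essentially settled --- the clause ``not necessarily minimal-graph-admissible'', witnessed by $\mathfrak{gl}_3$ or $\sl{3}{\C}$ realized via $E_{ij}\mapsto a_i^\dagger a_j$ --- is wrong, on two counts. First, the argument is a non sequitur: condition \eqref{eqn:desired:basis} constrains the brackets of the members of some as-yet-unknown basis among themselves; it does not force the particular elements $E_{ij}$ to belong to that basis, so the linear dependence of $[E_{12},E_{21}]$, $[E_{23},E_{32}]$, $[E_{13},E_{31}]$ says nothing about the existence of a different basis of the required form (compare the paper's proof of Theorem~\ref{thm:existncae:of:non:minimal:graph:admissible:}, which has to exhaust \emph{all} candidate bases). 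Second, the conclusion is false: $\sl{3}{\C}$ \emph{is} minimal-graph-admissible. Take the eight generalized Pauli matrices $Q^aP^b$ with $(a,b)\in(\Z/3\Z)^2\setminus\{(0,0)\}$, $Q=\operatorname{diag}(1,\omega,\omega^2)$, $P$ the cyclic shift, $\omega=e^{2\pi i/3}$; then $[Q^aP^b,Q^cP^d]=(\omega^{bc}-\omega^{ad})\,Q^{a+c}P^{b+d}$, which is either zero or proportional to a single basis element, and the coefficient vanishes exactly when $(a+c,b+d)\equiv(0,0)$, so the identity is never produced. This basis satisfies \eqref{eqn:desired:basis}, and $\mathfrak{gl}_3\cong\sl{3}{\C}\oplus\C$ is then minimal-graph-admissible as a direct sum with its one-dimensional center. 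So if the ``not necessarily minimal'' half of the conjecture is true, it needs a different witness; note also that the paper's only established non-minimal examples, $L^3_\alpha$ with $\alpha\le-1/4$ (Corollary~\ref{cor:existance:of:redundant:but:not:mimimak:graph:admissibel:algebras}), are real and hinge on non-diagonalizability over $\R$, so they do not transfer to complex subalgebras of $A_n$ either.
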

This conjecture is motivated by several structural results from the literature: Theorem 4.2 in \cite{Joseph:1970} implies that only rank-one semisimple Lie algebras can be faithfully realized in the complex Weyl algebra $A_1$ of a single mode. It is furthermore known that the only semisimple Lie algebra that can be faithfully realized in $A_1$ is the three-dimensional special linear Lie algebra $\sl{2}{\C}$ \cite{Klep:2010}. This occurs because $\C$ is an algebraically closed field with characteristic zero, and also because the centralizer $C_{A_1}(x)$ of every non-central element $x\in A_1\setminus \mathcal{Z}(A_1)$ is an abelian subalgebra of $A_1$ \cite{Dixmier:1968}, a property known as the \emph{abelian centralizer condition} \cite{A1:project}. Note that $\sl{2}{\C}$ is already the only simple Lie algebra of rank one over $\C$ \cite{Benkart:1984}. 

For $n>1$, the situation changes dramatically. The abelian centralizer condition mentioned above no longer holds for $n\geq2$, and the structure of the Weyl algebra becomes much richer. In particular, the $n$-mode Weyl algebra may admit semisimple Lie algebras of rank greater than one as finite-dimensional subalgebras. This opens the possibility for more intricate finite-dimensional subalgebras, including those that are not minimal-graph-admissible. Thus, while every finite-dimensional subalgebra of the complex $n$-mode Weyl algebra is expected to be graph-admissible (i.e., admits a basis or spanning set that satisfies either the relations \eqref{eqn:desired:basis} or \eqref{eqn:desired:basis:overcomplete}), we do not expect all such subalgebras to be minimal-graph-admissible. In particular, simple Lie algebras of higher rank, when realized as subalgebras for $n>1$, are likely to require redundant bases for their associated graphs. 

Beyond the special cases already discussed, there are entire classes of Lie algebras for which graph-admissibility is always guaranteed. In particular, nilpotent Lie algebras, which play an important role in the structure theory of Lie algebras, control theory, representation theory, and geometric analysis \cite{Knapp:1996,Molina:2023}, always admit an appropriate basis or spanning set, as demanded by Definition~\ref{def:graph:admissible}, although they are not necessarily minimal-graph-admissible. The following proposition makes this precise:

\begin{proposition}\label{prop:nilpotent:graph:admissible}
    Every finite-dimensional nilpotent Lie algebra $\g$ is graph-admissible.
\end{proposition}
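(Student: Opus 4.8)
The plan is to induct on $n=\dim\g$, using the classical fact that a nonzero nilpotent Lie algebra has a nonzero center. The base cases $n\le 1$ are immediate, since abelian Lie algebras are minimal-graph-admissible. So let $\g$ be nilpotent with $\dim\g=n\ge 2$, and assume the statement for all smaller dimensions. Writing $k_*$ for the index of $\g$, the term $\mathcal{C}^{k_*-1}\g$ is nonzero and satisfies $[\g,\mathcal{C}^{k_*-1}\g]=\mathcal{C}^{k_*}\g=\{0\}$, hence lies in $\mathcal{Z}(\g)$; pick $z\in\mathcal{Z}(\g)\setminus\{0\}$ and set $\mathfrak{z}:=\spn\{z\}$, a one-dimensional (central, hence) ideal. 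The quotient $\g/\mathfrak{z}$ is nilpotent of dimension $n-1$, so by the inductive hypothesis it is graph-admissible: in the sense of Definition~\ref{def:graph:admissible} there is a spanning set $\{\bar u_1,\dots,\bar u_m\}$ of nonzero, pairwise non-proportional elements with $[\bar u_j,\bar u_k]=\bar\alpha_{jk}\bar u_{\delta(j,k)}$ for all $j,k$.

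Next I would lift this data. Choose preimages $u_i\in\g$ with $\pi(u_i)=\bar u_i$ under the projection $\pi\colon\g\to\g/\mathfrak{z}$. Since $\pi$ is a homomorphism and $\ker\pi=\mathfrak{z}$, there are scalars $\beta_{jk}\in\mathbb{F}$ with $[u_j,u_k]=\bar\alpha_{jk}u_{\delta(j,k)}+\beta_{jk}z$. If $\bar\alpha_{jk}=0$ this is a multiple of $z$, and if $\beta_{jk}=0$ it is a multiple of $u_{\delta(j,k)}$; the only obstruction is the mixed case $\bar\alpha_{jk}\ne 0\ne\beta_{jk}$, and for each such unordered pair $\{j,k\}$ the vector $w_{jk}:=[u_j,u_k]$ is nonzero because its $u_{\delta(j,k)}$-component survives (as $u_{\delta(j,k)}\notin\mathfrak{z}$). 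I then claim that
\[
\{z\}\cup\{u_1,\dots,u_m\}\cup\{\,w_{jk}:\bar\alpha_{jk}\ne 0\ne\beta_{jk}\,\},
\]
with proportional duplicates deleted, is a spanning set of the type required by Definition~\ref{def:graph:admissible}. It spans $\g$ because $z$ and the $u_i$ already do; its members are nonzero; and pairwise non-proportionality follows by applying $\pi$ (which separates the $\bar u_i$ and annihilates $z$) together with the linear independence of $z$ from each $u_{\delta(j,k)}$.

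The remaining point is closure of the bracket up to proportionality. Brackets with $z$ vanish by centrality; brackets $[u_j,u_k]$ lie in the set by construction; and using bilinearity together with $[u_p,z]=0$ one gets $[u_p,w_{jk}]=\bar\alpha_{jk}[u_p,u_{\delta(j,k)}]$ and $[w_{jk},w_{j'k'}]=\bar\alpha_{jk}\bar\alpha_{j'k'}[u_{\delta(j,k)},u_{\delta(j',k')}]$, so every new bracket is a scalar multiple of a bracket of two $u$'s, hence a multiple of $z$, of some $u_\ell$, or of some $w_{ab}$ — in all cases a member of the set. This also shows that enlarging by the $w_{jk}$ triggers no further additions, and this is the step I expect to require the most care: the naive hope of keeping the lift a basis fails, because a single $u_\ell$ can be the target of several mixed brackets whose $z$-parts cannot all be absorbed by one substitution $u_\ell\mapsto u_\ell+cz$, so a genuinely redundant spanning set is needed — consistent with nilpotent Lie algebras not being minimal-graph-admissible in general. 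Antisymmetry of the structure matrix and symmetry of the index function are automatic since $w_{jk}$ depends only on the unordered pair $\{j,k\}$, which completes the induction.
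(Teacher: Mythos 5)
Your proof is correct, but it follows a genuinely different route from the paper's. The paper starts from an arbitrary basis of $\g$ and closes it under iterated Lie brackets, adding each nonzero bracket that is not proportional to an existing element; the key point there is that elements produced at the $k$-th round lie in $\mathcal{C}^k\g$, so nilpotency forces the process to terminate after finitely many rounds, each adding only finitely many elements. You instead induct on $\dim\g$, using that a nonzero nilpotent algebra has nonzero center: you quotient by a central line $\mathfrak{z}=\spn\{z\}$, invoke the inductive hypothesis on $\g/\mathfrak{z}$, lift the spanning set, and adjoin the finitely many mixed brackets $w_{jk}=[u_j,u_k]$ whose image and $z$-component are both nonzero. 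Your verification of closure is complete: centrality of $z$ reduces every bracket involving a $w$ to a scalar multiple of a bracket of two $u$'s, which by the case analysis is zero or proportional to $z$, some $u_\ell$, or some $w_{ab}$; nonzeroness and pairwise non-proportionality (after removing duplicates only among the $w$'s, since $w_{jk}\not\propto z$ and $w_{jk}\not\propto u_\ell$ by the $\pi$-image and $\beta_{jk}\neq0$ arguments) hold as you argue. What each approach buys: the paper's argument is shorter and basis-agnostic, needing no quotient construction; yours gives an explicit size bound at each inductive step (at most $1+m+\binom{m}{2}$ elements from a spanning set of size $m$ of the quotient), localizes exactly where redundancy is forced (the $z$-components of the mixed brackets, which cannot in general be absorbed by shifting the lifts), and thereby explains more transparently why nilpotent algebras can fail to be minimal-graph-admissible. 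Both proofs use nilpotency essentially — the paper via termination of bracket chains, you via the nonvanishing center available at every stage of the induction.
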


\begin{proof}
    Let $\g$ be a finite-dimensional nilpotent Lie algebra with basis $\mathcal{B}=\{x_j\}_{j=1}^n$. We construct now an (overcomplete) basis of $\g$ that satisfies the requirement of graph-admissibility, as follows: Start with the original basis $\mathcal{B}$. Then add for every pair $(x_j,x_k)\in\mathcal{B}\times\mathcal{B}$ the Lie bracket $[x_j,x_k]$ to $\mathcal{B}$ if it is non-zero and not proportional to an element that is already in $\mathcal{B}$. These new elements necessarily belong to the first term of the lower central series $\mathcal{C}^1\g=[\g,\g]$. One iterates this now with higher-order Lie brackets. That is, for all $x_j$ in the original basis, one considers the brackets of the form $[x_j,[x_k,x_\ell]]$ and $[[x_j,x_k],[x_p,x_q]]$. If any such bracket is non-zero and not proportional to an element already in $\mathcal{B}$, one adds it to $\mathcal{B}$. One continuously repeats this process and adds at each step new non-zero brackets that are not already proportional to existing elements in $\mathcal{B}$. The crucial observation is that, at each stage, the newly added elements belong to the $k$-th algebra $\mathcal{C}^k\g$ of the lower central series of $\g$. Since $\g$ is nilpotent, there exists a positive integer $k_*$ such that $\mathcal{C}^k\g=\{0\}$ for all $k\geq k_*$. This means that after finitely many steps, no non-zero elements can be generated by further Lie brackets, and the process terminates. The resulting (possibly overcomplete) basis $\mathcal{B}$ is finite and satisfies the bracket relations \eqref{eqn:desired:basis:overcomplete} required for graph-admissibility (i.e., every non-zero Lie bracket of elements from the new $\mathcal{B}$ is proportional to a basis element of $\mathcal{B}$ now that it has been appropriately extended). Therefore, every finite-dimensional nilpotent Lie algebra is graph-admissible. 
\end{proof}

The property of graph-admissibility is not limited to Lie algebras over infinite fields or those with particular algebraic structures. In fact, it holds in a very general setting, including all finite-dimensional Lie algebras defined over a finite field $\mathbb{F}$. This is summarized in the following straightforward proposition:

\begin{proposition}\label{prop:graph:admissibility:finite:fields}
    Every finite-dimensional Lie algebra over a finite field $\mathbb{F}$ is graph-admissible.
\end{proposition}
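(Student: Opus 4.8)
The plan is to prove the superficially stronger but actually very cheap statement that such a $\g$ is redundant-graph-admissible, by writing down an explicit overcomplete basis. The key observation is purely linear-algebraic: if $\dim_{\mathbb{F}}\g=n$ and $|\mathbb{F}|=q<\infty$, then $\g$ has exactly $q^n$ elements, so it contains only finitely many one-dimensional subspaces --- precisely $m:=(q^n-1)/(q-1)$ of them.

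First I would choose, for each one-dimensional subspace (line through the origin) of $\g$, a single nonzero representative, obtaining a finite set $\mathcal{B}=\{\tilde{x}_1,\dots,\tilde{x}_m\}$. By construction the $\tilde{x}_j$ are nonzero and pairwise non-proportional, so the condition ``$\tilde{x}_j\propto\tilde{x}_k$ if and only if $j=k$'' of Definition~\ref{def:graph:admissible} holds automatically. Moreover every nonzero element of $\g$ is a scalar multiple of exactly one $\tilde{x}_j$, so $\mathcal{B}$ spans $\g$, which in particular forces $m\geq n$.

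Next I would read off the structure data $(\tilde{\boldsymbol{\alpha}},\delta)$. For a pair $(j,k)$ the bracket $[\tilde{x}_j,\tilde{x}_k]$ is an element of $\g$; if it is zero I set $\tilde{\alpha}_{jk}:=0$ and $\delta(j,k):=0$, consistent with the convention $\tilde{x}_0:=0$, and if it is nonzero it lies on a unique line, say the one spanned by $\tilde{x}_\ell$, so I set $\delta(j,k):=\ell$ and take $\tilde{\alpha}_{jk}\in\mathbb{F}^*$ to be the unique scalar with $[\tilde{x}_j,\tilde{x}_k]=\tilde{\alpha}_{jk}\,\tilde{x}_\ell$. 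This is exactly the form demanded by the relations \eqref{eqn:desired:basis:overcomplete}. It then remains to check the bookkeeping: antisymmetry of the Lie bracket gives $[\tilde{x}_k,\tilde{x}_j]=-[\tilde{x}_j,\tilde{x}_k]$, so the two brackets vanish together or lie on the same line, whence $\delta$ is symmetric and $\tilde{\boldsymbol{\alpha}}$ is antisymmetric, and $[\tilde{x}_j,\tilde{x}_j]=0$ forces $\delta(j,j)=0$. Hence $\mathcal{B}$ is an overcomplete basis witnessing redundant-graph-admissibility, so $\g$ is graph-admissible.

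I do not anticipate a genuine obstacle. Note that no structural hypothesis on $\g$ is used --- not nilpotency, not solvability, not the characteristic of $\mathbb{F}$ --- and, in contrast to most of the other admissibility arguments in this section, the Jacobi identity is never invoked; finiteness of the field alone makes the naive ``keep one representative of every ray'' spanning set finite, and all the remaining verifications are formal. The only point deserving any care is the elementary count showing that $\mathcal{B}$ is finite, which is precisely where the finiteness of $\mathbb{F}$ enters.
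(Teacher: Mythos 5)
Your proposal is correct and follows essentially the same route as the paper: both take one nonzero representative from each one-dimensional subspace of $\g$ (the paper constructs this set by iterating over all $|\mathbb{F}|^n$ elements and discarding proportional duplicates, you count the rays directly), observe that finiteness of $\mathbb{F}$ makes this spanning set finite, and note that any bracket is zero or proportional to one of the representatives, giving redundant-graph-admissibility. Your explicit verification of the symmetry of $\delta$ and the antisymmetry of $\tilde{\boldsymbol{\alpha}}$ is a small amount of extra bookkeeping the paper leaves implicit, but the argument is the same.
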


\begin{proof}
    Let $\g$ be a finite-dimensional Lie algebra over a finite field $\mathbb{F}$. Since $\dim(\g):=n<\infty$ and $|\mathbb{F}|<\infty$, the algebra consists of finitely many elements $\{x_j\}_{j\in\mathcal{J}}$, where $|\mathcal{J}|=|\mathbb{F}|^n$, as can be shown by a simple induction on the dimension $n$. Now, consider an overcomplete spanning set formed by taking one representative from each non-zero element modulo scalar multiples. Now let us construct a subset $\mathcal{B}\subseteq \g$ using the following iterative procedure. Consider the empty set $\mathcal{B}=\emptyset$. Then, fix some ordering of the elements in $\mathcal{J}$. We can assume, without loss of generality that $\mathcal{J}=\{1,\ldots,|\mathbb{F}|^n\}$. Now we iterate over the elements in $\mathcal{J}$. That is, for each $j\in\mathcal{J}$ in order we add $x_j\in\mathcal{B}$ if $x_j\neq 0$ and there is no $x_k\in\mathcal{B}$ such that $x_j\propto x_k$. The resulting set $\mathcal{B}$ is finite and spans $\g$, since any basis of $\g$ is comprised of linearly independent non-zero elements and therefore contained in $\mathcal{B}$. Moreover, and crucially, for any two elements in this set, their Lie bracket is either zero or proportional to another element in $\mathcal{B}$. Therefore, the proportionality condition from Definition~\ref{def:graph:admissible} is satisfied, making $\g$ at least redundant-graph-admissible and hence graph-admissible.
\end{proof}

While the preceding results establish that several distinct classes of Lie algebras are graph-admissible, therby demonstrating the broad applicablty of the graph-admissibility framework, it is crucial to recognize that this property does not universally extend to all Lie algebras and its limitations have to be acknowledged: Not every finite-dimensional Lie algebra admits a basis or overcomplete generating set that satisfies the structural conditions required for a graph representation to be viable as defined in this work. Demonstrating the existence of such exceptions is essential as it delineates the boundaries of the graph-admissibility approach. The following theorem formalizes this observation and conclusively answers, along with Theorem~\ref{thm:existncae:of:non:minimal:graph:admissible:} and Corollary~\ref{cor:existance:of:redundant:but:not:mimimak:graph:admissibel:algebras}, the introductory question of whether the classification into minimal-graph-admissible, redundant-graph-admissible, and non-graph-admissible Lie algebras is justified.

\begin{theorem}\label{thm:existence:of:non:graph:admissible:lie:algebras}
    There exist finite-dimensional Lie algebras that are non-graph-admissible.
\end{theorem}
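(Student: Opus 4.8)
The plan is to produce a concrete three-dimensional solvable counterexample: the real Lie algebra $L^3_\alpha$ already used in the proof of Theorem~\ref{thm:existncae:of:non:minimal:graph:admissible:}, but now for a value $\alpha<-1/4$ chosen so that a suitable adjoint action, restricted to a plane, acts on lines as a rotation by an irrational multiple of $\pi$. The first and essentially routine step is to recast Definition~\ref{def:graph:admissible} geometrically: $\g$ is graph-admissible if and only if there is a \emph{finite} family $\mathcal{L}$ of distinct one-dimensional subspaces of $\g$ spanning $\g$ that is \emph{closed under brackets}, in the sense that for $\ell_1,\ell_2\in\mathcal{L}$, writing $[\ell_1,\ell_2]$ for the line $\spn\{[v_1,v_2]\}$ (any nonzero $v_i\in\ell_i$; possibly $\{0\}$), one has $[\ell_1,\ell_2]=\{0\}$ or $[\ell_1,\ell_2]\in\mathcal{L}$. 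Indeed, an overcomplete basis as in \eqref{eqn:desired:basis:overcomplete} yields such an $\mathcal{L}$, and conversely.

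Next I would analyze the example. Let $\g=L^3_\alpha=\spn\{x_1,x_2,x_3\}$ with $[x_1,x_2]=0$, $[x_1,x_3]=-x_2$, $[x_2,x_3]=-\alpha x_1-x_2$ \cite{DeGraaf:2004}, and $\alpha<-1/4$. Since $\alpha\neq0$, the derived algebra is the abelian ideal $W:=\mathcal{D}^1\g=\spn\{x_1,x_2\}$, on which $\mathrm{ad}_{x_1}=\mathrm{ad}_{x_2}=0$, while $M:=\mathrm{ad}_{x_3}|_W$ sends $x_1\mapsto x_2$ and $x_2\mapsto\alpha x_1+x_2$, hence has trace $1$, determinant $-\alpha$, and nonreal eigenvalues $\tfrac12\pm\tfrac i2\sqrt{-1-4\alpha}$ of modulus $\sqrt{-\alpha}$ and argument $\pm\vartheta$ with $\cos\vartheta=\tfrac1{2\sqrt{-\alpha}}\in(0,1)$. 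Over $\R$, $M$ is therefore conjugate to a positive scalar multiple of the rotation by $\vartheta$, so the bijection it induces on the circle $\mathbb{P}(W)$ of lines through $0$ in $W$ is translation by $\vartheta$ (the scalar factors are invisible projectively), and a line has infinite $M$-orbit precisely when $\vartheta/\pi\notin\Q$. I would then fix such an $\alpha$: concretely $\alpha=-9/4$, so $\cos\vartheta=\tfrac13$, which by Niven's theorem is not the cosine of any rational multiple of $\pi$ (or, more cheaply, all but countably many $\alpha<-1/4$ work, since $\alpha\mapsto\vartheta$ is injective there and $\pi\Q$ is countable).

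Finally I would derive the contradiction. Assume $\g=L^3_{-9/4}$ has a finite bracket-closed $\mathcal{L}$ spanning $\g$. Since $\g$ is non-abelian and $\mathcal{L}$ spans, some $[\ell_1,\ell_2]$ with $\ell_i\in\mathcal{L}$ is nonzero; this line lies in $[\g,\g]=W$ and, by closure, in $\mathcal{L}$, so $\mathcal{L}\cap W$ contains some $\ell_*$. Because $\mathcal{L}$ spans the $3$-dimensional $\g$ while $\dim W=2$, some $\R z_0\in\mathcal{L}$ is not contained in $W$; writing $z_0=ax_1+bx_2+cx_3$ gives $c\neq0$, and (using that $W$ is an ideal and $\mathrm{ad}_{x_1}|_W=\mathrm{ad}_{x_2}|_W=0$) $\mathrm{ad}_{z_0}|_W=cM$, still invertible and still inducing translation by $\vartheta$ on $\mathbb{P}(W)$. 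Setting $\ell_0:=\ell_*$ and $\ell_{n+1}:=[\R z_0,\ell_n]$ — a well-defined nonzero line in $W$ since $cM$ is invertible — closure forces $\ell_n\in\mathcal{L}$ for every $n$, yet the choice of $\alpha$ makes the $\ell_n$ pairwise distinct, contradicting $|\mathcal{L}|<\infty$. Hence $L^3_{-9/4}$ is non-graph-admissible.

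I expect the middle step to be the main obstacle: one must argue cleanly that a real $2\times2$ map with nonreal eigenvalues of argument $\vartheta$ acts on lines exactly as the rotation by $\vartheta$, that pre- or post-composing with positive scalars changes nothing projectively, and — crucially — that some legitimate $\alpha<-1/4$ genuinely has $\vartheta/\pi$ irrational, which is where Niven's theorem (or the countability of $\pi\Q$) enters. The bookkeeping that any valid $\mathcal{L}$ must simultaneously meet $W$ and the complement of $W$ is routine but should not be skipped.
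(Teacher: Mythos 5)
Your proof is correct, and it attacks the same example family and uses the same underlying mechanism as the paper: the paper derives Theorem~\ref{thm:existence:of:non:graph:admissible:lie:algebras} from Lemma~\ref{lem:L:alpha:three:graph:admissible:if:and:only:if}, whose proof for $\alpha<-1/4$ is likewise an iterated-adjoint argument on the two-dimensional abelian ideal $\spn\{x_1,x_2\}$, with finiteness of the spanning set forcing a finite projective orbit, i.e.\ an eigenvalue argument equivalent to the rotation angle being a rational multiple of $\pi$. The packaging differs in ways worth noting. The paper proves a full if-and-only-if characterization of graph-admissibility for the whole family $L^3_\alpha$ (including a separate degenerate analysis at $\alpha=-1/4$, which is in effect the witness the theorem cites), at the cost of fairly heavy element-level bookkeeping about which elements must lie in the putative overcomplete basis $\mathcal{B}$. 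You prove only existence, but do so more cleanly: the reformulation of Definition~\ref{def:graph:admissible} as a finite bracket-closed family of distinct lines is a tidy way to absorb the ``up to nonzero scalar'' clutter, the reduction $\operatorname{ad}_{z_0}|_W=cM$ with $c\neq 0$ correctly disposes of the freedom in choosing the element outside $W$ (and the sign of $c$ is harmless projectively, since $-I$ fixes every line), and you make explicit the number-theoretic input that the paper leaves implicit in its $\arctan$ criterion, namely a concrete $\alpha$ with $\vartheta/\pi\notin\Q$ via Niven's theorem (or the countability of $\pi\Q$). The one point you flag yourself is the only one to phrase carefully: $M$ is conjugate over $\R$ to a positive multiple of $R_\vartheta$, so its action on $\mathbb{P}(W)$ is conjugate to translation by $\vartheta$ on $\R/\pi\Z$ rather than literally equal to it; since conjugation preserves orbit cardinalities, the ``infinite orbit iff $\vartheta/\pi\notin\Q$'' conclusion you need is unaffected. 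In short: correct, same core idea as the paper, but a leaner, self-contained existence argument versus the paper's sharper classification.
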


\begin{proof}
    This claim follows directly from the following Lemma~\ref{lem:L:alpha:three:graph:admissible:if:and:only:if}, which provides a concrete example of finite-dimensional Lie algebras that fail to satisfy the conditions for graph-admissibility. 
\end{proof}

\begin{lemma}\label{lem:L:alpha:three:graph:admissible:if:and:only:if}
    The real Lie algebra $L_\alpha^3$ (see \cite{DeGraaf:2004}) is graph-admissible if and only if $\alpha> - 1/4$ or if $\arctan(\sqrt{|1+4\alpha|}/(1+2\sqrt{|\alpha|}))/\pi\in\mathbb{Q}^*$.
\end{lemma}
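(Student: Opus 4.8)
The plan is to reduce graph-admissibility of $L_\alpha^3$ to a question about how a single $2\times 2$ matrix permutes the lines of a plane, and then to answer that question in the regimes $\alpha>-1/4$ and $\alpha\le-1/4$. Recall from the proof of Theorem~\ref{thm:existncae:of:non:minimal:graph:admissible:} that $L_\alpha^3=\spn\{x_1,x_2,x_3\}$ with $[x_1,x_2]=0$, $[x_1,x_3]=-x_2$ and $[x_2,x_3]=-\alpha x_1-x_2$. For $\alpha\neq0$ the derived subalgebra $W:=\mathcal{D}^1 L_\alpha^3=\spn\{x_1,x_2\}$ is an abelian plane and the whole bracket is determined by the invertible linear map $A\colon W\to W$, $A(w):=[x_3,w]$, which in the basis $(x_1,x_2)$ has matrix $\left(\begin{smallmatrix}0&\alpha\\ 1&1\end{smallmatrix}\right)$, characteristic polynomial $\lambda^2-\lambda-\alpha$, and eigenvalues $\tfrac12\bigl(1\pm\sqrt{1+4\alpha}\bigr)$. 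The regime $\alpha>-1/4$ is then immediate: if $\alpha=0$, then $L_0^3$ already has the form \eqref{eqn:desired:basis} after the change of basis $x_1\mapsto x_1-x_2$; if $0\neq\alpha>-1/4$, the two eigenvalues of $A$ are real and distinct, so $A$ is diagonalizable over $\R$ with eigenvectors $v_1,v_2$ spanning $W$, and $\{v_1,v_2,x_3\}$ is a basis of $L_\alpha^3$ satisfying \eqref{eqn:desired:basis}; hence $L_\alpha^3$ is even minimal-graph-admissible, which establishes the first clause.

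The core step is the following reduction, which I would prove for $\alpha\le-1/4$ (so $\alpha\neq0$): \emph{$L_\alpha^3$ is graph-admissible if and only if $A$ possesses a finite set of at least two lines of $W$ that it leaves invariant.} For ``$\Leftarrow$'', given such a set $O$, pick a nonzero $w_\ell\in W$ from each $\ell\in O$; these vectors are pairwise non-proportional and span $W$ (as $|O|\ge 2$), so $S:=\{x_3\}\cup\{w_\ell:\ell\in O\}$ is a spanning set of pairwise non-proportional vectors whose only nonzero brackets are $[x_3,w_\ell]=A(w_\ell)\propto w_{A(\ell)}$, witnessing (redundant-)graph-admissibility. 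For ``$\Rightarrow$'', let $S$ be a spanning set of nonzero, pairwise non-proportional elements whose pairwise brackets are each $0$ or proportional to a member of $S$, and split $S=S_0\sqcup S_1$ with $S_0:=S\cap W$, $S_1:=S\setminus W$. Since $\spn S=L_\alpha^3\not\subseteq W$ we have $S_1\neq\emptyset$; fixing $z=cx_3+u\in S_1$, for every $w\in S_0$ the bracket $[z,w]=cA(w)$ is a nonzero vector of $W$, hence proportional to a member of $S$ that must lie in $W$, i.e.\ in $S_0$ --- so $A$ maps $\{\langle w\rangle:w\in S_0\}$ into itself. It remains to check $S_0\neq\emptyset$ and $\spn S_0=W$: if $S_0=\emptyset$, then for distinct $z=cx_3+u$, $z'=c'x_3+u'$ in $S_1$ one has $[z,z']=A(cu'-c'u)$ with $cu'-c'u\neq0$ (non-proportionality), a nonzero vector of $W$ that cannot be proportional to any member of $S=S_1\subseteq L_\alpha^3\setminus W$, a contradiction; and if $\spn S_0$ were a line $\ell_0$, then $A(\ell_0)=\ell_0$ and, fixing $z_1=c_1x_3+u_1\in S_1$, every other $z=cx_3+u\in S_1$ would satisfy $z-(c/c_1)z_1\in A^{-1}(\ell_0)=\ell_0$, so $\spn S\subseteq\R z_1+\ell_0$ would be at most a plane, again a contradiction. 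Hence $\spn S_0=W$ and $\{\langle w\rangle:w\in S_0\}$ is a finite $A$-invariant set of at least two lines. (When $\alpha<-1/4$ the line case cannot even occur, as $A$ then has no real eigenvector; the $\spn S_0=W$ argument is needed only to cover $\alpha=-1/4$.)

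It then remains to make the invariance condition explicit. For $\alpha<-1/4$ the eigenvalues of $A$ are $\rho e^{\pm i\phi}$ with $\rho=\sqrt{|\alpha|}$ and $\phi=\arctan\sqrt{|1+4\alpha|}\in(0,\pi/2)$, so over $\R$ the map $A$ is similar to $\rho$ times the rotation by $\phi$, and its induced action on the circle $\mathbb{P}^1(\R)$ of lines of $W$ --- parametrized as $\R/\pi\Z$ --- is conjugate to the rotation $\psi\mapsto\psi+\phi$. Such a rotation has a finite orbit if and only if $\phi/\pi\in\Q$, in which case every orbit is finite of size equal to the denominator of $\phi/\pi$ in lowest terms; since $0<\phi/\pi<1/2$ that size is at least $3$, so a finite orbit automatically furnishes a finite $A$-invariant set of at least two lines, while no such set exists when $\phi/\pi\notin\Q$. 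Thus, for $\alpha<-1/4$, $L_\alpha^3$ is graph-admissible iff $\phi/\pi\in\Q$; via the half-angle identity $\tan(\phi/2)=\sin\phi/(1+\cos\phi)=\sqrt{|1+4\alpha|}/(1+2\sqrt{|\alpha|})$ together with $\phi\neq0$, this is exactly $\tfrac1\pi\arctan\!\bigl(\sqrt{|1+4\alpha|}/(1+2\sqrt{|\alpha|})\bigr)\in\Q^*$. At the boundary $\alpha=-1/4$, $A$ has the single eigenvalue $1/2$ with one-dimensional eigenspace, so it acts parabolically on the lines of $W$ with a unique invariant line and no larger finite invariant set, and the reduction gives that $L_{-1/4}^3$ is not graph-admissible --- consistent with $\arctan 0=0\notin\Q^*$. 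Combining this with the $\alpha>-1/4$ case yields the stated equivalence.

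The main obstacle I anticipate is the ``$\Rightarrow$'' half of the reduction: one must exclude \emph{every} overcomplete spanning set, not just honest bases, and in particular force any such set to contain enough vectors of the derived subalgebra $W$ that their lines form a finite $A$-invariant family of size at least two. The remaining ingredients --- real diagonalization for $\alpha>-1/4$, the rotation-number dichotomy on $\mathbb{P}^1(\R)$, and the half-angle rewriting --- are routine.
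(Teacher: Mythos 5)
Your proof is correct, and its analytic core coincides with the paper's: the same eigenvalues $\tfrac12(1\pm\sqrt{1+4\alpha})$, the same identification of the projectivized adjoint action on $\spn\{x_1,x_2\}$ with a rotation by $\phi=\arctan\sqrt{|1+4\alpha|}$ on $\mathbb{P}^1(\R)\cong\R/\pi\Z$, and the same half-angle rewriting that produces $\arctan\bigl(\sqrt{|1+4\alpha|}/(1+2\sqrt{|\alpha|})\bigr)$; likewise your diagonalization argument for $\alpha>-1/4$ and your converse construction (representatives of a finite orbit together with $x_3$) are essentially the paper's basis $\{\kappa_\pm x_1+x_2,x_3\}$ and its set $\mathcal{B}'=\{\tilde x_2,\operatorname{ad}^0_{\tilde x_2}(\tilde x_1),\dots\}$. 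Where you genuinely depart is the hard direction for $\alpha\le -1/4$: the paper fixes one element of the putative spanning set inside $\spn\{x_1,x_2\}$ and argues that all its iterated adjoints must themselves occur in the set up to scalars, so finiteness forces that single orbit to be periodic and invertibility forces it to return to the starting element, the rationality condition then being read off as a real-eigenvalue condition on a power of the adjoint map; you instead prove a two-way equivalence between graph-admissibility and the existence of a finite $A$-invariant family of at least two lines, with the forward half resting on your lemma that $S_0=S\cap W$ spans $W$ and has its lines permuted by $A$ (your exclusion of $S_0=\emptyset$ and of $\dim\spn S_0=1$ is correct; note only that the $S_0=\emptyset$ case tacitly uses that $S_1$ then contains two distinct elements, which follows since a spanning set of a three-dimensional algebra has at least three members). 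This organization buys two things: you never need the paper's stronger claim that the entire adjoint orbit of one element lies in the spanning set, only the weaker permutation statement; and the boundary case $\alpha=-1/4$ is disposed of uniformly by the remark that a parabolic projective map has no periodic points other than its unique fixed line, whereas the paper treats $\alpha=-1/4$ by a separate coefficient-by-coefficient case analysis of $[\tilde x_2,\tilde x_3]$. The paper's route, on the other hand, directly exhibits the concrete overcomplete basis in the rational case and stays entirely inside the formalism of the relations \eqref{eqn:desired:basis:overcomplete}, at the cost of the more delicate "every iterate must be in $\mathcal{B}$" bookkeeping.
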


\begin{proof}
    The Lie algebra $L_\alpha^3$ is defined as the real three-dimensional vector space spanned by the elements $x_1,x_2,x_3$, with the Lie bracket given by:
    \begin{align}\label{eqn:structure:constants:L:alpha:3}
        [x_1,x_2]&=0,\;&\;[x_1,x_3]&=-x_2,\;&\;[x_2,x_3]&=-\alpha x_1-x_2,
    \end{align}
    where $\alpha\in\R$ is a real parameter \cite{DeGraaf:2004}.
    Suppose now there exists a finite set $\mathcal{B}=\{\Tilde{x}_j\}_{j=1}^m$ that spans $L_\alpha^3$ and satisfies the graph-admissibility conditions given by Definition~\ref{def:graph:admissible} and the relations in \eqref{eqn:desired:basis:overcomplete}. Each $\Tilde{x}_j\in\mathcal{B}$ can be expressed via the original basis elements:
    \begin{align*}
        \Tilde{x}_j=c_1^{(j)}x_1+c_2^{(j)}x_2+c_3^{(j)}x_3\quad\text{for all }j\in\mathcal{M}:=\{1,\ldots,m\}\supseteq\{1,2,3\},
    \end{align*}
    where the $c_k^{(j)}\in\R$ are real coefficients. Since $\mathcal{B}$ spans $L_\alpha^3$ there must exist at least one element $\Tilde{x}_j\in\mathcal{B}$ such that $c_3^{(j)}\neq 0$, 
    otherwise the algebra would be two-dimensional and abelian since each new element would be expressed as a linear combination of $x_1$ and $x_2$, and $[x_1,x_2]=0$. Without loss of generality, we assume $c_3^{(2)}=1$. The Lie bracket between two elements $\Tilde{x}_p,\Tilde{x}_q\in\mathcal{B}$ can be computed using bilinearity and the structure constants determined by \eqref{eqn:structure:constants:L:alpha:3}:
    \begin{align*}
        [\Tilde{x}_p,\Tilde{x}_q]&=-\alpha\left(c_2^{(p)}c_3^{(q)}-c_3^{(p)}c_2^{(q)}\right)x_1-\left(c_1^{(p)}c_3^{(q)}-c_3^{(p)}c_1^{(q)}+c_2^{(p)}c_3^{(q)}-c_3^{(p)}c_2^{(q)}\right)x_2,
    \end{align*}
    Since $L_\alpha^3$ is non-abelian and the graph-admissibility condition requires that $[\Tilde{x}_p,\Tilde{x}_q]=\Tilde{\alpha}_{pq}\Tilde{x}_{\delta(p,q)}$ with $\Tilde{\alpha}_{pq}\neq 0$ for at least one pair $(p,q)\in\mathcal{M}\times\mathcal{M}$, it follows that there must exist at least one $\Tilde{x}_k\in\mathcal{B}$ such that $c_3^{(k)}=0$. Assume, without loss of generality, that $c_3^{(1)}=0$. For convenience, let us write $\Tilde{x}_1=Ax_1+Bx_2$, where $A,B\in\R$. The Lie bracket between $\Tilde{x}_1$ and $\Tilde{x}_2$ then becomes: $[\Tilde{x}_1,\Tilde{x}_2]=-\alpha Bx_1-(A+B) x_2$. Here, one observes that $\Tilde{x}_2$ can never be proportional to $[\Tilde{x}_1,\Tilde{x}_2]$, since $c_3^{(2)}=1\neq0$ and the bracket lies entirely in the span of $x_1$ and $x_2$. It is, furthermore, easy to show that $\Tilde{x}_1$ can be proportional (with a non-zero proportionality constant) to $[\Tilde{x}_1,\Tilde{x}_2]$ if and only if $\alpha\geq- 1/4$. Let us prove this claim: To determine whether $\Tilde{x}_1\propto[\Tilde{x}_1,\Tilde{x}_2]$, one must solve the system
    \begin{align}
        \kappa A&=-\alpha B,\;&\;\kappa B&=-(A+B),\label{eqn:help:existence:of:non:graph:admissible:Lie:algebras}
    \end{align}
    where $\kappa\in\R^*$ is the proportionality constant. Note that at least one of the coefficients $A$ or $B$ must be non-zero; otherwise $\Tilde{x}_1$ would vanish, which contrdicts the assumption that $\Tilde{x}_1$ is a non-zero element belonging to $\mathcal{B}$. Observe that if $B=0$, the first equation of \eqref{eqn:help:existence:of:non:graph:admissible:Lie:algebras} implies $A=B=0$ if $B=0$, which is prohibited. Hence, one has $B\neq0$.  Simple manipulations of equations \eqref{eqn:help:existence:of:non:graph:admissible:Lie:algebras} yield $0=B(\kappa^2+\kappa-\alpha)$ and consequently $\kappa^2+\kappa-\alpha=0$. This equation admits real solutions if and only if $\alpha\geq -1/4$, since the discriminant of the quadratic polynomial is $\Delta=1+4\alpha$, which is non-negative if and only if $\alpha\geq -1/4$ \cite{Cox:2025}. Therefore, the proportionality conditions $\Tilde{x}_1\propto [\Tilde{x}_1,\Tilde{x}_2]$ can only be satisfied if $\alpha\geq -1/4$. For $\alpha<-1/4$, the system has no real solution, and the proportionality fails.

    We consider, therefore, now the case $\alpha> -1/4$. In this regime, the proportionality condition derived earlier can be formulated as an eigenvalue problem. Specifically, we can examine the linear transformation defined by the adjoint action $-\operatorname{ad}_{\Tilde{x}_2}(\Tilde{x}_1)=[\Tilde{x}_1,\Tilde{x}_2]$. Restricted to the space $V:=\spn\{x_1,x_2\}$, we can represent the problem $-\operatorname{ad}_{\Tilde{x}_2}(\Tilde{x}_1)=\kappa\Tilde{x}_1$ with $\kappa\in\R^*$ using the matrix formalism:
    \begin{align*}
        \begin{pmatrix}
            0&-\alpha\\
            -1&-1
        \end{pmatrix}\begin{pmatrix}
            A\\
            B
        \end{pmatrix}&=\kappa\begin{pmatrix}
            A\\
            B
        \end{pmatrix}.
    \end{align*}
    The eigenvalues are given by $\kappa_\pm=(1\pm\sqrt{1+4\alpha})/2$. Using these eigenvalues, we define the new elements $\Tilde{x}_1:=\kappa_-x_1+x_2$, $\Tilde{x}_2:=x_3$, and $\Tilde{x}_3:=\kappa_+ x_1+x_2$, which are clearly linearly independent if $\alpha>-1/4$, and hence form a valid basis of $L_\alpha^3$. Moreover, they satisfy the following Lie bracket relations: $[\Tilde{x}_1,\Tilde{x}_2]=\kappa_+ \Tilde{x}_1$, $[\Tilde{x}_1,\Tilde{x}_3]=0$, and $[\Tilde{x}_3,\Tilde{x}_2]=\kappa_-\Tilde{x}_3$. This confirms that $L_\alpha^3$ is minimal-graph-admissible and consequently also graph-admissible for $\alpha>-1/4$.

    Next, we turn to the case $\alpha<-1/4$. As previously established, the Lie bracket $[\Tilde{x}_1,\Tilde{x}_2]$ is, independent of the choice of $c_1^{(1)}$, $c_2^{(1)}$, $c_1^{(2)}$, and $c_2^{(2)}$, non-zero, lies in the subspace $V:=\spn\{x_1,x_2\}$, and is neither proportional to $\Tilde{x}_1$, nor to $\Tilde{x}_2$. Therefore, it must be included in $\mathcal{B}$, up to a non-zero scalar multiple $\kappa$. Without loss of generality, we may assume this proportionality constant is $\kappa=1$, since scalar multiples are irrelevant for satisfying the graph-admissibility condition \eqref{eqn:desired:basis:overcomplete}.

    To proceed, we introduce the iterated adjoint action: $\operatorname{ad}_{\Tilde{x}_2}^\ell(\Tilde{x}_1):=[\Tilde{x}_2,[\Tilde{x}_2,[\ldots,[\Tilde{x}_2,\Tilde{x}_1]\ldots]]]$, where the Lie bracket is nested $\ell$ times. This allows us to make the following useful observations:
    
    \begin{itemize}
        \item The element $\operatorname{ad}_{\Tilde{x}_2}^\ell(\Tilde{x}_1)$ is non-zero and lies entirely in $V$ for all $\ell\in\N_{\geq0}$. 
        
        This can be inductively shown: For the base case $\ell=0$, one has by convention $\operatorname{ad}_{\Tilde{x}_2}^0(\Tilde{x}_1):=A_0 x_1+B_0x_2$, where $(A_0,B_0)\in\R^2\setminus\{0\}$, since $\Tilde{x}_1\neq0$ by assumption. Then $\operatorname{ad}_{\Tilde{x}_2}^1(\Tilde{x}_1)=A_1x_1+B_1 x_2$, where the coefficients are given by the recurrence relations: $A_1=\alpha B_0$ and $B_1=A_0+B_0$. It is straightforward to verify that from $(A_0,B_0)\neq 0$ it follows that $(A_1,B_1)\neq 0$. Henceforth $\operatorname{ad}_{\Tilde{x}_2}^1(\Tilde{x}_1)\neq 0$. With the base case shown, we can move on to the induction step. Assume the induction hypothesis holds for some $\ell\geq 1$, i.e., assume that $\operatorname{ad}_{\Tilde{x}_2}^\ell(\Tilde{x}_1)=A_\ell x_1+B_\ell x_2\neq 0$ holds for some $\ell\geq 1$. Then, $\operatorname{ad}_{\Tilde{x}_2}^{\ell+1}(\Tilde{x}_1)=A_{\ell+1} x_1+B_{\ell+1}x_2$, with the recurrence $A_{\ell+1}=\alpha B_{\ell}$, and $B_{\ell+1}=A_\ell+B_\ell$. By the same argument as before, we have that $(A_{\ell+1},B_{\ell+1})\neq 0$ since $(A_\ell,B_\ell)\neq 0$, which shows that $\operatorname{ad}_{\Tilde{x}_2}^\ell(\Tilde{x}_1)\in V\setminus\{0\}$ for all $\ell\in\N_{\geq0}$.
        \item The linear map $\operatorname{ad}_{\Tilde{x}_2}^\ell|_V$ is invertible for all $\ell\in\N_{\geq0}$. This follows from examining the matrix representation of $\operatorname{ad}_{\Tilde{x}_2}|_V$. 
        \item For all $p,q\in\N_{\geq0}$, one has $[\operatorname{ad}_{\Tilde{x}_2}^p(\Tilde{x}_1),\operatorname{ad}_{\Tilde{x}_2}^q(\Tilde{x}_1)]=0$ . This is a direct consequence of the observation that $\operatorname{ad}_{\Tilde{x}_2}^\ell(\Tilde{x}_1)\in V$ for all $\ell\in\N_{\geq0}$ and that the subspace is abelian, i.e., $[V,V]=\{0\}$, since $[x_1,x_2]=0$.
    \end{itemize}
    
    Since $\mathcal{B}$ is assumed to be a finite (possibly overcomplete) basis of $L_\alpha^3$, all elements $\operatorname{ad}_{\Tilde{x}_2}^\ell(\Tilde{x}_1)$ must belong to $\mathcal{B}$, up to an irrelevant non-zero scalar multiple, for all $\ell$ smaller than some particular $\ell_*\in\N_{\geq2}$. This follows from the fact that $\operatorname{ad}_{\Tilde{x}_2}(\Tilde{x}_1)\in\mathcal{B}$, which implies that its Lie bracket with any other element in $\mathcal{B}$ must either vanish or be proportional to another element in $\mathcal{B}$, particularly when considering the Lie bracket with $\Tilde{x}_1$ and $\Tilde{x}_2$. From the previous observation that $[\operatorname{ad}_{\Tilde{x}_2}^p(\Tilde{x}_1),\operatorname{ad}_{\Tilde{x}_2}^q(\Tilde{x}_1)]=0$ for all $p,q\in\N_{\geq0}$, and that $\operatorname{ad}_{\Tilde{x}_2}^2(\Tilde{x}_1)=[\Tilde{x}_2,\operatorname{ad}_{\Tilde{x}_2}(\Tilde{x}_1)]\in V\setminus\{0\}$, it follows that either: (a) $\operatorname{ad}_{\Tilde{x}_2}^2(\Tilde{x}_1)$ is proportional to $\Tilde{x}_1$ or $\operatorname{ad}_{\Tilde{x}_2}(\Tilde{x}_1)$; or (b) $\operatorname{ad}_{\Tilde{x}_2}^2(\Tilde{x}_1)$ belongs to $\mathcal{B}$. In case (b), the same logic must be applied recursively. However, since $\mathcal{B}$ is finite, this process must eventually terminate, requiring the existence of the aforementioned minimal $\ell_*\in\N_{\geq0}$, such that the iterated adjoint becomes proportional to a previous one. Thus, there must exist some $\operatorname{ad}_{\Tilde{x}_2}^{p}(\Tilde{x}_1)$, with $p\in \N_{\geq2}$, and some $q\in\N_{\geq0}$ with $q<p$, such that $\operatorname{ad}_{\Tilde{x}_2}^p(\Tilde{x}_1)\propto \operatorname{ad}_{\Tilde{x}_2}^q(\Tilde{x}_1)$. Since, by the observations listed above, $\operatorname{ad}_{\Tilde{x}_2}^p(\Tilde{x}_1)\in V$ and $\operatorname{ad}_{\Tilde{x}_2}^\ell$ is invertible on $V$ for all $\ell\in\N_{\geq0}$, one has that $\operatorname{ad}_{\Tilde{x}_2}^{p-q}(\Tilde{x}_1)=[\Tilde{x}_2,\operatorname{ad}_{\Tilde{x}_2}^{p-q-1}(\Tilde{x}_1)]\propto \operatorname{ad}_{\Tilde{x}_2}^0(\Tilde{x}_1)=\Tilde{x}_1$. One may therefore assume, without loss of generality, that the set $\mathcal{B}':=\{\Tilde{x}_2,\operatorname{ad}_{\Tilde{x}_2}^0(\Tilde{x}_1),\operatorname{ad}_{\Tilde{x}_2}(\Tilde{x}_1),\ldots,\operatorname{ad}_{\Tilde{x}_2}^{p-q-1}(\Tilde{x}_1)\}$ is a non-empty subset of $\mathcal{B}$, where $p-q$ is the smallest integer such that $\operatorname{ad}_{\Tilde{x}_2}^{p-q}(\Tilde{x}_1)\propto\Tilde{x}_1$, as these satisfy the relations \eqref{eqn:desired:basis:overcomplete}. By the properties established above: (i) all brackets $[\operatorname{ad}_{\Tilde{x}_2}^{j}(\Tilde{x}_1),\operatorname{ad}_{\Tilde{x}_2}^k(\Tilde{x}_1)]$ vanish, and $[\Tilde{x}_2,\operatorname{ad}_{\Tilde{x}_2}^\ell(\Tilde{x}_1)]=\operatorname{ad}_{\Tilde{x}_2}^{\ell+1}(\Tilde{x}_1)\in\mathcal{B}'$ for all $\ell\in\{0,\ldots,p-q-2\}$, while $[\Tilde{x}_2,\operatorname{ad}_{\Tilde{x}_2}^{p+q-1}(\Tilde{x}_1)]\propto\Tilde{x}_1\in\mathcal{B}'$; (ii) $0\notin\mathcal{B}'$; and (iii) For any $x,y\in\mathcal{B}'$, $x\propto y$  if and only if $x=y$. It is furthermore clear that $\mathcal{B}'$ spans $L_\alpha^3$, since $\Tilde{x}_2$ has non-vanishing support in $L_\alpha^3\setminus V$ and $\Tilde{x}_1$, while $\operatorname{ad}_{\Tilde{x}_2}(\Tilde{x}_1)$ are linearly independent elements within the two-dimensional subspace $V$.

    One is now left with determining the smallest $\ell$ such that $\operatorname{ad}_{\Tilde{x}_2}^\ell(\Tilde{x}_1)\propto\Tilde{x}_1$. This is a simple eigenvalue problem, and we start, therefore, by computing the spectrum of the linear map $\operatorname{ad}_{\Tilde{x}_2}|_V$. The eigenvalues of this map are: $\operatorname{Specc}(\operatorname{ad}_{\Tilde{x}_2}|_V)=\{(1\pm i\sqrt{|1+4\alpha|})/2\}$. Thus, the eigenvalues of $\operatorname{ad}_{\Tilde{x}_2}^\ell|_V$ are given by
    \begin{align*}
        \operatorname{Specc}(\operatorname{ad}_{\Tilde{x}_2}^\ell|_V)=\left\{\sqrt{|\alpha|}\exp\left(\pm 2i\arctan\left(\frac{\sqrt{|1+4\alpha|}]}{(1+2\sqrt{|\alpha|})}\right)\ell\right)\right\}.
    \end{align*}
    These eigenvalues are real for some $\ell\in\N_{\geq0}$ if and only if $\arctan(\sqrt{|1+4\alpha|}/(1+2\sqrt{|\alpha|}))/\pi\in\mathbb{Q}^*$.

    This leaves us with the final case $\alpha=-1/4$. In this scenario, the linear map $\operatorname{ad}_{\Tilde{x}_2}$ has only a one-dimensional eigenspace, spanned by $\Tilde{x}_1:=x_1-2x_2$, corresponding to the doubly degenerate eigenvalue $1/2$. In ordered for $\mathcal{B}$ to span $L_{-1/4}^3$, it must therefore also contain an element $\Tilde{x}_3=c_1^{(3)}x_1+c_2^{(3)}x_2+c_3^{(3)}x_3$, which does not belong to $\spn\{\Tilde{x}_1,\Tilde{x}_2\}$. We can now, without loss of generality, consider the following two cases:
    \begin{itemize}
        \item \textbf{Case 1:} $\boldsymbol{c_3^{(3)}=1}$. Here, we compute: $[\Tilde{x}_3,\Tilde{x}_2]=-(c_2^{(2)}-c_2^{(3)})x_1/4+(c_1^{(2)}+c_2^{(2)}-c_1^{(3)}-c_2^{(3)})x_2$. It is straightforward to verify that $[\Tilde{x}_3,\Tilde{x}_2]\propto \Tilde{x}_1$ if and only if the set $\{\Tilde{x}_1,\Tilde{x}_2,\Tilde{x}_3\}$ is linearly dependent and that $[\Tilde{x}_3,\Tilde{x}_2]=0$ if and only if $\Tilde{x}_3\propto\Tilde{x}_2$. The latter case is prohibited by Definition~\ref{def:graph:admissible}, while the former case requires $\mathcal{B}$ to contain a fourth element outside the span of $\{\Tilde{x}_1,\Tilde{x}_2,\Tilde{x}_3\}$. We may therefore assume that $\Tilde{x}_3$ is such that $[\Tilde{x}_2,\Tilde{x}_3]$ neither vanishes nor is proportional to $\Tilde{x}_1$. Consequently, the element $\operatorname{ad}_{\Tilde{x}_2}(\Tilde{x}_3)$ must also belong to $\mathcal{B}$, and one has that $\operatorname{ad}_{\Tilde{x}_2}(\Tilde{x}_3)\in V\setminus\spn\{\Tilde{x}_1\}$, which is covered by the next case.
        \item \textbf{Case 2:} $\boldsymbol{c_3^{(3)}=0}$. Here we have that $\Tilde{x}_3\in V\setminus\spn\{\Tilde{x}_1\}$, and we compute: $[\Tilde{x}_2,\Tilde{x}_3]=-c_2^{(3)}x_1/4+(c_1^{(3)}+c_2^{(3)})x_2$. This is non-zero, since $\Tilde{x}_3$ must be non-zero. Moreover, this bracket is proportional to $\Tilde{x}_1$ if and only if $\Tilde{x}_3\propto\Tilde{x}_1$, which is prohibited. Thus,  $\operatorname{ad}_{\Tilde{x}_2}(\Tilde{x}_3)=[\Tilde{x}_2,\Tilde{x}_3]$ must belong to $\mathcal{B}$. But since $\operatorname{ad}_{\Tilde{x}_2}(\Tilde{x}_3)$ is also from $V\setminus\spn\{\Tilde{x}_1\}$, one must have that $\operatorname{ad}_{\Tilde{x}_2}^\ell(\Tilde{x}_3)\in\mathcal{B}$ for all $\ell\in\N_{\geq0}$, making the set $\mathcal{B}$ infinite, which contradicts the assumption that it is finite, showing that $L_{-1/4}^3$ is non-graph-admissible. This concludes this proof since $\arctan(\sqrt{|1+4\alpha|}/(1+2\sqrt{|\alpha|}))/\pi|_{\alpha=-1/4}=0\notin\Q^*$. 
    \end{itemize}
\end{proof}

Despite the implications of Theorem~\ref{thm:existence:of:non:graph:admissible:lie:algebras}, it is crucial to emphasize that the graph-admissibility framework remains applicable to a broad and structurally significant group of Lie algebras. In particular, the following result demonstrates that complex semisimple Lie algebras, which play a central role in the classification theory of Lie algebras, fall within the scope of graph-admissibility.

\begin{theorem}\label{thm:semisimple:lie:algebra:graph:admissible}
    Every finite-dimensional semisimple Lie algebra over an algebraically closed field $\mathbb{F}$ is graph-admissible.
\end{theorem}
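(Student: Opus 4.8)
The plan is to invoke the classical structure theory of semisimple Lie algebras---concretely, the root space decomposition---and to exhibit an explicit overcomplete spanning set, built from a Cartan--Weyl (Chevalley) basis, that manifestly satisfies the relations \eqref{eqn:desired:basis:overcomplete}. Working in the standard setting of characteristic zero, where this structure theory applies, and using that $\mathbb{F}$ is algebraically closed, the algebra $\g$ admits a Cartan subalgebra $\gh$ of dimension $\ell=\operatorname{rank}(\g)$ together with the root space decomposition $\g=\gh\oplus\bigoplus_{\alpha\in\Phi}\g_\alpha$, where $\Phi\subseteq\gh^*\setminus\{0\}$ is the finite, reduced root system of the pair $(\g,\gh)$ and each root space $\g_\alpha$ is one-dimensional.

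First I would fix a choice of nonzero root vectors $e_\alpha\in\g_\alpha$ for every $\alpha\in\Phi$, normalized so that $[e_\alpha,e_{-\alpha}]=h_\alpha\in\gh$ is the coroot associated to $\alpha$. I then record two standard facts: the coroots $\{h_\alpha:\alpha\in\Phi\}$ span $\gh$, and since $\Phi$ is reduced, $h_\alpha\propto h_\beta$ forces $\beta=\pm\alpha$. Fixing a system of positive roots $\Phi^+$, one obtains the set
\begin{align*}
    \mathcal{B}:=\{\,e_\alpha:\alpha\in\Phi\,\}\cup\{\,h_\alpha:\alpha\in\Phi^+\,\},
\end{align*}
whose elements are nonzero, no two of which are proportional, and which spans $\g$; indeed its cardinality is $|\Phi|+|\Phi^+|\geq|\Phi|+\ell=\dim\g$, so it is a genuine overcomplete basis in the sense of Definition~\ref{def:graph:admissible}.

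Next I would verify the bracket relations \eqref{eqn:desired:basis:overcomplete} against the standard Cartan--Weyl multiplication rules, case by case. The brackets $[h_\alpha,h_\beta]$ vanish because $\gh$ is abelian; the brackets $[h_\alpha,e_\beta]=\beta(h_\alpha)\,e_\beta$ are scalar multiples of $e_\beta\in\mathcal{B}$ (or zero); the bracket $[e_\alpha,e_{-\alpha}]=h_\alpha$ is proportional to $h_\alpha\in\mathcal{B}$, using $h_{-\alpha}=-h_\alpha$ for negative roots; and for $\beta\neq-\alpha$ the bracket $[e_\alpha,e_\beta]$ equals $N_{\alpha,\beta}\,e_{\alpha+\beta}$ whenever $\alpha+\beta\in\Phi$ and vanishes otherwise, hence is again either zero or proportional to an element of $\mathcal{B}$. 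Since every pairwise bracket of elements of $\mathcal{B}$ is thus zero or proportional to a single element of $\mathcal{B}$, the algebra $\g$ is redundant-graph-admissible, and therefore graph-admissible.

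I do not anticipate a serious obstacle in the characteristic-zero case: the root space decomposition supplies exactly the structural rigidity demanded by Definition~\ref{def:graph:admissible}, so the argument is essentially bookkeeping. The points needing care are ensuring that no two chosen elements are proportional---handled by restricting the Cartan part to $\Phi^+$ and invoking reducedness of $\Phi$---and, if one wishes to admit base fields of positive characteristic, restricting attention to semisimple Lie algebras of classical type (more generally, those possessing a Chevalley basis), since general modular semisimple Lie algebras need not admit a root space decomposition with one-dimensional root spaces. One could alternatively first reduce to the case of a simple $\g$ via the compatibility of graph-admissibility with direct sums (cf.\ Lemma~\ref{lem:direct:sum:if:unconnected}), though this reduction is unnecessary, as the decomposition above already treats the semisimple case directly.
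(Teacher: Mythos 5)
Your proof is correct, and it follows the same underlying idea as the paper---using the root-theoretic structure of a semisimple Lie algebra to produce a finite overcomplete spanning set satisfying the relations \eqref{eqn:desired:basis:overcomplete}---but the realization differs in a worthwhile way. The paper argues via the Chevalley--Serre presentation: it takes the generators $\{E_\pm^{(j)},H^{(j)}\}$ attached to the simple roots and extends this set $\mathcal{S}$ to $\mathcal{S}'$ by adjoining all root vectors, then asserts that $\mathcal{S}'$ satisfies \eqref{eqn:desired:basis:overcomplete}. You instead work directly with the full root space decomposition $\g=\gh\oplus\bigoplus_{\alpha\in\Phi}\g_\alpha$ and verify the four bracket types case by case. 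The concrete difference is your choice of the Cartan part: you include the coroots $h_\alpha$ for \emph{all} positive roots, not just the simple ones. This matters, because for a non-simple root $\alpha$ the bracket $[e_\alpha,e_{-\alpha}]=h_\alpha$ is a nontrivial linear combination of the simple coroots and hence generally not proportional to any single element of the paper's set $\mathcal{S}'$ (whose Cartan part consists only of the $H^{(j)}$ for simple $j$); your enlarged set repairs exactly this point, at the cost of a slightly larger redundant basis, and your reducedness argument correctly guarantees that no two of the chosen elements are proportional, as Definition~\ref{def:graph:admissible} requires. Your explicit caveat about characteristic zero is also appropriate, since both arguments rest on the classical structure theory that the paper invokes implicitly for algebraically closed fields.
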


\begin{proof}
    This is an immediate consequence of the classification of semisimple Lie algebras over algebraically closed fields \cite{Hall:Lie:groups:16,Fuchs:1995,Humphreys:1972}. Any such semisimple Lie algebra $\g$, associated to a set of $r$ simple roots, is uniquely determined by $3r$ elements $\{E_\pm^{(j)},H^{(j)}\}_{j=1}^{3r}$ which satisfy the \emph{Chevalley-Serre} relations:
    \begin{align*}
        [H^{(j)},H^{(k)}]&=0,\;&\;[H^{(j)},E_\pm^{(k)}]&=\pm A_{kj}E_\pm^{(k)},\;&\;[E_+^{(j)},E_-^{(k)}]&=\delta_{jk}H^{(j)},
    \end{align*}
    as well as the Jacobi identity and the \emph{Serre} conditions
    \begin{align*}
        \left(\operatorname{ad}_{E_\pm^{(j)}}\right)^{1-A_{kj}}E_\pm^{(k)}=0\qquad\text{for} j,k\in\{1,\ldots,r\}\text{ with }j\neq k.
    \end{align*}
    Here $\operatorname{ad}_{x}(y):=[x,y]$ denotes the adjoint action, $\operatorname{ad}_x^n(y)$ is a short-hand-notation for $n$-chained brackets, i.e., $\operatorname{ad}_x^n(y)=[x,[x,\ldots,[x,y]]]$, and $\boldsymbol{A}\in\Z_{\leq0}^{r\times r}$ is the corresponding Cartan matrix \cite{Fuchs:1995}. This set $\mathcal{S}:=\{E_\pm^{(j)},H^{(j)}\}_{j=1}^{3r}$ generates the Lie algebra, i.e., $\lie{\mathcal{S}}=\g$. Furthermore, the set $\mathcal{S}$ can be extended to the set $\mathcal{S}'$ which includes all associated root vectors, not only those associated with simple roots. Since the root system is finite \cite{Hall:Lie:groups:16}, this extended Chevalley-Serre basis $\mathcal{S}'$ remains finite and spans $\g$ \cite{Humphreys:1972}. Moreover, the added vectors are of the form $(\operatorname{ad}_{E_\pm^{(j)}})^{1-A_{jk}}E_\pm^{(k)}$, whenever this term does not vanish. Thus, by the relations stated above, the extended set $\mathcal{S}'$ satisfies at least the relations \eqref{eqn:desired:basis:overcomplete}, making $\g$ at least redundant-graph-admissible according to Definition~\ref{def:graph:admissible}.
\end{proof}

This result generalizes naturally to reductive Lie algebras:
\begin{corollary}\label{cor:rductive:Lie:algebras:graph:admissible}
    Every finite-dimensional reductive Lie algebra over an algebraically closed field $\mathbb{F}$ is graph-admissible.
\end{corollary}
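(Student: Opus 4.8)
The plan is to reduce the statement to the semisimple and abelian cases via the canonical decomposition of a reductive Lie algebra. Recall from the remark following the definition of reductive algebra that if $\g$ is reductive then $\g=\mathcal{D}^1\g\oplus\mathcal{Z}(\g)$ with $\mathfrak{s}:=\mathcal{D}^1\g$ semisimple and $\mathfrak{z}:=\mathcal{Z}(\g)$ abelian \cite{Knapp:1996}. First I would invoke Theorem~\ref{thm:semisimple:lie:algebra:graph:admissible}, which (since $\mathbb{F}$ is algebraically closed) provides a finite spanning set $\{\Tilde{x}_j\}_{j=1}^{m_1}$ of $\mathfrak{s}$, consisting of non-zero, pairwise non-proportional elements, such that for every $j,k$ the bracket $[\Tilde{x}_j,\Tilde{x}_k]$ either vanishes or is proportional to some $\Tilde{x}_{\delta_\mathfrak{s}(j,k)}$. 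On the abelian side, any basis $\{z_\ell\}_{\ell=1}^{m_2}$ of $\mathfrak{z}$ trivially satisfies the graph-admissibility relations, all of its brackets being zero.

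Next I would set $\mathcal{B}:=\{\Tilde{x}_j\}_{j=1}^{m_1}\cup\{z_\ell\}_{\ell=1}^{m_2}$ and verify that it witnesses graph-admissibility of $\g$ in the sense of Definition~\ref{def:graph:admissible}. The set $\mathcal{B}$ is finite and, because $\g=\mathfrak{s}\oplus\mathfrak{z}$, it spans $\g$. Checking the bracket relations amounts to three cases: a bracket of two $\Tilde{x}$'s is controlled by the semisimple part; a bracket of two $z$'s vanishes; and a mixed bracket $[\Tilde{x}_j,z_\ell]$ vanishes since $z_\ell\in\mathcal{Z}(\g)$. Hence every non-zero bracket of elements of $\mathcal{B}$ is proportional to an element of $\mathcal{B}$; one then defines the index function $\delta$ on $\mathcal{B}$ by extending $\delta_\mathfrak{s}$ and assigning the value $0$ to every abelian or mixed pair, and the resulting structure matrix is antisymmetric because the Lie bracket is.

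The only point I expect to require a short argument---and it is the closest thing to an obstacle---is the non-proportionality clause of \eqref{eqn:desired:basis:overcomplete}, i.e.\ that $\Tilde{x}_j\not\propto z_\ell$ for all $j,\ell$. This follows from the directness of the sum: a non-zero scalar multiple of $\Tilde{x}_j\in\mathfrak{s}$ equal to the non-zero $z_\ell\in\mathfrak{z}$ would exhibit a non-zero element of $\mathfrak{s}\cap\mathfrak{z}=\{0\}$, a contradiction; pairwise non-proportionality inside each block is already built into the two constructions. Therefore $\mathcal{B}$ satisfies \eqref{eqn:desired:basis:overcomplete}, so $\g$ is at least redundant-graph-admissible, hence graph-admissible. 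The degenerate cases need no separate treatment: $\mathfrak{z}=\{0\}$ reduces to Theorem~\ref{thm:semisimple:lie:algebra:graph:admissible}, and $\mathfrak{s}=\{0\}$ to the abelian case.
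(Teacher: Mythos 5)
Your proof is correct and follows essentially the same route as the paper: decompose $\g=\mathcal{D}^1\g\oplus\mathcal{Z}(\g)$, take the spanning set for the semisimple part provided by Theorem~\ref{thm:semisimple:lie:algebra:graph:admissible}, adjoin a basis of the center, and observe that all new brackets vanish. Your extra check of the non-proportionality clause via $\mathfrak{s}\cap\mathfrak{z}=\{0\}$ is a detail the paper leaves implicit, but the argument is the same.
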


\begin{proof}
    Let $\g$ be a reductive Lie algebra, then $\g=[\g,\g]\oplus\mathcal{Z}(\g)$, where $[\g,\g]$ is semisimple \cite{Knapp:1996}. Due to Theorem~\ref{thm:semisimple:lie:algebra:graph:admissible}, the semisimple part $[\g,\g]$ admits a (possibly overcomplete) $\mathcal{B}$ that satisfies \eqref{eqn:desired:basis:overcomplete}. This basis can be extended to a basis of $\g$, by adding elements from the center $\mathcal{Z}(\g)$, which are abelian. This implies that the extended basis satisfies \eqref{eqn:desired:basis:overcomplete}, making $\g$ graph-admissible.
\end{proof}

\subsection{Graphs associated with finite-dimensional Lie algebras}
Having established that a wide range of finite-dimensional Lie algebras are graph-admissible, whilst also highlighting the restrictions to the definition given, it is natural to ask how one can visualize such algebras in a way that makes their structure transparent. The method described here provides a systematic procedure for associating a labeled directed graph to any graph-admissible Lie algebra, making it possible to \enquote{see} the Lie algebra's properties at a glance.

Let $\g$ be a finite-dimensional graph-admissible Lie algebra.
We provide the following method for visualization: First, choose a (possibly overcomplete) basis $\mathcal{B}=\{x_j\}_{j\in\mathcal{N}}$ (or $\mathcal{B}=\{x_j\}_{j\in\mathcal{M}}$) that satisfies relations \eqref{eqn:desired:basis} (respectively, \eqref{eqn:desired:basis:overcomplete}). Then, draw a vertex $v_j$ for every basis element $x_j\in \mathcal{B}$ in a two-dimensional plane and label it using the corresponding element, i.e., $v_j=x_j$ \footnote{To improve the precision of the notation, it could be helpful to introduce the function $\varrho_{\mathrm{v}}:V\to \g$ that maps a vertex $v\in V$ to the corresponding labeling basis element $x_j\in\g$. Similarly, for $K\in\{\mathrm{s},\mathrm{l},\mathrm{e}\}$ define the functions $\varrho_{K}:E\to\g$, $e\mapsto\varrho_\mathrm{v}(\varpi_{K}(e))$ that map the starting vertex, labeling vertex, or end vertex of an edge $e$ to the corresponding element in $\g$. This definition would remove ambiguity in the interpretation of graph elements. However, for convenience, we choose to identify a vertex with its labeling element directly.}. In the next step, one considers every pair of basis elements $x_j,x_k\in\mathcal{B}$. If there exists an element $x_\ell\in \mathcal{B}$ such that $[x_j,x_k]\propto x_\ell$, one draws a directed edge from $v_j$ to $v_\ell$, labeled by $v_k$. In other words, whenever $[x,y]=\kappa z$ with $\kappa\in\mathbb{F}^*$, draw an edge from $x$ to $z$ labeled by $y$:
\begin{quote}
    \centering\graphlegend{Black}
\end{quote}
Thus, for any finite-dimensional graph-admissible Lie algebra $\g$, we can associate the graph $G(V,E)$ with $\g$ by identifying $V=\mathcal{B}$ and $E:=\{e=(v_\mathrm{s},v_\mathrm{l},v_\mathrm{e})\,\mid\,e\in V^{\times 3}\,\wedge\,[v_\mathrm{s},v_\mathrm{l}]\propto v_\mathrm{e}\}$. This construction can be concisely formalized as an algorithm for any graph-admissible Lie algebra, as shown in Algorithm~\ref{alg:creating:graph}.

\begin{algorithm}[htpb]
        \DontPrintSemicolon
        \KwData{A (possibly overcomplete) basis $\mathcal{B}=\{\Tilde{x}_j\}_{j\in\mathcal{M}}$ of the finite-dimensional graph-admissible Lie algebra $\g$, satisfying the Lie bracket relations \eqref{eqn:desired:basis:overcomplete}.}
        \KwResult{A labeled directed graph associated with the Lie algebra $\g$ constructed with respect to the given basis $\mathcal{B}$.}
        \SetKwData{Left}{left}\SetKwData{This}{this}\SetKwData{Up}{up}
        \SetKwFunction{Union}{Union}\SetKwFunction{FindCompress}{FindCompress}
        \SetKwInOut{Input}{input}\SetKwInOut{Output}{output}
    
        $V$ $\leftarrow$ $\mathcal{B}$
        \tcc*[h]{Initialize the vertex set with the basis $\mathcal{B}$}\;
        $E$ $\leftarrow$ $\emptyset$ 
        \tcc*[h]{Initialize an empty edge set}\;
        \BlankLine
        \ForEach(\tcc*[h]{Add all relevant edges to edge set $E$}){$(j,k)\in \mathcal{M}\times\mathcal{M}$ with $j<k$}{
            \If{$[x_j,x_k]=\alpha_{jk} x_{\delta(j,k)}$ with $\alpha_{jk}\neq 0$}{
                $E$ $\leftarrow$ $E\cup\{(x_j,x_k,x_{\delta(j,k)})\}$\;
                $E$ $\leftarrow$ $E\cup\{(x_k,x_j,x_{\delta(j,k)})\}$
                \tcc*[h]{Add edge to edge set $E$; each edge is an ordered triple: (start vertex, edge-label, end vertex)}\;
            }
        }
        \Return $G(V,E)$\tcc*[h]{Return labeled directed graph $G(V,E)$ associated with $\g$}\;
    \caption{Algorithm for generating a labeled directed graph associated with a finite-dimensional graph-admissible Lie algebra $\g$}\label{alg:creating:graph}
\end{algorithm}

It is essential to note that, given a directed graph $G(V,E)$ constructed via the procedure outlined in Algorithm~\ref{alg:creating:graph}, one can always reconstruct the original Lie algebra, since $\g=\lie{V}$, where $\lie{V}$ denotes the Lie algebra generated by the elements corresponding to the vertices of the graph.

The following definition formalizes the correspondence between Lie algebras and their associated graphs, ensuring that the graph encodes the Lie bracket structure to the desired abstraction.

\begin{definition}\label{def:assocaited:graph}
    Let $\g$ be a finite-dimensional graph-admissible Lie algebra. A labeled directed graph $G(V,E)$ is said to be \emph{associated} with $\g$ if and only if it can be obtained by applying Algorithm~\ref{alg:creating:graph} to a---possibly overcomplete---basis $\mathcal{B}$ of $\g$ satisfying the conditions from Definition~\ref{def:graph:admissible}. We say a graph $G(V,E)$, associated with a Lie algebra $\g$, is a \emph{minimal graph} if $|V|=\dim(\g)$, otherwise we say that $G(V,E)$ is a \emph{redundant graph}.
\end{definition}

\begin{tcolorbox}[breakable, colback=Cerulean!3!white,colframe=Cerulean!85!black,title=\textbf{Example}: Graphs associated with graph-admissible Lie algebras]
    We here consider how the construction given above works in practice:
    \begin{example}\label{exa:first:example:graphs:assocaited:with:algebras}
        Consider the simple real Lie algebra $\mathfrak{su}(2)$, defined by the following Lie bracket relations \cite{Pfeifer:2003}:
        \begin{align}
            [e_1,e_2]&=e_3,\;&\;[e_2,e_3]&=e_1,\;&\;[e_3,e_1]&=e_2.\label{eqn:basis:su:2:canonical}
        \end{align}
        Applying Algorithm~\ref{alg:creating:graph}, we construct the associated directed labeled graph, which is depicted in Figure~\ref{fig:su2:sl2} (a).
    \end{example}
    It is important to emphasize that the graph associated with a given Lie algebra is not unique, as illustrated by Example~\ref{exa:second:example:graphs:assocaited:with:algebras}. 
    \begin{example}\label{exa:second:example:graphs:assocaited:with:algebras}
        Consider the real Lie algebra $\sl{2}{\R}$, which is not isomorphic to $\mathfrak{su}(2)$ \cite{helgason2024differential}. A standard basis of $\sl{2}{\R}$ is given by the elements $h$, $x$, $y$ with the Lie bracket relations \cite{helgason2024differential}:
        \begin{align}
            [h,x]&=2x,\;&\;[h,y]&=-2y,\;&\;[x,y]&=h.
        \end{align}
        This basis yields the graph depicted in Figure~\ref{fig:su2:sl2} (b). However, we can also choose an alternative basis, defined by the elements $e_1:=h$, $e_2:=(x-y)/\sqrt{2}$, and $e_3:=(x+y)/\sqrt{2}$. It is straightforward to verify that this new basis satisfies the Lie bracket relations:
        \begin{align}
            [e_1,e_2]&=e_3,\;&\;[e_2,e_3]&=e_1,\;&\;[e_3,e_1]&=-e_2.\label{eqn:basis:sl2R:second:version}
        \end{align}
        These relations closely resemble those of $\mathfrak{su}(2)$ from Example~\ref{exa:first:example:graphs:assocaited:with:algebras}, differing only by a sign in the final bracket. Consequently, the graph associated with this basis is identical to the graph for $\mathfrak{su}(2)$ depicted in Figure~\ref{fig:su2:sl2} (a), despite the fact that $\sl{2}{\R}$ and $\mathfrak{su}(2)$ are not isomorphic.
    \end{example}
    \begin{figure}[H]
        \centering
        \includegraphics[width=0.75\linewidth]{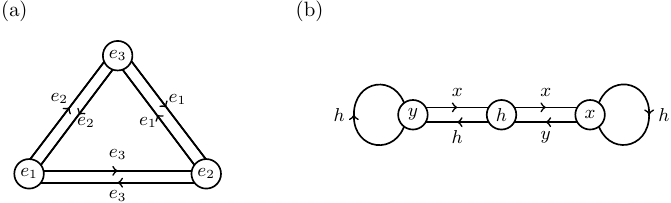}
        \caption{Depiction of the graphs associated with the Lie algebras $\mathfrak{su}(2)$ and $\sl{2}{\R}$ from Examples~\ref{exa:first:example:graphs:assocaited:with:algebras} and~\ref{exa:second:example:graphs:assocaited:with:algebras} respectively.}
        \label{fig:su2:sl2}
    \end{figure}
\end{tcolorbox}

Examples~\ref{exa:first:example:graphs:assocaited:with:algebras} and~\ref{exa:second:example:graphs:assocaited:with:algebras} demonstrate a key point that is crucial to our construction: graphs obtained by Algorithm~\ref{alg:creating:graph} starting from a given Lie algebra are not unique and depend on the choice of basis. While the graph encodes the proportionality structure of the Lie bracket and therefore some structural properties, it may not capture deeper algebraic properties, such as compactness, since $\operatorname{SU}(2)$ is a compact Lie group, whereas $\operatorname{SL}(2)$ is not \cite{Hall:Lie:groups:16}. To address such limitations, one may consider coloring edges to encode additional information, such as the sign and magnitude of the structure constants. For instance, edges corresponding to brackets with positive coefficients could be colored red, and those with negative coefficients blue (see, for example, the use of colors in a similar context \cite{Chakrabarti:2019}). Such a refinement would allow the graph to distinguish between otherwise indistinguishable bracket structures and potentially retain more of the algebra's intrinsic structure. We refrain from this approach, as the focus of the present framework is not in encoding all numerical data of the structure constants, but rather on utilizing graph-theoretic features to highlight structural properties, such as solvability, nilpotency, and the presence of ideals. In this context, coloring is primarily employed to identify and emphasize subgraphs that correspond to these properties.

The family of real Lie algebras $L_\alpha^3$ was essential to prove Theorem~\ref{thm:existence:of:non:graph:admissible:lie:algebras}, Corollary~\ref{cor:existance:of:redundant:but:not:mimimak:graph:admissibel:algebras}, and Theorem~\ref{thm:existence:of:non:graph:admissible:lie:algebras}, as it serves as an example for demonstrating the existence Lie algebras that are only redundant-graph-admissible but not minimal-graph-admissible, and Lie algebras that are non-graph-admissible altogether. This motivates the subsequent Example~\ref{exa:different:L:alpha:3:graphs}, which provides explicit graphs for $L_3^\alpha$ at specific values of $\alpha$.

\begin{tcolorbox}[breakable, colback=Cerulean!3!white,colframe=Cerulean!85!black,title=\textbf{Example}: Graph-admissible Lie algebras from the family $L_\alpha^3$] 
    \begin{example}\label{exa:different:L:alpha:3:graphs}
        We examine the family of real Lie algebras $L_\alpha^3$ for specific values of the parameter $\alpha\in\R$, to illustrate the distinction between minimal-graph-admissible and redundant-graph-admissible:
        \begin{enumerate}[label = (\roman*)]
            \item \textbf{Case} $\boldsymbol{\alpha=0}$: Here, the bracket relations of the basis elements $x_1$, $x_2$, $x_3$ are: $[x_1,x_2]=0$, $[x_1,x_3]=-x_2$, and $[x_2,x_3]=-x_2$. This basis satisfies the minimal-graph-admissible condition \eqref{eqn:desired:basis}. The corresponding graph is depicted in Figure~\ref{fig:L:alplha:examples} (a).
            \item \textbf{Case} $\boldsymbol{\alpha=1}$: Here, one has $[x_1,x_2]=0$, $[x_1,x_3]=-x_2$, and $[x_2,x_3]=-(x_1+x_2)$. This implies $[x_3,Ax_1+Bx_2]=Bx_1+(A+B)x_2$, revealing the emergence of the Fibonacci sequence: We can choose $A=F_0=0$ and $B=F_1=1$ and find $\operatorname{ad}_{x_3}^k(x_2)=F_kx_1+F_{k+1}x_2$ for all $n\in\N_{\geq0}$, where $F_k$ is the Fibonacci Sequence defined by $F_{n+1}=F_n+F_{n-1}$ and the initial values $F_0=0$ and $F_1=1$ \textup{\cite{Lucas:1891}}. We can now define the \emph{golden ratio} $\varphi$ as the largest root of the polynomial $X^2-X-1$ and $\vartheta$ as the smaller root \textup{\cite{Schneider:2016}}, and we have $\varphi=(1+\sqrt{5})/2$ and $\vartheta=(1-\sqrt{5})/2$. A basis of $L_{1}^3$ satisfying the relations \eqref{eqn:desired:basis} is consequently $\{x_3,x_1+\varphi x_2,x_1+\vartheta x_2\}$, since they are linear independent and
            \begin{align*}
                [x_3,x_1+\varphi x_2]&=\varphi(x_1+\varphi x_2),\;&\;[x_3,x_1+\vartheta x_2]&=\vartheta(x_1+\vartheta x_2),\;&\;[x_1+\varphi x_2,x_1+\vartheta x_2]&=0.
            \end{align*}
            The graph associated to $L_{1}^3$ is depicted in Figure~\ref{fig:L:alplha:examples} (b).
            \item \textbf{Case} $\boldsymbol{\alpha=-1}$: The real Lie algebra $L_{-1}^3$ is, by the proof of Theorem~\ref{thm:existncae:of:non:minimal:graph:admissible:}, not minimal-graph-admissible. It is, however, by Lemma~\ref{lem:L:alpha:three:graph:admissible:if:and:only:if} graph-admissible. This can be seen by considering the overcomplete basis $\{x_1,x_2,x_3,x':=x_1-x_2\}$. This satisfies the relations \eqref{eqn:desired:basis:overcomplete}, since:
            \begin{align*}
                [x_1,x_2]&=0,\;&\;[x_1,x_3]&=-x_2,\;&\;[x_2,x_3]&=x',\\
                [x_1,x']&=0,\;&\;[x_2,x']&=0,\;&\;[x_3,x']&=x_1.
            \end{align*}
            The corresponding graph is depicted in Figure~\ref{fig:L:alplha:examples} (c). 
        \end{enumerate}
    \end{example}

    \begin{figure}[H]
        \centering
        \includegraphics[width=0.85\linewidth]{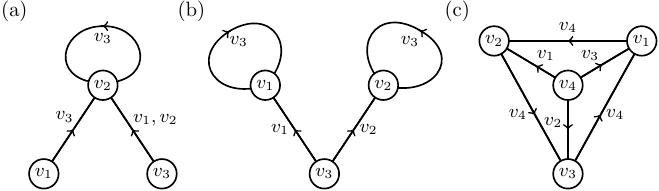}
        \caption{Depiction of the graphs from Example~\ref{exa:different:L:alpha:3:graphs}. Graph (a) is associated with the Lie algebra $L_0^3$ with $v_1=x_1$, $v_2=x_2$, and $v_3=x_3$. Graph (b) is associated with the Lie algebra $L_{1}^3$ with $v_1=x_1+\varphi x_2$, $v_2=x_1+\vartheta x_2$, and $v_3=x_3$. Graph (c) is associated with the Lie algebra $L_{-1}^3$ with $v_1=x_1$, $v_2=x_2$, $v_3=x_1-x_2$, and $v_4=x_3$.}
        \label{fig:L:alplha:examples}
    \end{figure}
\end{tcolorbox}

\begin{lemma}
    Let $\g$ be a finite-dimensional graph-admissible Lie algebra, and let $\varphi:\g\to\varphi(\g)$ be a Lie-algebra homomorphism. Let $\mathcal{B}$ be a (possibly overcomplete) basis satisfying \eqref{eqn:desired:basis} (respectively \eqref{eqn:desired:basis:overcomplete}). Then, the labeled directed graph generated by the image basis $\varphi(\mathcal{B})$ is identical to the one generated by the original basis $\mathcal{B}$.
\end{lemma}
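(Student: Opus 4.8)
The plan is to exploit the only property a Lie-algebra homomorphism is required to have---that it intertwines the two brackets---together with the observation that Algorithm~\ref{alg:creating:graph} inspects nothing about a basis except (i) \emph{which} pairs $(j,k)$ have $[\Tilde{x}_j,\Tilde{x}_k]\neq 0$, i.e. $\Tilde{\alpha}_{jk}\neq 0$, and (ii) for such pairs, the index $\delta(j,k)$ of the basis element to which the bracket is proportional. It never uses the numerical value $\Tilde{\alpha}_{jk}$ beyond the test $\Tilde{\alpha}_{jk}\neq 0$, and it uses no metric or positional data. So the whole task reduces to checking that $\varphi(\mathcal{B})$ is again a legitimate input for Algorithm~\ref{alg:creating:graph} and that it carries \emph{exactly the same} support pattern of $\Tilde{\boldsymbol{\alpha}}$ and the same symmetric function $\delta$.

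First I would make explicit the hypothesis that $\varphi$ is injective on $\g$ (an isomorphism onto $\varphi(\g)$); this is implicit in the phrase ``the image basis $\varphi(\mathcal{B})$'', since a homomorphism with nontrivial kernel could send a basis element to $0$ or two distinct basis elements to proportional vectors, both of which are forbidden by \eqref{eqn:desired:basis:overcomplete}. Granting injectivity, $\varphi$ maps linearly independent sets to linearly independent sets, so $\varphi(\mathcal{B})$ spans $\varphi(\g)$ with $|\varphi(\mathcal{B})|=|\mathcal{B}|$; moreover $\varphi(\Tilde{x}_j)=\kappa\,\varphi(\Tilde{x}_k)$ forces $\varphi(\Tilde{x}_j-\kappa\Tilde{x}_k)=0$, hence $\Tilde{x}_j\propto\Tilde{x}_k$, hence $j=k$, so the proportionality clause of \eqref{eqn:desired:basis:overcomplete} survives and no image element vanishes.

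The heart of the argument is then one line: applying $\varphi$ to $[\Tilde{x}_j,\Tilde{x}_k]=\Tilde{\alpha}_{jk}\Tilde{x}_{\delta(j,k)}$ and using both homomorphism and linearity,
\begin{align*}
    [\varphi(\Tilde{x}_j),\varphi(\Tilde{x}_k)]=\varphi\bigl([\Tilde{x}_j,\Tilde{x}_k]\bigr)=\varphi\bigl(\Tilde{\alpha}_{jk}\Tilde{x}_{\delta(j,k)}\bigr)=\Tilde{\alpha}_{jk}\,\varphi(\Tilde{x}_{\delta(j,k)}).
\end{align*}
Hence $\varphi(\mathcal{B})$ satisfies \eqref{eqn:desired:basis:overcomplete} with the \emph{same} antisymmetric matrix $\Tilde{\boldsymbol{\alpha}}$ and the \emph{same} function $\delta$: when $\delta(j,k)=\ell\neq 0$ the right-hand side is a nonzero scalar multiple of $\varphi(\Tilde{x}_\ell)$, and by injectivity $\varphi(\Tilde{x}_\ell)$ is not proportional to any other $\varphi(\Tilde{x}_{\ell'})$, so $\delta$ is still uniquely forced; when $\Tilde{\alpha}_{jk}=0$ the bracket is zero, so the convention $\delta(j,k)=0\Leftrightarrow\Tilde{\alpha}_{jk}=0$ is respected. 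Running Algorithm~\ref{alg:creating:graph} on $\varphi(\mathcal{B})$ therefore iterates over the same pairs $(j,k)$, enters the conditional for the same pairs, and appends the triples $(\varphi(\Tilde{x}_j),\varphi(\Tilde{x}_k),\varphi(\Tilde{x}_{\delta(j,k)}))$ in lock-step with the triples $(\Tilde{x}_j,\Tilde{x}_k,\Tilde{x}_{\delta(j,k)})$ of the original run. Thus $\varphi\colon\mathcal{B}\to\varphi(\mathcal{B})$ is an isomorphism of labeled directed graphs in the sense of Definition~\ref{def:equivalent:graphs}; tracking vertices by their index in $\mathcal{M}$ (or $\mathcal{N}$) rather than by the labeling element, the two graphs are literally equal.

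The only genuine obstacle is the point flagged above: guaranteeing that $\varphi(\mathcal{B})$ is still a valid (overcomplete) basis satisfying \eqref{eqn:desired:basis:overcomplete}. The clean resolution is to assume $\varphi$ injective; alternatively, for a general homomorphism, one would have to weaken ``identical'' to ``obtained from the original graph by deleting the vertices lying in $\ker\varphi$ and identifying vertices whose images become proportional''. Everything else is a routine unwinding of the definition of a Lie-algebra homomorphism together with the blindness of Algorithm~\ref{alg:creating:graph} to all data except the support pattern of $\Tilde{\boldsymbol{\alpha}}$ and the values of $\delta$.
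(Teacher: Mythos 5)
Your proof is correct and follows essentially the same route as the paper: the core step in both is the one-line computation $[\varphi(\Tilde{x}_j),\varphi(\Tilde{x}_k)]=\varphi([\Tilde{x}_j,\Tilde{x}_k])=\Tilde{\alpha}_{jk}\varphi(\Tilde{x}_{\delta(j,k)})$, followed by the observation that Algorithm~\ref{alg:creating:graph} only sees the support of $\Tilde{\boldsymbol{\alpha}}$ and the values of $\delta$. Your additional remark that injectivity of $\varphi$ must be assumed (so that $\varphi(\mathcal{B})$ contains no zero or mutually proportional elements) is a legitimate refinement of a point the paper's proof leaves implicit, but it does not change the argument.
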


\begin{proof}
    Let $\mathcal{B}=\{x_j\}_{j=1}^m$ be a (possibly overcomplete) basis of $\g$ that satisfies the relations \eqref{eqn:desired:basis} (respectively \eqref{eqn:desired:basis:overcomplete}) and $\varphi:\g\to \varphi(\g)$ a Lie-algebra homomorphism. Define, $y_j:=\varphi(x_j)\in\varphi(\g)$. Then, since $\varphi$ is a linear map and preserves the Lie bracket, one has for all $y_j,y_k\in\varphi(\mathcal{B})$:
    \begin{align*}
        [y_j,y_k]_{\varphi(\g)}=[\varphi(x_j),\varphi(x_k)]_{\varphi(\g)}=\varphi([x_j,x_k]_\g)=\varphi(\alpha_{jk}x_{\delta(j,k)})=\alpha_{jk}\varphi(x_{\delta(j,k)})=\alpha_{jk}y_{\delta(j,k)}.
    \end{align*}
    Hence, by application of Algorithm~\ref{alg:creating:graph}, the sets $\mathcal{B}$ and $\varphi(\mathcal{B})$ generate the same graph.
\end{proof}

The observations in Theorem~\ref{thm:semisimple:lie:algebra:graph:admissible} and Corollary~\ref{cor:rductive:Lie:algebras:graph:admissible} are crucial, as they show that the graph-based approach developed here is widely applicable and not restricted to a narrow subclass of Lie algebras. In particular, it applies to all semisimple and reductive Lie algebras over algebraically closed fields. Recall that a Lie algebra is semisimple if and only if it admits a unique decomposition of the form $\g=\g_1\oplus\ldots\oplus\g_s$, where each $\g_j$ is a simple ideal of $\g$ \cite{Knapp:1996}. 
This algebraic decomposition has a direct analog in the graph-theoretic framework, as shown in the next proposition.

\begin{proposition}\label{prop:semisimple:lie:algebra:unconnected:subgraphs}
    Let $\g$ be a finite-dimensional semisimple Lie algebra over an algebraically closed field $\mathbb{F}$. Then $\g$ can be associated with a labeled directed graph $G(V,E)$ that splits into multiple mutually unconnected subgraphs $G(V_j,E_j)$, where each subgraph corresponds to a simple ideal $\g_j\subseteq\g$. In particular, if $\g$ is simple, then the associated graph consists of a single connected component. 
\end{proposition}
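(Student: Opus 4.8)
The plan is to build the graph from the canonical decomposition of $\g$ into simple ideals and then read the block structure off the fact that distinct simple ideals commute. Recall that a finite-dimensional semisimple Lie algebra decomposes uniquely as $\g=\g_1\oplus\cdots\oplus\g_s$ with each $\g_j$ a simple ideal, and that $[\g_j,\g_k]\subseteq\g_j\cap\g_k=\{0\}$ for $j\neq k$. By Theorem~\ref{thm:semisimple:lie:algebra:graph:admissible} each simple $\g_j$ is graph-admissible, so I would fix for each $j$ a (possibly overcomplete) basis $\mathcal{B}_j$ of $\g_j$ satisfying the relations \eqref{eqn:desired:basis:overcomplete}, and then verify that the disjoint union $\mathcal{B}:=\mathcal{B}_1\sqcup\cdots\sqcup\mathcal{B}_s$ is again a basis of the type required by Definition~\ref{def:graph:admissible}: it spans $\g$ because $\g=\bigoplus_j\g_j$; no element is zero; two elements from different blocks lie in $\g_j$ and $\g_k$ with $\g_j\cap\g_k=\{0\}$, hence are not proportional, while non-proportionality within a block holds by construction; and the bracket of two elements from different blocks vanishes, whereas within a block it is proportional to a block element. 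Thus $\mathcal{B}$ satisfies \eqref{eqn:desired:basis:overcomplete}.

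Next I would feed $\mathcal{B}$ into Algorithm~\ref{alg:creating:graph} to obtain an associated graph $G(V,E)$, with $V=V_1\sqcup\cdots\sqcup V_s$ where $V_j$ consists of the vertices coming from $\mathcal{B}_j$. The crucial observation is that every edge of $G$ has all three of its vertices in one block: an edge arises only from a pair $(x,y)$ with $[x,y]\propto z\neq0$, and if $x\in V_j$, $y\in V_k$ with $j\neq k$ then $[x,y]=0$, so the start- and label-vertices always lie in a common block $V_j$; the bracket $[x,y]$ then lies in $\g_j$, and since the only elements of $\mathcal{B}$ contained in $\g_j$ are those of $\mathcal{B}_j$, the end-vertex $z$ lies in $V_j$ as well. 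Hence $E=E_1\sqcup\cdots\sqcup E_s$, the induced subgraph $G(V_j,E_j)$ is precisely the output of Algorithm~\ref{alg:creating:graph} applied to $\mathcal{B}_j$ and is therefore a graph associated with $\g_j$, and for $j\neq k$ one has $V_j\cap V_k=\emptyset$ with no edge running between $V_j$ and $V_k$ in either direction, so $G(V_j,E_j)$ and $G(V_k,E_k)$ are unconnected in the sense of Definition~\ref{def:unconnected:subgraphs}. This settles the first assertion.

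For the \enquote{in particular} statement, let $\g$ be simple (so $s=1$) and suppose, towards a contradiction, that the associated graph $G(V,E)$ splits as $V=V_1\sqcup V_2$ into two nonempty mutually unconnected subgraphs. For $x\in V_1$ and $y\in V_2$, graph-admissibility gives $[x,y]=\alpha z$ with $z\in\mathcal{B}$; if $\alpha\neq0$ then the edge $(x,y,z)$ would run from $V_1$ to $V_2$ (if $z\in V_2$) or its partner edge $(y,x,z)$ would run from $V_2$ to $V_1$ (if $z\in V_1$), both excluded by unconnectedness, so $[x,y]=0$. The same edge argument applied within $V_1$ shows $\spn(V_1)$ is closed under the bracket; combined with $[\spn(V_1),\spn(V_2)]=\{0\}$ and $\spn(V_1\cup V_2)=\g$, this makes $\spn(V_1)$ an ideal, which is nonzero since $V_1\neq\emptyset$ and basis elements are nonzero, and proper since any $y\in V_2$ is a nonzero element commuting with all of $\spn(V_1)$, so $\spn(V_1)=\g$ would put $y$ in $\mathcal{Z}(\g)=\{0\}$. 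A proper nonzero ideal contradicts simplicity, so no such splitting exists. The bookkeeping in the first two paragraphs is routine; the step I expect to require care is this last one, and the cleanest route — taken above — is to convert a hypothetical disconnection of the graph into a proper nonzero ideal rather than to argue combinatorially with the Chevalley--Serre generators (that also works: every root vector $E_\alpha$ is joined by an edge to some Cartan element $H_i$ because $\alpha\neq0$, and the $H_i$ are linked to one another through the simple-root vectors $E_{\alpha_i}$ by connectedness of the Dynkin diagram).
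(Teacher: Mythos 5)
Your proof is correct and follows essentially the same route as the paper: decompose $\g$ into its simple ideals, invoke Theorem~\ref{thm:semisimple:lie:algebra:graph:admissible} to obtain an admissible (possibly overcomplete) basis for each ideal, and observe that mutual commutativity of distinct simple ideals forbids any edge between the corresponding blocks, so the graph built from the union of these bases splits into unconnected components. Your handling of the \enquote{in particular} simple case, converting a hypothetical splitting into a proper nonzero ideal via $\mathcal{Z}(\g)=\{0\}$, is more detailed than the paper's one-line assertion but reaches the same conclusion.
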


\begin{proof}
    By the classification of semisimple Lie algebras over algebraically closed fields, any such Lie algebra admits a decomposition $\g=\mathfrak{g}_1\oplus\mathfrak{g}_2\oplus\ldots\oplus \mathfrak{g}_k$, where each $\g_j\subseteq\g$ is a simple ideal of $\g$, and $k\geq 1$. These ideals are mutually commutative, i.e., $[\mathfrak{g}_j,\mathfrak{g}_\ell]=\{0\}$ if $j\neq \ell$. By Theorem~\ref{thm:semisimple:lie:algebra:graph:admissible}, each $\g_j$ is furthermore graph-admissible. Thus, for every $\g_j$, there exists a labeled directed graph $G_j\equiv G(V_j,E_j)$. These graphs must be mutually unconnected, otherwise there would exist elements $x_j\in \g_j$ and $x_k\in\g_k$ that do not vanish under the Lie bracket, even if $j\neq k$. If $\g$ is simple, then $k=1$, and the graph $G(V,E)$ consists of a single connected component.
\end{proof}

\section{Criteria for valid graphs}\label{sec:valid:graphs}

In this section, we establish formal criteria for determining whether a labeled directed graph can be associated with a finite-dimensional Lie algebra. These criteria are essential for distinguishing \emph{proper} and \emph{improper} graphs and for understanding the structural constraints imposed by the Lie bracket. We focus primarily on minimal graphs associated with minimal-graph-admissible Lie algebras, although some results generalize to redundant graphs as well. Note that we are here tackling the ``reverse problem'' to that tackled above.

\begin{definition}
    Let $G(V,E)$ be a labeled directed graph. We call $G(V,E)$ \emph{proper} if it can be associated with a finite-dimensional Lie algebra; we call it furthermore \emph{proper-minimal} if, in addition, it can be associated with a finite-dimensional Lie algebra as a minimal graph. We call $G(V,E)$ \emph{improper} if it is not a proper graph. 
    
    A proper graph $G(V,E)$ is said to be \emph{choice-independent} if every edge $e\in E$ can be translated into a valid Lie bracket $[\varpi_\mathrm{s}(e),\varpi_\mathrm{l}(e)]=\kappa \varpi_\mathrm{e}(e)$, for every $\kappa\in\mathbb{F}^*$. Otherwise, the graph is called \emph{choice-dependent}, meaning that the validity of the bracket depends on specific values for $\kappa$.
\end{definition}

Simple criteria stemming from the antisymmetric property of the Lie bracket include the following:
\begin{proposition}\label{prop:conditions:for:edges}
    Let $G(V,E)$ be a labeled directed graph. Then $G(V,E)$ is a proper graph only if all the following conditions are satisfied:
    \begin{enumerate}[label =(\alph*)]
        \item All edges $e\in E$ must satisfy $\varpi_\mathrm{s}(e)\neq\varpi_\mathrm{l}(e)$.
        \item The edge $e=(\varpi_\mathrm{s}(e),\varpi_\mathrm{l}(e),\varpi_\mathrm{e}(e))$ belongs to $E$ if and only if the edge $e'=(\varpi_\mathrm{l}(e),\varpi_\mathrm{s}(e),\varpi_\mathrm{e}(e))$ also belongs to $E$.
        \item Each edge $e\in E$ must be uniquely defined by $\varpi_\mathrm{s}(e)$ and $\varpi_\mathrm{l}(e)$. That is, if two edges $e,e'\in E$ satisfy $\varpi_\mathrm{s}(e)=\varpi_\mathrm{s}(e')$ and $\varpi_\mathrm{l}(e)=\varpi_\mathrm{l}(e')$, then it must follow that $e=e'$.
    \end{enumerate}
\end{proposition}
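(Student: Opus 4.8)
The plan is to verify each of the three necessary conditions by translating the purported edge relations back into Lie brackets and applying only the antisymmetry of the bracket together with the basic requirement (from Definition~\ref{def:graph:admissible} and Algorithm~\ref{alg:creating:graph}) that an edge $e\in E$ encodes a relation $[\varpi_\mathrm{s}(e),\varpi_\mathrm{l}(e)]=\kappa\,\varpi_\mathrm{e}(e)$ for some $\kappa\in\mathbb{F}^*$, where the vertices are identified with the (nonzero, pairwise non-proportional) spanning elements of the associated Lie algebra $\g$.

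For (a), suppose toward a contradiction that some edge $e\in E$ has $\varpi_\mathrm{s}(e)=\varpi_\mathrm{l}(e)=:x$. Then the edge would encode $[x,x]=\kappa\,\varpi_\mathrm{e}(e)$ with $\kappa\neq0$; but antisymmetry forces $[x,x]=0$, hence $\varpi_\mathrm{e}(e)=0$, contradicting that every vertex corresponds to a nonzero element (recall $x_0:=0$ is deliberately excluded from the vertex set). For (b), let $e=(x,y,z)\in E$, encoding $[x,y]=\kappa z$ with $\kappa\in\mathbb{F}^*$. Antisymmetry gives $[y,x]=-\kappa z=\kappa' z$ with $\kappa'=-\kappa\in\mathbb{F}^*$ (here we use that $\mathbb{F}$ is a field, so $-\kappa\neq0$), so the triple $(y,x,z)$ is itself a valid bracket relation and Algorithm~\ref{alg:creating:graph} places the edge $e'=(y,x,z)$ in $E$; the converse direction is the same statement with the roles of $e,e'$ swapped. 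For (c), suppose $e=(x,y,z)$ and $e'=(x,y,z')$ both lie in $E$, encoding $[x,y]=\kappa z$ and $[x,y]=\kappa' z'$ with $\kappa,\kappa'\in\mathbb{F}^*$. Subtracting, $\kappa z-\kappa' z'=0$, i.e. $z\propto z'$; since the spanning set is chosen so that distinct vertices are non-proportional (condition \eqref{eqn:desired:basis:overcomplete}: $\Tilde{x}_j\propto\Tilde{x}_k$ iff $j=k$), this forces $z=z'$ as vertices, hence $e=e'$.

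The argument is essentially bookkeeping once the dictionary between edges and brackets is made explicit, so I do not anticipate a serious obstacle; the only point requiring mild care is being precise about what it means for $G(V,E)$ to be \emph{proper}, namely that there exists \emph{some} finite-dimensional Lie algebra and \emph{some} admissible basis realizing it, and then observing that each of (a)--(c) must hold for \emph{every} such realization — so the three conditions are genuinely necessary. One should also note explicitly that for a choice-dependent graph the value of $\kappa$ in each relation is fixed by the realization, but the arguments above never use anything about $\kappa$ beyond $\kappa\neq0$ and $-\kappa\neq0$, so they apply uniformly to both choice-independent and choice-dependent proper graphs. I would close by remarking that these conditions are necessary but not sufficient, since the Jacobi identity imposes further constraints treated in the subsequent results.
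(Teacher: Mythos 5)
Your proof is correct and follows essentially the same route as the paper's: all three conditions are reduced to the antisymmetry of the Lie bracket together with the construction rules (nonzero, pairwise non-proportional vertices and the simultaneous addition of the swapped edge in Algorithm~\ref{alg:creating:graph}). The only cosmetic difference is that the paper argues (b) by contradiction while you argue it directly; the content is identical.
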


\begin{proof}
    Let $G(V,E)$ be a labeled directed graph and suppose it can be associated with a finite-dimensional Lie algebra~$\g$. Consider an edge $e\in E$. Then:
    \begin{itemize}
        \item \textbf{Condition (a):} If $\varpi_\mathrm{s}(e)=\varpi_\mathrm{l}(e)$, then the corresponding Lie bracket is $[\varpi_\mathrm{s}(e),\varpi_\mathrm{l}(e)]=[\varpi_\mathrm{s}(e),\varpi_\mathrm{s}(e)]\propto \varpi_\mathrm{e}(e)$. This is not possible, since $[x,x]$ must vanish for all $x\in \g$, due to the antisymmetry of the Lie bracket. Therefore, such an edge cannot exist in the proper graph $G(V,E)$.
        \item \textbf{Condition (b):} Suppose $e=(\varpi_\mathrm{s}(e),\varpi_\mathrm{l}(e),\varpi_\mathrm{e}(e))\in E$ and $e'=(\varpi_\mathrm{l}(e),\varpi_\mathrm{s}(e),\varpi_\mathrm{e}(e))\notin E$, then, by construction of an associated graph, $[\varpi_\mathrm{s}(e),\varpi_\mathrm{l}(e)]\propto \varpi_\mathrm{e}(e)$ and $[\varpi_\mathrm{l}(e),\varpi_\mathrm{s}(e)]=0$. This violates the antisymmetry of the Lie bracket, since $0=[\varpi_\mathrm{l}(e),\varpi_\mathrm{s}(e)]=-[\varpi_\mathrm{s}(e),\varpi_\mathrm{l}(e)]\not\propto \varpi_\mathrm{e}(e)$, is impossible.
        \item \textbf{Condition (c):} Suppose there exists two distinct edges $e,e'\in E$ with $\varpi_\mathrm{s}(e')=\varpi_\mathrm{s}(e)$ and $\varpi_\mathrm{l}(e')=\varpi_\mathrm{l}(e)$ but $e\neq e'$, i.e., $\varpi_\mathrm{e}(e')\neq\varpi_\mathrm{e}(e)$. This implies that the Lie bracket $[\varpi_\mathrm{s}(e),\varpi_\mathrm{l}(e)]$ would be simultaneously proportional to two distinct elements, namely $\varpi_\mathrm{e}(e)$ and $\varpi_\mathrm{e}(e')$. Since $\propto$ is a equivalence relation, it follows that $\varpi_\mathrm{e}(e)\propto \varpi_\mathrm{e}(e')$, which is prohibited by Definition~\ref{def:assocaited:graph}, as it would imply that the two distinct elements $\varpi_\mathrm{e}(e),\varpi_\mathrm{e}(e')$ lie in $V$ and are proportional to each other. Hence, such ambiguity is not allowed in a proper graph.
    \end{itemize}
\end{proof}

We can also attribute some properties to the entire associated graph:

\begin{proposition}\label{prop:criteria:number:of:vertices:and:edges}
    Let $\g$ be an $n$-dimensional minimal-graph-admissible Lie algebra. Then any minimal graph $G(V,E)$ associated with $\g$ must satisfy the following properties: (a) The number of vertices equals the dimension of the Lie algebra: $|V|=n$; (b) The number of edges satisfies the bound: $0\leq |E|\leq n(n-1)$; (c) For any vertex $v_\mathrm{s}\in V$, the number of outgoing edges from $v_\mathrm{s}$ is bounded by:  $0\leq |\{e\in E\,\mid\, \varpi_\mathrm{s}(e)=v_\mathrm{s}\}|\leq n-1$ ; (d) For any vertex  $v_\mathrm{e}\in V$, the number of incoming edges to $v_\mathrm{e}$ is bounded by: $0\leq \{e\in E\,\mid\,\varpi_\mathrm{e}(e)=v_\mathrm{e}\}\leq n(n-1)$.
\end{proposition}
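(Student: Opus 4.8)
The plan is to dispatch part (a) directly from the definitions and to obtain parts (b)--(d) as elementary counting estimates built on the edge constraints already established in Proposition~\ref{prop:conditions:for:edges}.

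For part (a) I would simply invoke Definition~\ref{def:assocaited:graph}: a graph associated with $\g$ is called \emph{minimal} exactly when its vertex set has cardinality $\dim(\g)$, and since $\dim(\g)=n$ here we get $|V|=n$.

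The heart of the argument for the remaining parts is the observation that, in any proper graph, Proposition~\ref{prop:conditions:for:edges} forces each edge $e$ to satisfy $\varpi_\mathrm{s}(e)\neq\varpi_\mathrm{l}(e)$ (condition (a)) and to be uniquely determined by the pair $(\varpi_\mathrm{s}(e),\varpi_\mathrm{l}(e))$ (condition (c)). Hence $e\mapsto(\varpi_\mathrm{s}(e),\varpi_\mathrm{l}(e))$ embeds $E$ injectively into the set $\{(u,v)\in V\times V\mid u\neq v\}$, which has cardinality $n(n-1)$ by part (a); this gives part (b) at once. For part (c) I would fix a starting vertex $v_\mathrm{s}$ and note that its outgoing edges inject, via $e\mapsto\varpi_\mathrm{l}(e)$, into $V\setminus\{v_\mathrm{s}\}$, a set of size $n-1$. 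For part (d) I would fix an ending vertex $v_\mathrm{e}$ and observe that its incoming edges form a subset of $E$, so their number is bounded by $|E|\le n(n-1)$ from part (b); the lower bounds are trivial throughout.

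I do not anticipate a genuine obstacle: the argument is purely combinatorial. The only points requiring care are citing the correct parts of Proposition~\ref{prop:conditions:for:edges} so that edges really are parametrized by ordered pairs of distinct vertices, and recognizing that the bound in (d) is inherited from (b) rather than needing a separate derivation. It may also be worth noting explicitly that the stated bounds are not claimed to be attained, so no extremal constructions are needed.
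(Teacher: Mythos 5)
Your argument is correct: part (a) is immediate from Definition~\ref{def:assocaited:graph}, and parts (b)--(d) follow from the injection $e\mapsto(\varpi_\mathrm{s}(e),\varpi_\mathrm{l}(e))$ into ordered pairs of distinct vertices guaranteed by conditions (a) and (c) of Proposition~\ref{prop:conditions:for:edges} (which apply since an associated graph is proper by definition). The paper itself omits the proof as ``straightforward,'' and your counting argument is exactly the intended elementary verification.
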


\begin{proof}
    This can be shown straightforwardly and is therefore left to the interested reader.
\end{proof}

\begin{proposition}\label{prop:tight:bounds}
    The bounds stated in Proposition~\ref{prop:criteria:number:of:vertices:and:edges} are tight. That is, there exist minimal-graph-admissible Lie algebras that can be associated with minimal graphs that saturate these bounds.
\end{proposition}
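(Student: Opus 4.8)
The plan is to exhibit, for each dimension $n$, two minimal-graph-admissible Lie algebras whose associated minimal graphs realise the extremal cases in Proposition~\ref{prop:criteria:number:of:vertices:and:edges}. Item (a) is an equality rather than a bound and holds for every minimal graph by Definition~\ref{def:assocaited:graph}, so nothing is to be checked there. For the lower bounds in (b), (c), (d) I would take $\g$ to be the $n$-dimensional abelian Lie algebra over $\mathbb{R}$, which is minimal-graph-admissible as shown above: since every bracket vanishes, Algorithm~\ref{alg:creating:graph} returns the edgeless graph $G(V,\emptyset)$ on $n$ vertices, so $|E|=0$ and every vertex has zero incoming and zero outgoing edges, which simultaneously saturates all three lower bounds.

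For the upper bounds I would introduce the \enquote{affine-line} algebra $\mathfrak{b}_n:=\mathrm{span}_{\mathbb{R}}\{x_1,\dots,x_n\}$ with
\begin{align*}
    [x_i,x_j]:=(i-j)\,x_n\qquad\text{for all }i,j\in\{1,\dots,n\}.
\end{align*}
This bracket is manifestly bilinear and antisymmetric, and substituting the relations into the Jacobi identity on a triple $(x_i,x_j,x_k)$ reduces it to the scalar identity
\begin{align*}
    (j-k)(i-n)+(k-i)(j-n)+(i-j)(k-n)=0,
\end{align*}
in which every mixed product cancels, so it holds. Hence $\mathfrak{b}_n$ is a Lie algebra, and its basis $\{x_j\}_{j=1}^n$ satisfies the relations~\eqref{eqn:desired:basis} with $\alpha_{ij}=i-j$ and $\delta(i,j)=n$ whenever $i\neq j$; thus $\mathfrak{b}_n$ is minimal-graph-admissible, and the graph $G(V,E)$ produced from this basis by Algorithm~\ref{alg:creating:graph} is a minimal graph with $|V|=n=\dim\mathfrak{b}_n$.

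It then remains to read off $E$. For every ordered pair $(i,j)$ with $i\neq j$ one has $i-j\neq 0$, hence $[x_i,x_j]=(i-j)x_n\propto x_n$, so $E$ consists of exactly the $n(n-1)$ edges $(x_i,x_j,x_n)$ with $i\neq j$; these are all legitimate since the starting vertex $x_i$ differs from the labelling vertex $x_j$, and the loops $(x_n,x_i,x_n)$ at $x_n$ are permitted because Proposition~\ref{prop:conditions:for:edges} only forbids $\varpi_\mathrm{s}(e)=\varpi_\mathrm{l}(e)$. Consequently $|E|=n(n-1)$, saturating the upper bound in (b); every edge of $E$ terminates at $x_n$, so $x_n$ has $n(n-1)$ incoming edges, saturating (d); and for each fixed vertex $x_i$ the outgoing edges are precisely $(x_i,x_j,x_n)$ for the $n-1$ indices $j\neq i$, saturating (c).

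I expect the only non-routine point to be the choice of $\mathfrak{b}_n$: saturating bound (d) forces an algebra in which \emph{every} bracket of distinct basis vectors --- including those that involve the distinguished target vertex $x_n$ itself --- is proportional to $x_n$, and the Jacobi identity is genuinely restrictive under this demand. The key observation is the coboundary ansatz $[x_i,x_j]=(\lambda_i-\lambda_j)x_n$ with pairwise distinct scalars $\lambda_\ell$ (above, $\lambda_\ell=\ell$): this is precisely the form for which the Jacobi identity collapses to the telescoping identity displayed above, while distinctness of the $\lambda_\ell$ guarantees that no edge is missing. Over an arbitrary field one only needs at least $n$ distinct available scalars (for instance $\operatorname{char}\mathbb{F}=0$), which is why working over $\mathbb{R}$ removes this caveat; everything else is routine bookkeeping.
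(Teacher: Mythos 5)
Your proof is correct and follows essentially the same strategy as the paper: the abelian algebra for the lower bounds, and an explicit minimal-graph-admissible algebra in which every bracket of distinct basis elements is a nonzero multiple of one distinguished basis element, which then receives all $n(n-1)$ edges. The only difference is cosmetic — the paper targets $x_1$ with structure constants treated in two cases ($\alpha_{1k}=1$, $\alpha_{jk}=j-k$ otherwise), while your uniform coboundary choice $[x_i,x_j]=(i-j)x_n$ makes the Jacobi check a single telescoping identity, a slightly cleaner verification of the same construction.
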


\begin{proof}
    We start by constructing an explicit example of a minimal-graph-admissible Lie algebra $\g$ whose associated graph achieves the upper bounds from Proposition~\ref{prop:criteria:number:of:vertices:and:edges}. Let $\{x_1,\ldots, x_n\}$ be a set of linearly independent elements and set $x_0:=0$ by convention. Define the symmetric function:
    \begin{align*}
        \delta:\{0,1,\ldots,n\}\times\{0,1\ldots,n\}\to \{0,1,\ldots,n\},\,(j,k)\mapsto\delta(j,k):=\left\{\begin{matrix}
            0&,\text{ if }j=k\text{ or }j=0\text{ or }k=0\\
            1&,\text{otherwise}
        \end{matrix}\right.,
    \end{align*}
    and the antisymmetric matrix
    \begin{align*}
        \boldsymbol{\alpha}\in \mathbb{F}^{n\times n}\quad\text{with}\quad\alpha_{jk}:=\left\{\begin{matrix}
            0&\text{, if }j=k\\
            1&\text{, if }j=1\text{ and }k>1\\
            -1&\text{, if }k=1\text{ and }j>1\\
            j-k&\text{ otherwise}
        \end{matrix}\right..
    \end{align*}
    We now prove that these define the Lie algebra $\g=\lie{\{x_1,\ldots,x_n\}}$ via the relations: $[x_j,x_k]:=\alpha_{jk}x_{\delta(j,k)}$. Thus, we need to confirm that the Lie bracket is antisymmetric and satisfies the Jacobi identity. One has $[x_j,x_k]=\alpha_{jk} x_{\delta(j,k)}=-\alpha_{kj} x_{\delta(k,j)}=-[x_k,x_j]$ for all $j,k\in\{1,\ldots,n\}$, since $\boldsymbol{\alpha}$ is by construction antisymmetric and $\delta$ symmetric. Now consider three indices $j,k,\ell$, all unequal to one. We compute:
    \begin{align*}
        [x_j,[x_k,x_\ell]]+[x_k,[x_\ell,x_j]]+[x_\ell,[x_j,x_k]]&=-\alpha_{k\ell}\alpha_{1j}x_1-\alpha_{\ell j}\alpha_{1k}x_1-\alpha_{jk}\alpha_{1\ell}x_1=-(k-\ell+\ell-j+j-k)x_1=0.
    \end{align*}
    Now, let at least one of these indices be one. Without loss of generality, let $j=1$. Then, we have:
    \begin{align*}
        [x_1,[x_k,x_\ell]]+[x_k,[x_\ell,x_1]]+[x_\ell,[x_1,x_k]]&=\alpha_{1\ell}\alpha_{1k}x_1-\alpha_{1k}\alpha_{1\ell}x_1=0.
    \end{align*}
    Hence, by linear combinations, this constitutes a valid Lie algebra. We can now apply Algorithm~\ref{alg:creating:graph} to obtain the graph $G(V,E)$ which is associated with $\g$. This graph has by construction $n$ vertices. Next, we observe that the graph has $n(n-1)$ distinct edges, as every distinct pair of indices $(j,k)\in\mathcal{N}\times\mathcal{N}$ for which $\alpha_{jk}\neq 0$ induces a distinct edge. By construction, $\alpha_{jk}=0$ if and only if $j=k$. Thus $|E|=n(n-1)$. We also see that exactly $n-1$ edges start at $x_1$, and $n(n-1)$ edges end at $x_1$.

    To show that the lower bounds are also tight, consider the abelian Lie algebra of dimension $n$. It is immediate to notice that any basis of this Lie algebra satisfies \eqref{eqn:desired:basis}, and the graphs constructed by Algorithm~\ref{alg:creating:graph} do not possess any edges. 
\end{proof}

We can draw, as an example, the graph associated with the non-abelian Lie algebra constructed in the proof of Proposition~\ref{prop:tight:bounds} for the case $n=3$, as shown in Figure~\ref{fig:example:to:many:edges} (a). In this visualization, every edge in the set $E$ is explicitly drawn, resulting in a visually cluttered representation, showing that this approach is not practical and loses its visual simplicity. This is especially true for large algebras, as the number of edges grows quadratically with the number of vertices. To address this issue, we introduce the convention: For any pair of vertices $v_\mathrm{s},v_\mathrm{e}\in V$, if there exists at least one edge $e\in E$ with $\varpi_\mathrm{s}(e)=v_\mathrm{s}$ and $\varpi_\mathrm{e}(e)=v_\mathrm{e}$ we draw only a single edge from $v_\mathrm{s}$ to $v_\mathrm{e}$, regardless of how many such edges exist. We then annotate this edge with all corresponding labeling vertices $v_\mathrm{l}$ that appear in the original edge set. This convention is illustrated in Figure~\ref{fig:example:to:many:edges} (b).
\begin{figure}[H]
    \centering
    \includegraphics[width=0.85\linewidth]{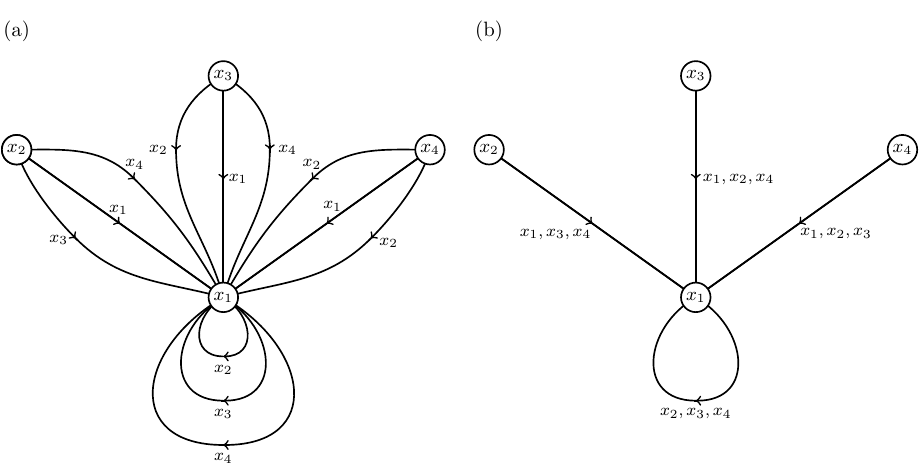}
    \caption{Depiction of the graph $G(V,E)$ associated with the non-abelian Lie algebra from the proof of Proposition~\ref{prop:tight:bounds}, for the case $n=3$. In (a) visualized by drawing every edge in $E$ explicitly, in (b) the graph is drawn using the convention of collapsing multiple edges with the same starting and end vertices into a single edge and annotating them with all relevant labeling vertices.}
    \label{fig:example:to:many:edges}
\end{figure}

We now proceed with our analysis. For the sake of simplicity we assume that $\operatorname{char}(\mathbb{F})=0$ for the remainder of this section.

\begin{theorem}\label{thm:neccersary:conditions:proper:graph}
    Let $G(V,E)$ be a labeled directed graph with three or more distinct vertices. Moreover let $\mathcal{V}:=\{\Tilde{V}\subseteq V\,\mid\,|\tilde{V}|=3\}$ be the set containing all subsets $\Tilde{V}\subseteq V$ of three distinct vertices and define, for a given set $\Tilde{V}\in \mathcal{V}$, the edge set $\Tilde{E}(\Tilde{V})\equiv \Tilde{E}:=\{e\in E\,\mid\,\varpi_\mathrm{s}(e),\varpi_\mathrm{l}(e),\varpi_\mathrm{e}(e)\in \tilde{V}\}$ containing all edges in $E$, where starting vertex, labeling vertex, and end vertex belong to $\Tilde{V}$. Then $G(V,E)$ is a proper-minimal graph only if for every $\Tilde{V}\in\mathcal{V}$, the subgraph $G(\Tilde{V},\Tilde{E})\subseteq G(V,E)$ is equivalent to one of the graphs depicted in Figure~\ref{fig:all:3:vertex:subgraphs}.
\end{theorem}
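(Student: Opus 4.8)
The idea is to reduce the statement to a finite check controlled by the only two features of the Lie bracket available here: its antisymmetry, already distilled into Proposition~\ref{prop:conditions:for:edges}, and the Jacobi identity~\eqref{eqn:Jacobi:identity}. Suppose $G(V,E)$ is proper-minimal, so by Definition~\ref{def:assocaited:graph} it is obtained via Algorithm~\ref{alg:creating:graph} from a finite-dimensional Lie algebra $\g$ and a minimal basis $\{x_j\}$ satisfying \eqref{eqn:desired:basis}; fix a triple $\tilde V=\{x_a,x_b,x_c\}\in\mathcal V$. First I would observe that conditions (a)--(c) of Proposition~\ref{prop:conditions:for:edges} are local statements about edges whose start, label and end all lie in $\tilde V$, hence they hold for $E$ and restrict to $\tilde E$. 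Consequently $\tilde E$ is completely described by choosing, for each of the three unordered pairs of vertices of $\tilde V$, either ``no edge'' or a single target vertex $t\in\tilde V$; in the latter case $\tilde E$ contains exactly the two directed edges of that pair pointing to $t$, as forced by (b), and no others, by (c). This leaves finitely many possibilities for $G(\tilde V,\tilde E)$, and up to the relabelling isomorphisms of Definition~\ref{def:equivalent:graphs} only a short list.

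Next I would organise these possibilities by the number of pairs carrying an edge whose target lies \emph{inside} $\tilde V$ (equivalently, the number of pairs incident to an edge of $\tilde E$). If all three pairs do, then $\spn\{x_a,x_b,x_c\}$ is closed under the bracket, hence a Lie subalgebra, so the Jacobi identity holds on $x_a,x_b,x_c$. Writing $[x_a,x_b]=\gamma x_p$, $[x_b,x_c]=\lambda x_q$, $[x_a,x_c]=\beta x_r$ with $\gamma,\lambda,\beta\in\mathbb F^*$ and $p,q,r\in\{a,b,c\}$, I would expand $[x_a,[x_b,x_c]]+[x_b,[x_c,x_a]]+[x_c,[x_a,x_b]]=0$ using these relations: each summand is either $0$ or a single monomial in $\gamma,\lambda,\beta$ times one of $x_p,x_q,x_r$, so collecting coefficients of the distinct vectors that occur yields a small system of polynomial equations. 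Running through the $3^3$ target assignments $(p,q,r)$, reduced by the $S_3$-action on $\{a,b,c\}$, I would discard those admitting no solution with all of $\gamma,\lambda,\beta$ nonzero — the typical obstruction being a coefficient equal to a single nonzero monomial, as for the ``chain'' $[x_a,x_b]=\gamma x_b,\ [x_b,x_c]=\lambda x_c,\ [x_a,x_c]=\beta x_c$, where the Jacobi identity collapses to $-\gamma\lambda\,x_c=0$, an impossibility. The assignments that survive, together with every configuration in which at most two pairs carry an in-triple edge (where the Jacobi identity on the triple also involves brackets with elements outside $\tilde V$ that $\tilde E$ does not record, so no contradiction is forced at the level of the subgraph), are precisely the graphs displayed in Figure~\ref{fig:all:3:vertex:subgraphs}. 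Since $\tilde V\in\mathcal V$ was arbitrary, this establishes the theorem.

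The only genuine labour is the enumeration in the ``three in-triple edges'' case: expanding \eqref{eqn:Jacobi:identity} for each target assignment and deciding solvability of the resulting monomial system. This is routine but must be carried out carefully — grouping assignments by $S_3$-orbit and by coefficient pattern — and one must verify that the surviving assignments, merged with all configurations having at most two in-triple edges, reproduce exactly the list in Figure~\ref{fig:all:3:vertex:subgraphs} and nothing more. I expect this bookkeeping, not any conceptual step, to be the main obstacle. A complementary point — not needed for the stated ``only if'' direction, but worth recording to confirm the figure is exhaustive — is that every surviving configuration is in fact realised by some finite-dimensional Lie algebra; for several two-edge patterns this requires an explicit construction in which the ``missing'' bracket lands on a fourth basis element, so that the triple genuinely sits inside a strictly larger associated graph.
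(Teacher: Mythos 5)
Your overall strategy coincides with the paper's: use Proposition~\ref{prop:conditions:for:edges} to reduce each three-vertex restriction to a short finite list of edge configurations, then eliminate those configurations that cannot coexist with the Jacobi identity. The treatment of the case where all three vertex pairs carry in-triple edges (expand \eqref{eqn:Jacobi:identity}, collect coefficients, discard unsolvable monomial systems such as your ``chain'' example, which is the paper's Type~XIV) is essentially the paper's argument.

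However, there is a genuine gap in your dismissal of the remaining cases. You assert that \emph{every} configuration in which at most two pairs carry an in-triple edge passes automatically, ``since the Jacobi identity on the triple also involves brackets with elements outside $\Tilde V$ that $\Tilde E$ does not record, so no contradiction is forced at the level of the subgraph.'' This is false, and it is exactly where minimality must be invoked. Consider the four-edge configuration with $[a,c]\propto c$ and $[b,c]\propto b$ (the paper's Type~XII, which has only two pairs carrying edges and is \emph{excluded} from Figure~\ref{fig:all:3:vertex:subgraphs}). Here $[b,[c,a]]\propto b$ with a nonzero coefficient and $[a,[b,c]]\propto [a,b]$. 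If $[a,b]=0$, the Jacobi sum is a nonzero multiple of $b$, a contradiction. If instead $[a,b]\propto x$ with $x\in V\setminus\Tilde V$, the Jacobi identity forces $[c,x]$ to equal a nontrivial linear combination of the two distinct basis vectors $x$ and $b$; but in a proper-\emph{minimal} graph $V$ is a basis and every bracket of basis elements is proportional to a single basis vector, so this is impossible. Hence Type~XII must be ruled out even though only two pairs carry in-triple edges, whereas a superficially similar two-pair configuration such as Type~VIII survives because only one term needs cancelling and $[c,x]\propto a$ suffices. Your argument as written would admit Type~XII, so the list it produces does not coincide with Figure~\ref{fig:all:3:vertex:subgraphs}; to repair it you must carry the minimality constraint (brackets land on single basis elements, possibly outside $\Tilde V$) into the incomplete-bracket cases rather than waving them through, which is precisely what the paper's case-by-case treatment of Types~XII, XV and XVI does.
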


Before proceeding with the proof of this claim, we observe that Theorem~\ref{thm:neccersary:conditions:proper:graph} implies several conditions for the edges of a proper graph that can be verified without exhaustively analyzing all subgraphs $G(\Tilde{V},\Tilde{E})\subseteq G(V,E)$ composed of three distinct vertices and all edges $e\in\{e\in E\,\mid\,\varpi_\mathrm{s}(e),\varpi_\mathrm{l}(e),\varpi_\mathrm{e}(e)\in\Tilde{V}\}$. These conditions are, e.g., the ones listed in Proposition~\ref{prop:conditions:for:edges}. It is also important to emphasize that Theorem~\ref{thm:neccersary:conditions:proper:graph} provides only a necessary condition for a graph to be a minimal graph associated with a finite-dimensional Lie algebra. It does not constitute a sufficient condition, as demonstrated in Examples~\ref{exa:thm.Necceserary:not:sufficient:on1} and~\ref{exa:thm.Necceserary:not:sufficient:two}.

\begin{tcolorbox}[breakable, colback=Cerulean!3!white,colframe=Cerulean!85!black,title=\textbf{Remark}: Theorem~\ref{thm:neccersary:conditions:proper:graph} does not provide suffiecnet conditions]
\begin{example}\label{exa:thm.Necceserary:not:sufficient:on1}
    Graphs of Type VIII (see Figure~\ref{fig:all:3:vertex:subgraphs}) satisfy all conditions stated in Theorem~\ref{thm:neccersary:conditions:proper:graph}. However, as demonstrated in the upcoming Corollary~\ref{cor:proper:choice:dependent}, such graphs are improper graphs, meaning they cannot be associated with a finite-dimensional Lie algebra as a minimal graph.
\end{example}

\begin{example}\label{exa:thm.Necceserary:not:sufficient:two}
    The graph depicted below clearly satisfies all conditions from Theorem~\ref{thm:neccersary:conditions:proper:graph}, as the subgraphs $G(\Tilde{V},\Tilde{E})$ containing three distinct vertices and all edges $e$ in $E$ with $\varpi_\mathrm{s}(e),\varpi_\mathrm{l}(e),\varpi_\mathrm{e}(e)\in\Tilde{V}$ are all equivalent to graphs of Type I or II. Nevertheless, the Lie bracket relations yield: $[a,[b,c]]=0$, $[b,[c,a]]\propto e$, and $[c,[a,b]]=0$, which implies $[a,[b,c]]+[b,[c,a]]+[c,[a,b]]\neq0$, constituting a violation of the Jacobi identity.
    \begin{figure}[H]
        \centering
        \includegraphics[width=0.225\linewidth]{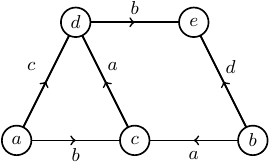}
        \caption{Illustration of an improper labeled directed graph that satisfies all conditions from Theorem~\ref{thm:neccersary:conditions:proper:graph}.}
    \end{figure}
\end{example}
\end{tcolorbox}

\begin{figure}[htpb]
    \centering
    \includegraphics[width=0.80\linewidth]{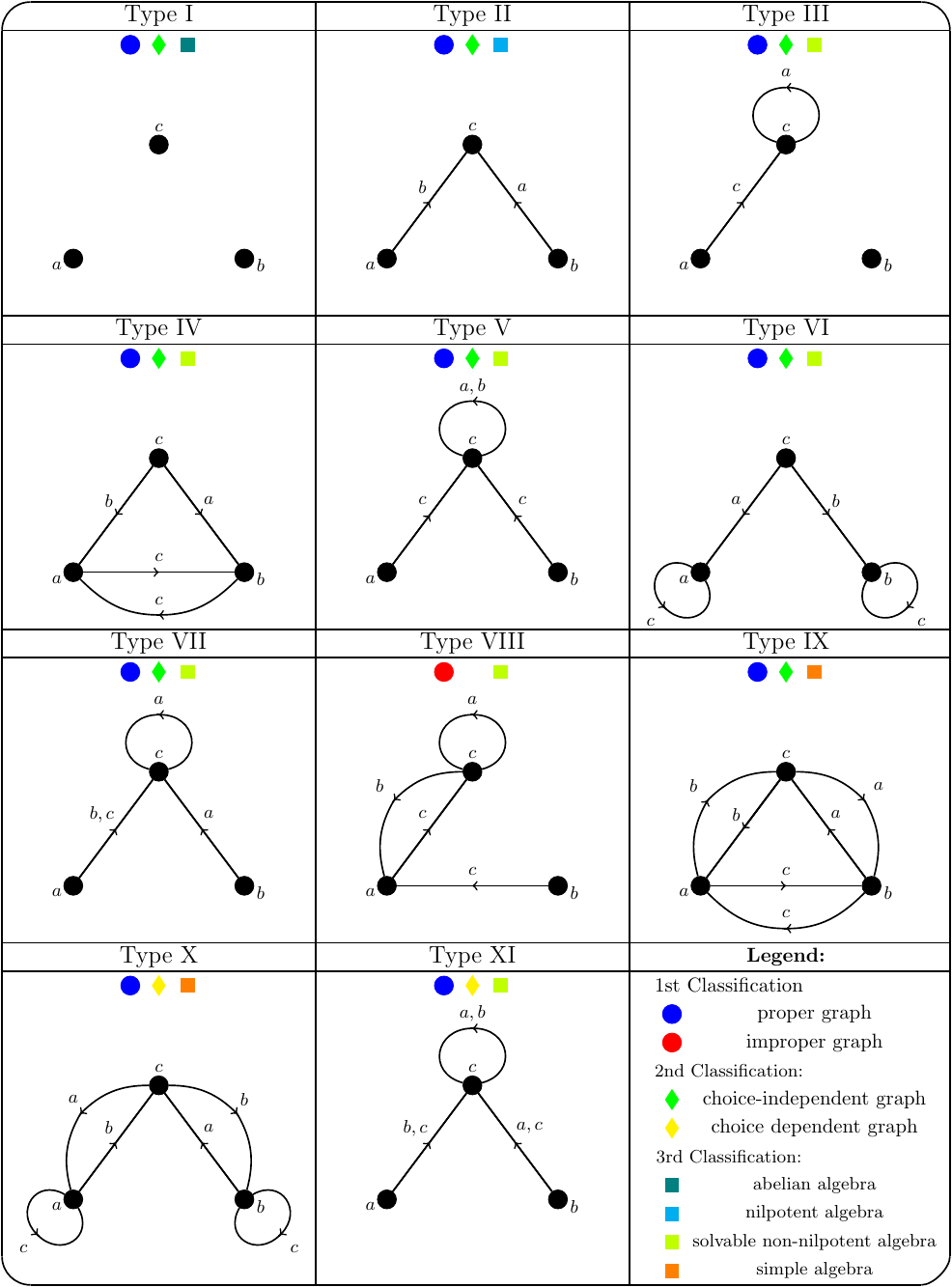}
    \caption{Visualization of all possible valid three-vertex subgraphs $G(\Tilde{V},\Tilde{E})$ of a proper graph $G(V,E)$, where the edges $e\in \Tilde{E}\subseteq E$ are those that satisfy $\varpi_\mathrm(s)(e),\varpi_\mathrm{l}(e),\varpi_\mathrm{e}(e)\in \Tilde{V}$, considered up to topological equivalence. The eleven subgraphs are furthermore classified according to the following criteria: (a) proper versus improper graphs; (b) choice-independent versus choice-dependent, in case the subgraphs are proper; and (c) an algebraic classification of whether these graphs can be associated with abelian, nilpotent, solvable and non-nilpotent, and simple Lie algebras. A legend is provided to indicate these classifications using three colored dots per graph\footnote{Note that the convention for drawing graphs has changed compared to the previous figures. This adjustment was made to distinguish general graphs from those graphs associated with specific Lie algebras, where the vertices represent particular elements. In the second convention, the label of a vertex is placed beside the vertex rather than inside.}.}
    \label{fig:all:3:vertex:subgraphs}
\end{figure}

\begin{proof}
    Let $G(V,E)$ be a labeled directed graph with three or more distinct vertices that is proper-minimal. 
    Since any proper-minimal graph is, by definition, proper, one can invoke Proposition~\ref{prop:conditions:for:edges}, which implies that only two types of edges can occur:
    \begin{itemize}
        \item \emph{Wedge-type} edges: These are edges $e\in E$ for which all the three vertices $\varpi_\mathrm{s}(e),\varpi_\mathrm{l}(e),\varpi_\mathrm{e}(e)$ are pair-wise distinct. Such edges are called wedge-type edges because,  as shown in Proposition~\ref{prop:conditions:for:edges}, an edge $e$ belongs to $E$ if and only if the edge $e'=(\varpi_\mathrm{l}(e),\varpi_\mathrm{s}(e),\varpi_\mathrm{e}(e))$ also belongs to $E$. The graph $G(\{\varpi_\mathrm{s}(e),\varpi_\mathrm{l}(e),\varpi_\mathrm{e}(e)\},\{e,e'\})$, depicted in Figure~\ref{fig:wedge:and:loop:type:edges} (a), resembles a wedge when the three vertices are placed at the corners of an isosceles triangle.
        \item \emph{Loop-type} edges. These are edges $e\in E$, where either $\varpi_\mathrm{e}(e)=\varpi_\mathrm{l}(e)$ or $\varpi_\mathrm{e}(e)=\varpi_\mathrm{s}(e)$, with $\varpi_\mathrm{s}(e)\neq \varpi_\mathrm{l}(e)$. These are called loop-type edges because: (i) an edge $e$ belongs to $E$ if and only if the edge $e'=(\varpi_\mathrm{l}(e),\varpi_\mathrm{s}(e),\varpi_\mathrm{e}(e))$ also belongs to $E$, and (ii) one of those edges is of the form $(a,b,a)$. In other words, one of those edges is a loop, connecting a vertex to itself. The graph $G(\{\varpi_\mathrm{s}(e),\varpi_\mathrm{l}(e)\},\{e,e'\})$ is depicted in Figure~\ref{fig:wedge:and:loop:type:edges} (b).
    \end{itemize}
    It is immediately evident that wedge-type and loop-type edges are distinct, and no graph isomorphism exists that maps one to the other.

    \begin{figure}[H]
        \centering
        \includegraphics[width=0.5\linewidth]{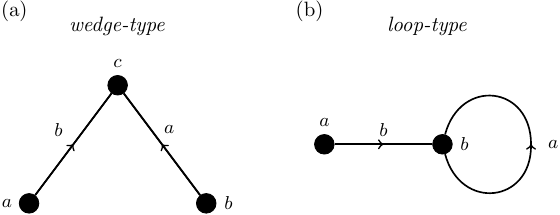}
        \caption{Depiction of wedge-type edges in (a) and loop-type edges in (b). Note that we use a different convention for visualizing vertices, if the figure is intended to convey a general feature and not a particular Lie algebra, as before.}
        \label{fig:wedge:and:loop:type:edges}
    \end{figure}
   
    We proceed here with the proof of the theorem. Consider every subgraph $G(\Tilde{V},\Tilde{E})\subseteq G(V,E)$, where $\Tilde{V}=\{a,b,c\}\subseteq V$ is a set of three distinct vertices, and $\Tilde{E}=\{e\in E\,\mid\, \varpi_\mathrm{s}(e),\varpi_\mathrm{l}(e),\varpi_\mathrm{e}(e)\in \Tilde{V}\}$ is the set edges whose source, label, and target all lie in $\Tilde{V}$. Due to the edge-rule discussed in Propositions~\ref{prop:conditions:for:edges} and~\ref{prop:criteria:number:of:vertices:and:edges} , the number of edges is even and bounded from above, such that one has $|\Tilde{E}|\in\{0,2,4,6\}$. We now analyze these cases separately, and use upper-case roman indices to refer to the corresponding graphs in Figure~\ref{fig:all:3:vertex:subgraphs}:
    
    \begin{enumerate}[label = (\roman*)]
        \item $\boldsymbol{\Tilde{E}=0}$. Here, there is only one possible graph: $G_{\mathrm{I}}\equiv G(\Tilde{V},\Tilde{E}_\mathrm{I})$, where $\Tilde{E}_\mathrm{I}=\emptyset$.
        \item $\boldsymbol{\Tilde{E}=2}$. Here, there are nine possible graphs, divided into two groups, on group of three graphs with only wedge-type edges, and another group of six graphs with only loop-type edges. The three graphs with only wedge-type edges are the ones with $\Tilde{E}=\{(a,b,c),(b,a,c)\}$, $\Tilde{E}=\{(b,c,a),(c,b,a)\}$, and $\Tilde{E}=\{(c,a,b),(a,c,b)\}$. These are clearly all equivalent, as any one of those can be transformed into the other two by a simple relabeling of the vertices. The six graphs with only loop-type edges are those with $\Tilde{E}=\{(a,b,a),(b,a,a)\}$, $\Tilde{E}=\{(a,b,b),(b,a,b)\})$, $\Tilde{E}=\{(a,c,a),(c,a,a)\}$, $\Tilde{E}=\{(a,c,c),(c,a,c)\}$, $\Tilde{E}=\{(b,c,b),(c,b,b)\}$, and $\Tilde{E}=\{(b,c,c),(c,b,c)\}$. These are also all clearly equivalent among themselves, but not equivalent to the previous wedge-type graphs. Thus, we identify two distinct yet equivalent graph-types: $G_{\mathrm{II}}\equiv G(\Tilde{V},\Tilde{E}_\mathrm{II})$ and $G_{\mathrm{III}}\equiv G(\Tilde{V},\Tilde{E}_\mathrm{III})$, with $\Tilde{E}_\mathrm{II}\equiv\{(a,b,c),(b,a,c)\}$ and $\Tilde{E}_\mathrm{III}\equiv\{(a,c,c),(c,a,c)\}$ respectively.
        \item $\boldsymbol{\Tilde{E}=4}$. This case is subdivided into three distinct subcases, depending on the types of edges involved.
        \begin{enumerate}[label = (iii.\roman*)]
            \item All four edges are wedge-type. Here, there is only one class of inequivalent subgraphs in this category, namely subgraphs of the form $G_{\mathrm{IV}}\equiv G(\Tilde{V},\Tilde{E}_\mathrm{IV})$, with $\Tilde{E}_\mathrm{IV}\equiv\{(c,b,a),(b,c,a),(c,a,b),(a,c,b)\}$.
            \item All four edges are loop-type. Here, one finds three inequivalent subgraphs: 
            \begin{itemize}
                \item $G_{\mathrm{XII}}\equiv G(\Tilde{V},\Tilde{E}_\mathrm{XII})$, with $\Tilde{E}_\mathrm{XII}\equiv\{(a,c,c),(c,a,c),(c,b,b),(b,c,b)\}$,
                \item $G_{\mathrm{V}}\equiv G(\Tilde{V},\Tilde{E}_\mathrm{V})$, with $\Tilde{E}_\mathrm{V}\equiv\{(a,c,c),(c,a,c),(b,c,c),(c,b,c)\}$,
                \item $G_{\mathrm{VI}}\equiv G(\Tilde{V},\Tilde{E}_\mathrm{VI})$ with $\Tilde{E}_\mathrm{VI}\equiv\{(c,a,a),(a,c,a),(c,b,b),(b,c,b)\}$.
            \end{itemize}
            Notably, subgraphs of the type $G_{\mathrm{XII}}$ are incompatible with the Jacobi identity. Specifically:  $[b,[c,a]]\propto b$, while $[a,[b,c]]\propto [a,b]\in \spn\{V\setminus\Tilde{V}\}$. If $[a,b]=0$, then $[a,[b,c]]+[b,[c,a]]+[c,[a,b]]\propto b$, which violates the Jacobi identity. If instead $[a,b]\propto x$, where $x\in V\setminus \Tilde{ V}$, then $[c,x]$ must be proportional to an appropriate linear combination of $x$ and $b$, which is prohibited for a minimal graph since all bracket results must are elements of $V$, which must be a basis. Hence, graphs containing subgraphs of type $G_{\mathrm{XII}}$ cannot be associated with finite-dimensional Lie algebras as minimal graphs.
            \item Two wedge-type and two are loop-type edges. Here, one finds the following two inequivalent subgraphs: 
            \begin{itemize}
                \item $G_{\mathrm{VII}}\equiv G(\Tilde{V},\Tilde{E}_\mathrm{VII})$, with $\Tilde{E}_\mathrm{VII}\{(a,c,c),(c,a,c),(a,b,c),(b,a,c)\}$,
                \item $G_{\mathrm{VIII}}\equiv G(\Tilde{V},\Tilde{E}_\mathrm{VIII})$, with $\Tilde{E}_\mathrm{VIII}\equiv\{(a,c,c),(c,a,c),(b,c,a),(c,b,a)\}$.
            \end{itemize}
        \end{enumerate}
        \item $\boldsymbol{\Tilde{E}=6}$. This last case is divided into four subcases based on the composition of wedge-type and loop-type edges.
        \begin{enumerate}[label = (iv.\roman*)]
            \item All six edges are wedge-type. There is only one inequivalent subgraph in this category, namely $G_{\mathrm{IX}}\equiv G(\Tilde{V},\Tilde{E}_\mathrm{IX})$ with $\Tilde{E}_\mathrm{IX}\equiv\{(a,b,c),(b,a,c),(b,c,a),(c,b,a),(c,a,b),(a,c,b)\}$.
            \item All six edges are loop-type. Here, there exist two inequivalent subgraphs:
            \begin{itemize}
                \item $G_{\mathrm{XIII}}\equiv G(\Tilde{V},\Tilde{E}_\mathrm{XIII})$, with $\Tilde{E}_\mathrm{XIII}\equiv\{(a,c,c),(c,a,c),(c,b,b),(b,c,b), (b,a,a),(a,b,a)\}$,
                \item $G_{\mathrm{XIV}}\equiv G(\Tilde{V},\Tilde{E}_\mathrm{XIV})$, with $\Tilde{E}_\mathrm{XIV}\equiv\{(c,a,a),(a,c,a),(c,b,b),(b,c,b),(a,b,b),(b,a,b)\}$.
            \end{itemize}
            We will now show that in both cases the graph $G(V,E)$ cannot be associated with a finite-dimensional Lie algebra as a minimal subgraph. For $G_{\mathrm{XIII}}$, one would have that $[a,[b,c]]\propto a$, $[b,[c,a]]\propto b$, and $[c,[a,b]]\propto c$. Thus, the Jacobi identity cannot be satisfied as one assumes that the three elements $a,b,c$ are linearly independent and can therefore cannot be combined linearly and vanish. Also, graphs with subgraphs of type $G_{\mathrm{XIV}}$ are prohibited, since this would imply $[a,b]=\kappa_1 b$, $[b,c]=\kappa_2 b$, and $[c,a]=\kappa_3 c$, where $\kappa_1,\kappa_2,\kappa_3\in \mathbb{F}^*$. The Jacobi identity in this case would read: $[a,[b,c]]+[b,[c,a]]+[c,[a,b]]=\kappa_2\kappa_1b+\kappa_3\kappa_2b-\kappa_1\kappa_2b=\kappa_3\kappa_2b\neq 0$, which contradicts the very definition of the Jacobi identity.
            \item Four wedge-type edges and two are loop-type edges. Here, there is only one inequivalent subgraph, namely $G_{\mathrm{XV}}\equiv G(\Tilde{V},\Tilde{E}_\mathrm{XV})$, with $\Tilde{E}_\mathrm{XV}\equiv\{(a,c,c),(c,a,c),(a,b,c),(b,a,c), (b,c,a),(c,b,a)\}$. This configuration implies: $[a,[b,c]]+[b,[c,a]]+[c,[a,b]]\propto a$, which is again a violation of the Jacobi identity.
            \item Two wedge-type edges and four are loop-type edges. Here, there are three inequivalent subgraphs:
            \begin{itemize}
                \item $G_{\mathrm{XVI}}\equiv G(\Tilde{V},\Tilde{E}_\mathrm{XVI})$, with $\Tilde{E}_\mathrm{XVI}\equiv\{(a,c,c),(c,a,c),(a,b,c),(b,a,c),(c,b,b),(b,c,b)\}$, 
                \item $G_{\mathrm{X}}\equiv G(\Tilde{V},\Tilde{E}_\mathrm{X})$, with $\Tilde{E}_\mathrm{X}\equiv\{(c,a,a),(a,c,a),(c,b,b),(b,c,b),(a,b,c),(b,a,c)\}$,
                \item $G_{\mathrm{XI}}\equiv G(\Tilde{V},\Tilde{E}_\mathrm{XI})$, with $\Tilde{E}_\mathrm{XI}\equiv\{(a,c,c),(c,a,c),(b,c,c),(c,b,c),(a,b,c),(b,a,c)\}$.
            \end{itemize}
            Subgraphs of the type $G_{\mathrm{XVI}}$ would imply $[a,[b,c]]\propto c$, $[b,[c,a]]\propto b$, and $[c,[a,b]]=0$. Thus, $a,b,c$ cannot satisfy the Jacobi identity under the assumption that they are linearly independent.
        \end{enumerate}
    \end{enumerate}
    This concludes the case-by-case analysis of all possible three-vertex subgraphs. Each valid configuration corresponds to one of the eleven graphs depicted in Figure~\ref{fig:all:3:vertex:subgraphs}, and any deviation from these possible configuartions leads to a violation of  either the Jacobi identity, bilinearity, antisymmetry of the Lie bracket, or the minimality condition.
\end{proof}

\begin{corollary}\label{cor:proper:choice:dependent}
    All graphs depicted in Figure~\ref{fig:all:3:vertex:subgraphs} are proper subgraphs, except those of Type VIII. Furthermore, all proper graphs depicted in Figure ~\ref{fig:all:3:vertex:subgraphs} are choice-independent, except those of Type X and XI.
\end{corollary}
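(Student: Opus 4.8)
The plan is to dispatch the two assertions of the corollary by a finite case check over the eleven graph types of Figure~\ref{fig:all:3:vertex:subgraphs}. The organizing principle is that a minimal graph on three distinct vertices $\{a,b,c\}$ is nothing more than a prescription for the three brackets $[a,b]$, $[a,c]$, $[b,c]$: each of them is forced by the edge set to be either $0$ or a fixed nonzero scalar multiple of one of $a,b,c$, and the only extra requirement a genuine Lie algebra must satisfy is the single independent Jacobi identity $J(a,b,c):=[a,[b,c]]+[b,[c,a]]+[c,[a,b]]=0$. Hence a type in Figure~\ref{fig:all:3:vertex:subgraphs} is proper precisely when $J=0$ can be met with all prescribed scalars in $\mathbb{F}^*$, and it is choice-independent precisely when $J$ vanishes identically in those scalars. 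One also checks that no additional edges are created, which is automatic here since every bracket is, by construction, a multiple of a single basis vector and every prescribed structure constant is nonzero.

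For the properness assertion I would first exhibit explicit three-dimensional realizations of Types~I through~VII and~IX: the abelian algebra $\mathbb{F}^3$ for Type~I, the Heisenberg algebra for Type~II, $\mathfrak{aff}(\mathbb{F})\oplus\mathbb{F}$ for Type~III, the Euclidean algebra $\mathfrak{e}(2)$ for Type~IV, the solvable algebras defined directly by the bracket data for Types~V--VII (which are well-defined Lie algebras because $J\equiv0$ there), and $\mathfrak{su}(2)$ or $\mathfrak{sl}(2,\mathbb{F})$ for Type~IX. For Type~X the edge set gives $[a,c]=\lambda_1a$, $[b,c]=\lambda_2b$, $[a,b]=\lambda_3c$, and a short computation yields $J(a,b,c)=\lambda_3(\lambda_1+\lambda_2)\,c$, so the choice $\lambda_2=-\lambda_1$ (again $\mathfrak{sl}(2,\mathbb{F})$) gives a realization; Type~XI is handled by the analogous bracket bookkeeping. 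What remains is to rule out Type~VIII: its edge set forces $[a,c]=\kappa_1c$, $[b,c]=\kappa_2a$, $[a,b]=0$, whence $J(a,b,c)=[a,\kappa_2a]+[b,-\kappa_1c]+[c,0]=-\kappa_1\kappa_2\,a\neq0$ because $\kappa_1,\kappa_2\in\mathbb{F}^*$ and $a\neq0$; thus no finite-dimensional Lie algebra realizes Type~VIII as a minimal graph, i.e., it is improper. This incidentally settles Example~\ref{exa:thm.Necceserary:not:sufficient:on1}.

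For the choice-dependence assertion I would rerun the same $J$-expansion for each proper type and record whether the identity is automatic. For Types~I--VII and~IX the prescribed scalars enter $J$ only in pairs that cancel, so $J\equiv0$: every edge $e$ can then be read as the bracket $[\varpi_\mathrm{s}(e),\varpi_\mathrm{l}(e)]=\kappa\,\varpi_\mathrm{e}(e)$ for an arbitrary $\kappa\in\mathbb{F}^*$, and the graph is choice-independent. Types~X and~XI are the remaining cases and require separate attention: for Type~X the condition $J=0$ reduces to the nontrivial homogeneous relation $\lambda_3(\lambda_1+\lambda_2)=0$ obtained above, which fails for generic choices of the structure constants, so Type~X is proper but choice-dependent; Type~XI is treated analogously, yielding the same conclusion. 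Collecting these observations gives both statements of the corollary.

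The conceptual crux is the Type~VIII step: Type~VIII is the one graph in Figure~\ref{fig:all:3:vertex:subgraphs} that satisfies every necessary condition of Theorem~\ref{thm:neccersary:conditions:proper:graph} (antisymmetry of the bracket and the edge-count bounds of Proposition~\ref{prop:criteria:number:of:vertices:and:edges}) and yet is not realizable, so excluding it genuinely requires the Jacobi identity rather than the combinatorial constraints. The only other delicate point is routine: one must verify that each explicit realization reproduces exactly the claimed edge set, with no bracket accidentally vanishing and none becoming proportional to a further vertex --- which, as noted, is immediate from the nonvanishing of the structure constants and the fact that each bracket is a multiple of a single basis element.
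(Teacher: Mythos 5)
Your overall strategy is the same as the paper's: treat each of the eleven types as a standalone prescription of the three brackets $[a,b],[a,c],[b,c]$ (missing edges meaning zero), and reduce both properness and choice-(in)dependence to the single Jacobi identity. Your Type~VIII and Type~X computations are exactly the paper's, and your remark that the realizations are a side issue is consistent with the paper, which delegates them to Corollary~\ref{cor:3:graphs:Bianchi:identification}.

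The genuine gap is the sentence \enquote{Type~XI is treated analogously, yielding the same conclusion.} It is not analogous, and the deferred computation does not yield the asserted conclusion. The Type~XI edge set forces $[a,c]=\kappa_2c$, $[b,c]=\kappa_3c$, $[a,b]=\kappa_1c$ with $\kappa_1,\kappa_2,\kappa_3\in\mathbb{F}^*$, and then $[a,[b,c]]=\kappa_3\kappa_2\,c$, $[b,[c,a]]=-\kappa_2\kappa_3\,c$, $[c,[a,b]]=0$, so the Jacobi sum vanishes \emph{identically}, for every choice of nonzero constants --- on its face Type~XI comes out proper and choice-independent, not choice-dependent. (The paper's own proof writes $[a,[b,c]]=\kappa_1\kappa_3c$ at this point, which does not follow from the listed brackets, since $[b,c]\in\spn\{c\}$ and $[a,c]=\kappa_2c$; the constraint $\kappa_1=\kappa_2$ it extracts rests on that slip.) So either you must exhibit a different bracket content for Type~XI than the one its edge set dictates, or the Type~XI clause of the statement itself has to be revisited; in either case \enquote{analogously} is carrying the one case in the whole corollary where the computation is not automatic, and as written that step fails. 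Everything else in your proposal --- Types~I--VII and~IX choice-independent (with the Type~V cancellation), Type~VIII improper via $J\propto a\neq0$, Type~X proper but choice-dependent with $\lambda_2=-\lambda_1$ --- is correct and coincides with the paper's argument.
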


\begin{proof}
    The edges of the graphs depicted in Figure~\ref{fig:all:3:vertex:subgraphs} define the Lie bracket relations among the three vertex elements, up to irrelevant nonzero proportionality constants. In this context, and without loss of generality, we here treat the graphs in Figure~\ref{fig:all:3:vertex:subgraphs} as standalone graphs themselves, not as subgraphs of a larger graph. Consequently, missing edges imply that the corresponding Lie bracket vanishes since the bracket cannot connect to a vertex outside the graph. 
    
    To determine whether a graph is proper, it suffices to verify whether the Jacobi identity is satisfied, since the vertex set $V$ spans a vector space and the Lie bracket is antisymmetric by construction. To determine whether a given graph is choice-independent, one must similarly verify if the Jacobi identity holds irrespective of the choice for proportionality constants associated with the Lie bracket defined by the edges. We now analyze each graph type individually.
    
    \begin{itemize}
        \item \textbf{Type I}: The Lie bracket of all basis elements and therefore of all elements vanishes. Thus, Type I graphs can be associated with the three-dimensional abelian Lie algebra, making them proper and choice-independent.
        \item \textbf{Type II}: The only non-trivial Lie bracket is $[a,b]\propto c$. Thus $[a,[b,c]]=0=[b,[c,a]]=[c,[a,b]]$, making graphs of Type II proper and choice-independent.
        \item \textbf{Type III}: The only non-trivial Lie bracket is $[a,c]\propto c$. Thus $[a,[b,c]]=0=[b,[c,a]]=[c,[a,b]]$, making graphs of Type III proper and choice-independent.
        \item \textbf{Type IV}: The only non-trivial Lie brackets are $[b,c]\propto a$ and $[a,c]\propto b$. Thus $[a,[b,c]]=0=[b,[c,a]]=[c,[a,b]]$, making graphs of Type IV proper and choice-independent.
        \item \textbf{Type V}: The only non-trivial Lie brackets are $[a,c]=\kappa_1 c$ and $[b,c]=\kappa_2 c$, where $\kappa_1,\kappa_2\in\mathbb{F}^*$. Thus $[a,[b,c]]=\kappa_1\kappa_2c$, $[b,[c,a]]=-\kappa_1\kappa_2c$, and $[c,[a,b]]=0$. Hence $[a,[b,c]]+[b,[c,a]]+[c,[a,b]]=0$, making graphs of Type V proper and choice-independent.
        \item \textbf{Type VI}: The only non-trivial Lie brackets are $[c,a]\propto a$ and $[b,c]\propto b$. Thus $[a,[b,c]]=0=[b,[c,a]]=[c,[a,b]]$, making graphs of Type VI proper and choice-independent.
        \item \textbf{Type VII}: The only non-trivial Lie brackets are $[a,c]\propto c$ and $[a,b]\propto c$. Thus $[a,[b,c]]=0=[b,[c,a]]=[c,[a,b]]$, making graphs of Type VII proper and choice-independent.
        \item \textbf{Type VIII}: The only non-trivial Lie brackets are $[a,c]=\kappa_2 c$ and $[b,c]=\kappa_3 a$ with $\kappa_2,\kappa_3\in\mathbb{F}^*$. Thus $[a,[b,c]]=0$, $[b,[c,a]]=-\kappa_2\kappa_3 a$, and $[c,[a,b]]=0$. Hence $[a,[b,c]]+[b,[c,a]]+[c,[a,b]]\propto a$, making graphs of Type VIII improper. 
        \item \textbf{Type IX}: The non-trivial Lie brackets are $[a,b]\propto c$, $[b,c]\propto a$, and $[a,c]\propto b$. Thus $[a,[b,c]]=0=[b,[c,a]]=[c,[a,b]]$, making graphs of Type IX proper and choice-independent.
        \item \textbf{Type X}: The non-trivial Lie brackets are $[a,b]=\kappa_1 c$, $[a,c]=\kappa_2 a$, and $[b,c]=\kappa_3 b$, where $\kappa_1,\kappa_2,\kappa_3\in\mathbb{F}^*$. Thus $[a,[b,c]]=\kappa_1\kappa_3 c$, $[b,[c,a]]=\kappa_1\kappa_2 c$, and $[c,[a,b]]=0$. Hence $[a,[b,c]]+[b,[c,a]]+[c,[a,b]]=\kappa_1(\kappa_2+\kappa_3) c$. To satisfy the Jacobi identity, one needs, therefore, to require $\kappa_2=-\kappa_3$, making graphs of Type X proper and choice-dependent.
        \item \textbf{Type XI}: The non-trivial Lie brackets are $[a,b]=\kappa_1 c$, $[a,c]=\kappa_2 c$, and $[b,c]=\kappa_3 c$, where $\kappa_1,\kappa_2,\kappa_3\in\mathbb{F}^*$. Thus $[a,[b,c]]=\kappa_1\kappa_3 c$, $[b,[c,a]]=-\kappa_2\kappa_3 c$, and $[c,[a,b]]=0$. Hence $[a,[b,c]]+[b,[c,a]]+[c,[a,b]]=\kappa_3(\kappa_1-\kappa_2) c$. To satisfy the Jacobi identity, one needs, therefore, to require $\kappa_1=\kappa_2$, making graphs of Type XI proper and choice-dependent.
    \end{itemize}
\end{proof}

\begin{corollary}\label{cor:criteria:three:vertices:vanish:under:twO.brackets}
    Let $\g$ be a finite-dimensional graph-admissible Lie algebra associated with a labeled directed graph $G(V,E)$, where the vertices satisfy $[v_j,[v_k,v_\ell]]=0$ for all $j,k,\ell$ with $j\neq k\neq \ell \neq j$. Then, for any choice of three distinct vertices $\{v_1,v_2,v_3\}=\Tilde{V}\subseteq V$, and considering only edges $e\in \Tilde{E}=\{e=(v_\mathrm{s},v_\mathrm{l},v_\mathrm{e})\in E\,\mid\,v_\mathrm{s},v_\mathrm{l},v_\mathrm{e}\in \Tilde{V}\}$, the resulting subgraph $G(\Tilde{V},\Tilde{E})$ must be of Type I, II, III, IV, VI, VII, or IX, depicted in Figure~\ref{fig:all:3:vertex:subgraphs}.
\end{corollary}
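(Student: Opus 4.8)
The plan is to reuse the complete case analysis already carried out in the proof of Theorem~\ref{thm:neccersary:conditions:proper:graph} and then invoke the extra hypothesis $[v_j,[v_k,v_\ell]]=0$ (for pairwise distinct indices) as a filter that removes every three-vertex subgraph type except I, II, III, IV, VI, VII, and IX. The conceptual point is that this hypothesis hands us the vanishing of \emph{each} nested double bracket individually, so --- in contrast to the proof of Theorem~\ref{thm:neccersary:conditions:proper:graph}, where the Jacobi identity had to be paired with linear independence of the three chosen elements --- the argument never uses that $a,b,c$ are linearly independent, and therefore applies to redundant graphs just as well as to minimal ones.

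First I would note that, since $G(V,E)$ is associated with the finite-dimensional Lie algebra $\g$, it is a \emph{proper} graph, so Proposition~\ref{prop:conditions:for:edges} applies: each edge is wedge- or loop-type, edges occur in symmetric pairs $e=(v_\mathrm{s},v_\mathrm{l},v_\mathrm{e})\leftrightarrow e'=(v_\mathrm{l},v_\mathrm{s},v_\mathrm{e})$, and an edge is determined by its starting and labeling vertex. Fixing three distinct vertices $\Tilde{V}=\{a,b,c\}$ with induced edge set $\Tilde{E}=\{e\in E\mid\varpi_\mathrm{s}(e),\varpi_\mathrm{l}(e),\varpi_\mathrm{e}(e)\in\Tilde{V}\}$, these edge rules alone force $|\Tilde{E}|\in\{0,2,4,6\}$, and the case analysis in the proof of Theorem~\ref{thm:neccersary:conditions:proper:graph} enumerates, up to equivalence, exactly the sixteen configurations $G_{\mathrm{I}},\ldots,G_{\mathrm{XVI}}$ appearing there as the only possibilities for $G(\Tilde{V},\Tilde{E})$. (That proof then discards $G_{\mathrm{XII}},\ldots,G_{\mathrm{XVI}}$ using the Jacobi identity, retaining only I--XI for proper-minimal graphs, but I would not need that reduction at this stage.)

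Next I would eliminate the nine types absent from the claimed list. For each of $G_{\mathrm{V}}$, $G_{\mathrm{VIII}}$, $G_{\mathrm{X}}$, $G_{\mathrm{XI}}$, $G_{\mathrm{XII}}$, $G_{\mathrm{XIII}}$, $G_{\mathrm{XIV}}$, $G_{\mathrm{XV}}$, $G_{\mathrm{XVI}}$, the edges that are actually present in $\Tilde{E}$ already force some nested double bracket of the three distinct vertices to equal a nonzero scalar multiple of one of the (nonzero) basis vertices. For example: in $G_{\mathrm{V}}$ one reads off $[a,c]=\kappa_1 c$ and $[b,c]=\kappa_2 c$ with $\kappa_1,\kappa_2\in\mathbb{F}^*$, hence $[a,[b,c]]=\kappa_1\kappa_2 c\neq0$; in $G_{\mathrm{VIII}}$ one reads off $[a,c]\propto c$ and $[b,c]\propto a$, hence $[b,[c,a]]\propto a\neq0$; in $G_{\mathrm{XII}}$ one reads off $[a,c]\propto c$ and $[b,c]\propto b$, hence $[b,[c,a]]\propto b\neq0$; and the remaining six cases are handled in exactly the same way, reusing the short computations already recorded in the proofs of Theorem~\ref{thm:neccersary:conditions:proper:graph} and Corollary~\ref{cor:proper:choice:dependent}. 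Each of these conclusions contradicts the standing hypothesis $[v_j,[v_k,v_\ell]]=0$ for $j\neq k\neq\ell\neq j$; hence none of the nine types can occur, and $G(\Tilde{V},\Tilde{E})$ must be of Type I, II, III, IV, VI, VII, or IX, as claimed.

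The step requiring the most care is the bookkeeping in the previous paragraph: one must check that each elimination uses only edges genuinely lying in $\Tilde{E}$ --- edges from a vertex of $\Tilde{V}$ to a vertex outside $\Tilde{V}$ are simply not present in the induced subgraph and cannot be invoked --- and one must make sure that the configuration list imported from the proof of Theorem~\ref{thm:neccersary:conditions:proper:graph} is genuinely the full set of sixteen edge-admissible three-vertex graphs and not the shorter list of eleven that survive its Jacobi test. With those two points settled, the corollary follows at once.
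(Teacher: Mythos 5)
Your proof is correct and follows essentially the same route as the paper, whose one-line argument simply reuses the nested-bracket computations from the proof of Corollary~\ref{cor:proper:choice:dependent} to discard Types V, VIII, X, and XI under the hypothesis $[v_j,[v_k,v_\ell]]=0$. Your explicit elimination of the additional configurations $G_{\mathrm{XII}}$--$G_{\mathrm{XVI}}$, without invoking linear independence, makes the argument self-contained for redundant graphs as well, a point the paper's terse proof leaves implicit.
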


\begin{proof}
    This statement follows from the same computations performed in the proof of Corollary~\ref{cor:proper:choice:dependent}.
\end{proof}

\begin{corollary}\label{cor:3:graphs:Bianchi:identification}
    The proper graphs in Figure~\ref{fig:all:3:vertex:subgraphs} can each be associated with at least one real three-dimensional Lie algebra. The naming conventions used below for these Lie algebras follow the classification given in the literature \cite{Mubarakzyanov:1963,Popovych:2003}. We have:
    \begin{itemize}
        \item \textbf{Type I}: The abelian Lie algebra $3\gl_1=\gl_1\oplus\gl_1\oplus\gl_1$, also known as Bianchi I. 
        \item \textbf{Type II}: The Heisenberg algebra $\gl_{3,1}\cong\mathfrak{h}_1$ \cite{Kac:1990,Gosson:2006}, also known as Bianchi II, determined by the nontrivial Lie bracket: $[e_2,e_3]=e_1$.
        \item \textbf{Type III}: The Lie algebra $\gl_{2,1}\oplus \gl_1$, also known as Bianchi III, determined by the nontrivial Lie bracket: $[e_1,e_2]=e_2$.
        \item \textbf{Type IV}: The solvable Lie algebra $\gl_{3,5}^{\beta=0}$, also known as Bianchi VII, determined by the nontrivial Lie brackets $[e_1,e_3]=\beta e_1-e_2$ and $[e_2,e_3]=e_1+\beta e_2$ with $\beta=0$.
        \item \textbf{Type V}: The Lie algebra $\gl_{2,1}\oplus \gl_1$, also known as Bianchi III, determined by the nontrivial Lie bracket: $[e_1,e_2]=e_2$.
        \item \textbf{Type VI}: The solvable Lie algebra $\gl_{3,3}$, also known as Bianchi V, determined by the nontrivial Lie brackets  $[e_1,e_3]=e_1$ and $[e_2,e_3]=e_2$.
        \item \textbf{Type VII}: The Lie algebra $\gl_{2,1}\oplus \gl_1$, also known as Bianchi III, determined by the nontrivial Lie bracket: $[e_1,e_2]=e_2$.
        \item \textbf{Type IX}: The simple Lie algebra $\mathfrak{gl}_{3,7}$, also known as Bianchi IX, or $\mathfrak{so}_\R(3)\cong\mathfrak{su}(2)$ \cite{fuchs2003symmetries}, determined by the nontrivial Lie brackets $[e_2,e_3]=e_1$, $[e_3,e_1]=e_2$, and $[e_1,e_2]=e_3$.
        \item \textbf{Type X}: The simple Lie algebra $\mathfrak{gl}_{3,6}$, also known as Bianchi VIII, or $\sl{2}{\R}$ \cite{A1:project}, determined by the non-trivial Lie brackets $[e_1,e_2]=e_1$, $[e_2,e_3]=e_3$, and $[e_1,e_3]=2e_2$.
        \item \textbf{Type XI}: The Lie algebra $\gl_{2,1}\oplus \gl_1$, also known as Bianchi III, determined by the nontrivial Lie bracket: $[e_1,e_2]=e_2$.
    \end{itemize}
\end{corollary}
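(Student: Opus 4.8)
The plan is to verify the ten claimed identifications one proper graph type at a time. For each proper type in Figure~\ref{fig:all:3:vertex:subgraphs}, I would first read off, from the edge set recorded in the proof of Theorem~\ref{thm:neccersary:conditions:proper:graph}, the list of nonzero brackets it prescribes among the three vertices $a,b,c$; by Definition~\ref{def:assocaited:graph} these are pinned down only up to nonzero constants. For the types that Corollary~\ref{cor:proper:choice:dependent} certifies as choice-independent --- namely I, II, III, IV, V, VI, VII, IX --- the Jacobi identity holds for every admissible choice of those constants, so I would simply fix convenient values; for the choice-dependent types X and XI I would first impose the relation extracted in the proof of Corollary~\ref{cor:proper:choice:dependent} ($\kappa_2=-\kappa_3$ for X and $\kappa_1=\kappa_2$ for XI) and then fix values. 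In every case one then (i) checks the resulting bracket table is a genuine real three-dimensional Lie algebra (antisymmetry is automatic and the Jacobi identity is exactly the content of Corollary~\ref{cor:proper:choice:dependent}), (ii) confirms by direct substitution that running Algorithm~\ref{alg:creating:graph} on $\{a,b,c\}$ returns \emph{exactly} the prescribed edge set and no other edges, and (iii) locates the algebra in the Mubarakzyanov--Bianchi list \cite{Mubarakzyanov:1963,Popovych:2003}.

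For most types steps (i)--(iii) are immediate after a relabelling: Type~I is the abelian $3\mathfrak{gl}_1$; Type~II is the Heisenberg algebra $\mathfrak{h}_1$ via $a=e_2$, $b=e_3$, $c=e_1$ with $[e_2,e_3]=e_1$; Type~IV is $\mathfrak{gl}_{3,5}^{\beta=0}$ (Bianchi~VII with $\beta=0$) via $a=e_1$, $b=e_2$, $c=e_3$ with $[e_1,e_3]=-e_2$, $[e_2,e_3]=e_1$; Type~VI is $\mathfrak{gl}_{3,3}$ via $a=e_1$, $b=e_2$, $c=e_3$ with $[e_1,e_3]=e_1$, $[e_2,e_3]=e_2$; Type~IX is $\mathfrak{su}(2)\cong\mathfrak{so}_{\mathbb{R}}(3)$ via the cyclic relations $[e_1,e_2]=e_3$, $[e_2,e_3]=e_1$, $[e_3,e_1]=e_2$; and Type~X is $\sl{2}{\R}$ via $a=e_1$, $b=e_3$, $c=e_2$ with $[e_1,e_2]=e_1$, $[e_2,e_3]=e_3$, $[e_1,e_3]=2e_2$, where the induced constants $(\kappa_1,\kappa_2,\kappa_3)=(2,1,-1)$ do satisfy the required $\kappa_2=-\kappa_3$. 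In each of these the bracket table visibly produces precisely the prescribed edges and nothing more.

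The only part needing a genuine --- if short --- computation is that Types III, V, VII and XI all realise the \emph{same} isomorphism class, namely Bianchi~III $\cong\mathfrak{gl}_{2,1}\oplus\mathfrak{gl}_1\cong\mathfrak{aff}(\R)\oplus\R$, each time through a different spanning set. For Type~III this is transparent ($[a,c]=c$, all other brackets zero, $b$ central). For Type~V, taking $[a,c]=c$, $[b,c]=c$ (whence $[a,b]=0$ by Jacobi), the element $a-b$ is central and $\{b,c\}$ spans a copy of $\mathfrak{aff}(\R)$. For Type~VII, taking $[a,c]=c$, $[a,b]=c$ (whence $[b,c]=0$), the element $b-c$ is central and $\{a,c\}$ spans $\mathfrak{aff}(\R)$. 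For Type~XI the constraint $\kappa_1=\kappa_2$ permits $[a,b]=[a,c]=[b,c]=c$, and a three-line check then shows $a-b+c$ commutes with $a$, $b$ and $c$, hence is central, while $\{a,c\}$ again spans $\mathfrak{aff}(\R)$; in particular this algebra fails to be nilpotent, which rules out any confusion with $\mathfrak{h}_1$. Assembling the ten verifications yields the corollary. I expect no conceptual difficulty at any point; the main obstacle is purely organizational bookkeeping --- keeping the ten vertex-to-$e_j$ dictionaries straight and, for each, confirming that Algorithm~\ref{alg:creating:graph} generates neither a missing nor a spurious edge relative to Figure~\ref{fig:all:3:vertex:subgraphs} --- together with the four elementary change-of-basis arguments just described for the coincident Bianchi~III realisations.
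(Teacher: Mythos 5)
Your verification is correct and lands on the same identifications as the paper; the only real difference is the direction in which the nontrivial cases are handled. For Types V, VII and XI the paper starts from $\gl_{2,1}\oplus\gl_1$ (bracket $[e_1,e_2]=e_2$, $e_3$ central) and exhibits explicit bases --- $\{e_1,e_1+e_3,e_2\}$, $\{e_1,e_2+e_3,e_2\}$, $\{e_2,e_1-e_2,e_1+e_2+e_3\}$ --- to which Algorithm~\ref{alg:creating:graph} assigns the respective graphs, whereas you build the algebra from the graph's bracket table and then classify it by locating a central element and an $\mathfrak{aff}(\R)$ summand; via the lemma that a Lie-algebra isomorphism transports a basis to one generating the same graph, your change-of-basis dictionaries are essentially the inverses of the paper's basis choices, so the two routes carry the same content. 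One small inaccuracy: in Type~V the vanishing $[a,b]=0$ (and in Type~VII $[b,c]=0$) is forced by the absence of the corresponding edge in the prescribed graph, not by the Jacobi identity --- Jacobi there only forces $[c,[a,b]]=0$, which would also permit $[a,b]\propto c$ as indeed happens in Type~XI; since the graph already dictates the zero bracket, this slip does not affect the argument.
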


This statement does not claim completeness of identification or classification. For instance, the Lie algebra $\gl_{3,4}$, also known as Bianchi VI, determined by the nontrivial Lie brackets $[e_1,e_3]=e_1$ and $[e_2,e_3]=\alpha e_2$, with $-1\leq \alpha <1 $ and $\alpha\neq 0$, can be associated with a Type VI graph.

\begin{proof}
    First note that by Corollary~\ref{cor:proper:choice:dependent}, the graphs listed above are all proper graphs depicted in Figure~\ref{fig:all:3:vertex:subgraphs}. Most of these graphs can be furthermore directly obtained by applying Algorithm~\ref{alg:creating:graph} to the respective bases of the Lie algebras stated above. However, this is not the case for the graphs of Type V, VII, and XI. To demonstrate that these graphs can be associated with the Lie algebra $\gl_{2,1}\oplus\gl_1$, which is defined by the nontrivial Lie bracket $[e_1,e_2]=e_2$, while $e_3\in\mathcal{Z}(\gl_{2,1}\oplus\gl_1)$, we consider the following alternative bases: For Type V, one uses the basis $\{e_1,e_1+e_3,e_2\}$.  Applying Algorithm~\ref{alg:creating:graph} to this basis yields a graph of Type V. For Type VII, one uses the basis $\{e_1,e_2+e_3,e_2\}$, and for Type XI, the basis $\{e_2,e_1-e_2,e_1+e_2+e_3\}$, which both yield the respective graphs.
\end{proof}

\begin{lemma}\label{lem:subgraph:subalgebra:relation}
    Let $G(V,E)$ be a proper graph associated with a finite-dimensional Lie algebra $\g$. Let $G(\Tilde{V},\Tilde{E})$ be a subgraph of $G(V,E)$ with $|\Tilde{V}|\geq 3$. Then $G(\Tilde{V},\Tilde{E})$ is a proper subgraph associated with a Lie subalgebra $\mathfrak{h}\subseteq \g$ if $\Tilde{E}:=\{e\in E\,\mid\,\varpi_\mathrm{s}(e),\varpi_\mathrm{l}(e),\varpi_\mathrm{e}(e)\in\Tilde{V}\}$. 
\end{lemma}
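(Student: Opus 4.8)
The plan is to construct the subalgebra explicitly and then check that $G(\Tilde{V},\Tilde{E})$ is the graph that Algorithm~\ref{alg:creating:graph} produces from it. The natural candidate is $\mathfrak{h}:=\spn(\Tilde{V})\subseteq\g$, and there are three things to verify: (i) $\mathfrak{h}$ is closed under the Lie bracket of $\g$, hence a genuine Lie subalgebra; (ii) the set $\Tilde{V}$ is a (possibly overcomplete) basis of $\mathfrak{h}$ satisfying the bracket relations~\eqref{eqn:desired:basis:overcomplete}, so that $\mathfrak{h}$ is graph-admissible; and (iii) feeding this basis to Algorithm~\ref{alg:creating:graph} returns exactly $G(\Tilde{V},\Tilde{E})$, so that by Definition~\ref{def:assocaited:graph} the subgraph is associated with $\mathfrak{h}$ and, in particular, proper.

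Steps (ii) and (iii) are light once (i) is in hand. Since $\Tilde{V}\subseteq V$ and $V$ contains no two proportional elements (being an admissible basis of $\g$ in the sense of Definition~\ref{def:graph:admissible}), $\Tilde{V}$ inherits this property and spans $\mathfrak{h}$ by construction, so it is an admissible generating set of $\mathfrak{h}$. Because $G(V,E)$ is proper and associated with $\g$, each edge $(v_\mathrm{s},v_\mathrm{l},v_\mathrm{e})\in E$ encodes a genuine relation $[v_\mathrm{s},v_\mathrm{l}]=\kappa\,v_\mathrm{e}$ for some $\kappa\in\mathbb{F}^*$, and for a pair $v_\mathrm{s},v_\mathrm{l}\in V$ one has either $[v_\mathrm{s},v_\mathrm{l}]=0$ or $[v_\mathrm{s},v_\mathrm{l}]\propto v_\mathrm{e}$ with $v_\mathrm{e}$ the unique vertex fixed by the pair via Proposition~\ref{prop:conditions:for:edges}(c). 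Restricting to $v_\mathrm{s},v_\mathrm{l}\in\Tilde{V}$: granted (i), this $v_\mathrm{e}$ lies in $\Tilde{V}$ whenever the bracket does not vanish, which is precisely condition~\eqref{eqn:desired:basis:overcomplete} for $\mathfrak{h}$; and the edges Algorithm~\ref{alg:creating:graph} emits for $\mathfrak{h}$ are then exactly the triples $(v_\mathrm{s},v_\mathrm{l},v_\mathrm{e})$ and $(v_\mathrm{l},v_\mathrm{s},v_\mathrm{e})$ with all three entries in $\Tilde{V}$, i.e.\ exactly the set $\Tilde{E}=\{e\in E\mid\varpi_\mathrm{s}(e),\varpi_\mathrm{l}(e),\varpi_\mathrm{e}(e)\in\Tilde{V}\}$ (uniqueness of the target again via Proposition~\ref{prop:conditions:for:edges}(c)). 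This establishes (ii) and (iii).

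The main obstacle is step (i). Graph-admissibility of $\g$ through $V$ only yields $[v_j,v_k]=\alpha_{jk}v_{\delta(j,k)}$ with $v_{\delta(j,k)}\in V\cup\{0\}$ for $v_j,v_k\in\Tilde{V}$, and the bare definition of $\Tilde{E}$ does not force the target $v_{\delta(j,k)}$ to lie in $\Tilde{V}$ rather than in $V\setminus\Tilde{V}$; if it escapes, $\spn(\Tilde{V})$ is not a subalgebra. To close the gap one would try to invoke the Jacobi identity in $\g$ on triples drawn from $\Tilde{V}$, together with the three-vertex classification (Theorem~\ref{thm:neccersary:conditions:proper:graph} and Figure~\ref{fig:all:3:vertex:subgraphs}), to argue that such an escaping bracket is incompatible with the surviving edges among $\Tilde{V}$. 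I expect this to be the genuinely delicate point: the coexistence of an admissible-looking three-vertex pattern with improperness (as for Type VIII, see Example~\ref{exa:thm.Necceserary:not:sufficient:on1} and Corollary~\ref{cor:proper:choice:dependent}) indicates that the general claim is sensitive and that the clean statement may require an extra hypothesis. The safe route is to prove the lemma for those induced subgraphs $G(\Tilde{V},\Tilde{E})$ that retain \emph{every} edge of $E$ whose source and label lie in $\Tilde{V}$ --- equivalently, those for which $\spn(\Tilde{V})$ is bracket-closed --- in which case (i) is immediate and steps (ii)--(iii) go through verbatim; isolating exactly this hypothesis on $\Tilde{V}$ is the one part of the argument I do not expect to be routine.
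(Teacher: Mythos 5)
Your hesitation at step (i) is exactly the right place to stop, and you should not feel obliged to close that gap: the paper offers no argument here (its proof is the single sentence that the claim is evident from the construction), and the lemma in its unrestricted form is in fact false. Concretely, take $\g=\sl{2}{\R}\oplus\R$ with $\sl{2}{\R}=\spn\{a,c,d\}$, $[a,c]=c$, $[a,d]=-d$, $[c,d]=a$, a central element $z$, and the basis $\{a,b,c,d\}$ with $b:=z-d$. Then $[a,b]=d$, $[b,c]=a$, $[b,d]=0$, so every pairwise bracket is a scalar multiple of a basis element and the associated graph $G(V,E)$ is proper-minimal. For $\Tilde{V}=\{a,b,c\}$ the induced edge set $\Tilde{E}$ is exactly $\{(a,c,c),(c,a,c),(b,c,a),(c,b,a)\}$, i.e.\ a graph of Type VIII, which by Corollary~\ref{cor:proper:choice:dependent} cannot be associated with any finite-dimensional Lie algebra; moreover $\spn\{\Tilde{V}\}$ is not a subalgebra, since $[a,b]=d\notin\spn\{a,b,c\}$. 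This is consistent with the paper's own Theorem~\ref{thm:neccersary:conditions:proper:graph} and Example~\ref{exa:thm.Necceserary:not:sufficient:on1}, which admit Type VIII as an induced three-vertex subgraph of a proper-minimal graph while declaring it improper as a standalone graph, but it contradicts Lemma~\ref{lem:subgraph:subalgebra:relation} as stated. So the escaping-bracket scenario you worried about cannot be ruled out by the Jacobi identity plus the three-vertex classification; it genuinely occurs.

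Your ``safe route'' is therefore the correct statement, not a retreat: add the hypothesis that every edge $e\in E$ with $\varpi_\mathrm{s}(e),\varpi_\mathrm{l}(e)\in\Tilde{V}$ also satisfies $\varpi_\mathrm{e}(e)\in\Tilde{V}$. Under that hypothesis your steps (i)--(iii) are complete and constitute precisely the ``evident'' argument the paper presumably intended, and the only place the lemma is invoked, Corollary~\ref{cor:simple:subalgebra}, is unaffected: there the induced subgraph is of Type IX or X, hence contains an edge for every ordered pair of its three vertices, so by Proposition~\ref{prop:conditions:for:edges}(c) all brackets among them stay inside $\Tilde{V}$ and your closure hypothesis holds automatically. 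Two small refinements: the condition $|\Tilde{V}|\geq 3$ plays no role in any correct version of the statement; and your parenthetical equivalence between ``all induced edges stay inside'' and ``$\spn(\Tilde{V})$ is bracket-closed'' is only valid for minimal graphs --- in a redundant graph a bracket can land on a vertex outside $\Tilde{V}$ that is nonetheless a linear combination of $\Tilde{V}$, in which case relations~\eqref{eqn:desired:basis:overcomplete} fail for $\Tilde{V}$ even though the span is closed --- so the edge-based formulation is the one to keep.
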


\begin{proof}
    This result is evident from the construction of the associated graphs and requires no further argument.
\end{proof}

\begin{corollary}\label{cor:simple:subalgebra}
    Let $G(V,E)$ be a proper-minimal graph associated with a real finite-dimensional Lie algebra $\g$. If $G(V,E)$ contains a subgraph of Type IX or Type X, then $\g$ contains a simple three-dimensional Lie subalgebra, i.e., a Lie subalgebra isomorphic to $\sl{2}{\R}$ or $\mathfrak{su}(2)$, rendering $\g$ not solvable.
\end{corollary}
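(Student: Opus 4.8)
The plan is to use the Type IX / Type X subgraph to produce a three-dimensional Lie \emph{subalgebra} of $\g$, show it is perfect and hence non-solvable, and then pin it down via the classification of low-dimensional real Lie algebras.

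First I would exploit that $G(V,E)$ is proper-minimal, so its vertex set $V$ is a basis of $\g$ and any three distinct vertices $a,b,c\in V$ are linearly independent in $\g$. Given a Type IX or Type X subgraph on $\{a,b,c\}$, I would observe that both of these graph types have exactly six edges, which by Proposition~\ref{prop:criteria:number:of:vertices:and:edges} is the maximum number of edges a three-vertex subgraph induced by $\Tilde{E}:=\{e\in E\,\mid\,\varpi_\mathrm{s}(e),\varpi_\mathrm{l}(e),\varpi_\mathrm{e}(e)\in\{a,b,c\}\}$ may carry; consequently the induced subgraph $G(\{a,b,c\},\Tilde{E})$ \emph{coincides} with the given Type IX or Type X graph. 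Lemma~\ref{lem:subgraph:subalgebra:relation} then yields a Lie subalgebra $\mathfrak{h}:=\lie{\{a,b,c\}}\subseteq\g$, and since the Type IX/X bracket relations express every pairwise bracket of $a,b,c$ as a multiple of one of $a,b,c$, the subspace $\spn\{a,b,c\}$ is already closed under the bracket, so $\mathfrak{h}=\spn\{a,b,c\}$ is three-dimensional.

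Next I would show $\mathfrak{h}$ is not solvable. In both Type IX and Type X, each of $[a,b]$, $[b,c]$, $[c,a]$ is a \emph{nonzero} multiple of a basis vector, and collectively these three brackets span $\spn\{a,b,c\}$ (for Type X one uses the labelling from Figure~\ref{fig:all:3:vertex:subgraphs} together with the fact, recorded in Corollary~\ref{cor:proper:choice:dependent}, that the relevant structure constants are nonzero). Hence $\mathcal{D}^1\mathfrak{h}=[\mathfrak{h},\mathfrak{h}]=\spn\{a,b,c\}=\mathfrak{h}$, so $\mathcal{D}^k\mathfrak{h}=\mathfrak{h}\neq\{0\}$ for all $k\in\N_{\geq0}$ and $\mathfrak{h}$ is non-solvable. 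Since every subalgebra of a solvable Lie algebra is solvable, $\g$ cannot be solvable either.

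Finally, I would identify $\mathfrak{h}$ up to isomorphism. Being non-solvable, $\mathfrak{h}$ has a proper radical, so $\dim\operatorname{rad}(\mathfrak{h})\in\{0,1,2\}$; the values $1$ and $2$ are impossible, because then $\mathfrak{h}/\operatorname{rad}(\mathfrak{h})$ would have dimension at most two and hence be solvable, making $\mathfrak{h}$ --- an extension of a solvable algebra by the solvable ideal $\operatorname{rad}(\mathfrak{h})$ --- solvable, a contradiction. Thus $\operatorname{rad}(\mathfrak{h})=\{0\}$, i.e., $\mathfrak{h}$ is semisimple, and being three-dimensional (the minimal dimension of a simple Lie algebra) it is in fact simple. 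Invoking the classification of three-dimensional simple real Lie algebras --- or, more concretely, rescaling $\{a,b,c\}$ so that the three structure constants become $\pm1$ and reading off the sign of their product, consistent with the identifications in Corollary~\ref{cor:3:graphs:Bianchi:identification} --- one concludes $\mathfrak{h}\cong\sl{2}{\R}$ or $\mathfrak{h}\cong\mathfrak{su}(2)$. The only subtle points are making sure the relevant three-vertex subgraph is the \emph{induced} one so that Lemma~\ref{lem:subgraph:subalgebra:relation} applies (guaranteed by the edge-count maximality) and the final appeal to the low-dimensional classification; everything else is routine.
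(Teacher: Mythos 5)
Your proposal is correct and follows essentially the same route as the paper: obtain the three-dimensional subalgebra $\spn\{a,b,c\}$ from the Type IX/X subgraph via Lemma~\ref{lem:subgraph:subalgebra:relation} and minimality (linear independence of the vertices), observe that it is perfect and hence non-solvable, and conclude by the classification of three-dimensional real Lie algebras that it is $\sl{2}{\R}$ or $\mathfrak{su}(2)$. Your extra steps (checking the subgraph is the induced one via edge-count maximality, and deducing simplicity through the radical argument rather than citing the Bianchi classification of non-solvable three-dimensional algebras directly) merely spell out details the paper handles by direct citation.
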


\begin{proof}
    By Lemma~\ref{lem:subgraph:subalgebra:relation}, the Lie algebra associated with the graph $G(V,E)$ contains a subalgebra $\mathfrak{h}\subseteq\g$ that is associated with a subgraph of Type IX or X. Let this subgraph be denoted by $G(\Tilde{V},\Tilde{E})$. Since $G(V,E)$ is proper-minimal, the three vertices in $\Tilde{V}$ are linearly independent. Furthermore, one observes that $\Tilde{V}\subseteq [\Tilde{V},\Tilde{V}]$, which implies that the subalgebra $\mathfrak{h}:=\spn\{\Tilde{V}\}$ is not solvable. According to the Bianchi-classification, the only non-solvable real Lie algebras of dimension three are the simple Lie algebras $\sl{2}{\R}$, and $\mathfrak{su}(2)$ \cite{Bianchi:1903}. Therefore, $\g$ contains a simple three-dimensional Lie subalgebra, which makes it not solvable.
\end{proof}

\begin{proposition}\label{prop:allowed:trails}
    Let $G(V,E)$ be a labeled directed graph. For $G(V,E)$ to be a proper-minimal graph, it must satisfy the following condition: For every subset $\Tilde{V}=\{a,b,c\}\subseteq V$ of three distinct vertices, one of the following must hold:
    \begin{enumerate}[label = (\alph*)]
        \item There exist three directed trails $C_1=(a,e_{11},x_{1},e_{12},y)$, $C_2=(b,e_{21},x_2,e_{22},y)$, and $C_3=(c,e_{31},x_3,e_{32},y)$, where $x_1,x_2,x_3,y\in V$ and the labeling conditions are $\varpi_\mathrm{l}(e_{22})=\varpi_\mathrm{l}(e_{31})=a$, $\varpi_\mathrm{l}(e_{11})=\varpi_\mathrm{l}(e_{32})=b$, $\varpi_{\mathrm{l}}(e_{12})=\varpi_\mathrm{l}(e_{21})=c$.
        \item There exists a bijection $\phi:V\to V$ that acts on the set $E$ of edges as follows: $\phi:E\to \phi(E)$, with concrete action $$ \phi:e=(v_\mathrm{s},v_\mathrm{l},v_\mathrm{e})\mapsto \phi(e):=(\phi(v_\mathrm{s}),\phi(v_\mathrm{l}),\phi(v_\mathrm{e})),$$ and two directed trails $C_1=(\phi(a),\phi(e_{11}),\phi(x_{1}),\phi(e_{12}),\phi(y))$ and $C_2=(\phi(b),\phi(e_{21}),\phi(x_2),\phi(e_{22}),\phi(y))$, where $x_1,x_2,y\in V$, and the labeling conditions are: $\varpi_\mathrm{l}(\phi(e_{22}))=\phi(a)$, $\varpi_\mathrm{l}(\phi(e_{12}))=\phi(b)$, and $\varpi_\mathrm{l}(\phi(e_{11}))=\varpi_\mathrm{l}(\phi(e_{21}))=\phi(c)$, while there exists no directed trail $C_3=(\phi(a),\phi(e_{31}),\phi(x_3),\phi(e_{32}),\phi(y))$, where $x_3\in V$, $\varpi_\mathrm{l}(\phi(e_{31}))=\phi(b)$, and $\varpi_\mathrm{l}(\phi(e_{32}))=\phi(c)$.
        
        \item For every $v\in \Tilde{V}$ there exists no edges $e,e'\in E$ such that $\varpi_\mathrm{s}(e)=v$, $\varpi_\mathrm{l}(e)\in \Tilde{V}\setminus\{v\}$, $\varpi_\mathrm{s}(e')=\varpi_\mathrm{e}(e)$ and $\varpi_\mathrm{l}(e')\in \Tilde{V}\setminus\{v,\varpi_\mathrm{l}(e)\}$.
    \end{enumerate}
\end{proposition}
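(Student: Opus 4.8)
The plan is to reduce the whole statement to a single use of the Jacobi identity, in the spirit of the proof of Theorem~\ref{thm:neccersary:conditions:proper:graph}. Suppose $G(V,E)$ is proper-minimal; by Definition~\ref{def:assocaited:graph} it is associated with a finite-dimensional Lie algebra $\g$ for which $V$ is an honest basis, so distinct vertices are linearly independent and $(v_\mathrm{s},v_\mathrm{l},v_\mathrm{e})\in E$ exactly when $[v_\mathrm{s},v_\mathrm{l}]\propto v_\mathrm{e}$. Two preliminary facts are used throughout. (i) Because $\g$ is graph-admissible, for any vertices $p,q,r$ the iterated bracket $[[p,q],r]$ is either $0$ or proportional to a unique vertex (apply \eqref{eqn:desired:basis}, respectively \eqref{eqn:desired:basis:overcomplete}, twice). (ii) Combinatorial dictionary: for distinct vertices $p,q,r$ and a vertex $y$, one has $[[p,q],r]\propto y$ with $[[p,q],r]\neq0$ if and only if $G(V,E)$ contains a directed trail from $p$ to $y$ whose two consecutive edges carry labels $q$ and $r$ (namely $(p,(p,q,x),x,(x,r,y),y)$ with $[p,q]\propto x$ and $[x,r]\propto y$; the two edges are necessarily distinct because $q\neq r$, so ``walk'' and ``trail'' coincide here).

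Now fix a three-element subset $\tilde V=\{a,b,c\}\subseteq V$ and set $D_1:=[[a,b],c]$, $D_2:=[[b,c],a]$, $D_3:=[[c,a],b]$; the Jacobi identity for $a,b,c$ is precisely $D_1+D_2+D_3=0$. By (i) each $D_i$ is $0$ or a nonzero multiple of a vertex, and since distinct vertices are linearly independent this forces exactly one of three mutually exclusive scenarios: \textup{(0)} $D_1=D_2=D_3=0$; \textup{(2)} exactly two of the $D_i$ are nonzero, they are proportional to one common vertex, and the third vanishes; \textup{(3)} all three $D_i$ are nonzero and proportional to one common vertex. Indeed, a single nonzero $D_i$ cannot sum to zero; two nonzero $D_i$ proportional to distinct vertices $u\neq v$ would give $\alpha u+\beta v=0$ with $\alpha,\beta\in\mathbb{F}^*$, impossible by linear independence; and three nonzero $D_i$ not all proportional to the same vertex would produce a nontrivial linear dependence among at most three distinct basis vectors, again impossible.

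It remains to match the scenarios with the alternatives (a), (b), (c). Scenario~\textup{(0)} gives (c): by (ii), edges $e,e'$ of the form forbidden in (c) would witness $[[v,w],u]\neq0$ for some ordering $(v,w,u)$ of $(a,b,c)$, but every such iterated bracket equals $\pm D_i$ for the $D_i$ whose last entry is $u$, contradicting $D_i=0$. Scenario~\textup{(3)} gives (a): build the trails $C_1,C_2,C_3$ straight from the edge pairs $\{(a,b,x_1),(x_1,c,y)\}$, $\{(b,c,x_2),(x_2,a,y)\}$, $\{(c,a,x_3),(x_3,b,y)\}$ witnessing $D_1,D_2,D_3\propto y$; the labelling conditions required in (a) are exactly the labels listed. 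Scenario~\textup{(2)} gives (b) after a relabelling: let $z\in\{a,b,c\}$ be the last entry of the unique vanishing $D_i$ ($z=c$ for $D_1$, $z=a$ for $D_2$, $z=b$ for $D_3$), pick a bijection $\phi\colon V\to V$ that restricts to a bijection of $\{a,b,c\}$ with $\phi(c)=z$ and is the identity on $V\setminus\{a,b,c\}$, and set $y:=\phi^{-1}(v^\ast)$ where $v^\ast$ is the common vertex of the two nonzero $D_i$. Writing $A=\phi(a)$, $B=\phi(b)$, $C=\phi(c)=z$, the three iterated brackets appearing in (b) are $[[A,C],B]$ and $[[B,C],A]$ (whose last entries $B,A$ are the two elements of $\{a,b,c\}\setminus\{z\}$, hence these equal $\pm$ the two nonzero $D_i$, so are nonzero and $\propto v^\ast=\phi(y)$) together with $[[A,B],C]=\pm D_{(\text{vanishing})}=0$; by (ii) the first two produce the required trails $C_1,C_2$ from $\phi(a),\phi(b)$ to $\phi(y)$ with the prescribed labels $\phi(c),\phi(b)$ and $\phi(c),\phi(a)$, while the vanishing of $[[A,B],C]$ forbids any trail $C_3$ from $\phi(a)$ to $\phi(y)$ with labels $\phi(b),\phi(c)$.

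\textbf{Main obstacle.} The algebraic heart — Jacobi plus linear independence of a basis — is immediate; the real work is the bookkeeping in Scenario~\textup{(2)}: correctly identifying which of $\{a,b,c\}$ must be moved into the ``third-label'' slot by $\phi$ and checking that the choice $y=\phi^{-1}(v^\ast)$ makes every label and both endpoints in (b) line up, including the non-existence of the forbidden $C_3$ even when $[A,B]\neq0$. One must also stay alert to fact (i): it is graph-admissibility, not merely the Lie-algebra axioms, that guarantees each $[[p,q],r]$ is $0$ or proportional to a single vertex, and this is precisely what allows the Jacobi sum to be analyzed vertex by vertex.
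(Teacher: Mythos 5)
Your proof is correct and follows essentially the same route as the paper: apply the Jacobi identity to each triple $\{a,b,c\}$, use graph-admissibility and the linear independence of the vertex basis to reduce to the three cases (all nested brackets zero, two nonzero and proportional to a common vertex with the third vanishing, all three nonzero and proportional to a common vertex), and translate these via the edge--bracket dictionary into conditions (c), (b), (a) respectively. Your treatment is in fact somewhat more explicit than the paper's, which handles case (ii) by a bare ``without loss of generality'' where you spell out the bijection $\phi$ and verify the label bookkeeping.
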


\begin{figure}[H]
    \centering
    \includegraphics[width=0.85\linewidth]{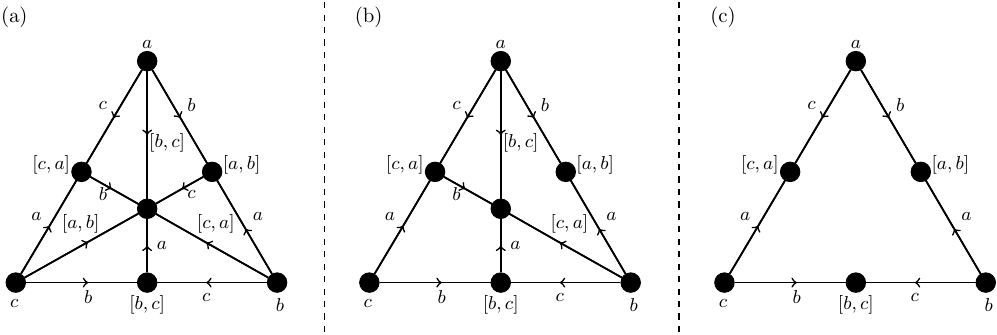}
    \caption{Visualization of the three structural conditions described in Proposition~\ref{prop:allowed:trails}. Note that this figure presents an abstraction and does not necessarily depict the actual graphs, since the Lie brackets involving the vertices $\tilde{V}=\{a,b,c\}$ may be proportional to the vertices themselves, e.g., $[c,a]\propto c$.}
    \label{fig:all:two:step:walks:sketch}
\end{figure}

\begin{proof}
    Every set $\{a,b,c\}\subseteq V$ of distinct vertices must satisfy the Jacobi identity:
    \begin{align*}
        [a,[b,c]]+[b,[c,a]]+[c,[a,b]]=0.
    \end{align*}
    Since all vertices satisfy relations \eqref{eqn:desired:basis} and are linear independent elements, there are three distinct possibilities for how the Jacobi idenity can ve satisfied:
    \begin{enumerate}[label =(\roman*)]
        \item  All three nested brackets are proportional to the same nonzero element: $[a,[b,c]]\propto[b,[c,a]]\propto[c,[a,b]]$.
        \item Two nested brackets are proportional to the same element, while the third one vanishes. Without loss of generality: $[a,[b,c]]\propto[b,[c,a]]$, while $[c,[a,b]]=0$.
        \item All nested brackets vanish: $[a,[b,c]]=0=[b,[c,a]]=[c,[a,b]]$.
    \end{enumerate}
    These cases are the only algebraic configurations that satisfy the Jacobi identity under the given assumption. It is immediate to observe that
    \begin{align*}
        \text{Case }(i)&\Leftrightarrow \text{Condition }(a),\;&\;\text{Case }(ii)&\Leftrightarrow \text{Condition }(b),\;&\;\text{Case }(iii)&\Leftrightarrow \text{Condition }(c),
    \end{align*}
    as verified by the application of Algorithm~\ref{alg:creating:graph}.
\end{proof}

We can also simplify the procedure to check every triple of vertices by the following simple observation:
\begin{proposition}\label{prop:conditions:proper:choice:independent:proper:minimal}
    Let $G(V,E)$ be a labeled directed graph with vertices $V=\{v_j\}_{j\in\mathcal{M}}$ and edges $E=\{e_p\}_{p\in\mathcal{P}}$ that satisfies the properties (a) - (c) from Proposition~\ref{prop:conditions:for:edges}. 
    Define the map $\delta:(\mathcal{M}\cup\{0\})\times(\mathcal{M}\cup\{0\})\to\mathcal{M}\cup\{0\}$ with concrete action
    \begin{align}
        \delta:(j,k)\mapsto\delta(j,k):=\left\{\begin{matrix}
            \ell&\text{ if } \exists \text{ edge $e\in E$ with }e=(v_j,v_k,v_\ell)\\
            0&\text{ otherwise}
        \end{matrix}
        \right..
    \end{align}
    Then: (i) $G(V,E)$ is a proper choice-independent graph if  $\delta(j,\delta(k,\ell))=0$ for all $j,k,\ell\in\mathcal{M}$; (ii) $G(V,E)$ is a proper-minimal graph only if $\left|\left\{\delta(j,\delta(k,\ell))\,\middle\mid\,j,k,\ell\in\mathcal{M}\,\wedge j\neq k\neq \ell\neq j\right\}\setminus\{0\}\right|\leq 1$. 
\end{proposition}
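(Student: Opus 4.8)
The plan is to dispatch the two implications separately; both come down to evaluating the Jacobi identity on triples of basis vectors and doing linear algebra in the chosen basis.

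For part (i) I would produce an explicit Lie algebra realizing $G(V,E)$. Let $W$ be the $|\mathcal{M}|$-dimensional $\mathbb{F}$-vector space with basis $\{v_j\}_{j\in\mathcal{M}}$, set $v_0:=0$, and choose any array $(\alpha_{jk})_{j,k\in\mathcal{M}}$ that is antisymmetric and satisfies $\alpha_{jk}\neq 0$ exactly when $\delta(j,k)\neq 0$; this is consistent because property (a) of Proposition~\ref{prop:conditions:for:edges} gives $\delta(j,j)=0$, property (b) makes the condition $\delta(j,k)\neq 0$ symmetric in $j,k$, and property (c) makes $\delta$ single-valued. Define $[v_j,v_k]:=\alpha_{jk}\,v_{\delta(j,k)}$ and extend bilinearly. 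Antisymmetry of this bracket (including $[x,x]=0$) follows from antisymmetry of $(\alpha_{jk})$ together with symmetry of $\delta$ and $\delta(j,j)=0$. For the Jacobi identity it suffices, by multilinearity, to test triples $(v_i,v_j,v_k)$: if two of $i,j,k$ coincide the identity holds trivially by antisymmetry of the inner bracket, and if $i,j,k$ are pairwise distinct each nested term equals a scalar times $v_{\delta(i,\delta(j,k))}$, which the hypothesis $\delta(i,\delta(j,k))=0$ annihilates, so the identity reads $0=0$. Hence $W$ is a Lie algebra, and running Algorithm~\ref{alg:creating:graph} on the basis $\{v_j\}$ returns exactly the edge set $\{(v_j,v_k,v_{\delta(j,k)}):\delta(j,k)\neq 0\}=E$; so $G(V,E)$ is proper. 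Since $(\alpha_{jk})$ was arbitrary subject only to its sign pattern and support — equivalently, any nonzero $\kappa$ may be attached to any edge (its reverse receiving $-\kappa$) — the construction works verbatim for every choice, so $G(V,E)$ is choice-independent.

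For part (ii) I would argue contrapositively from proper-minimality. Assume $G(V,E)$ is associated, as a minimal graph, with a finite-dimensional Lie algebra $\g$, so that $V=\{v_j\}$ is a genuine (linearly independent) basis and $[v_j,v_k]=\alpha_{jk}v_{\delta(j,k)}$ with $\alpha_{jk}\neq0\iff\delta(j,k)\neq 0$. Fix three pairwise distinct indices $j,k,\ell$; the Jacobi identity for $(v_j,v_k,v_\ell)$ becomes a sum of three terms, the first being $\alpha_{k\ell}\,\alpha_{j,\delta(k,\ell)}\,v_{\delta(j,\delta(k,\ell))}$ and the other two its cyclic images. Each summand is either $0$ (when the relevant nested target index is $0$) or a nonzero scalar times a single basis vector. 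Because $\{v_m\}$ is linearly independent, a vanishing sum of at most three terms of this shape forces all the nonzero terms to involve one and the same basis vector; i.e. $\{\delta(j,\delta(k,\ell)),\delta(k,\delta(\ell,j)),\delta(\ell,\delta(j,k))\}\setminus\{0\}$ has at most one element. Letting $j,k,\ell$ and their cyclic permutations range over all pairwise-distinct triples yields the claimed bound, so proper-minimality implies it.

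The genuinely load-bearing points, rather than obstacles, are these. In (i) the subtlety is that the hypothesis $\delta(i,\delta(k,\ell))=0$ controls only the ``generic'' nested brackets; the degenerate triples (two coinciding vertices) must be cleared separately using antisymmetry, after which everything is bookkeeping against Algorithm~\ref{alg:creating:graph}. In (ii) the crux is the rigidity step — passing from ``a sum of three single-basis-vector terms vanishes'' to ``their nonzero targets coincide'' — which relies on linear independence of the entire vertex set; this is exactly the extra content of proper-minimality over mere properness, since for a redundant graph the vertices may be linearly dependent and the collapse can genuinely fail. No structure theory is needed beyond the Jacobi identity and elementary linear algebra in the given basis; the case analysis of three-vertex subgraphs in Theorem~\ref{thm:neccersary:conditions:proper:graph} and Proposition~\ref{prop:allowed:trails} can be invoked as a cross-check that the surviving configurations are precisely those with at most one common nested target.
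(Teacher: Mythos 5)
Your proposal is correct and follows essentially the same route as the paper: for (i) you build the Lie algebra on $\spn\{V\}$ with an arbitrary antisymmetric coefficient array supported on the edges and note that every nested Jacobi term lands on $v_0=0$ (hence properness for every choice of constants), and for (ii) you argue contrapositively, using linear independence of the vertex set to force the nonzero nested targets of a Jacobi triple to coincide, exactly as the paper does. The only looseness --- passing from the per-triple conclusion to the set ranging over all pairwise distinct triples as literally written in the statement --- appears in the same form in the paper's own ``without loss of generality'' step, so it is not a deviation from the paper's argument.
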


\begin{proof}
    Suppose $G(V,E)$ is a labeled directed graph satisfying the conditions stated above, i.e., properties (a) - (c) from Proposition~\ref{prop:conditions:for:edges}. We begin by observing the following: The map $\delta$ is well-defined by property (c); it is symmetric by property (b); and satisfies $\delta(j,j)=0$ by property (a) from Proposition~\ref{prop:conditions:for:edges}.
    Now consider the vector space $\spn\{V\}$ equipped with the bilinear operation $[\,\cdot\,,\,\cdot\,]$ defined on the basis elements via $[v_j,v_k]=\alpha_{jk}v_{\delta(j,k)}$, where $v_0:=0$ and $\boldsymbol{\alpha}$ is chosen to be antisymmetric. One can for example define $\alpha_{jk}:=(j-k)(1-\delta_{\delta(j,k),0})$, ensuring that $\boldsymbol{\alpha}^\mathrm{Tp}=-\boldsymbol{\alpha}$. By construction, the bracket $[\,\cdot\,,\,\cdot\,]$ is antisymmetric. Thus, one is left verifying that $[\,\cdot\,,\,\cdot\,]$ satisfies the Jacobi identity. Observe that if $\delta(j,\delta(k,\ell))=0$ for all $j,k,\ell\in\mathcal{M}$, then
    \begin{align*}
        [v_j,[v_k,v_\ell]]+[v_k,[v_\ell,v_j]]+[v_\ell,[v_j,v_k]]&=\alpha_{j,\delta(k,\ell)}\alpha_{k\ell}v_{\delta(j,\delta(k,\ell))}+\alpha_{k,\delta(\ell,j)}\alpha_{\ell j}v_{\delta(k,\delta(\ell,j))}+\alpha_{\ell,\delta(j,k))}\alpha_{jk}v_{\delta(\ell,\delta(j,k))}=0,
    \end{align*}
    since each term is proportional to a vertex of the form $v_{\delta(j,\delta(k,\ell))}\equiv v_0=0$. This proves claim (i). Note this conclusion does not require the assumption that $V$ is a linearly independent set.

    Let us proceed to prove claim (ii). Suppose the contrapostive, i.e., that $\left|\left\{\delta(j,\delta(k,\ell))\,\middle\mid\,j,k,\ell\in\mathcal{M}\,\wedge j\neq k\neq \ell\neq j\right\}\setminus\{0\}\right|\geq 2$. 
    Without loss of generality, assume that $\delta(j,\delta(k,\ell))\neq \delta(k,\delta(\ell,j))$, where $\delta(j,\delta(k,\ell)), \delta(k,\delta(\ell,j))\in\mathcal{M}$, and $j\neq k\neq\ell\neq j$. Then, the terms $\alpha_{j,\delta(k,\ell)}\alpha_{k\ell}v_{\delta(j,\delta(k,\ell))}$ and $\alpha_{k,\delta(\ell,j)}\alpha_{\ell j}v_{\delta(k,\delta(\ell,j))}$ are linearly independent elements in $\spn\{V\}$, since $\operatorname{char}(\mathbb{F})=0$. If $V$ is taken to be a basis, then the third term $\alpha_{\ell,\delta(j,k))}\alpha_{jk}v_{\delta(\ell,\delta(j,k))}$ cannot cancel both of the previous terms. This shows that violating the condition $\left|\left\{\delta(j,\delta(k,\ell))\,\middle\mid\,j,k,\ell\in\mathcal{M}\,\wedge j\neq k\neq \ell\neq j\right\}\setminus\{0\}\right|\leq 1$ implies that $G(V,E)$ cannot be associated with a minimal-graph-admissible Lie algebra as a minimal graph.
\end{proof}

Here, one needs to mention that Propositions~\ref{prop:allowed:trails} and~\ref{prop:conditions:proper:choice:independent:proper:minimal} only provide weak criteria for checking the validity of graphs that are based on the Jacobi identity, since they are only applicable for minimal graphs and only provide necessary conditions. However, they are not sufficient to distinguish between choice-dependent and choice-independent graphs, which is essential in some cases, especially when this classification is also affected by the characteristic of the field over which the algebra is defined. For further remarks on this topic see Appendix~\ref{app:field:dependence:choice:dependent:graphs}.

\section{Structural properties of finite-dimensional graph-admissible Lie algebras via associated graphs}\label{sec:structural:properties}
We are now equipped with all the necessary tools to identify structural properties of a finite-dimensional graph-admissible Lie algebra $\g$ using only the graphs $G(V,E)$ associated with $\g$. Among the most common structural classifications of Lie algebras are those based on the notions of abelian, nilpotent, solvable, simple, semisimple, or reductive Lie algebras. These classifications rely typically on algebraic constructions such as the lower central series, the derived series, or the identification of ideals or central elements. In what follows, we explore how these structural features can be inferred directly from the graph representation, without requiring explicit computation of the Lie bracket beyond its graphical encoding.

For simplicity, we proceed by assuming that the characteristic of the field $\mathbb{F}$ is $\operatorname{char}(\mathbb{F})=0$, and continue thereby with the same assumption introduced in the previous section.

\subsection{Identifying central elements}

\begin{lemma}\label{lem:subalgebra:of:center}
    Let $\g$ be a finite-dimensional graph-admissible Lie algebra associated with the labeled directed graph $G(V,E)$. Let $W\subseteq V$ be the set of all vertices with no outgoing edges. Then $\spn\{W\}$ is a subalgebra of the center $\mathcal{Z}(\g)$ of $\g$.
\end{lemma}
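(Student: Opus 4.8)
The plan is to show that every vertex in $W$ is a central element, and then conclude that $\spn\{W\}$ is a subalgebra of $\mathcal{Z}(\g)$ since the center is itself a subalgebra (in fact an abelian ideal).

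First I would recall the construction of the associated graph from Algorithm~\ref{alg:creating:graph}: an edge $e=(v_\mathrm{s},v_\mathrm{l},v_\mathrm{e})$ is placed in $E$ precisely when $[v_\mathrm{s},v_\mathrm{l}]\propto v_\mathrm{e}$ with a nonzero proportionality constant, and by Proposition~\ref{prop:conditions:for:edges}(b) this happens if and only if the ``reversed-label'' edge $(v_\mathrm{l},v_\mathrm{s},v_\mathrm{e})$ is also present. So fix a vertex $w\in W$, i.e.\ a vertex with no outgoing edges. I claim $[w,v]=0$ for every vertex $v\in V$. Indeed, since $\g$ is graph-admissible with $V$ a (possibly overcomplete) spanning set satisfying \eqref{eqn:desired:basis:overcomplete}, the bracket $[w,v]$ is either zero or proportional to some vertex $v_\ell\in V$; in the latter case Algorithm~\ref{alg:creating:graph} would have produced the edge $(w,v,v_\ell)\in E$, which is an outgoing edge from $w$, contradicting $w\in W$. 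Hence $[w,v]=0$ for all $v\in V$.

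Next I would extend this from the spanning set $V$ to all of $\g$: since $V$ spans $\g$ and the Lie bracket is bilinear, $[w,x]=0$ for every $x\in\g$, so $w\in\mathcal{Z}(\g)$ by the definition of the center. As this holds for every $w\in W$, we get $W\subseteq\mathcal{Z}(\g)$, and therefore $\spn\{W\}\subseteq\mathcal{Z}(\g)$ because $\mathcal{Z}(\g)$ is a linear subspace. Finally, $\mathcal{Z}(\g)$ is a subalgebra of $\g$ — indeed an abelian ideal, since $[\mathcal{Z}(\g),\mathcal{Z}(\g)]=\{0\}$ — so $\spn\{W\}$, being a subspace of the abelian algebra $\mathcal{Z}(\g)$, is itself a (necessarily abelian) subalgebra of $\mathcal{Z}(\g)$.

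There is essentially no hard step here; the only point that deserves care is the passage from ``$[w,v]=0$ for all vertices $v$'' to ``$w$ is central,'' which relies on $V$ being a spanning set rather than just a subset — this is exactly guaranteed by graph-admissibility (Definition~\ref{def:graph:admissible}) and the construction in Algorithm~\ref{alg:creating:graph}. One should also note that the statement does not assert $\spn\{W\} = \mathcal{Z}(\g)$: the center may contain central elements that are not themselves vertices, so in general only the inclusion holds, which is all that is claimed.
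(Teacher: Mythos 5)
Your proof is correct and follows essentially the same route as the paper's: no outgoing edges from $w$ forces $[w,v]=0$ for every vertex $v\in V$ (since a nonzero bracket would produce an outgoing edge via Algorithm~\ref{alg:creating:graph}), and bilinearity together with the fact that $V$ spans $\g$ then gives $w\in\mathcal{Z}(\g)$, hence $\spn\{W\}\subseteq\mathcal{Z}(\g)$. Your closing remark that $\spn\{W\}$ is automatically a (abelian) subalgebra of the center is a small but harmless addition beyond what the paper spells out.
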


\begin{proof}
    Let $w\in W\subseteq V$ be a vertex with no outgoing edges. Then, by the construction of Algorithm~\ref{alg:creating:graph}, there exists no element $v\in V$ such that $[v,w]\neq 0$. Since the set $V$ spans the Lie algebra $\g$, any element $g\in\g$ can be written as a linear combination: $g=\sum_{v\in V}c_v v$ with some coefficients $c_v\in\mathbb{F}$. Therefore, for any such $g$, one has $[w,g]=0$ for all $g\in \g$, implying that $w\in\mathcal{Z}(\g)$. This argument holds for every vertex $w\in W$ that has no outgoing edges, and all their linear combinations. Thus $\spn\{W\}\subseteq \mathcal{Z}(\g)$.
\end{proof}

It is important to note that Lemma~\ref{lem:subalgebra:of:center} does not guarantee that the set of all vertices with no outgoing edges spans the whole center $\mathcal{Z}(\g)$ of the Lie algebra $\g$. That is, while such vertices always lie in the center, they may contain additional elements not represented directly by single vertices without outgoing edges. To illustrated this, consider the real Lie algebra visualized in Figure~\ref{fig:no:center:visible}.

\begin{figure}[H]
    \centering
    \includegraphics[width=0.6\linewidth]{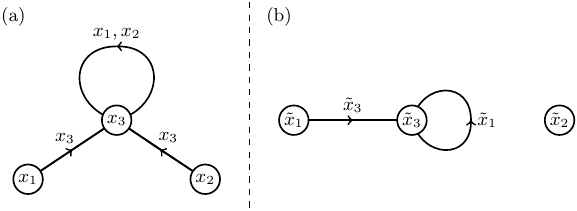}
    \caption{Graph associated to the real Lie algebra $\g$, spanned by the basis elements $x_1$, $x_2$, $x_3$ satisfying $[x_1,x_2]=0$, $[x_2,x_3]=x_3$, and $[x_3,x_1]= x_3$, which is known as the Lie algebra $\mathfrak{aff}(\R)\oplus \R$ \cite{Popovych:2003,Andrada:2005}. In panel (a), the graph $G(V,E)$ constructed from this basis has no vertices without outgoing edges, i.e., the set $W$ from Lemma~\ref{lem:subalgebra:of:center} is empty. Nonetheless, the center of the Lie algebra is clearly nontrivial, as  $\mathcal{Z}(\g)=\lie{\{x_1+x_2\}}$. In panel (b), we constructed the graph using the modified basis $\mathcal{B}=\{\tilde{x}_1:=x_1,\tilde{x}_3:=x_3,\tilde{x}_2:=x_1+x_2\}$. Here, one has clearly $\spn\{W\}=\mathcal{Z}(\g)$.}
    \label{fig:no:center:visible}
\end{figure}
We can, however, remedy this problem with the following result:

\begin{lemma}\label{lem:central:elements:extended:graphs}
    Let $\g$ be a finite-dimensional graph-admissible Lie algebra with non-trivial center. Then $\g$ admits a labeled directed graph $G(V,E)$ such that the set of all vertices with no outgoing edges spans the center $\mathcal{Z}(\g)$. 
\end{lemma}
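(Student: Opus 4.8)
The plan is to take any graph-admissible basis of $\g$ and enlarge it by a basis of the center; the vertices coming from the central elements will then automatically have no outgoing edges, and Lemma~\ref{lem:subalgebra:of:center} supplies the reverse inclusion.

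First, since $\g$ is graph-admissible, I would fix a (possibly overcomplete) basis $\mathcal{B}=\{x_j\}_{j\in\mathcal{M}}$ satisfying the relations \eqref{eqn:desired:basis:overcomplete} (a minimal basis satisfying \eqref{eqn:desired:basis} is the special case $\abs{\mathcal{M}}=\dim(\g)$). Since $\mathcal{Z}(\g)\neq\{0\}$, choose a basis $\{z_1,\dots,z_r\}$ of $\mathcal{Z}(\g)$ with $r\geq 1$. Build a new set $\mathcal{B}'$ by starting from $\mathcal{B}$ and adjoining, one at a time, each $z_i$ that is not already proportional to an element currently in the set; note that if some $z_i$ is proportional to an $x_k\in\mathcal{B}$, then $x_k$ is itself central (being proportional to a central element), so nothing is lost by keeping $x_k$ in its place. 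The resulting $\mathcal{B}'$ is finite, spans $\g$, consists of non-zero elements, and contains no two elements that are proportional to one another.

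Second, I would verify that $\mathcal{B}'$ still satisfies the graph-admissibility relations \eqref{eqn:desired:basis:overcomplete}, so that it is a (generally redundant) graph-admissible basis. For a pair of elements both lying in $\mathcal{B}$, the bracket is, by the defining property of $\mathcal{B}$, proportional to an element of $\mathcal{B}\subseteq\mathcal{B}'$. For any pair involving at least one of the adjoined central elements $z_i$, the bracket vanishes identically, since $z_i$ commutes with all of $\g$; one then sets the corresponding structure constant to zero and $\delta$ to $0$ exactly as prescribed in Definition~\ref{def:graph:admissible}. Hence Algorithm~\ref{alg:creating:graph} applied to $\mathcal{B}'$ produces a labeled directed graph $G(V,E)$ that is associated with $\g$ in the sense of Definition~\ref{def:assocaited:graph}.

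Third, I would identify the set $W\subseteq V$ of vertices with no outgoing edges. By the construction in Algorithm~\ref{alg:creating:graph}, a vertex $v\in V$ acquires an outgoing edge precisely when $[v,v']\neq 0$ for some $v'\in V$; since $V$ spans $\g$, this means $v$ has no outgoing edge if and only if $v\in\mathcal{Z}(\g)$. In particular each vertex $z_i$ (or its proportional substitute from $\mathcal{B}$) lies in $W$, so $\spn\{W\}\supseteq\spn\{z_1,\dots,z_r\}=\mathcal{Z}(\g)$, while the reverse inclusion $\spn\{W\}\subseteq\mathcal{Z}(\g)$ is exactly Lemma~\ref{lem:subalgebra:of:center}. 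Therefore $\spn\{W\}=\mathcal{Z}(\g)$, which is the claim. The only genuinely delicate point is the bookkeeping in the second step — ensuring that adjoining the $z_i$ neither violates the no-proportional-pairs requirement of \eqref{eqn:desired:basis:overcomplete} nor introduces a bracket that fails to be proportional to a member of the enlarged set — but both issues are resolved by the two observations that central elements bracket to zero with everything and that a $z_i$ already proportional to an existing basis vector may simply be dropped. No input beyond Lemma~\ref{lem:subalgebra:of:center} and the definitions is required; I would also remark (cf.\ Figure~\ref{fig:no:center:visible}) that a minimal basis need not have this property, which is precisely why the redundant construction is used here.
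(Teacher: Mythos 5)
Your proposal is correct and follows essentially the same route as the paper's proof: adjoin a basis of $\mathcal{Z}(\g)$ (omitting any central element already proportional to an existing vertex) to a graph-admissible basis, note that brackets with central elements vanish so the extended set remains a valid vertex set, and conclude via Lemma~\ref{lem:subalgebra:of:center} that the vertices without outgoing edges span the center. Your write-up is in fact somewhat more careful than the paper's, spelling out the no-proportional-pairs bookkeeping and the observation that a vertex proportional to a central element is itself central.
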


\begin{proof}
    Let $G(V,E)$ be a graph associated with a finite-dimensional graph-admissible Lie algebra $\g$, and suppose $\mathcal{Z}(\g)\neq\{0\}$. Let $V=\{v_j\}_{j=1}^m$ denote the set of vertices spanning $\g$. Let $C=\{c_j\}_{j=1}^k$ be a basis for the center $\mathcal{Z}(\g)$. If any $c_j\in C$ is not proportional to an existing vertex $v\in V$, one can simply add $c_j$ to $V$. Since each $c_j\in\mathcal{Z}(\g)$, it follows that $[v,c_j]=0$ for all $v\in V$, and hence the modified set $V$ is a valid set of vertices, according to Definition~\ref{def:graph:admissible}, and hence the subset of vertices in $V$ with no outgoing edges spans the center $\mathcal{Z}(\g)$.
\end{proof}

\begin{lemma}\label{lem:abelian:criterion}
    Let $\g$ be a finite-dimensional graph-admissible Lie algebra associated with a labeled directed graph $G(V,E)$. Then $\g$ is abelian if and only if the set of edges $E$ is empty.
\end{lemma}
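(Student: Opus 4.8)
The plan is to argue both implications directly from the construction in Algorithm~\ref{alg:creating:graph}, using the fact that the vertex set $V$ is, by Definition~\ref{def:assocaited:graph}, a (possibly overcomplete) spanning set $\mathcal{B}$ of $\g$ satisfying relations~\eqref{eqn:desired:basis} or~\eqref{eqn:desired:basis:overcomplete}.

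First I would establish the forward direction. Assume $\g$ is abelian. Then for every pair of vertices $x_j,x_k\in V=\mathcal{B}$ one has $[x_j,x_k]=0$, so in particular $\alpha_{jk}=0$ for all $j,k$. Inspecting the conditional branch in Algorithm~\ref{alg:creating:graph}, an edge $(x_j,x_k,x_{\delta(j,k)})$ is appended to $E$ only when $[x_j,x_k]=\alpha_{jk}x_{\delta(j,k)}$ with $\alpha_{jk}\neq0$; since this never occurs, the algorithm returns $E=\emptyset$.

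For the converse, suppose $E=\emptyset$ for a graph $G(V,E)$ associated with $\g$. By the contrapositive of the same branch of Algorithm~\ref{alg:creating:graph}: if some $[x_j,x_k]\neq0$, then (by graph-admissibility) $[x_j,x_k]=\alpha_{jk}x_{\delta(j,k)}$ with $\alpha_{jk}\neq0$, and the algorithm would have added the corresponding edge, contradicting $E=\emptyset$. Hence $[x_j,x_k]=0$ for all $x_j,x_k\in V$. Since $V=\mathcal{B}$ spans $\g$, any $x,y\in\g$ can be written as finite linear combinations of elements of $\mathcal{B}$, and bilinearity of the Lie bracket gives $[x,y]=0$; therefore $\mathcal{Z}(\g)=\g$, i.e.\ $\g$ is abelian.

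The only point requiring a moment's care — rather than a genuine obstacle — is the last step of the converse: one must invoke that the vanishing of all pairwise brackets of a spanning set forces the bracket to vanish identically, which is immediate from bilinearity but relies on $V$ being a spanning set rather than merely some subset of $\g$; this is guaranteed by Definition~\ref{def:assocaited:graph}. No use of the Jacobi identity or of minimality is needed.
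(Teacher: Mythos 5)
Your proof is correct and follows essentially the same route as the paper's: the forward direction reads off directly from the edge-adding condition in Algorithm~\ref{alg:creating:graph}, and the converse uses that $E=\emptyset$ forces all pairwise brackets of the spanning set $V$ to vanish, whence bilinearity gives $[\g,\g]=\{0\}$. Your closing remark about needing $V$ to span $\g$ (and not needing the Jacobi identity or minimality) is exactly the point the paper also relies on, so nothing is missing.
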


\begin{proof}
    Let $G(V,E)$ be a graph associated with a finite-dimensional Lie algebra $\g$. Suppose $E$ is empty. Then there exists no edge $e\in E$, and hence no triple $(v_\mathrm{s},v_\mathrm{l},v_\mathrm{e})\in V^{\times3}$ such that $[v_\mathrm{s},v_\mathrm{l}]\propto v_\mathrm{e}$. Thus $[v_1,v_2]=0$ for all $v_1,v_2\in V$. Since, $\spn\{V\}=\g$, it follows by linearity $[\spn\{V\},\spn\{V\}]=[\g,\g]\subseteq\{0\}$, implying that $\g$ is abelian. Conversely, suppose $\g$ is abelian. Then any (possibly overcomplete) basis $\mathcal{B}=\{\Tilde{x}_j\}_{j\in\mathcal{M}}$ satisfies $[\Tilde{x}_j,\Tilde{x}_k]=0$ for all $j,k\in\mathcal{M}$. By the construction of  Algorithm~\ref{alg:creating:graph}, the graph associated with $\mathcal{B}$ contains no edges, i.e., $E=\emptyset$.
\end{proof}

\begin{lemma}\label{lem:direct:sum:if:unconnected}
    Let $\g$ be a finite-dimensional graph-admissible Lie algebra. If $\g$ admits a labeled directed graph $G(V,E)$ that splits into two unconnected subgraphs, then $\g$ is the direct sum of two mutually abelian subalgebras.
\end{lemma}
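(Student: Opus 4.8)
The plan is to realise the two subalgebras as the spans of the two vertex classes. Write the splitting of $G(V,E)$ as $V=V_1\sqcup V_2$, so that by Definition~\ref{def:unconnected:subgraphs} we have $V_1\cap V_2=\emptyset$, there is no edge of $E$ joining $V_1$ to $V_2$ in either direction, and $V_1\cup V_2=V$ still spans $\g$. Put $\mathfrak{h}_i:=\spn\{V_i\}$ for $i\in\{1,2\}$. I will show (1) each $\mathfrak{h}_i$ is a subalgebra of $\g$, (2) $[\mathfrak{h}_1,\mathfrak{h}_2]=\{0\}$, and (3) $\g=\mathfrak{h}_1\oplus\mathfrak{h}_2$; together these statements say exactly that $\g$ is the direct sum of two mutually abelian subalgebras.

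The core of the argument is that the Lie bracket never connects $V_1$ to $V_2$. Fix $v_1\in V_1$ and $v_2\in V_2$. By graph-admissibility (Definition~\ref{def:graph:admissible}), $[v_1,v_2]$ is either $0$ or of the form $\alpha\,v_\ell$ with $\alpha\in\mathbb{F}^*$ and $v_\ell\in V=V_1\cup V_2$; suppose the latter, for contradiction, so that $e:=(v_1,v_2,v_\ell)\in E$. If $v_\ell\in V_2$, then $\varpi_\mathrm{s}(e)=v_1\in V_1$ and $\varpi_\mathrm{e}(e)=v_\ell\in V_2$, which is a forbidden cross-edge. If $v_\ell\in V_1$, then Proposition~\ref{prop:conditions:for:edges}(b) forces the reversed edge $e':=(v_2,v_1,v_\ell)\in E$, for which $\varpi_\mathrm{s}(e')=v_2\in V_2$ and $\varpi_\mathrm{e}(e')=v_\ell\in V_1$ — again forbidden. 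Hence $[v_1,v_2]=0$, and bilinearity gives $[\mathfrak{h}_1,\mathfrak{h}_2]=\{0\}$. An analogous, simpler argument establishes that each $\mathfrak{h}_i$ is a subalgebra: if $v,v'\in V_i$ and $[v,v']=\alpha\,v_\ell\neq 0$, the edge $(v,v',v_\ell)$ has $\varpi_\mathrm{s}=v\in V_i$, so its endpoint $v_\ell$ must lie in $V_i$ as well (otherwise it is a cross-edge), whence $[\mathfrak{h}_i,\mathfrak{h}_i]\subseteq\mathfrak{h}_i$. Combining this with $[\mathfrak{h}_1,\mathfrak{h}_2]=\{0\}$ shows that each $\mathfrak{h}_i$ is in fact an ideal, and since $V_1\cup V_2$ spans $\g$ we obtain $\g=\mathfrak{h}_1+\mathfrak{h}_2$.

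It remains to upgrade the sum to a direct sum, and this is the step I expect to be the main obstacle. When $G(V,E)$ is a minimal graph, $V$ is a basis of $\g$, so the partition $V=V_1\sqcup V_2$ gives $\mathfrak{h}_1\cap\mathfrak{h}_2=\{0\}$ at once and $\g=\mathfrak{h}_1\oplus\mathfrak{h}_2$ with $[\mathfrak{h}_1,\mathfrak{h}_2]=\{0\}$, as desired. For a redundant graph the vertex set is overcomplete and $\mathfrak{z}:=\mathfrak{h}_1\cap\mathfrak{h}_2$ may be nonzero; here one first notes that $\mathfrak{z}$ is a central ideal, since any $z\in\mathfrak{z}$ lies in $\mathfrak{h}_1$ and so commutes with $\mathfrak{h}_2$, and lies in $\mathfrak{h}_2$ and so commutes with $\mathfrak{h}_1$, whence $[z,\g]=[z,\mathfrak{h}_1+\mathfrak{h}_2]=\{0\}$. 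One then has to produce a genuine decomposition $\g=\mathfrak{a}_1\oplus\mathfrak{a}_2$ into commuting subalgebras, e.g.\ by assigning $\mathfrak{z}$ to one summand and exhibiting a subalgebra complement of $\mathfrak{z}$ inside the other $\mathfrak{h}_i$; this uses the fact — traceable to the non-proportionality of distinct vertices required in Definition~\ref{def:graph:admissible} — that $\mathfrak{z}$ cannot be simultaneously entangled with the derived structure of both $\mathfrak{h}_1$ and $\mathfrak{h}_2$. Making this last point precise is the crux of the redundant case; alternatively one can simply restrict to minimal graphs, for which the proof is already complete.
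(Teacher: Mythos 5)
Your core argument is correct and is precisely what the paper intends by its one-line proof (``follows immediately from the definitions''): the absence of cross-edges, together with Proposition~\ref{prop:conditions:for:edges}(b) to dispose of the case where only the \emph{label} of an edge would cross between the two vertex classes, forces $[\spn\{V_1\},\spn\{V_2\}]=\{0\}$ and $[\spn\{V_i\},\spn\{V_i\}]\subseteq\spn\{V_i\}$, so the two spans are mutually commuting subalgebras (in fact ideals) whose sum is $\g$; for a minimal graph the partition of a basis immediately makes the sum direct. Up to that point your write-up is careful and complete, and it spells out exactly the reasoning the paper leaves implicit.

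The only shortfall is the one you flag yourself: for a redundant graph the two spans can genuinely overlap — e.g.\ the abelian algebra $\R^2$ with vertex set $\{e_1,e_2,e_1+e_2\}$ and the (edgeless, hence unconnected) splitting $V_1=\{e_1,e_2\}$, $V_2=\{e_1+e_2\}$ has $\spn\{V_1\}\cap\spn\{V_2\}\neq\{0\}$ — so the direct-sum claim does require an additional argument there, and your final paragraph only gestures at it (``$\mathfrak{z}$ cannot be simultaneously entangled with the derived structure of both halves'') without proving it. As written, the proposal therefore does not cover the lemma in the full generality in which it is stated (an arbitrary associated graph, possibly redundant). To be fair, the paper's own proof does not address this either: it silently treats $\g=\spn\{V_1\}\oplus\spn\{V_2\}$ as immediate, which is exactly the minimal-graph reading for which your proof is already complete. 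If you want the redundant case as well, you must either make the subalgebra-complement-of-$\mathfrak{z}$ construction precise (this is not automatic, since central ideals need not admit subalgebra complements in general) or state explicitly that the decomposition asserted is the one given by the two vertex classes, i.e.\ restrict to the situation where their spans intersect trivially.
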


\begin{proof}
    The claim follows immediately from the definitions of the associated graphs.
\end{proof}

We conclude with the following observation. In Lemma~\ref{lem:central:elements:extended:graphs} we established that for any finite-dimensional graph-admissible Lie algebra, one can construct a graph that faithfully represents the center $\mathcal{Z}(\g)$. That is the set of vertices with no outgoing edges spans the center. However, the provided construction does generally not yield a minimal graph since it involves adding new vertices to represent central elements not already present in the original basis. This raises a natural question: \emph{Is it possible to construct a minimal graph that faithfully represents the center of the Lie algebra if it is minimal-graph-admissible}. The following conjecture addresses this question.

\begin{conjecture}
    Let $\g$ be a finite-dimensional minimal-graph-admissible Lie algebra with non-trivial center. Then $\g$ admits a minimal graph $G(V,E)$, where the set of all vertices with no outgoing edges spans the center $\mathcal{Z}(\g)$. 
\end{conjecture}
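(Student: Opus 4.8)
The plan is to reduce the statement to a basis-theoretic claim and then to attack it inductively along the filtration $\{0\}\subseteq\mathcal{D}^1\g\subseteq\g$. First, by Lemma~\ref{lem:subalgebra:of:center} the set $W$ of vertices with no outgoing edges always satisfies $\spn\{W\}\subseteq\mathcal{Z}(\g)$, while conversely any basis vector lying in $\mathcal{Z}(\g)$ has no outgoing edges, since its bracket with every vertex vanishes. It therefore suffices to produce a basis $\{y_1,\dots,y_n\}$ of $\g$ satisfying the relations~\eqref{eqn:desired:basis} whose first $k:=\dim\mathcal{Z}(\g)$ members form a basis of $\mathcal{Z}(\g)$: the associated minimal graph then has $\{y_1,\dots,y_k\}\subseteq W$, so $\mathcal{Z}(\g)\subseteq\spn\{W\}\subseteq\mathcal{Z}(\g)$, as required.

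Two structural observations drive the construction. First, in \emph{any} minimal-graph-admissible basis $\{x_j\}$ the derived algebra is already basis-aligned, $\mathcal{D}^1\g=\spn\{x_{\delta(i,j)}\mid\alpha_{ij}\neq0\}$, so every bracket output lies in the span of the ``image'' vertices. Second, replacing a non-image vertex by its sum with an element of $\mathcal{D}^1\g$, or with another non-image vertex, changes the decomposition of a bracket but not the basis vertex it is proportional to. This suggests a two-stage change of basis. Stage one treats the part of the center transverse to $\mathcal{D}^1\g$: fixing a complement $\mathcal{Z}_{\mathrm{out}}$ of $\mathcal{Z}(\g)\cap\mathcal{D}^1\g$ in $\mathcal{Z}(\g)$, a unitriangular change of the non-image vertices that leaves $\mathcal{D}^1\g$ and the brackets landing in it untouched turns a subset of the new non-image vertices into a basis of $\mathcal{Z}_{\mathrm{out}}$, and by the second observation the relations~\eqref{eqn:desired:basis} are preserved. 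Stage two must choose, inside $\mathcal{D}^1\g$, a basis that simultaneously contains a basis of $\mathcal{Z}(\g)\cap\mathcal{D}^1\g$ and contains, up to scalars, every nonzero bracket of the remaining non-central vertices after the stage-one adjustment, so that~\eqref{eqn:desired:basis} still holds overall.

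The main obstacle is stage two: the monomiality condition~\eqref{eqn:desired:basis} is rigid, and there is no a priori reason that a single basis of $\mathcal{D}^1\g$ can diagonalize all bracket outputs \emph{and} display a basis of $\mathcal{Z}(\g)\cap\mathcal{D}^1\g$ at the same time. I would try to force this by induction on $\dim\g$: pick a one-dimensional central ideal $\spn\{c\}$, chosen outside $\mathcal{D}^1\g$ when $\mathcal{Z}(\g)\not\subseteq\mathcal{D}^1\g$ (in which case stage one alone completes the proof, since $\g/\spn\{c\}$ inherits a center of dimension $k-1$) and inside $\mathcal{Z}(\g)\cap\mathcal{D}^1\g$ otherwise; check that $\g/\spn\{c\}$ is again minimal-graph-admissible; apply the inductive hypothesis to obtain a center-exposing basis downstairs; and lift it to $\g$. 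The delicate point of the lift is to re-introduce $c$ as a basis vector without spoiling the monomiality of the one or two brackets whose output equals $c$ modulo $\spn\{c\}$, and this is precisely where a counterexample would surface if the conjecture were false. As supporting steps one should first settle the reductive minimal-graph-admissible case (immediate from $\g=\mathcal{D}^1\g\oplus\mathcal{Z}(\g)$), the non-abelian nilpotent case (where $\mathcal{C}^{k_*-1}\g\subseteq\mathcal{Z}(\g)\cap\mathcal{D}^1\g$ gives a ready-made basis-alignable central ideal), and the cases $\dim\g\leq4$ via the low-dimensional classifications already used in this paper; in each of these the ``hidden center'' of Figure~\ref{fig:no:center:visible} proves curable, which is the main reason to expect the general statement.
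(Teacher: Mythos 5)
There is nothing in the paper to compare your argument against: the statement you are proving is left as an open conjecture. The authors only prove the weaker Lemma~\ref{lem:central:elements:extended:graphs}, whose construction adds new central vertices and therefore generally destroys minimality — which is precisely why the minimal-graph version is conjectural. Your proposal must therefore stand on its own, and as written it is a strategy sketch rather than a proof: you yourself flag stage two and the inductive lift as unresolved, and those are exactly the points where the conjecture's content lies.

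Beyond being incomplete, the one step you do assert as settled is not. Your ``second observation'' — that replacing a non-image vertex by its sum with an element of $\mathcal{D}^1\g$, or with another non-image vertex, ``changes the decomposition of a bracket but not the basis vertex it is proportional to'' — is false in general, so stage one is unjustified even in the case $\mathcal{Z}(\g)\not\subseteq\mathcal{D}^1\g$. If $x_i,x_j$ are non-image vertices with $[x_i,x_k]=\alpha_{ik}x_a$ and $[x_j,x_k]=\alpha_{jk}x_b$, $\alpha_{ik},\alpha_{jk}\neq0$ and $x_a\neq x_b$, then $[x_i+x_j,x_k]=\alpha_{ik}x_a+\alpha_{jk}x_b$ has two-dimensional support and the relations~\eqref{eqn:desired:basis} fail for the new basis; the same happens when adding $d\in\mathcal{D}^1\g$ whose brackets with $x_k$ land on a different vertex than $[x_i,x_k]$ does. (It works in the $\mathfrak{aff}(\R)\oplus\R$ example of Figure~\ref{fig:no:center:visible} only because all relevant brackets point at the single vertex $x_3$.) The inductive route has further unproved claims: $\g/\spn\{c\}$ need not be minimal-graph-admissible — the image of a minimal-graph-admissible basis under the quotient map is linearly dependent, and the paper's homomorphism lemma does not give you a minimal basis downstairs — and when $c\in\mathcal{Z}(\g)\cap\mathcal{D}^1\g$ the center of the quotient can strictly exceed $\dim\mathcal{Z}(\g)-1$, so the inductive hypothesis does not control what you need upstairs. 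The lift ``re-introducing $c$ without spoiling monomiality'' is, as you say, exactly where a counterexample would surface; until that step is carried out the conjecture remains open.
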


\subsection{Computing the derived series}\label{sec:structural:properties:derived:series}
Let $\g$ be a finite-dimensional minimal-graph-admissible Lie algebra with a basis $\mathcal{B}=\{x_j\}_{j=1}^n$ that satisfies the Lie bracket relations \eqref{eqn:desired:basis}. Let $G(V,E)$ denote the labeled directed graph associated with $\g$, constructed via Algorithm~\ref{alg:creating:graph} using the basis $\mathcal{B}$. The first derived algebra $\mathcal{D}^1\g=[\g,\g]$ is spanned by the set $\{\alpha_{jk} x_{\delta(j,k)}\}_{j,k=1}^n$, which is clearly linearly dependent, as for, for example, the zero element belongs to the set, since $0=\alpha_{jj}x_{\delta(j,j)}$ for all $j\in\mathcal{N}$. To refine this set, we proceed as follows: We remove all elements $\alpha_{jk} x_{\delta(j,k)}$ for which $\alpha_{jk}=0$ or respectively $\delta(j,k)=0$. We remove, furthermore, duplicate elements. That is, we remove all but one element for which $\delta(j,k)=\delta(p,q)$ and $\alpha_{jk}\neq 0 \neq \alpha_{pq}$. In particular, due to the antisymmetry of $\boldsymbol{\alpha}$ and the symmetry of $\delta$, the elements $\alpha_{jk} x_{\delta(j,k)}$ and $\alpha_{kj} x_{\delta(k,j)}$ are proportional to each other. To formalize the recursive computation of the derived series, we define the following sequence of index sets:
\begin{align}
    \mathcal{N}_{\mathrm{D}}^{(\ell+1)}:=\delta\left(\left\{(j,k)\in\mathcal{N}_{\mathrm{D}}^{(\ell)}\times\mathcal{N}_{\mathrm{D}}^{(\ell)}\mid \;\alpha_{jk}\neq0\right\}\right)\quad\text{for all }\ell\in\N_{\geq0},\quad\text{where}\quad\mathcal{N}_{\mathrm{D}}^{(0)}:=\mathcal{N}.\label{eqn:recusrice:derived:index:set}
\end{align}
This recursive definition can be analogously applied not only to minimal-graph-admissible Lie algebras but also for redundant-graph-admissible ones and bases of redundant graphs. The key observation is that the following result implies that $\mathcal{D}^1\g=\spn\{x_j\mid \,j\in \mathcal{N}_{\mathrm{D}}^{(1)}\}$, which allows to compute the entire derived series by this recursive procedure via $\mathcal{D}^\ell\g=\spn\{x_j\mid \,j\in \mathcal{N}_{\mathrm{D}}^{(\ell)}\}$.

\begin{lemma}\label{lem:if:g:graph:admissible:so:derived:algebras}
    Let $\g$ be a finite-dimensional graph-admissible Lie algebra. Then, the derived algebra $\mathcal{D}^\ell\g$ is also graph-admissible for all $\ell\in\N_{\geq0}$. Furthermore, if $\g$ is minimal-graph-admissible, then $\mathcal{D}^\ell \g$ is also minimal-graph-admissible for all $\ell\in\N_{\geq0}$.
\end{lemma}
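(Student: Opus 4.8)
The plan is to exhibit, for each $\ell\in\N_{\geq0}$, a spanning set of $\mathcal{D}^\ell\g$ that is a \emph{sub-collection} of the very basis witnessing the graph-admissibility of $\g$; all defining relations of Definition~\ref{def:graph:admissible} are then inherited automatically. So fix a (possibly overcomplete) basis $\mathcal{B}=\{x_j\}_{j\in\mathcal{M}}$ of $\g$ satisfying \eqref{eqn:desired:basis:overcomplete}, together with the antisymmetric matrix $\boldsymbol{\alpha}$ and the symmetric map $\delta$ (in the minimal case $\mathcal{M}=\mathcal{N}$ and $\mathcal{B}$ even satisfies \eqref{eqn:desired:basis}), and recall the index sets $\mathcal{N}_{\mathrm{D}}^{(\ell)}$ of \eqref{eqn:recusrice:derived:index:set} (with $\mathcal{N}$ replaced by $\mathcal{M}$ in the redundant case). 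The argument then reduces to two short inductions.

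First I would establish the monotonicity $\mathcal{N}_{\mathrm{D}}^{(\ell+1)}\subseteq\mathcal{N}_{\mathrm{D}}^{(\ell)}$ for all $\ell\geq0$. The base case holds because, under the conventions of Definition~\ref{def:graph:admissible} (where $\delta(j,k)=0$ exactly when $\alpha_{jk}=0$), every index $\delta(j,k)$ with $\alpha_{jk}\neq0$ already lies in $\mathcal{M}=\mathcal{N}_{\mathrm{D}}^{(0)}$. For the step, the hypothesis $\mathcal{N}_{\mathrm{D}}^{(\ell+1)}\subseteq\mathcal{N}_{\mathrm{D}}^{(\ell)}$ forces the pair set $\{(j,k)\in\mathcal{N}_{\mathrm{D}}^{(\ell+1)}\times\mathcal{N}_{\mathrm{D}}^{(\ell+1)}\mid\alpha_{jk}\neq0\}$ to be contained in the one with superscript $\ell$, and applying $\delta$ preserves this inclusion, yielding $\mathcal{N}_{\mathrm{D}}^{(\ell+2)}\subseteq\mathcal{N}_{\mathrm{D}}^{(\ell+1)}$.

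Next I would prove $\mathcal{D}^\ell\g=\spn\{x_j\mid j\in\mathcal{N}_{\mathrm{D}}^{(\ell)}\}$, again by induction on $\ell$; the case $\ell=0$ is the definition. For the step, bilinearity and the inductive hypothesis give that $\mathcal{D}^{\ell+1}\g=[\mathcal{D}^\ell\g,\mathcal{D}^\ell\g]$ is spanned by the vectors $[x_j,x_k]=\alpha_{jk}x_{\delta(j,k)}$ with $j,k\in\mathcal{N}_{\mathrm{D}}^{(\ell)}$; discarding the vanishing terms (those with $\alpha_{jk}=0$) and irrelevant nonzero scalars leaves exactly $\spn\{x_i\mid i\in\mathcal{N}_{\mathrm{D}}^{(\ell+1)}\}$ by the definition \eqref{eqn:recusrice:derived:index:set}. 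Combining the two inductions: the collection $\{x_j\mid j\in\mathcal{N}_{\mathrm{D}}^{(\ell)}\}$ consists of non-zero, pairwise non-proportional vectors (inherited from $\mathcal{B}$), it spans $\mathcal{D}^\ell\g$, and for $j,k\in\mathcal{N}_{\mathrm{D}}^{(\ell)}$ one has $[x_j,x_k]=\alpha_{jk}x_{\delta(j,k)}$ with $\delta(j,k)\in\mathcal{N}_{\mathrm{D}}^{(\ell+1)}\subseteq\mathcal{N}_{\mathrm{D}}^{(\ell)}$ (and $\delta(j,k)=0$, $x_0=0$, when $\alpha_{jk}=0$). Hence this collection satisfies \eqref{eqn:desired:basis:overcomplete} with the data $(\boldsymbol{\alpha},\delta)$ restricted to $\mathcal{N}_{\mathrm{D}}^{(\ell)}$, so $\mathcal{D}^\ell\g$ is graph-admissible (the degenerate case $\mathcal{N}_{\mathrm{D}}^{(\ell)}=\emptyset$, i.e.\ $\mathcal{D}^\ell\g=\{0\}$, is vacuously covered). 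If moreover $\g$ is minimal-graph-admissible, then $\mathcal{B}=\{x_j\}_{j\in\mathcal{N}}$ is linearly independent, so its sub-collection indexed by $\mathcal{N}_{\mathrm{D}}^{(\ell)}$ is a genuine basis of $\mathcal{D}^\ell\g$ obeying \eqref{eqn:desired:basis}, whence $\mathcal{D}^\ell\g$ is minimal-graph-admissible.

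The only delicate point, and the reason the index-set bookkeeping cannot be bypassed, is that in the redundant case one cannot argue ``a basis vector lying in the span of a sub-basis must be one of its members''; the monotonicity $\mathcal{N}_{\mathrm{D}}^{(\ell+1)}\subseteq\mathcal{N}_{\mathrm{D}}^{(\ell)}$ is exactly what guarantees that the bracket of two retained vectors is again proportional to a \emph{retained} vector, which is the clause of Definition~\ref{def:graph:admissible} at risk of failing.
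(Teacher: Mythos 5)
Your proposal is correct and follows essentially the same route as the paper's proof: both establish the monotonicity of the recursive index sets, identify $\mathcal{D}^\ell\g$ with the span of the basis vectors indexed by $\mathcal{N}_{\mathrm{D}}^{(\ell)}$ (resp. $\mathcal{M}_{\mathrm{D}}^{(\ell)}$) by induction, and then observe that the restricted data $(\boldsymbol{\alpha},\delta)$ satisfy \eqref{eqn:desired:basis:overcomplete} on the retained indices, with linear independence giving the minimal case. The only cosmetic difference is that you compress the span identification where the paper spells out the two inclusions explicitly; the substance is identical.
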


\begin{proof}
    Let $\{\Tilde{x}_j\}_{j\in\mathcal{M}}$ be an overcomplete basis of the finite-dimensional graph-admissible Lie algebra $\g$ that satisfies the Lie bracket relations \eqref{eqn:desired:basis:overcomplete}. We define recursively the sets:
    \begin{align}
        \mathcal{M}_{\mathrm{D}}^{(\ell+1)}:=\delta\left(\left\{(j,k)\in\mathcal{M}_{\mathrm{D}}^{(\ell)}\times\mathcal{M}_{\mathrm{D}}^{(\ell)}\mid \;\Tilde{\alpha}_{jk}\neq0\right\}\right)\quad\text{for all }\ell\in\N_{\geq0},\quad\text{where}\quad\mathcal{M}_{\mathrm{D}}^{(0)}:=\mathcal{M}.\label{eqn:recusrice:derived:index:set:overcomplete}
    \end{align}
    By virtue of Definition~\ref{def:graph:admissible} of $\delta$ it is immediate that $\mathcal{M}_\mathrm{D}^{(\ell+1)}\subseteq \mathcal{M}_\mathrm{D}^{(\ell)}$ for all $\ell\in\N_{\geq0}$. Furthermore, one notes the following:
    \begin{itemize}
        \item Suppose $z\in\mathcal{D}^1\g$. Then there exist pairs of elements $x_j,y_j\in\g$, indexed by a finite non-empty index set $\mathcal{J}\subseteq\N_{\geq1}$, such that $z=\sum_{j\in\mathcal{J}} [x_j,y_j]$ holds\footnote{To realize that, recall that $z\in [\g,\g]$ if and only if $z\in\spn\{[x,y]\,\mid\,x,y\in\g\}$. Thus, any $z\in\mathcal{D}\g$ can be written as $z=\sum_{j,k}c_{jk}[x_j,x_k]$, where $j,k$ denote indices from suitable index sets and $c_{jk}$ are appropriate coefficients. However, due to the bilinearity of the Lie bracket, one can write $z=\sum_j[x_j,y_j]$ with $y_j=\sum_k c_{jk}x_k$, establishing the validity of the initial claim.}. Since $\{\Tilde{x}_j\}_{j\in\mathcal{M}}$ spans $\g$, there exist coefficients $\mu_{jp}\in\mathbb{F}$ and $\nu_{jq}\in\mathbb{F}$, with $p,q\in\mathcal{M}$, such that $x_j=\sum_{p\in\mathcal{M}}\mu_{jp}\Tilde{x}_p$ and $y_j=\sum_{q\in\mathcal{M}}\nu_{jq}\Tilde{x}_q$. Then: 
        \begin{align*}
            z=\sum_{p,q\in\mathcal{M}}\sum_{j\in\mathcal{J}} \mu_{pj}\nu_{qj}[\Tilde{x}_p,\Tilde{x}_q]=\sum_{p,q\in\mathcal{M}}\sum_{j\in\mathcal{J}} \mu_{pj}\nu_{qj}\Tilde{\alpha}_{pq}\Tilde{x}_{\delta(p,q)}
        \end{align*}
        Thus $z\in \spn\{\Tilde{x}_j\,\mid\, j\in\mathcal{M}_\mathrm{D}^{(1)}\}$, by the definition of $\mathcal{M}_\mathrm{D}^{(1)}$, and the convention that $\delta(p,q)=0$ whenever $\Tilde{\alpha}_{pq}=0$. 
        \item Suppose $z\in \spn\{\Tilde{x}_j\,\mid\, j\in\mathcal{M}_\mathrm{D}^{(1)}\}$. Then there exists coefficients $\lambda_j\in\mathbb{F}$ with $j\in\mathcal{M}_\mathrm{D}^{(1)}$ such that $z=\sum_{j\in\mathcal{M}_\mathrm{D}^{(1)}}\lambda_j\Tilde{x}_j$. By the definition of $\mathcal{M}_\mathrm{D}^{(1)}$, there exists indices $p_j,q_j\in\mathcal{M}$ such that $[\Tilde{x}_{p_j},\Tilde{x}_{q_j}]=\Tilde{\alpha}_{p_jq_j}\Tilde{x}_j$ with $\Tilde{\alpha}_{p_jq_j}\neq 0$. We now choose a particular pair of $(\hat{p}_j,\hat{q}_j)\in\mathcal{M}^2$ that satisfies $[\Tilde{x}_{\hat{p}_j},\Tilde{x}_{\hat{q}_j}]=\Tilde{\alpha}_{\hat{p}_j\hat{q}_j}\Tilde{x}_j$ with $\Tilde{\alpha}_{\hat{p}_j\hat{q}_j}\neq 0$. Then:
        \begin{align*}
            z=\sum_{j\in\mathcal{M}_\mathrm{D}^{(1)}}\lambda_j\frac{1}{\Tilde{\alpha}_{\hat{p}_j\hat{q}_j}}[\Tilde{x}_{\hat{p}_j},\Tilde{x}_{\hat{q}_j}]=\sum_{j\in\mathcal{M}_\mathrm{D}^{(1)}}\left[\frac{1}{\Tilde{\alpha}_{\hat{p}_j\hat{q}_j}}\Tilde{x}_{\hat{p}_j},\Tilde{x}_{\hat{q}_j}\right],
        \end{align*}
        which implies that $z\in \mathcal{D}^1\g$. 
    \end{itemize}
    Hence, we conclude that $\mathcal{D}^1\g=\spn\{\Tilde{x}_j\,\mid\, j\in\mathcal{M}_\mathrm{D}^{(1)}\}$. Given the recursive definition $\mathcal{D}^{\ell+1}\g=[\mathcal{D}^\ell\g,\mathcal{D}^\ell\g]$ of the elements of the derived series, it follows immediately by means of induction that $\mathcal{D}^\ell\g=\spn\{\Tilde{x}_j\,\mid\, j\in\mathcal{M}_\mathrm{D}^{(\ell)}\}$. Consequently, one observes that $\{\delta(j,k)\,\mid\, j,k\in\mathcal{M}_\mathrm{D}^{(\ell)}\}=\mathcal{M}_\mathrm{D}^{(\ell+1)}\cup\{0\}$, since $\g$ is assumed to be graph-admissible, and $\{\Tilde{x}_j\}_{j\in\mathcal{M}}$ is an appropriate overcomplete basis satisfying the bracket relations:
    \begin{align*}
        [\Tilde{x}_j,\Tilde{x}_k]&=\Tilde{\alpha}_{jk}\Tilde{x}_{\delta(j,k)}\quad\text{ for all }j,k\in\mathcal{M}_\mathrm{D}^{(\ell)},
    \end{align*}
    where $\delta(j,k)=0$ if $\Tilde{\alpha}_{jk}=0$ and otherwise $\delta(j,k)\in \mathcal{M}_\mathrm{D}^{(\ell)}$ for all $j,k\in\mathcal{M}_\mathrm{D}^{(\ell)}$. This confirms that every derived algebra is graph-admissible. 

    The second claim follows analogously by replacing $\mathcal{M}_\mathrm{D}^{(\ell)}$ with $\mathcal{N}_\mathrm{D}^{(\ell)}$ and noticing that the sets $\{x_j\,\mid\, j\in\mathcal{N}_\mathrm{D}^{(\ell)}\}$ are linearly independent as they are subsets of the basis $\{x_j\}_{j\in\mathcal{N}}$.
\end{proof}

Let us now denote the graph associated with the $\ell$-th derived algebra $\mathcal{D}^\ell\g$ by $\mathcal{D}^\ell G(V,E)\equiv G(\mathcal{D}^\ell V,\mathcal{D}^\ell E)$, and investigate the relationship between $\mathcal{D}^{\ell-1} G(V,E)$ and $\mathcal{D}^{\ell} G(V,E)$. Constructing a graph associated with the $\ell$-th derived algebra is achieved by applying Algorithm~\ref{alg:creating:graph} to the set $\{x_j\,\mid\,j\in \mathcal{N}_\mathrm{D}^{(\ell)}\}$, where $\{x_j\}_{j\in\mathcal{N}}$ is basis of $\g$ satisfying the bracket relations \eqref{eqn:desired:basis}. A detailed implementation of this procedure is provided in Appendix~\ref{app:remaining:algorithms}, specifically in Algorithm~\ref{alg:generating:generating:the:graph:derived:unaltered}. 

In analyzing the graph $\mathcal{D}^\ell G(V,E)$ associated with the derived algebra $\mathcal{D}^\ell\g$, it becomes necessary to understand the conditions for an index $j$ to belong to $\mathcal{N}^{(\ell-1)}_{\mathrm{D}}$ but not to $\mathcal{N}^{(\ell)}_{\mathrm{D}}$. Consider therefore the definition of the sets $\mathcal{N}^{(\ell)}_{\mathrm{D}}$, provided in \eqref{eqn:recusrice:derived:index:set}. If an index $j$ belongs to $\mathcal{N}_{\mathrm{D}}^{(\ell-1)}$ but not to $\mathcal{N}_{\mathrm{D}}^{(\ell)}$, this implies that there exists no pair $(p,q)\in\mathcal{N}_{\mathrm{D}}^{(\ell-1)}\times\mathcal{N}_{\mathrm{D}}^{(\ell-1)}$ such that $[x_p,x_q]=\kappa x_j$ with $\kappa\in\mathbb{F}^*$ and $j\in\mathcal{N}$. In other words, the element $x_j$ does not appear as a nontrivial Lie bracket of any two basis elements from the previous derived set. Translated into graph-theoretic terms, this condition implies that all vertices with no incoming edges must be removed when constructing the graph $\mathcal{D}^{\ell+1}G(V,E)$ from $\mathcal{D}^\ell G(V,E)$. Due to their importance, and to differentiate them from abelian elements, we introduce the following notation:
\begin{definition}
    A vertex $e$ with no incoming edges but at least one outgoing edge is called \emph{loose end}.
\end{definition}
To maintain the validity of the graph structure, it is mandatory to remove every edge in $\mathcal{D}^\ell G(V,E)$ that is labeled by one of the eliminated vertices. All other vertices and edges persist since their corresponding elements are evidently part of $\mathcal{D}^{\ell+1}\g$. This ``pruning'' procedure is formalized into Algorithm~\ref{alg:generating:generating:the:graph:derived:altered}.

\begin{algorithm}
        \DontPrintSemicolon
        \KwData{A (possibly overcomplete) basis $\mathcal{B}=\{\Tilde{x}_j\}_{j\in\mathcal{M}}$ of the finite-dimensional graph-admissible Lie algebra $\g$, satisfying the Lie bracket relations \eqref{eqn:desired:basis:overcomplete}.}
        \KwResult{A sequence of labeled directed graphs $\mathcal{D}^\ell G(V,E)$ associated with the derived algebras $\mathcal{D}^\ell\g$ of the Lie algebra $\g$ for all $\ell\geq0$, constructed with respect to the given basis $\mathcal{B}$.}
        \SetKwData{Left}{left}\SetKwData{This}{this}\SetKwData{Up}{up}
        \SetKwFunction{Union}{Union}\SetKwFunction{FindCompress}{FindCompress}
        \SetKwInOut{Input}{input}\SetKwInOut{Output}{output}
    
        \BlankLine
        $\mathcal{D}$ $\leftarrow$ $\emptyset$
        \tcc*[h]{Initialize the set containing all graphs $\mathcal{D}^\ell G(V,E)$ associated with each derived algebra $\mathcal{D}^\ell\g$}\;
        \BlankLine
        $V$ $\leftarrow$ $\mathcal{B}$
        \tcc*[h]{Initialize the vertex set for $G(V,E)$ with the basis $\mathcal{B}$}\;
        $E$ $\leftarrow$ $\emptyset$
        \tcc*[h]{Initialize the edge set for $G(V,E)$}\;
        \BlankLine
        \ForEach(\tcc*[h]{Add all relevant edges to edge set $E$}){$(j,k)\in \mathcal{M}\times\mathcal{M}$ with $j<k$}{
            \If{$[\Tilde{x}_j,\Tilde{x}_k]=\Tilde{\alpha}_{jk} x_{\delta(j,k)}$ with $\Tilde{\alpha}_{jk}\neq 0$}{
                $E$ $\leftarrow$ $E\cup (\Tilde{x}_j,\Tilde{x}_k,\Tilde{x}_{\delta(j,k)})\cup (\Tilde{x}_k,\Tilde{x}_j,\Tilde{x}_{\delta(j,k)})$
                \tcc*[h]{Add edge to edge set $E$; each edge is an orderd triple: (start vertex, edge-label, end vertex)}\;
            }
        }
        $\mathcal{D}$ $\leftarrow$ $\{G(V,E)\}$\tcc*[h]{Add initial graph associated with $\g=\mathcal{D}^0\g$ to $\mathcal{D}$}\;
        \BlankLine
        \ForEach{$\ell\in\N_{\geq1}$}{
            \ForEach{vertex $v\in V $}{
                \If{there exists no edge $e\in E$ such that $\varpi_\mathrm{l}(e)=v$}{
                    $V$ $\leftarrow$ $V\setminus\{v\}$\tcc*[h]{Remove $v$ from $V$}\;
                    \ForEach{edge $e\in E$}{
                        \If(\tcc*[h]{i.e., edge is of the form $(\cdot,\cdot,v)$ or $(\cdot,v,\cdot)$ or $(v,\cdot,\cdot)$}){$v\in e$}{
                            $E$ $\leftarrow$ $E\setminus\{e\}$\tcc*[h]{Remove $e$ from $E$}
                        }
                    }
                }
            }
            $V^{(\ell)}\equiv\mathcal{D}^\ell V\leftarrow V$\;
            $E^{(\ell)}\equiv\mathcal{D}^\ell E\leftarrow E$\;
            $\mathcal{D}$ $\leftarrow$ $\mathcal{D}\cup\{\mathcal{D}^\ell G(V,E)=G(V^{(\ell)},E^{(\ell)})\}$\tcc*[h]{Add updated graph $\mathcal{D}^\ell G(V,E)$}\;
        }
        \Return $\mathcal{D}$ \tcc*[h]{Return set of all graphs $\mathcal{D}^\ell G(V,E)$ associated with the derived algebras $\mathcal{D}^\ell \g$ for all $\ell\geq0$}\;
\caption{Algorithm for generating a labeled directed graph associated with the $\ell$-th derived algebra $\mathcal{D}^\ell\g$ for a finite-dimensional graph-admissible Lie algebra $\g$}\label{alg:generating:generating:the:graph:derived:altered}
\end{algorithm}

\begin{lemma}\label{lem:derived:series:graph:alg:valid}
    Let $\g$ be a finite-dimensional graph-admissible Lie algebra associated with the labeled directed graph $G(V,E)$. Then, the sequence of graphs obtained via Algorithm~\ref{alg:generating:generating:the:graph:derived:altered} can be associated with the Lie algebras of the derived series of $\g$.
\end{lemma}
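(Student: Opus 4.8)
The plan is to prove the lemma by induction on the index $\ell$ of the derived series, showing at each step that the graph $\mathcal{D}^\ell G(V,E)=G(\mathcal{D}^\ell V,\mathcal{D}^\ell E)$ returned by Algorithm~\ref{alg:generating:generating:the:graph:derived:altered} is exactly the labeled directed graph that Algorithm~\ref{alg:creating:graph} assigns to the (possibly overcomplete) spanning set $\{\Tilde{x}_j\mid j\in\mathcal{M}_\mathrm{D}^{(\ell)}\}$, where $\mathcal{M}_\mathrm{D}^{(\ell)}$ is the recursively defined index set \eqref{eqn:recusrice:derived:index:set:overcomplete} introduced in the proof of Lemma~\ref{lem:if:g:graph:admissible:so:derived:algebras}. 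Since that lemma already establishes $\mathcal{D}^\ell\g=\spn\{\Tilde{x}_j\mid j\in\mathcal{M}_\mathrm{D}^{(\ell)}\}$ together with the fact that this spanning set satisfies the relations \eqref{eqn:desired:basis:overcomplete}, Definition~\ref{def:assocaited:graph} will then yield at once that $\mathcal{D}^\ell G(V,E)$ is associated with $\mathcal{D}^\ell\g$, which is precisely the assertion of the lemma.

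The base case $\ell=0$ is immediate: the first block of Algorithm~\ref{alg:generating:generating:the:graph:derived:altered} reproduces Algorithm~\ref{alg:creating:graph} applied to $\mathcal{B}=\{\Tilde{x}_j\}_{j\in\mathcal{M}}$, and $\mathcal{M}=\mathcal{M}_\mathrm{D}^{(0)}$ while $\g=\mathcal{D}^0\g$. For the inductive step I would assume $\mathcal{D}^\ell V=\{\Tilde{x}_j\mid j\in\mathcal{M}_\mathrm{D}^{(\ell)}\}$ and $\mathcal{D}^\ell E=\{(\Tilde{x}_p,\Tilde{x}_q,\Tilde{x}_{\delta(p,q)})\mid p,q\in\mathcal{M}_\mathrm{D}^{(\ell)},\ \Tilde{\alpha}_{pq}\neq0\}$, and then analyse one pruning round, read as: delete every vertex of $G(\mathcal{D}^\ell V,\mathcal{D}^\ell E)$ that has no incoming edge, and afterwards delete every edge incident (as start vertex, labeling vertex, or end vertex) to a deleted vertex. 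The central observation, already implicit in the discussion preceding Lemma~\ref{lem:if:g:graph:admissible:so:derived:algebras}, is that $\Tilde{x}_j\in\mathcal{D}^\ell V$ receives some incoming edge in $\mathcal{D}^\ell E$ if and only if $j=\delta(p,q)$ for some $p,q\in\mathcal{M}_\mathrm{D}^{(\ell)}$ with $\Tilde{\alpha}_{pq}\neq0$, that is, if and only if $j\in\mathcal{M}_\mathrm{D}^{(\ell+1)}$. Hence the deleted vertices are exactly those indexed by $\mathcal{M}_\mathrm{D}^{(\ell)}\setminus\mathcal{M}_\mathrm{D}^{(\ell+1)}$, so $\mathcal{D}^{\ell+1}V=\{\Tilde{x}_j\mid j\in\mathcal{M}_\mathrm{D}^{(\ell+1)}\}$, which by Lemma~\ref{lem:if:g:graph:admissible:so:derived:algebras} is a graph-admissible spanning set of $\mathcal{D}^{\ell+1}\g$.

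It remains to check the edges. An edge $(\Tilde{x}_p,\Tilde{x}_q,\Tilde{x}_{\delta(p,q)})\in\mathcal{D}^\ell E$ survives the pruning precisely when none of $p$, $q$, $\delta(p,q)$ lies in $\mathcal{M}_\mathrm{D}^{(\ell)}\setminus\mathcal{M}_\mathrm{D}^{(\ell+1)}$; however, whenever $p,q\in\mathcal{M}_\mathrm{D}^{(\ell+1)}$ and $\Tilde{\alpha}_{pq}\neq0$ one automatically has $\delta(p,q)\in\mathcal{M}_\mathrm{D}^{(\ell+2)}\subseteq\mathcal{M}_\mathrm{D}^{(\ell+1)}$, by the nesting of the index sets established in the proof of Lemma~\ref{lem:if:g:graph:admissible:so:derived:algebras}. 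Therefore the surviving edge set is exactly $\{(\Tilde{x}_p,\Tilde{x}_q,\Tilde{x}_{\delta(p,q)})\mid p,q\in\mathcal{M}_\mathrm{D}^{(\ell+1)},\ \Tilde{\alpha}_{pq}\neq0\}$, which coincides with the output of Algorithm~\ref{alg:creating:graph} on $\{\Tilde{x}_j\mid j\in\mathcal{M}_\mathrm{D}^{(\ell+1)}\}$, both orientations of each nonzero bracket being present since $\Tilde{\boldsymbol{\alpha}}$ is antisymmetric and $\delta$ is symmetric. This closes the induction; and since $\dim\g<\infty$ the chain $\mathcal{M}_\mathrm{D}^{(0)}\supseteq\mathcal{M}_\mathrm{D}^{(1)}\supseteq\cdots$ stabilizes, so the sequence of graphs also stabilizes, consistently with the derived series becoming stationary (at $\{0\}$ exactly when $\g$ is solvable).

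The step I expect to require the most care is not algebraic but the precise reading of the pruning loop: round $\ell$ must be interpreted as deleting the no-incoming-edge vertices of the graph present at the \emph{start} of that round, together with all of their incident edges, rather than re-testing vertices against a graph already modified within the same round. This matters because a vertex $\Tilde{x}_j$ with $j\in\mathcal{M}_\mathrm{D}^{(\ell+1)}$ may have each of its incoming edges sourced or labeled by a vertex indexed in $\mathcal{M}_\mathrm{D}^{(\ell)}\setminus\mathcal{M}_\mathrm{D}^{(\ell+1)}$ --- for instance, when $\g=\mathfrak{aff}(\R)=\spn\{h,e\}$ with $[h,e]=e$, the only edges into $e$ are $(h,e,e)$ and $(e,h,e)$, and both disappear once $h$ is removed --- so a careless in-place single pass could erroneously discard $\Tilde{x}_j$ even though $\Tilde{x}_j\in\mathcal{D}^{\ell+1}\g$. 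Under the simultaneous reading the deleted set is exactly $\{\Tilde{x}_j\mid j\in\mathcal{M}_\mathrm{D}^{(\ell)}\setminus\mathcal{M}_\mathrm{D}^{(\ell+1)}\}$ and the inductive step above applies verbatim; everything else is a routine translation between the recursion \eqref{eqn:recusrice:derived:index:set:overcomplete} and the graph-pruning operations, so no further difficulty is anticipated.
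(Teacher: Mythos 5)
Your proof is correct and follows essentially the same route as the paper: the paper's own argument simply appeals to the preceding discussion (the recursive index sets, Lemma~\ref{lem:if:g:graph:admissible:so:derived:algebras} giving $\mathcal{D}^\ell\g=\spn\{\Tilde{x}_j\mid j\in\mathcal{M}_\mathrm{D}^{(\ell)}\}$, and the loose-end pruning interpretation), which is exactly what you formalize by induction. Your remark that the pruning in each round must be read as simultaneous removal of all loose ends together with their incident edges is the intended reading of Algorithm~\ref{alg:generating:generating:the:graph:derived:altered}, so no gap remains.
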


\begin{proof}
    This claim follows directly from the preceding discussion, since the logic does not change when replacing the basis $\{x_j\}_{j\in\mathcal{N}}$---satisfying the bracket relations \eqref{eqn:desired:basis}---with an overcomplete basis $\{\Tilde{x}_j\}_{j\in\mathcal{M}}$---satisfying the bracket relations \eqref{eqn:desired:basis:overcomplete}.
\end{proof}

In order to highlight the working mechanisms of Algorithm~\ref{alg:generating:generating:the:graph:derived:altered}, it is instructive to apply it to a specific example of a solvable minimal-graph-admissible Lie algebra, as demonstrated in Example~\ref{exa:demonstration:algorithm:derived:series:minimal:graph}.

\textbf{Remark.} It is important to note that this method is not restricted to minimal-graph-admissible Lie algebras and minimal graphs. It also applies to redundant-graph-admissible Lie algebras and redundant graphs. In the case of redundant graphs, however, the set of vertices $V$ is a set of linear dependent elements spanning the associated Lie algebra $\g$. Consequently, when performing Algorithm~\ref{alg:generating:generating:the:graph:derived:altered} to compute $\mathcal{D}^{\ell+1}G(V,E)$ starting from $\mathcal{D}^\ell G(V,E)$, it is possible that certain vertices $v_\ell$ with indices $\ell\in\mathcal{M}_\mathrm{D}^{(\ell)}$ are removed, even though the corresponding elements $\Tilde{x}_\ell$ belong to $\mathcal{D}^{\ell+1}\g$. This occurs whenever $\ell\notin\delta(\mathcal{M}_\mathrm{D}^{(\ell)},\mathcal{M}_\mathrm{D}^{(\ell)})$ but $\Tilde{x}_\ell$ is linearly dependent on other elements $\Tilde{x}_{\delta(j,k)}\in \mathcal{D}^{\ell+1}V$. If it is desirable to retain such elements in the graph, one can modify the Algorithm~\ref{alg:generating:generating:the:graph:derived:altered} to reintroduce these vertices back into the set $\mathcal{D}^{\ell+1} V$, along with all corresponding edges into $\mathcal{D}^{\ell+1} E$. This modification does not affect the correctness of the procedure, as the algorithm remains valid regardless of whether the initial basis is overcomplete or not. The modified version is presented in Appendix~\ref{app:remaining:algorithms} as Algorithm~\ref{alg:generating:generating:the:graph:derived:altered:redundant:included}.

\begin{tcolorbox}[breakable, colback=Cerulean!3!white,colframe=Cerulean!85!black,title=\textbf{Example}: Application of Algorithm~\ref{alg:generating:generating:the:graph:derived:altered}]
    
    \begin{example}\label{exa:demonstration:algorithm:derived:series:minimal:graph}
        To illustrate Algorithm~\ref{alg:generating:generating:the:graph:derived:altered} we apply it to the real Lie algebra $\g$, defined on the seven-dimensional vector space spanned by the basis $\mathcal{B}=\{x_0,\ldots, x_6\}$ whose elements satisfy the following Lie brackets:
        \begin{align*}
            [x_0,x_1]&=x_1,\;&\;[x_0,x_j]&=-(7-j)x_j\quad\text{for }j\in\{2,3,4,5,6\},\;&\;[x_1,x_j]&=x_{j+1}\quad\text{for }j\in\{2,3,4,5\},
        \end{align*}
        which are the only  non-vanishing ones.
        This Lie algebra is part of a family of solvable Lie algebras that can be faithfully realized within the skew-hermitian Weyl algebra $\hat{A}_1$, where the first derived algebra is non-abelian~\cite{A1:project}.
        
        Applying Algorithm~\ref{alg:generating:generating:the:graph:derived:altered} to this algebra yields a sequence of directed graphs, each corresponding to a derived algebra in the series $\mathcal{D}^\ell\g$. The graphs $\mathcal{D}^\ell G(V,E)$ are constructed with respect to the basis $\mathcal{B}$ and according to the pruning procedure of Algorithm~\ref{alg:generating:generating:the:graph:derived:altered}. The respective graphs are depicted in Figure~\ref{fig:demonstration:algorithm:derived:series:minimal:graph}.
        \begin{figure}[H]
            \centering
            \includegraphics[width=0.95\linewidth]{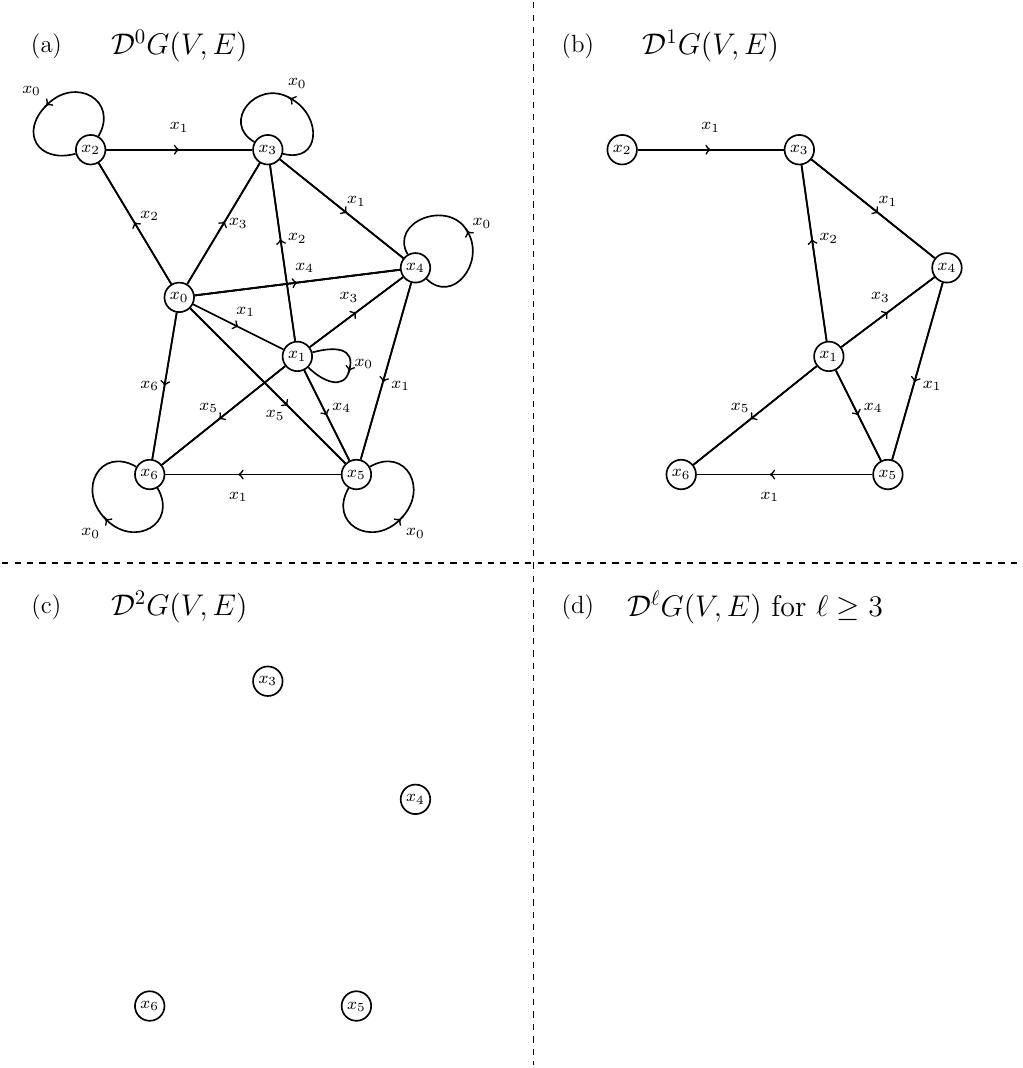}
            \caption{Graphs associated with the derived series of the Lie algebra from Example~\ref{exa:demonstration:algorithm:derived:series:minimal:graph}. In panel (a): The initial graph $\mathcal{D}^0G(V,E)\equiv G(V,E)$ associated with the full Lie algebra $\g$. In panel (b): The graph after removing all loose ends and isolated vertices, along with the corresponding edges, which can be associated with the first derived algebra $\mathcal{D}^1\g$. In panel (c): The graph $\mathcal{D}^2 G(V,E)$, and in panel (d):  The graphs $\mathcal{D}^\ell G(V,E)$ for $\ell\geq 3$.}
            \label{fig:demonstration:algorithm:derived:series:minimal:graph}
        \end{figure}
    \end{example}
\end{tcolorbox}

We are now able to state the following fundamental results:
\begin{theorem}\label{thm:solvable:termination:derived:series:graph}
    Let $\g$ be a finite-dimensional graph-admissible Lie algebra, and let $G(V,E)$ be any labeled directed graph associated with $\g$. Then, $\g$ is solvable if and only if the series of derived graphs $\mathcal{D}^\ell G(V,E)$ constructed by Algorithm~\ref{alg:generating:generating:the:graph:derived:altered} terminates after a finite number of steps.
\end{theorem}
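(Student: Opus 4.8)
The plan is to obtain the statement directly from the two facts already in hand: Lemma~\ref{lem:derived:series:graph:alg:valid}, which says that the $\ell$-th graph produced by Algorithm~\ref{alg:generating:generating:the:graph:derived:altered} is a labeled directed graph associated with the derived algebra $\mathcal{D}^\ell\g$, and Lemma~\ref{lem:if:g:graph:admissible:so:derived:algebras}, which both guarantees that every $\mathcal{D}^\ell\g$ is graph-admissible and describes it as $\mathcal{D}^\ell\g=\spn\{\Tilde{x}_j\mid j\in\mathcal{M}_{\mathrm{D}}^{(\ell)}\}$. First I would pin down the meaning of \enquote{terminates}: in parallel with Definition~\ref{def:solvable:nilpotent}, where the derived series terminates iff $\mathcal{D}^{k_*}\g=\{0\}$ for some $k_*$, the sequence $\mathcal{D}^\ell G(V,E)$ terminates iff some $\mathcal{D}^{\ell_*}G(V,E)$ equals the empty graph $G(\emptyset,\emptyset)$ --- and once a pruning step of Algorithm~\ref{alg:generating:generating:the:graph:derived:altered} produces empty vertex and edge sets, every later iteration leaves them empty, so this is a genuine stopping condition.

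The heart of the argument is then the single equivalence
\begin{align*}
    \mathcal{D}^\ell G(V,E)=G(\emptyset,\emptyset)\qquad\Longleftrightarrow\qquad \mathcal{D}^\ell\g=\{0\}.
\end{align*}
For \enquote{$\Leftarrow$}, I would use that Definition~\ref{def:graph:admissible} forces every (possibly overcomplete) basis to consist of non-zero elements, so the trivial algebra $\{0\}$ admits only the empty basis; hence by Algorithm~\ref{alg:creating:graph} the only graph associated with $\{0\}$ is $G(\emptyset,\emptyset)$, so if $\mathcal{D}^\ell\g=\{0\}$ then $\mathcal{D}^\ell G(V,E)$, which is associated with $\mathcal{D}^\ell\g$ by Lemma~\ref{lem:derived:series:graph:alg:valid}, must equal $G(\emptyset,\emptyset)$. (Equivalently, $\spn\{\Tilde{x}_j\mid j\in\mathcal{M}_{\mathrm{D}}^{(\ell)}\}=\{0\}$ with all $\Tilde{x}_j\neq 0$ forces $\mathcal{M}_{\mathrm{D}}^{(\ell)}=\emptyset$.) For \enquote{$\Rightarrow$}, reconstruct the associated algebra from the vertices of the empty graph: $\mathcal{D}^\ell\g=\lie{\emptyset}=\{0\}$. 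With this equivalence the theorem follows at once: if $\g$ is solvable then $\mathcal{D}^{k_*}\g=\{0\}$ for some $k_*$, so $\mathcal{D}^{k_*}G(V,E)=G(\emptyset,\emptyset)$ and the derived graphs terminate in at most $k_*$ steps; and contrapositively, if $\g$ is not solvable then $\mathcal{D}^\ell\g\neq\{0\}$ for every $\ell$, hence $\mathcal{D}^\ell G(V,E)\neq G(\emptyset,\emptyset)$ for every $\ell$ and the sequence never terminates.

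I do not anticipate a real obstacle; the one point I would be careful to articulate is the distinction between \emph{stabilization} and \emph{termination}. Because Algorithm~\ref{alg:generating:generating:the:graph:derived:altered} only ever deletes vertices together with the edges attached to or labeled by them, the finite vertex sets $\mathcal{D}^\ell V$ are non-increasing and therefore become eventually constant --- so the sequence of derived graphs stabilizes regardless of solvability. What the theorem characterizes is that the stable graph is the \emph{empty} one, rather than the graph of a non-zero perfect subalgebra $\mathfrak{p}=\mathcal{D}^{\ell_*}\g$ with $[\mathfrak{p},\mathfrak{p}]=\mathfrak{p}$; by Lemma~\ref{lem:abelian:criterion} such a $\mathfrak{p}$ is non-abelian and its graph carries at least one edge, so it is not $G(\emptyset,\emptyset)$. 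As an illustrative remark I would add the \enquote{visible} mechanism: if $k_*$ is minimal with $\mathcal{D}^{k_*}\g=\{0\}$ and $\g$ is non-abelian, then $\mathcal{D}^{k_*-1}\g$ is a non-zero abelian ideal, so $\mathcal{D}^{k_*-1}G(V,E)$ is edgeless by Lemma~\ref{lem:abelian:criterion}, and the next pruning step removes all its now incoming-edge-free vertices, returning $G(\emptyset,\emptyset)$ exactly one step later.
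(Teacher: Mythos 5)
Your proof is correct and follows essentially the same route as the paper: invoke Lemma~\ref{lem:derived:series:graph:alg:valid} to identify the $\ell$-th derived graph with a graph associated to $\mathcal{D}^\ell\g$, and observe that termination (the empty graph) corresponds exactly to $\mathcal{D}^\ell\g=\{0\}$, i.e.\ to solvability. The paper's proof is just a terser version of this; your extra remarks on non-zero basis elements and on stabilization versus termination only spell out details the paper leaves implicit.
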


Before proceeding with the proof, we need to mention that the series of derived graphs $\mathcal{D}^\ell G(V,E)$ is said to terminate, if for some $\ell_*$, the corresponding derived graph $\mathcal{D}^{\ell_*} G(V,E)$ is trivial, i.e., both the vertex and edges sets are empty: $\mathcal{D}^{\ell_*}V=\emptyset=\mathcal{D}^{\ell_*}E$. We adopt furthermore the convention that the trivial graph induces the trivial Lie algebra: $\lie{\emptyset}=\spn\{0\}$.

\begin{proof}
    This follows immediately from Lemma~\ref{lem:derived:series:graph:alg:valid} and the observation that each derived algebra satisfies $\mathcal{D}^\ell\g=\lie{\mathcal{D}^\ell V}$, where $\mathcal{D}^\ell G(V,E)=G(\mathcal{D}^\ell V,\mathcal{D}^\ell E)$ is the graph associated with the $\ell$-th derived algebra. The termination of the derived graph sequence corresponds precisely to the vanishing of the derived series, which characterizes solvability.
\end{proof}

We can further establish some graph-theoretic criteria for solvability that solely rely on the structure of the graph associated with the original Lie algebra, without needing to compute the explicit computation of its derived series.

\begin{proposition}\label{prop:non:solvability:condition:weak}
    Let $\g$ be a finite-dimensional graph-admissible Lie algebra associated with the labeled directed graph $G(V,E)$. Then, $\g$ is non-solvable if $G(V,E)$ contains a non-empty self-contained subgraph $G_C\equiv G(\Tilde{V},\Tilde{E})$ that is induced by a cycle $C$\footnote{Note that every cycle is, by Definition~\ref{def:walk:trail:path:cycle}, a directed walk. Thus a graph induced by a cycle is analogously defined as a graph induced by a directed walk (cf. Definition~\ref{def:induced:graph}).}.
\end{proposition}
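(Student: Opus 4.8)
The plan is to prove directly that no member of the derived series of $\g$ is trivial, by showing that the self-contained cycle forces a fixed non-zero subspace to survive in $\mathcal{D}^k\g$ for every $k$. First I would fix notation for the cycle: by Definition~\ref{def:walk:trail:path:cycle} a cycle has the form $C=(v_1,e_1,v_2,\ldots,e_s,v_{s+1})$ with $v_{s+1}=v_1$ and $v_1,\ldots,v_s$ pairwise distinct, so that the subgraph induced by $C$ in the sense of Definition~\ref{def:induced:graph} is $G_C=G(\tilde V,\tilde E)$ with $\tilde V=\{v_1,\ldots,v_s\}$ and $\tilde E=\{e_1,\ldots,e_s\}$. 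Writing $e_j=(v_j,\ell_j,v_{j+1})$, the construction of an associated graph (Algorithm~\ref{alg:creating:graph}) tells us that $[v_j,\ell_j]=\kappa_j v_{j+1}$ for some $\kappa_j\in\mathbb{F}^*$, and self-containedness (Definition~\ref{def:self:contained}) guarantees that each label $\ell_j$ again belongs to $\tilde V$ --- this is exactly the property that makes the argument close up.

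The core step is an induction on $k\in\N_{\geq0}$ with the claim $\tilde V\subseteq\mathcal{D}^k\g$. The base case is immediate since $\tilde V\subseteq V\subseteq\g=\mathcal{D}^0\g$. For the inductive step, assuming $\tilde V\subseteq\mathcal{D}^k\g$, I note that for each $j$ both $v_j$ and $\ell_j$ lie in $\mathcal{D}^k\g$, hence $\kappa_j v_{j+1}=[v_j,\ell_j]\in[\mathcal{D}^k\g,\mathcal{D}^k\g]=\mathcal{D}^{k+1}\g$, and since $\kappa_j\neq0$ this yields $v_{j+1}\in\mathcal{D}^{k+1}\g$; letting $j$ range over $1,\ldots,s$ and using $v_{s+1}=v_1$ gives $\tilde V\subseteq\mathcal{D}^{k+1}\g$. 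Because vertices of an associated graph represent non-zero elements of $\g$, we have $\spn\{\tilde V\}\neq\{0\}$, so $\mathcal{D}^k\g\neq\{0\}$ for all $k$, and by Definition~\ref{def:solvable:nilpotent} the algebra $\g$ is not solvable. Alternatively, one can phrase the same observation graph-theoretically via Theorem~\ref{thm:solvable:termination:derived:series:graph}: within $\tilde E$ every vertex of $\tilde V$ has an incoming edge whose label again lies in $\tilde V$, so the pruning procedure of Algorithm~\ref{alg:generating:generating:the:graph:derived:altered} never deletes any vertex of $\tilde V$, whence the derived-graph sequence cannot terminate.

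I do not expect a genuine obstacle here; the only point needing a brief comment is the degenerate case of a length-one cycle, i.e.\ a loop $e_1=(v_1,\ell_1,v_1)$: self-containedness would force $\ell_1=v_1$ and hence $[v_1,v_1]\propto v_1$, which is impossible in a proper graph by antisymmetry of the bracket (Proposition~\ref{prop:conditions:for:edges}(a)), so any self-contained cycle has $|\tilde V|\geq2$ and the induction applies unchanged. I would also remark explicitly that the argument never uses linear independence of $V$, so the statement holds for redundant graphs just as for minimal ones.
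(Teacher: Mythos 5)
Your proof is correct and follows essentially the same route as the paper: self-containedness guarantees every cycle edge corresponds to a bracket of two elements of $\tilde V$, and one deduces (the paper via $\spn\{\tilde V\}\subseteq[\spn\{\tilde V\},\spn\{\tilde V\}]$, you via an element-wise induction) that $\tilde V\subseteq\mathcal{D}^k\g$ for all $k$, so the derived series never terminates. The extra remarks on the degenerate length-one cycle and on not needing linear independence of $V$ are accurate but not essential.
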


\begin{proof}
    Let $G(V,E)$ be a labeled directed graph associated with the finite-dimensional graph-admissible Lie algebra $\g$, and suppose it contains a non-empty self-contained subgraph $G_C\equiv G(\Tilde{V},\Tilde{E})$ induced by a cycle $C$. By definition, this means that $G(\Tilde{V},\Tilde{E})$ is a labeled directed graph, where every edge $e\in\Tilde{E}$ is labeled by a vertex $v\in\Tilde{V}$. The set of vertices $V$ spanning $\g$ is a possibly overcomplete, albeit finite, basis of $\g$. We can therefore define $V=\{v_j\}_{j\in\mathcal{M}}$. Without loss of generality, we may relabel the vertices such that $\Tilde{V}=\{v_j\}_{j=1}^{s}$ and $C=(v_1,e_1,v_2,\ldots,v_{s-1},e_{s-1},v_s,e_s,v_1)$, where $s\leq m=|\mathcal{M}|$. Consequently, the Lie brackets of the subgraph elements satisfy:
    \begin{align*}
        [v_{j},v_{k_j}]&=\alpha_{jk_j}v_{j+1}\quad\text{for all }j\in \{1,\ldots,s-1\}\quad\text{and}\quad[v_{s},v_{k_{s}}]=\alpha_{sk_{s}}v_1,
    \end{align*}
    where $k_j\in\{1,\ldots,s\}$ and $\alpha_{j k_j}\in\mathbb{F}^*$ for all $j\in\{1,\ldots,s\}$. It is therefore clear that $v_j,v_{k_j}\in [\spn\{\Tilde{V}\},\spn\{\Tilde{V}\}]$ for all $j\in\{1,\ldots,s\}$, and hence $\spn\{\Tilde{V}\}\subseteq [\spn\{\Tilde{V}\},\spn\{\Tilde{V}\}]$. It follows consequently that $\spn\{\Tilde{V}\}$ is contained in every derived algebra $\mathcal{D}^k\g$ for all $k\in\N_{\geq0}$. Since $\Tilde{V}\neq \emptyset$ by assumption, the derived series of $\g$ does not terminate, making $\g$ non-solvable.
    \begin{figure}[H]
        \centering
        \includegraphics[width=0.3\linewidth]{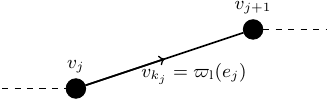}
        \caption{Visualization of the Lie bracket relation $[v_j,v_{k_j}]\propto v_{j+1}$, where $v_j,v_{j+1}$ are two consecutive vertices of a cycle $C$ connected by the directed edge $e_j$. The vertex $v_{k_j}$ is the labeling vertex of the edge $e_j$, i.e., $\varpi_\mathrm{l}(e_j)=v_{k_j}$.}
        \label{fig:small:cycle:proof:explainer}
    \end{figure}
\end{proof}

\begin{lemma}
    Let $\g$ be a finite-dimensional graph-admissible Lie algebra associated with the labeled directed graph $G(V,E)$. If there exists a self-contained subgraph $G_C\equiv G(\Tilde{V},\Tilde{E})\subseteq G(V,E)$ induced by a directed cycle $C$, then $\dim(\g)\geq 3$. 
\end{lemma}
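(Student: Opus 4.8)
The plan is to extract a dimension bound directly from the cycle, reusing the bracket computation already performed in the proof of Proposition~\ref{prop:non:solvability:condition:weak}. Write $\tilde V=\{v_1,\dots,v_s\}$ for the vertex set of the cycle $C=(v_1,e_1,\dots,v_s,e_s,v_1)$ that induces the self-contained subgraph $G_C=G(\tilde V,\tilde E)$; note $\tilde V\neq\emptyset$ since a cycle is a non-trivial sequence. Because $G_C$ is self-contained, each labeling vertex $\varpi_\mathrm{l}(e_j)$ belongs to $\tilde V$, and by part~(a) of Proposition~\ref{prop:conditions:for:edges} it differs from the source $\varpi_\mathrm{s}(e_j)=v_j$. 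Hence each edge of the cycle encodes a relation $[v_j,v_{k_j}]=\alpha_{jk_j}\,v_{j+1}$ with $k_j\in\{1,\dots,s\}$, $k_j\neq j$, and $\alpha_{jk_j}\in\mathbb{F}^*$ (with the convention $v_{s+1}:=v_1$). As $j$ runs through $\{1,\dots,s\}$ the targets $v_{j+1}$ exhaust $\tilde V$, so every $v\in\tilde V$ is proportional to a Lie bracket of two elements of $\tilde V$; this gives the inclusion $\spn\{\tilde V\}\subseteq[\spn\{\tilde V\},\spn\{\tilde V\}]$, exactly as in the proof of Proposition~\ref{prop:non:solvability:condition:weak}.

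Setting $W:=\spn\{\tilde V\}$ and $d:=\dim W$, I would then invoke the elementary fact that, for any basis $\{f_1,\dots,f_d\}$ of $W$, the subspace $[W,W]$ is spanned by the $\binom{d}{2}$ brackets $[f_i,f_j]$ with $i<j$, so that $\dim[W,W]\le\binom{d}{2}=\tfrac{d(d-1)}{2}$. The inclusion $W\subseteq[W,W]$ therefore forces $d\le\tfrac{d(d-1)}{2}$, which fails for $d=1$ (there $[W,W]=\{0\}$) and for $d=2$ (there $\dim[W,W]\le1$), and hence requires $d\ge3$. Since $W\subseteq\g$, this yields $\dim(\g)\ge d\ge3$, and as a byproduct $s=|\tilde V|\ge\dim W\ge3$, so the inducing cycle has length at least three. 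A shorter, equivalent route is to appeal to Proposition~\ref{prop:non:solvability:condition:weak} itself, which already shows that $\g$ is non-solvable, and to combine this with the standard fact that every Lie algebra of dimension at most two is solvable (a one-dimensional Lie algebra is abelian, and a two-dimensional one has $\mathcal{D}^1\g$ at most one-dimensional and thus $\mathcal{D}^2\g=\{0\}$), so that $\dim(\g)\ge3$ by contraposition.

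I do not anticipate a genuine obstacle. The one place calling for care is the opening step, namely verifying that the cycle really does yield $\spn\{\tilde V\}\subseteq[\spn\{\tilde V\},\spn\{\tilde V\}]$; this requires that \emph{every} cycle vertex appear both as the target of one cycle edge and as an input (source or label) of another, which is guaranteed by self-containedness (the labels stay in $\tilde V$) together with the cyclic structure (the targets run through all of $\tilde V$), and is precisely the content already established in the proof of Proposition~\ref{prop:non:solvability:condition:weak}, so no new argument is needed.
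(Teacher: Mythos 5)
Your proof is correct, but it proceeds along a genuinely different route than the paper. The paper argues by a case analysis on the cycle length $s$: it rules out $s=1$ and $s=2$ directly from antisymmetry and the convention that distinct vertices are never proportional, and then, for $s=3$ and $s\geq 4$, it shows by explicit bracket computations that the cycle vertices cannot span a one- or two-dimensional subspace (the latter cases being needed precisely because in a redundant graph the cycle vertices need not be linearly independent), concluding $\dim(\spn\{\Tilde{V}\})\geq 3$. You instead reuse the inclusion $\spn\{\Tilde{V}\}\subseteq[\spn\{\Tilde{V}\},\spn\{\Tilde{V}\}]$ from the proof of Proposition~\ref{prop:non:solvability:condition:weak} and combine it with the elementary bound $\dim[W,W]\leq\binom{\dim W}{2}$, which kills $\dim W\in\{1,2\}$ in one stroke and handles minimal and redundant graphs uniformly, with no case distinctions on $s$ or on linear dependence; your alternative route (non-solvability from Proposition~\ref{prop:non:solvability:condition:weak} plus the fact that all Lie algebras of dimension at most two are solvable) is also valid and even shorter, and involves no circularity since that proposition precedes the lemma. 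What the paper's hands-on analysis buys is an explicit, self-contained verification at the level of individual bracket relations (and, as a by-product, the same statement $\dim(\spn\{\Tilde{V}\})\geq 3$ that your counting argument also delivers but your solvability shortcut does not); what your argument buys is brevity and a cleaner structural explanation of \emph{why} three dimensions are forced, namely $d\leq d(d-1)/2$. The only point needing care, which you correctly flag, is that self-containedness plus the cyclic structure guarantee every vertex of $\Tilde{V}$ is proportional to a bracket of two elements of $\Tilde{V}$ (the degenerate case $s=1$ cannot occur in a proper graph by Proposition~\ref{prop:conditions:for:edges}(a)), so the inclusion $W\subseteq[W,W]$ is legitimate.
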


\begin{proof}
    Let $G(\Tilde{V},\Tilde{E})$ be a self-contained subgraph of $G(V,E)$ that is induced by a directed cycle $C$, where $G(V,E)$ is a labeled directed graph associated with the finite-dimensional graph-admissible Lie algebra $\g$. Without loss of generality, let $\Tilde{V}=\{v_j\}_{j=1}^{s}\subseteq V=\{v_j\}_{j\in\mathcal{M}}$, and suppose the brackets satisfy: $[v_1,v_{k_1}]=\alpha_{1k_1} v_2\neq 0$, \ldots, $[v_{s},v_{k_{s}}]=\alpha_{s k_s} v_1\neq0$, where each $k_j\in\{1,\ldots,s\}$ and $\alpha_{jk_j}\in\mathbb{F}^*$ for all $j\in\{1,\ldots,s-1\}$. We now consider the following cases:
    \begin{itemize}
        \item $\boldsymbol{s=1.}$ This implies that the cycle consists of a single vertex with a self-loop, implying $[v_1,v_1]\propto v_1\neq0$. This is clearly prohibited by the antisymmetric property of the Lie bracket.
        \item $\boldsymbol{s=2.}$ This implies, by the same logic as the previous case, $[v_1,v_2]=\alpha_{12}v_2\neq 0$, and $[v_2,v_1]=\alpha_{21}v_1\neq 0$. However, the antisymmetry of the Lie bracket implies $[v_2,v_1]=-[v_1,v_2]=\alpha_{12}v_2$, which contradicts $v_2\not\propto v_1$ as demanded by Definition~\ref{def:graph:admissible}.
        \item $\boldsymbol{s=3.}$ Here we suppose that $\{v_1,v_2,v_3\}$ is a linearly dependent set. Without loss of generality, let $v_1=\beta_2v_2+\beta_3 v_3$ with $\beta_2,\beta_3\in\mathbb{F}^*$, since $v_j\propto v_k$ if and only if $j=k$. One must now have $[v_1,v_{k_1}]=\alpha_{1k_1}v_2$. Assume $k_1=2$, which leaves $[v_2,v_3]=\alpha_{23}v_3$ and similarly $[v_3,v_1]=\alpha_{31}v_1$ as the sole valid options for $k_2$ and $k_3$. However, $[v_3,v_1]=[v_3,\beta_2 v_2+\beta_3 v_3]=-\beta_2 \alpha_{23} v_3\neq \alpha_{31}v_1$, a contradiction. The remaining case $k_1=3$ can be treated analogously and is similarly prohibited. Thus, $\{v_1,v_2,v_3\}$ must be linearly independent.
        \item $\boldsymbol{s\geq 4.}$ This case, similar to the case $s=3$, must be considered, if the graph is not minimal, and the vertices contained in $\Tilde{V}$ only span a one- or two-dimensional subspace. Suppose, therefore $\dim(\Tilde{V})< 3$. If $\dim(\spn\{\Tilde{V}\})=1$, then all vertices are scalar multiples of one another: $v_1=\beta_2v_2=\beta_3v_3=\ldots=\beta_{s}v_{s}$ for some $\beta_j\in\mathbb{F}^*$, which is forbidden by the construction of $V$, since $v_j\propto v_k$ if and only if $j=k$. Hence, we are left with the case $\dim(\spn{\Tilde{V}})=2$, and we may write: $v_3=\lambda_{13}v_1+\lambda_{23}v_2$, \ldots, $v_{s}=\lambda_{1s}v_1+\lambda_{2s} v_2$ with $\lambda_{1,j},\lambda_{2,j}\in\mathbb{F}^*$ for all $j\in\{3,\ldots,s\}$. Without loss of generality, consider the two subcases:
        \begin{enumerate}[label = (\alph*)]
            \item Suppose $[v_1,v_2]=\alpha_{12}v_2\neq 0$. Then: $[v_2,v_1]=-\alpha_{12} v_2\not\propto v_3$. Hence, one must have $[v_2,v_j]=\alpha_{2j}v_3$ for some $j\in\{3,\ldots,s\}$. However, $[v_2,v_j]=[v_2,\lambda_{1j}v_1+\lambda_{2j}v_2]=-\lambda_{1j}\alpha_{12} v_2\neq \alpha_{2j} v_3$, which leads to a contradiction.
            \item Suppose $[v_1,v_3]=\alpha_{23}v_2\neq 0$. Then: $[v_1,v_3]=[v_1,\lambda_{13}v_1+\lambda_{23}v_2]=\lambda_{23}\alpha_{12}v_2$. Thus, $[v_2, v_1]=-\alpha_{12} v_2/\lambda_{23}\not\propto v_3$ and $[v_2,v_j]=[v_2,\lambda_{1j}v_1+\lambda_{2j}v_2]=-\alpha_{12}\lambda_{1j} v_2/\lambda_{23}\neq \alpha_{2j}v_3$, which leads to a contradiction.
        \end{enumerate}
    \end{itemize}
    Therefore, in all cases with $\dim(\spn\{\Tilde{V}\})<3 $, the presence of a directed cycle has been ruled out. Thus, one needs to have $\dim(\g)\geq \dim(\spn\{\Tilde{V}\})\geq 3$.
\end{proof}

We can further generalize Proposition~\ref{prop:non:solvability:condition:weak} by extending the criterion from subgraphs induced by a directed cycle to those induced by closed directed walks. 

We begin with an observation concerning graphs that do not contain self-contained subgraphs induced by closed directed walks, under the assumption that they are associated with a finite-dimensional graph-admissible Lie algebra:
\begin{lemma}\label{lem:self:contained:subgraphs:induced:by:closed:walks}
    Let $\g$ be a finite-dimensional graph-admissible Lie algebra associated with the labeled directed graph $G(V,E)$. Then, $G(V,E)$ contains no self-contained subgraphs $G_W\equiv G(\Tilde{V},\Tilde{E})$ induced by closed directed walks $W$ if and only if, in every subgraph $G_{W}\equiv G(\Tilde{V},\Tilde{E})$ that is induced by a closed directed walk $W$, there exists a vertex $w\in\Tilde{V}$ such that for all edges $e\in E$ with $\varpi_\mathrm{e}(e)=w$, either the source $\varpi_\mathrm{l}(e)\notin \Tilde{V}$ or the label $\varpi_\mathrm{s}(e)\notin\Tilde{V}$.
\end{lemma}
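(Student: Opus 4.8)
The plan is to prove the logically equivalent contrapositive statement: that $G(V,E)$ \emph{does} contain a self-contained subgraph induced by a closed directed walk if and only if there is a closed directed walk $W$ whose induced subgraph $G_W=G(\tilde V,\tilde E)$ has the property that every $w\in\tilde V$ is the endpoint $\varpi_\mathrm{e}(e)=w$ of some edge $e\in E$ with both $\varpi_\mathrm{s}(e)\in\tilde V$ and $\varpi_\mathrm{l}(e)\in\tilde V$. I will call such an edge \emph{internal} to $\tilde V$, and more generally say $e\in E$ is internal to a set $U$ when $\varpi_\mathrm{s}(e),\varpi_\mathrm{l}(e),\varpi_\mathrm{e}(e)\in U$; let $H_U$ denote the directed graph on $U$ whose arcs $\varpi_\mathrm{s}(e)\to\varpi_\mathrm{e}(e)$ are read off from the internal edges. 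The forward implication is then immediate: if $W$ is a closed directed walk with $G_W$ self-contained, then since $W$ is closed each $w\in\tilde V$ is the endpoint of some walk-edge $e\in\tilde E\subseteq E$, and for that edge $\varpi_\mathrm{s}(e)\in\tilde V$ holds trivially while $\varpi_\mathrm{l}(e)\in\tilde V$ holds by Definition~\ref{def:self:contained}; so $W$ itself witnesses the right-hand condition.

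For the converse I would argue as follows. Suppose $W$ is a closed directed walk with induced vertex set $U:=\tilde V$ such that every $w\in U$ receives an internal edge. Then every vertex of $H_U$ has in-degree at least one, so $H_U$ is non-empty, and its condensation, being a finite DAG, possesses a strongly connected component $S$ that is a source, i.e.\ no arc of $H_U$ enters $S$ from another component. The key step is to show, using Proposition~\ref{prop:conditions:for:edges}(b), that every internal edge $e\in H_U$ with $\varpi_\mathrm{s}(e),\varpi_\mathrm{e}(e)\in S$ also has $\varpi_\mathrm{l}(e)\in S$: the swapped edge $(\varpi_\mathrm{l}(e),\varpi_\mathrm{s}(e),\varpi_\mathrm{e}(e))$ belongs to $E$, is internal to $U$, and has target $\varpi_\mathrm{e}(e)\in S$, so were $\varpi_\mathrm{l}(e)\notin S$ it would be an arc into $S$ from another component, contradicting the source property. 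Hence the restriction of $H_U$ to $S$ coincides with $H_S$ and is strongly connected; moreover $|S|\ge 2$, since $|S|=1$ together with the in-degree bound would force a self-loop edge $(v,\ell,v)\in H_S$ with $\ell\neq v$ by Proposition~\ref{prop:conditions:for:edges}(a), whose swap $(\ell,v,v)$ once again enters $S$ from outside. Concatenating directed paths in $H_S$ along an enumeration of the vertices of $S$ then produces a closed directed walk $W^\star$ in $G$ that visits exactly the vertices of $S$ and uses only edges internal to $S$; its induced subgraph $G_{W^\star}$ therefore has vertex set $S$ and only edges whose labelling vertex lies in $S$, i.e.\ it is a self-contained subgraph of $G(V,E)$ induced by a closed directed walk, completing the contrapositive.

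I expect the converse to be the only genuine obstacle, and within it the control of the labelling vertices: a priori the label of an edge internal to $U$ may lie anywhere in $U$, so simply descending to an arbitrary cycle or strongly connected piece need not preserve self-containedness. The resolution is to select a \emph{source} strongly connected component of $H_U$ and exploit the edge-swap symmetry of Proposition~\ref{prop:conditions:for:edges}(b), which together confine every label of an internal edge of $S$ back into $S$; notably no appeal to the Jacobi identity is required, only the antisymmetry-type edge constraints (a) and (b) valid in any proper graph.
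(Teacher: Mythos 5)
Your proof is correct, and it takes a genuinely different route from the paper's. Both arguments reduce to the same contrapositive task --- from a closed-walk-induced subgraph in which every vertex receives an edge whose source and label both lie in $\tilde V$, produce a closed directed walk inducing a self-contained subgraph --- and both rely only on conditions (a) and (b) of Proposition~\ref{prop:conditions:for:edges}, never on the Jacobi identity. The paper proceeds by explicit walk surgery: starting from a walk containing a \enquote{badly labeled} edge, it repeatedly invokes the swap rule of Proposition~\ref{prop:conditions:for:edges}(b) to reroute through the labeling vertex, enlarging the vertex set of the walk at each stage and using finiteness of $\tilde V$ to force termination in a self-contained closed walk. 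You instead pass to the auxiliary digraph $H_U$ of edges internal to $U$, pick a source strongly connected component $S$ of its condensation, and use the same swap rule to confine the labels of all internal edges with source and target in $S$ to $S$ itself, with Proposition~\ref{prop:conditions:for:edges}(a) ruling out the degenerate single-vertex case so that $|S|\ge 2$; concatenating directed paths inside the strongly connected $H_S$ then yields the desired self-contained closed walk. Your approach buys a shorter, more structural argument that isolates exactly where the edge-swap symmetry is needed, and it in fact only uses the in-degree hypothesis on the vertex set $U$, not the closed-walk structure of $W$ itself; the paper's construction is more hands-on, closer in spirit to the pruning algorithms of Section~\ref{sec:structural:properties}, but considerably more intricate to verify. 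The only cosmetic blemish is the loose notation \enquote{$(v,\ell,v)\in H_S$} in your $|S|=1$ case --- that edge lives in $E$ and is internal to $U$, and $H_S$ has not yet been identified with the restriction of $H_U$ to $S$ at that point --- but the surrounding argument is sound.
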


While this lemma might appear self-evident at first glance, it is not trivial, as demonstrated by the example below.

\begin{tcolorbox}[breakable, colback=Cerulean!3!white,colframe=Cerulean!85!black,title=\textbf{Remark}: Discussion of  Lemma~\ref{lem:self:contained:subgraphs:induced:by:closed:walks}]
    \begin{example}
    Let $G(V,E)$ be a labeled directed graph associated with a finite-dimensional graph-admissible Lie algebra $\g$ that does not contain any subgraphs that are self-contained and induced by a closed directed walk. Envision a situation in which the subgraph $G(\Tilde{V},\Tilde{E})$, depicted in Figure~\ref{fig:example:condition:self:contained:a}, is a subgraph of $G(V,E)$ that is induced by a closed directed walk $W$. In this case, at least one edge must be labeled by a vertex from $V\setminus\Tilde{V}$. For illustration purposes, we assume here that two such edges exist and have colored them orange in the figure. Additionally, we can now suppose that for every vertex $v$ in $\Tilde{V}$, there exists an edge $e\in E$ such that both $\varpi_\mathrm{s}(e)\in V$ and $\varpi_\mathrm{l}(e)\in V$. These edges, which do not belong to the walk $W$, are colored blue in Figure~\ref{fig:example:condition:self:contained:a}. 
    \begin{figure}[H]
        \centering
        \includegraphics[width=0.5\linewidth]{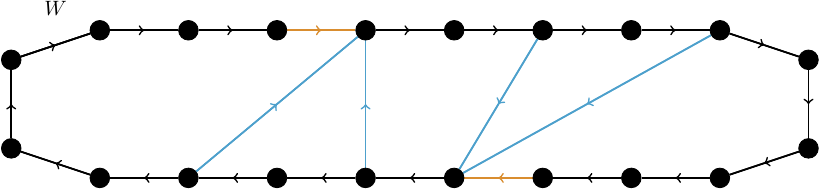}
        \caption{Depiction of a subgraph $G(\Tilde{V},\Tilde{E})$ that is induced by the closed directed walk $W$. The edges of $W$ are colored black if the labels are vertices from $\Tilde{V}$ and orange if this is not the case. Edges that do not belong to the walk $W$, but to the larger graph $G(V,E)$, are colored blue.}
        \label{fig:example:condition:self:contained:a}
    \end{figure}
    At first glance, it may appear that $G(V,E)$ contains a different closed directed walk $W'$ that induces a self-contained subgraph $G(\Tilde{V}',\Tilde{E}')$, as illustrated in Figure~\ref{fig:example:condition:self:contained:b}. However, upon closer inspection, some edges in this walk may be labeled by vertices that do not belong to $\Tilde{V}'$. This suggests that, even if no subgraph of $G(V,E)$ is induced by a closed directed walk that is self-contained, in principle it might still be possible to encounter subgraphs where, for every vertex, there exists an edge $e\in E$ such that the starting, labeling, and ending vertex of $e$ belong to the subgraph. In the proof that follows, we show that such a configuration is not possible.
    \begin{figure}[H]
        \centering
        \includegraphics[width=0.5\linewidth]{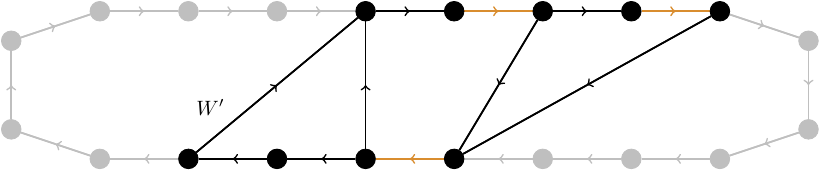}
        \caption{Overlay of the subgraph $G(\Tilde{V}',\Tilde{E}')$, induced by the closed directed walk $W'$, on top of the subgraph $G(\Tilde{V},\Tilde{E})\subseteq G({V},{E})$ from Figure~\ref{fig:example:condition:self:contained:a}. The edges of $G(\Tilde{V}',\Tilde{E}')$ are colored black if the labels are vertices from $\Tilde{V}'$, and orange if this is not the case. Vertices and edges that do not belong to the walk $W'$ but to the walk $W$ are colored gray.}
        \label{fig:example:condition:self:contained:b}
    \end{figure}
    \end{example}
\end{tcolorbox}

\begin{proof}
    Suppose that in every subgraph subgraph $G_{W}\equiv G(\Tilde{V},\Tilde{E})$ induced by a closed directed walk $W$ there exists a vertex $w\in\Tilde{V}$ such that all edges $e\in E$ with $\varpi_\mathrm{e}(e)=w$ satisfy either $\varpi_\mathrm{l}(e)\notin \Tilde{V}$ or $\varpi_\mathrm{s}(e)\notin\Tilde{V}$. Then, by Definition~\ref{def:self:contained} no such subgraph $G_W$ can be self-contained, since at least one edge of the induced subgraph $G_W$ violates the condition that all components of the edge belong to $\Tilde{V}$.

    Conversely, suppose that $G(V,E)$ contains no self-contained subgraphs $G_W\equiv G(\Tilde{V},\Tilde{E})$ induced by closed directed walks $W$.
    Let $W=(w_1,e_1,w_2,\ldots,e_s,w_1)$ be a closed directed walk in $G(V,E)$ that induces the subgraph $G_W\equiv G(\Tilde{V},\Tilde{E})$. Then, by assumption, there exists at least one edge $e_\ell \in\Tilde{E}$ such that $\varpi_\mathrm{l}(e_\ell)\notin\Tilde{V}$. We start now with the case that for all but one edge $e_\ell\in \Tilde{E}$, one has $\varpi_\mathrm{l}(e_\ell)\in\Tilde{V}$. Without loss of generality, let $e_1$ be this one edge in the walk $W$ that is not labeled by a vertex from $W$. If there exists another edge $e\in E\setminus\Tilde{E}$ such that $\varpi_\mathrm{s}(e)=w_1=\varpi_\mathrm{s}(e_1)$, $\varpi_\mathrm{e}(e)=w_2=\varpi_\mathrm{e}(e_1)$, and $\varpi_\mathrm{l}(e)\in \Tilde{V}$, the graph induced by the modified closed directed walk $W'=(w_1,e,w_2,e_2,\ldots,e_s,w_1)$ would be self-contained, against the hypothesis.

    Now suppose there exists an edge $e'\in E$ such that: $\varpi_\mathrm{s}(e'),\varpi_\mathrm{l}(e')\in \Tilde{V}$, $\varpi_\mathrm{s}(e')\neq w_1=\varpi_\mathrm{s}(e_1)$, and $\varpi_\mathrm{e}(e')=w_2$. Without loss of generality, we can write $\varpi_\mathrm{s}(e')=w_p$ and $\varpi_\mathrm{l}(e')=w_q$, where $2<q<p\leq s$, since, by Proposition~\ref{prop:conditions:for:edges}, an edge $e''$ belongs to $E$ if and only if the edge $(\varpi_\mathrm{l}(e''),\varpi_\mathrm{s}(e''),\varpi_\mathrm{e}(e''))$ also belongs to $E$. It follows that there exists the closed directed walk $W_1:=(w_2,e_2,\ldots,w_q,\ldots,w_p,e',w_2)$ in $G(V,E)$, where $\varpi_\mathrm{l}(e')=w_q$ is a vertex belonging to the induced graph $G_{W_1}\equiv G(\Tilde{V}_1,\Tilde{E}_1)$. Note that $\Tilde{V}_1\subsetneq\Tilde{V}$, since every vertex from the walk $W_1$ is a vertex from the walk $W$ and $w_1\notin \Tilde{V}_1$. If there exists an edge $e_j\in \Tilde{E}_1$ such that $\varpi_\mathrm{l}(e_j)\notin\Tilde{V}_1$, we know by Proposition~\ref{prop:conditions:for:edges} and the assumption that $e_1$ is the only edge from $\Tilde{E}$ that is labeled by a vertex from $V\setminus\Tilde{V}$, that there exists an edge $e_j'\in E$ such that $\varpi_\mathrm{s}(e_j')=\varpi_\mathrm{l}(e_j)$, $\varpi_\mathrm{l}(e_j')=\varpi_\mathrm{s}(e_j)$, and $\varpi_\mathrm{e}(e_j')=\varpi_\mathrm{e}(e_j)$. We can now collect all such edges $e_j\in \Tilde{E}_1$ for which $\varpi_\mathrm{l}(e_j)\in\Tilde{V}\setminus\Tilde{V}_1$ into the set $S_1$. That is:
    \begin{align*}
        S_1:=\left\{e_\ell\in \Tilde{E}_1\,\middle\mid\,\varpi_\mathrm{l}(e_\ell)\notin \Tilde{V}_1\right\}.
    \end{align*}
    Define the set $\varpi_\mathrm{l}(S_1):=\{\varpi_\mathrm{l}(e_j)\,\mid\,e_j\in S_1\}$, which is therefore comprised of vertices $w_\ell\in\Tilde{V}\setminus\Tilde{V}_1$. Next, we choose the vertex $w_{\ell_*}\in \varpi_\mathrm{l}(S_1)$ such that:
    \begin{align*}
        \ell_*&=\left\{\begin{matrix}
            1&,\text{ if }w_1\in \varpi_\mathrm{l}(S_1)\\
            \max_{w_\ell\in \varpi_\mathrm{l}(S_1)}\{\ell\}&,\text{ otherwise}
        \end{matrix}\right..
    \end{align*}
    Then, $G(V,E)$ contains the closed directed walk: $$W_2:=(w_2,e_2,\ldots,w_q,\ldots,w_p,\ldots,w_{\ell_*},e_j'w_{j+1},\ldots,w_p,e',w_2),$$ which induces the subgraph $G_{W_2}\equiv G(\Tilde{V}_2,\Tilde{E}_2)$. By construction, all edges in $\Tilde{E}_1$ belong to $\Tilde{E}_2$ and are labeled by vertices from $\Tilde{V}_2$. Similarly, all edges in $\Tilde{E}_2\setminus \Tilde{E}$, are labeled by vertices from $\Tilde{V_2}$. Thus, only edges from $\Tilde{E}_2\setminus(\Tilde{E}_1\cup\Tilde{E})$ may be labeled by vertices from $\Tilde{V}\setminus\Tilde{V}_2$. If this is not the case, then $G(V,E)$ would contain a self-contained subgraph that is induced by a closed directed walk, which contradicts the initial assumption. We can therefore repeat this procedure and observe that $\Tilde{V}_j\subsetneq\Tilde{V}_{j+1}\subseteq\Tilde{V}$, as one adds at least one vertex $v\in \Tilde{V}\setminus \Tilde{V}_j$ to obtain $\Tilde{V}_{j+1}$. Nevertheless, since the set $\Tilde{V}$ is finite, this procedure must eventually (after a finite number of steps) terminate in a self-contained subgraph of $G(V,E)$ that is induced by a closed directed walk, which is prohibited. The idea of this iterative construction is illustrated in Figure~\ref{fig:sketch:idea:proof:self:contained:condition}.
    \begin{figure}[htpb]
        \centering
        \includegraphics[width=0.65\linewidth]{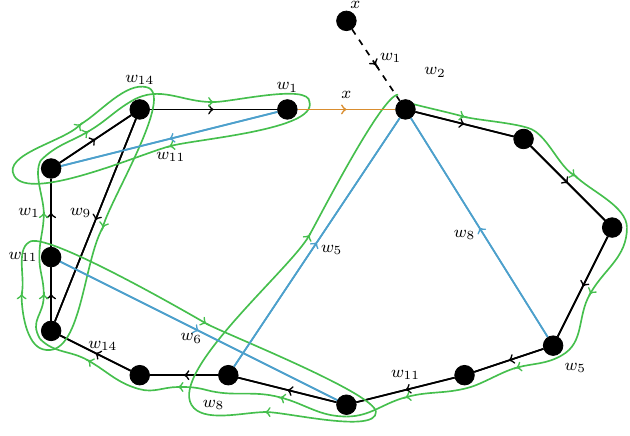}
        \caption{Visualization of the procedure described in the proof of Lemma~\ref{lem:self:contained:subgraphs:induced:by:closed:walks}, which constructs self-contained subgraphs induced by closed directed walks when the sufficient condition of Lemma~\ref{lem:self:contained:subgraphs:induced:by:closed:walks} is violated. All edges that belong to the initial walk $W$ and are labeled by vertices from the same walk are colored black. The one edge that is not labeled by such a vertex is colored orange. The closed-directed walk obtained by the procedure is indicated in green, where the added edges from the set of all edges $E$ are colored blue. The dashed edge does not belong to the initial or final walk, but belongs to $E$, due to Proposition~\ref{prop:conditions:for:edges}.}
        \label{fig:sketch:idea:proof:self:contained:condition}
    \end{figure}

    Finally, suppose that $G_W\equiv G(\Tilde{V},\Tilde{E})$ is induced by a closed directed walk in which multiple edges are labeled by vertices from $V\setminus\Tilde{V}$, but for which there exists edges $e\in E$ such that $\varpi_\mathrm{s}(e)\in\Tilde{V}$ and $\varpi_\mathrm{l}(e)\in \Tilde{V}$. Here, we can realize that the procedure described above can be adopted to the situation at hand, by ignoring all but one edge that is not labeled by a vertex from the walk $W$. This procedure yields consequently a walk $W'$, for which the number of edges labeled by vertices not belonging to the induced graph $G_{W'}$ is reduced by at least one, as the edge that is not ignored does not belong to $W'$ and every added edge is labeled by vertices belonging to $W$. Thus, repeated application of this procedure to the initial walk yields a closed directed walk $W''$ that induces a self-contained subgraph of $G(V,E)$, which is prohibited. Therefore, in every subgraph $G_W\equiv G(\Tilde{V},\Tilde{E})$ induced by a closed directed walk, there must exist at least one vertex $v\in\Tilde{V}$ such that every edge $e\in E$ with $\varpi_\mathrm{e}(e)=v$ satisfies either $\varpi_\mathrm{s}(e)\notin\Tilde{V}$ or $\varpi_\mathrm{l}(e)\notin\Tilde{V}$. 
\end{proof}

\begin{theorem}\label{thm:non:solvability:condition:strong}
    Let $\g$ be a finite-dimensional graph-admissible Lie algebra associated with the labeled directed graph $G(V,E)$. Then $\g$ is non-solvable if and only if $G(V,E)$ contains a self-contained subgraph $G_W\equiv G(\Tilde{V},\Tilde{E})$ that is induced by a closed directed walk $W$.
\end{theorem}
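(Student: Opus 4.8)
The plan is to prove both implications, leveraging the machinery already developed.

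\textbf{The ``if'' direction.} Suppose $G(V,E)$ contains a self-contained subgraph $G_W \equiv G(\tilde V, \tilde E)$ induced by a closed directed walk $W$. I want to show $\spn\{\tilde V\} \subseteq [\spn\{\tilde V\}, \spn\{\tilde V\}]$, which forces $\spn\{\tilde V\}$ into every derived algebra $\mathcal{D}^k\g$ and hence makes $\g$ non-solvable since $\tilde V \neq \emptyset$. This generalizes Proposition~\ref{prop:non:solvability:condition:weak}, whose proof handled the case of a cycle. The key observation is that the argument there used only two facts: every edge of $W$ is labeled by a vertex of $\tilde V$ (self-containment), and every vertex of $W$ appears as the \emph{endpoint} of some edge of $W$ (it is a closed walk, so each $w_j \in \tilde V$ is $\varpi_\mathrm{e}(e_{j-1})$ for the preceding edge, cyclically). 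Thus for every $v \in \tilde V$ there is an edge $e \in \tilde E$ with $\varpi_\mathrm{e}(e) = v$, meaning $v \propto [\varpi_\mathrm{s}(e), \varpi_\mathrm{l}(e)]$ with $\varpi_\mathrm{s}(e), \varpi_\mathrm{l}(e) \in \tilde V$. Hence $v \in [\spn\{\tilde V\}, \spn\{\tilde V\}]$ for all $v \in \tilde V$, giving the inclusion. The rest is identical to Proposition~\ref{prop:non:solvability:condition:weak}: the inclusion persists under iteration of the derived bracket, so $\{0\} \neq \spn\{\tilde V\} \subseteq \mathcal{D}^k\g$ for all $k$, and $\g$ is not solvable.

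\textbf{The ``only if'' direction.} This is the contrapositive: if $G(V,E)$ contains \emph{no} self-contained subgraph induced by a closed directed walk, then $\g$ is solvable. Here I invoke Lemma~\ref{lem:self:contained:subgraphs:induced:by:closed:walks}, which tells us that in every subgraph $G_W \equiv G(\tilde V, \tilde E)$ induced by a closed directed walk $W$, there is a vertex $w \in \tilde V$ such that every edge $e \in E$ ending at $w$ has either $\varpi_\mathrm{s}(e) \notin \tilde V$ or $\varpi_\mathrm{l}(e) \notin \tilde V$. The strategy is to show that the derived-graph pruning procedure of Algorithm~\ref{alg:generating:generating:the:graph:derived:altered} terminates, which by Theorem~\ref{thm:solvable:termination:derived:series:graph} is equivalent to solvability of $\g$. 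I would argue that if the pruning did \emph{not} terminate — i.e., $\mathcal{D}^\ell G(V,E)$ stabilizes at some nonempty graph $G(\hat V, \hat E)$ with $\hat V \neq \emptyset$ — then every vertex of $\hat V$ must have at least one incoming edge whose source and label both lie in $\hat V$ (otherwise that vertex would be a loose end or isolated and get pruned, or the edge labeled by a removed vertex would be dropped). This ``every vertex has a good incoming edge'' property, within the finite graph $G(\hat V, \hat E)$, lets one trace backwards from any vertex through edges whose endpoints stay in $\hat V$, and by finiteness this backward trace must revisit a vertex, producing a closed directed walk $W$ all of whose edges are labeled within $\hat V \subseteq \tilde V$. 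But then $G_W$ would be self-contained, contradicting Lemma~\ref{lem:self:contained:subgraphs:induced:by:closed:walks}'s characterization (which says a good vertex $w$ exists in $G_W$, whereas our construction ensures every vertex of $W$ has all its incoming edges from within $\hat V$). Hence the stable graph is empty, the derived series terminates, and $\g$ is solvable.

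\textbf{Expected main obstacle.} The delicate point is the backward-tracing argument in the ``only if'' direction: I need to confirm that the persistence of a vertex $v$ in the stabilized derived graph really does force the existence of an incoming edge $e$ with \emph{both} $\varpi_\mathrm{s}(e) \in \hat V$ and $\varpi_\mathrm{l}(e) \in \hat V$, rather than merely an incoming edge (whose label might have been pruned away). This requires carefully unpacking what the pruning algorithm removes — it deletes loose ends (no incoming edges) and all edges labeled by deleted vertices — and checking that at the fixed point, surviving edges are precisely those whose source, label, and target all survive; then a surviving vertex with a surviving incoming edge is exactly what I need. Once that bookkeeping is pinned down, the finiteness-forces-a-cycle step is routine, and the contradiction with Lemma~\ref{lem:self:contained:subgraphs:induced:by:closed:walks} closes the argument. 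A secondary subtlety is that the closed walk I extract need not be a simple cycle, but that is fine: Theorem~\ref{thm:non:solvability:condition:strong} is stated for closed directed walks precisely to accommodate this, and self-containment only requires all edge-labels to lie in the induced vertex set.
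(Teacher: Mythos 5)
Your ``if'' direction is correct: observing that in a closed directed walk every vertex is the endpoint of some walk edge, and that self-containment puts the labels inside $\Tilde V$, does give $\spn\{\Tilde V\}\subseteq[\spn\{\Tilde V\},\spn\{\Tilde V\}]$ and hence non-solvability; this is the argument of Proposition~\ref{prop:non:solvability:condition:weak} generalized verbatim, and it is a legitimate, slightly more direct alternative to the paper's route, which instead notes that the pruning of Algorithm~\ref{alg:generating:generating:the:graph:derived:altered} leaves $G(\Tilde V,\Tilde E)$ untouched and invokes Theorem~\ref{thm:solvable:termination:derived:series:graph}.

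The ``only if'' direction has a genuine gap, and it is not the bookkeeping point you flag. At the stabilized graph $G(\hat V,\hat E)$ your bookkeeping is in fact fine: surviving edges have source, label and target all in $\hat V$, so every surviving vertex has an incoming edge whose source \emph{and} label lie in $\hat V$. The problem is the next step. The backward trace yields a closed directed walk $W$ whose vertex set $\Tilde V$ is only some subset of $\hat V$, while the labels of its edges are only guaranteed to lie in $\hat V$ -- not in $\Tilde V$. Self-containment requires the labels to lie in the induced vertex set $\Tilde V$ itself, so $G_W$ need not be self-contained, and the intended contradiction with Lemma~\ref{lem:self:contained:subgraphs:induced:by:closed:walks} does not materialize: the ``good vertex'' $w$ supplied by that lemma is one for which \emph{every} edge of $E$ ending at $w$ has source or label outside $\Tilde V$, whereas your construction only furnishes an incoming edge at $w$ with source in $\Tilde V$ and label in $\hat V$, which is perfectly compatible with the lemma's conclusion whenever that label lies in $\hat V\setminus\Tilde V$. (Your phrase ``labeled within $\hat V\subseteq\Tilde V$'' conflates the two sets; the inclusion actually runs the other way.) Repairing this is not routine: one must surgically enlarge or reroute the walk so that the labeling vertices are absorbed into its own vertex set, using the edge symmetry of Proposition~\ref{prop:conditions:for:edges} (an edge $(x,y,z)$ lies in $E$ iff $(y,x,z)$ does). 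That walk surgery is precisely the content of the paper's proof of Lemma~\ref{lem:self:contained:subgraphs:induced:by:closed:walks} and of the recursive Case~1/Case~2 vertex classification in the paper's proof of this theorem, which shows that, in the absence of self-contained closed-walk subgraphs, every pruning step must eliminate a loose end. That argument is the heart of the converse and is missing from your proposal.
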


\begin{proof}
    Suppose the graph $G(V,E)$ contains a self-contained subgraph $G_W\equiv G(\Tilde{V},\Tilde{E})$ induced by a closed directed walk $W=(w_0,e_0,w_1,\ldots,w_s,e_s,w_0)$. Then, for every vertex $w_\ell\in \Tilde{V}$, there exists an edge $e\in\Tilde{E}$ with $\varpi_\mathrm{s}(e)=w_\ell$ and $\varpi_\mathrm{l}(e),\varpi_\mathrm{e}(e)\in \Tilde{V}$, and an edge $e'\in\Tilde{E}$ with $\varpi_\mathrm{s}(e'),\varpi_\mathrm{l}(e')\in \Tilde{V}$ and $\varpi_\mathrm{e}(e')=w_\ell$. Specifically, these edges are $e=e_{\ell_1}$ and $e'=e_{\ell_2}$, where $\ell_1=\ell$, and $\ell_2=\ell-1$ if $\ell=0$ and $\ell_2=s$ if $\ell=0$. By Algorithm~\ref{alg:generating:generating:the:graph:derived:altered}, this implies that $\mathcal{D}^1 \Tilde{V}=\Tilde{V}$, $\mathcal{D}^1\Tilde{E}=\Tilde{E}$, and hence $G(\Tilde{V},\Tilde{E})=\mathcal{D}^\ell G(\Tilde{V},\Tilde{E})\subseteq G(V,E)$ for all $\ell\in\N_{\geq0}$. Therefore, the sequence $\mathcal{D}^\ell G(V,E)$ does not terminate, and by Theorem~\ref{thm:solvable:termination:derived:series:graph}, the derived series $\mathcal{D}^\ell \g$ does also not terminate, making $\g$ non-solvable.

    Conversely, suppose now that the graph $G(V,E)$ contains no self-contained subgraphs $G_W\equiv G(\Tilde{V},\Tilde{E})$ induced by a closed directed walk $W$. If $E=\emptyset$, then clearly $\mathcal{D}^1G(V,E)=G(\emptyset,\emptyset)$, and the sequence $\mathcal{D}^{\ell}G(V,E)$ terminates after finitely many steps. By Theorem~\ref{thm:solvable:termination:derived:series:graph}, it follows that $\g$ must be solvable. Therefore, we may assume without loss of generality that $E\neq \emptyset$. 
    
    Consider a vertex $v\in V$. There are two possibilities:
    \begin{itemize}
        \item \textbf{Case 1}. The vertex $v$ is part of a closed directed walk $W$ that induces a non-self-contained graph $G_W\equiv G(\Tilde{V},\Tilde{E})$. In this case, there exist two vertices $w_1,w_2\in G_W$ such that there exists an edge $e_1\in E$ with $\varpi_\mathrm{s}(e_1)=w_1$, $\varpi_\mathrm{e}(e_1)=w_2$, and $\varpi_\mathrm{l}(e_1)\notin \Tilde{V}$. That is, the edge in the walk $W$ that that points from $w_1$ to $w_2$ is labeled by a vertex not belonging to the subgraph $G_W$. Suppose there exists an edge $e'\in E$ such that $\varpi_\mathrm{e}(e')=w_2$ and $\varpi_\mathrm{s}(e'),\varpi_\mathrm{l}(e')\in \Tilde{V}$. Then, by Lemma~\ref{lem:self:contained:subgraphs:induced:by:closed:walks}, there exists two other vertices $w_3,w_4$ in the walk $W$ connected by the edge $e_2$ that is not labeled by a vertex from $\Tilde{V}$. Furthermore, every edge pointing to $w_4$ is either labeled by or originates from a vertex outside of $\Tilde{V}$.
        
        Let us denote by $z$ the vertex of the edge that connects $w_3$ and $w_4$ in the walk $W$ (or, alternatively, $w_1$ with $w_2$ if every edge ending at $w_2$ is either labeled by or originates from a vertex not in $\Tilde{V}$). There are now two options for the vertex $z$:
        \begin{enumerate}[label = (\roman*)]
            \item $z$ is part in a another closed directed walk $W'$ that does not induce a self-contained graph $G_{W'}$. In this case, we are again in Case 1, and the same reasoning applies by replacing the original vertex $v$ with $z$. 
            \item $z$ is not part of any closed directed walk. Then, we proceed with Case 2, again replacing the original vertex $v$ with $z$. 
        \end{enumerate}
        \item \textbf{Case 2}. The vertex $v$ is not part of a closed directed walk. In this situation, there are three subcases:
        \begin{enumerate}[label = (\alph*)]
            \item There exists no edge $e$ in $E$ such that either $\varpi_\mathrm{s}(e)=v$ or $\varpi_\mathrm{e}=v$.
            \item There exist only edges $e\in E$ such that $\varpi_\mathrm{s}(e)=v$, but no edges $e'\in E$ such that $\varpi_\mathrm{e}(e')=v$.
            \item There exist and edge $e\in E$ such that $\varpi_\mathrm{e}(e)=v$.
        \end{enumerate}
        In subcase (c), the vertex $\varpi_\mathrm{s}(e)$ is either part of a closed directed walk that induces a non-self-contained graph, in which case we return to Case 1, replacing $v$ with $\varpi_\mathrm{s}(e)$, or it is not part of any closed directed walk, in which case we repeat Case 2, again replacing $v$ with $\varpi_\mathrm{s}(e)$. 
    \end{itemize}
    We now define a recursive procedure to analyze the structure of $G(V,E)$. For each vertex $v\in V$, initialize the set $S(v):=\{v\}$. Then, classify $v$ as belonging to Case 1 or Case 2. If the vertex $v$ belongs to Case 1, consider every possible vertex $z$ (as described by the procedure above) that does not belong to $S(v)$, and add such vertex to $S(v)$. Then, repeat this classification procedure for the newly added vertex $z$, and proceed accordingly. If, on the other hand, $v$ belongs to Case 2, this process terminates if $v$ falls under  subcases (a) or (b). If $v$ falls under subcase (c), consider then every possible $\varpi_\mathrm{s}(e)$ (as described above) that does not belong to $S(v)$, and add one such vertex to $S(v)$. Then, repeat the classification and continuation procedure for this new vertex. The structure and logic of this recursive procedure are schematically illustrated in Figure~\ref{fig:proof:solvable:containd:circuits}. 
    
    Since the graph $G(V,E)$ contains only finitely many vertices, the recursive procedure described must terminate after a finite number of steps for every vertex $v$. This termination occurs when encountering a vertex that is classified into Case 2 (a) or Case 2 (b), regardless of the choices made for the candidate vertices $z$ or $\varpi_\mathrm{s}(e)$ during the process. If the procedure did not terminate, it would imply the existence of a self-contained subgraph of $G(V,E)$ induced by a closed directed walk, which contradicts the assumption. Therefore, the process will eventually reach a vertex that is either a loose end (i.e., a vertex with no incoming but at least one outgoing edge), or a vertex with no incoming or outgoing edges. The latter can clearly only occur if this vertex is the initial vertex itself. As a result, computing $\mathcal{D}^1G(V,E)$ will remove at least one vertex from $V$ to obtain $\mathcal{D}^1 V$, and at least one edge from $E$, due to the assumption that $E\neq\emptyset$, as this implies that at least one vertex $v\in V$ does not fall under Case 2 (a), and hence a loose end must be found and removed. Note that, by Algorithm~\ref{alg:generating:generating:the:graph:derived:altered}, every edge that is labeled by, or starts at such loose end must be removed.
    
    Since $\mathcal{D}^1G(V,E)$ is a subgraph of $G(V,E)$, it also does not contain a self-contained directed walk, and the same recursive procedure can be applied. Every edge $e$ in $\mathcal{D}^1E$ obeys the requirement $\varpi_\mathrm{s}(e),\varpi_\mathrm{l}(e),\varpi_\mathrm{e}(e)\in\mathcal{D}^1 V$. Because $V$ is finite and the computation of $\mathcal{D}^1 V$ removes at least one vertex, there must exists an $\ell_*\in \N_{\geq0}$ such that $\mathcal{D}^{\ell_*} V=\emptyset$ and consequently also $\mathcal{D}^{\ell_*} E=\emptyset$. Therefore, the sequence $\mathcal{D}^\ell G(V,E)$ terminates after finitely many steps, and by Theorem~\ref{thm:solvable:termination:derived:series:graph}, the Lie algebra $\g$ is solvable. Hence, the contraposition holds: If $\g$ is non-solvable, then $G(V,E)$ must contain a self-contained subgraph induced by a closed directed walk.    
\end{proof}

\begin{figure}[htpb]
    \centering
    \includegraphics[width=0.95\linewidth]{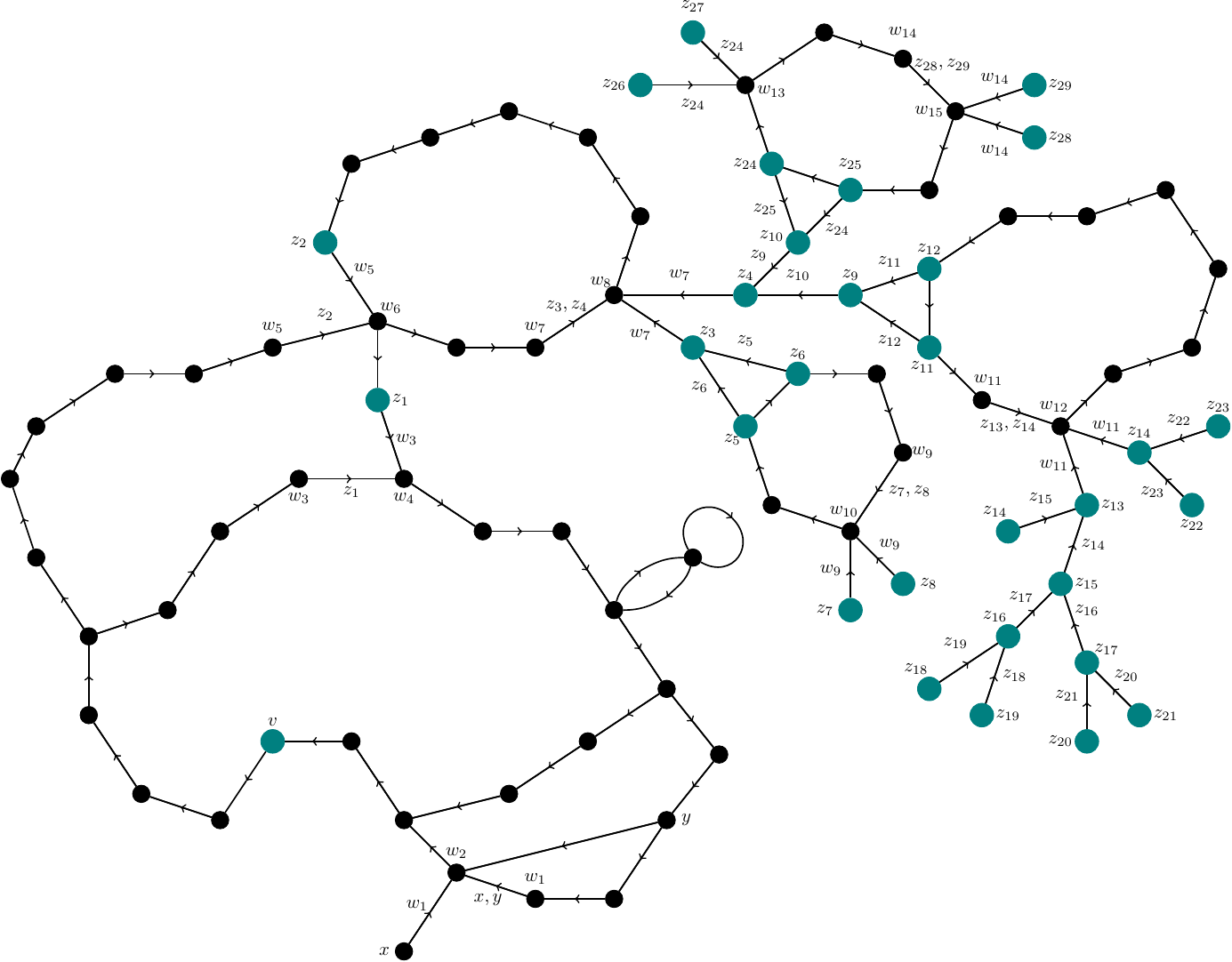}
    \caption{Conceptual illustration conveying the core idea of the preceding proof of Theorem~\ref{thm:non:solvability:condition:strong}. That is: Consider a vertex $v$. If it is part of a closed directed walk, then there exists another vertex within the same walk (denoted here as $w_4$) that serves as the endpoint of edges that either only originate from vertices outside the walk, or are labeled by external vertices. This observation enables us to apply the same reasoning to a vertex not contained within the original walk but which connects to it (in this case $z_1$). We can continue this process iteratively. If the resulting vertex $z_j$ is not part of a closed directed walk, one can identify edges that point towards $z_j$ and recursively apply this process to the sources of those edges. Regardelss of the choice of the resulting vertices, this process will ultimately terminate by encountering a loose end, which is subsequently eliminated through the application of Algorithm~\ref{alg:generating:generating:the:graph:derived:altered}. It is important to emphasize that this figure does not necessarily depict a valid graph, as it includes only those vertices and edges relevant to the procedure described above. The illustration should be interpreted as a conceptual sketch rather than a complete valid graph, or a graph that can be extended to a valid one.}
    \label{fig:proof:solvable:containd:circuits}
\end{figure}

\begin{corollary}
    Let $\g$ be a finite-dimensional redundant-graph-admissible Lie algebra associated with the redundant graph $G(V,E)$. Suppose $G(V,E)$ contains a closed directed walk $W$ that induces the subgraph $G_W\equiv G(\Tilde{V},\Tilde{E})$, where every edge in $\Tilde{V}$ is labeled either by a vertex $v\in \Tilde{V}$ or a vertex $w\in (V \cap \spn\{\Tilde{V}\})\setminus \Tilde{V}$---that is, a linear combination of the vertices from $\Tilde{V}$. Then, $G(V,E)$ contains a closed directed walk $W'$ that induces a self-contained subgraph $G_{W'}\equiv G(\Tilde{V}',\Tilde{E}')$.
\end{corollary}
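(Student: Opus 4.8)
The plan is to reduce the statement to the non-solvability of $\g$ and then quote Theorem~\ref{thm:non:solvability:condition:strong}, whose forward direction already delivers a self-contained subgraph induced by a closed directed walk. So the actual work is to show that the hypothesis forces $\g$ to be non-solvable. First I would set $\mathfrak{h}:=\spn\{\Tilde{V}\}\subseteq\g$ and record that $\mathfrak{h}\neq\{0\}$, since the walk $W$ is non-trivial and every vertex is a non-zero element. The key point is that the labeling vertices of the edges of $W$, even though they need not lie in $\Tilde{V}$, do lie in $\mathfrak{h}$: by hypothesis each such label is either a vertex of $\Tilde{V}$ or a vertex of $(V\cap\spn\{\Tilde{V}\})\setminus\Tilde{V}$, and in both cases it is a linear combination of elements of $\Tilde{V}$, hence an element of $\mathfrak{h}$.

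Next I would carry out the bracket bookkeeping along the walk. Writing $W=(w_1,e_1,w_2,\ldots,e_s,w_1)$ with the convention $w_{s+1}:=w_1$ and $\ell_j:=\varpi_\mathrm{l}(e_j)$, the construction of an associated graph (Algorithm~\ref{alg:creating:graph}, Definition~\ref{def:assocaited:graph}) gives $[w_j,\ell_j]=\kappa_j w_{j+1}$ with $\kappa_j\in\mathbb{F}^*$ for every $j\in\{1,\ldots,s\}$. Since $w_j\in\Tilde{V}\subseteq\mathfrak{h}$ and $\ell_j\in\mathfrak{h}$, this yields $w_{j+1}=\kappa_j^{-1}[w_j,\ell_j]\in[\mathfrak{h},\mathfrak{h}]$; as $j$ runs through $\{1,\ldots,s\}$ the vertices $w_{j+1}$ exhaust all of $\Tilde{V}$, so $\Tilde{V}\subseteq[\mathfrak{h},\mathfrak{h}]$ and therefore $\mathfrak{h}=\spn\{\Tilde{V}\}\subseteq[\mathfrak{h},\mathfrak{h}]$. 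From here a one-line induction on $k$ shows $\mathfrak{h}\subseteq\mathcal{D}^k\g$ for all $k\in\N_{\geq0}$: the base case $k=0$ is trivial, and if $\mathfrak{h}\subseteq\mathcal{D}^k\g$ then $[\mathfrak{h},\mathfrak{h}]\subseteq[\mathcal{D}^k\g,\mathcal{D}^k\g]=\mathcal{D}^{k+1}\g$, which together with $\mathfrak{h}\subseteq[\mathfrak{h},\mathfrak{h}]$ gives $\mathfrak{h}\subseteq\mathcal{D}^{k+1}\g$. Since $\mathfrak{h}\neq\{0\}$, the derived series of $\g$ never terminates, so $\g$ is non-solvable, and Theorem~\ref{thm:non:solvability:condition:strong} (applied to $\g$ together with its associated redundant graph $G(V,E)$) produces a closed directed walk $W'$ inducing a self-contained subgraph $G_{W'}\equiv G(\Tilde{V}',\Tilde{E}')$ of $G(V,E)$.

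The point that needs care — and the reason the redundant-graph hypothesis is essential rather than cosmetic — is that $\mathfrak{h}=\spn\{\Tilde{V}\}$ need not be a Lie subalgebra of $\g$, so one cannot simply apply Theorem~\ref{thm:non:solvability:condition:strong} to $\mathfrak{h}$ and its own associated graph; instead the argument must be run at the level of the derived series of the ambient algebra $\g$, using only the chain $\mathfrak{h}\subseteq[\mathfrak{h},\mathfrak{h}]\subseteq[\g,\g]$ together with $[\mathcal{D}^k\g,\mathcal{D}^k\g]=\mathcal{D}^{k+1}\g$ (this mirrors the argument in the proof of Proposition~\ref{prop:non:solvability:condition:weak}). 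I expect no serious obstacle beyond this: in a minimal graph the hypothesis would be vacuous, because then $V\cap\spn\{\Tilde{V}\}=\Tilde{V}$ and $G_W$ would already be self-contained, so the content of the corollary is exactly that redundant labels lying in $\spn\{\Tilde{V}\}$ behave, for the purpose of detecting non-solvability, just like genuine vertices of $\Tilde{V}$.
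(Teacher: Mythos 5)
Your proposal is correct and follows essentially the same route as the paper: both arguments show that the labels, being linear combinations of the walk's vertices, force $\spn\{\Tilde{V}\}\subseteq[\spn\{\Tilde{V}\},\spn\{\Tilde{V}\}]$, conclude by induction that $\spn\{\Tilde{V}\}\subseteq\mathcal{D}^\ell\g$ for all $\ell$ so that $\g$ is non-solvable, and then invoke Theorem~\ref{thm:non:solvability:condition:strong} to obtain the self-contained subgraph induced by a closed directed walk. Your additional remarks (that $\spn\{\Tilde{V}\}$ need not be a subalgebra and that the hypothesis is vacuous for minimal graphs) are accurate but not needed; the core argument matches the paper's.
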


\begin{proof}
    Let $W=(v_1,e_1,v_2,e_2,\ldots, e_{s},v_1)$ be a closed directed walk of length $s$ that satisfies the conditions stated in the claim above. This walk induces the subgraph $G_W\equiv G(\Tilde{V},\Tilde{E})$, where $\Tilde{V}=\{v_j\,\mid\,j\in\{1,\ldots,s\}\}$ and $\Tilde{E}=\{e_j\,\mid \,j\in\{1,\ldots,s\}\}$. This implies the following sequence of Lie bracket relations:
    \begin{align*}
        [v_1,\varpi_\mathrm{l}(e_1)]&\propto v_2,\;&\;[v_2,\varpi_\mathrm{l}(e_2)]&\propto v_3,\;&\;\;&\;\ldots\;&\;[v_{s},\varpi_\mathrm{l}(e_{s})]&\propto v_1.
    \end{align*}
    That is, each bracket operation involving a vertex in the walk and the labeling vertex of the consecutive edge yields the next vertex in the walk, up to a scalar multiple. Recall that for two subspaces $\mathfrak{v},\mathfrak{w}$ of a Lie algebra $\g$, one defines $[\mathfrak{v},\mathfrak{w}]:=\spn\{[v,w]\,\mid\,v\in\mathfrak{v},\,w\in\mathfrak{w}\}$ \cite{Knapp:1996}. We have, therefore, $\spn\{\Tilde{V}\}\subseteq[\spn\{\Tilde{V}\},\spn\{\Tilde{V}\}]$, since $\varpi_\mathrm{l}(e)\in\spn\{\Tilde{V}\}$ for all $e\in \Tilde{E}$. Hence $\spn\{\Tilde{V}\}\subseteq\mathcal{D}^1\g$, because $\Tilde{V}\subseteq V$. Consequently: $\spn\{\Tilde{V}\}\subseteq \mathcal{D}^\ell \g$ for all $\ell\in\N_{\geq0}$. It follows that the derived series of $\g$ does not terminate, making $\g$ non-solvable. By Theorem~\ref{thm:non:solvability:condition:strong}, this implies that $G(V,E)$ must contain a closed directed walk that induces a self-contained subgraph of $G(V,E)$. 
\end{proof}

\begin{lemma}
    Let $G(V,E)$ be a labeled directed graph that is associated with a finite-dimensional  graph-admissible Lie algebra $\g$, and suppose it contains a self-contained subgraph $G(\Tilde{V},\Tilde{E})$ induced by a closed directed walk $W=(v_1,e_1,v_2,e_2,\ldots,e_{s},v_1)$. Then, it follows that $\dim(\spn\{\Tilde{V}\})\geq 3$.
\end{lemma}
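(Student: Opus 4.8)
The plan is to replace the lengthy case analysis used for directed cycles by a one-line dimension count. Put $U:=\spn\{\Tilde{V}\}$. Since $\Tilde{V}$ is nonempty and the vertices of a graph associated with $\g$ are nonzero elements of $\g$ (Definitions~\ref{def:graph:admissible} and~\ref{def:assocaited:graph}), we have $\dim(U)\geq 1$, and the whole content of the lemma is to exclude $\dim(U)\in\{1,2\}$.

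First I would read off the bracket relations carried by the walk. Because $G(\Tilde{V},\Tilde{E})$ is the subgraph induced by $W=(v_1,e_1,v_2,e_2,\ldots,e_s,v_1)$, its edge set is exactly $\{e_1,\ldots,e_s\}$; self-containedness means $\varpi_\mathrm{l}(e_j)\in\Tilde{V}$ for each $j$. Reading the indices cyclically ($v_{s+1}:=v_1$), the edge $e_j=(v_j,\varpi_\mathrm{l}(e_j),v_{j+1})$ encodes, by the construction of an associated graph, the relation $[v_j,\varpi_\mathrm{l}(e_j)]=\kappa_j v_{j+1}$ with $\kappa_j\in\mathbb{F}^*$. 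Hence $v_{j+1}=\kappa_j^{-1}[v_j,\varpi_\mathrm{l}(e_j)]\in[U,U]$ for every $j\in\{1,\ldots,s\}$, and since $v_{j+1}$ runs over all of $\Tilde{V}$ as $j$ ranges over $\{1,\ldots,s\}$, we obtain $\Tilde{V}\subseteq[U,U]$, hence $U\subseteq[U,U]$.

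Then I would invoke the elementary bound $\dim([U,U])\leq\binom{\dim(U)}{2}$, valid for any finite-dimensional subspace $U$ of a Lie algebra: choosing a basis $\{u_1,\ldots,u_d\}$ of $U$, bilinearity and antisymmetry of the bracket give $[U,U]=\spn\{[u_i,u_j]\mid i<j\}$. Combining this with $U\subseteq[U,U]$ yields $\dim(U)\leq\tfrac12\dim(U)(\dim(U)-1)$, equivalently $\dim(U)\bigl(\dim(U)-3\bigr)\geq 0$; together with $\dim(U)\geq 1$ this forces $\dim(U)\geq 3$, that is, $\dim(\spn\{\Tilde{V}\})\geq 3$. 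Since a directed cycle is in particular a closed directed walk, this argument also reproves, and strengthens, the earlier cycle lemma (which concluded only $\dim(\g)\geq 3$).

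I do not expect a genuine obstacle. The two points deserving a moment's care are: (i) checking that every vertex of $\Tilde{V}$ actually appears as an endpoint $v_{j+1}$ of some edge of $W$, so that the full space $U$ — not merely a proper subspace — sits inside $[U,U]$; and (ii) using that the vertices are nonzero, which is precisely what rules out the otherwise-permitted solution $\dim(U)=0$ of the inequality.
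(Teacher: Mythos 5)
Your proof is correct, and it takes a genuinely different route from the paper's. The paper argues by a case analysis: it first rules out walk lengths $s=1$ and $s=2$ using antisymmetry of the bracket, and then excludes $\dim(\spn\{\Tilde{V}\})\in\{1,2\}$ by exploiting that distinct vertices of an associated graph are never proportional (Definition~\ref{def:assocaited:graph}) together with the edge rules of Proposition~\ref{prop:conditions:for:edges}, deriving in each configuration an explicit proportionality contradiction along the walk. You instead extract from the walk the single algebraic fact $U:=\spn\{\Tilde{V}\}\subseteq[U,U]$ (precisely the observation the paper itself uses in Proposition~\ref{prop:non:solvability:condition:weak} and Theorem~\ref{thm:non:solvability:condition:strong} to get non-solvability, relying on self-containedness for $\varpi_\mathrm{l}(e_j)\in\Tilde{V}$ and on the fact that every vertex of $\Tilde{V}$ is the target of some walk edge), and then close with the elementary bound $\dim([U,U])\leq\binom{\dim(U)}{2}$, which forces $\dim(U)\bigl(\dim(U)-3\bigr)\geq 0$ and, since vertices are nonzero so $\dim(U)\geq 1$, gives $\dim(U)\geq 3$. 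Your two flagged caveats are exactly the right ones and both check out: in a closed walk $v_{j+1}=\varpi_\mathrm{e}(e_j)$ sweeps through all of $\Tilde{V}$ (including $v_1=\varpi_\mathrm{e}(e_s)$), and the vertex set of an associated graph consists of nonzero elements by Definition~\ref{def:graph:admissible}. What each approach buys: yours is shorter, uniform in the walk length, purely a dimension count, and simultaneously strengthens the preceding cycle lemma from $\dim(\g)\geq 3$ to $\dim(\spn\{\Tilde{V}\})\geq 3$; the paper's case analysis stays at the level of graph edge rules without invoking $[U,U]$ and yields finer structural byproducts along the way (e.g., that such a walk must have length at least three and exactly how the degenerate one- and two-dimensional configurations fail), which the surrounding graph-theoretic development occasionally reuses.
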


\begin{proof}
    Let $G(V,E)$ be graph associated with a finite-dimensional graph-admissible Lie algebra $\g$, and let $W=(v_1,e_1,v_2,e_2,\ldots,e_{s},v_1)$ be a closed directed walk inducing the subgraph $G(\Tilde{V},\Tilde{E})$, where $\Tilde{V}:=\{v_j\,\mid\,j\in\{1,\ldots,s\}\}\subseteq V$ and $\Tilde{E}:=\{e_j\,\mid\,j\in\{1,\ldots,s\}\}\subseteq E$. Suppose $s=1$. Then one would have $[v_1,\varpi_\mathrm{l}(e_1)]=[v_1,v_1]\propto v_1$, which is prohibited by the antisymmetric property of the Lie bracket. Hence, $s\geq2$.
    Suppose $s=2$. By similar reasoning,one requires $[v_1,\varpi_\mathrm{l}(e_1)]=[v_1,v_2]\propto v_2$ and $[v_2,\varpi_\mathrm{l}(e_2)]=[v_2,v_1]\propto v_1$. By the antisymmetry of the Lie bracket this leads to $v_1\propto v_2$, and thus $[v_1,v_2]=0$, again a contradiction. Thus $s\geq 3$.

    Now suppose $\spn\{\Tilde{V}\}$ is one-dimensional. This would imply $v_j=v_1$ for all $j\in\{1,\ldots,s\}$, since Definition~\ref{def:assocaited:graph} forbids $v\propto w$ for two distinct vertices $v,w\in V$. Consequently, we would have $\varpi_\mathrm{l}(e_j)=\varpi_\mathrm{s}(e_j)=\varpi_\mathrm{e}(e_j)=v_1$ for all $j\in\{1,\ldots,s\}$, which contradicts Proposition~\ref{prop:conditions:for:edges}. Suppose instead $\spn\{\Tilde{V}\}$ is two-dimensional. Without loss of generality, let $\spn\{\Tilde{V}\}=\spn\{v_1,v_2\}$. Then, one has $\varpi_\mathrm{l}(e_1)=\lambda_1v_1+\lambda_2v_2\neq 0$ and $\varpi_\mathrm{l}(e_2)=\mu_1v_1+\mu_2 v_2\neq0 $ with $\lambda_1,\lambda_2,\mu_1,\mu_2\in\mathbb{F}$. The condition $[v_1,\varpi_\mathrm{l}(e_1)]=[v_1,\lambda_1 v_1+\lambda_2 v_2]\propto v_2$ implies therefore $[v_1,v_2]\propto v_2$. Similarly, $[v_2,\mu_1 v_1+\mu_2 v_2]\propto v_2$, which leads to $v_3\propto v_2$. Again, by Definition~\ref{def:assocaited:graph}, $v_j=v_2$ for all $j\in\{2,\ldots,s\}$. However, this results in $[v_{s},\varpi_\mathrm{l}(e_{s})]\propto v_2\not\propto v_1$, a contradiction. Therefore, $\spn\{\Tilde{V}\}$ must be at least three-dimensional.
\end{proof}

\subsection{Computing the lower central series}

The computation of the lower central series using a labeled directed graph associated with a finite-dimensional graph-admissible Lie algebra $\g$ follows a procedure similar to that used for the derived series. In fact, the first step must yield the same result as the one obtained from the first step for the procedure utilized in case of the derived series, since $\mathcal{D}^1\g=\mathcal{C}^1\g$ holds for all Lie algebras $\g$. The key difference lies in the recursive construction of the consecutive algebras: while the derived series is formed by taking Lie brackets among elements from the preceding algebra, the lower central series is obtained by taking Lie brackets between elements of the original algebra and those of the previous algebra in the series. Let $\mathcal{B}=\{{x}_j\}_{j=1}^n$ be a basis of the finite-dimensional minimal-graph-admissible Lie algebra $\g$ satisfying the relations \eqref{eqn:desired:basis}. We now define the index sets
\begin{align}
    \mathcal{N}_{\mathrm{C}}^{(\ell+1)}:=\delta\left(\left\{(j,k)\in\mathcal{N}\times\mathcal{N}_{\mathrm{C}}^{(\ell)}\mid \;\alpha_{jk}\neq0\right\}\right)\quad\text{for all }\ell\in\N_{\geq0},\quad\text{where}\quad\mathcal{N}_{\mathrm{C}}^{(0)}:=\mathcal{N}.\label{eqn:recusrice:lower:index:set}
\end{align}
An analogous construction applies to finite-dimensional redundant-graph-admissible Lie algebras and their associated redundant graphs. 
The essential observation is now that the elements of the lower central series are generally given by $\mathcal{C}^{\ell}\g=\spn\{x_j\,\mid j\in \mathcal{N}_{\mathrm{C}}^{(\ell)}\}$, as will be formalized in the following lemma:

\begin{lemma}\label{lem:if:g:graph:admissible:so:central:series:algebras}
    Let $\g$ be a finite-dimensional graph-admissible Lie algebra. Then, the algebras $\mathcal{C}^\ell\g$ in the lower central series are also graph-admissible. Moreover, if $\g$ is minimal-graph-admissible, then each algebra $\mathcal{C}^\ell\g$ is also minimal graph-admissible for all $\ell\in\N_{\geq0}$.
\end{lemma}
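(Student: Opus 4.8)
The plan is to mirror, almost verbatim, the argument used for the derived series in Lemma~\ref{lem:if:g:graph:admissible:so:derived:algebras}, with the single structural change that reflects the difference between the two series: in the lower central series one bracket slot always ranges over all of $\g$, hence over the whole (overcomplete) index set, while the other ranges over the index set of the preceding term. Concretely, I would fix an overcomplete basis $\{\tilde{x}_j\}_{j\in\mathcal{M}}$ of $\g$ satisfying the relations \eqref{eqn:desired:basis:overcomplete}, use the recursively defined index sets $\mathcal{M}_{\mathrm{C}}^{(\ell)}$ as in \eqref{eqn:recusrice:lower:index:set} (with tildes on the structure constants), and aim for the identity $\mathcal{C}^\ell\g=\spn\{\tilde{x}_j\mid j\in\mathcal{M}_{\mathrm{C}}^{(\ell)}\}$ for all $\ell\in\N_{\geq0}$. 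Graph-admissibility of $\mathcal{C}^\ell\g$ will then follow by checking that this spanning set verifies \eqref{eqn:desired:basis:overcomplete}.

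First I would record a monotonicity fact: $\mathcal{M}_{\mathrm{C}}^{(\ell+1)}\subseteq\mathcal{M}_{\mathrm{C}}^{(\ell)}$ for all $\ell$. This follows by induction on $\ell$, using that the image of a function is monotone under inclusion of its domain, together with $\mathcal{M}\times\mathcal{M}_{\mathrm{C}}^{(\ell)}\subseteq\mathcal{M}\times\mathcal{M}_{\mathrm{C}}^{(\ell-1)}$; the base case $\mathcal{M}_{\mathrm{C}}^{(1)}\subseteq\mathcal{M}=\mathcal{M}_{\mathrm{C}}^{(0)}$ is immediate from the convention that $\delta(j,k)=0$ precisely when $\tilde{\alpha}_{jk}=0$, so that $\delta$ restricted to pairs with $\tilde{\alpha}_{jk}\neq0$ takes values in $\mathcal{M}$.

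Next I would prove $\mathcal{C}^\ell\g=\spn\{\tilde{x}_j\mid j\in\mathcal{M}_{\mathrm{C}}^{(\ell)}\}$ by induction on $\ell$, the base case $\ell=0$ being trivial. For the inductive step I would write $\mathcal{C}^{\ell+1}\g=[\g,\mathcal{C}^\ell\g]$ and establish both inclusions exactly as in the derived-series proof. For ``$\subseteq$'', any $z\in[\g,\mathcal{C}^\ell\g]$ is a finite sum $z=\sum_j[x_j,y_j]$ with $x_j\in\g$ and $y_j\in\mathcal{C}^\ell\g$ (absorbing scalars by bilinearity, as in the footnote of the derived-series proof); expanding $x_j=\sum_{p\in\mathcal{M}}\mu_{jp}\tilde{x}_p$ and $y_j=\sum_{q\in\mathcal{M}_{\mathrm{C}}^{(\ell)}}\nu_{jq}\tilde{x}_q$ (the latter by the inductive hypothesis) and using $[\tilde{x}_p,\tilde{x}_q]=\tilde{\alpha}_{pq}\tilde{x}_{\delta(p,q)}$ with $\delta(p,q)\in\mathcal{M}_{\mathrm{C}}^{(\ell+1)}\cup\{0\}$ shows $z\in\spn\{\tilde{x}_j\mid j\in\mathcal{M}_{\mathrm{C}}^{(\ell+1)}\}$. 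For ``$\supseteq$'', any $\tilde{x}_j$ with $j\in\mathcal{M}_{\mathrm{C}}^{(\ell+1)}$ equals $\tilde{x}_{\delta(p,q)}$ for some $p\in\mathcal{M}$, $q\in\mathcal{M}_{\mathrm{C}}^{(\ell)}$ with $\tilde{\alpha}_{pq}\neq0$, hence $\tilde{x}_j=[\tilde{\alpha}_{pq}^{-1}\tilde{x}_p,\tilde{x}_q]\in[\g,\mathcal{C}^\ell\g]$.

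Finally I would conclude graph-admissibility: the set $\{\tilde{x}_j\mid j\in\mathcal{M}_{\mathrm{C}}^{(\ell)}\}$ is finite, consists of nonzero, pairwise non-proportional vectors (inherited from the original overcomplete basis), spans $\mathcal{C}^\ell\g$, and is closed under the Lie bracket in the required sense, since for $j,k\in\mathcal{M}_{\mathrm{C}}^{(\ell)}$ with $\tilde{\alpha}_{jk}\neq0$ one has $\delta(j,k)\in\mathcal{M}_{\mathrm{C}}^{(\ell+1)}\subseteq\mathcal{M}_{\mathrm{C}}^{(\ell)}$ by the definition of $\mathcal{M}_{\mathrm{C}}^{(\ell+1)}$ and the monotonicity fact. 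Thus \eqref{eqn:desired:basis:overcomplete} holds for this spanning set, so $\mathcal{C}^\ell\g$ is graph-admissible; the minimal case follows by running the same argument with $\mathcal{M}$ replaced by $\mathcal{N}$ and observing that $\{x_j\mid j\in\mathcal{N}_{\mathrm{C}}^{(\ell)}\}$, being a subset of the basis $\{x_j\}_{j\in\mathcal{N}}$, is linearly independent and hence a genuine basis of $\mathcal{C}^\ell\g$. I expect the only genuinely delicate point — the one that keeps this from being a purely mechanical copy of the derived-series proof — to be precisely this last closure step, namely verifying that $\delta$ restricted to $\mathcal{M}_{\mathrm{C}}^{(\ell)}\times\mathcal{M}_{\mathrm{C}}^{(\ell)}$ still lands in $\mathcal{M}_{\mathrm{C}}^{(\ell)}\cup\{0\}$; it is handled cleanly because in the lower-central recursion the outer slot is all of $\mathcal{M}$, so $j\in\mathcal{M}_{\mathrm{C}}^{(\ell)}\subseteq\mathcal{M}$ already suffices to place $\delta(j,k)$ into $\mathcal{M}_{\mathrm{C}}^{(\ell+1)}$.
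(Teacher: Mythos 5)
Your proposal is correct and follows essentially the same route as the paper: it uses the recursively defined index sets $\mathcal{M}_{\mathrm{C}}^{(\ell)}$, proves their monotonicity, establishes $\mathcal{C}^\ell\g=\spn\{\tilde{x}_j\mid j\in\mathcal{M}_{\mathrm{C}}^{(\ell)}\}$ by a two-inclusion induction mirroring the derived-series lemma, and handles the minimal case via linear independence of the corresponding subset of the basis. The only cosmetic difference is that you spell out the bracket-closure of the restricted spanning set explicitly (which the paper leaves implicit) and start the induction at $\ell=0$ rather than anchoring $\ell=1$ to $\mathcal{C}^1\g=\mathcal{D}^1\g$; both are fine.
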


\begin{proof}
    Let $\g$ be a finite-dimensional graph-admissible Lie algebra. Then, there exists a (possibly overcomplete) basis $\mathcal{B}=\{\Tilde{x}_j\}_{j\in\mathcal{M}}$ that satisfies the relations \eqref{eqn:desired:basis:overcomplete} and spans $\g$. We now introduce the recursively constructed sets
    \begin{align}
        \mathcal{M}_{\mathrm{C}}^{(\ell+1)}:=\delta\left(\left\{(j,k)\in\mathcal{M}\times\mathcal{M}_{\mathrm{C}}^{(\ell)}\mid \;\Tilde{\alpha}_{jk}\neq0\right\}\right)\quad\text{for all }\ell\in\N,\quad\text{where}\quad\mathcal{M}_{\mathrm{C}}^{(0)}:=\mathcal{M}.\label{eqn:recusrice:lower:index:set:redundant}
    \end{align}
    It is straightforward to observe that $\mathcal{M}_\mathrm{C}^{(\ell+1)}\subseteq \mathcal{M}_\mathrm{C}^{(\ell)}$ for all $\ell\in\N_{\geq0}$, which can be shown by induction: Clearly $\mathcal{M}_\mathrm{C}^{(1)}\subseteq \mathcal{M}_\mathrm{C}^{(0)}=\mathcal{M}$, since $\delta(\mathcal{M},\mathcal{M})\subseteq\mathcal{M}\cup \{0\}$ and $\alpha_{jk}\neq0$ if and only if $\delta(j,k)\neq 0$. Suppose now $j\in \mathcal{M}_\mathrm{C}^{(\ell+1)}$. Then, there exists an index $p\in\mathcal{M}$ and $k\in\mathcal{M}_\mathrm{C}^{(\ell)}$ such that $\delta(p,k)=j$. By the induction hypothesis, $\mathcal{M}^{(\ell)}_\mathrm{C}\subseteq\mathcal{M}_\mathrm{C}^{(\ell-1)}$, it follows that $k\in \mathcal{M}_\mathrm{C}^{(\ell-1)}$, and consequently $j\in\mathcal{M}_\mathrm{C}^{(\ell)}$, proving $\mathcal{M}_\mathrm{C}^{(\ell+1)}\subseteq \mathcal{M}_\mathrm{C}^{(\ell)}$ for all $\ell\in\N_{\geq0}$.
    
    Furthermore, $\spn\{\Tilde{x}_j\,\mid\,j\in\mathcal{M}_\mathrm{C}^{(1)}\}=\mathcal{C}^1\g$, which follows analogous to the proof of Lemma~\ref{lem:if:g:graph:admissible:so:derived:algebras}, since $\mathcal{M}_\mathrm{C}^{(1)}=\mathcal{M}_\mathrm{D}^{(1)}$ and $\mathcal{C}^1\g=\mathcal{D}^1\g$. We now proceed with the induction step:
    \begin{itemize}
        \item Suppose $z\in \mathcal{C}^{\ell+1}\g$. Then, there exist elements $x_j\in \g$ and $y_j\in \mathcal{C}^\ell\g$, for $j\in \mathcal{J}\subseteq\N$, a non-empty finite index set, such that $z=\sum_{j\in\mathcal{J}}[x_j,y_j]$. Since $\mathcal{B}$ spans $\g$, and, by induction hypothesis, $\spn\{\Tilde{x}_j\,\mid\,j\in\mathcal{M}_\mathrm{C}^{(\ell)}\}=\mathcal{C}^\ell\g$, there exists coefficients $\mu_{jp}$ and $\nu_{jq}$ with $p\in\mathcal{M}$ and $q\in \mathcal{M}_\mathrm{C}^{(\ell)}$, such that $x_j=\sum_{p\in\mathcal{M}}\mu_{jp}\Tilde{x}_p$ and $y_j=\sum_{q\in\mathcal{M}_\mathrm{C}^{(\ell)}}\nu_{jq}\Tilde{x}_q$. Thus:
        \begin{align*}
            z=\sum_{p\in\mathcal{M}}\sum_{q\in\mathcal{M}_\mathrm{C}^{(\ell)}}\sum_{j\in\mathcal{J}}\mu_{jp}\nu_{jq}[\Tilde{x}_p,\Tilde{x}_q]=\sum_{p\in\mathcal{M}}\sum_{q\in\mathcal{M}_\mathrm{C}^{(\ell)}}\sum_{j\in\mathcal{J}}\mu_{jp}\nu_{jq}\Tilde{\alpha}_{pq} \Tilde{x}_{\delta(p,q)}.
        \end{align*}
        Hence, $z\in\spn\{\Tilde{x}_j\,\mid\,j\in\mathcal{M}_\mathrm{c}^{(\ell+1)}\}$, by the definition of $\mathcal{M}_\mathrm{C}^{(\ell+1)}$ and the convention that $\delta(p,q)=0$ if $\Tilde{\alpha}_{pq}$ vanishes.
        \item Conversely, suppose $z\in \spn\{\Tilde{x}_j\,\mid\, j\in\mathcal{M}_\mathrm{C}^{(\ell+1)}\}$. Then, there exist coefficients $\lambda_j\in\mathcal{M}_\mathrm{C}^{(\ell+1)}$ for $j\in \mathcal{M}_\mathrm{C}^{(\ell+1)}$, such that $z=\sum_{j\in\mathcal{M}_\mathrm{C}^{(\ell+1)}}\lambda_j\Tilde{x}_j$. An index $j$ belonging to $\mathcal{M}_\mathrm{C}^{(\ell+1)}$ implies the existence of indices $p_j\in\mathcal{M}$ and  $q_j\in\mathcal{M}_\mathrm{C}^{(\ell)}$ such that $\delta(p_j,q_j)=j\neq0$, or equivalently $[\Tilde{x}_{p_j},\Tilde{x}_{q_j}]=\Tilde{\alpha}_{p_jq_j}\Tilde{x}_j$ with $\Tilde{\alpha}_{p_jq_j}\neq 0$. Thus, for every $j\in\mathcal{M}_\mathrm{C}^{(\ell+1)}$, we can select exactly one pair $(\hat{p}_j,\hat{q}_j)\in\mathcal{M}\times\mathcal{M}_\mathrm{C}^{(\ell)}$ such that $\delta(\hat{p}_j,\hat{q}_j)=j$, and write:
        \begin{align*}
            z=\sum_{j\in\mathcal{M}_\mathrm{C}^{(\ell+1)}}\lambda_j\frac{1}{\Tilde{\alpha}_{\hat{p}_j\hat{q}_j}}[\Tilde{x}_{p_j},\Tilde{x}_{q_j}]=\sum_{j\in\mathcal{M}_\mathrm{C}^{(\ell+1)}}\left[\Tilde{x}_{p_j},\frac{\lambda_j}{\Tilde{\alpha}_{\hat{p}_j\hat{q}_j}}\Tilde{x}_{q_j}\right].
        \end{align*}
        Hence, by the induction hypothesis $\mathcal{C}^\ell\g=\spn\{\Tilde{x}_j\,\mid\,j\in\mathcal{M}_\mathrm{C}^{(\ell)}\}$, it follows that $z\in \mathcal{C}^{\ell+1}\g$.
    \end{itemize}
    Thus, one has inductively proven that $\mathcal{C}^\ell\g=\spn\{\Tilde{x}_j\,\mid\,j\in\mathcal{M}_\mathrm{C}^{(\ell)}\}$ for all $\ell\in\N_{\geq0}$. The second claim can be shown analogously, by recognizing that $\{{x}_j\,\mid\,j\in\mathcal{N}_\mathrm{C}^{(\ell)}\}$ forms a linear independent set, as it is a subset of the basis $\mathcal{N}$.
\end{proof}

\begin{corollary}\label{cor:if:g:graph:admissible:so:central:series:algebras:subgraph:basis}
    Let $\g$ be a finite-dimensional graph-admissible Lie algebra associated with the labeled directed graph $G(V,E)$. Then, for every $\ell\in\N_{\geq0}$, there exists a subgraph $G(\Tilde{V}^{(\ell)},\Tilde{E}^{(\ell)})\subseteq G(V,E)$ that can be associated with the $\ell$-th term of the lower central series $\mathcal{C}^\ell\g$, and $\Tilde{V}^{(\ell)}=\{v_j\,\mid\, j\in\mathcal{M}_\mathrm{C}^{(\ell)}\}$.
\end{corollary}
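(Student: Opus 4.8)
The plan is to read the claim off Lemma~\ref{lem:if:g:graph:admissible:so:central:series:algebras}, whose proof already carries out the substantive work. First I would fix the (possibly overcomplete) basis $\mathcal{B}=\{\Tilde{x}_j\}_{j\in\mathcal{M}}$ of $\g$ satisfying \eqref{eqn:desired:basis:overcomplete} that was used to build $G(V,E)$ via Algorithm~\ref{alg:creating:graph}, together with the index sets $\mathcal{M}_\mathrm{C}^{(\ell)}$ defined in \eqref{eqn:recusrice:lower:index:set:redundant}. By that lemma one has $\mathcal{C}^\ell\g=\spn\{\Tilde{x}_j\mid j\in\mathcal{M}_\mathrm{C}^{(\ell)}\}$, so the natural candidate is $\Tilde{V}^{(\ell)}:=\{v_j\mid j\in\mathcal{M}_\mathrm{C}^{(\ell)}\}$ with $\Tilde{E}^{(\ell)}$ the edge set induced on these vertices, $\Tilde{E}^{(\ell)}:=\{e\in E\mid\varpi_\mathrm{s}(e),\varpi_\mathrm{l}(e),\varpi_\mathrm{e}(e)\in\Tilde{V}^{(\ell)}\}$; since $\mathcal{M}_\mathrm{C}^{(\ell)}\subseteq\mathcal{M}$, this is manifestly a subgraph of $G(V,E)$.

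The only point requiring a short argument is that $\{\Tilde{x}_j\mid j\in\mathcal{M}_\mathrm{C}^{(\ell)}\}$ is itself an admissible spanning set for $\mathcal{C}^\ell\g$ in the sense of Definition~\ref{def:graph:admissible}, so that feeding it to Algorithm~\ref{alg:creating:graph} is legitimate. Non-vanishing of these elements and the property $\Tilde{x}_j\propto\Tilde{x}_k\Leftrightarrow j=k$ are inherited from $\mathcal{B}$, and spanning is exactly the content of the lemma. The remaining ingredient is the \emph{closure} statement: if $j,k\in\mathcal{M}_\mathrm{C}^{(\ell)}$ and $\Tilde{\alpha}_{jk}\neq0$, then $\delta(j,k)\in\mathcal{M}_\mathrm{C}^{(\ell)}$. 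This holds because the pair $(j,k)$ lies in $\mathcal{M}\times\mathcal{M}_\mathrm{C}^{(\ell)}$, whence $\delta(j,k)\in\mathcal{M}_\mathrm{C}^{(\ell+1)}$ by definition \eqref{eqn:recusrice:lower:index:set:redundant}, and $\mathcal{M}_\mathrm{C}^{(\ell+1)}\subseteq\mathcal{M}_\mathrm{C}^{(\ell)}$ by the monotonicity established in the proof of Lemma~\ref{lem:if:g:graph:admissible:so:central:series:algebras}. Consequently $[\Tilde{x}_j,\Tilde{x}_k]=\Tilde{\alpha}_{jk}\Tilde{x}_{\delta(j,k)}$ with $\delta(j,k)$ again an index of the chosen spanning set, so \eqref{eqn:desired:basis:overcomplete} is satisfied by the restriction and $\mathcal{C}^\ell\g$ is graph-admissible via this basis.

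It then remains to match outputs. Running Algorithm~\ref{alg:creating:graph} on $\{\Tilde{x}_j\mid j\in\mathcal{M}_\mathrm{C}^{(\ell)}\}$ yields the vertex set $\Tilde{V}^{(\ell)}$ and, for each pair $j<k$ in $\mathcal{M}_\mathrm{C}^{(\ell)}$ with $\Tilde{\alpha}_{jk}\neq0$, the two edges $(v_j,v_k,v_{\delta(j,k)})$ and $(v_k,v_j,v_{\delta(j,k)})$, all of which have their third vertex in $\Tilde{V}^{(\ell)}$ by the closure statement. Running the same algorithm on all of $\mathcal{B}$ produces these edges among others, so the edges obtained from the restricted set are exactly those of $E$ whose starting, labeling, and end vertices all lie in $\Tilde{V}^{(\ell)}$, i.e.\ precisely $\Tilde{E}^{(\ell)}$. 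Hence $G(\Tilde{V}^{(\ell)},\Tilde{E}^{(\ell)})$ is by Definition~\ref{def:assocaited:graph} a graph associated with $\mathcal{C}^\ell\g$, and by construction a subgraph of $G(V,E)$ with $\Tilde{V}^{(\ell)}=\{v_j\mid j\in\mathcal{M}_\mathrm{C}^{(\ell)}\}$. I expect the closure observation to be the only step needing any thought; the rest is bookkeeping about the action of Algorithm~\ref{alg:creating:graph}, and for minimal-graph-admissible $\g$ the argument is word-for-word the same with $\mathcal{M}_\mathrm{C}^{(\ell)}$ replaced by $\mathcal{N}_\mathrm{C}^{(\ell)}$.
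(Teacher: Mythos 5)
Your proposal is correct and follows essentially the same route as the paper, which simply states that the corollary is obtained by adapting the proof of Lemma~\ref{lem:if:g:graph:admissible:so:central:series:algebras}; you carry out that adaptation explicitly, with the closure observation $\delta(j,k)\in\mathcal{M}_\mathrm{C}^{(\ell+1)}\subseteq\mathcal{M}_\mathrm{C}^{(\ell)}$ and the comparison of Algorithm~\ref{alg:creating:graph} applied to the restricted spanning set versus the induced edges being exactly the intended bookkeeping.
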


\begin{proof}
    This follows directly by adapting the argument presented in the proof of Lemma~\ref{lem:if:g:graph:admissible:so:central:series:algebras}.
\end{proof}

Lemma~\ref{lem:if:g:graph:admissible:so:central:series:algebras} implies that one can now associate a graph $\mathcal{C}^{\ell} G(V,E)$ with the $\ell$-th element of the lower central series by applying Algorithm~\ref{alg:creating:graph} to the (possibly overcomplete) basis $\mathcal{C}^\ell V:=\{\Tilde{x}_j\,\mid\,j\in\mathcal{M}_\mathrm{C}^{(\ell)}\}$ of $\mathcal{C}^\ell\g$. However, to facilitate this, we propose an alternative approach via Algorithm~\ref{alg:generating:lower:central:series}. There, we extend the graph $G(V,E)$ to a tripartite structure $G(V,E,E_\mathrm{r}^{(0)})$, where $E_\mathrm{r}^{(0)}$ is an initially empty set of labeled directed edges of a distinct other type\footnote{These edges will be illustrated using dashed lines throughout this work. However, to visualize the semidirect structure of a Lie algebra, we will also represent the corresponding edges with dashed lines. The convention being in each case will be clear from the context.}. We now denote by $\mathcal{C}^\ell G(V,E,E_\mathrm{r}^{(0)})\equiv G(\mathcal{C}^\ell V,\mathcal{C}^\ell E,E_\mathrm{r}^{(\ell)})$ the subgraphs of $G(V,E,\emptyset)$ that can be employed to construct graphs that can be in turn associated with $\mathcal{C}^\ell\g$ and whose vertices are $\{\Tilde{x}_j\,\mid\,j\in\mathcal{M}_\mathrm{C}^{(\ell)}\}$. The edge set $E_\mathrm{r}^{(\ell)}$ is defined as:
\begin{align*}
    E_\mathrm{r}^{(\ell)}:=\{e\in E\,\mid\,\varpi_\mathrm{s}(e),\varpi_\mathrm{e}(e)\in \mathcal{C}^\ell V\text{ and }\varpi_\mathrm{l}(e)\in V\setminus\mathcal{C}^\ell V\}\qquad\text{for all }\ell\in\N_{\geq0}.
\end{align*}
The actual labeled directed graphs $\mathcal{C}^\ell G(V,E)$ associated with the algebras $\mathcal{C}^\ell\g$ of the lower central series are defined as the subgraphs of the graphs $\mathcal{C}^\ell G(V,E,E_\mathrm{r}^{(0)})$, where the edge set $E_\mathrm{r}^{(\ell)}$ is removed. We then have the following result:

\begin{lemma}\label{lem:lower:central:series:of:graphs}
    Let $\g$ be a finite-dimensional graph-admissible Lie algebra associated with the labeled directed graph $G(V,E)$. Then, for each $\ell\in\N_{\geq0}$, the graphs $\mathcal{C}^\ell G(V,E)=G(\mathcal{C}^\ell V,\mathcal{C}^\ell E)$, obtained via Algorithm~\ref{alg:generating:lower:central:series}, can be associated with $\ell$-th Lie algebra in the lower central series of $\g$.
\end{lemma}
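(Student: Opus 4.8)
The plan is to argue by induction on $\ell$, tracking the full tripartite graph $G(\mathcal{C}^\ell V,\mathcal{C}^\ell E,E_\mathrm{r}^{(\ell)})$ produced by Algorithm~\ref{alg:generating:lower:central:series} at the $\ell$-th stage and proving that its vertex set is $\{\Tilde{x}_j\mid j\in\mathcal{M}_\mathrm{C}^{(\ell)}\}$, with $\mathcal{M}_\mathrm{C}^{(\ell)}$ as in \eqref{eqn:recusrice:lower:index:set:redundant}, that $\mathcal{C}^\ell E=\{e\in E\mid\varpi_\mathrm{s}(e),\varpi_\mathrm{l}(e),\varpi_\mathrm{e}(e)\in\mathcal{C}^\ell V\}$, and that $E_\mathrm{r}^{(\ell)}$ is the auxiliary set defined in the excerpt. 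Granting this, Lemma~\ref{lem:if:g:graph:admissible:so:central:series:algebras} gives $\mathcal{C}^\ell\g=\spn\{\Tilde{x}_j\mid j\in\mathcal{M}_\mathrm{C}^{(\ell)}\}$ and, via its monotonicity $\mathcal{M}_\mathrm{C}^{(\ell+1)}\subseteq\mathcal{M}_\mathrm{C}^{(\ell)}$, shows that this family of vertices is a (possibly overcomplete) graph-admissible basis of $\mathcal{C}^\ell\g$ satisfying \eqref{eqn:desired:basis:overcomplete}; applying Algorithm~\ref{alg:creating:graph} to it returns exactly the edge set $\mathcal{C}^\ell E$, so by Definition~\ref{def:assocaited:graph} the graph $\mathcal{C}^\ell G(V,E)=G(\mathcal{C}^\ell V,\mathcal{C}^\ell E)$, obtained from the tripartite graph by discarding $E_\mathrm{r}^{(\ell)}$, is associated with $\mathcal{C}^\ell\g$. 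This is the statement of Corollary~\ref{cor:if:g:graph:admissible:so:central:series:algebras:subgraph:basis}, made algorithmic.

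The base case $\ell=0$ is immediate, since $\mathcal{M}_\mathrm{C}^{(0)}=\mathcal{M}$, $E_\mathrm{r}^{(0)}=\emptyset$, and $G(V,E)$ is by hypothesis associated with $\g=\mathcal{C}^0\g$. For the inductive step the crucial point is the vertex-pruning rule: a vertex $v_j\in\mathcal{C}^\ell V$ survives into $\mathcal{C}^{\ell+1}V$ exactly when it is the terminal vertex of some edge of $\mathcal{C}^\ell E\cup E_\mathrm{r}^{(\ell)}$. Since $\mathcal{C}^\ell E\cup E_\mathrm{r}^{(\ell)}$ is precisely the set of edges of $E$ whose source and target both lie in $\mathcal{C}^\ell V$, and since by Algorithm~\ref{alg:creating:graph} an edge $(v_q,v_p,v_j)$ encodes a relation $[\Tilde{x}_q,\Tilde{x}_p]=\kappa\,\Tilde{x}_j$ with $\kappa\in\mathbb{F}^*$, this criterion says that $v_j$ is kept iff there are indices $q\in\mathcal{M}_\mathrm{C}^{(\ell)}$ and $p\in\mathcal{M}$ with $\Tilde{\alpha}_{qp}\neq0$ and $\delta(q,p)=j$; here one uses the antisymmetry of $\Tilde{\boldsymbol{\alpha}}$, the symmetry of $\delta$, and Proposition~\ref{prop:conditions:for:edges}(b) to pass between an edge and its swap, so that the $\mathcal{C}^\ell\g$-factor of the bracket may always be read off as the source of an edge in $\mathcal{C}^\ell E\cup E_\mathrm{r}^{(\ell)}$. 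This is exactly the defining condition $j\in\mathcal{M}_\mathrm{C}^{(\ell+1)}$ of \eqref{eqn:recusrice:lower:index:set:redundant}, whence $\mathcal{C}^{\ell+1}V=\{\Tilde{x}_j\mid j\in\mathcal{M}_\mathrm{C}^{(\ell+1)}\}$. The remaining bookkeeping is routine: removing the non-retained vertices deletes from $E$ exactly the edges incident, as source, label, or target, to a vertex of $\mathcal{C}^\ell V\setminus\mathcal{C}^{\ell+1}V$; among the surviving edges, those whose label also lies in $\mathcal{C}^{\ell+1}V$ form $\mathcal{C}^{\ell+1}E$ and the rest form $E_\mathrm{r}^{(\ell+1)}$, matching the definitions preceding the statement. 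Monotonicity of the index sets gives $\delta(\mathcal{M}_\mathrm{C}^{(\ell+1)}\times\mathcal{M}_\mathrm{C}^{(\ell+1)})\subseteq\mathcal{M}_\mathrm{C}^{(\ell+1)}\cup\{0\}$, so $\mathcal{C}^{\ell+1}E$ is indeed what Algorithm~\ref{alg:creating:graph} produces on the basis $\{\Tilde{x}_j\mid j\in\mathcal{M}_\mathrm{C}^{(\ell+1)}\}$ of $\mathcal{C}^{\ell+1}\g$, closing the induction.

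The step I expect to be the main obstacle — and the reason the tripartite structure is genuinely required here, in contrast to the derived-series construction of Algorithm~\ref{alg:generating:generating:the:graph:derived:altered} — is verifying that $\mathcal{C}^\ell E\cup E_\mathrm{r}^{(\ell)}$ records \emph{all} and \emph{only} the brackets $[\Tilde{x}_q,\Tilde{x}_p]$ with $q\in\mathcal{M}_\mathrm{C}^{(\ell)}$, $p\in\mathcal{M}$ whose value is proportional to a vertex. Because $\mathcal{C}^{\ell+1}\g=[\g,\mathcal{C}^\ell\g]$ pairs $\mathcal{C}^\ell\g$ against the whole of $\g$, a vertex of $\mathcal{C}^\ell V$ may owe its survival to a bracket $[\Tilde{x}_q,\Tilde{x}_p]$ with $\Tilde{x}_p\in V\setminus\mathcal{C}^\ell V$; such a relation is invisible inside the pruned subgraph $\mathcal{C}^\ell G(V,E)$ but is carried, after the edge-swap of Proposition~\ref{prop:conditions:for:edges}(b), by a member of $E_\mathrm{r}^{(\ell)}$. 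Establishing carefully that the terminal vertices of $\mathcal{C}^\ell E\cup E_\mathrm{r}^{(\ell)}$ are exactly $\{v_j\mid j\in\mathcal{M}_\mathrm{C}^{(\ell+1)}\}$ is the technical heart of the argument; once it is in place, the rest reduces, just as in Lemma~\ref{lem:derived:series:graph:alg:valid}, to translating the set-theoretic description of $\mathcal{M}_\mathrm{C}^{(\ell)}$ from Lemma~\ref{lem:if:g:graph:admissible:so:central:series:algebras} into the language of Algorithm~\ref{alg:creating:graph} and Definition~\ref{def:assocaited:graph}.
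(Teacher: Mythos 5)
Your proposal is correct and follows essentially the same route as the paper: an induction on $\ell$ establishing $\mathcal{C}^\ell V=\{\Tilde{x}_j\mid j\in\mathcal{M}_\mathrm{C}^{(\ell)}\}$ and $\mathcal{C}^\ell E=\{e\in E\mid\varpi_\mathrm{s}(e),\varpi_\mathrm{l}(e),\varpi_\mathrm{e}(e)\in\mathcal{C}^\ell V\}$, with the survival criterion for a vertex (being the target of an edge in $\mathcal{C}^\ell E\cup E_\mathrm{r}^{(\ell)}$) translated into the defining condition of $\mathcal{M}_\mathrm{C}^{(\ell+1)}$, and the conclusion drawn from Lemma~\ref{lem:if:g:graph:admissible:so:central:series:algebras} together with Corollary~\ref{cor:if:g:graph:admissible:so:central:series:algebras:subgraph:basis}. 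The point you flag as the technical heart — that brackets against elements of $V\setminus\mathcal{C}^\ell V$ are carried by $E_\mathrm{r}^{(\ell)}$ after the edge swap of Proposition~\ref{prop:conditions:for:edges}(b) — is exactly the case distinction the paper's inductive step makes, so no gap remains.
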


The core idea behind Algorithm~\ref{alg:creating:graph} is to construct a graph associated with $\mathcal{C}^\ell\g$ not directly using the set $\{v_j\,\mid\,j\in\mathcal{M}_\mathrm{C}^{(\ell)}\}$, but iteratively pruning the original graph $G(V,E)$. Starting with $G(V,E)$, one classifies the vertices from $V$ into the two sets $\mathcal{C}^1V:=\{v_j\,\mid\,j\in\mathcal{M}_\mathrm{C}^{(1)}\}$, which corresponds to the elements spanning $\mathcal{C}^1\g$, and the set $V\setminus \mathcal{C}^1V$, which do not. The vertices $v\in\mathcal{C}^1V$ are precisely those for which there exists an edge $e\in E$ such that $\varpi_\mathrm{e}(e)=v$. Thus, to obtain the graph associated with $\mathcal{C}^1\g$, one must simply remove all vertices from $V$ that are loose ends (i.e., are not endpoints of any edge $e\in E$). The corresponding edge set $\mathcal{C}^1E$ is then acquired by  removing all edges $e\in E$ for which any of $\varpi_\mathrm{s}(e),\varpi_\mathrm{l}(e),\varpi_\mathrm{e}(e)\notin\mathcal{C}^1V$. The resulting graph $G(\mathcal{C}^1V,\mathcal{C}^1 E):=\mathcal{C}^1 G(V,E)$ is clearly associated with $\mathcal{C}^1\g$. 

To construct $\mathcal{C}^2G(V,E)$, we identify $\mathcal{C}^2 V:=\{v_j\,\mid\,j\in\mathcal{M}_\mathrm{C}^{(2)}\}$ as the subset of $\mathcal{C}^1V$ that consists of all vertices $v\in\mathcal{C}^1 V$ for which there exists an edge $e\in E$ such that $\varpi_\mathrm{e}(e)=v$ and either $\varpi_\mathrm{l}(e)\in\mathcal{C}^1 V$ or $\varpi_\mathrm{s}(e)\in\mathcal{C}^1V$. Crucially, this step considers all edges $e\in E$ to find the relevant ones, not just those in $e\in\mathcal{C}^1 E$. To account for this, we introduce the aforementioned set $E_\mathrm{r}^{(1)}$, consisting of edges $e\in E\setminus\mathcal{C}^1 E$ such that $\varpi_\mathrm{e}(e)\in\mathcal{C}^1 V$ and either $\varpi_\mathrm{l}(e)\in\mathcal{C}^1 V$ or $\varpi_\mathrm{s}(e)\in\mathcal{C}^1 V$. By Proposition~\ref{prop:conditions:for:edges}, these edges can be reduced to those originating from vertices $w\in\mathcal{C}^1 V$, since any edge $e\in E_\mathrm{r}^{(1)}$ with $\varpi_\mathrm{s}(e)\in V\setminus\mathcal{C}^1V $ must satisfy $\varpi_\mathrm{l}(e)\in\mathcal{C}^1 V$. Such an edge $e$ belongs to $E$ if and only if the edge $e'=(\varpi_\mathrm{l}(e),\varpi_\mathrm{s}(e),\varpi_\mathrm{e}(e))$ belongs to $E$. Thus removing these edges $e\in E\setminus\mathcal{C}^1E$ with $\varpi_\mathrm{l}(e)\in \mathcal{C}^1V $ and $\varpi_\mathrm{s}(e)\in V\setminus\mathcal{C}^1 V$ from $E_\mathrm{r}^{(1)}$ does not lose any information since $e'=(\varpi_\mathrm{l}(e),\varpi_\mathrm{s}(e),\varpi_\mathrm{e}(e))\in E_\mathrm{r}^{(1)}$ and $\varpi_\mathrm{e}(e')=\varpi_\mathrm{e}(e)$. Therefore, removing edges $e$ from $E_\mathrm{r}^{(1)}$ that satisfy $\varpi_\mathrm{s}(e)\notin \mathcal{C}^1V$ does not result in loss of information and we can simply define $E_\mathrm{r}^{(1)}$, as the set of edges $e\in E\setminus \mathcal{C}^1E$ with $\varpi_\mathrm{s}(e)\in\mathcal{C}^1V$. This procedure can be iteratively applied to construct all graphs $\mathcal{C}^\ell G(V,E)=G(\mathcal{C}^\ell V,\mathcal{C}^\ell E)$ associated with the algebras $\mathcal{C}^\ell \g$ of the lower central series. To visually distinguish edges from $E_\mathrm{r}^{(\ell)}$, we represent them using dashed lines.

\begin{proof}
    Let $\g$ be a finite-dimensional graph-admissible Lie algebra. Then $\g$ admits a (possibly overcomplete) basis $\mathcal{B}=\{\Tilde{x}_j\}_{j\in\mathcal{M}}\equiv V$ that satisfies the relations \eqref{eqn:desired:basis}. We aim to show that the graph $G(\mathcal{C}^\ell V,\mathcal{C}^\ell E)$, constructed via Algorithm~\ref{alg:generating:lower:central:series}, can be associated with the Lie algebra $\mathcal{C}^\ell \g$. This can, due to Corollary~\ref{cor:if:g:graph:admissible:so:central:series:algebras:subgraph:basis}, be shown by inductively proving that $\mathcal{C}^\ell V=\{\Tilde{x}_j\,\mid\,j\in\mathcal{M}_\mathrm{C}^{(\ell)}\}$ and $\mathcal{C}^\ell E=\{e\in E\,\mid\,\varpi_\mathrm{s}(e),\varpi_\mathrm{l}(e),\varpi_\mathrm{e}(e)\in\mathcal{C}^\ell V\}$. The base case $\ell=0$ is immediate. For the inductive step, consider the case $\ell+1$. By Algorithm~\ref{alg:generating:lower:central:series} and the induction hypothesis, the set $\mathcal{C}^{\ell+1}V$ includes every vertex $v\in\mathcal{C}^\ell V=\{\Tilde{x}_j\,\mid\,j\in\mathcal{M}_\mathrm{C}^{(\ell)}\}$ for which there exists an edge $e$ in $\mathcal{C}^\ell E$ or $E_\mathrm{r}^{(\ell)}$ such that $\varpi_\mathrm{e}(e)=v$ (cf. lines 10-12, 21). Consider both cases separately:
    \begin{itemize}
        \item If $e\in \mathcal{C}^\ell E$ and $\varpi_\mathrm{e}(e)=v$, then there exist two vertices $v_\mathrm{s},v_\mathrm{l}\in\mathcal{C}^\ell V$ such that $[v_\mathrm{s},v_\mathrm{l}]\propto v$. This implies that $v\in \{\Tilde{x}_j\,\mid\,j\in\mathcal{M}_\mathrm{C}^{(\ell+1)}\}$, since $\mathcal{C}^\ell V=\{\Tilde{x}_j\,\mid\,j\in\mathcal{M}_\mathrm{C}^{(\ell)}\}\subseteq V$.
        \item If $e\in E_\mathrm{r}^{(\ell)}$ and $\varpi_\mathrm{e}(e)=v$, then there exists a vertex $v_\mathrm{s}\in \mathcal{C}^\ell V$ and a vertex $v_\mathrm{l}\in V\setminus\mathcal{C}^\ell V$ such that $[v_\mathrm{s},v_\mathrm{l}]\propto v$. This implies again that $v\in \{\Tilde{x}_j\,\mid\,j\in\mathcal{M}_\mathrm{C}^{(\ell+1)}\}$, since $V\setminus\mathcal{C}^\ell V=\{\Tilde{x}_j\,\mid\, j\in \mathcal{M}\setminus\mathcal{M}_\mathrm{C}^{(\ell)}\}\subseteq V$.
    \end{itemize}
    Thus one has $\mathcal{C}^{\ell+1}V\subseteq\{\Tilde{x}_j\,\mid\,j\in\mathcal{M}_\mathrm{C}^{(\ell+1)}\}$. Conversely, suppose $v\in \{\Tilde{x}_j\,\mid\,j\in\mathcal{M}_\mathrm{C}^{(\ell+1)}\}$. Then, by the induction hypothesis, there exists an element $v_\mathrm{s}\in \{\Tilde{x}_j\,\mid\,j\in\mathcal{M}_\mathrm{C}^{(\ell)}\}=\mathcal{C}^\ell V$ and an element  $v_\mathrm{l}\in V$ such that $[v_\mathrm{s},v_\mathrm{l}]\propto v$. Therefore, there exists an edge $e\in E$ such that $\varpi_\mathrm{s}(e),\varpi_\mathrm{e}(e)\in \mathcal{C}^\ell V$ and $\varpi_\mathrm{l}(e)\in V$, implying $e\in\mathcal{C}^\ell E\cup E_\mathrm{r}^{(\ell)}$, and hence $v\in\mathcal{C}^{\ell+1} V$, since $\varpi_\mathrm{e}(e)=v$. This shows $\{\Tilde{x}_j\,\mid\,j\in\mathcal{M}_\mathrm{C}^{(\ell+1)}\}\subseteq \mathcal{C}^{\ell+1}V$ and consequently $\mathcal{C}^{\ell+1}V=\{\Tilde{x}_j\,\mid\,j\in\mathcal{M}_\mathrm{C}^{(\ell+1)}\}$. Finally, by Algorithm~\ref{alg:generating:lower:central:series} (cf. line 16-18, 22), one has $\mathcal{C}^{\ell+1}E=\{e\in E\,\mid\,\varpi_\mathrm{s}(e),\varpi_\mathrm{l}(e),\varpi_\mathrm{e}(e)\in\mathcal{C}^{\ell+1} V\}$, which completes the proof.
\end{proof}

\begin{algorithm}[htpb]
        \DontPrintSemicolon
        \KwData{A (possibly overcomplete) basis $\mathcal{B}=\{\Tilde{x}_j\}_{j\in\mathcal{M}}$ of the finite-dimensional graph-admissible Lie algebra $\g$, satisfying the Lie bracket relations \eqref{eqn:desired:basis:overcomplete}.}
        \KwResult{A sequence of labeled directed graphs $\mathcal{C}^\ell G(V,E)$ associated with Lie algebras $\mathcal{C}^\ell\g$ of the lower central series for the Lie algebra $\g$ for all $\ell\geq 0$, constructed with respect to the given basis $\mathcal{B}$.}
        \SetKwData{Left}{left}\SetKwData{This}{this}\SetKwData{Up}{up}
        \SetKwFunction{Union}{Union}\SetKwFunction{FindCompress}{FindCompress}
        \SetKwInOut{Input}{input}\SetKwInOut{Output}{output}
    
        \BlankLine
        $\mathcal{C}$ $\leftarrow$ $\emptyset$
        \tcc*[h]{Initialize the set that containing all graphs $\mathcal{C}^\ell G(V,E,E_{\mathrm{r}})$ associated with each Lie algebra $\mathcal{C}^\ell \g$ of the lower central series}\;
        \BlankLine
        $V$ $\leftarrow$ $\mathcal{B}$\tcc*[h]{Initialize the vertex set for $G(V,E)$ with the basis $\mathcal{B}$}\;
        $E$ $\leftarrow$ $\emptyset$\tcc*[h]{Initialize the edge set for $G(V,E)$}\;
        $E_{\mathrm{r}}^{(0)}$ $\leftarrow$ $\emptyset$\;
        \BlankLine
        \ForEach(\tcc*[h]{add all relevant edges to edge set $E$}){$(j,k)\in \mathcal{M}\times\mathcal{M}$ with $j<k$}{
            \If{$[\Tilde{x}_j,\Tilde{x}_k]=\Tilde{\alpha}_{jk} x_{\delta(j,k)}$ with $\Tilde{\alpha}_{jk}\neq 0$}{
                $E$ $\leftarrow$ $E\cup \{(\Tilde{x}_j,\Tilde{x}_k,\Tilde{x}_{\delta(j,k)})\}\cup \{(\Tilde{x}_k,\Tilde{x}_j,\Tilde{x}_{\delta(j,k)})\}$
                \tcc*[h]{Add edge to edge set $E$; each edge is an ordered triple: (start vertex, edge-label, end vertex)}\;
            }
        }
        $\mathcal{C}$ $\leftarrow$ $\{G(V,E)\}$\tcc*[h]{Add initial graph associated with $\g=\mathcal{C}^0\g$ to $\mathcal{C}$}\;
        \BlankLine
        \ForEach{$\ell\in\N_{\geq1}$}{
            \ForEach{vertex $v\in V$}{
                \If{there exists no edge $e\in E\cup E_\mathrm{r}$ such that $\varpi_\mathrm{e}(e)=v$}{
                    $V$ $\leftarrow$ $V\setminus\{v\}$\tcc*[h]{Remove $v$ from $V$}
                }
            }
            \ForEach{edge $e\in E_{\mathrm{r}}$}{
                \If{there exists no vertices $v_\mathrm{s},v_\mathrm{e}\in V$ such that $\varpi_\mathrm{s}(e)=v_s$ and $\varpi_\mathrm{e}(e)=v_e$}{
                    $E_\mathrm{r}$ $\leftarrow$ $\mathrm{E}_\mathrm{r}\setminus \{e\}$\tcc*[h]{Remove $e$ from $E_\mathrm{r}$}\;
                }
            }
            \ForEach{edge $e\in E$}{
                \uIf{there exists no vertices $ v_\mathrm{s},v_\mathrm{l},v_\mathrm{e}\in V$ such that $e=(v_\mathrm{s},v_\mathrm{l},v_\mathrm{e})$}{
                    $ E$ $\leftarrow$ $E\setminus\{e\}$\tcc*[h]{Remove $e$ from $E$}\;     
                }
                \uElseIf{there exists vertices $ v_\mathrm{s},v_\mathrm{e}\in V$ such that $\varpi_\mathrm{s}(e)=v_s$ and $\varpi_\mathrm{e}(e)=v_e$}{
                    $E_\mathrm{r}$ $\leftarrow$ $\mathrm{E}_\mathrm{r}\cup \{e\}$\tcc*[h]{Add $e$ to $E_\mathrm{r}$}\;
                }
            }
            $V^{(\ell)}\equiv\mathcal{C}^\ell V\leftarrow V$\;
            $E^{(\ell)}\equiv \mathcal{C}^\ell E\leftarrow E$\;
            $E_\mathrm{r}^{(\ell)}$ $\leftarrow$ $ E_\mathrm{r}$\;
            $\mathcal{C}$ $\leftarrow$ $\mathcal{C}\cup\{\mathcal{C}^\ell G(V,E,E_{\mathrm{r}})=G(\mathcal{C}^{\ell}V,\mathcal{C}^{\ell}E, E_{\mathrm{r}}^{(\ell)})\}$\tcc*[h]{Add updated graph $\mathcal{C}^\ell G(V,E)$}\;
        }
        \Return $\mathcal{C}$ \tcc*[h]{Return the set of all graphs $\mathcal{C}^\ell G(V,E)$ associated with the Lie algebras $\mathcal{C}^\ell\g$ of the lower central series for all $\ell\geq 0$}\;
\caption{Algorithm for generating a labeled directed graph associated with $\ell$-th Lie algebra of the lower central series $\mathcal{C}^\ell\g$ for a finite-dimensional graph-admissible Lie algebra $\g$}\label{alg:generating:lower:central:series}
\end{algorithm}

\begin{theorem}\label{thm:nilpotebt:if:graph:series.Terminates}
    Let $\g$ be a finite-dimensional graph-admissible Lie algebra, and let $G(V,E)$ be any labeled directed graph associated with $\g$. Then, $\g$ is nilpotent if and only if the series of graphs $\mathcal{C}^\ell G(V,E)$ constructed by Algorithm~\ref{alg:generating:lower:central:series}, terminates after a finite number of steps.
\end{theorem}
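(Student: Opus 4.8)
The plan is to prove the equivalence by transporting the argument used for Theorem~\ref{thm:solvable:termination:derived:series:graph} to the lower central series, relying on the two structural lemmas already available. First I would recall that, by Lemma~\ref{lem:lower:central:series:of:graphs}, each graph $\mathcal{C}^\ell G(V,E) = G(\mathcal{C}^\ell V, \mathcal{C}^\ell E)$ produced by Algorithm~\ref{alg:generating:lower:central:series} can be associated with $\mathcal{C}^\ell\g$, and that Lemma~\ref{lem:if:g:graph:admissible:so:central:series:algebras} together with Corollary~\ref{cor:if:g:graph:admissible:so:central:series:algebras:subgraph:basis} pins down the vertex set explicitly as $\mathcal{C}^\ell V = \{\Tilde{x}_j \mid j \in \mathcal{M}_\mathrm{C}^{(\ell)}\}$, so that $\mathcal{C}^\ell\g = \spn\{\mathcal{C}^\ell V\} = \lie{\mathcal{C}^\ell V}$. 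The inclusion $\mathcal{M}_\mathrm{C}^{(\ell+1)} \subseteq \mathcal{M}_\mathrm{C}^{(\ell)}$ obtained in the proof of Lemma~\ref{lem:if:g:graph:admissible:so:central:series:algebras} moreover shows that the sequence of graphs is non-increasing, so that once a graph in the sequence is trivial it stays trivial; hence \enquote{terminates after a finite number of steps} unambiguously means that there is $\ell_* \in \N_{\geq 0}$ with $\mathcal{C}^{\ell_*} V = \emptyset = \mathcal{C}^{\ell_*} E$, and I would adopt, as in the derived case, the convention $\lie{\emptyset} = \spn\{0\}$.

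For the \enquote{if} direction, suppose the sequence $\mathcal{C}^\ell G(V,E)$ terminates, say $\mathcal{C}^{\ell_*} V = \emptyset$. Then $\mathcal{C}^{\ell_*}\g = \spn\{\mathcal{C}^{\ell_*} V\} = \spn\{\emptyset\} = \{0\}$, so the lower central series reaches $\{0\}$ after finitely many steps and $\g$ is nilpotent by Definition~\ref{def:solvable:nilpotent}. For the \enquote{only if} direction, assume $\g$ is nilpotent, so there is $k_* \in \N_{\geq 0}$ with $\mathcal{C}^{k_*}\g = \{0\}$. The set $\mathcal{C}^{k_*} V = \{\Tilde{x}_j \mid j \in \mathcal{M}_\mathrm{C}^{(k_*)}\}$ consists of non-zero vectors --- each $\Tilde{x}_j$ is non-zero by Definition~\ref{def:graph:admissible} --- and spans $\mathcal{C}^{k_*}\g = \{0\}$, which forces $\mathcal{C}^{k_*} V = \emptyset$. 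Since every edge of $\mathcal{C}^{k_*} G(V,E)$ has all three of its vertices in $\mathcal{C}^{k_*} V$, it follows that $\mathcal{C}^{k_*} E = \emptyset$ as well; hence the graph $\mathcal{C}^{k_*} G(V,E)$ is trivial and the sequence terminates.

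The proof is short because the substantive work is already contained in Lemmas~\ref{lem:if:g:graph:admissible:so:central:series:algebras} and~\ref{lem:lower:central:series:of:graphs}. The only place that calls for care is the bookkeeping around the word \enquote{termination}: one must make explicit that emptiness of the $k_*$-th vertex set is equivalent to the vanishing of $\mathcal{C}^{k_*}\g$, which rests on the twin facts that $\mathcal{C}^\ell V$ always spans $\mathcal{C}^\ell\g$ and that $\mathcal{C}^\ell V$ never contains the zero vector. Beyond spelling out these conventions there is no genuine obstacle, and the statement follows at once by combining the $\Leftarrow$ and $\Rightarrow$ implications above.
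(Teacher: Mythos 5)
Your proof is correct and follows essentially the same route as the paper: both rest on the fact that $\mathcal{C}^\ell V$ spans $\mathcal{C}^\ell\g$ (via Lemma~\ref{lem:lower:central:series:of:graphs}/Lemma~\ref{lem:if:g:graph:admissible:so:central:series:algebras}), so that $\mathcal{C}^\ell V=\emptyset$ if and only if $\mathcal{C}^\ell\g=\{0\}$, with the edge set vanishing trivially, and then conclude via Definition~\ref{def:solvable:nilpotent}. Your added remark that the vertices are non-zero (so a spanning set of $\{0\}$ must be empty) merely makes explicit what the paper leaves implicit.
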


Here, similar to Theorem~\ref{thm:solvable:termination:derived:series:graph}, we need to clarify what is meant by termination in this context. The sequence of graphs $\mathcal{C}^\ell G(V,E)$ is said to terminate, if for some $\ell_*\in\N_{\geq0}$, the corresponding graph $\mathcal{C}^{\ell_*}G(V,E)$ is trivial, i.e., both its vertex and edges sets are empty: $\mathcal{C}^{\ell_*}V=\emptyset$ and $\mathcal{C}^{\ell_*}E=\emptyset$. We also recall that we adopted the convention that the trivial graph induces the trivial Lie algebra: $\lie{\emptyset}=\spn\{0\}$. This aligns with the standard interpretation of the termination of the lower central series, where $\mathcal{C}^{\ell_*}\g=\{0\}$ for some $\ell_*\in\N_{\geq0}$.

\begin{proof}
    This result is a direct consequence of the observation that the vertex set $\mathcal{C}^\ell V$ spans the Lie algebra $\mathcal{C}^\ell\g$, i.e., $\spn\{\mathcal{C}^\ell V\}=\mathcal{C}^\ell \g$. Therefore,  $\mathcal{C}^\ell V=\emptyset$ if and only if $\mathcal{C}^\ell\g=\{0\}$. Note that $\mathcal{C}^\ell V=\emptyset$ implies trivially that $\mathcal{C}^\ell E=\emptyset$. This establishes the equivalence between the termination of the lower central  series of the Lie algebra $\g$ and the corresponding graph sequence generated by Algorithm~\ref{alg:generating:lower:central:series}. The claim follows then directly from Definition~\ref{def:solvable:nilpotent}.
\end{proof}

We can demonstrate Algorithm~\ref{alg:generating:lower:central:series} for a simple example:

\begin{tcolorbox}[breakable, colback=Cerulean!3!white,colframe=Cerulean!85!black,title=\textbf{Example}: Application of Algorithm~\ref{alg:generating:lower:central:series}]
    \begin{example}\label{exa:example:lower:central:series}
    Consider the real seven-dimensional Lie algebra $\g$, spanned by the basis $\{v_j\}_{j=1}^7$, and defined by the non-trivial Lie bracket relations $[v_1,v_j]=v_{j+1}$ for all $j\in\{2,\ldots,6\}$. The lower central series of the Lie algebra is given by:
    \begin{align*}
        \mathcal{C}^0\g&=\spn\{\{v_j\}_{j=1}^7\},\;&\;\mathcal{C}^1\g&=\spn\{\{v_j\}_{j=3}^7\},\;&\;\mathcal{C}^2\g&=\spn\{\{v_j\}_{j=4}^7\},\\
        \mathcal{C}^3\g&=\spn\{\{v_j\}_{j=5}^7\},\;&\;
        \mathcal{C}^4\g&=\spn\{\{v_j\}_{j=6}^7\},\;&\;\mathcal{C}^5\g&=\spn\{\{v_j\}_{j=7}^7\},
    \end{align*}
    while $\mathcal{C}^6\g=\{0\}=\mathcal{C}^\ell \g$ for all $\ell\geq 7$.
    This Lie algebra, similar to the Lie algebra considered in Example~\ref{exa:demonstration:algorithm:derived:series:minimal:graph}, is part of a family of nilpotent Lie algebras that can be faithfully realized within the skew-hermitian Weyl algebra $\hat{A}_1$ \cite{A1:project}. In fact, this algebra can be seen as the first derived algebra of the solvable eight-dimensional Lie algebra from the same family of algebras as the seven-dimensional Lie algebra from Example~\ref{exa:demonstration:algorithm:derived:series:minimal:graph}.
    
    Algorithm~\ref{alg:generating:lower:central:series} can be utilized to compute the sequence of labeled directed graphs $\mathcal{C}^\ell G(V,E,E_\mathrm{r})$ that are associated with each term $\mathcal{C}^\ell \g$ in the lower central series. The graphs are depicted in Figure~\ref{fig:example:nilpotent:minimal}.
        \begin{figure}[H]
            \centering
            \includegraphics[width=0.95\linewidth]{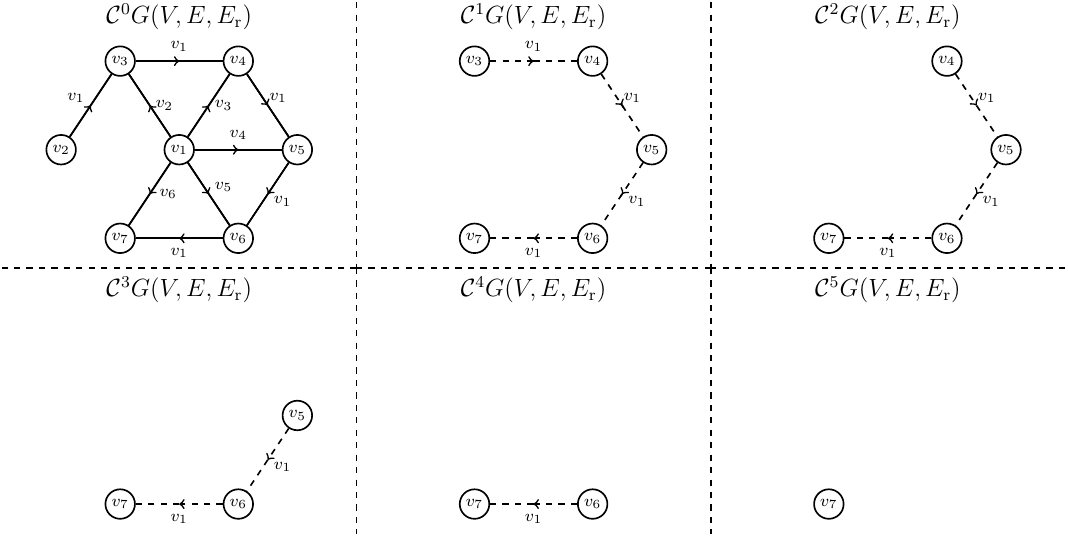}
            \caption{Application of Algorithm~\ref{alg:generating:lower:central:series} to generate the graphs $\mathcal{C}^\ell G(V,E)$ associated with the Lie algebras $\mathcal{C}^\ell\g$ of the lower central series from Example~\ref{exa:example:lower:central:series}. Each graph corresponds to a step in the lower central series, from $\mathcal{C}^0\g$ down to the last non-trivial Lie algebra $\mathcal{C}^5\g\neq\{0\}.$ Note that the dashed edges in the figure represent elements of the supplementary edge sets $E_\mathrm{r}^{(\ell)}=\mathcal{C}^\ell E_\mathrm{r}$, which are not part of the graph $\mathcal{C}^\ell G(V,E)$ itself but are used in the construction process to identify relevant vertices and edges for the next iteration, and highlight structural aspects of the Lie algebra.}
            \label{fig:example:nilpotent:minimal}
        \end{figure}
    \end{example}
\end{tcolorbox}

\begin{theorem}\label{thm:nilpotency:criteria:strong}
    Let $\g$ be a finite-dimensional graph-admissible Lie algebra associated with the labeled directed graph $G(V,E)$. Then, $\g$ is nilpotent if and only if $G(V,E)$ contains no closed directed walk.
\end{theorem}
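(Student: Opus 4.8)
The plan is to prove both implications, following the scaffold of the proof of Theorem~\ref{thm:non:solvability:condition:strong} but exploiting one crucial simplification. In the lower central series one brackets elements of the previous term against \emph{all} of $\g$, so the labelling vertices of a closed directed walk are always available as bracket inputs whether or not they lie on the walk. Hence the ``self-contained'' hypothesis that was essential for the derived series drops out, and the bare existence of a closed directed walk is the relevant obstruction. Equivalently, I will use throughout that ``$G(V,E)$ contains no closed directed walk'' is the same as ``$G(V,E)$ is acyclic'' in the source-to-end directed structure, since any closed directed walk contains a directed cycle.

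For the direction ``$G(V,E)$ contains a closed directed walk $\Rightarrow$ $\g$ is not nilpotent'' (the contrapositive of the forward implication), take such a walk $W=(v_1,e_1,v_2,\dots,e_s,v_1)$ and set $\mathfrak{w}:=\spn\{v_1,\dots,v_s\}\neq\{0\}$. The edges of $W$ encode the relations $[v_j,\varpi_\mathrm{l}(e_j)]\propto v_{j+1}$ for $j=1,\dots,s$ with $v_{s+1}:=v_1$, where each $\varpi_\mathrm{l}(e_j)\in V\subseteq\g$. I would then prove by induction on $\ell$ that $\mathfrak{w}\subseteq\mathcal{C}^\ell\g$: the base case is $\mathfrak{w}\subseteq\g=\mathcal{C}^0\g$, and if $\mathfrak{w}\subseteq\mathcal{C}^\ell\g$ then each $v_j\in\mathcal{C}^\ell\g$, so $v_{j+1}\propto[v_j,\varpi_\mathrm{l}(e_j)]\in[\g,\mathcal{C}^\ell\g]=\mathcal{C}^{\ell+1}\g$; since every vertex of $W$ occurs as some $v_{j+1}$, this yields $\mathfrak{w}\subseteq\mathcal{C}^{\ell+1}\g$. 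Therefore $\{0\}\neq\mathfrak{w}\subseteq\mathcal{C}^\ell\g$ for all $\ell\in\N_{\geq0}$, and by Definition~\ref{def:solvable:nilpotent} the Lie algebra $\g$ is not nilpotent.

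For the converse, suppose $G(V,E)$ contains no closed directed walk. By Theorem~\ref{thm:nilpotebt:if:graph:series.Terminates} it suffices to show that the sequence $\mathcal{C}^\ell G(V,E)$ produced by Algorithm~\ref{alg:generating:lower:central:series} terminates. The key observation is that one outer iteration of that algorithm deletes exactly the vertices with no incoming edge in the current graph $(\mathcal{C}^{\ell-1}V,\mathcal{C}^{\ell-1}E\cup E_\mathrm{r}^{(\ell-1)})$, i.e.\ it performs one round of ``source peeling''. Since $E_\mathrm{r}^{(\ell-1)}\subseteq E$ and $\mathcal{C}^{\ell-1}E\subseteq E$, this graph is a subgraph of $G(V,E)$ in its source-to-end structure, hence inherits acyclicity (a closed walk in the subgraph would be one in $G(V,E)$). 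A finite acyclic directed graph with nonempty vertex set has a vertex of in-degree zero, for otherwise one could extend an arbitrarily long backward chain of edges which, by finiteness, repeats a vertex and produces a closed directed walk. Thus each iteration strictly decreases $|V|$ until it is empty, so the sequence terminates after at most $|V|$ steps and $\g$ is nilpotent. (Alternatively, one can run this topological-peeling argument directly on the index sets $\mathcal{M}_{\mathrm{C}}^{(\ell)}$ of Lemma~\ref{lem:if:g:graph:admissible:so:central:series:algebras}.)

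The forward direction is essentially immediate once one notices that labelling vertices need not lie on the walk, so the main thing to get right is the bookkeeping in the converse: verifying that a single pass of Algorithm~\ref{alg:generating:lower:central:series} is precisely one round of source peeling on an acyclic graph, that the auxiliary edge set $E_\mathrm{r}^{(\ell)}$ (which reinstates edges whose label has been pruned) cannot create a cycle because $E_\mathrm{r}^{(\ell)}\subseteq E$, and that acyclicity of $G(V,E)$ descends to every pruned subgraph encountered. I expect this bookkeeping, rather than any conceptual difficulty, to be the only delicate point.
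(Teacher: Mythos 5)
Your proof is correct, and it diverges from the paper's in an interesting way on one of the two implications. For ``closed directed walk $\Rightarrow$ not nilpotent'', the paper stays inside the graph machinery: it feeds the walk through Algorithm~\ref{alg:generating:lower:central:series}, shows the walk's vertices and edges survive every pruning step (using $E_\mathrm{r}^{(\ell)}$), and then invokes Theorem~\ref{thm:nilpotebt:if:graph:series.Terminates} to conclude non-termination of the lower central series. You instead argue directly on the algebra: setting $\mathfrak{w}=\spn\{v_1,\dots,v_s\}$ and using $[v_j,\varpi_\mathrm{l}(e_j)]\propto v_{j+1}$ with $\varpi_\mathrm{l}(e_j)\in V\subseteq\g$, your induction $\mathfrak{w}\subseteq\mathcal{C}^\ell\g$ is clean, bypasses the algorithm entirely, and makes the conceptual point (labels need not lie on the walk, which is precisely why no self-containedness is needed, in contrast to Theorem~\ref{thm:non:solvability:condition:strong}) explicit rather than implicit; it mirrors how the paper handles the solvability analogue in Proposition~\ref{prop:non:solvability:condition:weak}. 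For the converse, both you and the paper reduce to termination of the graph sequence via Theorem~\ref{thm:nilpotebt:if:graph:series.Terminates}; the paper builds the explicit layering $S_1,S_2,\dots$ and proves $\mathcal{C}^\ell V=V\setminus\bigcup_{j\le\ell}S_j$ by induction, whereas you only need the leaner observation that each pass of the algorithm is one round of source peeling on a subgraph of $G(V,E)$ (acyclic, since $\mathcal{C}^{\ell}E\cup E_\mathrm{r}^{(\ell)}\subseteq E$), so a nonempty vertex set always contains an in-degree-zero vertex and $|V|$ strictly drops. These are the same combinatorial fact in different clothing; the paper's heavier bookkeeping is not wasted, though, since the layering $S_\ell$ is reused later to pin down the nilpotency index in Corollary~\ref{cor:index:of:nilpotent:lie:algebra}, which your shortcut does not yield.
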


\begin{proof}
    Suppose $G(V,E)$ contains no closed directed walk. We now introduce the set
    \begin{align*}
        S_1:=\{v\in V\,\mid\,\text{there exists no }e\in E: \varpi_\mathrm{e}(e)=v\},
    \end{align*}
    as well as the recursively defined sets
    \begin{align}
        S_{\ell+1}:=\{v\in V\,\mid\,\text{ every edge $e\in E$ with }\varpi_\mathrm{e}(e)=v\text{ satisfies }\,\varpi_\mathrm{s}(e)\in S_j\text{ with }j\leq \ell\}\quad\text{for }\ell\geq 1.
    \end{align}
    Suppose now there exists a vertex $v\in V$ that does not belong to the set $S_\infty:=\bigcup_{\ell\in\N_{\geq 1}} S_\ell$. Then, for every $\ell\in\N_{\geq1}$, $v\notin S_\ell$, implying the existence of an edge $e\in E$ such that $\varpi_\mathrm{e}(e)=v$ and $\varpi_\mathrm{s}(e)\notin S_\infty$. Iterating this procedure by considering $\varpi_\mathrm{s}(e)$ instead of $v$ yields the walk $W=(\ldots,\varpi_\mathrm{s}(e),e,v)$, where each vertex $w\in W$ satisfies $\varpi_\mathrm{e}(e')=w\in V\setminus S_\infty$ for some edge $e'\in E$. Since $V$ is finite, this walk must eventually revisit a vertex implying the existence of a closed directed walk $W'$, a contradiction. Hence $V=S_\infty$. 
    
    Applying Algorithm~\ref{alg:generating:lower:central:series} to $G(V,E)$, one finds that $\mathcal{C}^1 V=V\setminus S_1$, which establishes the base case of the induction hypothesis: $\mathcal{C}^\ell V=V\setminus \bigcup_{j=1}^\ell S_j$. For the induction step suppose the hypothesis holds for some $\ell\in\N_{\geq1}$. Then, one observes that $v\in \mathcal{C}^{\ell+1} V$ if and only if there exists an edge $e\in E$ such that $\varpi_\mathrm{e}(e)=v\in\mathcal{C}^\ell V$ and $\varpi_\mathrm{s}(e)\in \mathcal{C}^\ell V$. By the induction hypothesis, this is equivalent to $\varpi_\mathrm{s}(e)\in V\setminus \bigcup_{j=1}^\ell S_j$, and therefore $v\in V\setminus\bigcup_{j=1}^{\ell+1} S_j$. Thus, $\mathcal{C}^{\ell+1} V=V\setminus \bigcup_{j=1}^{\ell+1} S_j$.
    
    Since $V$ is finite, there exists an integer $k\in\N_{\geq1}$ such that $\bigcup_{\ell=1}^k S_j=V$. It follows that $\mathcal{C}^\ell V=\emptyset$ for all $\ell >k$. Therefore, the graph series $\mathcal{C}^\ell G(V,E)$ terminates, and by Theorem~\ref{thm:nilpotebt:if:graph:series.Terminates}, the lower central series of $\g$ terminates, proving that $\g$ is nilpotent, and thereby the reverse implication of the claim.

    Let us continue with the forward direction, i.e., show that if $\g$ is nilpotent, then $G(V,E)$ contains no closed directed walk. Thus, suppose $G(V,E)$ contains a closed directed walk $C=(v_1,e_1,v_2,e_2,\ldots,e_{s},v_1)$. Applying Algorithm~\ref{alg:generating:lower:central:series}, one finds that  $\{v_j\}_{j=1}^{s}\subseteq\mathcal{C}^1 V$, since for each $v_j\in\{v_j\}_{j=1}^s$, there exists an edge $e\in \{e_j\}_{j=1}^{s}\subseteq E$ such that $\varpi_\mathrm{e}(e)= v_j$ and $\varpi_\mathrm{s}(e)\in \{v_j\}_{j=1}^{s}\subseteq V$. Similarly, $\{e_j\}_{j=1}^{s}\subseteq \mathcal{C}^1E\cup E_\mathrm{r}^{(1)}$, since for each $e\in\{e_j\}_{j=1}^s$, both $\varpi_\mathrm{s}(e),\varpi_\mathrm{e}(e)\in \{v_j\}_{j=1}^s\subseteq\mathcal{C}^1V$. Iterating this process yields $\{v_j\}_{j=1}^s\subseteq\mathcal{C}^\ell V$ and $\{e_j\}_{j=1}^s\subseteq\mathcal{C}^\ell E\cup E_\mathrm{r}^{(\ell)}$ for all $\ell\in \N_{\geq0}$. Hence, the graph series $\mathcal{C}^\ell G(V,E)$, does not terminate. By Theorem~\ref{thm:nilpotebt:if:graph:series.Terminates}, the lower central series of $\g$ does also not terminate, making $\g$ not nilpotent. By contraposition, if $\g$ is nilpotent, then $G(V,E)$ does not contain a closed directed walk.
\end{proof}

\begin{corollary}\label{cor:index:of:nilpotent:lie:algebra}
    Let $\g$ be a finite-dimensional, non-abelian, nilpotent Lie algebra associated with the labeled directed graph $G(V,E)$, and let $W$ be a directed walk within $G(V,E)$. Then, $\operatorname{len}(W)\leq \ell_*$, where $\ell_*+1$ is the index of $\g$. This inequality is sharp, i.e., there exists at least one directed walk $W'$ within $G(V,E)$ such that $\operatorname{len}(W)=\ell_*$. Furthermore, every directed walk within $G(V,E)$ is a directed trail and a directed path.
\end{corollary}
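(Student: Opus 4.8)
The plan is to recover, from the pruning procedure of Algorithm~\ref{alg:generating:lower:central:series}, exactly how deep a vertex survives into the lower central series of graphs, and to identify that depth with the length of the longest directed walk ending at the vertex. First, since $\g$ is nilpotent, Theorem~\ref{thm:nilpotency:criteria:strong} tells us that $G(V,E)$ has no closed directed walk. Hence any directed walk $W=(v_1,e_1,\ldots,e_s,v_{s+1})$ has pairwise distinct vertices, because a repetition $v_i=v_j$ with $i<j$ would render $(v_i,e_i,\ldots,e_{j-1},v_j)$ a closed directed walk; thus $W$ is a directed path. Since $\varpi_\mathrm{s}(e_i)=v_i$ along a directed walk, distinct vertices force distinct edges, so $W$ is also a directed trail. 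This settles the last assertion of the corollary.

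Next, assign to each $v\in V$ its \emph{height} $h(v)$, the maximal length of a directed walk in $G(V,E)$ ending at $v$, setting $h(v):=0$ when no edge points to $v$. The absence of closed directed walks makes $h$ well defined and finite, and — using that a longest walk ending at a vertex cannot pass through a later target vertex without creating a cycle — one checks the recursion: for every $\ell\in\N_{\geq0}$, one has $h(v)\geq\ell+1$ if and only if there is an edge $e\in E$ with $\varpi_\mathrm{e}(e)=v$ and $h(\varpi_\mathrm{s}(e))\geq\ell$, while $h(v)\geq 0$ holds trivially. The central step is then to prove, by induction on $\ell$, that
\begin{align*}
    v\in\mathcal{C}^\ell V\qquad\Longleftrightarrow\qquad h(v)\geq\ell .
\end{align*}
The base case $\ell=0$ holds since $\mathcal{C}^0 V=V$. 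For the inductive step I would use the characterization established inside the proof of Theorem~\ref{thm:nilpotency:criteria:strong}: $v\in\mathcal{C}^{\ell+1}V$ exactly when $v\in\mathcal{C}^\ell V$ and some $e\in E$ satisfies $\varpi_\mathrm{e}(e)=v$ and $\varpi_\mathrm{s}(e)\in\mathcal{C}^\ell V$ (Proposition~\ref{prop:conditions:for:edges}(b) shows that here the source condition may be exchanged for a label condition, so only the source is needed). Combining this with the induction hypothesis and the $h$-recursion — noting that the existence of such an edge already forces $v\in\mathcal{C}^\ell V$ — collapses the condition to $h(v)\geq\ell+1$.

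Finally, I would transfer the algebraic datum to the graph. By Lemma~\ref{lem:lower:central:series:of:graphs} together with Lemma~\ref{lem:if:g:graph:admissible:so:central:series:algebras} one has $\spn\{\mathcal{C}^m V\}=\mathcal{C}^m\g$, and since no vertex equals the zero vector (Definition~\ref{def:graph:admissible}), $\mathcal{C}^m V=\emptyset$ if and only if $\mathcal{C}^m\g=\{0\}$. As the index of $\g$ is $\ell_*+1$, this means $\mathcal{C}^{\ell_*+1}V=\emptyset$ and $\mathcal{C}^{\ell_*}V\neq\emptyset$; moreover $\ell_*\geq 1$, because $\g$ is non-abelian, so $\mathcal{C}^1\g\neq\{0\}$ and the index is at least two. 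From $\mathcal{C}^{\ell_*+1}V=\emptyset$ and the displayed equivalence, $h(v)\leq\ell_*$ for every $v\in V$; hence any directed walk $W$, whose final vertex $v$ satisfies $\operatorname{len}(W)\leq h(v)$, obeys $\operatorname{len}(W)\leq\ell_*$. From $\mathcal{C}^{\ell_*}V\neq\emptyset$ there is a vertex $v$ with $h(v)\geq\ell_*$, forcing $h(v)=\ell_*\geq 1$, so a directed walk ending at $v$ of length exactly $\ell_*$ exists and serves as the witness $W'$. The main obstacle is getting the inductive equivalence $v\in\mathcal{C}^\ell V\Leftrightarrow h(v)\geq\ell$ precise — in particular handling the source/label bookkeeping inherited from the proof of Theorem~\ref{thm:nilpotency:criteria:strong} and pinning down the empty-maximum edge cases in the $h$-recursion — and correctly matching the convention ``index $=\ell_*+1$'' with ``$\mathcal{C}^{\ell_*}V$ is the last non-empty vertex set''.
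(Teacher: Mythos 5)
Your proof is correct and takes essentially the same route as the paper: the paper stratifies the acyclic graph via the level sets $S_\ell$ from the proof of Theorem~\ref{thm:nilpotency:criteria:strong} together with the identity $\mathcal{C}^\ell V=V\setminus\bigcup_{j=1}^{\ell}S_j$, which is just a different bookkeeping of your height function (morally $S_\ell=\{v\in V\,\mid\,h(v)=\ell-1\}$). The identification of the index via the last non-empty $\mathcal{C}^{\ell_*}V$, the upper bound $\operatorname{len}(W)\leq \ell_*$, the existence of a witness walk of length $\ell_*$, and the trail/path assertion via the absence of closed directed walks all proceed as in the paper's argument.
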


\begin{proof}
    We begin by showing the first part of the claim above. Let therefore $\g$ be a finite-dimensional non-abelian nilpotent Lie algebra. Consider the sets $S_\ell$ introduced in the previous proof of Theorem~\ref{thm:nilpotency:criteria:strong}. These sets are pairwise disjoint, i.e., $S_j\cap S_k=\{0\}$ if $j\neq k$, and satisfy, by the proof of Theorem~\ref{thm:nilpotency:criteria:strong}, $V=\bigcup_{j=0}^\infty S_j$, since $\g$ is nilpotent. Let now $G(V,E)$ denote the labeled directed graph associated with the finite-dimensional Lie algebra $\g$. Here, it is important to recall that by Proposition~\ref{prop:nilpotent:graph:admissible} every such Lie algebra is graph-admissible, guaranteeing that a suitable associated graph $G(V,E)$ exists. Furthermore, by the construction of the sets $S_\ell$ and the finiteness of $V$, there exists an integer $\ell_*\in\N_{\geq1}$ such that $V=\bigcup_{j=0}^{\ell_*} S_j$ and $S_k=\{0\}$ for all $k>\ell_*$, while $S_k\neq \{0\}$ for all $k\leq \ell_*$. Following the proof of Theorem~\ref{thm:nilpotency:criteria:strong} further, one has that $\mathcal{C}^\ell V= V\setminus \bigcup_{j=1}^\ell S_j$. This allows us to conclude that $\ell_*+1$ coincides with the index of the Lie algebra, since, by Lemma~\ref{lem:lower:central:series:of:graphs}, $\lie{\mathcal{C}^\ell V}=\mathcal{C}^\ell\g$ and $\mathcal{C}^{\ell_*}V\neq \emptyset$, while $\mathcal{C}^\ell V=\emptyset$ for all $\ell>\ell_*$, because $S_{\ell*}\neq \emptyset$.

    One can now show that $G(V,E)$ contains a directed walk $W$ of length $\operatorname{len}(W)=\ell_*$. Consider a vertex $v\in S_{\ell_*}$, and assume that no edge $e\in E$ exists such that $\varpi_\mathrm{e}(e)=v$. This would imply $v\in S_1$ by construction. Since the sets $S_j$ are pairwise disjoint, it would follow that $S_{\ell_*}= S_1$ and $\ell_*=1$, making $\g$ abelian, which contradicts the initial assumption that $\g$ is non-abelian. Now suppose every edge $e\in E$ satisfying $\varpi_\mathrm{s}(e)=v$ also satisfies simultaneously $\varpi_\mathrm{s}(e)\notin S_{\ell_*-1}$. By construction of the sets $S_\ell$, this would imply $v\in S_{\ell_*-1}$, which contradicts the assumption that the sets $S_\ell$ are pairwise disjoint. Thus, there exists an edge $e\in E$ such that $\varpi_\mathrm{e}(e)=v$ and $\varpi_\mathrm{s}(e)\in S_{\ell_*-1}$. One can now consider the vertex $\varpi_\mathrm{s}(e)$ and repeat this procedure: find an edge that originates from a vertex from the preceding $S_\ell$-set and targets the vertex from the previous iteration, which belongs to the set $S_{\ell+1}$. This iterative process clearly constructs a directed walk that is of length $\ell_*$, since each vertex of the walk belongs to a distinct set $S_j$ for $j\in\{1,\ldots,\ell_*\}$.

    We can continue by showing that no directed walk $W$ exists within $G(V,E)$ that obeys $\operatorname{len}(W)>\ell_*$. For this, recall that every directed walk in $G(V,E)$ must terminate, as established in Theorem~\ref{thm:nilpotency:criteria:strong}. Let $v\in V$ be the last vertex of $W$. Then, one must have $v\in S_\ell$ for some $\ell\in\{1,\ldots,\ell_*\}$. By construction, the second-to-last vertex in $W$ must be from $S_j$ with $j<\ell$. Applying this reasoning iteratively, we conclude that $\operatorname{len}(W)\leq \ell_*$.

    Finally, we want to show that every directed walk within $G(V,E)$ is a directed trail. Suppose the contrary: that there exists a directed walk $W$ which traverses at least one edge more than once. This would imply that $W$ contains a closed directed walk, contradicting  Theorem~\ref{thm:nilpotency:criteria:strong}, which prohibits such walks. Therefore, every directed walk in $G(V,E)$ is indeed a directed trail. It follows analogously that every directed walk within $G(V,E)$ is a directed path, completing the proof.
\end{proof}

\begin{lemma}
    Let $\g$ be a finite-dimensional graph-admissible Lie algebra associated with the labeled directed graph $G(V,E)$. If $G(V,E)$ contains a closed directed walk $C=(v_1,e_1,v_2,e_2,\ldots,e_{s},v_1)$, then $\dim(\{v_j\}_{j=1}^{s})\geq 2$.
\end{lemma}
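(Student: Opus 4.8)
The plan is a proof by contradiction, resting on three facts: distinct vertices of an associated graph are never proportional (Definition~\ref{def:assocaited:graph} together with relations~\eqref{eqn:desired:basis:overcomplete}), the elementary edge rules of Proposition~\ref{prop:conditions:for:edges}, and the antisymmetry of the Lie bracket. Assume that $\dim(\spn\{v_j\}_{j=1}^{s})\le 1$. Every vertex of $G(V,E)$ is, by construction, a nonzero element of $\g$, so the span is nontrivial, and the assumption forces $\dim(\spn\{v_j\}_{j=1}^{s})=1$. Then the vertices $v_1,\ldots,v_s$ are pairwise proportional, and since two distinct vertices of $V$ cannot be proportional, they must all coincide: $v_1=v_2=\cdots=v_s=:v$. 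Consequently every edge $e_j$ appearing in the closed directed walk $C$ satisfies $\varpi_\mathrm{s}(e_j)=\varpi_\mathrm{e}(e_j)=v$, i.e.\ is a self-loop at $v$, and by Proposition~\ref{prop:conditions:for:edges}(a) its label obeys $\varpi_\mathrm{l}(e_j)\neq v$, so the relation it encodes reads $[v,\varpi_\mathrm{l}(e_j)]\propto v$ with a nonzero constant.

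The second step is to rule out this single-vertex collapse. A closed directed walk carrying genuine cyclic structure must pass through at least two distinct vertices — this is exactly what the preceding lemma on self-contained walks tacitly assumes when it disposes of the short lengths $s=1$ and $s=2$ — and the reduction just obtained contradicts this, because two distinct vertices of $V$, being non-proportional, already span a plane. Hence $\dim(\spn\{v_j\}_{j=1}^{s})\ge 2$. The only real work, then, is this reduction; the bookkeeping that a set of two or more pairwise distinct, cyclically linked vertices cannot lie on a single line through the origin is routine.

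The step I expect to be the main obstacle is precisely the exclusion of the degenerate configuration reached in the first paragraph. A proper graph may legitimately contain a loop-type edge $(v,w,v)$ with $w\neq v$ and $[v,w]\propto v$ — such edges already occur for $\mathfrak{aff}(\R)$ — so the bare self-loop $C=(v,e_1,v)$ literally satisfies the definition of a closed directed walk while its vertex set is one-dimensional. The statement should therefore be read, in line with the implicit convention of the preceding lemma, as concerning closed directed walks that are not reducible to iterated self-loops at a single vertex, equivalently that traverse two or more distinct vertices; for the short lengths $s=1$ and $s=2$ one can alternatively invoke antisymmetry directly, as in the preceding lemma, but the uniform way to phrase the hypothesis is the one just given, after which the reduction of the first paragraph collides with it immediately.
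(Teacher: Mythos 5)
The paper's own proof of this lemma is a single sentence: the bound is declared to be ``an immediate consequence of the anti-symmetric property of the Lie bracket.'' Your first paragraph is exactly the reduction that this one-liner must implicitly contain: if the span were one-dimensional, the non-proportionality of distinct vertices collapses the walk onto a single vertex $v$, so every edge of the walk is a loop $(v,\varpi_\mathrm{l}(e_j),v)$ with $\varpi_\mathrm{l}(e_j)\neq v$ (by Proposition~\ref{prop:conditions:for:edges}), encoding $[v,\varpi_\mathrm{l}(e_j)]\propto v$. Up to that point you and the paper agree.

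Your second observation is the substantive one, and it is correct: antisymmetry kills this configuration only when the loop's label is the loop vertex itself, i.e.\ when $[v,v]\propto v$ would be required. That situation is forced when the walk induces a \emph{self-contained} subgraph --- which is precisely how the neighbouring lemma with the bound $\dim(\spn\{\Tilde{V}\})\geq 3$ disposes of the short lengths --- but it is not forced under the hypotheses of the present lemma. With a label outside the walk, $[v,w]=\kappa v$ with $w\neq v$ is perfectly admissible: the paper's own loop-type edges (Types III, V, VI, VII in Figure~\ref{fig:all:3:vertex:subgraphs}, realized e.g.\ by $\gl_{2,1}\oplus\gl_1$ or $\mathfrak{aff}(\R)$ via $[e_1,e_2]=e_2$, giving the edge $(e_2,e_1,e_2)$) yield closed directed walks of length one whose vertex set spans a single line. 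One cannot rescue the literal statement by refusing to count such loops as closed directed walks, because Theorem~\ref{thm:nilpotency:criteria:strong} needs them to count in order to detect the non-nilpotency of exactly these algebras. So the statement as written does not follow from antisymmetry alone and indeed fails without a strengthened hypothesis; your repaired reading (walks traversing at least two distinct vertices, or self-contained walks) is the provable version, and under it the conclusion is, as you say, immediate from non-proportionality of distinct vertices. The only inaccuracy in your write-up is the remark that the preceding lemma ``tacitly assumes'' the walk visits two distinct vertices: what it actually assumes is self-containment, which forces the offending loop label to equal the loop vertex, and it is that --- not a distinctness assumption --- which lets antisymmetry settle the cases $s=1,2$ there.
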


\begin{proof}
    This is an immediate consequence of the anti-symmetric property of the Lie bracket.
\end{proof}

\subsection{Identifying ideals}
Another important structural feature of Lie algebras is the presence of ideals, which play a central role in classification, decomposition, and representation theory \cite{Knapp:1996,Kuzmin:1977,Humphreys:1972}. The graph-theoretic approach introduced here also provides insight into the identification and characterization of ideals. 
Since every ideal is, by definition, a subalgebra, we elicit to begin with a lemma that is concerned with graphs that can be associated with subalgebras.
\begin{lemma}\label{lem:subalgebra:graph:weak}
    Let $\g$ be a finite-dimensional graph-admissible Lie algebra associated with the labeled directed graph $G(V,E)$, and let $\Tilde{V}$ be a non-empty subset of $V$. Then, the subgraph $G(\Tilde{V},\Tilde{E})$ can be associated with the Lie algebra $\mathfrak{v}:=\lie{\Tilde{V}}$ if and only if $\Tilde{E}=\{e\in E\,\mid\,\varpi_\mathrm{s}(e),\varpi_\mathrm{l}(e),\varpi_\mathrm{s}(e)\in \Tilde{V}\}$.
\end{lemma}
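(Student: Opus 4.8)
The statement to prove, Lemma~\ref{lem:subalgebra:graph:weak}, is an ``if and only if'' characterizing when the subgraph on a vertex subset $\Tilde V\subseteq V$ is associated with the generated subalgebra $\mathfrak v:=\lie{\Tilde V}$. My plan is to reduce both directions to the definition of an associated graph (Definition~\ref{def:assocaited:graph}) together with the construction in Algorithm~\ref{alg:creating:graph}, just as the author signals by saying ``evident from the construction.'' The key observation is that $\mathfrak v=\lie{\Tilde V}$ is itself a finite-dimensional graph-admissible Lie algebra: the set $\Tilde V$ is a (possibly overcomplete) spanning set, and since $[x,y]\propto z$ for some $z\in V$ whenever $x,y\in V$, iterated brackets of elements of $\Tilde V$ either vanish or are proportional to vertices of $V$ — but to stay inside $\mathfrak v$ one must argue that the relevant bracket results land in $\spn\{\Tilde V\}$, hence are proportional to vertices lying in $\Tilde V$ (using that the $V$-elements form a basis or an overcomplete basis whose proportional pairs coincide). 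Thus $\Tilde V$ satisfies the bracket relations \eqref{eqn:desired:basis:overcomplete} relative to $\mathfrak v$.

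For the ``if'' direction, I would assume $\Tilde E=\{e\in E\mid \varpi_\mathrm{s}(e),\varpi_\mathrm{l}(e),\varpi_\mathrm{e}(e)\in\Tilde V\}$ and show $G(\Tilde V,\Tilde E)$ is exactly what Algorithm~\ref{alg:creating:graph} outputs when fed the spanning set $\Tilde V$ of $\mathfrak v$. Concretely, for each pair $x_j,x_k\in\Tilde V$ with $[x_j,x_k]\neq 0$, graph-admissibility of $\g$ gives $[x_j,x_k]\propto x_\ell$ for a unique $x_\ell\in V$; since $\Tilde V\subseteq V$ spans $\mathfrak v$ and $[x_j,x_k]\in\mathfrak v$, the element $x_\ell$ must in fact lie in $\Tilde V$ (again using that the vertices are linearly independent in the minimal case, or that the overcomplete spanning set of $V$ contains no two distinct proportional elements, so $x_\ell$ is forced into $\Tilde V$). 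Hence the edge $(x_j,x_k,x_\ell)$ belongs to $\Tilde E$ by the displayed defining condition, and conversely every edge of $\Tilde E$ encodes such a bracket relation among elements of $\mathfrak v$. Running Algorithm~\ref{alg:creating:graph} on $\Tilde V$ therefore produces precisely $G(\Tilde V,\Tilde E)$, so by Definition~\ref{def:assocaited:graph} this subgraph is associated with $\mathfrak v=\lie{\Tilde V}$.

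For the ``only if'' direction, I would argue contrapositively: if $\Tilde E$ is any edge set with $G(\Tilde V,\Tilde E)$ associated with $\mathfrak v$, then by Definition~\ref{def:assocaited:graph} $\Tilde E$ is the output of Algorithm~\ref{alg:creating:graph} applied to some basis, and that basis must be $\Tilde V$ itself (since the vertex set is $\Tilde V$). The algorithm adds the edge $(x_j,x_k,x_{\delta(j,k)})$ to the edge set exactly when $[x_j,x_k]=\alpha_{jk}x_{\delta(j,k)}$ with $\alpha_{jk}\neq 0$ and $x_{\delta(j,k)}\in\Tilde V$; and since the bracket in $\mathfrak v$ agrees with the bracket in $\g$, this is the same condition as $\varpi_\mathrm{s}(e),\varpi_\mathrm{l}(e),\varpi_\mathrm{e}(e)\in\Tilde V$ combined with $e\in E$. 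Hence necessarily $\Tilde E=\{e\in E\mid \varpi_\mathrm{s}(e),\varpi_\mathrm{l}(e),\varpi_\mathrm{e}(e)\in\Tilde V\}$.

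The only genuinely non-routine point — and the one I would be careful to state explicitly rather than wave through — is that the proportionality constant $z$ in $[x_j,x_k]\propto z$ is forced to be a vertex \emph{of $\Tilde V$} (and not merely of $V$) when $x_j,x_k\in\Tilde V$; this is where one uses that $\mathfrak v=\spn\{\Tilde V\}=\lie{\Tilde V}$ is closed under the bracket together with the no-two-distinct-proportional-vertices condition from Definition~\ref{def:graph:admissible}. Everything else is bookkeeping matching the algorithm's loop against the displayed edge-set formula, which is why the author defers it to the reader; I would include one or two sentences making this closure argument precise and then invoke Definition~\ref{def:assocaited:graph} to close both directions.
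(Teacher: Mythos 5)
The paper gives no written proof here (it declares the claim evident from the definitions), so the only question is whether your reconstruction actually works — and it does not, precisely at the point you yourself flag as the crux. Your "if" direction rests on the identification $\mathfrak v=\spn\{\Tilde V\}=\lie{\Tilde V}$ and on the claim that for $x_j,x_k\in\Tilde V$ with $[x_j,x_k]\neq 0$ the vertex $x_\ell\in V$ with $[x_j,x_k]\propto x_\ell$ is "forced into $\Tilde V$". Neither follows from the stated hypotheses, and the justification you offer is circular: $\lie{\Tilde V}$ can be strictly larger than $\spn\{\Tilde V\}$, and the closure of the bracket onto $\Tilde V$ itself is exactly what would need to be assumed. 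Concretely, take $\g=\mathfrak{su}(2)$ with its minimal graph on $V=\{e_1,e_2,e_3\}$ from Example~\ref{exa:first:example:graphs:assocaited:with:algebras}, and $\Tilde V=\{e_2,e_3\}$. Every edge of $E$ has $e_1$ as label or endpoint, so the induced edge set $\Tilde E=\{e\in E\mid\varpi_\mathrm{s}(e),\varpi_\mathrm{l}(e),\varpi_\mathrm{e}(e)\in\Tilde V\}$ is empty; yet $\lie{\Tilde V}=\mathfrak{su}(2)$, and $G(\Tilde V,\emptyset)$ cannot be associated with it — $\Tilde V$ does not even span it, and by Lemma~\ref{lem:abelian:criterion} an edgeless graph corresponds to an abelian algebra. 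A redundant-graph variant shows the same failure even when $\spn\{\Tilde V\}=\lie{\Tilde V}$: for the redundant graph of $L_{-1}^3$ with $V=\{x_1,x_2,x_3,x_1-x_2\}$ and $\Tilde V=\{x_1,x_2,x_3\}$, the induced subgraph misses the bracket $[x_2,x_3]\propto x_1-x_2$, so it is not associated with $\lie{\Tilde V}=L_{-1}^3$ although $\Tilde V$ spans it; here the no-two-distinct-proportional-vertices condition from Definition~\ref{def:graph:admissible} does not rescue you, because a bracket landing in $\spn\{\Tilde V\}$ need not be proportional to any single element of $\Tilde V$.

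Your "only if" direction is essentially fine: since the vertex set of an associated graph is the basis fed to Algorithm~\ref{alg:creating:graph}, the basis must be $\Tilde V$ itself, and the algorithm's output edge set is then forced to be the induced one. The genuine gap is confined to the "if" direction, and the honest conclusion is that the biconditional requires an implicit hypothesis — namely that every nonzero bracket of two elements of $\Tilde V$ is proportional to an element of $\Tilde V$ (equivalently, that $\Tilde V$ is itself a valid, possibly overcomplete, basis of $\lie{\Tilde V}$ in the sense of Definition~\ref{def:graph:admissible}), which is exactly the "closure" the paper's one-line remark alludes to and which does hold in the downstream applications such as Corollary~\ref{cor:simple:subalgebra}. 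Your instinct to make that closure explicit was the right one; the flaw is that you presented it as a consequence of the induced-edge-set hypothesis rather than as an additional assumption to be stated or verified.
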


\begin{proof}
We omit the proof, as this claim follows straightforwardly from the definition of a Lie subalgebra and the construction rules for graphs associated with graph-admissible Lie algebras. The interested reader may verify this by checking that the closure under the Lie bracket corresponds precisely to the inclusion of all edges whose source, label end endpoint lie within $\Tilde{V}$.
\end{proof}

A first result connecting ideals of a Lie algebra with the associated graph is presented in the following claim:

\begin{lemma}\label{lem:ideal:span}
    Let $\g$ be a finite-dimensional graph-admissible Lie algebra associated with the labeled directed graph $G(V,E)$. If  $W\subseteq V$ is a subset such that no edge $e\in E$ points from a vertex $w\in W$ to a vertex $v\in V\setminus W$, then $W$ spans an ideal of $\g$.
\end{lemma}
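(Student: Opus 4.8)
The plan is to verify directly the defining property of an ideal, namely that $\h:=\spn\{W\}$ satisfies $[\g,\h]\subseteq\h$. Since $\h$ is by construction a subspace of $\g$ and the Lie bracket is bilinear, it will suffice to show that $[v,w]\in\h$ for every vertex $v\in V$ and every vertex $w\in W$; the general statement then follows by writing arbitrary elements of $\g$ and of $\h$ as linear combinations of vertices. Before doing so, I would recall the construction from Algorithm~\ref{alg:creating:graph}: an edge $e=(\varpi_\mathrm{s}(e),\varpi_\mathrm{l}(e),\varpi_\mathrm{e}(e))$ belongs to $E$ precisely when $[\varpi_\mathrm{s}(e),\varpi_\mathrm{l}(e)]\propto\varpi_\mathrm{e}(e)$ with a nonzero proportionality constant, and to translate the hypothesis into graph language: it states that for every edge $e\in E$ with $\varpi_\mathrm{s}(e)\in W$ one has $\varpi_\mathrm{e}(e)\in W$.

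The core step is then the following case distinction. Fix $v\in V$ and $w\in W$ and consider the bracket $[w,v]$. If $[w,v]=0$, then by antisymmetry $[v,w]=0\in\h$. If $[w,v]\neq 0$, then since $\g$ is graph-admissible and $V$ is the chosen spanning set, there exists $u\in V$ with $[w,v]\propto u$, and hence an edge $e=(w,v,u)\in E$. Because $\varpi_\mathrm{s}(e)=w\in W$, the hypothesis forces $u=\varpi_\mathrm{e}(e)\in W$, so $[v,w]=-[w,v]\propto u\in W\subseteq\h$. In either case $[v,w]\in\h$.

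Finally, I would extend by bilinearity: any $g\in\g$ can be written as $g=\sum_j c_j v_j$ with $v_j\in V$ and $c_j\in\mathbb{F}$, since $\spn\{V\}=\g$, and any $h\in\h$ as $h=\sum_k d_k w_k$ with $w_k\in W$; then $[g,h]=\sum_{j,k}c_j d_k[v_j,w_k]\in\h$ by the previous step. This yields $[\g,\h]\subseteq\h$, so $\h=\spn\{W\}$ is an ideal of $\g$, as claimed.

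The argument is essentially bookkeeping, and the only point that requires care—rather than being a genuine obstacle—is to keep the roles of the three slots of an edge straight: the hypothesis restricts edges whose \emph{starting} vertex lies in $W$, so one must analyze $[w,v]$ with $w$ in the starting slot and only afterwards use antisymmetry to obtain the statement for $[v,w]$. No assumption on linear independence of $W$ is needed, since ``$W$ spans an ideal'' means simply that $\spn\{W\}$ is an ideal.
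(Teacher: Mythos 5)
Your proof is correct and follows essentially the same route as the paper's: translate the hypothesis into the statement that any nonzero bracket $[w,v]$ with $w\in W$, $v\in V$ is proportional to a vertex in $W$ (via the edge $(w,v,u)$ whose endpoint must lie in $W$), and then extend by bilinearity to obtain $[\g,\spn\{W\}]\subseteq\spn\{W\}$. Your explicit case split on $[w,v]=0$ versus $[w,v]\neq 0$ is just a more careful spelling-out of the paper's single line "$[w,v]=\kappa_{wv}\tilde{w}$ with $\kappa_{wv}\in\mathbb{F}$", so there is nothing to add.
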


\begin{proof}
    Let $G(V,E)$ be a graph associated with a finite-dimensional Lie algebra $\g$, and let $W\subseteq V$. Suppose that every edge $e\in E$ with $\varpi_\mathrm{s}(e)\in W$ also satisfies $\varpi_\mathrm{e}(e)\in W$. This implies that for every $w\in W$ and $v\in V$, the Lie bracket $[w,v]=\kappa_{wv} \Tilde{w}$ holds for some scalar $\kappa_{wv}\in\mathbb{F}$ and $\Tilde{w}\in W$. Therefore, for arbitrary linear combinations
    \begin{align*}
        \left[\sum_{w\in W}\lambda_ww,\sum_{v\in V}\lambda_vv\right]=\sum_{w\in W}\sum_{v\in V}\lambda_w\lambda_v\kappa_{wv}\Tilde{w}\in \spn\{W\}\qquad\text{ for all }\qquad\lambda_v,\lambda_w\in\mathbb{F}.
    \end{align*}
    This shows that $[\spn\{W\},\spn\{W\}]\subseteq [\spn\{V\},\spn\{W\}]\subseteq\spn\{W\}$ and hence $\spn\{W\}=\lie{W}$ as well as $[\g,\lie{W}]\subseteq\lie{W}$, demonstrating that $\lie{W}$ is indeed an ideal of $\g$.
\end{proof}

This lemma inspires us to formally define the following useful graphical concept in order to facilitate obtaining subsequent results:

\begin{definition}\label{def:ideal:graph:property}
    Let $G(V,E)$ be a labeled directed graph associated with a finite-dimensional graph-admissible Lie algebra $\g$. A subset $W\subseteq V$ is said to satisfy the \emph{ideal-graph-property} if and only if no edge $e\in E$ points from a vertex $w\in W$ to a vertex $v\in V\setminus W$.
\end{definition}

\begin{lemma}\label{lem:ideal:basis:subset:minimal}
    Let $\g$ be a finite-dimensional minimal-graph-admissible Lie algebra associated with the minimal graph $G(V,E)$. If $\mathfrak{i}\subseteq\g$ is an ideal and $\mathfrak{i}=\spn\{W\}$ for some $W\subseteq V$, then $W$ satisfies the ideal-graph-property.
\end{lemma}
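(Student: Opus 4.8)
The plan is to prove the contrapositive: assume that $W$ does \emph{not} satisfy the ideal-graph-property and derive that $\spn\{W\}$ is not an ideal. So suppose there exists an edge $e\in E$ with $\varpi_\mathrm{s}(e)=w\in W$ but $\varpi_\mathrm{e}(e)=v\in V\setminus W$. By the construction of an associated graph (Algorithm~\ref{alg:creating:graph}), this edge corresponds to a nontrivial Lie bracket $[w,\varpi_\mathrm{l}(e)]=\kappa v$ with $\kappa\in\mathbb{F}^*$, where $w\in W\subseteq V$ and $\varpi_\mathrm{l}(e)\in V\subseteq\g$. Hence $[w,\varpi_\mathrm{l}(e)]=\kappa v$ is an element of $[\g,\spn\{W\}]$ that is proportional to the vertex $v$.

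The key step is then to use the \emph{minimality} hypothesis: since $G(V,E)$ is a minimal graph associated with the minimal-graph-admissible algebra $\g$, Definition~\ref{def:assocaited:graph} gives $|V|=\dim(\g)$, so $V$ is a genuine basis of $\g$, and in particular the vertices in $V$ are linearly independent. Consequently $v\notin\spn\{W\}$: if $v$ were a linear combination of elements of $W\subseteq V$, this would contradict linear independence of $V$ (as $v\in V\setminus W$). Therefore $\kappa v=[w,\varpi_\mathrm{l}(e)]\in[\g,\spn\{W\}]$ but $\kappa v\notin\spn\{W\}$, so $[\g,\spn\{W\}]\not\subseteq\spn\{W\}$. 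This shows $\spn\{W\}$ is not an ideal of $\g$, contradicting the hypothesis that $\mathfrak{i}=\spn\{W\}$ is an ideal. Hence $W$ must satisfy the ideal-graph-property.

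The main (and essentially only) obstacle is making sure the minimality is genuinely used in the right place: the analogous statement fails for redundant graphs precisely because there $v$ could be a linear combination of elements of $W$ even though $v\in V\setminus W$, so the implication ``$v\in V\setminus W \Rightarrow v\notin\spn\{W\}$'' breaks down. I would state this explicitly so the reader sees why the hypothesis cannot be dropped. Everything else — translating the edge into a bracket, invoking bilinearity to get an element of $[\g,\spn\{W\}]$ — is immediate from the earlier constructions and definitions, so the proof is short; one might even, in the style of some earlier lemmas in the paper, leave the routine verification to the reader, but the linear-independence point deserves to be spelled out.
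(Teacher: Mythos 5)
Your proof is correct and uses essentially the same ingredients as the paper's: translating an edge with source in $W$ into a nonzero bracket $[w,\varpi_\mathrm{l}(e)]\propto v$, and using the minimality hypothesis (so $V$ is a basis and $v\in V\setminus W$ cannot lie in $\spn\{W\}$) to conclude that such a bracket would leave $\spn\{W\}$. The paper argues the implication directly (an ideal forces $\delta(\mathcal{N},\mathcal{K})\subseteq\mathcal{K}\cup\{0\}$) while you argue the contrapositive, but this is only a difference in presentation, not in substance; your explicit remark on where minimality is needed is consistent with the paper's restriction of the lemma to minimal graphs.
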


\begin{proof}
    Let $W=\{v_k\}_{k\in\mathcal{K}}\subseteq V=\{v_j\}_{j\in\mathcal{N}}$ be a basis of an ideal $\mathfrak{i}\subseteq \g$, where $G(V,E)$ is a minimal graph associated with the finite-dimensional Lie algebra $\g$, and $\mathcal{K}\subseteq \mathcal{N}$. Then, $[v_j,v_k]=\alpha_{jk} v_{\delta(j,k)}$ for all $j\in\mathcal{N}$ and $k\in\mathcal{K}$. Since $\mathfrak{i}$ is an ideal of $\g$, it must be closed under the Lie bracket with all elements of $\g$, which implies $\delta(\mathcal{N},\mathcal{K})\subseteq \mathcal{K}\cup\{0\}$. Thus, every edge $e\in E$ with $\varpi_\mathrm{s}(e)\in W$ also satisfies $\varpi_\mathrm{e}(e)\in W$. Therefore, $W$ satisfies the ideal-graph-property as defined in Definition~\ref{def:ideal:graph:property}.
\end{proof}

\begin{lemma}\label{lem:no:ideal:subset:walk:connecting:all:vertices}
    Let $\g$ be a finite-dimensional graph-admissible Lie algebra associated with the labeled directed graph $G(V,E)$. If $V$ contains no proper non-empty subset $V'\subsetneq V$ that satisfies the ideal-graph-property, and $\dim(\g)\geq 2$, then, for every pair of vertices $v,v'\in V$, there exists a directed walk $W$ that starts at $v$ and ends at $v'$.
\end{lemma}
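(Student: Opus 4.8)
The plan is to argue by contrapositive combined with a reachability argument. Suppose, for some pair of vertices $v, v' \in V$, there exists no directed walk starting at $v$ and ending at $v'$. I would then define the set $R(v) \subseteq V$ of all vertices reachable from $v$ by a directed walk, together with $v$ itself; by hypothesis $v' \notin R(v)$, so $R(v)$ is a proper subset of $V$, and it is clearly non-empty since $v \in R(v)$. The key claim is that $R(v)$ satisfies the ideal-graph-property (Definition~\ref{def:ideal:graph:property}), which would contradict the hypothesis that $V$ admits no such proper non-empty subset.

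To prove that $R(v)$ satisfies the ideal-graph-property, I need to show that no edge $e \in E$ points from a vertex $w \in R(v)$ to a vertex $u \in V \setminus R(v)$. This is almost immediate from the definition of reachability: if $w \in R(v)$, then either $w = v$ or there is a directed walk from $v$ to $w$; in either case, appending the edge $e$ (with $\varpi_\mathrm{s}(e) = w$ and $\varpi_\mathrm{e}(e) = u$) yields a directed walk from $v$ to $u$, forcing $u \in R(v)$. Hence every edge leaving $R(v)$ lands back in $R(v)$, so $R(v)$ satisfies the ideal-graph-property. Since $R(v)$ is a proper non-empty subset of $V$, this contradicts the hypothesis. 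Therefore, for every pair $v, v' \in V$, a directed walk from $v$ to $v'$ must exist.

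One subtlety I should address is why the assumption $\dim(\g) \geq 2$ is needed, and where the reachability set could degenerate. If $\dim(\g) = 1$, then $V$ is a single vertex (in the minimal case) and there are no non-trivial pairs; more importantly, $R(v) = \{v\} = V$ could fail to be proper. With $\dim(\g) \geq 2$ and the standing assumption that $V$ admits no proper non-empty subset with the ideal-graph-property, one must in particular rule out the possibility that $R(v) = \{v\}$ alone — but this singleton, if proper, would itself be a forbidden subset, so it cannot occur; hence $R(v)$ strictly grows, and the argument that $R(v) = V$ (for every $v$, hence every $v'$ is reached) goes through. I expect the main obstacle to be purely bookkeeping: carefully handling the edge case where $v = v'$ (the trivial walk, or a length-one closed walk via Proposition~\ref{prop:conditions:for:edges}) and confirming that the ``trivial sequence'' is excluded or handled by convention in Definition~\ref{def:walk:trail:path:cycle}. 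Since walks are required to be non-trivial, for $v = v'$ one needs either an outgoing edge from $v$ or to interpret the claim suitably; but if $v$ had no outgoing edges at all, then $\{v\}$ would satisfy the ideal-graph-property vacuously, again contradicting the hypothesis, so such degenerate vertices do not exist. This closes the remaining gap.
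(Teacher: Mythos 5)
Your proof is correct, and it takes a genuinely different and more economical route than the paper's. The paper proves the lemma constructively: starting from an arbitrary vertex it builds, by induction, a single directed walk that eventually visits every vertex of $V$, at each stage using the fact that the current frontier set cannot satisfy the ideal-graph-property to find an escaping edge, and retracing segments of the walk when that edge departs from an earlier vertex; the desired walk from $v$ to $v'$ is then a segment of this all-visiting walk. You instead argue by contradiction with the reachability set $R(v)$: it is non-empty, closed under outgoing edges (append the edge to a walk from $v$), hence satisfies the ideal-graph-property, and so cannot be a proper subset of $V$ — which immediately forces every $v'\neq v$ to be reachable. Your argument is shorter and avoids all the bookkeeping of frontier sets and walk surgery; the paper's construction buys, as a by-product, the slightly stronger fact that a single walk from $v$ induces a subgraph on all of $V$, though the subsequent Corollary actually only concatenates pairwise walks, so nothing downstream is lost by your version. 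One small caveat: your handling of the case $v=v'$ is not quite complete — the existence of an outgoing edge from $v$ alone does not yield a closed walk at $v$; the clean fix is to take a walk from $v$ to some $u\neq v$ (or a loop-type edge $(v,w,v)$ if one exists) and concatenate it with a walk from $u$ back to $v$, both of which your main argument supplies. Since the paper's own proof only claims the statement for ``any other vertex,'' this is a cosmetic point rather than a gap.
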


\begin{proof}
    Let $G(V,E)$ be a graph satisfying the conditions stated in the claim above. Since $\dim(\g)\geq 2$ and $\spn\{V\}=\g$, it follows that $|V|\geq 2$.

    Consider an arbitrary $v\in V$ and denote it by $v_1^{(0)}$. We will now prove inductively that there exists a walk $W^{(j)}$ that starts at $v_1^{(0)}$ and induces a graph $G_{W^{(j)}}\equiv G(\Tilde{V}^{(j)},\Tilde{E}^{(j)})$ with $|\Tilde{V}^{(j)}|=j+1$ for all $j\in\{1,\ldots,|V|-1\}$.

    For the base case $j=1$, the singleton $\{v_1^{(0)}\}$ is a proper non-empty subset of $V$, and therefore cannot satisfy the ideal-graph-property. Thus, there exists an edge $e_1^{(1)}\in E$ with $\varpi_\mathrm{s}(e_1^{(1)})=v_1^{(0)}$ and $\varpi_\mathrm{e}(e_1^{(1)})\neq v_1^{(0)}$. Denote, for simplicity, $\varpi_\mathrm{e}(e_1^{(1)})=:v_1^{(1)}$. This yields the directed walk $W^{(1)}=(v_1^{(0)},e_1^{(1)},v_1^{(1)})$ which induces the subgraph $G(\Tilde{V}^{(1)},\Tilde{E}^{(1)})$ with $|\Tilde{V}^{(1)}|=|\{v_1^{(0)},v_1^{(1)}\}|=1+1$.

    Assume the induction hypothesis holds for some $j<|V|-1$, as the case $j+1$ is otherwise not covered by the induction claim. Then, there exists a directed walk $W^{(j)}=(v_1^{(0)},e_1^{(1)},v_1^{(1)},e_1^{(2)},\ldots, e_{s_j}^{(j)}, v_{s_j}^{(j)})$ with some $s_\ell\in\N_{\geq1}$ that induces the subgraph $G(\Tilde{V}^{(j)},\Tilde{E}^{(j)})$ such that $|\Tilde{V}^{(j)}|=j+1<|V|$. By the same reasoning as in the base case, there exists an edge $e_1^{(j+1)}\in E$ such that $\varpi_\mathrm{s}(e_1^{(j+1)})=v_{s_j}^{(j)}$ and $\varpi_\mathrm{e}(e_1^{(j+1)})\neq v_{s_j}^{(j)}$. We denote, for simplicity, this vertex by $\varpi_\mathrm{e}(e_{1}^{(j+1)})=:v_1^{(j+1)}$ and introduce the extended walk $$W_1^{(j+1)}=(v_1^{(0)},e_1^{(1)},v_1^{(1)},e_1^{(2)},\ldots, e_{s_j}^{(j)}, v_{s_j}^{(j)},e_1^{(j+1)},v_1^{(j+1)}).$$ If $v_1^{(j+1)}\notin \Tilde{V}^{(j+1)}$, then the directed walk $W_1^{(j)}$ induces a subgraph $G(\Tilde{V}^{(j+1)},\Tilde{E}^{(j+1)})$ with $|\Tilde{V}^{(j+1)}|=(j+1)+1$, satisfying the desired property. Otherwise, if $v_1^{(j+1)}\in \Tilde{V}^{(j)}$, we repeat this procedure. That is, we define the set $S_2^{(j+1)}:=\{v_{s_j}^{(j)},v_1^{(j+1)}\}$, which is a proper non-empty subset of $V$, since it is non-empty and contained in the proper subset $\Tilde{V}^{(j)}\subsetneq V$. By assumption, no proper non-empty subset of $V$ satisfies the ideal-graph-property. Therefore, there must exist an edge $e\in E$ such that $\varpi_\mathrm{s}(e)\in S_2^{(j+1)}$ and $\varpi_\mathrm{e}(e)\notin S_2^{(j+1)}$. We now consider the two possible cases separately:
    \begin{itemize}
        \item \textbf{Case 1:} If $\varpi_\mathrm{s}(e)=v_1^{(j+1)}$, denote $e:=e_2^{(j+1)}$, $\varpi_\mathrm{e}(e):=v_2^{(j+1)}$ and define the extended directed walk $$W_2^{(j+1)}:=(v_1^{(0)},\ldots, v_{s_j}^{(j)},e_1^{(j+1)},v_1^{(j+1)},e_2^{(j+1)},v_2^{(j+1)}).$$
        \item \textbf{Case 2:} If $\varpi_\mathrm{s}(e)=v_{s_j}^{(j)}$, then there exists a directed walk $W'=(v_1^{(j+1)},\Tilde{e}_1,\Tilde{v}_2\ldots,\Tilde{e}_t,v_{s_j}^{(j)})$ of length $t$, which is a segment of the walk $W^{(j)}$ that starts at $v_1^{(j+1)}$ and ends at $v_{s_j}^{(j)}$. In this case, denote $\Tilde{e}_q:=e_{q+1}^{(j+1)}$ for all $q\in\{1,\ldots,t\}$, $\Tilde{v}_q:=v_{q}^{(j+1)}$ for all $q\in\{2,\ldots, t\}$, and set $v_{s_j}^{(j)}:=v_{t+1}^{(j+1)}$, $e:=e_{t+2}^{(j+1)}$,  $\varpi_\mathrm{e}(e):=v_{t+2}^{(j+1)}$. Then, define the extended directed walk $$W_2^{(j+1)}:=(v_1^{(0)},\ldots, v_{s_j}^{(j)},e_1^{(j+1)},\ldots,v_{t+1}^{(j+1)},e_{t+2}^{(j+1)},v_{t+2}^{(j+1)}).$$
    \end{itemize}   
    When the end vertex of the directed walk $W_2^{(j+1)}$ belongs to $V\setminus \Tilde{V}^{(j)}$, we have successfully constructed a directed walk that satisfies the desired condition. Otherwise, we define the set $S_3^{(j+1)}$ by adding all vertices from the extended segment of the walk $W_2^{(j+1)}$ to the set $S_2^{(j+1)}$ and repeat this procedure.
    That is, we eventually find a directed walk $$W_k^{(j+1)}=(v_1^{(0)},\ldots, v_{s_j}^{(j)},e_1^{(j+1)},\ldots,e_{u}^{(j+1)},v_{u}^{(j+1)})$$ for some $u\in\N_{\geq2}$, where $S_k^{(j+1)}:=\{v_{s_j}^{(j)},v_\ell^{(j+1)}\,\mid\,\ell\in\{1,\ldots,u\}\}$ is a proper non-empty subset of $\Tilde{V}^{(j)}\subsetneq V$ that contains at least $k$ distinct vertices. Since $S_k^{(j+1)}$ cannot satisfy the ideal-graph-property, there must exist an edge $e\in E$ such that $\varpi_\mathrm{s}(e)\in S_k^{(j+1)}$ and $\varpi_\mathrm{e}(e)\notin S_k^{(j+1)}$. By analogy with the previous steps, we can extend the directed walk $W_k^{(j+1)}$ to the new directed walk $W_{k+1}^{(j+1)}$ that ends at $\varpi_\mathrm{e}(e)\notin S_k^{(j+1)}$. Since $V$ is finite and each extension increases the size of the set $S_k^{(j+1)}$ by at least one, i.e., $|S_{k+1}^{(j+1)}|>|S_k^{(j+1)}|$, this process will terminate by constructing a walk $W_q^{(j+1)}$ that ends at a vertex $v\in V\setminus \Tilde{V}^{(j)}$. This extended directed walk satisfies therefore the desired property. That is, the directed walk $W_q^{(j+1)}$ starts at $v_1^{(0)}$ and induces a graph $G(\Tilde{V}^{(j+1)},\Tilde{E}^{(j+1)})$ with $|\Tilde{V}^{(j+1)}|=(j+1)+1$.

    This completes the inductive construction. For an illustration of the process, see Figure~\ref{fig:construction:for:directed:walk:from:every:vertex:if:no:igp}. For every vertex $v\in V$, we have shown that there exists a directed walk $W$ starting at $v$ which induces a subgraph $G(\Tilde{V},\Tilde{E})$ with $\Tilde{V}=V$. In particular, this means that for any other vertex $w\in V$, there exists a directed walk from $v$ to $w$, by restricting the interest to an appropriate segment of $W$.
\end{proof}

\begin{figure}[htpb]
    \centering
    \includegraphics[width=0.85\linewidth]{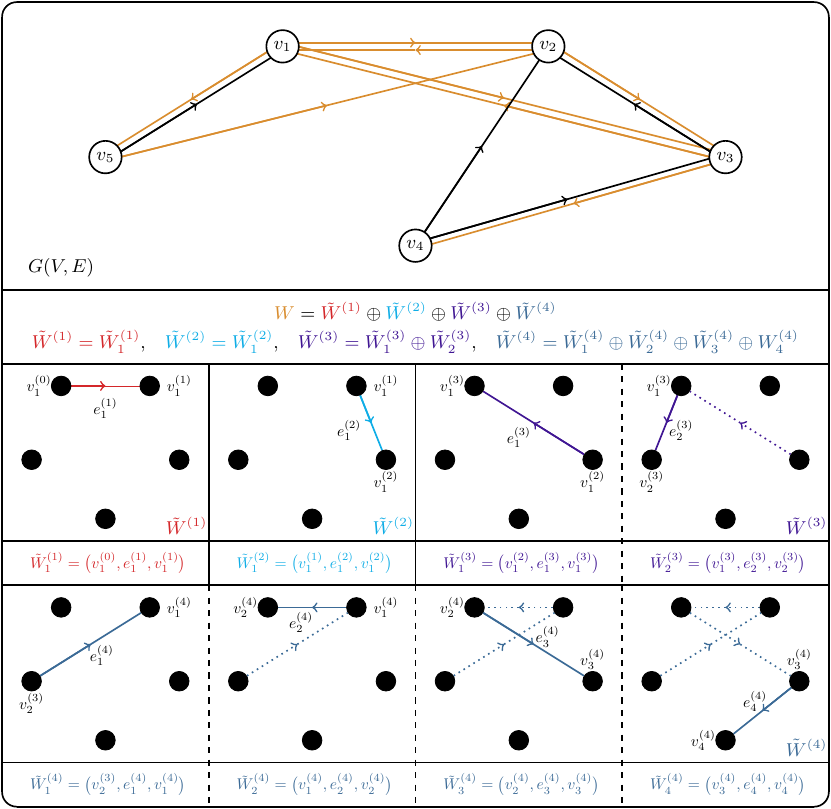}
    \caption{Illustration of the construction of directed walks employed in the proof of Lemma~\ref{lem:no:ideal:subset:walk:connecting:all:vertices}. Here, the graph $G(V,E)$ consists of five vertices and contains no proper subset $V'\subsetneq V$ that satisfies the ideal-graph-property. The construction begins with the walk $W^{(1)}=(v_1^{(0)},e_1^{(1)},v_1^{(1)})$ and successively appends the walks $\Tilde{W}^{(2)}$, $\Tilde{W}^{(3)}$, and $\Tilde{W}^{(4)}$. The appending operation \enquote{$\oplus$} of a walk $W_1=(w_1,e_1,\ldots,w_p)$ that ends at a vertex $w_p=v$ and a walk $W_2=(\Tilde{w}_1,\Tilde{e}_1,\ldots,\Tilde{w}_q)$ that starts a the same vertex $\Tilde{w}_1=v$, is defined as $W_1\oplus W_2:=(w_1,e_1,\ldots,v,\Tilde{e}_1,\ldots,\Tilde{w}_q)$. The construction process starts with a walk $\Tilde{W}^{(1)}=W_1^{(1)}$ from an initial vertex (here $v_1\equiv v_1^{(0)}$) to a new one (here $v_2\equiv v_1^{(1)}$), and then repeatedly extends the walk by appending segments that introduce exactly one previously unvisited vertex. If a candidate edge leads to a vertex already in the walk (cf. e.g.: $e_1^{(3)}$), the procedure iteratively enlarges the search set and continues until an edge to a new vertex is found. This guarantees that each step adds a new vertex, and the process terminates only when all vertices have been included.
    For simplicity, the directed graph $G(V,E)$ is drawn without edge labels, as they are not strictly necessary for the construction. Dotted lines indicate segments of previously constructed walks $W_j^{(k)}$ comprising the walks $W^{(k)}$. The walks $W_j^{(k)}$ from the proof of Lemma~\ref{lem:no:ideal:subset:walk:connecting:all:vertices} are given by $W_j^{(k)}=(\bigoplus_{\ell=1}^{k-1}\Tilde{W}^{(\ell)})\oplus\bigoplus_{\ell=1}^{j}W_\ell^{(k)}$, where $\Tilde{W}^{(k)}=\bigoplus_\ell\Tilde{W}_\ell^{(k)}$. Note that generally the same vertex may be denoted by several distinct labels, here for example, the starting vertex of the whole walk $v_1^{(0)}$ is also denoted $v_1^{(3)}$, as it is the first vertex in the third subwalk $\Tilde{W}^{(3)}$. Moreover, it is also denoted by $v_2^{(4)}$, as it it is the second vertex in the fourth walk $\tilde{W}^{(4)}$.}
    \label{fig:construction:for:directed:walk:from:every:vertex:if:no:igp}
\end{figure}

\begin{corollary}\label{cor:no:proper:igp:closed:walk:induces:V}
    Let $\g$ be a finite-dimensional graph-admissible Lie algebra associated with the labeled directed graph $G(V,E)$, and suppose $\dim(\g)\geq 2$. Then, the vertex set $V$ contains no proper non-empty subset $W\subsetneq V$ that satisfies the ideal-graph-property if and only if there exists a closed directed walk $C$ that induces the subgraph $G_C\equiv G(\Tilde{V},\Tilde{E})$, where $\Tilde{V}=V$.
\end{corollary}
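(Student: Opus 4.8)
The plan is to prove both directions of the equivalence by leveraging the machinery already assembled, especially Lemma~\ref{lem:no:ideal:subset:walk:connecting:all:vertices} for the forward direction and Lemma~\ref{lem:ideal:span} (together with the definition of the ideal-graph-property) for the contrapositive of the reverse direction. The statement is an ``if and only if'' about the nonexistence of a proper non-empty subset satisfying the ideal-graph-property versus the existence of a closed directed walk inducing all of $V$; it is natural to reduce each direction to a routine consequence of an earlier result.

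For the forward direction, assume $V$ contains no proper non-empty subset satisfying the ideal-graph-property, and $\dim(\g)\geq 2$. By Lemma~\ref{lem:no:ideal:subset:walk:connecting:all:vertices}, for every ordered pair of vertices $v,v'\in V$ there exists a directed walk from $v$ to $v'$. I would fix an enumeration $V=\{v_1,\dots,v_m\}$ with $m=|V|\geq 2$, and for each $j$ pick a directed walk $W_j$ from $v_j$ to $v_{j+1}$ (indices mod $m$), then concatenate $W_1,W_2,\dots,W_m$ into one closed directed walk $C$ that starts and ends at $v_1$ and visits every vertex of $V$. The induced subgraph $G_C\equiv G(\Tilde{V},\Tilde{E})$ then has $\Tilde{V}=V$: each $v_j$ appears in $W_j$ (as its start), so all vertices lie on $C$. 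One small point to handle carefully is that Definition~\ref{def:induced:graph}/\ref{def:walk:trail:path:cycle} permit $\Tilde{E}$ to be only the edges actually traversed by $C$, which is fine — we only need $\Tilde{V}=V$, not $\Tilde{E}=E$. Concatenation of walks is well-defined because the end vertex of $W_j$ equals the start vertex of $W_{j+1}$.

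For the reverse direction, I would argue by contraposition: suppose $V$ \emph{does} contain a proper non-empty subset $W\subsetneq V$ satisfying the ideal-graph-property, and show no closed directed walk can induce all of $V$. By Definition~\ref{def:ideal:graph:property}, no edge $e\in E$ has $\varpi_\mathrm{s}(e)\in W$ and $\varpi_\mathrm{e}(e)\in V\setminus W$; equivalently, every edge leaving a vertex of $W$ stays inside $W$. Now suppose for contradiction that some closed directed walk $C=(w_1,e_1,w_2,\dots,e_s,w_1)$ induces $G(\Tilde{V},\Tilde{E})$ with $\Tilde{V}=V$. Since $W\subsetneq V=\Tilde{V}$ is non-empty and $V\setminus W$ is non-empty, $C$ must contain at least one vertex from $W$ and at least one vertex from $V\setminus W$; traversing $C$ cyclically, there is some step where the walk passes from a vertex $w_i\in W$ along edge $e_i$ to $w_{i+1}\in V\setminus W$ (if $C$ never left $W$ it would miss all of $V\setminus W$, contradicting $\Tilde V = V$; since $C$ is closed it must also re-enter, but one transition out of $W$ already suffices). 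But then $e_i$ is an edge with $\varpi_\mathrm{s}(e_i)=w_i\in W$ and $\varpi_\mathrm{e}(e_i)=w_{i+1}\in V\setminus W$, contradicting the ideal-graph-property of $W$. Hence no such closed directed walk exists, completing the contrapositive.

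The main obstacle I anticipate is purely bookkeeping rather than conceptual: making the concatenation argument in the forward direction fully rigorous with respect to the formal definition of a walk as an alternating vertex/edge sequence (Definition~\ref{def:walk:trail:path:cycle}), and confirming that the concatenated object indeed satisfies $\varpi_\mathrm{s}(e_j)=v_j$, $\varpi_\mathrm{e}(e_j)=v_{j+1}$ at each junction so that it is a legitimate closed directed walk. There is also a trivial edge case $m=2$ worth a remark (the closed walk $v_1\to v_2\to v_1$ is fine and genuinely closed since $v_1=v_{s+1}$). No step requires the Jacobi identity or the characteristic-zero assumption; the argument is entirely graph-theoretic, so the proof should be short once these references are invoked.
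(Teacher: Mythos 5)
Your proof is correct. The forward direction is essentially identical to the paper's: both invoke Lemma~\ref{lem:no:ideal:subset:walk:connecting:all:vertices} to obtain a directed walk between every ordered pair of vertices and then concatenate walks $v_1\to v_2\to\cdots\to v_m\to v_1$ into a single closed directed walk whose induced vertex set is all of $V$; your remarks about matching endpoints at the junctions and about only needing $\Tilde{V}=V$ (not $\Tilde{E}=E$) are exactly the bookkeeping the paper also relies on. The reverse direction, however, is where you genuinely diverge. The paper argues directly: given a closed walk covering $V$ and a non-empty $W$ with the ideal-graph-property, it grows a reachable set $S$ starting from a vertex of $W$, observes that the ideal-graph-property forces $S\subseteq W$ at every stage, and uses the covering walk plus finiteness to conclude $S=W=V$. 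You instead take the contrapositive and use a crossing-edge argument: a closed walk whose induced vertex set is all of $V$ must visit both $W$ and $V\setminus W$, so somewhere in the cyclic vertex sequence it transitions from $W$ out of $W$, and the corresponding edge violates Definition~\ref{def:ideal:graph:property}. Your version is shorter, avoids the iterative set-growing, and in fact never uses the hypothesis $\dim(\g)\geq 2$ in that direction (the paper's phrasing does lean on it to produce vertices and an outgoing edge), so it is marginally more general for that implication; the paper's formulation has the mild advantage of constructively exhibiting $W=V$, which echoes the set-expansion technique reused elsewhere (e.g.\ in the proof of Theorem~\ref{thm:non:solvability:condition:strong}). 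Either route is valid, and nothing in your argument has a gap.
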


\begin{proof}
    Suppose there exists no proper non-empty subset $W\subsetneq V=\{v_j\}_{j\in\mathcal{N}}$ that satisfies the ideal-graph-property. Then, by Lemma~\ref{lem:no:ideal:subset:walk:connecting:all:vertices}, for every pair $(j,k)\in\mathcal{N}\times\mathcal{N}$, there exists a directed walk $W_{jk}=(v_j,e_1^{(jk)},v_1^{(jk)},\ldots,e_{s_{jk}}^{(jk)},v_k)$ that starts at $v_j$ and ends at $v_k$. Using these walks, we can construct the closed directed walk
    \begin{align*}
        C=(v_1,e_1^{(12)},v_1^{(12)},\ldots,e_{s_{12}}^{(12)},v_2,e_{1}^{(23)},\ldots,e_{s_{n-1,n}}^{(n-1,n)},v_n,e_1^{(n,1)},v_1^{(n,1)},\ldots,e_{s_{n,1}}^{(n,1)},v_1),
    \end{align*}
    which induces the graph $G_C\equiv G(\Tilde{V},\Tilde{E})$, where $\Tilde{V}=V$.

    Now suppose there exists a closed directed walk $C=(v_{j_1}^{(1)},e_1,v_{j_2}^{(2)},\ldots,e_s,v_{j_1}^{(1)})$ that induces the subgraph $G_C\equiv G(\Tilde{V},\Tilde{E})$ with $\Tilde{V}=V$. Let $W\subseteq V$ be a non-empty subset that satisfies the ideal-graph-property. Then there exist at least two vertices $v_1,v_2\in W\cap\Tilde{V}=W$ since $\dim (\g)\geq2$. Without loss of generality, let us focus on $v_1=v_{j_1}^{(1)}$. Since $C$ is a closed directed walk that visits every vertex $w\in V$ at least once, and $\dim(\g)\geq 2$, there must exist an edge $e\in \Tilde{E}\subseteq E$, such that $\varpi_\mathrm{s}(e)=v_1$ and $\varpi_\mathrm{e}(e)\neq v_1$. Iterating this process by replacing $v$ with $\varpi_\mathrm{e}(e)$ yields eventually the set $S=\{v_1,\varpi_\mathrm{e}(e),\ldots\}$. If $S\neq V$, then there must exist an edge $e'\in\Tilde{E}\subseteq E$, such that $\varpi_\mathrm{s}(e')\in S$ and $\varpi_\mathrm{e}(e')\notin S$. This allows us to add the vertex $\varpi_\mathrm{e}(e')$ to $S$ and repeat the process. Importantly, since $W$ satisfies the ideal-graph-property, one always has $S\subseteq W$. Due to the finiteness of $V$, this process must eventually terminate with $S=W=V$.     
    This shows that no proper non-empty subsets of $V$ satisfy the ideal-graph-property, completing this proof.
\end{proof}

Instead of considering only closed directed walks that visit every vertex at least once, one can immediately conclude the following:

\begin{corollary}
    Let $\g$ be a finite-dimensional graph-admissible Lie algebra associated with the labeled directed graph $G(V,E)$. Then, the vertex set $V$ contains no proper non-empty subset $W\subsetneq V$ that satisfies the ideal-graph-property if and only if there exists a closed directed walk $C$ that induces the graph $G_C\equiv G(V,E)$ itself.
\end{corollary}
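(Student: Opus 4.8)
The plan is to handle the two implications separately, leaning on Corollary~\ref{cor:no:proper:igp:closed:walk:induces:V} and, especially, on Lemma~\ref{lem:no:ideal:subset:walk:connecting:all:vertices}. As in those statements, one should restrict attention to $\dim(\g)\ge 2$: when $\dim(\g)\le 1$ the vertex set has at most one element, there are no edges by Proposition~\ref{prop:conditions:for:edges}~(a), and the asserted equivalence degenerates, so this hypothesis (consistent with the preceding results of this subsection) should be added or the small cases dispatched by hand. Henceforth assume $|V|\ge 2$.

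For the direction $(\Leftarrow)$, suppose a closed directed walk $C=(v_1,e_1,\ldots,e_s,v_1)$ induces the subgraph $G_C\equiv G(V,E)$. In particular the vertex set of this induced graph is all of $V$, so the reverse implication of Corollary~\ref{cor:no:proper:igp:closed:walk:induces:V} immediately yields that $V$ contains no proper non-empty subset satisfying the ideal-graph-property. (One can also argue directly: if some non-empty $W\subsetneq V$ had the ideal-graph-property, pick $w=v_k\in W$ with $k\le s$; then $\varpi_\mathrm{s}(e_k)=w\in W$, so $\varpi_\mathrm{e}(e_k)=v_{k+1}\in W$ by Definition~\ref{def:ideal:graph:property}, and iterating around the closed walk forces every $v_j\in W$, hence $V\subseteq W$, contradicting properness.)

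For the direction $(\Rightarrow)$, assume $V$ has no proper non-empty subset with the ideal-graph-property. First observe that every vertex $v\in V$ is the source of at least one edge: otherwise no edge points from $v$ into $V\setminus\{v\}$, so the singleton $\{v\}$ --- proper and non-empty since $|V|\ge 2$ --- would satisfy the ideal-graph-property, a contradiction; in particular $E\neq\emptyset$. By Lemma~\ref{lem:no:ideal:subset:walk:connecting:all:vertices} there is a directed walk between any two vertices of $V$. Fix a base vertex $v_\star\in V$, enumerate $E=\{e_1,\ldots,e_m\}$, and for each $i$ choose directed walks $A_i$ from $v_\star$ to $\varpi_\mathrm{s}(e_i)$ and $B_i$ from $\varpi_\mathrm{e}(e_i)$ to $v_\star$ (omitting $A_i$, resp.\ $B_i$, when these two endpoints coincide). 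Splicing $A_i$, the single edge $e_i$, and $B_i$ produces a closed directed walk $D_i$ from $v_\star$ to $v_\star$ that traverses $e_i$; since each $D_i$ begins and ends at $v_\star$, the concatenation $C:=D_1 D_2\cdots D_m$ is again a closed directed walk. By construction $C$ traverses every edge of $E$, so the edge set of the graph $G_C$ induced by $C$ equals $E$; and each vertex of $V$, being the source of some $e_i$, appears in $C$, so the vertex set of $G_C$ equals $V$. Hence $G_C\equiv G(V,E)$, as required.

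No genuinely new obstacle arises: the essential content --- that the absence of any ``ideal-like'' vertex subset is equivalent to the existence of directed walks between all pairs of vertices --- is already Lemma~\ref{lem:no:ideal:subset:walk:connecting:all:vertices}. The only mildly delicate points are the bookkeeping that the concatenation of the $D_i$ is a legitimate closed directed walk and that a closed walk traversing every edge induces $G(V,E)$ in the sense of Definition~\ref{def:induced:graph}, together with the low-dimensional degenerate cases noted above.
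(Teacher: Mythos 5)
Your proof is correct, but your forward direction takes a genuinely different route from the paper's. The paper proves the corollary by reducing it to Corollary~\ref{cor:no:proper:igp:closed:walk:induces:V}: it shows that a closed directed walk inducing a subgraph whose vertex set is all of $V$ can always be upgraded to one inducing the full graph $G(V,E)$, by splicing each missing edge $e\in E\setminus\Tilde{E}$ into the existing walk (run along $C$ to $\varpi_\mathrm{s}(e)$, traverse $e$, continue along $C$ from $\varpi_\mathrm{e}(e)$), repeating until every edge is incorporated. You instead bypass that intermediate corollary in the forward direction and work straight from Lemma~\ref{lem:no:ideal:subset:walk:connecting:all:vertices}: after observing that every vertex must emit an edge (else a singleton would have the ideal-graph-property), you build, for each edge $e_i$, a closed walk $D_i$ through a fixed base vertex $v_\star$ traversing $e_i$, and concatenate the $D_i$ into a single closed walk covering all of $E$, hence inducing $G(V,E)$. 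Both arguments are sound; the paper's buys economy by reusing the already-established vertex-spanning walk and only patching in edges, while yours is a self-contained construction that makes the edge-coverage explicit and avoids the incremental extension argument. Your backward direction coincides with the paper's (trivial, via the induced vertex set and the earlier corollary or the direct propagation-around-the-cycle argument). Finally, your remark about $\dim(\g)\ge 2$ is a legitimate refinement rather than a defect: the paper states the corollary without that hypothesis but its proof silently leans on results that assume it, and in the one-dimensional case the equivalence indeed degenerates ($E=\emptyset$, so no closed walk exists, yet a singleton vertex set has no proper non-empty subset), so flagging or dispatching the low-dimensional cases is the more careful reading.
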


\begin{proof}
    This result follows directly from the observation that the existence of a closed directed walk $C$ that induces a subgraph $G_C\equiv G(\Tilde{V},\Tilde{E})$ with $\Tilde{V}=V$ is equivalent to the existence of a closed directed walk $C'$ that induces the entire graph $G_{C'}\equiv G(V,E)$ itself. The necessary condition of this claim is trivially satisfied: If a closed directed walk $C$ induces the full graph $G(V,E)$, then clearly $V=\Tilde{V}$.
    
    Now, let us continue establishing the sufficient condition. Suppose there exists a closed directed walk $C=(v_1,e_1,v_2,\ldots,e_s,v_1)$ that induces a subgraph $G_C\equiv G(\Tilde{V},\Tilde{E})$ with $V=\Tilde{V}$. If $E\neq \Tilde{E}$, then there exists some $e\in E \setminus \Tilde{E}$. Since $V=\Tilde{V}$, one has $\varpi_\mathrm{s}(e),\varpi_\mathrm{e}(e)\in V$, and both vertices are already visited by the walk $C$. We can now construct the closed directed walk $C'$ as follows: First, one repeats the walk $C$, then follows $C$ again until reaching the vertex $\varpi_\mathrm{s}(e)$. There, one traverses the edge $e$, such that one arrives at the vertex $\varpi_\mathrm{e}(e)$. Then, one continues along the walk $C$, starting from $\varpi_\mathrm{e}(e)$, until returning at the starting vertex $v_1$. This extended walk then induces the subgraph $G_{C'}\equiv G(\Tilde{V}',\Tilde{E}')$, where $V=\Tilde{V}'$ and $\Tilde{E}'=\Tilde{E}\cup\{e\}$. Repeating this procedure for every edge $e'\in E\setminus \Tilde{E}$, will eventually incorporate all edges $e\in E$ into the extended walk. The resulting walk $C''$ then induces the full graph $G(V,E)$ itself, completing the proof.
\end{proof}

The previous Lemma~\ref{lem:no:ideal:subset:walk:connecting:all:vertices} suggests a causal relation between the existence of proper non-empty subsets of the vertex set that satisfies the ideal-graph-property and the dimension of the center of the associated Lie algebra $\g$. This leads us to formulate the following conjecture:

\begin{conjecture}\label{con:non:zero:center:proper:igp:sets}
    Let $\g$ be a finite-dimensional minimal-graph-admissible Lie algebra associated with the minimal graph $G(V,E)$. If $\g$ has a non-zero center $\mathcal{Z}(\g)$, then there exists a proper non-empty subset $W\subsetneq V$ that satisfies the ideal-graph-property and spans a solvable ideal of $\g$.
\end{conjecture}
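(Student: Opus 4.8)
We outline how one might approach this conjecture. The plan is to distinguish three cases — $\g$ abelian, $\g$ solvable but non-abelian, and $\g$ non-solvable — and, before starting, to read the statement as asserting the existence of a \emph{suitable} minimal graph $G(V,E)$ for $\g$ rather than as a claim valid for every minimal graph: for example $\mathfrak{gl}_2$ is minimal-graph-admissible via the basis $\{e,f,h,h+z\}$, where $\{e,f,h\}$ is an $\mathfrak{sl}_2$-triple and $z$ spans its center, and in the associated minimal graph the only proper non-empty vertex subset satisfying the ideal-graph-property is $\{e,f,h\}$, which spans the non-solvable ideal $\mathfrak{sl}_2$. With that reading fixed, the abelian case follows from Lemma~\ref{lem:abelian:criterion}: the edge set is empty, so for $\dim\g\geq2$ any singleton $W=\{v\}\subsetneq V$ has the ideal-graph-property and spans a one-dimensional, hence solvable, ideal (the one-dimensional algebra being a degenerate exception that must be excluded). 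The solvable non-abelian case does not even use the hypothesis on the center: by Lemma~\ref{lem:if:g:graph:admissible:so:derived:algebras} the first derived algebra $\mathcal{D}^1\g$ is minimal-graph-admissible and spanned by $W:=\{x_j\mid j\in\mathcal{N}_{\mathrm{D}}^{(1)}\}\subseteq V$; by Lemma~\ref{lem:ideal:basis:subset:minimal} this $W$ has the ideal-graph-property, it is non-empty since $\g$ is non-abelian and proper since solvability forces $\mathcal{D}^1\g\subsetneq\g$, and $\spn\{W\}=\mathcal{D}^1\g$ is solvable.

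It remains to treat a non-solvable $\g$ with $\mathcal{Z}(\g)\neq\{0\}$, which is where the center hypothesis is genuinely needed. The natural target here is the radical $\operatorname{rad}(\g)$: it is solvable by definition, proper because $\g$ is non-solvable, and non-zero because $\mathcal{Z}(\g)\subseteq\operatorname{rad}(\g)$. The plan is to produce a minimal-graph-admissible basis $V=V_{\mathfrak s}\cup V_{\mathfrak r}$ of $\g$ adapted to a Levi decomposition $\g=\mathfrak s\oplus\operatorname{rad}(\g)$, with $V_{\mathfrak r}$ a basis of $\operatorname{rad}(\g)$; then $W:=V_{\mathfrak r}\subsetneq V$ is non-empty, spans the ideal $\operatorname{rad}(\g)$, and, being a basis subset spanning an ideal of a minimal graph, automatically satisfies the ideal-graph-property by Lemma~\ref{lem:ideal:basis:subset:minimal}. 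In practice one would first fix a minimal-graph-admissible basis $V_{\mathfrak s}$ of the Levi subalgebra $\mathfrak s$ — which requires knowing that $\mathfrak s$ is itself minimal-graph-admissible, plausible since the simple minimal-graph-admissible algebras appear to be of low rank, e.g.\ $\mathfrak{sl}_2$ — then simultaneously diagonalize the commuting semisimple parts of the adjoint actions of a Cartan-type subset of $V_{\mathfrak s}$ on $\operatorname{rad}(\g)$ to obtain weight vectors, and finally refine this weight basis so that the adjoint actions of the elements of $\operatorname{rad}(\g)$ among themselves are monomial as well.

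The hard part will be exactly this last step: showing that minimal-graph-admissibility of $\g$ is rigid enough to force a single basis that simultaneously renders monomial the internal brackets of $\mathfrak s$, the $\mathfrak s$-action on $\operatorname{rad}(\g)$, and the adjoint action of $\operatorname{rad}(\g)$ on itself. When all the relevant adjoint operators act semisimply this is a routine simultaneous weight-space decomposition, but nilpotent components of $\operatorname{ad}$-operators inside $\operatorname{rad}(\g)$ (Jordan blocks), and non-split mixed situations, obstruct the naive argument. This obstacle is essentially the preceding conjecture in disguise: if one grants that a minimal-graph-admissible $\g$ with $\mathcal{Z}(\g)\neq\{0\}$ admits a minimal graph in which the vertices with no outgoing edges span $\mathcal{Z}(\g)$, then the non-solvable non-abelian case is settled immediately by taking $W$ to be the set of all such sink vertices — it trivially has the ideal-graph-property, spans the abelian (hence solvable) ideal $\mathcal{Z}(\g)$, and is proper (as $\g$ is non-abelian) and non-empty (as $\mathcal{Z}(\g)\neq\{0\}$). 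I would therefore expect a clean proof to proceed either by first establishing the preceding conjecture, or by a direct analysis of the constraints that the relations $\sum_j c_j\alpha_{jk}x_{\delta(j,k)}=0$ — imposed on the pair $(\boldsymbol{\alpha},\delta)$ by a central element $z=\sum_j c_j x_j$ — place on $G(V,E)$, extracting from them a proper basis-subset-spanned solvable ideal.
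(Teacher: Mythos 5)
There is nothing in the paper to compare your proposal against: the statement you were given is an \emph{open conjecture} there, supported only by heuristics (the kernel description of the center in Lemma~\ref{lem:center:equals:ker:intersection}, a reduction to Conjecture~\ref{con:no:proper:igp:sets:ker:intersection:zero}, and a verification for $\dim\g=3$), not by a proof. Your proposal does not prove it either, and you say so yourself. The abelian case (once $\dim\g\geq2$) and the solvable non-abelian case are fine — and it is worth noting, as you do, that the solvable case never uses the center hypothesis, since $W=\{x_j\mid j\in\mathcal{N}_{\mathrm{D}}^{(1)}\}$ works by Lemmas~\ref{lem:if:g:graph:admissible:so:derived:algebras} and~\ref{lem:ideal:basis:subset:minimal} — but the entire content of the conjecture lives in the non-solvable case, and there you only reduce the problem to two other unproven statements (the preceding "sink vertices span the center" conjecture, or the rigidity claim that a Levi decomposition of $\g$ can be realized inside a single monomial basis). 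The obstruction you identify — nilpotent pieces of adjoint operators on $\operatorname{rad}(\g)$ and non-split mixing between the Levi factor and the radical — is precisely where the authors are also stuck, so the proposal adds a plausible strategy but no new ground.

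The more consequential point is your reinterpretation of the statement. As written, and as it is actually invoked in the proof of Lemma~\ref{lem:nec:condition:first:simplicity:minimal:with:conj}, the conjecture is a claim about the \emph{given} minimal graph $G(V,E)$, not about the existence of some well-chosen graph for $\g$. Your $\mathfrak{gl}_2$ computation is correct: with the basis $\{e,f,h,h+z\}$ (an $\mathfrak{sl}_2$-triple together with $h$ shifted by a central element $z$) one obtains a legitimate minimal graph in which the only proper non-empty vertex subset satisfying the ideal-graph-property is $\{e,f,h\}$, which spans the non-solvable ideal $\mathfrak{sl}_2$, while $\mathcal{Z}(\mathfrak{gl}_2)=\spn\{z\}\neq\{0\}$; the one-dimensional abelian algebra is a further (trivial) violation. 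So what you have found is not a licence to re-read the conjecture as an existence statement over graphs but an outright counterexample to it in the per-graph form the paper relies on — and under your weakened reading the conjecture would no longer support Lemma~\ref{lem:nec:condition:first:simplicity:minimal:with:conj}, where the graph is fixed in advance. You should present the example explicitly as a refutation, or as grounds for amending the statement (e.g.\ weakening "solvable ideal" to "proper ideal containing the center", or restricting to minimal graphs whose sink vertices span $\mathcal{Z}(\g)$), rather than silently changing what is to be proved and then building the rest of the argument on the altered statement.
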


To further motivate this conjecture, consider a minimal graph $G(V,E)$ associated with the finite-dimensional Lie algebra $\g$. By Definition~\ref{def:assocaited:graph}, the vertex set $V=\{v_j\}_{j\in\mathcal{N}}$ forms a basis of $\g$, and the Lie bracket relations among the basis elements are determined by the skew-symmetric matrix $\boldsymbol{\alpha}\in \mathbb{F}^{n\times n}$ and the symmetric index function $\delta:\mathcal{N}\cup \{0\}\times\mathcal{N}\cup \{0\}\to\mathcal{N}\cup\{0\}$ with action $\delta:(j,k)\mapsto\delta(j,k)$. This structure allows us to introduce the family of matrices $\boldsymbol{\alpha}^{(\ell)}\in\mathbb{F}^{n\times n}$ for each $\ell\in\mathcal{N}$, defined as
\begin{align}
    \alpha_{jk}^{(\ell)}:=\left\{\begin{matrix}
        \alpha_{jk}&,\text{ if }\delta(j,k)=\ell,\\
        0&,\text{ otherwise}
    \end{matrix}\right..
\end{align}
It is clear that each matrix $\boldsymbol{\alpha}^{(\ell)}$ is also skew-symmetric, and  the original matrix $\boldsymbol{\alpha}$ decomposes as $\boldsymbol{\alpha}=\sum_{\ell\in\mathcal{N}}\boldsymbol{\alpha}^{(\ell)}$. One can now make the following observation:

\begin{lemma}\label{lem:center:equals:ker:intersection}
    Let $\g$ be a finite-dimensional minimal-graph-admissible Lie algebra associated with the minimal graph $G(V,E)$. Then $\dim(\mathcal{Z}(\g))=\dim\left(\bigcap_{\ell\in\mathcal{N}}\operatorname{Ker}\left\{\boldsymbol{\alpha}^{(\ell)}\right\}\right)$. Moreover, the center of $\g$ is given by $\mathcal{Z}(\g)=\{\sum_{j\in\mathcal{N}}\lambda_j v_j\,|\,(\lambda_j)_{j\in\mathcal{N}}$ $=\boldsymbol{\lambda}\in \bigcap_{\ell\in\mathcal{N}}\operatorname{Ker}\left\{\boldsymbol{\alpha}^{(\ell)}\right\}\}$.
\end{lemma}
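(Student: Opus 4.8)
The plan is to unwind the definition of the center directly in the chosen basis $V=\{v_j\}_{j\in\mathcal{N}}$ and translate the condition $[x,y]=0$ for all $y\in\g$ into a system of linear equations indexed by the target basis vectors $v_\ell$. First I would take an arbitrary element $x=\sum_{j\in\mathcal{N}}\lambda_j v_j$ with coefficient vector $\boldsymbol{\lambda}=(\lambda_j)_{j\in\mathcal{N}}$. Since $V$ is a basis (the graph is minimal), $x\in\mathcal{Z}(\g)$ if and only if $[x,v_k]=0$ for every $k\in\mathcal{N}$, by bilinearity. Now compute
\begin{align*}
    [x,v_k]=\sum_{j\in\mathcal{N}}\lambda_j[v_j,v_k]=\sum_{j\in\mathcal{N}}\lambda_j\,\alpha_{jk}\,v_{\delta(j,k)},
\end{align*}
using the minimal-graph-admissibility relations \eqref{eqn:desired:basis} and the convention $v_0=0$. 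Grouping the terms on the right-hand side by which basis vector $v_\ell$ they are proportional to, and recalling the definition $\alpha_{jk}^{(\ell)}=\alpha_{jk}$ if $\delta(j,k)=\ell$ and $0$ otherwise, this becomes
\begin{align*}
    [x,v_k]=\sum_{\ell\in\mathcal{N}}\left(\sum_{j\in\mathcal{N}}\lambda_j\,\alpha_{jk}^{(\ell)}\right)v_\ell,
\end{align*}
where the terms with $\delta(j,k)=0$ drop out since $v_0=0$.

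Next I would use linear independence of $\{v_\ell\}_{\ell\in\mathcal{N}}$ to conclude that $[x,v_k]=0$ holds if and only if $\sum_{j\in\mathcal{N}}\lambda_j\,\alpha_{jk}^{(\ell)}=0$ for every $\ell\in\mathcal{N}$. This is precisely the statement that, for each $\ell$, the vector $\boldsymbol{\lambda}$ lies in the kernel of the linear map whose matrix (acting appropriately) is $\boldsymbol{\alpha}^{(\ell)}$ — here one must be slightly careful about whether $\boldsymbol{\alpha}^{(\ell)}$ is being applied on the left or the right, but since $\boldsymbol{\alpha}^{(\ell)}$ is skew-symmetric its left kernel and right kernel coincide, so the distinction is immaterial and I would note this explicitly. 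Quantifying over all $k\in\mathcal{N}$ is automatic once the condition holds for all $\ell$ (the index $k$ already ranges over all of $\mathcal{N}$ inside the matrix $\boldsymbol{\alpha}^{(\ell)}$). Hence $x\in\mathcal{Z}(\g)$ if and only if $\boldsymbol{\lambda}\in\bigcap_{\ell\in\mathcal{N}}\operatorname{Ker}\{\boldsymbol{\alpha}^{(\ell)}\}$, which gives the claimed explicit description $\mathcal{Z}(\g)=\{\sum_{j\in\mathcal{N}}\lambda_j v_j\mid\boldsymbol{\lambda}\in\bigcap_{\ell\in\mathcal{N}}\operatorname{Ker}\{\boldsymbol{\alpha}^{(\ell)}\}\}$.

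Finally, the dimension statement follows because the coordinate map $\boldsymbol{\lambda}\mapsto\sum_{j\in\mathcal{N}}\lambda_j v_j$ is a linear isomorphism from $\mathbb{F}^n$ onto $\g$ (again using that $V$ is a basis), and it restricts to a linear isomorphism from $\bigcap_{\ell\in\mathcal{N}}\operatorname{Ker}\{\boldsymbol{\alpha}^{(\ell)}\}$ onto $\mathcal{Z}(\g)$; isomorphic vector spaces have equal dimension, so $\dim(\mathcal{Z}(\g))=\dim(\bigcap_{\ell\in\mathcal{N}}\operatorname{Ker}\{\boldsymbol{\alpha}^{(\ell)}\})$. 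I do not anticipate a genuine obstacle here — the proof is essentially a bookkeeping exercise — but the one point that deserves care, and which I would flag as the main subtlety, is the matching of the index conventions: making sure that the re-indexing of the sum $\sum_j\lambda_j\alpha_{jk}v_{\delta(j,k)}$ into $\sum_\ell(\sum_j\lambda_j\alpha_{jk}^{(\ell)})v_\ell$ is exactly the definition of $\boldsymbol{\alpha}^{(\ell)}$, and that the skew-symmetry of each $\boldsymbol{\alpha}^{(\ell)}$ is what lets us speak unambiguously of $\operatorname{Ker}\{\boldsymbol{\alpha}^{(\ell)}\}$ without specifying a side.
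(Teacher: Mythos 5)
Your proposal is correct and follows essentially the same route as the paper's proof: expand an arbitrary element in the basis, compute $[x,v_k]$, regroup the sum by the target vertex $v_\ell$ using the definition of $\boldsymbol{\alpha}^{(\ell)}$, invoke linear independence, and use the skew-symmetry of each $\boldsymbol{\alpha}^{(\ell)}$ to identify the condition with membership of $\boldsymbol{\lambda}$ in $\bigcap_{\ell\in\mathcal{N}}\operatorname{Ker}\{\boldsymbol{\alpha}^{(\ell)}\}$. The only cosmetic difference is that you argue both directions at once via a chain of equivalences (and make the dimension count explicit through the coordinate isomorphism), whereas the paper proves one inclusion and then notes the reverse follows by tracing the steps backwards.
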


\begin{proof}
    It is immediately clear that the first claim follows from the second one. Thus, we only need to prove the second statement. Let us begin by introducing the sets $S(k,\ell):=\{j\in\mathcal{N}\,\mid\,\delta(j,k)=\ell\}$ and considering an arbitrary element $z\in \mathcal{Z}(\g)$. Since $V$ is a basis of $\g$, we can express this element as: $z=\sum_{j\in\mathcal{N}}\lambda_j v_j$ for some coefficients $\lambda_j\in\mathbb{F}$. Because $z$ lies in the center, one must have
    \begin{align*}
        [z,v_k]=\sum_{j\in\mathcal{N}}\lambda_j[v_j,v_k]=\sum_{j\in\mathcal{N}}\lambda_j\alpha_{jk}v_{\delta(j,k)}=0\qquad\text{for all }k\in\mathcal{N}.
    \end{align*}
    We can reorganize this expression, by grouping terms according to the target index $\ell$:
    \begin{align*}
        [z,v_k]&=\sum_{\ell\in\mathcal{N}}\sum_{j\in S(k,\ell)}\lambda_j \alpha_{jk} v_\ell=\sum_{\ell\in\mathcal{N}}\sum_{j\in S(k,\ell)}\lambda_j \alpha_{jk}^{(\ell)} v_\ell=\sum_{\ell\in\mathcal{N}}\sum_{j\in\mathcal{N}}\lambda_j\alpha_{jk}^{(\ell)} v_\ell,
    \end{align*}
    since $\alpha_{jk}^{(\ell)}=0$ if $\delta(j,k)\neq \ell$. Now, given that the $v_\ell$ are linearly independent basis elements one must therefore have $\sum_{j\in\mathcal{N}}\lambda_j\alpha_{jk}^{(\ell)}=0$ for all $k,\ell\in\mathcal{N}$. Let us collect the coefficients $\lambda_j$ into a column vector $\boldsymbol{\lambda}=(\lambda_1,\ldots,\lambda_n)^\mathrm{Tp}\in\mathbb{F}^n$. The condition above can then be rewritten as $\boldsymbol{\lambda}^\mathrm{Tp}\boldsymbol{\alpha}^{(\ell)}=0$ for all $\ell\in\mathcal{N}$. Due to the antisymmetry of each $\boldsymbol{\alpha}^{(\ell)}$, this is equivalent to $\boldsymbol{\alpha}^{(\ell)}\boldsymbol{\lambda}=0$ for all $\ell\in\mathcal{N}$. Hence, $\boldsymbol{\lambda}\in\bigcap_{\ell\in\mathcal{N}}\operatorname{Ker}\{\boldsymbol{\alpha}^{(\ell)}\}$, and consequently
    \begin{align*}
        \mathcal{Z}(\g)\subseteq\left\{\sum_{j\in\mathcal{N}}\lambda_j v_j\,\middle\mid\,(\lambda_j)_{j\in\mathcal{N}}=\boldsymbol{\lambda}\in \bigcap_{\ell\in\mathcal{N}}\operatorname{Ker}\left\{\boldsymbol{\alpha}^{(\ell)}\right\}\right\}.
    \end{align*}
    To prove the reverse inclusion \enquote{$\supseteq$}, one can simply trace the steps of this argument backwards. That is, any vector $\boldsymbol{\lambda}\in\bigcap_{\ell\in\mathcal{N}}\operatorname{Ker}\{\boldsymbol{\alpha}^{(\ell)}\}$ yields an element $z\in\mathcal{Z}(\g)$, completing the proof.
\end{proof}

To establish Conjecture~\ref{con:non:zero:center:proper:igp:sets}, it suffices to demonstrate that if there exists no proper non-empty subset $W\subsetneq V$ satisfying the ideal-graph-property, then the center of the Lie algebra vanishes, i.e., $\mathcal{Z}(\g)=\{0\}$.  Now observe that Lemma~\ref{lem:no:ideal:subset:walk:connecting:all:vertices} asserts that under the absence of proper non-empty subsets of $V$ satisfying the ideal-graph-property or spanning a solvable ideal of $\g$, the associated graph $G(V,E)$ must contain a directed walk for each par of vertices $v,v'\in V$ starting $v$ and ending at $v'$. This connectivity condition implies that, for all $\ell\in\mathcal{N}$, there exists at least one pair of indices $(j,k)\in\mathcal{N}\times\mathcal{N}$ such that $\delta(j,k)=\ell$. The rank of each matrix $\boldsymbol{\alpha}^{(\ell)}$ must consequently be at least two. This follows from the skew-symmetry of every $\boldsymbol{\alpha}^{(\ell)}$, which ensures that each matrix has at least two non-zero entries located in to distinct rows and columns. Moreover, it is straightforward to verify that if there exist distinct pairs $(j,k),(j',k')\in\mathcal{N}\times\mathcal{N}$ such that $\delta(j,k)=\ell=\delta(j',k')$ with $(j',k')\neq(j,k)\neq (k',j')$, then the rank of $\boldsymbol{\alpha}^{(\ell)}$ must at least be three. In contrast, if $\operatorname{rank}\{\boldsymbol{\alpha}^{(\ell)}\}=2$, then any vector $\boldsymbol{\lambda}\in\bigcap_{\ell\in\mathcal{N}}\operatorname{Ker}\{\boldsymbol{\alpha}^{(\ell)}\}$ must clearly satisfy $\lambda_j=0$ for all indices $j\in\{j\in\mathcal{N}\,\mid\,\exists\,k\in\mathcal{N}:\delta(j,k)=\ell\}$. The interested reader may verify that this reasoning confirms the validity of Conjecture~\ref{con:non:zero:center:proper:igp:sets} in the case where $\dim(\g)=3$. Motivated by this observation, we further conjecture that the absence of proper non-empty subsets that satisfy the ideal-graph-property and span a solvable ideal implies that the intersection of the kernels satisfies: $\bigcap_{\ell\in\mathcal{N}}\operatorname{Ker}\{\boldsymbol{\alpha}^{(\ell)}\}=\{0\}$, as all attempts to construct counterexamples have failed, typically due to violations of the Jacobi identity or limitations imposed by the size and structure of the matrix $\boldsymbol{\alpha}$. This insight naturally leads to a refinement of Conjecture~\ref{con:non:zero:center:proper:igp:sets}, formalized as Conjecture~\ref{con:no:proper:igp:sets:ker:intersection:zero}:
\begin{conjecture}\label{con:no:proper:igp:sets:ker:intersection:zero}
    Let $\g$ be a finite-dimensional minimal-graph-admissible Lie algebra associated with the minimal graph $G(V,E)$, and suppose $\dim(\g)\geq 2$. Then, $\bigcap_{\ell\in\mathcal{N}}\operatorname{Ker}\{\boldsymbol{\alpha}^{(\ell)}\}=\{0\}$ only if no proper non-empty subset $W\subsetneq V$ satisfies the ideal graph property and spans a non-solvable ideal of $\g$.
\end{conjecture}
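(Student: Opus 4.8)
The plan is to prove the contrapositive after translating the kernel hypothesis into a statement about the centre. By Lemma~\ref{lem:center:equals:ker:intersection} the condition $\bigcap_{\ell\in\mathcal{N}}\operatorname{Ker}\{\boldsymbol{\alpha}^{(\ell)}\}=\{0\}$ is equivalent to $\mathcal{Z}(\g)=\{0\}$, so it is enough to show: if some proper non-empty subset $W\subsetneq V$ satisfies the ideal-graph-property and $\mathfrak{i}:=\spn\{W\}=\lie{W}$ (the equality holding by Lemma~\ref{lem:ideal:span}) is a non-solvable ideal of $\g$, then $\mathcal{Z}(\g)\neq\{0\}$. Concretely, starting from such a $W$, the goal is to manufacture a non-zero vector $\boldsymbol{\lambda}\in\mathbb{F}^n$ lying in every $\operatorname{Ker}\{\boldsymbol{\alpha}^{(\ell)}\}$, i.e.\ a non-zero central element, and then verify the claim by running the computation in the proof of Lemma~\ref{lem:center:equals:ker:intersection} in reverse.

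For the construction itself I would proceed as follows. First, by Lemma~\ref{lem:subalgebra:graph:weak} the induced subgraph on $W$ is a proper graph associated with $\mathfrak{i}$, and since $\mathfrak{i}$ is non-solvable, Theorem~\ref{thm:non:solvability:condition:strong} forces it to contain a self-contained subgraph induced by a closed directed walk; equivalently, the derived series of $\mathfrak{i}$ stabilises at a non-zero perfect subalgebra. Next, write $V=W\sqcup U$ with $U:=V\setminus W$ and record the two consequences of the ideal-graph-property: every edge with source in $W$ has its target in $W$ (so $\delta(j,k)\in W\cup\{0\}$ whenever $k\in W$, which block-triangularises the matrices $\boldsymbol{\alpha}^{(\ell)}$), and for each $u\in U$ the map $\operatorname{ad}_u|_{\mathfrak{i}}$ is a derivation of $\mathfrak{i}$. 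Then use the Levi decomposition $\mathfrak{i}=\s\ltimes\operatorname{rad}(\mathfrak{i})$ with $\s\neq\{0\}$ (valid since $\operatorname{char}(\mathbb{F})=0$): because derivations of $\mathfrak{i}$ are inner modulo the radical, one can correct each $u\in U$ by an element of $\mathfrak{i}$ so as to obtain a spanning set whose vertices outside $\mathfrak{i}$ centralise the Levi factor; the common centraliser assembled from these corrections, together with the centre of the radical part, is the natural place to look for the sought-after central vector. Finally, translating this vector back to coefficients against the original basis $V$ would give the desired element of $\bigcap_\ell\operatorname{Ker}\{\boldsymbol{\alpha}^{(\ell)}\}$.

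The step I expect to be the real obstacle is the case in which $\mathfrak{i}$ is semisimple and splits off as a direct summand, so that $\g=\mathfrak{i}\oplus\mathfrak{i}'$ for some complementary ideal $\mathfrak{i}'$. Here the corrections above are vacuous and the construction yields only the zero vector: indeed $\mathcal{Z}(\g)=\mathcal{Z}(\mathfrak{i}')$, which need not be non-zero. This is already realised by $\g=\sl{2}{\mathbb{F}}\oplus\sl{2}{\mathbb{F}}$ with $W$ the standard basis of one factor, since $W$ satisfies the ideal-graph-property, spans a non-solvable ideal, yet $\bigcap_\ell\operatorname{Ker}\{\boldsymbol{\alpha}^{(\ell)}\}=\{0\}$. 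Consequently the conjecture as literally stated cannot hold without an additional hypothesis, and the most promising repair is to impose that $\g$ is indecomposable (or that $\operatorname{rad}(\g)$ already equals $\mathfrak{i}$, or to revert to \emph{solvable} ideals as in Conjecture~\ref{con:non:zero:center:proper:igp:sets}); under any such restriction the plan above should close, the only remaining delicate point being to keep the Levi-adapted spanning set \emph{minimal} rather than redundant.
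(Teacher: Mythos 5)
You should first be aware that the statement you were asked to prove is presented in the paper as a \emph{conjecture}: the authors give no proof of it, only the motivating material around Lemma~\ref{lem:center:equals:ker:intersection} (which identifies $\bigcap_{\ell\in\mathcal{N}}\operatorname{Ker}\{\boldsymbol{\alpha}^{(\ell)}\}$ with the coefficient vectors of central elements) and the remark that Conjecture~\ref{con:non:zero:center:proper:igp:sets} would follow from its validity. There is therefore no proof in the paper to compare your attempt against, and your proposal has to be judged on its own.

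Judged that way, it is not a proof. The constructive half --- Levi decomposition of $\mathfrak{i}$, a Whitehead-type correction of each generator $u\in U$ by an element of $\mathfrak{i}$ so that the corrected elements centralise the Levi factor, and then ``looking'' in the common centraliser for a central element --- is only a plan: the decisive step, actually producing a nonzero element of $\mathcal{Z}(\g)$ and hence of $\bigcap_\ell\operatorname{Ker}\{\boldsymbol{\alpha}^{(\ell)}\}$, is never carried out, and as you yourself observe it cannot be carried out in general. What you do establish is that the conjecture \emph{as literally worded} fails: $\sl{2}{\mathbb{F}}\oplus\sl{2}{\mathbb{F}}$ (equivalently the paper's own $\mathfrak{su}(2)\oplus\mathfrak{su}(2)$, discussed in the paragraph immediately following the conjecture) has trivial centre, hence trivial kernel intersection by Lemma~\ref{lem:center:equals:ker:intersection}, while the vertices of one simple summand form a proper non-empty subset satisfying the ideal-graph-property and spanning a non-solvable ideal. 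That observation is correct, but it targets a defect in the wording rather than the intended content: the prose directly preceding the conjecture makes clear that the authors mean the reverse implication with ``solvable'' in place of ``non-solvable'' --- namely that the absence of proper non-empty IGP subsets spanning a \emph{solvable} ideal forces $\bigcap_\ell\operatorname{Ker}\{\boldsymbol{\alpha}^{(\ell)}\}=\{0\}$, i.e.\ essentially the contrapositive of Conjecture~\ref{con:non:zero:center:proper:igp:sets} expressed through Lemma~\ref{lem:center:equals:ker:intersection}. Your example says nothing about that statement, which remains open.

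Note also that your proposed repairs do not rescue the literal wording. Indecomposability is insufficient: take $\g=(\sl{2}{\mathbb{F}}\oplus\sl{2}{\mathbb{F}})\ltimes(\mathbb{F}^2\otimes\mathbb{F}^2)$ with Chevalley bases and the weight basis of the module; this algebra is minimal-graph-admissible, indecomposable, has trivial centre, and the vertices of the first $\sl{2}{\mathbb{F}}$ together with the weight vectors form a proper IGP subset spanning a non-solvable ideal. If you want to make progress on what the authors actually intend, the Levi-correction idea is also the wrong tool, since in that statement the hypothesis hands you no distinguished non-solvable ideal to decompose; the rank and kernel analysis of the matrices $\boldsymbol{\alpha}^{(\ell)}$ sketched by the authors just before the conjecture (together with Lemma~\ref{lem:no:ideal:subset:walk:connecting:all:vertices}) is the more natural route.
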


By virtue of the previous discussion, and in light of Lemma~\ref{lem:center:equals:ker:intersection}, it follows directly that Conjecture~\ref{con:non:zero:center:proper:igp:sets} holds if Conjecture~\ref{con:no:proper:igp:sets:ker:intersection:zero} is valid. Specifically, the vanishing of the intersection of the kernels implies that the center of the Lie algebra is trivial, which aligns with the structural condition described in Conjecture~\ref{con:no:proper:igp:sets:ker:intersection:zero}.

It is important to note that the converse direction of Conjecture~\ref{con:non:zero:center:proper:igp:sets}, namely that a trivial center $\mathcal{Z}(\g)=\{0\}$ implies the absence of any proper non-empty subset $W\subsetneq V$ that satisfies the ideal-graph-property, is demonstrably incorrect, when removing the condition that the proper subsets of $V$ must span a solvable ideal of $\g$. A counterexample disproving this is, is provided by the Lie algebra $\mathfrak{su}(2)\oplus\mathfrak{su}(2)$. This algebra admits, for instance, the basis $\{i\sigma_{\mathrm{x}}^{(1)},i\sigma_{\mathrm{x}}^{(2)},i\sigma_{\mathrm{y}}^{(1)},i\sigma_{\mathrm{y}}^{(2)},i\sigma_{\mathrm{z}}^{(1)},i\sigma_{\mathrm{z}}^{(2)}\}$, where the superscript index denotes the respective components of the direct sum. These basis elements satisfy the Lie bracket relations adopted from \cite{Pauli:1927}, which read:
\begin{align}
    [i\sigma_\mathrm{x}^{(j)},i\sigma_\mathrm{y}^{(k)}]&=-i\sigma_\mathrm{z}^{(j)}\delta_{jk},\;&\;[i\sigma_\mathrm{y}^{(j)},i\sigma_\mathrm{z}^{(k)}]&=-i\sigma_\mathrm{x}^{(j)}\delta_{jk},\;&\;[i\sigma_\mathrm{z}^{(j)},i\sigma_\mathrm{x}^{(k)}]&=-i\sigma_\mathrm{y}^{(j)}\delta_{jk}.
\end{align}
This basis clearly satisfies the structural condition \eqref{eqn:desired:basis}, confirming that $\mathfrak{su}(2)\oplus\mathfrak{su}(2)$ minimal-graph-admissible. However, both subsets $\{i\sigma_\mathrm{x}^{(1)},i\sigma_\mathrm{y}^{(1)},i\sigma_\mathrm{z}^{(1)}\}$ and $\{i\sigma_\mathrm{x}^{(2)},i\sigma_\mathrm{y}^{(2)},i\sigma_\mathrm{z}^{(2)}\}$ are clearly proper non-empty subsets of the vertex set $V$ and satisfy the ideal-graph-property, but do not span a solvable ideal of $\g$. This motivates the second condition that the proper non-empty subsets that satisfy the ideal-graph-property must therefore span a non-solvable ideal, which can be checked utilizing Theorem~\ref{thm:non:solvability:condition:strong}.

To extend Conjecture~\ref{con:non:zero:center:proper:igp:sets} to a bidirectional statement, we propose the following conjecture providing a converse direction: 
\begin{conjecture}
    Let $\g$ be a minimal-graph-admissible Lie algebra associated with the minimal graph $G(V,E)$. Suppose the center of $\g$ is trivial, i.e., $\mathcal{Z}(\g)=\{0\}$. Then one of the following conditions must hold:
    \begin{enumerate}[label =(\roman*)]
        \item The vertex set $V$ contains no proper non-empty subset $W\subsetneq V$ that satisfies the ideal-graph-property or
        \item The Lie algebra $\g$ admits a decomposition as a direct sum $\g=\bigoplus_{j\in\mathcal{J}}\g_j$, such that each component satisfies $\mathcal{Z}(\g_j)=\{0\}$, and every minimal graph $G(V_j,E_j)$ associated with each $\g_j$ contains no proper non-empty subset $W_j\subsetneq V_j$ that satisfies the ideal-graph-property.
    \end{enumerate}
     
\end{conjecture}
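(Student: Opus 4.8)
The natural line of attack is induction on $\dim\g$, reducing the statement to splitting off a complementary ideal. If condition~(i) already holds for the given minimal graph $G(V,E)$ there is nothing to do, so assume there is a proper non-empty $W\subsetneq V$ satisfying the ideal-graph-property of Definition~\ref{def:ideal:graph:property}. By Lemma~\ref{lem:ideal:span}, $\mathfrak{i}:=\spn\{W\}$ is a proper non-zero ideal of $\g$; since $W$ is a subset of the basis $V$ it is a basis of $\mathfrak{i}$, so by Lemma~\ref{lem:subalgebra:graph:weak} the subgraph induced on $W$ is a minimal graph associated with $\mathfrak{i}$, and $\mathfrak{i}$ is again minimal-graph-admissible. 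The plan is then: (1) produce an ideal $\mathfrak{j}$ with $\g=\mathfrak{i}\oplus\mathfrak{j}$ and $[\mathfrak{i},\mathfrak{j}]=\{0\}$; (2) check that $\mathcal{Z}(\mathfrak{i})=\mathcal{Z}(\mathfrak{j})=\{0\}$; (3) apply the inductive hypothesis to $\mathfrak{i}$ and to $\mathfrak{j}$ and concatenate the resulting decompositions into one of the form required by condition~(ii). When a piece carries no proper ideal-graph-property subset the inductive hypothesis returns the trivial one-term decomposition, which is exactly condition~(i) for that piece; the base case $\dim\g\le1$ is immediate.

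\textbf{Routine steps.} Step~(2) is a one-line computation once~(1) is in hand: for $z\in\mathcal{Z}(\mathfrak{i})$ one has $[z,\g]=[z,\mathfrak{i}]+[z,\mathfrak{j}]=\{0\}$, since $z$ is central in $\mathfrak{i}$ and $[\mathfrak{i},\mathfrak{j}]=\{0\}$, hence $z\in\mathcal{Z}(\g)=\{0\}$, and symmetrically for $\mathfrak{j}$. One also needs the bookkeeping that an ideal-graph-property subset of $V$ contained in $W$ is an ideal-graph-property subset for the graph of $\mathfrak{i}$ and conversely — immediate from Lemmas~\ref{lem:ideal:span} and~\ref{lem:ideal:basis:subset:minimal}, because the induced subgraph retains exactly the edges with all three endpoints in $W$ — so that the inductive hypothesis is applied to the graphs actually inherited from $G(V,E)$. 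These are all harmless.

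\textbf{The main obstacle.} Step~(1) is the crux, and it is precisely where the argument breaks: a proper ideal of a Lie algebra with trivial center need not admit a complementary ideal. The canonical witness is the two-dimensional affine algebra $\mathfrak{aff}(\mathbb{F})=\spn\{x,y\}$ with $[x,y]=y$: its center is trivial, it is minimal-graph-admissible (take $\delta(1,2)=2$), and in every minimal graph the one-dimensional ideal $\spn\{y\}$ appears as a singleton vertex whose only outgoing edge is a loop, hence satisfies the ideal-graph-property, so condition~(i) fails; yet $\mathfrak{aff}(\mathbb{F})$ is indecomposable, its only proper non-zero ideal $\spn\{y\}$ has no complementary ideal, and the one-term decomposition merely returns~(i). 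Thus, as literally stated, the conjecture has a counterexample, and a correct version must quarantine such solvable pieces. The plausible repair — parallel to the solvability qualifier forced on Conjecture~\ref{con:non:zero:center:proper:igp:sets} by the $\mathfrak{su}(2)\oplus\mathfrak{su}(2)$ example — is to strengthen~(i) and~(ii) to ``no proper non-empty subset satisfying the ideal-graph-property and spanning a \emph{non-solvable} ideal,'' a condition testable via Theorem~\ref{thm:non:solvability:condition:strong}. Under that reading the program above has a chance: one peels off $\operatorname{rad}(\g)$, which carries no such subset, realizes $\g/\operatorname{rad}(\g)$ as a semisimple graph-admissible algebra, and invokes Proposition~\ref{prop:semisimple:lie:algebra:unconnected:subgraphs} to split its graph into unconnected components indexed by the simple ideals. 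The genuine technical hurdle is then to transport the ideal-graph-property statement faithfully across the quotient by the radical and back — i.e.\ to exhibit a single minimal graph of $\g$ in which the radical and the simple summands occupy disjoint vertex blocks — which for non-reductive $\g$ requires controlling how the Levi factor acts on the radical, and that I expect to be the hard part.
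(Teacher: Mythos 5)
This statement is one of the paper's open conjectures: the authors supply no proof of it, so there is nothing of theirs to compare your argument against, and your submission has to be judged on its own terms. Its decisive content is not the inductive programme of your first two paragraphs but your claim that the statement, read literally, is false — and that claim is correct. For $\mathfrak{aff}(\mathbb{F})=\spn\{x,y\}$ with $[x,y]=y$ one checks: the center is trivial; the basis $\{x,y\}$ satisfies \eqref{eqn:desired:basis}, so the algebra is minimal-graph-admissible; in \emph{any} basis satisfying \eqref{eqn:desired:basis} the bracket of the two basis vectors is a nonzero multiple of one of them, which must lie in $\mathcal{D}^1\mathfrak{aff}(\mathbb{F})=\spn\{y\}$, so every minimal graph has a vertex proportional to $y$ whose only outgoing edge is a loop-type edge ending at itself, and that singleton satisfies the ideal-graph-property of Definition~\ref{def:ideal:graph:property}; hence condition (i) fails for every minimal graph. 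Condition (ii) fails as well: the only proper non-zero ideal is $\spn\{y\}$ and it admits no complementary ideal (and no one-dimensional subalgebra has trivial center), so the only admissible decomposition is the one-term one, which would again require every minimal graph of $\g$ itself to be free of ideal-graph-property subsets. Thus $\mathfrak{aff}(\mathbb{F})$ refutes the conjecture as stated: clause (ii), introduced to absorb the $\mathfrak{su}(2)\oplus\mathfrak{su}(2)$ phenomenon discussed before Conjecture~\ref{con:non:zero:center:proper:igp:sets}, does not cover indecomposable solvable algebras with trivial center. Identifying this is a genuine contribution, and your proposed repair (only excluding ideal-graph-property subsets that span non-solvable ideals) is the natural analogue of the solvability qualifier the authors themselves use in Conjecture~\ref{con:non:zero:center:proper:igp:sets}.

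Two caveats on the constructive part of your write-up, both secondary since the literal statement is false. First, even for a corrected statement your induction has more problems than the missing complementary ideal: the inductive hypothesis yields clause (i) only for the particular induced graph of a summand, while clause (ii) for $\g$ quantifies over \emph{every} minimal graph of every component, so the concatenation step does not follow as written; moreover the ``conversely'' in your bookkeeping is not immediate — an edge joining two vertices of $W$ may be labeled by a vertex outside $W$ and is then absent from the induced subgraph, so a subset with the ideal-graph-property in the graph of $\mathfrak{i}$ need not have it in $G(V,E)$ (this is the graph-level shadow of the fact that an ideal of an ideal need not be an ideal); only the restriction direction is automatic from Lemmas~\ref{lem:ideal:span} and~\ref{lem:ideal:basis:subset:minimal}. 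Second, the radical-peeling programme at the end is speculative: Proposition~\ref{prop:semisimple:lie:algebra:unconnected:subgraphs} is stated over algebraically closed fields, the claim that the radical carries no offending subset is not established, and transporting the property across the Levi decomposition is precisely the open difficulty you name. Your main point stands: before any proof can be attempted, the conjecture needs a restriction of the kind you indicate.
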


We now formalize a process for identifying ideals within finite-dimensional graph-admissible Lie algebras by leveraging the structure of its associated labeled directed graph. The following Algorithm~\ref{alg:identifying:ideals} provides a systematic method for detecting all subsets of vertices that satisfy the ideal-graph-property, and consequently span ideals. This procedure is particularly useful for minimal-graph-admissible Lie algebras associated with minimal graphs, as the resulting vertex sets corresponds to a basis of the ideal.

\begin{algorithm}[H]
        \DontPrintSemicolon
        \KwData{A labeled directed graph $G(V,E)$ associated with the graph-admissible Lie algebra $\g$.}
        \KwResult{A set $\mathcal{I}$ of ideals of $\g$, represented as subspaces spanned by subsets of the vertex set $V$.}
        \SetKwData{Left}{left}\SetKwData{This}{this}\SetKwData{Up}{up}
        \SetKwFunction{Union}{Union}\SetKwFunction{FindCompress}{FindCompress}
        \SetKwInOut{Input}{input}\SetKwInOut{Output}{output}
    
        \BlankLine
        $\mathcal{I}$ $\leftarrow$ $\{\{0\},V\}$
        \tcc*[h]{Initialize the set $\mathcal{I}$ with the generator-sets of the trivial ideals: the zero ideal and the full Lie algebra}\;
        $\mathcal{C}$ $\leftarrow$ $V$
        \tcc*[h]{Initialize the set of candidate starting vertices}\;
        \BlankLine
        \ForEach{vertex $c\in \mathcal{C}$}{
            $\mathfrak{i}$ $\leftarrow$ $\{c\}$\tcc*[h]{Initialize the current ideal basis candidate $\mathfrak{i}$ as the singleton $\{c\}$}\;
            $\mathfrak{i}_\mathrm{new}$ $\leftarrow$ $\{c\}$\tcc*[h]{Track newly added vertices in the current iteration}\;
            \While{there exists an edge $e\in E$ such that $\varpi_\mathrm{s}(e)\in\mathfrak{i}$ and $\varpi_\mathrm{e}(e)\notin \mathrm{i}$}{
                $\mathfrak{i}_\mathrm{new}'\leftarrow\emptyset$\tcc*[h]{Reset the set $\mathfrak{i}'_\mathrm{new}$ collecting new vertices to be added}\;
                \ForEach{vertex $w\in \mathfrak{i}_\mathrm{new}$ and $e\in E$}{
                    \If{$\varpi_\mathrm{s}(e)=w$ and $\varpi_\mathrm{e}(e)\notin \mathrm{i}$}{
                        $\mathrm{i}_\mathrm{new}'$ $\leftarrow$ $\mathfrak{i}_\mathrm{new}'\cup\{\varpi_\mathrm{e}(e)\}$\tcc*[h]{Add the target vertex of the edge to the ideal candidates}
                    }
                }
                $\mathfrak{i}$ $\leftarrow$ $\mathfrak{i}\cup\mathfrak{i}_\mathrm{new}'$\tcc*[h]{Update the ideal basis candidate $\mathfrak{i}$ with newly added vertices}\;
                $\mathfrak{i}_\mathrm{new}$ $\leftarrow$ $\mathfrak{i}_\mathrm{new}'$\tcc*[h]{Prepare for the next iteration}\;
            }
            $\mathcal{I}$ $\leftarrow$ $\mathcal{I}\cup\{\mathfrak{i}\}$\tcc*[h]{Add the complete ideal basis candidate to the set of ideals}\;
            \If{there exists a closed directed walk $W=(w_1,e_1,w_2,\ldots,w_p,e_{p},w_{1})$ such that $\bigcup_{j=1}^{p}w_j= \mathfrak{i}$}{
                $\mathcal{C}$ $\leftarrow$ $\mathcal{C}\setminus\mathfrak{i}$\tcc*[h]{Remove all vertices in $\mathfrak{i}$ from the candidate set if they form a closed directed walk}
            }
            \Else{
                $\mathcal{C}$ $\leftarrow$ $\mathcal{C}\setminus\{c\}$\tcc*[h]{Remove only the starting vertex if no such closed directed walk exists}
            }
        }
        $\mathcal{F}$ $\leftarrow$ $\emptyset$
        \tcc*[h]{Initialize the set that contains all ideals}\;
        \ForEach{subset $\mathfrak{i}=\{\mathfrak{i}_1,\mathfrak{i}_2,\ldots,\mathfrak{i}_s\}\in \mathcal{P}(\mathcal{I})$}{
            $\mathcal{F}$ $\leftarrow$ $\mathcal{F}\cup\{\spn\{\mathfrak{i}_1\cup\mathfrak{i}_2\cup\ldots\cup\mathfrak{i}_s\}\}$\tcc*[h]{Add the span of the union of subsets as an ideal}
        }
        \Return $\mathcal{F}$\tcc*[h]{Return the complete set of identified ideals}\;
\caption{Identifying ideals of a finite-dimensional graph-admissible Lie algebra}\label{alg:identifying:ideals}
\end{algorithm}

\begin{lemma}
    Let $\mathfrak{g}$ be a finite-dimensional graph-admissible Lie algebra associated with the labeled directed graph $G(V,E)$. Then Algorithm~\ref{alg:identifying:ideals} identifies precisely every subalgebra of $\g$ that is spanned by a subset $W\subseteq V$ satisfying the ideal-graph-property, and only those.
\end{lemma}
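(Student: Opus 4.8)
The plan is to pin down exactly which subspaces end up in the output set $\mathcal{F}$ of Algorithm~\ref{alg:identifying:ideals} and show that this set equals $\{\spn W \mid W\subseteq V \text{ satisfies the ideal-graph-property}\}$; since Lemma~\ref{lem:ideal:span} tells us every such span is an ideal (indeed $\spn W=\lie{W}$, hence in particular a subalgebra), that yields the claim. I would start from two elementary observations about Definition~\ref{def:ideal:graph:property}: the empty set and $V$ both satisfy the ideal-graph-property, and the property is preserved under arbitrary intersections and arbitrary unions, because the defining implication $\varpi_\mathrm{s}(e)\in W\Rightarrow\varpi_\mathrm{e}(e)\in W$ transfers verbatim to intersections and unions. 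Intersection-closure gives, for each vertex $c\in V$, a unique smallest subset $R(c)\subseteq V$ with the ideal-graph-property containing $c$, and I would identify $R(c)$ with the set of vertices reachable from $c$ along directed walks (together with $c$): this reachable set plainly has the property, and any set with the property that contains $c$ must contain every vertex reachable from $c$ by induction on walk length. Finally I would note that the inner \texttt{while}-loop of the algorithm is a breadth-first search which, starting from $\{c\}$, repeatedly adjoins the endpoints $\varpi_\mathrm{e}(e)$ of all edges $e$ whose source lies in the current set, and therefore halts (by finiteness of $V$) precisely when no ``escaping'' edge remains, i.e.\ with $\mathfrak{i}=R(c)$.

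Next I would show that on termination $\mathcal{I}=\{\{0\}\}\cup\{R(c)\mid c\in V\}$, with $V$ additionally (but redundantly, since $V=\bigcup_{c\in V}R(c)$) present. The inclusion ``$\subseteq$'' is immediate, as the only members ever inserted are $\{0\}$, $V$, and the sets $\mathfrak{i}=R(c)$. For ``$\supseteq$'' the content is that pruning the candidate set $\mathcal{C}$ loses no $R(c)$: each pass of the outer loop picks an unprocessed $c\in\mathcal{C}$, inserts $R(c)$, and then deletes from $\mathcal{C}$ either $c$ alone or all of $R(c)$, so $|\mathcal{C}|$ strictly decreases and after finitely many passes every vertex is processed or deleted. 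If some $c'$ is deleted without being processed, then $c'\in\mathfrak{i}=R(c)$ for a recorded block $\mathfrak{i}$ which admits a closed directed walk visiting exactly the vertices of $\mathfrak{i}$; following that walk from $c'$ reaches all of $\mathfrak{i}$, so $\mathfrak{i}\subseteq R(c')$, while minimality together with $c'\in\mathfrak{i}$ and the ideal-graph-property of $\mathfrak{i}$ gives $R(c')\subseteq\mathfrak{i}$, hence $R(c')=\mathfrak{i}\in\mathcal{I}$. (Alternatively, by Lemma~\ref{lem:subalgebra:graph:weak} the subgraph on $\mathfrak{i}$ is associated with the graph-admissible subalgebra $\lie{\mathfrak{i}}$, and Corollary~\ref{cor:no:proper:igp:closed:walk:induces:V} applied there shows $\mathfrak{i}$ has no proper non-empty subset with the ideal-graph-property, again forcing $R(c')=\mathfrak{i}$.)

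With these two steps in hand I would conclude as follows. Union-closure plus minimality of $R(c)$ give that a subset $W\subseteq V$ satisfies the ideal-graph-property if and only if $W=\bigcup_{c\in W}R(c)$, i.e.\ if and only if $W$ is a union of members of $\mathcal{I}$ (the empty union corresponding to $\emptyset$, which is what the sentinel $\{0\}$ records). The final loop of the algorithm forms $\mathcal{F}=\bigl\{\spn\bigl(\bigcup_{j}\mathfrak{i}_j\bigr)\ \big|\ \{\mathfrak{i}_1,\dots,\mathfrak{i}_s\}\subseteq\mathcal{I}\bigr\}$; since $\{0\}$ contributes only the zero subspace and every other member of $\mathcal{I}$ is some $R(c)$, this equals $\{\spn W\mid W\subseteq V\text{ a union of }R(c)\text{'s, or }W=\emptyset\}=\{\spn W\mid W\subseteq V\text{ satisfies the ideal-graph-property}\}$. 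By Lemma~\ref{lem:ideal:span} each element of $\mathcal{F}$ is then an ideal of $\g$ spanned by an ideal-graph-property subset of $V$, and conversely every subalgebra spanned by such a subset is of this form, which is exactly the assertion.

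The hard part will be the pruning argument of the second paragraph: one has to be certain that collapsing a whole block $R(c)$ out of the candidate set never discards a genuinely new generator, which hinges on the fact that a closed directed walk covering $R(c)$ forces $R(c')=R(c)$ for every $c'\in R(c)$ — together with making precise the semantics and termination of the outer loop over the dynamically shrinking set $\mathcal{C}$. Establishing the breadth-first characterization of $R(c)$, the intersection/union closure of the ideal-graph-property, and the soundness direction $\mathcal{I}\subseteq\{\{0\}\}\cup\{R(c)\}$ is routine bookkeeping by comparison.
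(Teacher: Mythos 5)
Your proposal is correct and follows essentially the same route as the paper's proof: the while-loop is identified with the reachability closure of the starting vertex (the paper's sets $W_j^{(\infty)}$), the pruning step is justified by showing that a closed directed walk covering a recorded block forces every vertex in that block to have the same reachability closure, and an arbitrary ideal-graph-property subset is recovered as the union of the closures of its vertices. Your packaging via intersection/union closure of the ideal-graph-property and the minimal set $R(c)$ is a tidier formulation, but it encodes the same key steps as the paper's argument.
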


\begin{proof}
    We must show two claims:
    \begin{enumerate}[label = (\arabic*)]
        \item Every subalgebra $\mathfrak{h}\subseteq\g$ computed by Algorithm~\ref{alg:identifying:ideals} is spanned a subset $W\subseteq V$ that satisfy the ideal-graph-property.
        \item Every subset $W\subseteq V$ that satisfies the ideal-graph-property is captured by Algorithm~\ref{alg:identifying:ideals}.
    \end{enumerate}
    Let us begin with (1). Suppose the set $\mathfrak{h}$ is obtained via the application of Algorithm~\ref{alg:identifying:ideals} and is associated to the graph $G(V,E)$. Then, by construction, $\mathfrak{h}$ is spanned by a vertex set $W\subseteq V$ that is the union of several vertex sets $W_1,\ldots,W_s$, i.e., $\bigcup_{j=1}^s W_j=W$ (cf. lines 19-20). Henceforth, $W\subseteq V$, since the sets $W_j$ are iteratively constructed from a starting vertex vertex $c\in V$ (cf. line 4), by collecting all vertices reachable via directed edges originating from $c$, i.e., all $\varpi_\mathrm{e}(e)\in V$ for each $e\in E$ with $\varpi_\mathrm{s}(e)=c$ (cf. line 9), and repeating this procedure for the newly added vertices (cf. lines 6-12). 
        
    Now, assume by contradiction that there exists a vertex $w\in W$ and an edge $e\in E$ such that $\varpi_\mathrm{s}(e)=w$ and $\varpi_\mathrm{e}(e)\notin W$. Without loss of generality, let $w\in W_1$. This then contradicts the update rule (cf. lines 9-10) of Algorithm~\ref{alg:identifying:ideals}, which guarantees that any such target vertex $\varpi_\mathrm{e}(e)$ would have been added to $W_1$ during the iteration and consequently also been added to $W$. Therefore any edge $e\in E$ originating from a vertex $w\in W$ does not point outside of $W$, and $W$ satisfies the ideal-graph-property. Since $W\subseteq V$ and the Lie bracket of any element in $W$ with any element in $V$ remains in $W$, it follows, by Lemma~\ref{lem:ideal:basis:subset:minimal}, that $\mathfrak{h}\subseteq[\mathfrak{h},\g]$, and hence $\mathfrak{h}\subseteq[\mathfrak{h},\mathfrak{h}]$. Thus, $\mathfrak{h}$ is an ideal and a subalgebra of $\g$.
        
    Let us proceed with (2), and let $\mathfrak{h}$ be a subalgebra of $\g$ spanned by a subset $W\subseteq V$ that satisfies the ideal-graph-property. This allows us to write: $W=\{w_j\}_{j=1}^r\subseteq \{w_j\}_{j=1}^m=V$. For each $w_j\in W$ define the recursively constructed sets
    \begin{align}
        V_j^{(0)}:=\{w_j\}\qquad V_j^{(\ell+1)}:=\left\{v\in V\setminus\bigcup_{j=0}^\ell V_j^{(\ell)}\,\middle|\,\exists\, e\in E\text{ such that }\varpi_\mathrm{s}(e)\in V_j^{(\ell)}\text{ and }\varpi_\mathrm{e}(e)=v\right\}\quad\text{for all }\ell\in\N
    \end{align}
    These sets collect all vertices reachable from $w_j$ via directed walks of length at least $\ell$. The following observations hold:
    \begin{enumerate}[label = (\alph*)]
        \item One has $V_j^{(\ell)}\cap V_k^{(\ell)}=\emptyset$ if $\ell\neq k$.
        \item If $V_j^{(\ell)}=\emptyset$, then $V_j^{(k)}=\emptyset$ for all $k\geq\ell$.
        \item Since $V$ is finite, one must have that $V_j^{(\ell)}=\emptyset$ for all $\ell\geq|V|$.
        \item Every vertex $w\in V$ for which there exists a directed walk starting at $w_j$ and ending at $w$ belongs to the set $W_j^{(\infty)}:=\bigcup_{\ell=0}^\infty V_j^{(\ell)}$: To see this, let $W'=(w_j,e_1,v_1,e_2,\ldots,e_{s},w)$ be such a directed walk connecting $w_j$ and $w$. Then, clearly, $v_k\in W_j^{(k)}:= \bigcup_{\ell=0}^{k} V_j^{(\ell)}$ for all $k\in\{1,\ldots,s\}$, and consequently $w\in W_j^{(s)}\subseteq W_j^{(\infty)}$.
        \item For every vertex $w\in V_j^{(\ell)}$ with $\ell\geq 1$, there exists a directed trail that starts at $w_j$ and ends at $w$, which is also a directed path. This follows from the recursive construction of the sets $V_j^{(\ell)}$, and can be shown by induction on $\ell$. The base case $\ell=1$ is immediate from the definition, and the induction step follows by chaining edges and vertices from earlier levels.
    \end{enumerate}
        
    Suppose now that for some $j\in\{1,\ldots,r\}$, one has $w\in W_j^{(\infty)}$. Then there exists an integer $\ell\in\{0,\ldots,|V|\}$ such that $w\in V_j^{(\ell)}$. If $\ell=0$, then $w_j=w\in W$. If $\ell\geq 1$, then by the previous observations (a)-(e), there exists a directed walk $W''=(w_j,e_{j_0},w_{j_1},\ldots,w_{j_{\ell-1}},e_{j_{\ell-1}},w)$ such that $\varpi_\mathrm{s}(e_{j_k})=w_{j_k}\in V_j^{(k)}$, $\varpi_\mathrm{e}(e_{j_k})=w_{j_{k+1}}\in V_j^{(k+1)}$, $\varpi_\mathrm{s}(e_{j_0})=w_j$ and $\varpi_\mathrm{e}(e_{j_{\ell-1}})=w$. Since $W$ satisfies the ideal-graph-property, and $w\in W$, all vertices along this walk $w_{j_1},\ldots,w_{j_{\ell-1}},w$ must belong to $W$. Hence, $W_j^{(\infty)}\subseteq W$ for all $j\in\{1,\ldots,r\}$.
        
    Following Algorithm~\ref{alg:identifying:ideals}, the set $\mathcal{I}$ generally contains the sets $W_j^{(\infty)}$. That is, $\mathcal{I}=\{W_j^{(\infty)}\}_{j\in\Tilde{\mathcal{M}}}$ (cf. line 13), where $\Tilde{\mathcal{M}}\subseteq \mathcal{M}$ indexes the starting vertices used in the algorithm (cf. line 4). All $j\in\mathcal{M}$ are included in $\Tilde{\mathcal{M}}$, except for those $j'\in\mathcal{M}$ for which there exists a closed directed walk $C'=(w_{p_0},e_{p_0},w_{p_1},\ldots w_{p_{s}},e_{p_{s}},w_{p_0})$ such that $\bigcup_{k=0}^{s} \{w_{p_s}\}=W_j^{(\infty)}$ for some $j=p_0$ with $W_j^{(\infty)}\in \mathcal{I}$, and $j'\in\{p_k\}_{k=1}^s$ (cf. lines 14-15). In such cases, the algorithm avoids recomputing the same ideal multiple times by removing redundant starting vertices. 
        
    Suppose now that for some $k\in\{1,\ldots,r\}$, the set $W_k^{(\infty)}\notin \mathcal{I}$. Then, by the logic of Algorithm~\ref{alg:identifying:ideals}, there exists some $j\in\mathcal{M}$ and a closed directed walk $C''=(w_{k_0},e_{k_0},w_{k_1},\ldots, e_{k_{s}},w_{k_0})$ such that $k_0=j$, and $\bigcup_{q=0}^s\{w_{k_q}\}= W_j^{(\infty)}$ with $k=k_q$ for some $q\in \{1,\ldots,s\}$. We now show that in this case, $W_j^{(\infty)}= W_k^{(\infty)}$. Let $w\in W_j^{(\infty)}$. If $w=w_j$, then the segment of the closed directed walk $C''$ that starts at $w_k$ and ends at $w_j$ provides a directed walk from $w_k$ to $w_j$, showing that $w_j\in W_k^{(\infty)}$. If $w\neq w_j$, then there exists a directed walk $W'''$ that starts at $w_j$ and ends at $w$. This walk can be adapted such that it starts at $w_k$ instaed, passes through $w_j$, and ends at $w$, by prepending the segment of $C''$ that starts at $w_k$ and ends at $w_j$. Hence, $w\in W_k^{(\ell)}$. By the same argument, any $w\in W_k^{(\infty)}$ also belongs to $w\in W_j^{(\infty)}$, since $C''$ is a closed directed walk. We conclude: $W_k^{(\infty)}=W_j^{(\infty)}$. Therefore, even if for some $k\in\{1,\ldots,r\}$ the set  $W_k^{(\infty)}$ does not belong to $\mathcal{I}$ via a construction starting at the vertex $v_k$, it is represented via $W_j^{(\infty)}$ for some $j\in\mathcal{M}$, all such sets are accounted for.
        
    We now conclude by showing that the full set $W$ is recovered as the union of all reachable sets: $W=\bigcup_{k=1}^rW_k^{(\infty)}$. The inclusion $W\subseteq \bigcup_{k=1}^r W_k^{(\infty)}$ is trivial, since $W=\{w_k\}_{k=1}^r$ and $\{w_k\}=W_k^{(0)}\subseteq W_k^{(\infty)}$. Conversely, suppose $w\in \bigcup_{k=1}^r W_k^{(\infty)}$. Then, either $w=w_j$ for some $j\in\{1,\ldots,r\}$, or there exists a directed path $P=(w_j,\ldots, w)$ that is also a directed trail, starting at some $w_j\in W$ for some $j\in\{1,\ldots,r\}$ and ending at $w$. Since $W$ satisfies the ideal-graph-property, all vertices along the directed path $P$ must also belong to $W$. Hence, $w\in W$, and we conclude $W=\bigcup_{k=1}^rW_k^{(\infty)}$. This shows that $W$ is the union of sets already included in $\mathcal{I}$, and therefore the ideal $\mathfrak{h}=\spn\{W\}$ is obtained by Algorithm~\ref{alg:identifying:ideals} (cf. lines 19-21), completing the proof.
\end{proof}

\begin{tcolorbox}[breakable, colback=Cerulean!3!white,colframe=Cerulean!85!black,title=\textbf{Example}: Application of Algorithm~\ref{alg:identifying:ideals} to identify ideals]
    \begin{example}
        We now demonstrate the application of Algorithm~\ref{alg:identifying:ideals} to an exemplary finite-dimensional graph-admissible Lie algebra $\g$, utilizing the graph depicted in Figure~\ref{fig:different:ideals:alg:application}. In this example, we select two distinct vertices to initiate the algorithm, namely $y$ and $p$. In both panels of the figure, the respective starting vertex is visually distinguished by coloring it red. During the first iteration of the algorithm, the vertices added to the candidate ideal basis $\mathfrak{i}$, apart from the initial vertex, are colored blue, along with the edges considered in this step. In the subsequent iteration, additional vertices and edges are added to $\mathfrak{i}$; these are colored green to indicate their inclusion in the second step of the expansion process. 
    \begin{figure}[H]
        \centering
        \includegraphics[width=0.75\linewidth]{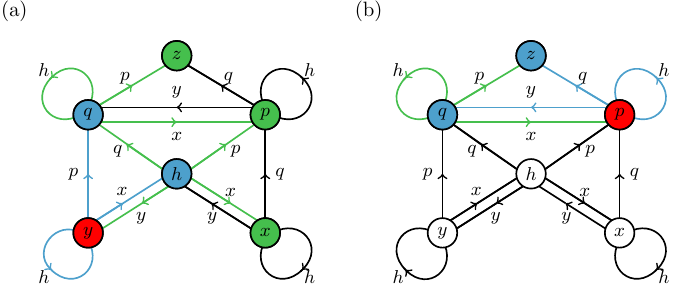}
        \caption{Application of Algorithm~\ref{alg:identifying:ideals} to identify ideals of a finite-dimensional graph-admissible Lie algebra using different initial vertices. The ideals are $V=\{x,y,h,q,p,z\}$ in panel (a) and $\{q,p,z\}$ in panel (b). According to \cite{A1:project}, the depicted graph can be associated with the Lie algebra $\sl{2}{\R}\ltimes\mathfrak{h}_1$, where the first ideal would coincide with the complete algebra, i.e., $\spn{V}\cong \sl{2}{\R}\ltimes\mathfrak{h}_1$. The second ideal, on the other hand, represents the radical $\spn\{q,p,z\}\cong\mathfrak{h}_1$, the Heisenberg algebra.}
        \label{fig:different:ideals:alg:application}
    \end{figure}
    We observe that there is potential to optimize Algorithm~\ref{alg:identifying:ideals}: For instance, if the vertex $p$ belongs to an ideal of the Lie algebra, then the vertex $q$ must also belong to the same ideal, and vice versa. This follows from the existence of a closed directed walk connecting $p$ and $q$. Therefore, it is sufficient to include only of those vertices in the initial set of candidate vertices $\mathcal{C}$ (cf. line 2). While a related reduction of $\mathcal{C}$ is currently implemented (cf. lines 14-17), the one discussed here is not, but would be a feasible and potentially a beneficial enhancement. 
    \end{example}
\end{tcolorbox}

The result above motivates the following classification of vertices and edges of a labeled directed graph $G(V,E)$, associated with finite-dimensional graph-admissible Lie algebra $\g$, into three distinct structural categories:
\begin{itemize}
    \item \emph{L-type} (Lake-type): An edge $e\in E$ is classified as L-type if there exists a closed directed walk $W=(v_1,e_1,v_2,\ldots,e_s,v_1)$, such that $e\in \Tilde{E}$, where $G_W\equiv G(\Tilde{V},\Tilde{E})$ is the subgraph induced by the walk $W$. A vertex $v\in V$ is similarly classified L-type if $v\in\Tilde{V}$.
    \item \emph{T-type} (Tributary-type): A vertex $v\in V$ is classified as T-type if it is not L-type and there exists a directed walk $W=(v_1,e_1,v_2,\ldots,e_s,v_{s+1})$, such that $v_1=v$ and $v_{s+1}$ is an L-type vertex. An edge $e\in E$ is classified T-type if it is not L-type and belongs to a directed walk that starts at a T-type vertex and ends at an L-type vertex.
    \item \emph{D-type} (Disappearing-stream-type): Vertices and edges that are neither L-type nor T-type are classified as D-type.
\end{itemize}
The terminology used (L-type, T-type, and D-type) is inspired by the classification of water systems in river networks. The underlying idea behind this naming scheme is conceptual simple yet structurally insightful: One can envision the graph as a dynamic system in which water flows along the direction of the edges. In this analogy, vertices represent specific locations in the landscape, while edges represent the flow of water from one vertex to another, just as the Lie bracket operations connects elements of the algebra. This perspective enables the following interpretation of the graph's structure: L-type elements correspond to \emph{lakes} \cite{Lehner:2024}. These are regions of the graph where flow circulates. Specifically, a closed directed walk exists, allowing traversal flow through all involved vertices and returning to the starting point, similar to circular currents founds in lakes or oceans. In contrast, T-type elements represent \emph{tributaries} \cite{Matthews:2014}. These are vertices or edges that lead into a lake but are not themselves part of any closed walk, they feed into cyclic structures without directly participating in them. Consequently, it is generally impossible to start at one vertex of a tributary and find a directed walk that reaches every other vertex within the same tributary. This behavior is analogous to a river feeding into a lake: drifting along the river following its current does not lead to upstream regions. Vertices classified as D-type can be further subdivided into two subcategories: Either there exists a stream that originates at a tributary or a lake and terminates at a given vertex, making it part of a \enquote{distributary} \cite{Matthews:2014}, or no such stream exists. These vertices share the common characteristic of belonging to rivers that terminate without ever reaching a lake, thereby forming \enquote{disappearing streams} \cite{Ahmad:2025}. Accordingly, vertices from which no edges originate can be referred to as \emph{sinkholes}. 

This nomenclature also facilitates the introduction of the notions of \emph{upstream} and \emph{downstream}: A vertex $v$ is said to be downstream of a vertex $w$ if there exists a directed walk that starts at $w$ and ends at $v$, but not vice versa. Analogously, the vertex $w$ is said to be \emph{upstream} of the vertex $v$. Clearly, vertices that belong to the same closed directed walk can neither be upstream or downstream to each other. If the need arises, one can further subdivide L-type vertices and edges into distinct lakes, and similarly classify T-type and D-type vertices and edges into distinct rivers. This classification can also aid in easily identifying ideals visually. For instance, a lake together with all its distributaries and downstream vertices typically forms an ideal. Figure~\ref{fig:ideals:sketch:preview}, visualizes the the conceptual framework introduced here.

\begin{figure}[H]
    \centering
    \includegraphics[width=0.85\linewidth]{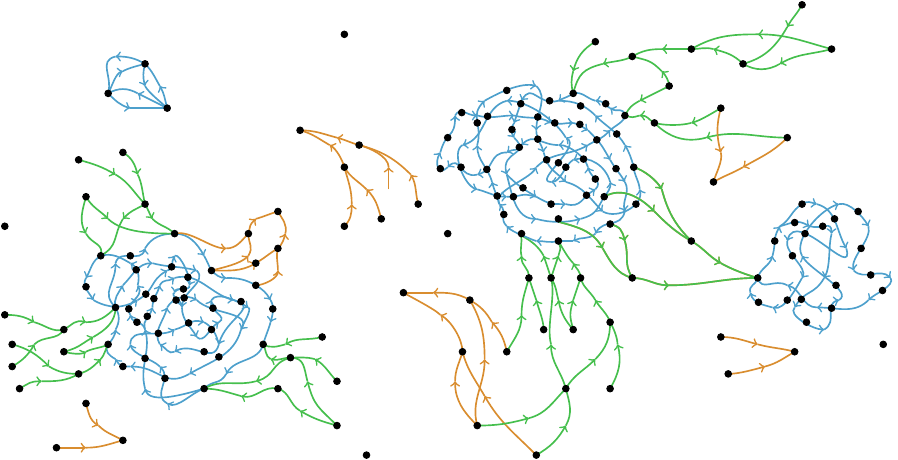}
    \caption{Schematic representation of the flow-based classification of the graph elements in the context of Lie algebraic structures. Edges are color-coded for visual clarity: L-type edges (blue) represent cyclic flows analogous to lakes; T-type edges (green) correspond to tributaries feeding into closed walks; D-type edges (orange) indicate disappearing streams. Note that the lack of labels is imposed to emphasize flow dynamics, and that the figure serves as a conceptual sketch rather than a representation of a valid graph.}
    \label{fig:ideals:sketch:preview}
\end{figure}

To further illustrate the concept of interpreting edges as streams, we consider the derived series of a Lie algebra. Let $\g$ be a finite-dimensional graph-admissible Lie algebra associated with the labeled directed graph $G(V,E)$. Then, there exists an integer $\ell\in\N_{\geq0}$ such that $\mathcal{D}^\ell V=\mathcal{D}^{\ell+j}V$ for all $j\in\N_{\geq0}$. We define the smallest such integer $\ell$ as $\ell_*$. This allows us to introduce the following decomposition of the vertex set $V$:
\begin{align}
    S^{(1)}:=\left\{\begin{matrix}
        V&\text{, if }\ell_*=0\\
        V\setminus \mathcal{D}^1V&\text{, otherwise}
    \end{matrix}\right.\qquad\text{and}\qquad S^{(j)}:=\left\{\begin{matrix}
        V\setminus\left(\bigcup_{\ell=1}^{j-1}S^{(\ell)}\cup \mathcal{D}^{j}V\right)&,\text{ for all }j\in\{2,\ldots,\ell_*\}\\
        V\setminus \bigcup_{\ell=1}^{j-1}S^{(\ell)}&,\text{ for }j=\ell_*+1
    \end{matrix}\right..
\end{align}
It is evident that the sets $S^{(j)}$ are pairwise disjoint, i.e., $S^{(j)}\cap S^{(k)}=\emptyset$ if $j\neq k$, and their union reconstructs the entire vertex set, i.e., $\bigcup_{\ell=1}^{\ell_*+1} S^{(\ell)}=V$. Moreover, the vertices in $S^{(\ell)}$ are generally downstream of at least one vertex in some $S^{(j)}$ with $j<\ell$, but never downstream of any vertex in $S^{(k)}$ for $k>\ell$. These relationships can be schematically visualized as shown in Figure~\ref{fig:derived:series:csacade}. Specifically, the set $S^{(1)}$ is generally upstream of the sets $S^{(2)},\ldots ,S^{(\ell_*-1)}$, and this pattern continues such that $S^{(\ell)}$ is upstream of all subsequent sets $S^{(j)}$ with $j>\ell$. Conversely, the set $S^{(\ell_*-1)}$ is not upstream of any other set, but downstream of all preceding sets $S^{(\ell)}$. A similar decomposition can be constructed using other series of ideals, such as the lower central series, by replacing the operation $\mathcal{D}^\ell$ with the operation $\mathcal{C}^\ell$.

We can also interpret the the graphs through the lens of the Levi-Mal'tsev decomposition theorem, which states that any Lie algebra over a field of characteristic zero can be expressed as the semidirect sum of a semisimple component and its radical \cite{Kuzmin:1977}. In the graphical representation, the semisimple component is typically depicted as a collection of L-type vertices and edges, since non-solvable Lie algebras must contain a closed directed walk (cf. Theorem~\ref{thm:non:solvability:condition:strong}). The semidirect sum structure manifest then in the observation that the portion of the graph corresponding to the radical is always positioned downstream from the semisimple part. Furthermore, if the radical is nilpotent, this part will generally be represented by D-type edges and vertices.

\begin{figure}[htpb]
    \centering
    \includegraphics[width=0.65\linewidth]{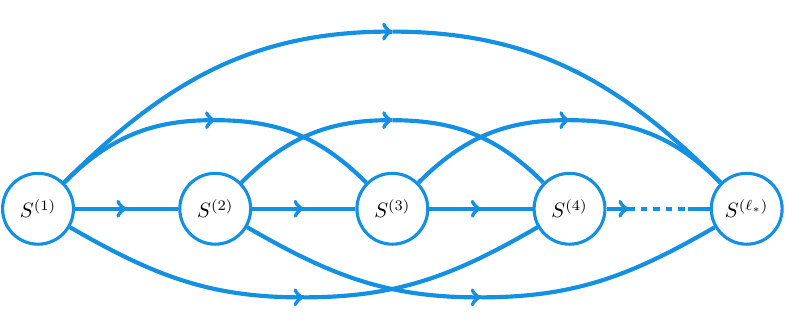}
    \caption{Schematic depiction of the relations among the sets $S^{(\ell)}$, defined via the derived series $\mathcal{D}^\ell V$ of the vertices from the graph $G(V,E)$ associated with a finite-dimensional Lie algebra. This diagrams illustrates that edges which originate at a vertex in the set $S^{(\ell)}$ generally only point to vertices in the sets $S^{(j)}$ with $j\geq \ell$, but never to vertices in sets $S^{(k)}$ with $k<\ell$.}
    \label{fig:derived:series:csacade}
\end{figure}

\subsection{Criteria for simplicity and semisimplicity}\label{sec:structural:properties:simple:semisimple}

The results of the previous subsection allow us to formulate criteria for simplicity and semisimplicity, both of which fundamentally depend on the structure and existence of ideals within a Lie algebra. Note that, as mentioned in the beginning of this section, we only consider, for simplicity, Lie algebras that are taken over a field of characteristic zero, as this provides us with several conditions for semisimplicity \cite{Serre:1992}.

We begin by presenting a necessary criterion, derived from Lemma~\ref{lem:ideal:span}, which applies to minimal-graph-admissible Lie algebras. However, it is important to emphasize that the simplicity of a Lie algebra does not imply that its associated graph is a directed simple graph, nor does the converse hold. A directed simple graph is defined as a directed graph that neither contains multiple edges between the same pair of source and target vertices nor loops \cite{Weisstein:2025}. A counterexample illustrating the non-equivalence between graph simplicity and algebraic simplicity is the graph associated with the simple Lie algebra $\sl{2}{\R}$ depicted in Figure~\ref{fig:su2:sl2} (b). Conversely, the simplicity of a directed graph does rule out the simplicity of the corresponding Lie algebra. For instance, the graph associated with the simple Lie algebra $\mathfrak{su}(2)$, shown in Figure~\ref{fig:su2:sl2} (a), is also a simple directed graph. These examples demonstrate that, within this framework, there is no causal or structural correspondence between the simplicity of a graph and the simplicity of the associated Lie algebra.

\begin{lemma}\label{lem:nec:condition:first:simplicity:minimal}
    Let $\g$ be an $n$-dimensional minimal-graph-admissible Lie algebra associated with the minimal graph $G(V,E)$, and suppose $n\geq 2$. If $\g$ is simple, then the only subsets $W\subseteq V$ that satisfy the ideal-graph property are $W=V$ and $W=\emptyset$.
\end{lemma}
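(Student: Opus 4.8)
The plan is to combine Lemma~\ref{lem:ideal:span} with the fact that, in a minimal graph, the vertex set is a genuine basis. First I would take an arbitrary subset $W\subseteq V$ satisfying the ideal-graph-property (Definition~\ref{def:ideal:graph:property}). By Lemma~\ref{lem:ideal:span}, the subspace $\spn\{W\}$ is then an ideal of $\g$. Since $G(V,E)$ is a minimal graph associated with $\g$, Definition~\ref{def:assocaited:graph} tells us that $V$ is a basis of $\g$ with $|V|=\dim(\g)=n$; in particular $W$ is a linearly independent set, so $\dim(\spn\{W\})=|W|$.

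Next I would invoke simplicity. By Definition~\ref{def:simple:and:semisimple}, a simple Lie algebra is non-abelian and has no proper non-zero ideals, so the only ideals of $\g$ are $\{0\}$ and $\g$ itself. Hence $\spn\{W\}\in\{\{0\},\g\}$, and I would split into the two cases. If $\spn\{W\}=\{0\}$, then $|W|=\dim(\spn\{W\})=0$, so $W=\emptyset$. If $\spn\{W\}=\g$, then $|W|=\dim(\g)=n=|V|$, and since $W\subseteq V$, this forces $W=V$. Therefore $W\in\{\emptyset,V\}$, which is the claim.

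I would remark that the hypothesis $n\geq 2$ is not actually needed for this argument (it is automatically satisfied, since a simple Lie algebra is non-abelian and hence at least two-dimensional); it is stated only for consistency with the surrounding results. I do not anticipate a genuine obstacle here: the statement is essentially a direct translation of the algebraic characterisation of simplicity into the graph language, the only point requiring care being the bookkeeping that a subset of a basis of size $n$ spanning an $n$-dimensional space must be the whole basis, and that $\spn\{\emptyset\}=\{0\}$ so the empty set corresponds to the zero ideal and vacuously satisfies the ideal-graph-property.
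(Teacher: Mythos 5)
Your proposal is correct and follows essentially the same route as the paper: invoke Lemma~\ref{lem:ideal:span} to see that any subset satisfying the ideal-graph-property spans an ideal, use minimality of the graph (so $V$ is a basis) to conclude that a proper non-empty subset would span a proper non-zero ideal, and then appeal to simplicity. The paper phrases this as a contradiction argument rather than your direct dimension-counting case split, but the content is identical.
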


\begin{proof}
    Let $G(V,E)$ be a minimal graph associated with the finite-dimensional Lie algebra $\g$. First, observe that the sets $V$ and $\emptyset$ both trivially satisfy the ideal-graph-property. Suppose, for contradiction, that there exists a proper non-empty subset $W\subsetneq V$ that satisfies the ideal-graph property. By Lemma~\ref{lem:ideal:span}, the subset $W$ spans an ideal of $\g$, which is a proper non-zero ideal, since $V$ is a basis of $\g$ and $W$ is a proper non-empty subset of $V$. This contradicts the simplicity of $\g$, as defined in Definition~\ref{def:simple:and:semisimple}, and thus concludes this proof.
\end{proof}

We now restate Lemma~\ref{lem:nec:condition:first:simplicity:minimal} in way that highlights the structural properties directly observable from the graph associated to the Lie algebra $\g$:

\begin{proposition}\label{prop:weak:simplicity:cond:graph:based}
    Let $\g$ be a finite-dimensional minimal-graph-admissible Lie algebra associated with the minimal graph $G(V,E)$, and suppose $\dim(\g)\geq 2$. If $\g$ is simple, then there exists a closed directed walk $W$ that induces a subgraph $G_W\equiv G(\Tilde{V},\Tilde{E})$ with $V=\Tilde{V}$.
\end{proposition}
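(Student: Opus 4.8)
The plan is to chain together two results already established in the excerpt. First, by Lemma~\ref{lem:nec:condition:first:simplicity:minimal}, the simplicity of $\g$ (with $\dim(\g)\geq 2$) implies that the only subsets $W\subseteq V$ satisfying the ideal-graph-property are $W=V$ and $W=\emptyset$. In particular, there is \emph{no} proper non-empty subset $W\subsetneq V$ satisfying the ideal-graph-property. Second, Corollary~\ref{cor:no:proper:igp:closed:walk:induces:V} states precisely the equivalence we need: for a finite-dimensional graph-admissible Lie algebra of dimension at least $2$, the vertex set $V$ contains no proper non-empty subset satisfying the ideal-graph-property if and only if there exists a closed directed walk $C$ inducing the subgraph $G_C\equiv G(\Tilde{V},\Tilde{E})$ with $\Tilde{V}=V$.

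So the argument is short: assume $\g$ is simple and $\dim(\g)\geq 2$. Apply Lemma~\ref{lem:nec:condition:first:simplicity:minimal} to conclude that no proper non-empty $W\subsetneq V$ satisfies the ideal-graph-property. Then apply the forward direction of Corollary~\ref{cor:no:proper:igp:closed:walk:induces:V} to obtain the desired closed directed walk $W$ inducing a subgraph $G_W\equiv G(\Tilde{V},\Tilde{E})$ with $\Tilde{V}=V$. One small point to verify is that the hypotheses line up: Lemma~\ref{lem:nec:condition:first:simplicity:minimal} is stated for minimal-graph-admissible $\g$ with its minimal graph $G(V,E)$, and Corollary~\ref{cor:no:proper:igp:closed:walk:induces:V} is stated for graph-admissible $\g$ with an associated graph $G(V,E)$; since minimal-graph-admissible implies graph-admissible and a minimal graph is an associated graph, both apply to our $G(V,E)$, and the dimension hypothesis $\dim(\g)\geq 2$ is shared.

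I do not expect any real obstacle here, since the statement is essentially a corollary of the two cited results; the only care needed is in bookkeeping the direction of the implications (we use only the ``simple $\Rightarrow$ no proper IGP-subset $\Rightarrow$ spanning closed walk'' chain, not the converse, which in fact fails, as the $\mathfrak{su}(2)\oplus\mathfrak{su}(2)$ discussion in the excerpt shows). It may also be worth remarking, for the reader, that this is only a necessary condition: the existence of such a spanning closed directed walk does not force simplicity — for example, a graph of Type X (associated with $\sl{2}{\R}$) already exhibits a spanning closed directed walk, and more generally semisimple but non-simple algebras such as $\mathfrak{su}(2)\oplus\mathfrak{su}(2)$ do not have connected associated graphs at all, so the converse cannot hold once one recalls Proposition~\ref{prop:semisimple:lie:algebra:unconnected:subgraphs}.
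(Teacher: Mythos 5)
Your argument is correct and is exactly the paper's own proof: apply Lemma~\ref{lem:nec:condition:first:simplicity:minimal} to rule out proper non-empty subsets with the ideal-graph-property, then invoke the forward direction of Corollary~\ref{cor:no:proper:igp:closed:walk:induces:V} to obtain the spanning closed directed walk. Your additional remarks on hypothesis bookkeeping and the failure of the converse are accurate but not needed for the proof itself.
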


\begin{proof}
    Let $\g$ be a simple finite-dimensional minimal-graph-admissible Lie algebra associated with the minimal graph $G(V,E)$. Then, by Lemma~\ref{lem:nec:condition:first:simplicity:minimal}, there exists no proper non-empty subset $W\subsetneq V=\{v_j\}_{j\in\mathcal{N}}$ that satisfies the ideal-graph-property. By Corollary~\ref{cor:no:proper:igp:closed:walk:induces:V}, it follows that the graph must contain a closed directed walk $W$ that traverses all vertices in $V$, thereby inducing a subgraph $G_W\equiv G(\Tilde{V},\Tilde{E})$ with $V=\Tilde{V}$.
\end{proof}

We now introduce a sufficient but weak condition for non-semisimplicity that applies not only to minimal graphs but also to redundant ones. This condition is based on the presence of sinkholes in the associated graph structure.
\begin{proposition}\label{prop:very:very:weak:semisimplicity:condition}
    Let $\g$ be a finite-dimensional graph-admissible Lie algebra associated with the labeled directed graph $G(V,E)$. If $G(V,E)$ contains sinkholes, i.e., vertices $v\in V$ for which there is no edge $e\in E$ with $\varpi_\mathrm{s}(e)\neq v$, then $\g$ is not semisimple.
\end{proposition}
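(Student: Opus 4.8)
The plan is to reduce the claim to Lemma~\ref{lem:subalgebra:of:center} together with the elementary fact that a finite-dimensional semisimple Lie algebra over a field of characteristic zero has trivial center. First I would unwind the terminology: a \emph{sinkhole} is, as fixed in the preceding discussion, a vertex $v\in V$ from which no edge originates, i.e.\ there is no $e\in E$ with $\varpi_\mathrm{s}(e)=v$. Collecting all such vertices into the set $W\subseteq V$ of Lemma~\ref{lem:subalgebra:of:center}, that lemma gives $\spn\{W\}\subseteq\mathcal{Z}(\g)$. Since $G(V,E)$ is assumed to contain a sinkhole, $W\neq\emptyset$, and since every vertex of a graph associated with a graph-admissible Lie algebra is a non-zero element of $\g$ (Definition~\ref{def:graph:admissible}), this already yields $\mathcal{Z}(\g)\neq\{0\}$.

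Next I would invoke the structure-theoretic facts recalled in Section~\ref{sec:background}: the center $\mathcal{Z}(\g)$ is an ideal of $\g$, and it is abelian, hence solvable (its derived series terminates at the first step, since all brackets among central elements vanish). Therefore $\mathcal{Z}(\g)$ is a non-zero solvable ideal of $\g$, so the radical $\operatorname{rad}(\g)$ — the largest solvable ideal, which exists and is unique by Proposition~1.12 of \cite{Knapp:1996} — satisfies $\{0\}\neq\mathcal{Z}(\g)\subseteq\operatorname{rad}(\g)$. Hence $\operatorname{rad}(\g)\neq\{0\}$, and by Definition~\ref{def:simple:and:semisimple} the Lie algebra $\g$ is not semisimple.

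There is no genuine obstacle here; the only point deserving a word of care is that the conclusion relies on the standing assumptions of this section, namely that $\g$ is finite-dimensional and $\operatorname{char}(\mathbb{F})=0$, which are exactly the hypotheses under which semisimplicity is characterized by $\operatorname{rad}(\g)=\{0\}$ and under which the radical is well-defined. Equivalently, one may phrase the argument in one line: a sinkhole forces a non-trivial center via Lemma~\ref{lem:subalgebra:of:center}, and since every semisimple Lie algebra is centerless, $\g$ cannot be semisimple.
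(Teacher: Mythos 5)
Your proof is correct. It differs from the paper's argument in the key lemma it invokes: the paper notes that a sinkhole $v$ makes the singleton $\{v\}$ satisfy the ideal-graph-property, so by Lemma~\ref{lem:ideal:span} it spans a one-dimensional (hence abelian, hence solvable) non-zero ideal, contradicting Definition~\ref{def:simple:and:semisimple}; you instead route through Lemma~\ref{lem:subalgebra:of:center}, concluding that the sinkhole is a non-zero \emph{central} element, so $\mathcal{Z}(\g)\neq\{0\}$ is a non-zero solvable ideal and $\g$ cannot be semisimple. Both arguments follow the same pattern (a sinkhole produces a non-zero solvable ideal), but yours extracts the slightly stronger intermediate fact that the sinkhole is central, while the paper's version stays entirely within the ideal-detection machinery it has just developed. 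One small remark: your caveat about $\operatorname{char}(\mathbb{F})=0$ is not actually needed, since the paper's Definition~\ref{def:simple:and:semisimple} takes ``no non-zero solvable ideals'' as the definition of semisimplicity, so the presence of a non-zero center rules out semisimplicity directly, without appealing to the characteristic-zero characterization or to the uniqueness of the radical.
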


\begin{proof}
    Let $v$ be a sinkhole. Then, the singleton set $\{v\}$ satisfies the ideal-graph-property. By Lemma~\ref{lem:ideal:span}, this implies that $\{v\}$ spans a non-zero ideal of $\g$. Moreover, the Lie subalgebra $\spn\{v\}$ is trivially solvable, as it is abelian. Therefore, $\g$ contains a solvable non-zero ideal, and by Definition~\ref{def:simple:and:semisimple}, it follows that $\g$ is not semisimple. 
\end{proof}

We now generalize the criterion provided in Proposition~\ref{prop:very:very:weak:semisimplicity:condition} by considering not just sinkholes, but a broader class of vertices, which leads to a more robust condition for detecting non-semisimplicity in graph-admissible Lie algebras.
\begin{proposition}
    Let $\g$ be a finite-dimensional graph-admissible Lie algebra associated with the labeled directed graph $G(V,E)$. Suppose there exists a proper non-empty subset $W\subsetneq V$ that spans a proper subspace $\mathfrak{w}:=\spn\{W\}\subsetneq\g$ and satisfies the ideal-graph-property. If there exists no closed directed walk $C$ in $G(V,E)$ that induces a self-contained subgraph $G_C\equiv G(\Tilde{V},\Tilde{E})$ with $\Tilde{V}\subseteq W$, then $\g$ is not semisimple. 
\end{proposition}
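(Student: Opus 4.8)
The plan is to produce a non-zero solvable ideal of $\g$; by Definition~\ref{def:simple:and:semisimple} this forces $\operatorname{rad}(\g)\neq\{0\}$, hence $\g$ is not semisimple. The candidate is $\mathfrak{w}=\spn\{W\}$ itself. First I would invoke Lemma~\ref{lem:ideal:span}: since $W$ is non-empty and satisfies the ideal-graph-property, $\mathfrak{w}$ is a non-zero ideal of $\g$, and in particular a subalgebra. If $\dim\mathfrak{w}\leq 2$ then $\mathfrak{w}$ is automatically solvable and we are done, so from now on I assume $\dim\mathfrak{w}\geq 3$.

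Next I would record that $\mathfrak{w}$ is graph-admissible with associated graph $G(W,E_W)$, where $E_W:=\{e\in E\mid \varpi_\mathrm{s}(e),\varpi_\mathrm{l}(e),\varpi_\mathrm{e}(e)\in W\}$. Indeed, $W\subseteq V$ consists of pairwise non-proportional non-zero elements, and for $w_j,w_k\in W$ the bracket $[w_j,w_k]$ is either $0$ or, by graph-admissibility of $\g$ together with the ideal-graph-property of $W$ (an edge starting at $w_j\in W$ must terminate in $W$), proportional to a vertex lying in $W$. Hence $W$ is a (possibly overcomplete) basis of $\mathfrak{w}$ satisfying the relations \eqref{eqn:desired:basis:overcomplete}, and Algorithm~\ref{alg:creating:graph} applied to $W$ returns exactly $G(W,E_W)$; this is precisely the content of Lemma~\ref{lem:subalgebra:graph:weak}.

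The crux is then to verify that $G(W,E_W)$ contains no self-contained subgraph induced by a closed directed walk, so that Theorem~\ref{thm:non:solvability:condition:strong} applies to $\mathfrak{w}$ and yields its solvability. I argue by contradiction: suppose $C$ is a closed directed walk in $G(W,E_W)$ inducing a self-contained subgraph $G_C\equiv G(\Tilde{V},\Tilde{E})$, so that $\Tilde{V}\subseteq W$, $\Tilde{E}\subseteq E_W\subseteq E$, and $\varpi_\mathrm{l}(e)\in\Tilde{V}$ for every $e\in\Tilde{E}$. The incidence relations that make $C$ a closed directed walk (Definition~\ref{def:walk:trail:path:cycle}, i.e.\ $\varpi_\mathrm{s}(e_j)=v_j$, $\varpi_\mathrm{e}(e_j)=v_{j+1}$, $v_1=v_{s+1}$), the subgraph it induces (Definition~\ref{def:induced:graph}), and the self-containedness condition (Definition~\ref{def:self:contained}) are all intrinsic to the sequence $C$ and its edges, and are therefore unchanged when $C$ is regarded as a walk in the ambient graph $G(V,E)$. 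Thus the very same $C$ is a closed directed walk in $G(V,E)$ inducing the same self-contained subgraph $G(\Tilde{V},\Tilde{E})$ with $\Tilde{V}\subseteq W$, contradicting the hypothesis. Hence no such $C$ exists, and by Theorem~\ref{thm:non:solvability:condition:strong} the algebra $\mathfrak{w}$ is solvable. Consequently $\mathfrak{w}$ is a non-zero solvable ideal of $\g$, so $\operatorname{rad}(\g)\supseteq\mathfrak{w}\neq\{0\}$, and by Definition~\ref{def:simple:and:semisimple} the Lie algebra $\g$ is not semisimple.

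I expect the only genuinely delicate point to be the bookkeeping in the contradiction step, namely confirming that passing between the induced graph $G(W,E_W)$ and the ambient graph $G(V,E)$ alters neither the status of $C$ as a closed directed walk, nor the subgraph it induces, nor its self-containedness; once this is granted, everything else is a direct application of Lemmas~\ref{lem:ideal:span} and~\ref{lem:subalgebra:graph:weak} and Theorem~\ref{thm:non:solvability:condition:strong}. (The properness assumption $\mathfrak{w}\subsetneq\g$ is used only to exhibit $\mathfrak{w}$ as a proper ideal; it is not strictly necessary, since a non-zero solvable ideal already precludes semisimplicity, even if it were all of $\g$.)
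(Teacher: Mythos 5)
Your proposal is correct and follows essentially the same route as the paper: invoke Lemma~\ref{lem:ideal:span} to exhibit $\mathfrak{w}=\spn\{W\}$ as a non-zero ideal, use Lemma~\ref{lem:subalgebra:graph:weak} to identify $G(W,E_W)$ as a graph associated with $\mathfrak{w}$, transfer the absence of closed directed walks inducing self-contained subgraphs with $\Tilde{V}\subseteq W$ from $G(V,E)$ to $G(W,E_W)$, and conclude solvability of $\mathfrak{w}$ via Theorem~\ref{thm:non:solvability:condition:strong}, hence non-semisimplicity of $\g$. The extra case split on $\dim\mathfrak{w}\leq 2$ and the explicit bookkeeping of the walk/self-containedness transfer are harmless additions that the paper leaves implicit.
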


Thus, the existence of loose ends in a minimal graph implies that the associated Lie algebra is not semisimple.

\begin{proof}
    By Lemma~\ref{lem:ideal:span}, the proper non-empty subsets $W\subsetneq V$ spans an ideal of $\g$. This ideal $\mathfrak{w}:=\spn\{W\}$ is, by assumption, a proper non-zero subalgebra of $\g$. Suppose there exists no closed directed walk $C$ in $G(V,E)$ that induces a self-contained subgraph $G_C\equiv G(\Tilde{V},\Tilde{E})$ with $\Tilde{V}\subseteq W$. Then, in particular, there exists also no closed directed walk $C'$ in the subgraph $G(W,{E}')$ that induces a self-contained subgraph $G_{C'}\equiv G(\Tilde{V}',\Tilde{E}')$, where ${E}':=\{e\in E\,\mid\,\varpi_\mathrm{s}(e),\varpi_\mathrm{l}(e),\varpi_\mathrm{e}\in W\}$. By Lemma~\ref{lem:subalgebra:graph:weak}, the graph $G(W,{E}')$ is  associated with the subalgebra $\mathfrak{w}$. Then, by Theorem~\ref{thm:non:solvability:condition:strong}, it follows from the absence of a closed directed walk that induces a self-contained subgraph that $\mathfrak{w}$ is solvable, making $\g$, by Definition~\ref{def:simple:and:semisimple}, not semisimple.
\end{proof}

As a complementary condition to  Proposition~\ref{prop:very:very:weak:semisimplicity:condition} which is based on the presence of sinkholes, we now consider a criterion for non-semisimplicity that relies on the existence of loose ends, that is, on vertices that are never targets of edges. Note that this includes those vertices that are never sources of any edges.

\begin{proposition}
    Let $\g$ be a finite-dimensional graph-admissible Lie algebra associated with the labeled directed graph $G(V,E)$. If the set $\Tilde{V}:=\{v\in V\,\mid\,\forall\,e\in E:\,\varpi_\mathrm{e}(e)\neq v\}$ of loose ends is such that $\mathfrak{v}:=\spn\{V\setminus \Tilde{V}\}$ is a proper non-zero subspace of $\g$, then $\g$ is not semisimple.
\end{proposition}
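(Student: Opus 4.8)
The plan is to recognise that the subspace $\mathfrak{v}$ in the hypothesis is nothing other than the first derived algebra $\mathcal{D}^1\g=[\g,\g]$, and then to invoke the classical fact that a finite-dimensional semisimple Lie algebra over a field of characteristic zero is perfect. Once these two ingredients are in place, the statement is just their contrapositive combination.

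First I would establish the identification $\mathfrak{v}=[\g,\g]$. By the construction rules underlying Algorithm~\ref{alg:creating:graph}, a vertex $v\in V$ lies in $V\setminus\Tilde{V}$ — i.e.\ is the endpoint $\varpi_\mathrm{e}(e)$ of some edge $e\in E$ — precisely when there exist vertices $v_\mathrm{s},v_\mathrm{l}\in V$ with $[v_\mathrm{s},v_\mathrm{l}]=\kappa\,v$ for some $\kappa\in\mathbb{F}^*$. Since $\g$ is graph-admissible, the bracket of any two vertices is either $0$ or proportional to a third vertex, and since $V$ spans $\g$, bilinearity gives $[\g,\g]=\spn\{[v_\mathrm{s},v_\mathrm{l}]\mid v_\mathrm{s},v_\mathrm{l}\in V\}=\spn\{V\setminus\Tilde{V}\}=\mathfrak{v}$. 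This is exactly the case $\ell=1$ of Lemma~\ref{lem:if:g:graph:admissible:so:derived:algebras}, as $\{v_j\mid j\in\mathcal{M}_\mathrm{D}^{(1)}\}=V\setminus\Tilde{V}$, so one may alternatively just cite that lemma. As an aside, since by definition no edge points into $\Tilde{V}$, the set $V\setminus\Tilde{V}$ trivially satisfies the ideal-graph-property, so Lemma~\ref{lem:ideal:span} already re-derives that $\mathfrak{v}$ is an ideal — but this is not needed in what follows.

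Next I would recall the standard structural fact: a finite-dimensional semisimple Lie algebra $\g$ over a field of characteristic zero satisfies $\g=[\g,\g]=\mathcal{D}^1\g$ \cite{Knapp:1996,Serre:1992}. Indeed, such a $\g$ decomposes as a direct sum of simple ideals $\g=\s_1\oplus\cdots\oplus\s_r$, and for each $i$ the subspace $[\s_i,\s_i]$ is an ideal of $\s_i$ that is non-zero, as $\s_i$ is non-abelian, hence equals $\s_i$ by simplicity; summing yields $[\g,\g]=\g$. Taking the contrapositive, any finite-dimensional Lie algebra whose derived algebra is a proper subspace cannot be semisimple. Combining this with the identification of the previous paragraph — the hypothesis guarantees $\mathfrak{v}=[\g,\g]\subsetneq\g$ — we conclude that $\g$ is not semisimple. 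The assumption $\mathfrak{v}\neq\{0\}$ is in fact not used for this conclusion: if $\mathfrak{v}=\{0\}$ then $\g$ is abelian and a fortiori not semisimple, unless $\g=\{0\}$, a case excluded by properness.

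There is no substantial obstacle in this argument; the only point that needs a little care is the identification $\spn\{V\setminus\Tilde{V}\}=[\g,\g]$ in the redundant-graph setting, where $V$ is merely a spanning set rather than a basis — but this is handled verbatim by Lemma~\ref{lem:if:g:graph:admissible:so:derived:algebras}, whose proof does not require linear independence of the vertices. It is also worth noting that the ``semisimple implies perfect'' step relies on $\operatorname{char}(\mathbb{F})=0$, which is precisely the standing assumption of this section, so invoking it is legitimate here.
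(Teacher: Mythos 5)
Your proof is correct and follows essentially the same route as the paper: you identify $\mathfrak{v}=\spn\{V\setminus\Tilde{V}\}$ with the first derived algebra $\mathcal{D}^1\g$ (the paper does this via Lemma~\ref{lem:derived:series:graph:alg:valid} and Algorithm~\ref{alg:generating:generating:the:graph:derived:altered}, you via Lemma~\ref{lem:if:g:graph:admissible:so:derived:algebras}), and then conclude from $\mathcal{D}^1\g\subsetneq\g$ that $\g$ cannot be semisimple, since semisimple algebras in characteristic zero are perfect (the paper cites \cite{Serre:1992} for this, you prove it from the decomposition into simple ideals). Your side remarks on the redundant-graph case and on the role of $\operatorname{char}(\mathbb{F})=0$ are accurate and do not change the argument.
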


\begin{proof}
    Lemma~\ref{lem:derived:series:graph:alg:valid} informs us that applying Algorithm~\ref{alg:generating:generating:the:graph:derived:altered} to the graph $G(V,E)$ yields the (possibly overcomplete) basis $\mathcal{D}^1 V$ of the first derived algebra $\mathcal{D}^1\g$. The basis is given by $\mathcal{D}^1 V= V\setminus \Tilde{V}$, and therefore spans a proper subspace of $\g$. Since $\g\neq \mathcal{D}^1\g$, it follows from Corollary 2 in \oldcite[page 45]{Serre:1992} that $\g$ is not semisimple.
\end{proof}

The criteria developed for semisimple Lie algebras can be naturally extended to reductive Lie algebras. Recall that the first derived algebra $\mathcal{D}^1\g$ of a reductive Lie algebra $\g$ is semisimple, and it decomposes as $\g=\mathcal{D}^1\g\oplus\mathcal{Z}(\g)$ \cite{Knapp:1996}. Consequently, one can employ Algorithm~\ref{alg:generating:generating:the:graph:derived:altered} to obtain the the first derived graph $\mathcal{D}^1 G(V,E)$ associated with the derived first algebra $\mathcal{D}^1\g$ of a finite-dimensional graph-admissible Lie algebra $\g$ and its associated graph $G(V,E)$. Applying the previously established conditions for semisimplicity to the graph $\mathcal{D}^1G(V,E)$, and finding that it does not correspond to a graph of a semisimple Lie algebra, one can conclude that $\g$ is not reductive.

Lemma~\ref{lem:nec:condition:first:simplicity:minimal} can be extended to serve as a sufficient condition for semisimplicity, provided that Conjecture~\ref{con:non:zero:center:proper:igp:sets} is correct. However, if this conjecture is false, the criterion is not valid.

\begin{lemma}\label{lem:nec:condition:first:simplicity:minimal:with:conj}
    Let $\g$ be an $n$-dimensional minimal-graph-admissible Lie algebra associated with the minimal graph $G(V,E)$, and suppose $n\geq 2$. Then the statement \enquote{\emph{$\g$ is semisimple if and only if the there exist no proper subsets $W\subsetneq V$ that satisfy the ideal-graph property and span a solvable ideal}} holds if and only if Conjecture~\ref{con:non:zero:center:proper:igp:sets} is correct.
\end{lemma}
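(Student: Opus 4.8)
The plan is to strip the displayed biconditional down to a single implication and then match that implication against Conjecture~\ref{con:non:zero:center:proper:igp:sets}, which I abbreviate as $C$. Throughout I read ``proper subset $W\subsetneq V$'' in the displayed statement as proper \emph{and} non-empty (equivalently: the spanned ideal is non-zero and proper), since otherwise $W=\emptyset$ would vacuously furnish the solvable ideal $\{0\}$ and the characterization would be content-free.

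First I would note that the ``only if'' half of the characterization is unconditional. If $\g$ is semisimple then $\operatorname{rad}(\g)=\{0\}$; any non-empty $W\subsetneq V$ spans a non-zero subspace since $V$ is a basis (Definition~\ref{def:assocaited:graph}), and if $W$ in addition has the ideal-graph-property then $\spn\{W\}$ is a non-zero ideal by Lemma~\ref{lem:ideal:span}, hence not solvable. Consequently the displayed characterization holds (for the class of $n$-dimensional minimal-graph-admissible Lie algebras with $n\geq 2$) if and only if its converse does, i.e.\ if and only if the following statement $(\star)$ holds: \emph{every non-semisimple such $\g$ admits a proper non-empty $W\subsetneq V$ with the ideal-graph-property whose span is a solvable ideal.} So the lemma reduces to the equivalence $(\star)\Longleftrightarrow C$.

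The implication $(\star)\Rightarrow C$ is immediate: the hypothesis $\mathcal{Z}(\g)\neq\{0\}$ of $C$ forces $\g$ to be non-semisimple, because the center is an abelian --- hence solvable --- ideal, so $\{0\}\neq\mathcal{Z}(\g)\subseteq\operatorname{rad}(\g)$; then $(\star)$ yields exactly the conclusion of $C$. For the converse $C\Rightarrow(\star)$ I would argue by cases on a non-semisimple $\g$. If $\g$ is abelian, then $E=\emptyset$ by Lemma~\ref{lem:abelian:criterion}, so every singleton $\{v\}\subsetneq V$ (proper because $n\geq 2$) has the ideal-graph-property and spans a one-dimensional abelian ideal. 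If $\g$ is solvable but not abelian, then $\mathcal{D}^1\g=[\g,\g]$ is a non-zero proper ideal, spanned by the sub-basis $\mathcal{D}^1 V\subsetneq V$ by Lemma~\ref{lem:if:g:graph:admissible:so:derived:algebras} (cf.\ Lemma~\ref{lem:derived:series:graph:alg:valid}); since $[\g,\mathcal{D}^1\g]\subseteq\mathcal{D}^1\g$ and any basis element lying in $\spn\{\mathcal{D}^1 V\}$ must already lie in $\mathcal{D}^1 V$, the set $\mathcal{D}^1 V$ has the ideal-graph-property, and $\mathcal{D}^1\g$ is solvable. If $\g$ is not solvable, its Levi decomposition reads $\g=\mathfrak{s}\ltimes\operatorname{rad}(\g)$ with $\mathfrak{s}\neq\{0\}$ semisimple and $\operatorname{rad}(\g)\neq\{0\}$; when $\mathcal{Z}(\g)\neq\{0\}$, applying $C$ produces the desired $W$ at once.

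The only remaining instance is $\g$ non-solvable, non-semisimple, with $\mathcal{Z}(\g)=\{0\}$, and this is the main obstacle. Note that once it is handled, the lemma follows: among the non-semisimple algebras the abelian and solvable ones were dealt with unconditionally above, the non-solvable ones with non-trivial center are precisely the non-trivial content of $C$ (the remaining instances of $C$'s hypothesis being solvable algebras, already covered), and so $(\star)$ becomes literally equivalent to $C$. To settle the obstacle one must show, \emph{without} invoking $C$, that such a $\g$ still has a graph-visible solvable ideal --- a proper non-empty $W\subsetneq V$ with the ideal-graph-property whose span is solvable, equivalently (Lemma~\ref{lem:ideal:basis:subset:minimal}) a solvable ideal spanned by a sub-basis of $V$. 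My line of attack would be a minimal-dimension counterexample: a least-dimensional would-be counterexample $\g$ either has no proper non-empty ideal-graph-property subset at all, forcing $G(V,E)$ to be induced by a single closed directed walk (Corollary~\ref{cor:no:proper:igp:closed:walk:induces:V}), a situation one would try to rule out for a non-semisimple algebra via the characterization of semisimplicity as the absence of non-zero abelian ideals; or $\g$ has a proper non-empty graph-visible ideal $\mathfrak{i}=\spn\{W'\}$, which by Lemma~\ref{lem:subalgebra:graph:weak} is itself minimal-graph-admissible with associated graph $G(W',E')$, so one descends --- done if $\mathfrak{i}$ is solvable, recurse on $\mathfrak{i}$ if it is non-semisimple, and pass to a complementary ideal carrying $\operatorname{rad}(\g)$ if $\mathfrak{i}$ is semisimple. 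The genuinely hard point --- and the reason the lemma is phrased as an equivalence with a conjecture rather than a clean theorem --- is that complements, centralizers and quotients of graph-visible ideals need not be graph-visible, so making the recursion terminate at a graph-visible solvable ideal is exactly the structural fact that remains to be pinned down; all worked examples (e.g.\ $\sl{2}{\R}\ltimes\mathfrak{h}_1$ of Figure~\ref{fig:different:ideals:alg:application}, or $\sl{2}{\R}\ltimes\R^2$) behave as required, which is what makes $(\star)$, and hence the claimed equivalence, plausible.
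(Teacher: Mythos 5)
You correctly isolate the unconditional ``only if'' half, reduce the lemma to the equivalence of your $(\star)$ with Conjecture~\ref{con:non:zero:center:proper:igp:sets}, and your treatment of the abelian case, the solvable non-abelian case (via $\mathcal{D}^1\g$ and Lemma~\ref{lem:ideal:basis:subset:minimal}), the direction $(\star)\Rightarrow C$, and the non-solvable case with $\mathcal{Z}(\g)\neq\{0\}$ all agree in substance with the paper. But the case you flag as ``the main obstacle'' --- $\g$ non-solvable, non-semisimple, with $\mathcal{Z}(\g)=\{0\}$ --- is a genuine gap in your argument: you do not prove it, you only sketch a minimal-counterexample/recursion strategy and concede that its termination ``remains to be pinned down.'' Moreover, your diagnosis of why the conjecture appears in the statement is backwards: this trivial-center case is exactly the one the paper settles \emph{unconditionally}, and the conjecture is needed only in the opposite situation, where the solvable ideal sits inside the center.

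The missing idea is the passage from an arbitrary proper non-zero solvable ideal $\mathfrak{i}$ (take $\mathfrak{i}=\operatorname{rad}(\g)$, which is proper and non-zero here) to the ideal $[\g,\mathfrak{i}]$, together with the observation that $[\g,\mathfrak{i}]$ is \emph{always} spanned by a subset of the vertex basis, even when $\mathfrak{i}$ itself is not graph-visible. Expanding $[x,y]$ for $x\in\g$, $y\in\mathfrak{i}$ in the basis $V=\{v_j\}_{j\in\mathcal{N}}$ and sorting the structure-constant contributions by their target index, the paper splits $\mathcal{N}$ into indices $\mathcal{N}_{\mathrm{in}}$ (those receiving a non-vanishing contribution) and $\mathcal{N}_{\mathrm{out}}$, and shows $[\g,\mathfrak{i}]=\spn\{v_z\mid z\in\mathcal{N}_{\mathrm{in}}\}$. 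This set is non-empty precisely because $\mathcal{Z}(\g)=\{0\}$ (if $[\g,\mathfrak{i}]=\{0\}$ then $\mathfrak{i}\subseteq\mathcal{Z}(\g)$, which is the only point where Conjecture~\ref{con:non:zero:center:proper:igp:sets} is invoked); it is proper since $[\g,\mathfrak{i}]\subseteq\mathfrak{i}\subsetneq\g$; it spans a solvable ideal since $[\g,\mathfrak{i}]$ is an ideal contained in the solvable $\mathfrak{i}$; and Lemma~\ref{lem:ideal:basis:subset:minimal} then gives the ideal-graph-property. With this one computation your ``obstacle'' case closes without any recursion on graph-visible ideals, complements, or quotients, and your $(\star)\Leftrightarrow C$ reduction becomes a complete proof along essentially the paper's lines.
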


\begin{proof}
    Suppose Conjecture~\ref{con:non:zero:center:proper:igp:sets} holds. We now aim to prove that $\g$ is semisimple if and only if the only there exists no proper non-empty subsets $W\subsetneq V$ that satisfy the ideal-graph-property and span a solvable ideal. The necessary condition has for the case of a simple Lie algebra $\g$ already been established in Lemma~\ref{lem:nec:condition:first:simplicity:minimal}, so it remains to prove the necessary condition for a semisimple but not a simple Lie algebra. Let $\g$ be a semisimple Lie algebra associated with the minimal graph $G(V,E)$, and suppose, for contradiction, there exits a proper non-empty subset $W\subsetneq V$ that satisfies the ideal-graph-property and spans a solvable ideal $\mathfrak{i}:=\spn\{W\}$. Since $W$ is a proper non-empty set, the ideal $\mathfrak{i}$ is non-zero. This is a contradiction to Definition~\ref{def:simple:and:semisimple}, which states the $\g$ is semisimple if and only if $\g$ contains no non-zero solvable ideals. Thus, under the assumption that Conjecture~\ref{con:non:zero:center:proper:igp:sets} holds, it follows from the fact that $\g$ is semisimple that there exists no proper non-empty subset $W\subsetneq V$ that satisfies the ideal-graph-property and spans a solvable ideal.
    
    Therefore, we can move on to the sufficient condition and proceed by contraposition. I.e., we show that if $\g$ is not semisimple, then there exists a proper non-empty subset $W\subsetneq V$ that satisfies the ideal-graph property and spans a solvable ideal. Suppose $\g$ is not semisimple. Then $\g$ is either abelian or contains a proper non-abelian non-zero solvable ideal $\mathfrak{i}$ \cite{Serre:1992}. First, assume that $\g$ is abelian. In this case, every subset $W\subseteq V$ satisfies the ideal-graph-property, since the edge set $E$ is empty by Lemma~\ref{lem:abelian:criterion}. Given that $n\geq 2$ and $\dim(\g)\geq 2$, there exists a proper non-empty subset $W\subsetneq V$ that satisfies the ideal-graph-property and spans a solvable ideal, namely every singleton set $\{v_j\}\subseteq V=\{v_j\}_{j\in\mathcal{N}}$.

    Now suppose that $\g$ contains a proper non-abelian non-zero ideal $\mathfrak{i}\subsetneq\g$. Since $V=\{v_j\}_{j\in\mathcal{N}}$ is a basis of $\g$, we can express a basis of $\mathfrak{i}$ as:
    \begin{align*}
        \mathcal{B}_\mathfrak{i}:=\left\{w_k\,\middle\mid\;w_k=\sum_{j\in\mathcal{N}}\beta_{jk}v_j\right\}_{k\in\mathcal{K}},
    \end{align*}
    where $K:=|\mathcal{K}|\leq n$, and $\boldsymbol{\beta}\in\mathbb{F}^{K\times n}$ is a matrix of rank $K$. Let now $x\in \g$ and $y\in\mathfrak{i}$. Since $V$ is a basis of $\g$ and $\mathcal{B}_\mathfrak{i}$ is a basis of $\mathfrak{i}$, we can express these elements uniquely as $x=\sum_{j\in\mathcal{N}}c_jv_j$ and $y=\sum_{k\in\mathcal{K}}\hat{c}_k w_k$ respectively. Substituting the expression for $w_k$, we compute the Lie bracket:
    \begin{align*}
        [x,y]=\left[\sum_{j\in\mathcal{N}}c_jv_j,\sum_{k\in\mathcal{K}}\hat{c}_kw_k\right]=\left[\sum_{j\in\mathcal{N}}c_jv_j,\sum_{k\in\mathcal{K}}\hat{c}_k\sum_{\ell\in\mathcal{N}}\beta_{\ell k}v_\ell\right]&=\sum_{j,\ell\in\mathcal{N}}c_j\alpha_{j\ell}\sum_{k\in\mathcal{K}}\hat{c}_k\beta_{\ell k}v_{\delta(j,\ell)}\in\mathfrak{i},
    \end{align*}
    as $\mathfrak{i}$ is an ideal. We now introduce the following index sets to organize the summation structure:
    \begin{align*}
        \mathcal{S}(z):=\left\{j\in\mathcal{N}\middle \mid\,\exists k\in\mathcal{N}:\,\delta(j,k)=z\right\},\qquad \mathcal{S}(k,z):=\left\{j\in\mathcal{N}\,\middle\mid\, \delta(j,k)=z\right\}.
    \end{align*}
    These sets allow us to rewrite the Lie bracket expression as:
    \begin{align*}
        [x,y]&= \sum_{z\in\mathcal{N}}v_z\sum_{\ell\in \mathcal{S}(z)}\sum_{j\in \mathcal{S}(\ell,z)} c_j \alpha_{j\ell}\sum_{k\in\mathcal{K}}\hat{c}_k\beta_{\ell k} =\sum_{z\in\mathcal{N}} v_z\sum_{k\in\mathcal{K}}\sum_{j\in\mathcal{S}(z)}c_j \hat{c}_k \sum_{\ell\in \mathcal{S}(j,z)}\alpha_{j\ell}\beta_{\ell k}.
    \end{align*}
    This motivates the introduction of the following sets, which partition the index set $\mathcal{N}$ based on the support of the Lie bracket expression:
    \begin{align*}
        \mathcal{N}_\mathrm{in}&:=\left\{z\in\mathcal{N}\,\middle\mid\,\exists j\in\mathcal{S}(z)\,\wedge\;\exists k\in\mathcal{K}:\,\sum_{\ell\in\mathcal{S}(j,z)}\alpha_{j\ell}\beta_{\ell k}\neq 0\right\},\;&\;\mathcal{N}_\mathrm{out}&:=\left\{z\in\mathcal{N}\,\middle\mid\,\forall j\in\mathcal{S}(z)\,\vee\;\forall k\in\mathcal{K}:\,\sum_{\ell\in\mathcal{S}(j,z)}\alpha_{j\ell}\beta_{\ell k}= 0\right\}.
    \end{align*}
    Clearly, these sets satisfy the relations $\mathcal{N}_\mathrm{in}\cup\mathcal{N}_\mathrm{out}=\mathcal{N}$ and $\mathcal{N}_\mathrm{in}\cap\mathcal{N}_\mathrm{out}=\emptyset$. Suppose now an element $v_z\in \{v_z\,\mid\,z\in\mathcal{N}_\mathrm{out}\}$ does also lie in the space $[\g,\mathfrak{i}]$. Then, the there must exists coefficients $c_j,\hat{c}_k$ such that 
    \begin{align*}
        \sum_{k\in\mathcal{K}}\sum_{j\in\mathcal{S}(z)}c_j \hat{c}_k \sum_{\ell\in \mathcal{S}(j,z)}\alpha_{j\ell}\beta_{\ell k}=1.
    \end{align*}
    However, since $z\in\mathcal{N}_\mathrm{out}$, the last sum in the expression above is always zero, leading to a contradiction. Thus, the set $\{v_z\,\mid\,z\in\mathcal{N}_\mathrm{out}\}$ is disjoint from the space $[\g,\mathfrak{i}]$, since for all $z\in\mathcal{N}_\mathrm{out}$, the Lie bracket $[x,y]$ has only trivial support in $\spn\{v_z|z\in\mathcal{N}_\mathrm{out}\}$. In other words $[\g,\mathfrak{i}]\cap\spn\{v_z\,\mid\,z\in\mathcal{N}_\mathrm{out}\}=\{0\}$.
    
    Conversely, let $z\in\mathcal{N}_\mathrm{in}$. Then, there exists two indices $j\in\mathcal{N}$ and $k\in\mathcal{K}$ such that
    \begin{align*}
        \sum_{\ell\in \mathcal{S}(j,z)}\alpha_{j\ell}\beta_{\ell k}\neq 0
    \end{align*}
    Thus one can clearly choose $c_j$ and $\hat{c}_k$ such that 
    \begin{align*}
        \sum_{k\in\mathcal{K}}\sum_{j\in\mathcal{S}(z)}c_j \hat{c}_k \sum_{\ell\in \mathcal{S}(j,z)}\alpha_{j\ell}\beta_{\ell k}=1.
    \end{align*}
    This shows that $\{v_z\mid\,z\in\mathcal{N}_\mathrm{in}\}\subseteq[\g,\mathfrak{i}]$ for all $z\in\mathcal{N}_\mathrm{in}$ and by extension, also $\spn\{v_z\,\mid\,z\in\mathcal{N}_\mathrm{in}\}\subseteq[\g,\mathfrak{i}]$. 
    
    Since $V$ is a basis of $\g$, we have the decomposition $\g = \spn\{v_z\,\mid\, z\in\mathcal{N}_\mathrm{in}\}\oplus \spn\{v_z\,\mid\, z\in\mathcal{N}_\mathrm{out}\}$, with the intersection of the two spans being trivial, i.e., $\spn\{v_z\,\mid\, z\in\mathcal{N}_\mathrm{in}\}\,\cap\, \spn\{v_z\,\mid\, z\in\mathcal{N}_\mathrm{out}\}=\{0\}$. Therefore, $[\g,\mathfrak{i}]\subseteq \spn\{v_z\,\mid\,z\in\mathcal{N}_\mathrm{in}\}$. Moreover, since $[\g,\mathfrak{i}]\subseteq \mathfrak{i}$, and consequently also $[\g,[\g,\mathfrak{i}]]\subseteq[\g,\mathfrak{i}]$, it follows that $[\g,\mathfrak{i}]=\spn\{v_z\,\mid\,z\in\mathcal{N}_\mathrm{in}\}$ is itself an ideal of $\g$ and a subalgebra of $\mathfrak{i}$. By assumption $\mathfrak{i}$ is solvable, making $[\g,\mathfrak{i}]$ also solvable \cite{Serre:1992}.
    
    Suppose $\mathcal{N}_\mathrm{in}=\emptyset$, then the ideal satisfies $\mathfrak{i}\subseteq\mathcal{Z}(\g)$, implying that the center of $\g$ non-zero. Under the assumption that Conjecture~\ref{con:non:zero:center:proper:igp:sets} holds, this guarantees the existence of a proper non-empty subset $W\subsetneq V$ that satisfies the ideal-graph-property and spans a solvable ideal, which would show the claim. 
    
    If instead $\mathcal{N}_\mathfrak{in}\neq\emptyset$, then the set $\{v_z\,\mid\,z\in\mathcal{N}_\mathrm{in}\}$ forms a proper non-empty subset of $V$ that satisfies the ideal-graph-property and spans the solvable ideal $[\g,\mathfrak{i}]$. This follows from the fact that $\mathfrak{i}$ is a proper ideal of $\g$, which forbids that $[\g,\mathfrak{i}]=\g$, making $[\g,\mathfrak{i}]$ also a proper solvable ideal.

    Now, suppose Conjecture~\ref{con:non:zero:center:proper:igp:sets} is false. Then there exists a finite-dimensional minimal-graph-admissible Lie algebra $\g$ that can be associated with a minimal graph $G(V,E)$, such that $\mathcal{Z}(\g)\neq\{0\}$ and no proper non-empty subset $W\subsetneq V$ satisfies the ideal-graph-property or spans a solvable ideal. Since $\mathcal{Z}(\g)$ is a non-zero solvable ideal of $\g$, it follows that $\g$ is not semisimple \cite{Serre:1992}, invalidating the claim \enquote{$\g$ is semisimple if and only if the there exists no proper subsets $W\subsetneq V$ that satisfy the ideal-graph property and span a solvable ideal.}.
\end{proof}

Note that we can also deduce local criteria for semisimplicity, i.e., criteria that depend on the existence of special subgraphs, which are based on considerations involving the Killing form. For this see Proposition~\ref{prop:semisimple:killing:form:condition} in Appendix~\ref{app:symmetries}.

We want to conclude this section with a conjecture that proposes a strong condition for semisimplicity, based on the associated graphs, which is a motivated by the previous Lemma~\ref{lem:nec:condition:first:simplicity:minimal:with:conj}.
\begin{conjecture}\label{con:semisimple}
    Let $\g$ be a finite-dimensional graph-admissible Lie algebra associated with the labeled directed graph $G(V,E)$. Then $\g$ is semisimple if and only if every vertex is part of a closed directed walk that induces a self-contained subgraph of $G(V,E)$.
\end{conjecture}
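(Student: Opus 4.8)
The plan is to prove both directions by leveraging the structural results already established, particularly Theorem~\ref{thm:non:solvability:condition:strong}, Lemma~\ref{lem:ideal:span}, and the semisimplicity criteria developed in this section. For the forward direction, suppose $\g$ is semisimple. By the classification of semisimple Lie algebras over the relevant field, $\g$ decomposes as a direct sum $\g=\g_1\oplus\ldots\oplus\g_s$ of simple ideals, and by Proposition~\ref{prop:semisimple:lie:algebra:unconnected:subgraphs} one can choose an associated graph $G(V,E)$ that splits into mutually unconnected subgraphs $G(V_j,E_j)$, each associated with a simple $\g_j$. Since each $\g_j$ satisfies $\mathcal{D}^1\g_j=\g_j$ (semisimple algebras are perfect), each $\g_j$ is non-solvable, so by Theorem~\ref{thm:non:solvability:condition:strong} each $G(V_j,E_j)$ contains a self-contained subgraph induced by a closed directed walk. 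The harder sub-claim is that \emph{every} vertex of $V_j$ lies on such a walk; this should follow from the fact that in a simple Lie algebra no proper non-empty subset of a basis can span an ideal, combined with Lemma~\ref{lem:no:ideal:subset:walk:connecting:all:vertices} and Corollary~\ref{cor:no:proper:igp:closed:walk:induces:V}, which together force the existence of a closed directed walk traversing all of $V_j$ — and one must additionally argue this walk induces a \emph{self-contained} subgraph, i.e., that all edge labels also lie in $V_j$, which holds because $G(V_j,E_j)$ is unconnected from the rest and hence any label of an edge internal to $V_j$ must itself be in $V_j$.

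For the reverse direction, suppose every vertex of $G(V,E)$ lies on a closed directed walk inducing a self-contained subgraph, and suppose for contradiction that $\g$ is not semisimple. Then $\g$ has a non-zero solvable radical $\operatorname{rad}(\g)$. The strategy here is to produce, from this radical, a vertex that cannot lie on a self-contained-subgraph-inducing closed walk. One natural route: if $\g$ were abelian, $E=\emptyset$ by Lemma~\ref{lem:abelian:criterion}, and no vertex lies on any closed walk, contradiction; so assume $\g$ non-abelian with a proper non-zero solvable ideal $\mathfrak{i}$. One would like to invoke a structural lemma saying that the vertices corresponding to (a basis of) a solvable ideal cannot all lie on self-contained closed walks, since by Theorem~\ref{thm:non:solvability:condition:strong} the subgraph associated with the solvable subalgebra $\mathfrak{i}$ contains no self-contained subgraph induced by a closed directed walk. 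The subtlety is that a vertex of $\g$ might lie on a closed walk whose \emph{labels} reach outside the ideal while its vertices stay inside — but ``self-contained'' forbids exactly this, so any self-contained closed walk through a vertex $v$ uses only labels and vertices within a common subalgebra, and if $v\in\mathfrak{i}$ all those vertices must lie in $\mathfrak{i}$ too (as $\mathfrak{i}$ is an ideal and the walk relations $[v_k,v_{k_j}]\propto v_{k+1}$ with $v_k\in\mathfrak{i}$ force $v_{k+1}\in\mathfrak{i}$). Thus such a walk would be a self-contained closed walk inside the graph associated with $\mathfrak{i}$, contradicting solvability of $\mathfrak{i}$ via Theorem~\ref{thm:non:solvability:condition:strong}.

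Assembling these pieces: pick any non-zero element of $\operatorname{rad}(\g)$ and, using the constructions around Lemma~\ref{lem:central:elements:extended:graphs} or by choosing the graph appropriately, argue there is a vertex $v$ lying in (the span of a subset corresponding to) a solvable ideal; then $v$ cannot lie on a self-contained-subgraph-inducing closed walk, contradicting the hypothesis. The main obstacle I anticipate is the basis-dependence: the hypothesis and conclusion must be shown robust under the choice of associated graph, since a Lie algebra has many associated graphs. One must either (i) show the property ``every vertex lies on a self-contained closed walk'' is independent of the chosen (possibly overcomplete) basis, or (ii) carefully phrase the conjecture as ``$\g$ is semisimple iff \emph{some} associated graph has this property'' versus ``iff \emph{every} associated graph has it'' — and the forward and reverse directions may naturally pair with different quantifiers. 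Resolving this quantifier mismatch, and proving the key structural lemma that solvable ideals block self-contained closed walks through their vertices regardless of redundant basis elements that might be linear combinations spanning across the ideal and its complement, is where the real work lies; the counterexample $\mathfrak{su}(2)\oplus\mathfrak{su}(2)$ discussed earlier in the ideals subsection suggests one must be cautious about how unconnected components interact with the ``every vertex'' requirement.
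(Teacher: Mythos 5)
First, note that the statement you are attempting is Conjecture~\ref{con:semisimple}: the paper itself offers no proof (in the appendix the authors even have to invoke the still-unproven Conjecture~\ref{con:non:zero:center:proper:igp:sets} to make a closely related jump from ``perfect'' to ``semisimple''), so your proposal has to stand on its own — and as written it does not close the conjecture.

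In the forward direction you switch to a graph of your own choosing via Proposition~\ref{prop:semisimple:lie:algebra:unconnected:subgraphs}, but the conjecture quantifies over the \emph{given} graph $G(V,E)$; proving the property for one convenient associated graph does not establish it for all of them, and you have no argument that the property transfers between associated graphs. Moreover, inside each simple component you lean on Lemma~\ref{lem:nec:condition:first:simplicity:minimal} together with Lemma~\ref{lem:no:ideal:subset:walk:connecting:all:vertices} and Corollary~\ref{cor:no:proper:igp:closed:walk:induces:V}, i.e., on ``no proper non-empty vertex subset satisfies the ideal-graph-property''. That implication from simplicity is proved only for \emph{minimal} graphs, whereas Theorem~\ref{thm:semisimple:lie:algebra:graph:admissible} only guarantees semisimple algebras a possibly redundant graph: in a redundant graph a proper subset of $V_j$ can satisfy the ideal-graph-property while spanning all of $\g_j$, so simplicity yields no contradiction and the ``every vertex lies on a closed walk'' sub-claim is unsupported.

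The reverse direction has the more serious gap. Your contradiction requires a vertex that lies in (a basis of) the solvable radical, but nothing in the hypothesis forces any vertex of the given graph to lie in $\operatorname{rad}(\g)$ — each vertex may have a nonzero component along the semisimple part, in which case your ``the walk stays inside $\mathfrak{i}$'' propagation never starts. You cannot repair this by ``choosing the graph appropriately'' (e.g.\ via Lemma~\ref{lem:central:elements:extended:graphs}), because in this direction the hypothesis is attached to the given graph and may fail for the modified one. What the hypothesis actually yields, by the argument of Theorem~\ref{thm:non:solvability:condition:strong}, is only that $\spn\{V\}\subseteq\mathcal{D}^{k}\g$ for every $k$, i.e.\ that $\g$ is perfect; but perfect Lie algebras need not be semisimple (e.g.\ $\sl{2}{\C}\ltimes\C^{2}$ with the standard two-dimensional representation has nonzero nilpotent radical), so the step from the graph condition to $\operatorname{rad}(\g)=\{0\}$ is exactly the missing idea — the same point at which the authors resort to Conjecture~\ref{con:non:zero:center:proper:igp:sets}. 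Until you either prove that such a perfect-but-not-semisimple algebra can never be associated with a graph satisfying the hypothesis, or exhibit a structural lemma replacing the ``vertex in the radical'' assumption, the if-direction remains open.
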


\section{Prominent examples}\label{sec:prominent:examples}

We present here several examples of graph-admissible Lie algebras that play a significant role in many areas of physics, including general relativity and quantum mechanics. These examples are analyzed utilizing the graph-theoretic framework developed in this work. Our objective is to underscore the practical utility of the proposed methods and to demonstrate that their strength lies not only in computational applicability but also in their capability to visually represent the structural properties, thereby facilitating deeper structural insight. This graphical perspective can be particularly advantageous for those encountering these algebraic concepts for the first time.

\subsection{The Schrödinger algebra}\label{sec:prominent:examples:subsec:Schroedinger:algebra}
We begin by discussing the real Lie algebra $\sl{2}{\R}\ltimes\mathfrak{h}_{m}$, where $\mathfrak{h}_{m}$ denotes  the real $(2m+1)$-dimensional Heisenberg algebra and $\sl{2}{\R}$ the real three-dimensional special linear algebra \cite{Gosson:2006,Lang:2012}. For $m=1$, this algebra is physically particularly relevant as it is associated with the symmetries of the Schrödinger equation \cite{Aizawa:2011}. Moreover, it represents the largest non-solvable Lie algebra that can be faithfully realized within the skew-hermitian Weyl algebra $\hat{A}_1$---the algebra of all skew-hermitian polynomials made from the bosonic creation and annihilation operators of one mode \cite{A1:project}. The term \enquote{Schrödinger algebra} used to describe the Lie algebra $\sl{2}{\R}\ltimes\mathfrak{h}_{m}$ is somewhat ambiguous, as it is often used to refer to different distinct Lie algebras \cite{A1:project,Dobrev:1997,Nikitin:2020,Duval:1994,Aizawa:2011,Tao:2022}. However, for the simplest case $m=1$, these definitions all coincide. There are generally two approaches to extending this basic Schrödinger algebra. Both stem from the observation that the fundamental Lie algebra $\sl{2}{\R}\ltimes\mathfrak{h}_1$  encodes the symmetries of the free Schrödinger equation in a $(1+1)$-dimensional flat spacetime \cite{Liu:2021}. Thus, a natural extension of this algebra is to consider the groups associated given by the symmetries of the Schrödinger equation in an $(n+1)$-dimensional flat spacetime \cite{Nikitin:2020,Duval:1994,Aizawa:2011,Liu:2021}. To distinguish these algebra, they are sometimes referred to the \emph{extended} or \emph{super Schrödinger algebra}. Another approach, is to extend the dimension of the Heisenberg algebra, i.e., replacing the algebra $\mathfrak{h}_1$ in the first Schrödinger algebra $\sl{2}{\R}\ltimes\mathfrak{h}_1$ with the higher-dimensional Heisenberg algebra $\mathfrak{h}_m$ \cite{Tao:2022}, or similar extensions \cite{Liu:2021}. 

The Lie algebra $\sl{2}{\R}\ltimes\mathfrak{h}_m$ is defined over a field $\mathbb{F}$ with basis elements $\mathcal{B}=\{h,x,y,q_j,p_j,z\,\mid\,j\in\{1,\ldots,m\}\}$, and Lie bracket relations given by
\begin{align}
    [h,x]&=2x,\;&\;[h,y]&=-2y,\;&\;[x,y]&=h,\;&\;[h,q_j]&=q_j,\nonumber\\
    [y,q_j]&=p_j,\;&\;[h,p_j]&=-p_j,\;&\;[x,p_j]&=q_j,\;&\;[q_j,p_k]&=\delta_{jk}z,\label{eqn:basis:schroedinger:algebra}
\end{align}
while all other Lie brackets vanish \cite{Tao:2022}. Note that the subalgebra $\sl{2}{\R}$ is spanned by the three elements $h,x,y$, while the Heisenberg algebra $\mathfrak{h}_{m}$ is spanned by the remaining elements $q_1,p_1,\ldots,q_m,p_m,z$. This algebra is physically relevant because it can be faithfully realized within the Weyl algebra $A_m$, which is used to describe all bosonic Hamiltonians \cite{Woit:2017,Bruschi:Xuereb:2024} (see Appendix~\ref{app:realization:of:schroedinger:algebra} for more details). The basis choice in \eqref{eqn:basis:schroedinger:algebra} clearly implies that the algebra is minimal-graph-admissible, and its associated graph for $m=2$ is depicted in Figure~\ref{fig:schroedinger:algebra:eight:dim}.
\begin{figure}[htpb]
    \centering
    \includegraphics[width=0.765\linewidth]{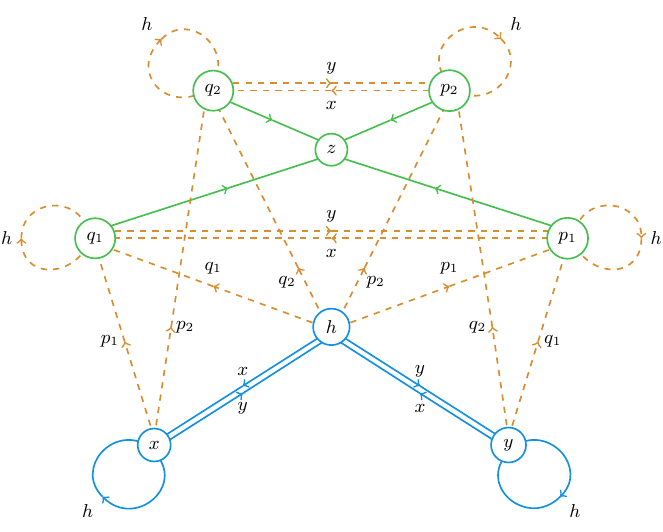}
    \caption{Depiction of the minimal graph associated with the Lie algebra $\sl{2}{\R}\ltimes\mathfrak{h}_{2}$, using the basis given in \eqref{eqn:basis:schroedinger:algebra}.}
    \label{fig:schroedinger:algebra:eight:dim}
\end{figure}

From this graph, we observe that this algebra is not solvable, as established by the criteria in Theorem~\ref{thm:non:solvability:condition:strong}: the subgraph with blue edges and vertices contains a closed directed walk that induces a self-contained subgraph. Furthermore, according to Proposition~\ref{prop:very:very:weak:semisimplicity:condition}, the algebra is also not semisimple and therefore not simple, since the vertex $z$ acts as a sinkhole (and thus it spans a nontrivial ideal of the full algebra). 

This structure adheres to the Levi-Mal'tsev decomposition theorem, which states that any Lie algebra is the semi-direct sum of a semisimple Lie algebra and its radical \cite{Kuzmin:1977}. In this case, the semisimple component is $\sl{2}{\R}$ and the radical is $\mathfrak{h}_2$. The graph illustrates this decomposition: vertices and edges in green correspond to basis elements from $\mathfrak{h}_2$, while those in blue represent elements from $\sl{2}{\R}$. Dashed orange edges indicate non-vanishing brackets between elements of the semisimple part and the radical, thereby encoding the semidirect sum. Notably, the subgraph associated with $\mathfrak{h}_2$ satisfies the ideal-graph-property, and its structure confirms that $\mathfrak{h}_2$ is solvable and, in fact, nilpotent, as there exists no closed directed walk within this subgraph (cf. Theorem~\ref{thm:nilpotency:criteria:strong}). It should therefore be evident that the graph-theoretic framework provides a clear visualization of the Levi-Mal'tsev decomposition.

\subsection{The Lie algebra of the Lorentz group}~\label{sec:prominent:examples:subsec:Lorentz:algebra}
A second example we want to discuss is the complex Lie algebra of the Lorentz group, denoted by $\mathfrak{so}(3;1)_\C$ \cite{Sexl:2000}, and hereafter referred to as the \emph{Lorentz algebra} for convenience. This algebra is of paramount importance to quantum field theory \cite{Srednicki:2007} and general relativity \cite{Wald:1984,Carroll:2019} as it encodes symmetries of Minkowski spacetime (i.e., flat spacetime). This algebra is generated by six elements: three boost operators $K_1,K_2,K_3$, and the three angular momentum operators $J_1,J_2,J_3$. Their Lie bracket relations are given by
\begin{align*}
    [J_j,J_k]&=i\sum_{\ell=1}^3\varepsilon_{jk\ell}J_\ell,\;&\;[J_j,K_k]&=i\sum_{\ell=1}^3\varepsilon_{jk\ell} K_\ell,\;&\;[K_j,K_k]&=-i\sum_{\ell=1}^3\varepsilon_{jk\ell}J_\ell,
\end{align*}
where $\varepsilon_{jk\ell}$ denotes the completely antisymmetric Levi-Civita symbol \cite{Srednicki:2007}. This algebra is therefore minimal-graph-admissible, and its corresponding graph $G_1(V_1,E_1)$ is depicted in Figure~\ref{fig:lorentz:algebra:v01:ab} (a). The graph representation provides an intuitive visualization of the corresponding non-abelian structure: The presence of closed directed walks that induce self-contained subgraphs indicates non-solvability. Nevertheless, this example illustrates that Proposition~\ref{prop:weak:simplicity:cond:graph:based} offers only a necessary condition, not a sufficient one: although the graph $G_1(V_1,E_1)$ contains a closed directed walk that induces the entire graph $G_1(V_1,E_1)$ itself, the algebra $\mathfrak{so}(3;1)_\C$ is only semisimple, not simple. 
\begin{figure}
    \centering
    \includegraphics[width=0.98\linewidth]{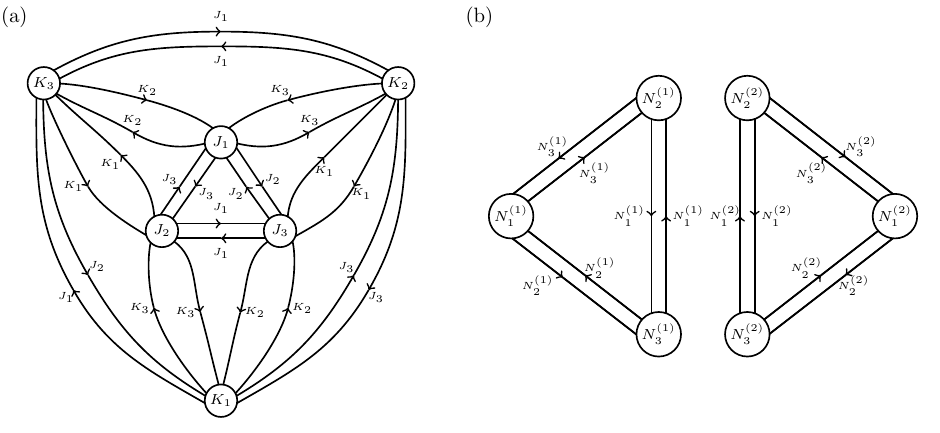}
    \caption{Panel (a): Depiction of the graph $G_1(V_1,E_1)$ associated with the complex Lorentz algebra $\mathfrak{so}(3;1)_\C$, constructed using the basis $\{J_j,K_j\}_{j=1}^3$ corresponding the angular momentum and boost operators. Panel (b): Depiction of the graph $G_2(V_2,E_2)$ associated with the complex Lorentz algebra $\mathfrak{so}(3;1)_\C$, constructed using the basis $\{N_j^{(1)},N_j^{(2)}\}_{j=1}^3$, where $N_j^{(1)}=J_j-iK_j$ and $N_j^{(2)}=J_j+iK_j$. This representation reveals the decomposition $\mathfrak{so}(3;1)_\C\cong\sl{2}{\C}\oplus\sl{2}{\C}$, as the graph splits into two unconnected components. Although the physical interpretation of these new generators is obscure \cite{Srednicki:2007}, the graph-theoretic perspective makes the algebraic structure explicit.}
    \label{fig:lorentz:algebra:v01:ab}
\end{figure}
To reveal its deeper structure, we can construct a second graph that can be associated with $\mathfrak{so}(3;1)_\C$. To this end, we perform a change of basis by introducing the elements $N_j^{(1)}:=J_j-iK_j$ and $N_j^{(2)}:=J_j+iK_j$ for $j\in\{1,2,3\}$. These elements satisfy the following Lie brackets:
\begin{align*}
    [N_j^{(1)},N_k^{(1)}]&=4i\sum_{\ell=1}^3\epsilon_{jk\ell} N_\ell^{(1)},\;&\;[N_j^{(2)},N_k^{(2)}]&=4i\sum_{\ell=1}^3\epsilon_{jk\ell} N_\ell^{(2)},\;&\;[N_j^{(1)},N_k^{(2)}]&=0.
\end{align*}
This decomposition shows that $\mathfrak{so}(3;1)_\C\cong\sl{2}{\C}\oplus\sl{2}{\C}$, as noted in \cite{Hall:Lie:groups:16}, and is a demonstration of Proposition~\ref{prop:semisimple:lie:algebra:unconnected:subgraphs}, since the associated graph $G_2(V_2,E_2)$  now splits into two unconnected components, as depicted in Figure~\ref{fig:lorentz:algebra:v01:ab} (b).

Importantly, regardless of the chosen basis, every vertex belongs to a closed directed walk that induces a self-contained subgraph, supporting Conjecture~\ref{con:semisimple}, which asserts that every graph associated with a semisimple Lie algebra exhibits this property. Furthermore, by Proposition~\ref{prop:weak:simplicity:cond:graph:based}, we can confirm that the Lorentz algebra is not simple, as the graph $G_2(V_2,E_2)$ in Figure~\ref{fig:lorentz:algebra:v01:ab} (b) does not contain a single closed directed encompassing all vertices of the original graph $G_2(V_2,E_2)$. 

The Lorentz algebra can be extended to the Poincaré algebra by adding the energy and momentum operators. For a further discussion on this topic see Appendix~\ref{app:poincare}, where we also consider the Galileo algebra, namely an algebra that can be obtained as the classical limit of the Lorentz algebra.

\subsection{Graphical representation for the search of finite-dimensional physically relevant Lie algebras}\label{sec:prominent:examples:linear:quantum}
One of the main goals of the approach presented in this work is to apply graph-based representations to Lie algebras that arise in physically relevant context. One such context is that of the Lie algebras generated by the defining operators of Hamiltonians, which are hermitian operators that describe the dynamics of coupled bosonic systems. This avenue has been the object of study of a seminal work \cite{Bruschi:Xuereb:2024}, further applied to the single mode case in subsequent work \cite{A1:project}. There, the focus is on the skew-hermitian Weyl algebra $\hat{A}_n$ of $n$ modes, which consists of all skew-hermitian polynomials in the complex Weyl algebra $A_n$. The complex Weyl algebra, in turn, plays a central role in the description of bosonic systems \cite{Woit:2017}. It is generated by the creation and annihilation operators $\hat{a}_k^\dagger$ and $\hat{a}_k$ that satisfy the canonical commutation relations $[\hat{a}_k,\hat{a}^\dagger_\ell]=\delta_{k\ell}$, while all others vanish. In this context, the Lie algebra $\g:=\lie{\{\hat{a}_1,\ldots,\hat{a}_n,\hat{a}^\dagger_1,\ldots,\hat{a}_n^\dagger,1\}}$ is the $(2n+1)$-dimensional Heisenberg algebra $\mathfrak{h}_n$. Consequently, the Weyl algebra $A_n$ can be identified with the universal enveloping algebra of $\mathfrak{h}_n$, provided the central basis element is chosen to be the identity.  Since bosonic systems are governed by Hamiltonian operators, which are defined in this context as hermitian operators in $A_n$, the operators of interest belong to a specific subspace of the Weyl algebra: the real hermitian Weyl space, which is isomorphic to the real skew-hermitian Weyl algebra. Hermicity can, for simplicity in this context, be formalized by introducing the linear anti-automorphism $(\cdot)^\dagger:A_1\to A_1$ with action $(\cdot)^\dagger:p\mapsto p^\dagger$ that satisfies the following properties: $1^\dagger=1$, $(\hat{a}_k)^\dagger=\hat{a}_k$, $(\hat{a}_k)^\dagger=\hat{a}_k$, $i^\dagger=-i$, and the order of multiplication is reversed, i.e., $(\hat{p}_1\hat{p}_2)^\dagger=\hat{p}_2^\dagger\hat{p}_1^\dagger$. An operator $\hat{O}$ is hermitian if it obeys $\hat{O}^\dagger=\hat{O}$, while a skew-hermitian operator $\hat{O}$ satisfies $\hat{O}^\dagger=-\hat{O}$.

The central question posed in the literature \cite{Bruschi:Xuereb:2024}, which we address here from a motivational and outlook perspective, is
\begin{quote}
    \textbf{Q}: \emph{Given a Hamiltonian $\hat{H}(t)=i\sum_{j\in\mathcal{J}}u_j(t) \hat{g}_j$, where $u_j(t)$ are real-valued time-dependent functions and $\hat{g}_j$ are time-independent and linearly independent skew-hermitian operators, is the Lie algebra $\g:=\lie{\mathcal{G}}$ generated by the set $\mathcal{G}:=\{\hat{g}_j\}_{j\in\mathcal{J}}$ finite dimensional? }
\end{quote}
This question is of profound importance because its answer determines whether the dynamics of the quantum system can be effectively characterized or even controlled via a finite number of degrees of freedom \cite{Bruschi:Xuereb:2024}. Such considerations are central to quantum optimal control \cite{Boscain:2021,Huang:1983}, a rapidly growing field driven by the recent advances in quantum information science \cite{Google:2024,Werninghaus:2021,Kim:2023}.

In order to connect this aforementioned avenue with our work on graphical representations of Lie algebras, we now consider the prototypical Hamiltonian
\begin{align*}
    \hat{H}(t)=\omega_1(t)\hat{a}_1^\dagger \hat{a}_1+\omega_2(t) \hat{a}_2^\dagger \hat{a}_2+ ig_2(t)\left(\hat{a}_1^2+(\hat{a}_1^\dagger)^2\right)+(\hat{a}_1-\hat{a}_1^\dagger)\left(g_1(t)+g_3(t)\hat{a}_2^\dagger \hat{a}_2+g_4(t)(\hat{a}_2^\dagger)^2\hat{a}_2^2\right),
\end{align*}
which models an optomechanical system \cite{Larson:2024} that has been extended by the addition of nonlinear interaction terms. Optomechanical systems are paradigmatic models of the interaction of photon pressure on one or more vibrating mirrors in a cavity \cite{Aspelmeyer:Kippenberg:2014}, and they are now the subject of great theoretical and experimental efforts due to their potential as core components of future quantum technologies \cite{Barzanjeh:2021,Andrews:2014,Riedinger:2018}. The set of generators $\mathcal{G}$ of $\hat{H}(t)$ is given by
\begin{align*}
    \mathcal{G}:=\left\{i\hat{a}_1^\dagger \hat{a}_1,i\hat{a}_2^\dagger \hat{a}_2,(\hat{a}_1-\hat{a}_1^\dagger),i\left(\hat{a}_1^2+(\hat{a}_1^\dagger)^2\right),(\hat{a}_1-\hat{a}_1^\dagger)\hat{a}_2^\dagger \hat{a}_2,(\hat{a}_1-\hat{a}_1^\dagger)(\hat{a}_2^\dagger)^2\hat{a}_2^2\right\}.
\end{align*}
We can apply Theorem 54 from \cite{Bruschi:Xuereb:2024} to confirm that the Lie algebra $\g:=\lie{\mathcal{G}}$ is finite dimensional. A direct computation shows that $\dim(\g)=14$. A convenient (overcomplete) basis is:
\begin{align*}
    a&:=i\left(\hat{a}_1^\dagger\hat{a}_1+\frac{1}{2}\right),\;&\;b&:=\hat{a}_1^2-(\hat{a}_1^\dagger)^2,\;&\;c&:=i\left(\hat{a}_1^2+(\hat{a}_1^\dagger)^2\right),\\
    p_1&:=\hat{a}_1-\hat{a}_1^\dagger,\;&\;q_1&:=i\left(\hat{a}_1-\hat{a}_1^\dagger\right),\;&\;z_1&:=i,\\
    p_2&:=\left(\hat{a}_1-\hat{a}_1^\dagger\right)\hat{a}_2^\dagger\hat{a}_2,\;&\;q_2&:=i\left(\hat{a}_1+\hat{a}_1^\dagger\right)\hat{a}_2^\dagger\hat{a}_2,\;&\;z_2&:=i(\hat{a}_2^\dagger)^2\hat{a}_2+i\hat{a}_2^\dagger\hat{a}_2^2,\\
    p_3&:=\left(\hat{a}_1-\hat{a}_1^\dagger\right)(\hat{a}_2^\dagger)^2\hat{a}_2,\;&\;q_3&:=i\left(\hat{a}_1+\hat{a}_1^\dagger\right)(\hat{a}_2^\dagger)^2\hat{a}_2,\;&\;z_3&:=i(\hat{a}_2^\dagger)^4\hat{a}_2^4+4i(\hat{a}_2^\dagger)^3\hat{a}_2^3+i(\hat{a}_2^\dagger)^2+\hat{a}_2^2,\\
    y_1&:=i(\hat{a}_2^\dagger)^2\hat{a}_2^2,\;&\;y_2&:=i\hat{a}_2^\dagger\hat{a}_2,\;&\;y_3&:=i(\hat{a}_2^\dagger)^3\hat{a}_2^3+2i(\hat{a}_2^\dagger)^2\hat{a}_2^2.
\end{align*}
Computing the Lie brackets confirms that this algebra is graph-admissible. However, because certain elements (e.g.,~$z_2$) can be expressed as a linear combination of others (such as $y_1$ and $y_2$), the associated graph $G(V,E)$ is redundant. Figure~\ref{fig:om:example} gives an intuitive illustration of this graph.

\begin{figure}
    \centering
    \includegraphics[width=0.75\linewidth]{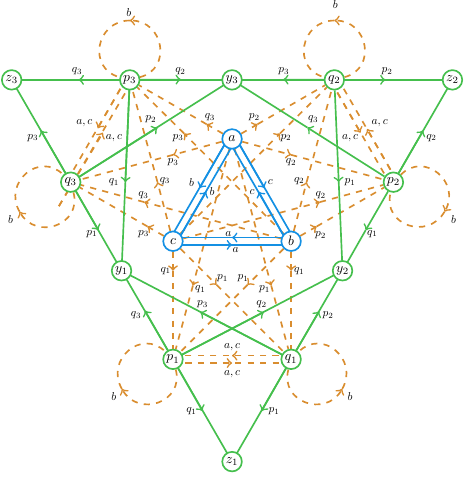}
    \caption{Depiction of the redundant graph $G(V,E)$ associated with the fourteen-dimensional Lie algebra $\g$ and the overcomplete  basis $\{a,b,c,p_j,q_j,z_j,y_j\}_{j=1}^3$. Blue vertices and edges correspond to the semisimple subalgebra spanned by $\{a,b,c\}$, while green vertices and edges indicate the radical $\operatorname{rad}(\g)=\lie{\{p_j,q_j,z_j,y_j\}_{j=1}^3}$, a two-step nilpotent ideal. Dashed orange edges depict interactions between the semisimple part and the radical, illustrating the Levi-Mal'tsev decomposition as a semidirect sum. The subgraph of the radical $\operatorname{rad}(\g)$ exhibits rotational symmetry, suggesting a connection to the alternating/cyclic group $A_3\cong\Z_3$.}
    \label{fig:om:example}
\end{figure}

We now apply the tools developed in this work to analyze the structural properties of the Lie algebra associated with the Hamiltonian above:
\begin{enumerate}[label = (\roman*)]
    \item \emph{Non-trivial center.} According to Lemma~\ref{lem:subalgebra:of:center}, Lie algebra $\g$ has a non-trivial center, as the vertices labeled by the elements $\{y_j,z_j\}_{j=1}^3$ are sinkholes in the associated graph $G(V,E)$.
    \item \emph{Presence of a simple subalgebra.} By Corollary~\ref{cor:simple:subalgebra}, the three vertices $\{a,b,c\}$ span a simple subalgebra of $\g$.
    \item \emph{Non-solvability.} The Lie algebra $\g$ is non-solvable, as the associated graph $G(V,E)$ contains a closed directed walk that induces a self-contained subgraph. For instance, starting at vertex $a$, one traverse the edge labeled by $b$ to reach $c$, then follow the edge labeled by $a$ to reach $b$, and finally traverse the edge labeled by $c$ to return to $a$. By Theorem~\ref{thm:non:solvability:condition:strong}, this confirms the non-solvability of $\g$.
    \item \emph{Existence of a nilpotent ideal (the radical).} Due to Lemma~\ref{lem:ideal:span}, the Lie algebra $\g$ contains an ideal that is spanned by the vertices $\{p_j,q_j,z_i,y_j\}_{j=1}^3$, which is highlighted in the graph $G(V,E)$ in Figure~\ref{fig:om:example} by coloring the corresponding edges and vertices green. This ideal is nilpotent, as it contains no closed directed walk (Theorem~\ref{thm:nilpotency:criteria:strong}). A close inspection reveals that this ideal is the radical $\operatorname{rad}(\g)$ of this Lie algebra. Furthermore, employing Algorithm~\ref{alg:generating:lower:central:series} and Lemma~\ref{lem:lower:central:series:of:graphs} reveals that $\operatorname{rad}(\g)$ is a two-step nilpotent Lie algebra.

    This observation enables once more an interpretation of the structure of the algebra via the Levi-Malt'sev decomposition theorem: the semisimple component (indicated with the blue edges and vertices) interacts with its radical (indicated with the green edges and vertices) exclusively through brackets that result in elements from the radical (illustrated with dashed orange edges). This provided a clear graphical illustration of the semidirect sum decomposition that governs the internal structure and the interplay of the two components.  

    \item \emph{Symmetry and conjecture.}
    The graph of the radical $\operatorname{rad}(\g)$ exhibits an intrinsic symmetry: rotations by $\frac{2}{3}\pi$ combined with the corresponding permutations of the labels that leave the graph invariant. Thus, its symmetry group corresponds to that of an equilateral triangle restricted to rotations, suggesting a possible connection between this two-step nilpotent Lie algebra and the dihedral group $D_3$, the smallest non-ablelian group \cite{Seiler:2008}, and its subgroup $\Z_3\cong A_3$ \cite{Fischer:2005}.
    
    \item \emph{Non-semisimplicity and non-reductiveness.} The Lie algebra $\g$ is not semisimple, which is consistent with the presence of a non-trivial radical. Moreover this is confirmed also by assuming that Conjecture~\ref{con:semisimple} is correct, since $G(V,E)$ contains vertices that cannot be the origin of a closed directed walk that induces a self-contained subgraph, i.e., sinkholes. Furthermore, by Algorithm~\ref{alg:generating:generating:the:graph:derived:altered}, the first derived graph $\mathcal{D}G(V,E)$ coincides with the original graph $G(V,E)$ (i.e., there are no loose ends), which implies by Lemma~\ref{lem:derived:series:graph:alg:valid} that the first derived algebra $\mathcal{D}\g$ coincides with the original Lie algebra $\g$. Thus, $\g$ is non-reductive.
\end{enumerate}

\section{Discussion}\label{sec:discussion}

Before concluding we aim to provide a comprehensive discussion of the work presented. Specifically, we wish to emphasize the essential results, highlight the strengths of the proposed approach, and critically examine its limitations and weaknesses. In addition, we outline possible avenues for extending the framework to mitigate these weaknesses while preserving its main advantages. Finally, we identify several interesting open questions that remain unsolved and merit further investigation.

\subsection{Considerations}
We begin by highlighting the principal strengths of the proposed framework for associating Lie algebras with labeled directed graphs that encode important structural properties of the algebra. 
First, the algorithm for generating a labeled directed graph associated with a given Lie algebra is conceptually simple and easy to implement, provided a suitable basis is available. 
Furthermore, the present framework enables the detection of key structural properties of the Lie algebras through a visual method. 
In particular, we presented necessary and sufficient criteria for solvability and nilpotency (cf. Theorems~\ref{thm:non:solvability:condition:strong} and~\ref{thm:nilpotency:criteria:strong}, respectively), which are key criteria in the classification of Lie algebras \cite{Knapp:1996}. 
Moreover, we introduced concrete and straightforward algorithms to construct graphs associated with the Lie algebras of the derived and lower central series of a given algebra (cf. Algorithms~\ref{alg:generating:generating:the:graph:derived:altered} and~\ref{alg:generating:lower:central:series}, respectively). 
We also established the equivalence between the termination of these algebraic series and the corresponding graph sequences (cf. Theorem~\ref{thm:solvable:termination:derived:series:graph} and Theorem~\ref{thm:nilpotebt:if:graph:series.Terminates}, respectively). 
In addition, this approach allows for the identification of central elements (cf. Lemma~\ref{lem:subalgebra:of:center})  and ideals (cf. Lemma~\ref{lem:ideal:span}), where Algorithm~\ref{alg:identifying:ideals} can assist in finding ideals. This framework further allows to derive multiple conditions to recognize simple, semisimple and reductive Lie algebras (cf. Section~\ref{sec:structural:properties:simple:semisimple}).  

These results provide an intuitive way to understand the internal structure of a given Lie algebra, thereby offering significant concrete benefits by presenting visual characterization that can be understood by experts and non-eperts alike. 
Furthermore each criterion discussed above can, in principle, be implemented through straightforward algorithms and therefore readily integrated into computer-based classification methods even in teh case of Lie algebras of large dimensions.

Finally, as indicated in the case of the Schrödinger algebra in Section~\ref{sec:prominent:examples:subsec:Schroedinger:algebra}, this graph-based approach allows us to visualize and understand a fundamental structural property of Lie algebras established by the Levi-Matl'tsev theorem: any Lie algebra over a field of characteristic zero decomposes as the semidirect sum of a semisimple subalgebra and its radical \cite{Kuzmin:1977}. 
The graph representation not only highlights the structure of the semisimple component and the radical but also reflects the interaction between them induced by the semidirect sum.

\subsection{Weaknesses and limitations}
Now, let us discuss the weaknesses and limitations of the proposed approach. 
An obvious limitation concerns the range of applicability, as established by Theorem~\ref{thm:existence:of:non:graph:admissible:lie:algebras}, which demonstrates the existence of finite-dimensional non-graph-admissible Lie algebras.
Although, as shown in Section~\ref{sec:graph:admissible:subsec:notion} and Section~\ref{sec:prominent:examples}, a broad class of physically relevant Lie algebras falls within the graph-admissible category, the restriction to graph-admissible Lie algebras remains significant.
Another weakness arises from the existence of Lie algebras that are not minimal-graph-admissible but only redundant-graph-admissible, as established by Corollary~\ref{cor:existance:of:redundant:but:not:mimimak:graph:admissibel:algebras}. 
Graphs associated with redundant-graph-admissible Lie algebras, or redundant graphs in general, lack several useful properties that minimal graphs possess, which complicates both the visualization and interpretation of their structure. 
For instance, in a minimal graph, the number of vertices coincides with the dimension of the associated Lie algebra. Redundant graphs, on the other hand, require additional vertices, which makes the representation by the associated graph less transparent. 
These extra vertices correspond to elements that are no longer linearly independent, which introduces difficulties in interpreting the graph and building intuition. 
Furthermore, the rules for verifying the validity of a redundant graph, as discussed in Section~\ref{sec:valid:graphs}, are also more involved than those for minimal graphs, especially those induced by the Jacobi identity. This added complexity makes redundant graphs less practical for quick structural analysis compared to minimal graphs.

The current framework is restricted to finite-dimensional Lie algebras, leaving infinite-dimensional cases outside its scope. This limitation excludes several important classes of Lie algebras that arise in mathematical physics and functional analysis, such as the Weyl algebra \cite{A1:project}, any universal enveloping algebra \cite{Hall:Lie:groups:16} more generically speaking, or the Lie algebra of compact operators on an infinite-dimensional Hilbert space \cite{Harpe:1972}.

In practice, this approach may, in certain cases, be difficult to implement, as it relies on finding a specific basis that satisfies the structural requirements given by equations \eqref{eqn:desired:basis} or, alternatively, constructing an overcomplete basis that satisfies equations \eqref{eqn:desired:basis:overcomplete}.
Such bases are not always easy to obtain, and not every basis is suitable for this method. 
Another significant weakness is the non-uniqueness of the associated graphs: a single Lie algebra can, in principle, admit multiple bases that correspond to distinct graphs. 
The complex Lorentz algebra $\mathfrak{so}(3;1)_\C$, as demonstrated in section~\ref{sec:prominent:examples:subsec:Lorentz:algebra}, clearly illustrates this phenomenon. 
Conversely, the same graph may be associated with several distinct Lie algebras, as shown in Example~\ref{exa:second:example:graphs:assocaited:with:algebras}. 
While this non-uniqueness complicates the classification of Lie algebras, it can also be viewed as a strength, as it reveals structural similarities between different Lie algebras.
Moreover, this property can also be exploited to differentiate simple from non-simple but semisimple Lie algebras in the complex setting. Specifically Proposition~\ref{prop:semisimple:lie:algebra:unconnected:subgraphs} guarantees that a semisimple but non-simple Lie algebra can be associated with a labeled directed graph $G(V,E)$ that decomposes into multiple (at least two) mutually unconnected subgraphs $G(V_j,E_j)$, where each subgraph corresponds to a simple ideal. In contrast, Proposition~\ref{prop:weak:simplicity:cond:graph:based} shows that a minimal graph associated with a simple Lie algebra can never decomposed in such a way.

This approach is effective in highlighting certain structural properties of the Lie algebra, such as solvability or nilpotency, but it does not capture all aspects of the algebraic structure. 
In particular, the criteria derived from the graph representation are not sufficient to reliably distinguish between simple and semisimple Lie algebras when only a single graph is considered. Furthermore, we could not provide any criterion that aids in determining whether a Lie algebra corresponds to a compact or non-compact Lie group using solely the graph structure. 
For instance, the two real Lie algebras $\mathfrak{su}(2)$ and $\sl{2}{\R}$ can both be associated with the same graph (cf. Example~\ref{exa:second:example:graphs:assocaited:with:algebras}), despite the fact that their corresponding Lie groups differ fundamentally: the special unitary group $\operatorname{SU}(2)$ is compact, whereas the special linear group $\operatorname{SL}(2,\R)$ is not \cite{Woit:2017}. One possible refinement to address this limitation might be obtained by introducing an edge-coloring scheme to encode additional information, such as the sign or magnitude of the structure constants. For example, the Lie algebras $\mathfrak{su}(2)$ and $\sl{2}{\R}$ differ, up to isomorphism, only by a sign factor in their defining brackets (cf. equations~\eqref{eqn:basis:su:2:canonical} and~\eqref{eqn:basis:sl2R:second:version}), which could be indicated by distinct edge colors.

\subsection{Potential extensions}
The observations presented above naturally lead to the discussion of possible extensions of the proposed framework. 
One promising direction is to generalize labeled directed graphs to a labeled version of directed hypergraphs \cite{Bretto:2013}, as this allows a single edge to point towards multiple vertices. 
Such a generalization is necessary whenever one works with a basis where the adjoint action of one basis element on another yields a linear combination of several other basis elements, a situation that occurs, for example, for all finite-dimensional non-graph-admissible Lie algebras and can also arise for arbitrary basis choices. 
In this setting, the corresponding edge would indicate that the Lie bracket of the element labeling the vertex from which the edge originates and the element labeling the edge itself produces a linear combination of the elements labeling all vertices to which the edge points. However, this approach has certain drawbacks. 
For instance, the weights of the individual basis elements in a linear combination are not clearly represented in the hypergraph, even though they carry more structural significance than the scalar coefficient of a single resulting element. In the current approach, where the basis is chosen so that the Lie bracket of two basis elements produces at most one other basis element, the exact scalar factor is irrelevant for the construction of the associated graph. This omission is justified by the bilinearity of the Lie bracket, as any non-zero scalar multiple can be absorbed without changing the underlying structure of the graph. In contrast, the relative weights in a linear combination determine which elements appear in the bracket and how strongly they contribute, making them essential for accurately capturing the algebra's structure. Additionally, hypergraphs are generally less visually intuitive, making it harder to develop a clear understanding of the structure of the associated Lie algebra. 
Despite these challenges, this generalization is worth investigating as it could significantly broaden the scope of applicability of the framework. 

Another immediate generalization would be achieved by relaxing the requirement that the (possibly overcomplete) basis satisfying relations \eqref{eqn:desired:basis} (or \eqref{eqn:desired:basis:overcomplete} respectively) is finite and allowing, instead, also suitable countably infinite bases. This extension would enable the construction of graphs with countably infinite vertices and edges, thereby covering also certain finite but non-graph-admissible Lie algebras as well as other countably infinite-dimensional Lie algebras. 
Such generalizations would not only alleviate the difficulty of finding a suitable basis but also allow the visualization of additional structural features, such as chains of Lie brackets, which are particularly relevant in the study of derived and lower central series or the classification of finite-dimensional subalgebras of an infinite-dimensional Lie algebra \cite{Bruschi:Xuereb:2024,A1:project}. For a more detailed explanation, see Example~\ref{exa:commutator:chains}.

\begin{tcolorbox}[breakable, colback=Cerulean!3!white,colframe=Cerulean!85!black,title=\textbf{Example}: Visualizing Commutator Chains]
    \begin{example}\label{exa:commutator:chains}
        The seminal work \cite{Bruschi:Xuereb:2024} investigates whether a specific set of elements in the skew-hermitian Weyl algebra $\hat{A}_n$ generates an infinite-dimensional subalgebra. The key tool in this analysis is provided by Commutator Chains, which have been later generalized in \cite{A1:project}. The idea behind these objects is to start with two seed elements and iteratively calculate nested Lie brackets. If the resulting elements are of strictly increasing degree, one can conclude that the two initial elements generate an infinite-dimensional Lie algebra, since non-zero terms of differing degree are linearly independent. Here, in Figure~\ref{fig:commutator:chain:type:II}, we illustrate a Commutator Chain of Type II by representing only edges relevant to the construction of the chain. All elements are linearly independent since they have strictly monotonically increasing degree, and therefore the chain never terminates. Thus, the corresponding graph consists of countably infinite many vertices and edges. For details see \cite{Bruschi:Xuereb:2024}.
        \begin{figure}[H]
            \centering
            \includegraphics[width=0.95\linewidth]{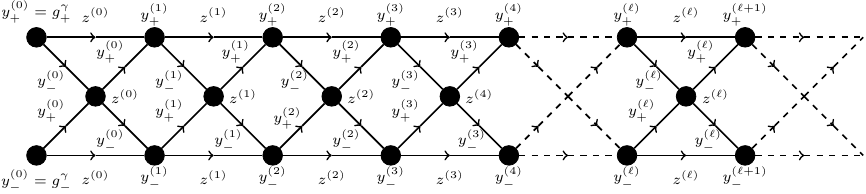}
            \caption{Commutator Chain of Type II generated by the seed elements $y_+^{(0)}:=g_+^{\gamma}$ and $y_-^{(0)}:=g_-^{\gamma}$.}
            \label{fig:commutator:chain:type:II}
        \end{figure}
    \end{example}
\end{tcolorbox}

These extensions generally preserve the strengths of the presented approach and could provide a richer and more flexible framework for representing Lie algebras using labeled directed graphs. Nevertheless, in the next section we want to focus on the most promising generalization, which establishes a direct connection to the concept of Lie algebra gradations.

\subsection{Generalizing the approach using gradations by abelian magmas}\label{sec:generalization}
Another idea that naturally arises when attempting to generalize the graph-theoretic approach developed in this work is inspired by the concept of group gradings, which play a central role in various areas of mathematics and theoretical physics, such as representation theory \cite{Kac:1968}, supersymmetry \cite{Schimmrigk:2004}, and the theory of algebraic groups \cite{Bahturin:2014,Hornhardt:2025}. In particular, gradations provide a powerful tool for organizing the internal structure of Lie algebras \cite{Elduque:2013} and aid in classification efforts \cite{Reeder:2012,Bahturin:2014:II}. We begin by recalling the standard definition:
\begin{definition}[graded Lie algebra]
    Let $\g$ be a Lie algebra over the field $\mathbb{F}$ and $(G,\circ)$ be an abelian group. A \emph{$G$-gradation} of $\g$ is a direct vector-space decomposition
    \begin{align*}
        \g=\bigoplus_{j\in G}\g_j,\qquad\text{such that}\qquad [\g_j,\g_k]\subseteq \g_{j\circ k}.
    \end{align*}
    A Lie algebra is denoted a \emph{$G$-graded Lie algebra} if such a decomposition exists for a given abelian group $G$ \cite{helgason2024differential}.
\end{definition}
It is common practice to refer to a $G$-graded Lie algebra simply as a graded Lie algebra when either the specific grading group $G$ is clear from the context or deemed unimportant. In particular, any Lie algebra can be trivially regarded as a graded Lie algebra by considering the trivial group $G=\{0\}$, where the entire algebra is concentrated in a single component. It is important to note that in the literature, the term graded Lie algebra is sometimes conflated with the concept of a graded Lie superalgebra \cite{Kac:1977,CATTANEO:2006,Schimmrigk:2004,Nijenhuis:1966,Nijenhuis:Richardson:1967}. While closely related, the latter introduces additional constraints on the Lie bracket of individual spaces. One notable example is the modification of the Jacobi identity to the graded Jacobi identity \cite{Nijenhuis:Richardson:1967}.

The concept of graded Lie algebras bears a structural resemblance to that of minimal-graph-admissible Lie algebras, as these admit a finite basis $\mathcal{B}=\{e_j\}_{j\in\mathcal{N}}$, with an index set $\mathcal{J}\subseteq\N_{\geq1}$, such that the basis elements $e_j\in\mathcal{B}$ satisfy the relations \eqref{eqn:desired:basis}. Accordingly, one can decompose a minimal-graph-admissible Lie algebra $\g$ as $\g=\bigoplus_{j\in\mathcal{J}}\g_j$, where each vector-space component is defined by $\g_j:=\spn\{e_j\}$. These spaces satisfy the bracket relation $[\g_j,\g_k]\subseteq\g_{\delta(j,k)}$ for a symmetric function $\delta:\mathcal{N}\times\mathcal{N}\to\mathcal{N}\cup\{0\}$. However, in this setting, we cannot adopt the convention $\g_0=\spn\{e_0\}\equiv\{0\}$, since the decomposition must be given by a direct sum.
To address this, the function $\delta$ can be modified in the cases where $\delta(j,k)=0$, by assigning both $\g_{\delta(j,k)}$ and $\g_{\delta(k,j)}$ to any of the existing $\g_\ell$ with $\ell\in\mathcal{N}$, which is justified as, in this case $[\g_j,\g_k]=\{0\}\subseteq \g_\ell$ holds for any $\ell\in\mathcal{N}$. This modification preserves the symmetry of $\delta$ while ensuring the decomposition remains a valid direct sum. Note that the specific choice of this reassignment is not uniquely determined and can be made in accordance with the requirements of the particular application under consideration. One such modification is for example given by
\begin{align}
    \delta:\mathcal{N}\times\mathcal{N}\to\mathcal{N},\;(j,k)\mapsto\left\{\begin{matrix}
        \ell&,\text{ if } [e_j,e_k]\propto e_\ell\\
        \min\{j,k\}&,\text{ otherwise}
    \end{matrix}\right..
\end{align}
Thus, a minimal graph-admissible Lie algebra satisfies a generalized notion of a graded Lie algebra, where the abelian group $(G,\circ)$ is replaced by the abelian magma $(\mathcal{N},\delta)$ \cite{Serre:1992}. Such structures are also referred to as abelian binars \cite{Bergman:2012} or abelian groupoids \cite{Brandt:1927}. This generalization is necessary because $\delta$ defines a commutative binary operation that is generally not associative, and the set $\mathcal{N}$ lacks both a unit element and an inverse for all elements. This motivates the following definition:
\begin{definition}[magma-graded Lie algebra]
    Let $\g$ be a Lie algebra over the field $\mathbb{F}$ and $(\mathcal{M},\delta)$ be an abelian magma. An \emph{$\mathcal{M}$-magma gradation} of $\g$ is the direct vector-space decomposition
    \begin{align*}
        \g=\bigoplus_{j\in \mathcal{M}}\g_j,\qquad\text{such that}\qquad [\g_j,\g_k]\subseteq \g_{\delta(j,k)}.
    \end{align*}
    A Lie algebra is denoted an \emph{$\mathcal{M}$-magma-graded Lie algebra} if such a decomposition exists for a given abelian magma $\mathcal{M}$. Furthermore, if there exists a magma $(\mathcal{M},\delta)$ such that $|\mathcal{M}|=\dim(\g)$ and $\g$ admits a $\mathcal{M}$-magma gradation, then $\g$ is a called a \emph{minimal-magma-graded Lie algebra}.
\end{definition}
This generalization is, in some contexts, also referred to as a graded Lie algebra \cite{Cordova:Martinez:2018}. Nevertheless, since group-gradings remain the most widely used and standard formulation, we adopt the more descriptive term \enquote{magma-graded} to emphasize the generalization and to avoid ambiguity.

It is straightforward to recognize the following equivalence:
\begin{lemma}\label{lem:minimal:magma:graded:minimal:graph:admissible}
    Let $\g$ be a finite-dimensional Lie algebra. Then $\g$ is minimal-graph-admissible if and only if $\g$ is a minimal-magma-graded Lie algebra.
\end{lemma}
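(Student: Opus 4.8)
The plan is to unwind both definitions and show that each supplies exactly the data the other requires. The statement to prove is an equivalence, so I would organize it as two implications, each of which is essentially a dictionary translation rather than a computation.

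First I would prove that minimal-graph-admissibility implies minimal-magma-gradedness. Suppose $\g$ is minimal-graph-admissible, so by Definition~\ref{def:graph:admissible} there is a basis $\{x_j\}_{j\in\mathcal{N}}$ with $\mathcal{N}=\{1,\dots,n\}$ and $n=\dim(\g)$, satisfying $[x_j,x_k]=\alpha_{jk}x_{\delta(j,k)}$ for a symmetric $\delta$ and antisymmetric $\boldsymbol{\alpha}$, together with the convention $x_0:=0$ and $\delta(j,k)=0$ whenever $\alpha_{jk}=0$. I would set $\g_j:=\spn\{x_j\}$ for $j\in\mathcal{N}$; since the $x_j$ form a basis, $\g=\bigoplus_{j\in\mathcal{N}}\g_j$ and $|\mathcal{N}|=n=\dim(\g)$. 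The only subtlety — and the one the paragraph preceding the lemma already flags — is that a magma gradation needs a $\delta$ with codomain $\mathcal{N}$ (not $\mathcal{N}\cup\{0\}$), because the decomposition is a direct sum with no zero component. So I would replace $\delta$ by the modified map $\delta'(j,k):=\delta(j,k)$ when $[x_j,x_k]\neq 0$ and $\delta'(j,k):=\min\{j,k\}$ otherwise (the explicit choice displayed in the excerpt just before the lemma). This $\delta'$ is still symmetric, its codomain is $\mathcal{N}$, and $(\mathcal{N},\delta')$ is an abelian magma. The bracket condition $[\g_j,\g_k]\subseteq\g_{\delta'(j,k)}$ holds in both cases: if $[x_j,x_k]\neq0$ it equals $\alpha_{jk}x_{\delta(j,k)}\in\g_{\delta'(j,k)}$, and if $[x_j,x_k]=0$ then $\{0\}\subseteq\g_\ell$ for any $\ell$, in particular for $\ell=\delta'(j,k)$. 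Hence $\g$ admits an $\mathcal{N}$-magma gradation with $|\mathcal{N}|=\dim(\g)$, i.e., $\g$ is minimal-magma-graded.

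Next I would prove the converse. Suppose $\g$ is minimal-magma-graded: there is an abelian magma $(\mathcal{M},\delta)$ with $|\mathcal{M}|=\dim(\g)=:n$ and a decomposition $\g=\bigoplus_{j\in\mathcal{M}}\g_j$ with $[\g_j,\g_k]\subseteq\g_{\delta(j,k)}$. Since the sum is direct and indexed by a set of size $n$, and $\sum_j\dim\g_j=n$, each $\g_j$ must be one-dimensional; pick a basis vector $x_j$ of each $\g_j$, so $\{x_j\}_{j\in\mathcal{M}}$ is a basis of $\g$. For each pair $(j,k)$, the bracket $[x_j,x_k]$ lies in $\g_{\delta(j,k)}=\spn\{x_{\delta(j,k)}\}$, so $[x_j,x_k]=\alpha_{jk}x_{\delta(j,k)}$ for a unique scalar $\alpha_{jk}\in\mathbb{F}$; antisymmetry of the Lie bracket together with symmetry of $\delta$ forces $\boldsymbol{\alpha}$ to be antisymmetric (using $[x_j,x_k]=-[x_k,x_j]$ and $\delta(j,k)=\delta(k,j)$; in particular $\alpha_{jj}x_{\delta(j,j)}=[x_j,x_j]=0$ gives $\alpha_{jj}=0$). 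This is precisely condition~\eqref{eqn:desired:basis} of Definition~\ref{def:graph:admissible} after relabeling $\mathcal{M}$ as $\mathcal{N}=\{1,\dots,n\}$ and extending $\delta$ by $\delta(j,k):=0$ whenever $\alpha_{jk}=0$ as in the definition's concluding convention. Since $|\{x_j\}|=n=\dim(\g)$, the basis is genuine (not overcomplete), so $\g$ is minimal-graph-admissible.

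I do not anticipate a serious obstacle here; the lemma is genuinely a restatement, and the excerpt has already done the conceptual work by motivating and displaying the modified $\delta$. The one point that requires a sentence of care — and the closest thing to a subtle step — is the observation in the converse direction that a direct-sum decomposition into $n=\dim(\g)$ pieces must have every summand one-dimensional; this is elementary linear algebra ($\dim\g=\sum_{j}\dim\g_j$ with $n$ nonzero summands each of dimension $\geq1$), but it is the hinge that turns an arbitrary magma gradation into a basis satisfying~\eqref{eqn:desired:basis}. I would also remark, for completeness, that a single Lie algebra may admit several distinct minimal-magma gradations, mirroring the non-uniqueness of the associated minimal graph already illustrated in Examples~\ref{exa:first:example:graphs:assocaited:with:algebras} and~\ref{exa:second:example:graphs:assocaited:with:algebras}; this does not affect the equivalence, since the lemma only asserts existence on each side.
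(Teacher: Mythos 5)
Your proposal is correct and follows essentially the same route as the paper: in the forward direction you reassign the "zero" values of $\delta$ to land in $\mathcal{N}$ (you use $\min\{j,k\}$, the paper's proof uses $((j+k)\bmod n)+1$ — an immaterial choice, as the paper itself notes) and take $\g_j=\spn\{x_j\}$, and in the converse you use one-dimensionality of the $n$ graded components to pick basis vectors, read off an antisymmetric $\boldsymbol{\alpha}$, and extend $\delta$ by $0$ on vanishing brackets. The only difference is cosmetic: you pass directly from $[x_j,x_k]\in\g_{\delta(j,k)}$ to the scalar $\alpha_{jk}$, whereas the paper first establishes $[\g_j,\g_k]\in\{\{0\},\g_{\delta(j,k)}\}$ via a rescaling argument — a step that is redundant once each $\g_j$ is one-dimensional, so your shortcut is sound.
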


Observe that the restriction to finite-dimensional Lie algebras is necessary here: While the definition of a minimal-magma-graded Lie algebra naturally applies to infinite-dimensional Lie algebras, the notion of minimal-graph-admissibility is inherently limited to the finite-dimensional case. Thus, the magma-based generalization encompasses a broader class of Lie algebras. However, for the sake of clarity and convenience, we will restrict our attention in the following to the finite-dimensional setting.

\begin{proof}
    Let $\g$ be a finite-dimensional minimal-graph-admissible Lie algebra. Then there exists a finite index set $\mathcal{N}\subseteq\N_{\geq1}$ of cardinality $n:=|\mathcal{N}|=\dim(\g)$, and a basis $\mathcal{B}=\{x_j\}_{j\in\mathcal{N}}$, such that the Lie brackets of these elements satisfy $[x_j,x_k]=\alpha_{jk}x_{\delta(j,k)}$, where $\boldsymbol{\alpha}\in\mathbb{F}^{n\times n}$ is an antisymmetric matrix, and $\delta:(\mathcal{N}\cup\{0\})\times(\mathcal{N}\cup\{0\})\to\mathcal{N}\cup\{0\}$ is a symmetric function satisfying $\delta(j,k)=0$ if either $\alpha_{jk}=0$, $j=0$, or $k=0$. We can now define the modified function:
    \begin{align*}
        \Tilde{\delta}:\mathcal{N}\times\mathcal{N}\to\mathcal{N},\;(j,k)\mapsto \Tilde{\delta}(j,k):=\left\{\begin{matrix}
            \delta(j,k)&,\text{ if }\alpha_{jk}\neq 0\\
            ((j+k) \mod n)+1&,\text{ otherwise}
        \end{matrix}\right..
    \end{align*}
    It is straightforward to verify that the pair $(\mathcal{N},\Tilde{\delta})$ forms an abelian magma, as $\Tilde{\delta}$ is a commutative binary operation by construction. Next, one defines the spaces $\g_j:=\spn\{e_j\}$ for all $j\in\mathcal{N}$ and one has consequently the direct vector-space decomposition $\g=\bigoplus_{j\in\mathcal{J}}\g_j$, where the brackets satisfy $[\g_j,\g_k]\subseteq\g_{\Tilde{\delta}(j,k)}$. Since $|\mathcal{N}|=n$, this shows that $\g$ a minimal-magma-graded Lie algebra.

    On the other hand, let now $\g$ be a finite-dimensional minimal-magma-graded-Lie algebra. Then there exists an abelian magma $(\mathcal{M},\delta)$ such that $\g$ admits an $\mathcal{M}$-magma gradation:
    \begin{align*}
        \g&=\bigoplus_{j\in\mathcal{M}}\g_j\qquad\text{with}\qquad [\g_j,\g_k]\subseteq \g_{\delta(j,k)},
    \end{align*}
    and $n:=|\mathcal{M}|=\dim(\g)$. It follows that $\dim(\g_j)=1$ for all $j\in\mathcal{M}$, since the decomposition of $\g$ is into $n$ nontrivial, linearly independent subspaces. We can thus find a basis $\mathcal{B}=\{x_j\}_{j\in\mathcal{M}}$ by selecting a nonzero element $x_j\in\g_j\setminus\{0\}$ for each $j\in\mathcal{M}$.
    
    We now show that for each pair $(j,k)\in\mathcal{M}\times\mathcal{M}$, either $[\g_j,\g_k]=\{0\}$ or $[\g_j,\g_k]=\g_{\delta(j,k)}$, but not both. Suppose $[\g_j,\g_k]\neq \{0\}$ and consider an element $x\in[\g_j,\g_k]\setminus\{0\}$, then there exists an index set $\mathcal{L}\subseteq\N$ and a set of pairs $\{(y_\ell,z_\ell)\in \g_j\times\g_k\,\mid\,\ell\in\mathcal{L}\subseteq\N\}$  such that $x=\sum_{\ell\in\mathcal{L}}[y_\ell,z_\ell]$ \cite{Knapp:1996}. Since $[\g_j,\g_k]\subseteq \g_{\delta(j,k)}$ and $x\neq 0$, it follows that $x=c x_{\delta(j,k)}$ for some scalar $c\in\mathbb{F}^*$, because each vector space $\g_p$ with $p\in\mathcal{M}$ is one-dimensional. Because $\g_k$ is a vector space, we can re-scale the elements $z_\ell$ by $1/c$ and, consequently, there exists one set $\{(y_\ell,z_\ell/c)\in \g_j\times\g_k\,\mid\,\ell\in\mathcal{L}\subseteq\N\}$ such that $\sum_{\ell\in\mathcal{L}}[y_\ell,z_\ell/c]=(\sum_{\ell\in\mathcal{L}}[y_\ell,z_\ell])/c=x/c=x_{\delta(j,k)}$. Hence, $x_{\delta(j,k)}\in [\g_j,\g_k]$, and by the same argument $\g_{\delta(j,k)}\subseteq[\g_j,\g_k]$. Therefore, we conclude that $[\g_j,\g_k]=\g_{\delta(j,k)}$ whenever $[\g_j,\g_k]\neq \{0\}$. This allows us to define a matrix $\boldsymbol{\alpha}\in\mathbb{F}^{n\times n}$ such that $[x_j,x_k]=\alpha_{jk} x_{\delta(j,k)}$, where $\boldsymbol{\alpha}$ is antisymmetric due to the antisymmetry of the Lie bracket, and $\alpha_{jk}=0$ if $\{0\}=[\g_j,\g_k]\subsetneq\g_{\delta(j,k)}$. We can now define the modified function $\Tilde{\delta}:\mathcal{M}\cup\{0\}\times\mathcal{M}\cup\{0\}\to\mathcal{M}\cup\{0\}$ with concrete action:
    \begin{align*}
        \Tilde{\delta}:(j,k)\mapsto\Tilde{\delta}(j,k):=\left\{\begin{matrix}
            \delta(j,k)&,\text{ if }(j,k)\in\mathcal{M}\times\mathcal{M}\text{ and }[\g_j,\g_k]\neq\{0\}\\
            0&,\text{ otherwise}
        \end{matrix}\right..
    \end{align*}
    Then, by Definition~\ref{def:graph:admissible}, $\g$ is a minimal-graph-admissible Lie algebra, since $\Tilde{\delta}$ is by construction symmetric.
\end{proof}

It is noteworthy that in certain cases, a minimal-magma-graded Lie algebra $\g$ is also a graded Lie algebra in the conventional sense. That is, there exist instances in which the magma $\mathcal{M}$, with respect to which $\g$ is $\mathcal{M}$-magma-graded, is in fact an abelian group. A trivial example occurs when $\g$ is an abelian Lie algebra of prime dimension. In such cases, the magma can be chosen as the corresponding cyclic group $\Z_{\dim(\g)}$ yielding a standard group grading. For a more intricate case, consider the following example:
\begin{tcolorbox}[breakable, colback=Cerulean!3!white,colframe=Cerulean!85!black,title=\textbf{Example}: $\mathcal{M}$-magma-graded Lie algebras that are group-graded Lie algebras]
    \begin{example}\label{exa:sl2:is:graded:lie:algebra}
        Consider the real special linear Lie algebra $\sl{2}{\R}$. According to Example~\ref{exa:second:example:graphs:assocaited:with:algebras}, this Lie algebra is minimal-graph-admissible, with the basis $\{e_1,e_2,e_3\}$ satisfying the Lie bracket relations:
        \begin{align*}
            [e_1,e_2]&=2e_2,\;&\;[e_1,e_3]&=-2e_3,\;&\;[e_2,e_3]&=e_1.
        \end{align*}
        By Lemma~\ref{lem:minimal:magma:graded:minimal:graph:admissible}, it follows that $\sl{2}{\R}$ is also a minimal-magma-graded Lie algebra. To see this define the subspaces $\g_j=\spn\{e_j\}$ for $j\in\{1,2,3\}:=\mathcal{N}$. Then, the brackets satisfy $[\g_j,\g_k]\subseteq\g_{\delta(j,k)}$, where the function $\delta:\mathcal{N}\times\mathcal{N}\to\mathcal{N}$ can be chosen such that:
        \begin{align*}
            \delta(1,1)&=1,\;&\;\delta(1,2)&=2,\;&\;\delta(1,3)&=3,\\
            \delta(2,1)&=2,\;&\;\delta(2,2)&=3,\;&\;\delta(2,3)&=1,\\
            \delta(3,1)&=3,\;&\;\delta(3,2)&=1,\;&\;\delta(3,3)&=2.
        \end{align*}
        It is straightforward to verify that the magma $(\mathcal{N},\delta)$ coincides with the abelian cyclic group $\Z_3\equiv\Z\setminus 3\Z$ by comparing the corresponding Cayley tables \cite{Roman:2012,Tapp:2021}. Therefore, this minimal-magma-graded Lie algebra is also a $\Z_3$-graded Lie algebra in the conventional sense.
    \end{example}
\end{tcolorbox}

Lemma~\ref{lem:minimal:magma:graded:minimal:graph:admissible} motivates a modification of the construction of the graphs associated with a given Lie algebra, as originally described by Algortithm~\ref{alg:creating:graph}. Specifically, let $\g$ be a Lie algebra and $(\mathcal{M},\delta)$ an abelian magma such that $\g$ is $\mathcal{M}$-magma-graded. We now define the construction of the graph $G(V,E)$ associated with $\g$ and $\mathcal{M}$ as follows: First, introduce a vertex $v_j\in V$ for each element $j\in\mathcal{M}$, and label it with the corresponding subspace $\g_j$. Next, add an edge between two vertices $v_j$ and $v_k$ if there exits a third element $\ell\in\mathcal{M}$ such that the corresponding spaces satisfy $[\g_j,\g_\ell]\subseteq \g_\ell$ and $[\g_j,\g_\ell]\neq \{0\}$. Finally, label the edge by the subspace $\g_\ell$. In other words, if $\{0\}\neq [\g_j,\g_\ell]\subseteq\g_k$, then a directed edge, labeled by $v_k$, is drawn originating from $v_j$, pointing towards $v_\ell$:
\begin{quote}
    \centering\graphlegendmod{Black}
\end{quote}
To be precise, the modified graph-construction Algorithm~\ref{alg:creating:graph} is formalized in Algorithm~\ref{alg:creating:graph:modified}. Moreover, we adopt the convention of coloring the vertices of the resulting graphs blue.
\begin{algorithm}[htpb]
        \DontPrintSemicolon
        \KwData{An abelian magma $(\mathcal{M},\delta)$ and a Lie algebra $\g$ such that $\g$ is $\mathcal{M}$-magma-graded with $\g=\bigoplus_{j\in\mathcal{M}}\g_j$ and $[\g_j,\g_k]\subseteq\g_{\delta(j,k)}$}
        \KwResult{A labeled directed graph $G(V,E)$ associated with the Lie algebra $\g$, constructed with respect to the magma $\mathcal{M}$.}
        \SetKwData{Left}{left}\SetKwData{This}{this}\SetKwData{Up}{up}
        \SetKwFunction{Union}{Union}\SetKwFunction{FindCompress}{FindCompress}
        \SetKwInOut{Input}{input}\SetKwInOut{Output}{output}
    
        $V$ $\leftarrow$ $\{\g_j\}_{j\in\mathcal{M}}$
        \tcc*[h]{Initialize the vertex set with the subspaces $\g_j$}\;
        $E$ $\leftarrow$ $\emptyset$ 
        \tcc*[h]{Initialize an empty edge set}\;
        \BlankLine
        \ForEach(\tcc*[h]{Add all relevant edges to edge set $E$}){$(j,k)\in \mathcal{M}\times\mathcal{M}$ with $j\leq k$}{
            \If{$[\g_j,\g_k]\subseteq\g_{\delta(j,k)}$ and $[\g_j,\g_k]\neq \{0\}$}{
                $E$ $\leftarrow$ $E\cup\{(\g_j,\g_k,\g_{\delta(j,k)})\}$\;
                $E$ $\leftarrow$ $E\cup\{(\g_k,\g_j,\g_{\delta(j,k)})\}$
                \tcc*[h]{Add directed edge to edge set $E$; each edge is an ordered triple: (start vertex, edge-label, end vertex)}\;
            }
        }
        \Return $G(V,E)$\tcc*[h]{Return labeled directed graph $G(V,E)$ associated with $\g$}\;
    \caption{Algorithm for generating a labeled directed graph associated with an $\mathcal{M}$-magma-graded Lie algebra $\g$}\label{alg:creating:graph:modified}
\end{algorithm}

It is now straightforward to observe the following structural correspondence:
\begin{proposition}\label{prop:mapping:minimal:graph:to:minimal:magma}
    Let $\g$ be a finite-dimensional minimal-graph-admissible Lie algebra, and let $\mathcal{B}=\{e_j\}_{j\in\mathcal{N}}$ be a basis satisfying the relations \eqref{eqn:desired:basis}. Then, there exists an abelian magma $(\mathcal{M},\delta)$ such that $\g$ is $\mathcal{M}$-magma-graded with $\g=\bigoplus_{j\in\mathcal{M}}\g_j$, $[\g_j,\g_k]\subseteq\g_{\delta(j,k)}$, and $\g_j=\spn\{e_j\}$. Moreover, there exists a bijection $\Phi$ that maps the graph $G(V,E)$, obtained by applying Algorithm~\ref{alg:creating:graph} to the basis $\mathcal{B}$, to the graph $G'(V',E')$, obtained by applying Algorithm~\ref{alg:creating:graph:modified} to $\g$ and $\mathcal{M}$, by identifying the labels $e_j$ with the corresponding subspaces $\g_j$.  
\end{proposition}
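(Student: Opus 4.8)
The plan is to verify the two assertions of the proposition separately: first the existence of the appropriate abelian magma, and second the existence of the bijection $\Phi$ between the two graphs. For the first assertion I would invoke Lemma~\ref{lem:minimal:magma:graded:minimal:graph:admissible} directly: since $\g$ is finite-dimensional and minimal-graph-admissible, that lemma guarantees $\g$ is a minimal-magma-graded Lie algebra, so there exists an abelian magma $(\mathcal{M},\delta)$ with $|\mathcal{M}|=\dim(\g)$ and a decomposition $\g=\bigoplus_{j\in\mathcal{M}}\g_j$ satisfying $[\g_j,\g_k]\subseteq\g_{\delta(j,k)}$. More concretely, one can simply take $\mathcal{M}=\mathcal{N}$, use the modified index function $\Tilde{\delta}$ constructed in the proof of Lemma~\ref{lem:minimal:magma:graded:minimal:graph:admissible} (which reassigns $\delta(j,k)=((j+k)\bmod n)+1$ whenever $\alpha_{jk}=0$, preserving symmetry), and set $\g_j:=\spn\{e_j\}$. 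Because $\mathcal{B}$ is a basis, the sum $\bigoplus_{j\in\mathcal{N}}\g_j$ is direct and equals $\g$, and the relations \eqref{eqn:desired:basis} give $[\g_j,\g_k]=\spn\{\alpha_{jk}e_{\delta(j,k)}\}\subseteq\g_{\Tilde{\delta}(j,k)}$, as desired.

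Next I would construct $\Phi$. Apply Algorithm~\ref{alg:creating:graph} to $\mathcal{B}$ to obtain $G(V,E)$ with $V=\mathcal{B}=\{e_j\}_{j\in\mathcal{N}}$ and $E=\{(e_j,e_k,e_{\delta(j,k)})\mid \alpha_{jk}\neq 0\}$; apply Algorithm~\ref{alg:creating:graph:modified} to $\g$ and $(\mathcal{N},\Tilde{\delta})$ to obtain $G'(V',E')$ with $V'=\{\g_j\}_{j\in\mathcal{N}}$ and $E'=\{(\g_j,\g_k,\g_{\Tilde{\delta}(j,k)})\mid \{0\}\neq[\g_j,\g_k]\subseteq\g_{\Tilde{\delta}(j,k)}\}$. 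Define $\Phi:V\to V'$ by $\Phi(e_j):=\g_j$; this is a bijection because $j\mapsto\g_j$ is a bijection between $\mathcal{N}$ and the index set of the decomposition. The key step is then to check that $\Phi$ induces a graph isomorphism in the sense of Definition~\ref{def:equivalent:graphs}, i.e.\ that $E'=\{(\varpi_\mathrm{s}(\Phi(e)),\varpi_\mathrm{l}(\Phi(e)),\varpi_\mathrm{e}(\Phi(e)))\mid e\in E\}$ where $\Phi$ acts on a triple componentwise. This reduces to the elementary observation that $\alpha_{jk}\neq 0$ if and only if $[\g_j,\g_k]=[\spn\{e_j\},\spn\{e_k\}]=\spn\{\alpha_{jk}e_{\delta(j,k)}\}\neq\{0\}$, and that in that case $\delta(j,k)=\Tilde{\delta}(j,k)$ so that the target vertex $e_{\delta(j,k)}$ is mapped to $\g_{\Tilde{\delta}(j,k)}$. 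Hence the edge $(e_j,e_k,e_{\delta(j,k)})\in E$ is sent by $\Phi$ precisely to the edge $(\g_j,\g_k,\g_{\Tilde{\delta}(j,k)})\in E'$, and conversely every edge of $E'$ arises this way. Therefore $G(V,E)\cong G'(V',E')$ via $\Phi$, with the identification $e_j\leftrightarrow\g_j$ as claimed.

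I do not anticipate a serious obstacle here; the proposition is essentially a bookkeeping statement that the two algorithms produce the same combinatorial object once one translates basis elements into the one-dimensional graded components. The only point requiring mild care is the treatment of pairs $(j,k)$ with $\alpha_{jk}=0$: under Algorithm~\ref{alg:creating:graph} no edge is drawn, and under Algorithm~\ref{alg:creating:graph:modified} the condition $[\g_j,\g_k]\neq\{0\}$ also fails, so again no edge is drawn — meaning the (arbitrary) reassignment value $\Tilde{\delta}(j,k)$ in that case is irrelevant to the edge set and does not break the correspondence. One should also note explicitly that the proposition as stated does not depend on the particular choice of reassignment in $\Tilde{\delta}$, precisely because such reassignments only affect the value of $\delta$ on pairs bracketing to zero, which contribute no edges. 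With these remarks the argument is complete; it could be compressed to a short paragraph in the final text, or even folded into the statement as "follows immediately from Lemma~\ref{lem:minimal:magma:graded:minimal:graph:admissible} and the definitions of Algorithms~\ref{alg:creating:graph} and~\ref{alg:creating:graph:modified}."
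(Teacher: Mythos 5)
Your proposal is correct and follows essentially the same route as the paper, which proves the proposition by invoking the construction of the modified $\Tilde{\delta}$ from the proof of Lemma~\ref{lem:minimal:magma:graded:minimal:graph:admissible} together with the preceding discussion of Algorithm~\ref{alg:creating:graph:modified}; you merely spell out the bookkeeping (the identification $e_j\leftrightarrow\g_j$ and the edge correspondence via $\alpha_{jk}\neq 0\Leftrightarrow[\g_j,\g_k]\neq\{0\}$) that the paper leaves implicit. Your remark that the arbitrary reassignment values of $\Tilde{\delta}$ on vanishing brackets never produce edges, and hence do not affect the isomorphism, is a correct and worthwhile observation consistent with the paper's construction.
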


This result can be shown analogously to Lemma~\ref{lem:minimal:magma:graded:minimal:graph:admissible}, by modifying the $\delta$-map dictated by the relations \eqref{eqn:desired:basis}, following the same construction outlined in the proof of Lemma~\ref{lem:minimal:magma:graded:minimal:graph:admissible}. The remainder of the claim follows from the preceding discussion. A similar argument yields the following result:

\begin{proposition}
    Let $\g$ be a finite-dimensional minimal-magma-graded Lie algebra, and let $(\mathcal{M},\delta)$ be an abelian magma such that $\g$ is $\mathcal{M}$-magma-graded with $\g=\bigoplus_{j\in\mathcal{M}}\g_j$ and $[\g_j,\g_k]\subseteq\g_{\delta(j,k)}$. Then, there exists a basis $\mathcal{B}=\{e_j\}_{j\in\mathcal{M}}$ of $\g$ satisfying the relations \eqref{eqn:desired:basis}. Moreover, there exists a bijection $\Psi$ that maps the graph $G'(V',E')$, obtained by applying Algorithm~\ref{alg:creating:graph:modified} to $\g$ and $\mathcal{M}$, to the graph $G(V,E)$, obtained by applying Algorithm~\ref{alg:creating:graph} to the basis $\mathcal{B}$, by identifying the labels $\g_j$ with the corresponding basis elements $e_j$.
\end{proposition}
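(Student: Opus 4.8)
The plan is to mirror, essentially verbatim, the strategy used for Proposition~\ref{prop:mapping:minimal:graph:to:minimal:magma}, but running the construction in the opposite direction. First I would invoke the ``if'' direction of Lemma~\ref{lem:minimal:magma:graded:minimal:graph:admissible}: since $\g$ is a finite-dimensional minimal-magma-graded Lie algebra, it is minimal-graph-admissible. More importantly, I would reuse the explicit construction from the second half of the proof of that lemma, which already extracts from the magma grading $\g = \bigoplus_{j \in \mathcal{M}} \g_j$ a basis $\mathcal{B} = \{e_j\}_{j \in \mathcal{M}}$, where $e_j$ is any nonzero element of the one-dimensional space $\g_j$, together with an antisymmetric matrix $\boldsymbol{\alpha}$ such that $[e_j, e_k] = \alpha_{jk} e_{\delta(j,k)}$. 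That proof also showed $[\g_j,\g_k]$ is either $\{0\}$ or all of $\g_{\delta(j,k)}$, so the relation $[e_j,e_k]=\alpha_{jk}e_{\delta(j,k)}$ is consistent with the original $\delta$ when the bracket is nonzero, and with $\delta(j,k):=0$ (i.e.\ the zero vector $e_0:=0$) when it vanishes. This is precisely the form of relations~\eqref{eqn:desired:basis}, so the basis $\mathcal{B}$ satisfies~\eqref{eqn:desired:basis}, establishing the first assertion.

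Next I would construct the bijection $\Psi$. Both graphs have vertex sets indexed by $\mathcal{M}$: Algorithm~\ref{alg:creating:graph:modified} applied to $(\g, \mathcal{M})$ produces $G'(V',E')$ with $V' = \{\g_j\}_{j\in\mathcal{M}}$, while Algorithm~\ref{alg:creating:graph} applied to $\mathcal{B}$ produces $G(V,E)$ with $V = \{e_j\}_{j\in\mathcal{M}}$. Define $\Psi$ on vertices by $\Psi(\g_j) := e_j$; this is a bijection $V' \to V$ since the $\g_j$ are distinct subspaces and the $e_j$ are linearly independent (hence distinct). I would then extend $\Psi$ to edges in the obvious componentwise way, $\Psi(\varpi_\mathrm{s}(e),\varpi_\mathrm{l}(e),\varpi_\mathrm{e}(e)) := (\Psi(\varpi_\mathrm{s}(e)),\Psi(\varpi_\mathrm{l}(e)),\Psi(\varpi_\mathrm{e}(e)))$, and verify this is a graph isomorphism in the sense of Definition~\ref{def:equivalent:graphs}. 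The heart of the verification is: $(\g_j, \g_k, \g_{\delta(j,k)}) \in E'$ if and only if $[\g_j,\g_k] \subseteq \g_{\delta(j,k)}$ and $[\g_j,\g_k] \neq \{0\}$ (line 4 of Algorithm~\ref{alg:creating:graph:modified}), which by the one-dimensionality argument recalled above is equivalent to $\alpha_{jk} \neq 0$ with $[e_j,e_k] = \alpha_{jk} e_{\delta(j,k)}$, which is exactly the condition (line 4 of Algorithm~\ref{alg:creating:graph}) for $(e_j, e_k, e_{\delta(j,k)}) \in E$. Since both algorithms add the symmetric pair of edges $(x,y,z)$ and $(y,x,z)$ together (both ``$j \le k$'' loops symmetrize), the edge sets correspond under $\Psi$, so $E = \Psi(E')$ and the graphs are isomorphic; identifying $\g_j$ with $e_j$ is precisely what $\Psi$ does.

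I do not expect a serious obstacle here — the statement is essentially the converse bookkeeping of Proposition~\ref{prop:mapping:minimal:graph:to:minimal:magma}, and all the substantive content (the one-dimensionality of each $\g_j$, the dichotomy $[\g_j,\g_k] \in \{\{0\}, \g_{\delta(j,k)}\}$, the antisymmetry of $\boldsymbol{\alpha}$) is already proved inside Lemma~\ref{lem:minimal:magma:graded:minimal:graph:admissible}. The one point requiring mild care is the handling of the zero space: Algorithm~\ref{alg:creating:graph:modified} never creates a vertex $\g_0$, and the original $\delta$ of the magma takes values in $\mathcal{M}$, so when $[\g_j,\g_k] = \{0\}$ the triple $(\g_j,\g_k,\g_{\delta(j,k)})$ is simply \emph{not} added as an edge (the ``and $[\g_j,\g_k] \neq \{0\}$'' clause fails), which matches the behaviour of Algorithm~\ref{alg:creating:graph} where no edge is added when $\alpha_{jk} = 0$; so no spurious vertex or edge appears on either side, and the ``extended $\delta$ to include $0$'' conventions of Definition~\ref{def:graph:admissible} never actually surface in the edge sets. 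I would state this explicitly to preempt any confusion, then conclude the proof by citing the preceding discussion for the remaining routine points, as the paper itself suggests.
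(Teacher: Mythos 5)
Your proposal is correct and follows essentially the same route the paper intends: it fills in the "similar argument" the paper alludes to by reusing the second half of the proof of Lemma~\ref{lem:minimal:magma:graded:minimal:graph:admissible} (one-dimensionality of each $\g_j$, the dichotomy $[\g_j,\g_k]\in\{\{0\},\g_{\delta(j,k)}\}$, the antisymmetric $\boldsymbol{\alpha}$) and then checking the edge-by-edge correspondence under $\Psi$. The only nit is your remark that both algorithms loop over "$j\le k$" — Algorithm~\ref{alg:creating:graph} uses $j<k$ while Algorithm~\ref{alg:creating:graph:modified} uses $j\le k$ — but this is harmless here since each $\g_j$ is one-dimensional, so $[\g_j,\g_j]=\{0\}$ and the diagonal never contributes an edge.
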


These two propositions demonstrate that the results established for minimal-graph-admissible Lie algebras can be extended to finite-dimensional minimal-magma-graded Lie algebras by simply replacing the associated graphs constructed via Algorithm~\ref{alg:creating:graph} with those obtained using Algorithm~\ref{alg:creating:graph:modified}.

An important observation is that Algorithm~\ref{alg:creating:graph:modified} is not restricted to abelian magmas $\mathcal{M}$ and $\mathcal{M}$-magma-graded Lie algebras $\g$ satisfying $|\mathcal{M}|=\dim(\g)$. It applies to any magma for which a given Lie algebra $\g$ admits an $\mathcal{M}$-magma-gradation. For instance, every Lie algebra is trivially $\{0\}$-magma-graded, where $\{0\}$ denotes the trivial magma. In this case, the associated graph consists of a single vertex labeled by the Lie algebra $\g$ itself, and if the Lie algebra is non-abelian, a single directed edge labeled by $\g$ that points from the vertex to itself. Here, a modification of the original Algorithm~\ref{alg:creating:graph}, which has not been discussed in detail, is essential. In the original algorithm, only pairs of basis elements $(e_j,e_k)$ with $j<k$ were considered, due to the antisymmetry of the Lie bracket, since $[e_j,e_k]=-[e_k,e_j]$ and $[e_j,e_j]=0$. However, for subspaces, the bracket is symmetric by definition, i.e., $[\g_j,\g_k]=[\g_k,\g_j]$, and the bracket of two identical subspaces does not generally vanish \cite{Knapp:1996}. Therefore, in the step of the algorithm, where the edges are added to the graph (cf. lines 3-6), the modified Algorithm~\ref{alg:creating:graph:modified} must consider pairs of subspaces $(\g_j,\g_k)$ with $j\leq k$, not $n<k$.

While this construction is formally valid, it provides no meaningful visual insight into the internal structure of $\g$, except for distinguishing between abelian and non-abelian Lie algebras. It is evident that the increasing the size of $\mathcal{M}$ generally enhances the level of detailed captured by the associated graph. This motivates the following definition:

\begin{definition}[granularity]
    Let $\g$ be a Lie algebra and $(\mathcal{M},\delta)$ an abelian magma such that $\g$ is $\mathcal{M}$-magma-graded. We define the \emph{granularity} of the $\mathcal{M}$-magma gradation of $\g$ as the cardinality of $\mathcal{M}$, and denote it by $\operatorname{gra}(\g,\mathcal{M})$, i.e., $\operatorname{gra}(\g,\mathcal{M}):=|\mathcal{M}|$. Furthermore, we define largest possible granularity of an $\mathcal{M}$-magma gradation of $\g$ as the \emph{finest granularity} of $\g$, and denote it with the symbol $\operatorname{fg}(\g)$. For a finite-dimensional Lie algebra $\g$, this is given by:
    \begin{align}
        \operatorname{fg}(\g):=\max\left\{\operatorname{gra}(\g,\mathcal{M})\,\middle\mid\,\g\text{ is $\mathcal{M}$-magma-graded}\right\}.
    \end{align}
\end{definition}
It is noteworthy that the concept of granularity is closely related to the notion of \emph{fine gradings}, which are defined via \emph{refinements} of a given grading \cite{Elduque:2013}. Specifically, let $\mathcal{M}$ be a $\mathcal{M}$-magma-gradation of the Lie algebra $\g$ with $\g=\bigoplus_{j\in\mathcal{M}}\g_j$. Then, an $\mathcal{M}'$-magma-gradation of $\g$ with $\g=\bigoplus_{j\in\mathcal{M}'}\g_j'$ is called a \emph{proper refinement} if, for every $k\in\mathcal{M}'$, there exists a $j\in\mathcal{M}$ such that $\g_k'\subseteq\g_j$, and the inclusion is proper for at least one $k\in\mathcal{M}'$. An $\mathcal{M}$-magma-grading is called \emph{fine} if no proper refinements exists \cite{Elduque:2013}. It is easy to convince oneself that an $\mathcal{M}$-magma-gradation of a finite-dimensional Lie algebra $\g$ satisfying $\operatorname{fg}(\g)=\operatorname{gra}(\g,\mathcal{M})$ is a fine grading in the sense defined above. This establishes a concrete link between the concept of \emph{minimal-magma-grading}, \emph{finest granularity}, and \emph{fine grading}.

\begin{lemma}\label{lem:granularities:first:obvious:observation}
    Let $\g$ be a finite-dimensional Lie algebra. Then: (a) $1\leq \operatorname{fg}(\g)\leq \dim(\g)$; (b) $\operatorname{fg}(\g)=\dim(\g)$ if and only if $\g$ is minimal-graph-admissible.
\end{lemma}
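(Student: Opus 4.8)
\textbf{Proof plan for Lemma~\ref{lem:granularities:first:obvious:observation}.}

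The plan is to establish both claims by exploiting the bounds on the number of summands in any direct vector-space decomposition of $\g$, together with the equivalence already established in Lemma~\ref{lem:minimal:magma:graded:minimal:graph:admissible} between minimal-magma-graded and minimal-graph-admissible Lie algebras.

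For part (a), the lower bound $\operatorname{fg}(\g)\geq 1$ is immediate: every Lie algebra $\g$ is $\{0\}$-magma-graded by concentrating the entire algebra in the single component $\g_0:=\g$, as observed in the discussion preceding the lemma; hence the set over which the maximum in the definition of $\operatorname{fg}$ is taken is non-empty and contains a magma of cardinality one. For the upper bound, I would argue that for any abelian magma $(\mathcal{M},\delta)$ with respect to which $\g$ admits an $\mathcal{M}$-magma gradation $\g=\bigoplus_{j\in\mathcal{M}}\g_j$, each component $\g_j$ is a non-zero subspace (the decomposition is required to be into non-trivial subspaces, so each $\g_j$ contains a non-zero element, hence $\dim(\g_j)\geq 1$). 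Since the sum is direct, $\dim(\g)=\sum_{j\in\mathcal{M}}\dim(\g_j)\geq |\mathcal{M}|$, so $\operatorname{gra}(\g,\mathcal{M})=|\mathcal{M}|\leq\dim(\g)$. Because this bound holds for every admissible $\mathcal{M}$, it holds for the maximum, giving $\operatorname{fg}(\g)\leq\dim(\g)$. (One should note that $\dim(\g)<\infty$ guarantees the maximum is actually attained and finite, so $\operatorname{fg}(\g)$ is well-defined.)

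For part (b), I would prove both implications. If $\g$ is minimal-graph-admissible, then by Lemma~\ref{lem:minimal:magma:graded:minimal:graph:admissible} it is a minimal-magma-graded Lie algebra, so there exists an abelian magma $(\mathcal{M},\delta)$ with $|\mathcal{M}|=\dim(\g)$ admitting an $\mathcal{M}$-magma gradation of $\g$; hence $\operatorname{fg}(\g)\geq\operatorname{gra}(\g,\mathcal{M})=\dim(\g)$, and combined with part (a) this forces $\operatorname{fg}(\g)=\dim(\g)$. Conversely, if $\operatorname{fg}(\g)=\dim(\g)$, then since the finest granularity is attained (finite-dimensional case), there exists an abelian magma $(\mathcal{M},\delta)$ with $\operatorname{gra}(\g,\mathcal{M})=|\mathcal{M}|=\dim(\g)$ such that $\g$ is $\mathcal{M}$-magma-graded. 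This is precisely the definition of $\g$ being minimal-magma-graded, so by Lemma~\ref{lem:minimal:magma:graded:minimal:graph:admissible} again, $\g$ is minimal-graph-admissible.

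I do not anticipate any serious obstacle here; the only point requiring a small amount of care is the justification that each graded component $\g_j$ is non-zero (so that the counting argument $\dim(\g)\geq|\mathcal{M}|$ goes through) and that the maximum in the definition of $\operatorname{fg}(\g)$ is attained in the finite-dimensional setting — both follow directly from the definitions of magma-graded Lie algebra and of direct sum decomposition, and from finite-dimensionality. The substantive content is entirely carried by Lemma~\ref{lem:minimal:magma:graded:minimal:graph:admissible}, which is assumed.
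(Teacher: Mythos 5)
Your proposal is correct, and since the paper leaves this lemma to the reader as "straightforward to verify," your argument — counting dimensions of the (necessarily non-zero) graded components to get $|\mathcal{M}|\leq\dim(\g)$, using the trivial $\{0\}$-magma gradation for the lower bound, and invoking Lemma~\ref{lem:minimal:magma:graded:minimal:graph:admissible} in both directions for part (b) — is exactly the intended route. Your explicit remarks that the components must be non-trivial and that the maximum is attained in the finite-dimensional case are the only points of care, and you handle them correctly.
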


We omit the proof of this statement, as it is straightforward to verify and leave it to the interested reader.

This shows that non-graph-admissible Lie algebras, as well as graph-admissible Lie algebras that are only redundant-graph-admissible Lie algebras, always satisfy $\operatorname{fg}(\g)< \dim(\g)$. We will now demonstrate that for certain Lie algebras satisfying $\operatorname{fg}(\g)<\dim(\g)$, there exists an abelian magma $(\mathcal{M},\delta)$ such that $\g$ is $\mathcal{M}$-magma-graded and the granularity is sufficiently fine so that the corresponding graph obtained by Algorithm~\ref{alg:creating:graph:modified} can resolve all necessary structural details and is enabling us, for instance, to determine whether the Lie algebra is solvable, among other properties. For this, see Example~\ref{exa:graph:associated:to:non:graph:admissible:lie:algebra}.

\begin{tcolorbox}[breakable, colback=Cerulean!3!white,colframe=Cerulean!85!black,title=\textbf{Example}: Graphs associated with non-graph-admissible Lie algebras]
    \begin{example}\label{exa:graph:associated:to:non:graph:admissible:lie:algebra}
        To establish the existence of non-graph-admissible Lie algebras and graph-admissible Lie algebras that are only redundant-graph-admissible Lie algebras, i.e., those Lie algebras with $\operatorname{fg}(\g)<\dim(\g)$, we previously considered the family $L_\alpha^3$ of real three-dimensional Lie algebras \cite{DeGraaf:2004} (cf. Theorem~\ref{thm:existncae:of:non:minimal:graph:admissible:} and Lemma~\ref{lem:L:alpha:three:graph:admissible:if:and:only:if}). Here, $\alpha$ is a real parameter, and the Lie algebra $L_\alpha^3$ is spanned by the three elements $\{e_1,e_2,e_3\}$ that satisfy the Lie brackets:
        \begin{align*}
            [e_1,e_2]&=0,\;&\;[e_1,e_3]&=-e_2,\;&\;[e_2,e_3]&=-\alpha e_1-e_2.
        \end{align*}
        In section~\ref{sec:graph:admissible} we have shown that for $\alpha\leq-1/4$ the Lie algebras $L_\alpha^3$ are not minimal-graph-admissible, and for certain values of $\alpha$ even non-graph-admissible. For this example, we want to focus on the case $\alpha\leq-1/4$. It follows immediately that $\operatorname{fg}(L_\alpha^3)<3$. We can now choose the subspaces $\g_1:=\spn\{e_1,e_2\}$ and $\g_2:=\spn\{e_3\}$. It is straightforward to verify that
        \begin{align*}
            \{0\}=[\g_1,\g_1]&\subseteq\g_2,\;&\;[\g_1,\g_2]&=\g_1,\;&\;[\g_2,\g_1]&=\g_1,\;&\;\{0\}\neq[\g_2,\g_2]&\subseteq\g_1.
        \end{align*}
        Thus, $L_\alpha^3$ admits an $\mathcal{M}$-magma-gradation, where $\mathcal{M}=\{1,2\}$ and the binary operation $\delta:\mathcal{M}\times\mathcal{M}\to\mathcal{M}$ is given by $\delta(1,2)=\delta(2,1)=\delta(2,2)=1$ and $\delta(1,1)=1$. In Appendix~\ref{app:generalization}, we show furthermore that every $\mathcal{M}$-magma-gradation of $L_\alpha^3$ with $\operatorname{gra}(L_\alpha^3,\mathcal{M})=\operatorname{fg}(L_\alpha^3)$ must be equivalent to this grading; see Propositions~\ref{prop:possible:grapha:L:alpha:3:leq:minus:quarter:with:exceotions} and~\ref{prop:possible:grapha:L:minus_quarter:3}. The labeled directed graph $G(V,E)$ constructed by Algorithm~\ref{alg:creating:graph:modified} is consequently:
        \begin{figure}[H]
            \centering
            \includegraphics[width=0.3\linewidth]{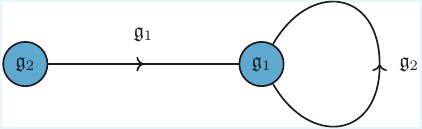}
            \label{fig:L:alpha:3:for:non:graph:admissible}
        \end{figure}
        This graph can now be analyzed using the central results of this work, as if it were obtained via Algorithm~\ref{alg:creating:graph}. We find that:
        \begin{enumerate}[label=(\roman*)]
            \item \textbf{Theorem~\ref{thm:non:solvability:condition:strong}} would indicate that the Lie algebra $L_\alpha^3$ is solvable, since the graph $G(V,E)$ contains no self-contained subgraph that is induced by a closed directed walk.
            \item \textbf{Lemma~\ref{lem:derived:series:graph:alg:valid}} together with an appropriately modified Algorithm~\ref{alg:generating:generating:the:graph:derived:altered}, would imply that the derived algebras of $L_\alpha^3$ is: $\mathcal{D}^0L_\alpha^3=L_\alpha^3$, $\mathcal{D}^1L_\alpha^3=\g_1$, and $\mathcal{D}^\ell L_\alpha^3=\{0\}$ for all $\ell\in\N_{\geq2}$. The modification of Algorithm~\ref{alg:generating:generating:the:graph:derived:altered} parallels the adaption from Algorithm~\ref {alg:creating:graph} to Algorithm~\ref{alg:creating:graph:modified}, replacing vertex and edge labels from individual elements to subspaces. The modified Algorithm~\ref{alg:generating:generating:the:graph:derived:altered:modified} (cf. Appendix~\ref{app:remaining:algorithms}) generates the following derived graphs:
            \begin{minipage}{\linewidth}
                \begin{figure}[H]
                \centering
                \includegraphics[width=0.85\linewidth]{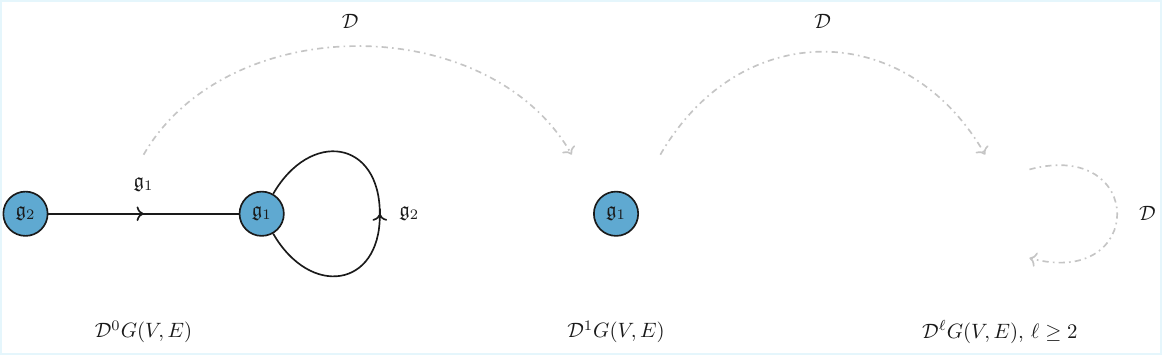}
                \label{fig:L:alpha:3:for:non:graph:admissible:derived}
            \end{figure}
            \end{minipage}
            \item \textbf{Theorem~\ref{thm:nilpotebt:if:graph:series.Terminates}} would imply that $L_\alpha^3$ is not nilpotent, since $G(V,E)$ contains a closed directed walk.
            \item \textbf{Lemma~\ref{lem:lower:central:series:of:graphs}} would imply that the lower central series of $L_\alpha^3$ is given by $\mathcal{C}^0L_\alpha^3=L_\alpha^3$ and $\mathcal{C}^\ell L_\alpha^3=\g_1$ for all $\ell\in\N_{\geq1}$. Here, as before, Algorithm~\ref{alg:generating:lower:central:series} must be modified to account for the current setting. The modified Algorithm~\ref{alg:generating:lower:central:series:modified} is presented in Appendix~\ref{app:remaining:algorithms} and generates the following sequence of graphs:
            \begin{minipage}{\linewidth}
                \begin{figure}[H]
            \centering
                \includegraphics[width=0.65\linewidth]{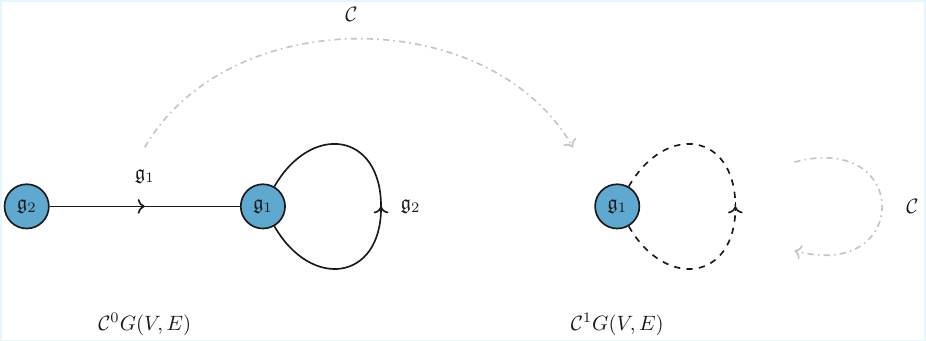}
                \label{fig:L:alpha:3:for:non:graph:admissibleLower:central:series}
                \end{figure}
            \end{minipage}
            \item \textbf{Lemma~\ref{lem:ideal:span}} would imply that the subspaces $\g_1$ and $\g_1\oplus\g_2$ are ideals of $L_\alpha^3$.
            \item \textbf{Proposition~\ref{prop:weak:simplicity:cond:graph:based}} would imply that $L_\alpha^3$ is not simple, as $G(V,E)$ contains no closed directed walk that induces a subgraph $G(\Tilde{V},\Tilde{E})$ with $V=\Tilde{V}$.
            \item \textbf{Conjecture~\ref{con:semisimple}}, if correct, would imply that $L_\alpha^3$ is not semisimple, since the vertex labeled with the subspace $\g_2$ is never part of a closed directed walk that induces a self-contained subgraph.
        \end{enumerate}
        It is straightforward to verify that all of these claims hold. However, we do not claim that these hold in general, and one may conjecture that the results stated above will require minor modifications to ensure applicability in the current generalized setting. 
    \end{example}
\end{tcolorbox}

\noindent This example raises the interesting question:
\begin{quote}
    \textbf{Q}: \emph{Can the concept of magma-gradations be used to generalize the approach of associating labeled directed graphs with Lie algebras, in such a way that the structural properties of the Lie algebras are represented faithfully, without being restricted to graph-admissible or even minimal-graph-admissible Lie algebras, but instead encompassing arbitrary Lie algebras?}
\end{quote}
A clear prerequisite for this generalization is that the magma-gradation possesses sufficiently fine granularity. Based on this observation, one may conjecture the following:

\begin{conjecture}\label{con:extension:conjecture:weak}
    Let $\g$ be a finite-dimensional Lie algebra. Then, there exists an abelian magma $(\mathcal{M},\delta)$ such that $\g$ is $\mathcal{M}$-magma-graded and the graph $G_\g(V,E)$ associated with the pair $(\g,(\mathcal{M},\delta))$, constructed via Algorithm~\ref{alg:creating:graph:modified}, faithfully represents the inner structure of $\g$, in the sense that the following assertions hold:
    \begin{enumerate}[label = (\roman*)]
        \item \textbf{Solvability.}  The Lie algebra $\g$ is non-solvable if and only if $G_\g(V,E)$ contains a self-contained subgraph $G_W\equiv G_W(\Tilde{V},\Tilde{E})$ that is induced by a closed directed walk $W$.
        \item \textbf{Derived Series.} The sequence of graphs obtained via Algorithm~\ref{alg:generating:generating:the:graph:derived:altered:modified} can be associated with the Lie algebras of the derived series of $\g$, in the sense that for every $\ell\in\N_{\geq0}$, the direct sum of the subspaces labeling the vertices of the graphs $\mathcal{D}^\ell G_\g(V,E)$ coincide with derived algebras $\mathcal{D}^\ell\g$.
        \item \textbf{Nilpotency.} The Lie algebra $\g$ is nilpotent if and only if $G(V,E)$ contains no closed directed walk.
        \item \textbf{Lower central series.} The sequence of graphs obtained via Algorithm~\ref{alg:generating:lower:central:series:modified} can be associated with the lower central series of $\g$, in the sense that for every $\ell\in\N_{\geq0}$ the direct sum of the subspaces labeling the vertices of the graphs $\mathcal{C}^\ell G_\g(V,E)$ coincide with Lie algebras $\mathcal{D}^\ell\g$ of the lower central series.
        \item \textbf{Ideals.}  If  $W\subseteq V$ is a subset that satisfies the ideal-graph-property, then the direct sum of the subspaces labeling the vertices in $W$ span an ideal of $\g$.
        \item \textbf{Simplicity.} If $\g$ is simple, then there exists a closed directed walk $W$ that induces a subgraph $G(\Tilde{V},\Tilde{E})$ with $V=\Tilde{V}$.
        \item \textbf{Semisimplicity.} If $\g$ is semisimple, then every vertex is part of a closed directed walk that induces a self-contained subgraph.
    \end{enumerate}
\end{conjecture}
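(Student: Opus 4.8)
\textbf{Proof proposal for Conjecture~\ref{con:extension:conjecture:weak}.}

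The plan is to reduce the generalized, magma-graded setting to the already-established framework for minimal-graph-admissible Lie algebras, using the structural correspondences proved above. The first step will be to show that for \emph{any} finite-dimensional Lie algebra $\g$ there exists an abelian magma $(\mathcal{M},\delta)$ with sufficiently fine granularity, i.e.\ one realizing the finest granularity $\operatorname{fg}(\g)$ in the sense of Lemma~\ref{lem:granularities:first:obvious:observation}, and moreover one for which the homogeneous components $\g_j$ are ``bracket-pure'' in the sense that $[\g_j,\g_k]$ is either $\{0\}$ or all of $\g_{\delta(j,k)}$ whenever $\g_{\delta(j,k)}$ is one-dimensional. When $\g$ is minimal-graph-admissible this is exactly Lemma~\ref{lem:minimal:magma:graded:minimal:graph:admissible} together with Proposition~\ref{prop:mapping:minimal:graph:to:minimal:magma}, and the assertions (i)--(vii) then follow verbatim from Theorem~\ref{thm:non:solvability:condition:strong}, Lemma~\ref{lem:derived:series:graph:alg:valid}, Theorem~\ref{thm:nilpotency:criteria:strong}, Lemma~\ref{lem:lower:central:series:of:graphs}, Lemma~\ref{lem:ideal:span}, Proposition~\ref{prop:weak:simplicity:cond:graph:based}, and Conjecture~\ref{con:semisimple} respectively, after transporting statements along the bijection $\Phi$. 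So the content of the conjecture is entirely in the case $\operatorname{fg}(\g)<\dim(\g)$, where homogeneous components of dimension $\geq 2$ are unavoidable.

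The second step is to re-examine each of the seven proofs cited above and identify which ones survive the replacement ``basis element $e_j$'' $\rightsquigarrow$ ``subspace $\g_j$'' without modification, and which ones genuinely use one-dimensionality. For the derived-series and lower-central-series claims (ii) and (iv), the key combinatorial identities are $\mathcal{D}^1\g=\spn\{x_j\mid j\in\mathcal{M}_{\mathrm D}^{(1)}\}$ and $\mathcal{C}^1\g=\spn\{x_j\mid j\in\mathcal{M}_{\mathrm C}^{(1)}\}$ from Lemmas~\ref{lem:if:g:graph:admissible:so:derived:algebras} and~\ref{lem:if:g:graph:admissible:so:central:series:algebras}; these computations only use bilinearity of the bracket and the relation $[\g_j,\g_k]\subseteq\g_{\delta(j,k)}$, so they go through with $x_j$ replaced by an arbitrary element of $\g_j$ and summing over a basis of $\g_j$. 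The modified pruning Algorithms~\ref{alg:generating:generating:the:graph:derived:altered:modified} and~\ref{alg:generating:lower:central:series:modified} then inherit correctness, and (ii), (iv) follow. Claim (v) on ideals is immediate: the argument of Lemma~\ref{lem:ideal:span} is purely a closure-under-bracket statement and never invokes linear independence of individual vertices. Claims (i) and (iii) on solvability and nilpotency follow once (ii) and (iv) are established, by exactly the termination arguments of Theorems~\ref{thm:solvable:termination:derived:series:graph}, \ref{thm:nilpotebt:if:graph:series.Terminates}, \ref{thm:non:solvability:condition:strong}, \ref{thm:nilpotency:criteria:strong}, with ``$\mathcal{C}^\ell V=\emptyset \iff \mathcal{C}^\ell\g=\{0\}$'' replaced by ``the direct sum of the subspaces labeling $\mathcal{C}^\ell V$ equals $\mathcal{C}^\ell\g$''. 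The delicate point in (i) is the ``if'' direction, which in Theorem~\ref{thm:non:solvability:condition:strong} relies on Lemma~\ref{lem:self:contained:subgraphs:induced:by:closed:walks} and the edge-rules of Proposition~\ref{prop:conditions:for:edges}; one must check that Proposition~\ref{prop:conditions:for:edges}(a)--(c) still hold for magma-graded graphs, which they do because the bracket of a subspace with itself can now be nonzero but the diagonal-edge issue is handled by the $j\leq k$ convention in Algorithm~\ref{alg:creating:graph:modified} — this requires a short but careful re-derivation.

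The third step handles the simplicity and semisimplicity claims (vi) and (vii), which are only \emph{necessary} conditions and are the ones most entangled with one-dimensionality: Corollary~\ref{cor:no:proper:igp:closed:walk:induces:V}, Lemma~\ref{lem:nec:condition:first:simplicity:minimal}, and Proposition~\ref{prop:weak:simplicity:cond:graph:based} all use that a subset of vertices is a basis of the subspace it spans. Here the plan is to work with the decomposition $\g=\bigoplus_{j\in\mathcal M}\g_j$ directly: a proper non-empty subset $W\subsetneq V$ satisfying the ideal-graph-property yields the proper non-zero ideal $\bigoplus_{j\in W}\g_j$, which contradicts simplicity, so the contrapositive of Lemma~\ref{lem:no:ideal:subset:walk:connecting:all:vertices} (whose proof is again a pure graph-connectivity argument) gives (vi); and (vii), being conditional on the unproven Conjecture~\ref{con:semisimple}, is simply asserted to hold under the same hypothesis transported to the magma setting. \textbf{The main obstacle} I anticipate is the very first step: proving that a magma gradation of finest granularity always exists with the additional ``bracket-pure'' property, and — more seriously — that \emph{some} such gradation is fine enough that a closed directed walk in the associated graph genuinely detects non-solvability rather than being an artifact of lumping a solvable ideal and a semisimple part into one oversized homogeneous component. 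Example~\ref{exa:graph:associated:to:non:graph:admissible:lie:algebra} shows this works for $L_\alpha^3$, but in general one needs a structural input — plausibly that the finest grading refines the Levi--Mal'tsev decomposition and the lower central series simultaneously — and establishing that compatibility (or finding a counterexample forcing extra hypotheses, as the authors themselves hedge at the end of the example) is where the real work lies; I would expect the final theorem to require either this compatibility lemma or an explicit weakening of the claimed equivalences to one-directional implications.
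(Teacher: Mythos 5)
There is a fundamental mismatch here: the statement you are attacking is a \emph{conjecture} in the paper, and the paper offers no general proof of it. What the paper does provide is partial evidence: Proposition~\ref{prop:existence:dim:four:lower:magma:suitable:gradation} verifies all seven assertions for every real Lie algebra of dimension at most four by exhaustive case analysis over the Mubarakzyanov classification (constructing an explicit magma gradation for each algebra and checking (i)--(vii) directly), Theorem~\ref{thm:final:generalization:conjectore:for:example} does the same for the family $L_\alpha^3$, and Example~\ref{exa:graph:associated:semisimple:lie:algebra:sl3C} illustrates the semisimple case via root-system gradings. Your proposal takes a genuinely different, reduction-style route, but it is a plan rather than a proof, and by your own admission the load-bearing step is missing: you never establish that a ``bracket-pure'' magma gradation of sufficient granularity exists for an arbitrary finite-dimensional Lie algebra, nor that such a gradation is fine enough that closed directed walks in the associated graph detect non-solvability rather than artifacts of lumping a solvable ideal into an oversized homogeneous component. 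That existence/compatibility statement is precisely the content of the conjecture, so the argument is circular at its core.

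There are further concrete gaps even in the reduction steps. Your ``bracket-pure'' condition only constrains brackets landing in one-dimensional components, whereas the paper explicitly warns (immediately after the conjecture) that gradations with $\{0\}\neq[\g_j,\g_k]\subsetneq\g_{\delta(j,k)}$ must be handled with care; your transported versions of Lemmas~\ref{lem:if:g:graph:admissible:so:derived:algebras} and~\ref{lem:if:g:graph:admissible:so:central:series:algebras} silently assume that an edge $(\g_j,\g_k,\g_{\delta(j,k)})$ contributes the \emph{whole} target component to the derived or lower central series, which fails exactly in the strict-inclusion case, so (ii) and (iv) do not ``go through verbatim.'' Claim (vii) is deferred to Conjecture~\ref{con:semisimple}, itself unproven, so asserting it ``under the same hypothesis transported to the magma setting'' proves nothing. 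Finally, Proposition~\ref{prop:conditions:for:edges}(a) genuinely fails in the graded setting (a component can have a nonzero bracket with itself, producing edges with $\varpi_\mathrm{s}(e)=\varpi_\mathrm{l}(e)$), so the edge rules underlying Lemma~\ref{lem:self:contained:subgraphs:induced:by:closed:walks} and the ``if'' direction of Theorem~\ref{thm:non:solvability:condition:strong} would need a full re-derivation, not the ``short but careful'' check you gesture at. If you want a result at the level of rigor of the paper, the realistic options are the ones the paper itself takes: either restrict to a class of gradations where the transported lemmas can be re-proved (e.g.\ finest-granularity gradations, as in Conjecture~\ref{con:final:gradation:generalization}, verified case by case), or weaken the equivalences in (i)--(iv) to the implications that survive strict inclusions.
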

There are important subtleties differentiating the modified approach from the original one. For instance, it is possible for an $\mathcal{M}$-magma-gradation to satisfy $\{0\}\neq[\g_j,\g_k]\subsetneq \g_{\delta(j,k)}$ for at least one pair $(j,k)\in\mathcal{M}\times\mathcal{M}$. Such cases must be carefully accounted for when interpreting the structure of the associated graph. Nevertheless, we can demonstrate that Conjecture~\ref{con:extension:conjecture:weak} holds in certain settings, particularly in the low-dimensional case. This is shown in Proposition~\ref{prop:existence:dim:four:lower:magma:suitable:gradation} (cf. Appendix~\ref{app:generalization}), where we consider every real Lie algebra of dimension four or less and construct suitable magma-gradations for each.

Let us now turn our attention to complex semisimple Lie algebras. Any such Lie algebra $\g$ can be expressed as the direct sum of simple ideals: $\g=\bigoplus\g_j$, where each $\g_j$ is a simple ideal \cite{Knapp:1996}. Moreover, every simple Lie algebra admits a fine group grading induced by a corresponding root system \cite{Elduque:2015:simple}. It is straightforward to verify that such gradings yield associated graphs that faithfully represent the structure of the Lie algebra in the sense of Conjecture~\ref{con:extension:conjecture:weak}. Let us consider the following Example~\ref{exa:graph:associated:semisimple:lie:algebra:sl3C}:

\begin{tcolorbox}[breakable, colback=Cerulean!3!white,colframe=Cerulean!85!black,title=\textbf{Example}: Lie algebras with root-system-grading]
    In this example, we explore the connection of a graph obtained via Algorithm~\ref{alg:creating:graph:modified} and the root space decomposition of a simple Lie algebra, when the gradation is provided by its corresponding root system.
    \begin{example}\label{exa:graph:associated:semisimple:lie:algebra:sl3C}
        Let us consider the complex special linear Lie algebra $\sl{3}{\C}$. This algebra admits the basis $\{e_{11}-e_{22},e_{22}-e_{33},e_{12},e_{21},e_{13},e_{31},e_{23},e_{32}\}$, where the non-trivial brackets are determined by $[e_{jk},e_{\ell m}]=\delta_{k\ell}e_{jm}-\delta_{jm} e_{\ell k}$ \cite{Hall:Lie:groups:16}. In particular, for any diagonal element $\lambda_1e_{11}+\lambda_2e_{22}+\lambda_3e_{33}\in\sl{3}{\C}$, one has $[\lambda_1e_{11}+\lambda_2e_{22}+\lambda_3e_{33},e_{jk}]=(\lambda_j-\lambda_k)e_{jk}$ for $j,k\in\{1,2,3\}$ with $j\neq k$. This structure yields the root system of $\sl{3}{\C}$ taken over the euclidean plane, denoted $A_2:=\{\pm\alpha,\pm\beta,\pm(\alpha+\beta)\}$, where $\alpha:=(1,0)^{\mathrm{Tp}}$, and $\beta:=-(\sqrt{3},1)^{\mathrm{Tp}}/2$, see \cite{Hall:Lie:groups:16}. We define a $G$-gradation of $\sl{3}{\C}$, based on the root space $A_2$, via the following subspaces:
        \begin{align*}
            \mathfrak{g}_0&:=\spn\{e_{11}-e_{22},e_{22}-e_{33}\}\,&\;\g_{\alpha}&:=\spn\{e_{12}\},\;&\;\g_{-\alpha}&:=\spn\{e_{21}\},\;&\;\g_\beta&:=\spn\{e_{31}\},\\
            \g_{-\beta}&:=\spn\{e_{13}\},\;&\;\g_{\alpha+\beta}&:=\spn\{e_{32}\},\;&\;\g_{-(\alpha+\beta)}&:=\spn\{e_{32}\}.
        \end{align*}
        These subsets clearly satisfy: $[\g_{\theta},\g_{\varphi}]\subseteq \g_{\delta(\theta,\varphi)}\supseteq[\g_\varphi,\g_\theta]$ for all $\theta,\varphi\in A_2\cup\{0\}$, where $\delta$ is the usual binary group operation on roots. For more details, we refer to Table~\ref{tab:sl3:first:composition:rule} (cf. Appendix~\ref{app:table:for:group:grading}). We can now plot the associated graph $G(V,E)$ obtained via Algorithm~\ref{alg:creating:graph:modified} and the root system $A_2$ in a side-by-side comparison (see Figure~\ref{fig:exa:graph:associated:semisimple:lie:algebra:sl3C}).
        \begin{figure}[H]
            \centering
            \includegraphics[width=0.95\linewidth]{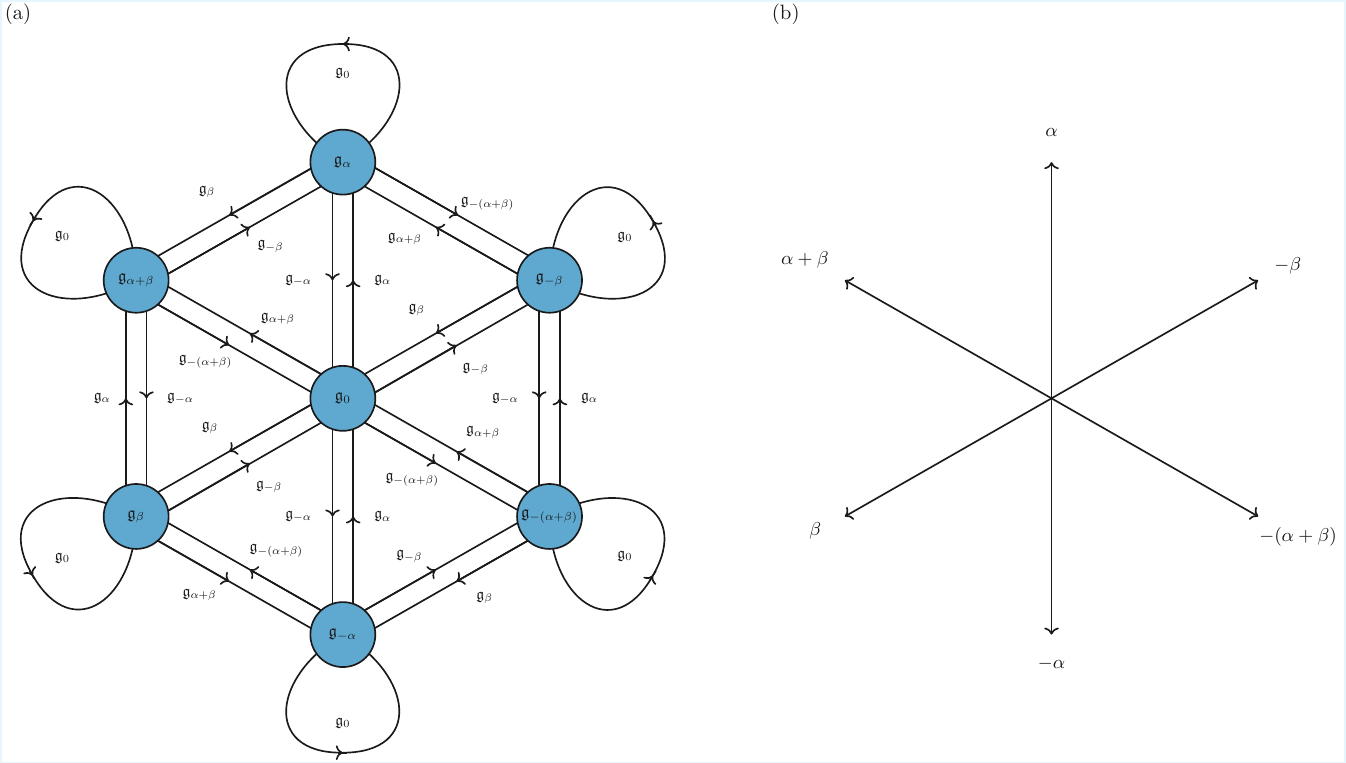}
            \caption{Depiction of the labeled directed graph associated with the simple Lie algebra $\sl{3}{\C}$ obtained in Example~\ref{exa:graph:associated:semisimple:lie:algebra:sl3C}, using the grading provided by its root system $A_2$, in panel (a), alongside the root system $A_2$ shown in panel (b).}
            \label{fig:exa:graph:associated:semisimple:lie:algebra:sl3C}
        \end{figure}
        It is immediately evident that a structural correspondence exists.
        However, to fully capture the relationship, we must distinguish between the cases, where  $[\g_{\theta},\g_{\varphi}]=\{0\}$ and those where $\{0\}\neq[\g_{\theta},\g_{\varphi}]\subseteq \g_0$. To reflect this distinction, we introduce a refinement of the zero element in the magma $(A_2\cup\{0\},\delta)$ by splitting it into two distinct elements, denoted $0$ and $\Bar{0}$ respectively. These are defined in a way that one has $\delta(\varphi,\theta)=0$ when $\{0\}\neq[\g_{\theta},\g_{\varphi}]\subseteq \g_0$ and $\delta(\varphi,\theta)=\Bar{0}$ when $[\g_{\theta},\g_{\varphi}]=\{0\}$. The corresponding composition Table~\ref{tab:sl3:first:composition:rule:modified} is provided in Appendix~\ref{app:table:for:group:grading}. This refinement allows us to interpret the graph $G(V,E)$ as a direct visualization of the root system's composition table modulo the $\Bar{0}$ elements.
    \end{example}
\end{tcolorbox}

Note that this generalization, similar to the original approach, lacks uniqueness. Consequently, considering a single associated graph does not suffice to distinguish between a simple and a semisimple but non-simple Lie algebra. To achieve this, one must investigate all possible associated graphs, since any semisimple Lie algebra always admits a graph that decomposes into multiple unconnected components that can be associated with its simple ideals, whereas a graph associated with a simple Lie algebra cannot be decomposed further.

Finally, we state a stronger version of Conjecture~\ref{con:extension:conjecture:weak}, which assumes that every graph generated by a magma-gradation with the finest granularity faithfully represents the structure of the Lie algebra. To be precise:
\begin{conjecture}\label{con:final:gradation:generalization}
    Let $\g$ be a complex finite-dimensional Lie algebra and $(\mathcal{M},\delta)$ an abelian magma such that $\g$ is $\mathcal{M}$-magma-graded, where the $\mathcal{M}$-gradation has the finest granularity. Then, the graph $G_\g(V,E)$ associated to the pair $(\g,(\mathcal{M},\delta))$, obtained via Algorithm~\ref{alg:creating:graph:modified}, faithfully represents the internal structure of $\g$, in the sense that the following assertions hold:
    \begin{enumerate}[label = (\roman*)]
        \item \textbf{Solvability.}  The Lie algebra $\g$ is non-solvable if and only if $G_\g(V,E)$ contains a self-contained subgraph $G_W\equiv G_W(\Tilde{V},\Tilde{E})$ that is induced by a closed directed walk $W$.
        \item \textbf{Derived Series.} The sequence of graphs obtained via Algorithm~\ref{alg:generating:generating:the:graph:derived:altered:modified} can be associated with the Lie algebras of the derived series of $\g$, in the sense that for every $\ell\in\N_{\geq0}$, the direct sum of the subspaces labeling the vertices of the graphs $\mathcal{D}^\ell G_\g(V,E)$ coincide with derived algebras $\mathcal{D}^\ell\g$.
        \item \textbf{Nilpotency.} The Lie algebra $\g$ is nilpotent if and only if $G(V,E)$ contains no closed directed walk.
        \item \textbf{Lower central series.} The sequence of graphs obtained via Algorithm~\ref{alg:generating:lower:central:series:modified} can be associated with the lower central series of $\g$, in the sense that for every $\ell\in\N_{\geq0}$ the direct sum of the subspaces labeling the vertices of the graphs $\mathcal{C}^\ell G_\g(V,E)$ coincide with Lie algebras $\mathcal{D}^\ell\g$ of the lower central series.
        \item \textbf{Ideals.}  If  $W\subseteq V$ is a subset that satisfies the ideal-graph-property, then the direct sum of the subspaces labeling the vertices in $W$ span an ideal of $\g$.
        \item \textbf{Simplicity.} If $\g$ is simple, then there exists a closed directed walk $W$ that induces a subgraph $G(\Tilde{V},\Tilde{E})$ with $V=\Tilde{V}$.
        \item \textbf{Semisimplicity.} If $\g$ is semisimple, then every vertex is part of a closed directed walk that induces a self-contained subgraph.
    \end{enumerate}
\end{conjecture}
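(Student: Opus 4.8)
The plan is to reduce Conjecture~\ref{con:final:gradation:generalization} to already-established machinery by observing that the modified graph $G_\g(V,E)$ produced by Algorithm~\ref{alg:creating:graph:modified} behaves, for nearly all structural purposes, like the graph produced by Algorithm~\ref{alg:creating:graph} for a minimal-graph-admissible algebra --- with one crucial caveat that must be handled separately. Specifically, I would first invoke Propositions~\ref{prop:mapping:minimal:graph:to:minimal:magma} and~\ref{prop:minimal:magma:graded:minimal:graph:admissible-converse} (the pair of propositions surrounding Lemma~\ref{lem:minimal:magma:graded:minimal:graph:admissible}) to dispose of the case $\operatorname{fg}(\g)=\dim(\g)$: here $\g$ is minimal-graph-admissible, the bijections $\Phi$ and $\Psi$ translate $G_\g(V,E)$ into an honest associated graph in the sense of Definition~\ref{def:assocaited:graph}, and every clause (i)--(vii) follows verbatim from Theorems~\ref{thm:non:solvability:condition:strong}, \ref{thm:nilpotency:criteria:strong}, Lemmas~\ref{lem:derived:series:graph:alg:valid}, \ref{lem:lower:central:series:of:graphs}, \ref{lem:ideal:span}, and Propositions~\ref{prop:weak:simplicity:cond:graph:based} and~\ref{prop:semisimple:lie:algebra:unconnected:subgraphs}. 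The content of the conjecture is therefore entirely in the genuinely magma-graded case $\operatorname{fg}(\g)<\dim(\g)$.

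For that case, the key technical object is the set of "defective pairs"
\[
    \mathcal{P}_{\mathrm{def}}:=\{(j,k)\in\mathcal{M}\times\mathcal{M}\,\mid\,\{0\}\neq[\g_j,\g_k]\subsetneq\g_{\delta(j,k)}\},
\]
i.e.\ pairs where the bracket lands in a strictly smaller subspace than the grading component advertised by the edge. My plan is to prove that when the gradation has the finest granularity $\operatorname{fg}(\g)$, one can always choose representatives $e_j\in\g_j\setminus\{0\}$ so that $\mathcal{P}_{\mathrm{def}}=\emptyset$ --- that is, the fine grading forces each nonzero bracket $[\g_j,\g_k]$ to exhaust $\g_{\delta(j,k)}$. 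The intuition, borrowed from the refinement characterisation of fine gradings (Elduque's notion recalled right before Lemma~\ref{lem:granularities:first:obvious:observation}), is that if $[\g_j,\g_k]$ were a proper nonzero subspace of $\g_{\delta(j,k)}$, one could split $\g_{\delta(j,k)}$ into a piece containing $[\g_j,\g_k]$ and a complementary piece, obtaining a proper refinement and contradicting finest granularity --- provided the split is compatible with all remaining brackets, which is where care is needed. Once $\mathcal{P}_{\mathrm{def}}=\emptyset$ is secured, every nonzero bracket of grading components is exactly a one-dimensional component (since $\dim\g_{\delta(j,k)}=1$ whenever it is hit, by an argument identical to the one in the proof of Lemma~\ref{lem:minimal:magma:graded:minimal:graph:admissible}), the $\delta$-map plays the role of the index function in \eqref{eqn:desired:basis}, and the graph $G_\g(V,E)$ is literally an associated graph of a minimal-graph-admissible algebra on the spanning set $\{e_j\}_{j\in\mathcal{M}}$ --- note this set need not be linearly independent, but all the cited theorems (Theorem~\ref{thm:non:solvability:condition:strong}, Theorem~\ref{thm:nilpotency:criteria:strong}, Lemmas~\ref{lem:derived:series:graph:alg:valid} and~\ref{lem:lower:central:series:of:graphs}) are already stated for possibly-overcomplete bases, so clauses (i)--(vii) transfer directly, with the modified Algorithms~\ref{alg:generating:generating:the:graph:derived:altered:modified} and~\ref{alg:generating:lower:central:series:modified} being the word-for-word translations (elements $\to$ subspaces) of Algorithms~\ref{alg:generating:generating:the:graph:derived:altered} and~\ref{alg:generating:lower:central:series} whose correctness proofs carry over unchanged.

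The main obstacle I anticipate is precisely the claim that finest granularity eliminates defective pairs. It is not obvious that a proper nonzero subspace $[\g_j,\g_k]\subsetneq\g_{\delta(j,k)}$ can always be promoted to a legitimate refinement: splitting $\g_{\delta(j,k)}$ introduces a new grading component that must satisfy $[\g_a,(\text{new component})]\subseteq(\text{some component})$ for every $a$, and the naive complement may violate this. The honest resolution is likely one of two routes --- either (a) show that over an algebraically closed field a fine grading on a finite-dimensional Lie algebra can always be chosen so that $\dim\g_s\le 1$ for every $s$ in the support once one passes to the finest granularity (this is essentially a statement about the diagonalisability of the grading group action and would lean on the structure theory invoked in the proof of Theorem~\ref{thm:semisimple:lie:algebra:graph:admissible} for the semisimple part, combined with the Levi--Mal'tsev decomposition \cite{Kuzmin:1977} to handle the radical), or (b) accept that defective pairs may persist but prove that they are harmless, because an edge arising from a defective pair still records a \emph{nonzero} bracket landing inside $\g_{\delta(j,k)}$, which is all that the cycle/closed-walk arguments of Theorems~\ref{thm:non:solvability:condition:strong} and~\ref{thm:nilpotency:criteria:strong} actually use --- here the subtlety flagged in the discussion after Conjecture~\ref{con:extension:conjecture:weak} ("$\{0\}\neq[\g_j,\g_k]\subsetneq\g_{\delta(j,k)}$ must be carefully accounted for") would need a dedicated lemma showing that a self-contained closed walk still forces $\spn\{\Tilde{V}\}\subseteq\mathcal{D}^1\g$ even when some bracket is deficient, since $\spn\{\Tilde{V}\}\subseteq[\spn\{\Tilde{V}\},\spn\{\Tilde{V}\}]$ only needs each $\g_{v}$ to be \emph{contained in} some bracket, not equal to it. Route (b) is the more robust and is the one I would pursue, reserving route (a) as a clean sufficient condition. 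I would finish by remarking, as the conjecture's non-uniqueness caveat already signals, that clauses (vi)--(vii) remain only one-directional and that distinguishing simple from semisimple still requires examining the full family of admissible magma-gradations, exactly as in Proposition~\ref{prop:semisimple:lie:algebra:unconnected:subgraphs}.
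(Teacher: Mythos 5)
First, note the status of the statement you are trying to prove: in the paper this is an open \emph{conjecture}, not a theorem. The paper itself only establishes it in special cases --- Theorem~\ref{thm:final:generalization:conjectore:for:example} verifies it for the family $L_\alpha^3$ by first classifying \emph{all} finest-granularity magma-gradations of that algebra (Propositions~\ref{prop:possible:grapha:L:0:3}--\ref{prop:possible:grapha:L:minus_quarter:3}) and checking the resulting graphs one by one, and Proposition~\ref{prop:existence:dim:four:lower:magma:suitable:gradation} handles the weaker Conjecture~\ref{con:extension:conjecture:weak} for real algebras of dimension at most four by exhaustive case analysis over the Mubarakzyanov list. So there is no general proof in the paper to compare against, and your proposal does not close that gap either.

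The concrete hole is exactly the one you flag and then defer: the treatment of defective pairs $\{0\}\neq[\g_j,\g_k]\subsetneq\g_{\delta(j,k)}$. Route (a) --- that finest granularity forces every nonzero bracket to exhaust its target component --- is asserted only at the level of intuition; the obstruction you yourself name (a naive splitting of $\g_{\delta(j,k)}$ need not be compatible with the other brackets, so it need not yield a refinement) is precisely why this is not a routine refinement argument, and you offer no proof. Route (b) is weaker than you suggest: the closed-walk arguments of Theorems~\ref{thm:non:solvability:condition:strong} and~\ref{thm:nilpotency:criteria:strong} do \emph{not} only use that each bracket along the walk is nonzero. In the minimal setting, a self-contained closed walk gives $\spn\{\Tilde{V}\}\subseteq[\spn\{\Tilde{V}\},\spn\{\Tilde{V}\}]$ because each vertex element is (up to a nonzero scalar) \emph{equal} to a bracket of walk elements; with a defective pair one only gets that a nonzero proper subspace of $\g_{v_{j+1}}$ lies in the bracket, so the derived series can keep shrinking \emph{inside} the components and the ``only if'' directions of (i) and (iii), as well as the equalities in (ii) and (iv) (the pruned graph may retain a vertex $\g_{\delta(j,k)}$ whose full subspace is strictly larger than its intersection with $\mathcal{D}^1\g$), all become unproved. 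The ``dedicated lemma'' you propose for route (b) is therefore the entire content of the conjecture, not a technicality, and it is exactly the subtlety the paper points to in the discussion following Conjecture~\ref{con:extension:conjecture:weak}. A further minor issue: your reduction in the case $\operatorname{fg}(\g)=\dim(\g)$ cites a converse proposition by a label that does not exist in the paper (the converse of Proposition~\ref{prop:mapping:minimal:graph:to:minimal:magma} is stated but the pairing you describe is fine in substance), so that step should be re-anchored to Lemma~\ref{lem:minimal:magma:graded:minimal:graph:admissible} and the two propositions surrounding it.
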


This conjecture can be verified in the low-dimensional case, see for example Proposition~\ref{thm:final:generalization:conjectore:for:example} in Appendix~\ref{app:generalization}, where we discuss the real Lie algebra $L_\alpha^3$.

\subsection{Further remarks}

Finally, we want to devote this last section to a broader a reflection and prospective developments stemming from our study. Beyond the specific results established in the preceding sections, the framework of associating labeled directed graphs with Lie algebras opens several avenues for further exploration. These include both theoretical questions and potential applications, particularly in the classification of Lie algebras and their structural invariants. We now present a selection of open problems and conjectures that naturally arise from our findings.

\subsubsection{Open questions and conjectures}

We begin by briefly discussing several open questions and potential directions for future research. These are presented in a logical sequence that mirrors the progression of the main text.

We started by defining and investigating the notion of graph-admissibility with particular emphasis on the subclass of minimal-graph-admissibile Lie algebras. Among the general results established, which include the existence Theorems~\ref{thm:existncae:of:non:minimal:graph:admissible:} and~\ref{thm:existence:of:non:graph:admissible:lie:algebras}, we showed that every finite-dimensional nilpotent Lie algebra is graph-admissible. This naturally raises the following questions:
\begin{quote}
    \textbf{Q:} \emph{Is every nilpotent Lie algebra minimal-graph-admissible? If not, what are the structural differences between those that are minimal-graph-admissible and those that are not?}
\end{quote}
To approach this question, let us consider the subclass of two-step nilpotent Lie algebras, i.e., Lie algebras $\g$ that satisfy $[\g,\g]\neq\{0\}$ and $[\g,[\g,\g]]=\{0\}$. Engel's Theorem \cite{Humphreys:1972} implies that $\operatorname{ad}_x^2=0$ for all $x\in\g$. Generally, one has even $\operatorname{ad}_x\circ\operatorname{ad}_y=\operatorname{ad}_y\circ\operatorname{ad}_y=0$ for all $x,y\in\g$. This is because $\operatorname{ad}_x(z)\in[\g,\g]\subseteq\mathcal{Z}(\g)$ for all $x,z\in\g$, and consequently $(\operatorname{ad}_x\circ\operatorname{ad}_y)(z)=0$ for all $x,y,z\in\g$. Henceforth, the adjoint actions of any pair of basis elements commute.

Motivated by the ability to construct a similarity transformation that brings a set of commuting normal matrices into their diagonalized form, one can conclude the following:
if it were possible to construct a simultaneous similarity transformation that brings all adjoint operations into their respective Jordan normal forms, it would follow that every two-step nilpotent Lie algebra is minimal-graph-admissible. In fact, for certain pairs of commuting nilpotent endomorphisms, the existence of such a similarity transformation has already been established \cite{Hua:2023}. However, in the case of three-step nilpotent Lie algebras, it is generally no longer true that all adjoint actions commute. This observations motivates the following conjecture:

\begin{conjecture}
    Let $\g$ be a finite-dimensional $k$-step nilpotent Lie algebra. Then, the following holds: (a) If $k\leq 2$, then $\g$ is minimal-graph-admissible; (b) If $k>2$, then $\g$ is not necessarily minimal-graph-admissible. In fact, there exists $k$-step nilpotent Lie algebras that fail to be minimal-graph-admissible.
\end{conjecture}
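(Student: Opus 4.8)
The plan is to prove the two parts separately, using rather different tools. For part (a), I would reduce the claim about a $k$-step nilpotent Lie algebra with $k\le 2$ to a statement about simultaneous triangularization (or better, simultaneous Jordan form) of a commuting family of nilpotent operators. The key observation, already sketched in the excerpt, is that in a two-step nilpotent Lie algebra all adjoint operators $\operatorname{ad}_x$ are not only nilpotent of order $2$ but pairwise commute: $\operatorname{ad}_x\circ\operatorname{ad}_y=0$ for all $x,y$, since $\operatorname{im}(\operatorname{ad}_y)\subseteq[\g,\g]\subseteq\mathcal{Z}(\g)\subseteq\ker(\operatorname{ad}_x)$. So the family $\{\operatorname{ad}_x\}_{x\in\g}$ is a commuting family of square-zero operators. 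I would then invoke (or reprove in this special case) the fact that a commuting family of nilpotent operators on a finite-dimensional vector space admits a common basis in which every member is in Jordan normal form; the result cited in the excerpt as \cite{Hua:2023} handles pairs of commuting nilpotents, and square-zero commuting operators are the easiest instance — their images all lie in the common kernel intersection, so one can choose a basis of $[\g,\g]$, extend through $\mathcal{Z}(\g)$, and then lift to a complement. The upshot I want: a basis $\{x_j\}$ of $\g$ such that for each generator $x_j$, $\operatorname{ad}_{x_j}$ sends each $x_k$ either to $0$ or to a scalar multiple of a single basis vector. That is precisely the minimal-graph-admissibility condition \eqref{eqn:desired:basis}, so part (a) follows. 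I would also need to handle $k\le 1$ (abelian) trivially, which is already Proposition on abelian algebras.

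For part (b), I would exhibit an explicit $3$-step nilpotent Lie algebra that is not minimal-graph-admissible, in the spirit of the proof of Theorem~\ref{thm:existncae:of:non:minimal:graph:admissible:}. The natural candidate is a free-nilpotent-type algebra on two generators of nilpotency class $3$, or one of the low-dimensional filiform Lie algebras: e.g.\ the filiform algebra $\g=\spn\{x_1,\dots,x_5\}$ with $[x_1,x_j]=x_{j+1}$ for $j=2,3,4$ and an additional nonzero bracket $[x_2,x_3]=x_5$, all others zero. The strategy to show non-minimal-graph-admissibility mirrors the excerpt's arguments: assume a basis $\{\tilde x_j\}$ satisfying \eqref{eqn:desired:basis} exists, express each $\tilde x_j$ in the given basis, and derive a contradiction by analyzing which brackets can be proportional to a single basis vector. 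The presence of \emph{two} independent "sources" of the top-degree element $x_5$ (namely $\operatorname{ad}_{x_1}^3 x_2$ and $\operatorname{ad}_{x_2} x_3$) is what obstructs the existence of a compatible eigenbasis — just as the two-term structure of $[x_2,x_3]=-\alpha x_1-x_2$ obstructed it for $L_\alpha^3$. Concretely, I expect that any attempt to produce a basis satisfying \eqref{eqn:desired:basis} forces $\operatorname{ad}_{\tilde x_2}$ restricted to a relevant subspace to be simultaneously diagonalizable with $\operatorname{ad}_{\tilde x_1}$, which the non-commutativity of the adjoint actions in class $3$ rules out.

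The hard part will be part (a): turning "commuting square-zero operators" into a genuine simultaneous Jordan basis that is \emph{compatible with the Lie-bracket structure}, not merely with each individual $\operatorname{ad}_{x}$. The subtlety is that the basis must work for all adjoint operators at once, and moreover the set of "generators" against which we test is the whole algebra, so we need the single basis $\{x_j\}$ to satisfy $[x_j,x_k]\propto x_{\delta(j,k)}$ for \emph{every} pair $(j,k)$, with a symmetric $\delta$. I would organize this by first choosing a basis $\{z_1,\dots,z_r\}$ of $[\g,\g]$ (which lies in the center), extending to a basis of $\mathcal{Z}(\g)$, and then choosing complementary generators $\{w_1,\dots,w_s\}$; the bilinear form $(w_i,w_j)\mapsto[w_i,w_j]\in[\g,\g]$ is an alternating $[\g,\g]$-valued form, and I would need a normal form for such a form in which each $[w_i,w_j]$ is a scalar multiple of a single $z_\ell$. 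This is essentially asking for a common "symplectic-type" basis for a tuple of alternating scalar forms (the components of the $[\g,\g]$-valued form), which is \emph{not} always possible in general — so I anticipate that the honest statement of part (a) may require restricting to, say, $[\g,\g]$ one-dimensional (Heisenberg-type), or may need a more careful argument, and proving it in full generality for all class-$2$ algebras is where I expect the real work, and possibly a counterexample, to lie. If a class-$2$ counterexample surfaces, the conjecture as stated in (a) would need to be weakened accordingly.
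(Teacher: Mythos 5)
First, a point of orientation: the statement you are proving is stated in the paper only as a \emph{conjecture}; the paper gives no proof, just the motivating observation that in a two-step nilpotent algebra every $\operatorname{ad}_x$ is square-zero and any two adjoints compose to zero, plus the pointer to simultaneous Jordan forms of commuting nilpotent endomorphisms \cite{Hua:2023}. Your part (a) follows exactly this motivation, and to your credit you put your finger on why it is not yet a proof: minimal-graph-admissibility is not a condition on each $\operatorname{ad}_x$ separately but on a single basis compatible with the $[\g,\g]$-valued alternating form $(w_i,w_j)\mapsto[w_i,w_j]$, i.e.\ you need bases of a complement and of $[\g,\g]$ such that every value $[w_i,w_j]$ lies on a coordinate line. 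For $\dim[\g,\g]\leq 2$ this can be extracted from the Kronecker theory of pencils of skew forms, but for three or more component forms the classification problem is wild and no such normal form is available (a parameter count even suggests failure is possible for large $\dim\g/[\g,\g]$ and $\dim[\g,\g]\geq 4$). So your sketch of (a) has the same status as the paper's remark: a plausible reduction with the decisive step open, not a proof.

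The genuine error is in part (b). Your candidate $\g=\spn\{x_1,\ldots,x_5\}$ with $[x_1,x_j]=x_{j+1}$ for $j\in\{2,3,4\}$ and $[x_2,x_3]=x_5$ is minimal-graph-admissible \emph{in the very basis you wrote down}: every nonzero bracket of basis elements equals $x_3$, $x_4$ or $x_5$, which is all that \eqref{eqn:desired:basis} demands. Having two distinct pairs whose brackets are proportional to the same basis vector is explicitly permitted by Definition~\ref{def:graph:admissible} (compare the Schrödinger algebra, where all $[q_j,p_j]$ land on the single vertex $z$), so "two independent sources of $x_5$" is no obstruction at all. The obstruction exploited for $L^3_\alpha$ in Theorem~\ref{thm:existncae:of:non:minimal:graph:admissible:} is of a different kind: there $[x_2,x_3]=-\alpha x_1-x_2$ is an irreducibly two-term value, and the eigenvalue analysis shows that for $\alpha<-1/4$ no change of basis can split it; note also that $L^3_\alpha$ is not nilpotent, so it cannot be recycled here. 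A proof of (b) must exhibit a nilpotent algebra and then rule out \emph{every} basis by an argument of that invariant-theoretic type — your free class-$3$ alternative fails for the same trivial reason, your filiform example is in fact $4$-step rather than $3$-step, and the conjecture as phrased asks for examples in each class $k>2$. As it stands, neither half of the conjecture is established by your proposal.
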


The focus on nilpotent Lie algebras further connects this work to ongoing efforts aimed at investigating such algebras through alternative graph-based approaches, as mentioned in the introduction (see for example, \cite{Molina:2023,Dani:Mainkar:2005}). A key difference between our framework and that in \cite{Molina:2023} lies in the interpretation of the Lie bracket. In \cite{Molina:2023}, the bracket $[x_1,x_2]=x_3$ is represented by an edge originating from $x_1$, pointing towards $x_2$ and being labeled by $x_3$. This encoding captures the sign if the bracket through the edge's direction, since by the antisymmetry of the Lie bracket $[x_2,x_1]=-x_3$. However, this approach enforces an even more restrictive condition on the bases elements, compared to the relations \eqref{eqn:desired:basis} or \eqref{eqn:desired:basis:overcomplete}. Specifically, the Lie bracket of two basis elements can only yield one of three outcomes: either a single basis element, its additive inverse, or zero. This limitation excludes certain Lie algebras, for instance real Lie algebras that inherently involve rational coefficients. One notable example is the Lie algebra $\mathfrak{N}(q)$, first introduced in \cite{Scheuneman:1967}. This real Lie algebra is spanned by the six basis elements $\{x_1,x_2,x_3,x_4,y_1,y_2\}$, which satisfy the follwing nontrivial Lie bracket relations:
\begin{align*}
    [x_1,x_2]&=y_1,\;&\;[x_1,x_3]&=y_2,\;&\;[x_2,x_4]&=y_2,\;&\;[x_3,x_4]&=qy_1,
\end{align*}
where $q\in\Q$. All other brackets either vanish or follow from the antisymmetric property of the Lie bracket. Using the current framework, the associated graph for $\mathfrak{N}(q)$, according to our prescription, is given by:
\begin{figure}[H]
    \centering
    \includegraphics[width=0.4\linewidth]{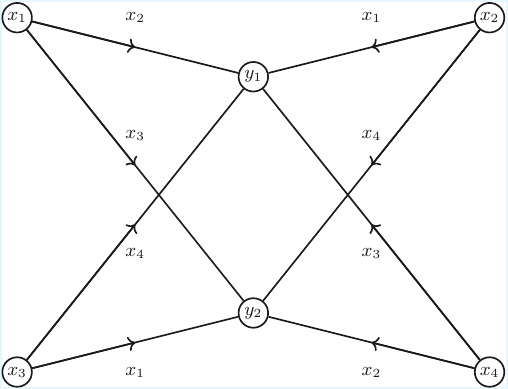}
\end{figure}
These two approaches, the one presented in this work and the one discussed in \cite{Molina:2023}, share a notable feature: a single Lie algebra can, in general, be represented by multiple distinct graphs, which is demonstrated by Example 2.3 in \cite{Molina:2023}.

Another interesting research direction emerges when considering Theorem~\ref{thm:non:solvability:condition:strong}, which provides sufficient and necessary criteria for solvability by analyzing closed directed walks within the associated graph. The only requirement imposed on these walks is that they have to be self-contained. It might therefore be worthwhile to improve the efficiency of this criterion, which is especially relevant for computer-based implementations, by further restricting the set of admissible walks. For instance, one could aim to establish conditions on the maximum length of such walks or determine how frequently a vertex must be revisited. These refinements would significantly reduce the number of walks that need to be considered, making the criterion computationally more traceable. Inspiration for such an investigation can certainly be drawn from the field of computer science, where graph-based methods are widely adopted \cite{Deo:2016}. One idea could be to incorporate Dijkstra's algorithm to determine the shortest path between two vertices \cite{Dijkstra:1959}. Such techniques might help identify minimal or optimal walks relevant to solvability.

Currently, the number of possible walks is infinite, which poses practical challenges. Restricting this to a finite set would greatly simplify verification, as infinite loops do not contribute meaningfully to the analysis. This motivates the following question:
\begin{quote}
    \textbf{Q:} \emph{Is Theorem~\ref{thm:non:solvability:condition:strong} still valid if one replaces \enquote{closed directed walk} with \enquote{closed directed trail}? If not, what are the structural differences in the associated Lie algebra?}
\end{quote}

As emphasized earlier, the current framework does not guarantee uniqueness of the graphs associated with a given Lie algebra, and vice versa. This observation opens several intriguing questions. For instance, what is the relationship between Lie algebras that share an associated graph, particularly when considering minimal graphs? Conversely, what are the relationship between different graphs associated with the same Lie algebra? Furthermore, one could explore connections between Lie algebras of different dimensions that share the same graph, i.e., in cases where at least one graph is not minimal. Similarly, it would be interesting to investigate the interplay between minimal and redundant graphs associated with the same Lie algebra, and to examine how such relationships extend to other Lie algebras which admit these redundant graphs as minimal ones. These, as well as related questions point toward a broader structural theory that classifies Lie algebras not only by their intrinsic properties but also by the equivalence classes of graphs they admit. Such an approach might offer valuable insight into the investigation of Lie algebras.

A related and equally promising direction is to ask whether this graph-based approach can be used to explore additional structural properties beyond those discussed in the preceding sections. In particular, one might consider the following question:
\begin{quote}
    \textbf{Q:} \emph{Does a relation exist between the nullity of a Lie algebra and the associated graphs?}
\end{quote}
Recall, that the nullity of a Lie algebra is defined as the minimal number of elements necessary to generate a given Lie algebra under the Lie closure \cite{A1:project}. Exploring this connection could uncover specific connectivity conditions for the associated graphs, among other results. Such findings might include existence theorems for particular graph configurations and upper bounds on the nullity, thereby linking this framework to other research on the nullity of Lie algebras \cite{Knebelman:1935,Kuranishi:1951,Sato:1974,Alburquerque:1992}.

Another potential direction relates directly to the question of whether the proposed framework can be useful for classifying Lie algebras, a topic we address in the following part.

\subsubsection{Contributing to the Lie algebra classification effort}
When working with algebraic structures, a natural question to raise is whether a classification od such structures can be achieved. A prominent example of such an endeavor is the classification of all finite simple groups \cite{Aschbacher:2004}. For Lie algebras, however, a full classification is not yet known. Nevertheless, the classification of every simple and therefore every semisimple Lie algebra over an algebraically closed field is well established \cite{Cartan:1894,Killing:erster:1888,Killing:zweiter:1888,Killing:dritter:1889,Killing:vierter:1890}. Building on these foundational results, significant progress has been made \cite{Bianchi:1903,Levi:1905,Matltsev:1942} and current approaches are informed by the Levi-Mal'tsev theorem, which states that any complex Lie algebra can be decomposed into the semidirect sum of a semisimple component and its radical \cite{Kuzmin:1977}. Despite this, major challenges remain, particularly in classifying all solvable Lie algebras and understanding the structure of the semidirect sum \cite{Popovych:2003}. Contemporary research primarily focuses on low-dimensional solvable Lie algebras \cite{Qi:2019,DelBarco:2025,DeGraaf:2007,Ancocha:2015,Choriyeva:2024}. 

The framework introduced in this work can be connected to these efforts by attempting to classify all graph-admissible Lie algebras together with the set of graphs that represent them. This is especially interesting for minimal-graph-admissible Lie algebras and their associated minimal graphs, as these present a faithfully representation and are, so far, better understood. Such a classification effort has the potential to complement and aid the broader goal of classifying all Lie algebras. Consequently, developing a systematic understanding of graph-admissible Lie algebras could yield valuable insights. One promising avenue would be to classify all labeled directed graphs that can serve as minimal graphs for minimal-graph-admissible Lie algebras. This approach could reveal structural invariants and provide a combinatorial perspective on the classification problem. 

In a first step, one could attempt to determine the number of inequivalent and valid graphs with $n$ vertices that can be associated with a minimal-graph-admissible Lie algebra of dimension $n$ as a minimal graph. Let us denote this number by $S(n)$. Calculating $S(n)$ can be refined by introducing the number $S(n,m)$, which counts such graphs when the number of edges is fixed to $m$. Clearly, $S(n)=\sum_{m=0}^\infty S(n,m)$. Moreover, by Propositions~\ref{prop:conditions:for:edges} and~\ref{prop:tight:bounds}, one has:
\begin{align*}
    S(n)=\sum_{m=0}^{n(n-1)/2}S(n,2m),
\end{align*}
since the number of edges must be even, due to the antisymmetry of the Lie bracket, and is bounded from above by $n(n-1)$. Thus, $S(n,2k+1)=0$ for all $k\in\N_{\geq0}$. Additionally, one has $S(n,0)=1$ for all $n\in\N_{\geq1}$, as there exists only one abelian Lie algebra of dimension $n$. It is straightforward to verify that $S(1)=1=S(1,0)$, while $S(2)=2$ with $S(2,0)=1$, and $S(2,2)=1$. Furthermore, as shown in Theorem~\ref{thm:neccersary:conditions:proper:graph} and Corollaries~\ref{cor:proper:choice:dependent} and~\ref{cor:3:graphs:Bianchi:identification}, one has: $S(3,0)=1$, $S(3,2)=2$, $S(3,4)=4$, and $S(3,6)=3$, while $S(3,m)=0$ for all other $m\in\N_{\geq0}\setminus\{0,2,4,6\}$. Consequently, the number of inequivalent graphs that can be associated with a minimal-graph-admissible three-dimensional Lie algebra as a minimal graph is: $S(3)=10$. 

On the contrary, one can also ask:
\begin{quote}
    \textbf{Q:} \emph{What is the number of graphs that can be associated with a particular Lie algebra, when restricted to minimal-graph-admissible Lie algebras and minimal graphs?}
\end{quote}
For example, each abelian Lie algebra can only be associated with one minimal graph, reflecting its trivial bracket structure. In contrast, as we have seen in Example~\ref{exa:second:example:graphs:assocaited:with:algebras}, the special linear Lie algebra $\sl{2}{\R}$ can be associated with two distinct minimal graphs.


\subsubsection{Generating Lie algebras for a provided graph}

Another potential research direction involves reversing the perspective adopted in this work. Instead of starting with a Lie algebra and associating a labeled directed graph to it, one could ask:  
\begin{quote}
    \textbf{Q:} \emph{Can we generate new and interesting Lie algebras by first constructing a valid graph and then associating a Lie algebra with it?}
\end{quote}
Pursuing this approach would require further developement of tools to identify valid graphs, particularly valid redundant graphs, building on the initial effort in Section~\ref{sec:valid:graphs}. Such a methodology would not only relate the current framework to similar approaches in the literature, where Lie algebras are also constructed from graphs \cite{Dani:Mainkar:2005,Ray:2015,DeCoste:2022}, but could also yield valuable insight into the role of graph symmetries. Recall that in Section~\ref{sec:prominent:examples:linear:quantum} we observed that a subgraph depicted in Figure~\ref{fig:om:example} possesses a $\Z_3$-symmetry, yet we were unable to trace this symmetry to its origin in the Lie algebra. It is conceivable that such symmetries do not correspond to any meaningful invariant of the Lie algebra, resulting in what might be termed a \emph{apparent symmetry}. However, if these symmetries do reflect structural properties, their study could be highly informative. In particular, redundant graphs may offer greater potential for visualizing such symmetries. This line on inquiry also relates to the following question:
\begin{quote}
    \textbf{Q:} \emph{If $\g$ is a graph-admissible but not minimal-graph-admissible, what is the smallest overcomplete basis that satisfies \eqref{eqn:desired:basis:overcomplete}?}
\end{quote}
Addressing this question would require investigating how and which elements can be added to an overcomplete basis to maintain validity. Furthermore, one could explore the connection between proper graphs and choice-dependent graphs, as well as the existence of proper but not proper-minimal graphs. For instance, in Theorem~\ref{thm:neccersary:conditions:proper:graph} each of the obtained proper graphs with three vertices can also be associated with a Lie algebra as a minimal graph, raising deeper questions about the interplay between minimality and redundancy.

\subsubsection{Potential application to quantum physics}\label{sec:discussion:application:to:quantum:physics}

Finally, we aim to highlight potential applications of this method in analyzing the dynamics of a quantum mechanical system and fostering a deeper intuition for the associated calculations. Such an approach could be especially beneficial as a first approach to better understand the subject.

To proceed, we consider a complex Hilbert space $\mathcal{H}$ that contains all physical states of the system. Let the initial state at time $t_0=0$ be denoted by $\ket{\psi_0}\equiv\ket{\psi(0)}\in\mathcal{H}$. The central objective in quantum dynamics is generally to determine the state $\ket{\psi(t)}$ that describes the system at a later time $t\in\R$. This is achieved by solving the Schrödinger equation
\begin{align*}
    \frac{\dd}{\dd t}\ket{\psi(t)}=-i\hat{H}(t)\ket{\psi(t)},\quad\text{with}\quad\ket{\psi(0)}=\ket{\psi_0},
\end{align*}
where $\hat{H}(t)$ denotes the Hamiltonian governing the system's evolution \cite{Cohen:Tannoudji:2020}.
This equation can be reformulated by introducing the unitary time-evolution operator $\hat{U}(t)$, defined by $\hat{U}(t)\ket{\psi_0}=\ket{\psi(t)}$. This operator $\hat{U}(t)$ is a one-parameter group that satisfies the differential equation
\begin{align}
    \der{t}\hat{U}(t)=-i\hat{H}(t)\hat{U}(t)\quad\text{with}\quad \hat{U}(0)=\hat{\mathds{1}},\label{eqn:schroedinger:one:parameter:group}
\end{align} 
where $\hat{\mathds{1}}$ denotes the identity operator.
In most settings, the Hamiltonian $\hat{H}(t)$ admits a decomposition of the form
\begin{align*}
    \hat{H}(t)=-i\sum_{j\in\mathcal{J}}u_j(t)\hat{g}_j,
\end{align*}
where $u_j(t)$ are real-valued (potentially) time-dependent functions and $\hat{g}_j$ are time-independent skew-hermitian generators belonging to a particular Lie algebra. This representation emphasizes therefore the deep connection between quantum dynamics and Lie algebraic structures. 
The differential equation \eqref{eqn:schroedinger:one:parameter:group} can, in principle, be solved using formal expansions such as the Magnus series \cite{Blanes:2009} or the time-ordered Dyson series \cite{Dyson:1949}. However, these solution are often non-conductive to obtaining an exact closed-form expression for $\ket{\psi(t)}=\hat{U}(t)\ket{\psi_0}$, due to the lack of analytical control. Consequently, one often resorts to truncating these expansions to obtain approximate solutions \cite{Blanes:2009,Kirchhoff:2025}. Moreover, these serieses are typically local in time and fail to provide global validity, even for relatively simple systems \cite{Mariani:1961,Wei:1963}. 
An alternative approach, pioneered by Wei and Norman \cite{Wei:Norman:1963,Wei:Norman:1964}, offers a constructive method for solving the time-dependent Schrödinger equation. This technique provides a local solution whenever the Hamiltonian Lie algebra $\g:=\lie{\{\hat{g}_j\}_{j\in\mathcal{J}}}$ is finite dimensional, and, importantly, yields a global solution when the Lie algebra $\g$ is solvable \cite{Wei:Norman:1964}. The core idea is to select a basis $\mathcal{B}=\{\hat{g}_k\}_{k\in\mathcal{K}}$ of $\g$ (with an appropriate ordering) and factorize the time-evolution operator as
\begin{align*}
    \hat{U}(t)=\prod_{k\in\mathcal{K}}\hat{U}_j(t)\qquad\text{with}\qquad\hat{U}_j(t)=\exp(-f_k(t)\hat{g}_k),
\end{align*}
where the real-valued time-dependent functions $f_j(t)$ encode the dynamics. To determine these functions, one differentiates the factorized form of $\hat{U}(t)$ and compares it with the differential equation \eqref{eqn:schroedinger:one:parameter:group}, obtaining:
\begin{align*}
    \left(-\frac{\dd}{\dd t} \hat{U}(t)\right)\hat{U}^\dagger(t)&=\sum_{k=1}^m\left(\frac{\dd}{\dd t}f_k(t)\right)\left(\prod_{j=1}^{k-1}\hat{U}_j(t)\right)\hat{g}_k\left(\prod_{j=1}^{k-1}\hat{U}_j^\dagger(t)\right)=i\hat{H}(t)=\sum_{j=1}^n u_j(t)\hat{g}_j,
\end{align*}
where, without loss of generality, we assumed that $\mathcal{K}=\{1,\ldots,m\}$, $\mathcal{J}=\{1,\ldots,n\}$, and $m\geq n$. Thus, the functions $f_k(t)$ are determined by a system of coupled differential equations, whose structure depends on the similarity transformations $\hat{U}_j(t)\hat{g}_k\hat{U}_j^\dagger(t)$.

Let us now present an example to illustrate this approach. Consider the bosonic Hamiltonian of the displaced harmonic oscillator
\begin{align*}
    \hat{H}(t)=\omega \hat{a}^\dagger\hat{a}+\alpha(t)\hat{a}+\alpha^*(t)\hat{a}^\dagger=\omega\hat{a}^\dagger \hat{a}+\re{\{\alpha(t)\}}(\hat{a}+\hat{a}^\dagger)+i\im{\{\alpha(t)\}}(\hat{a}-\hat{a}^\dagger),
\end{align*}
obtained via the displacement operation introduced in \cite{Cahill:1969}. Here, $\hat{a}^\dagger$ and $\hat{a}$ denote the bosonic creation and annihilation operators respectively, which satisfy the canonical commutation relation $[\hat{a},\hat{a}^\dagger]=1$. For this system, we can choose the generating sets of operators as $\mathcal{G}:=\{i\hat{a}^\dagger \hat{a},i(\hat{a}+\hat{a}^\dagger),\hat{a}-\hat{a}^\dagger\}$ and a basis of $\g:=\lie{\mathcal{G}}$ is clearly given by $\mathcal{B}=\{i\hat{a}^\dagger \hat{a},i(\hat{a}+\hat{a}^\dagger),\hat{a}-\hat{a}^\dagger,i\}$. This Lie algebra is known as the Wigner-Heisenberg algebra $\mathfrak{wh}_2$ \cite{A1:project}. Moreover, the basis $\mathcal{B}$ satisfies the structural relations \eqref{eqn:desired:basis}, confirming that the Lie algebra $\mathfrak{wh}_2$ is minimal-graph-admissible.

The framework developed offers three key applications in this context:
\begin{enumerate}[label=\textbf{\arabic*.}]
    \item \textbf{Visualization of the Lie algebra:} Constructing the associated graph provides immediate insights into the structure of the Lie algebra and offers an intuitive starting point for understanding its properties. For example, for the Wigner-Heisenberg algebra $\mathfrak{wh}_2$ with the basis $\mathcal{B}$, the corresponding graph clearly encodes the commutation relations among these generators. Such a visualization is particularly useful for identifying subalgebras, connectivity patterns, and structural features. The corresponding graph is illustrated in Figure~\ref{fig:wigner:heisenberg:just}.
    \begin{figure}[H]
        \centering
        \includegraphics[width=0.35\linewidth]{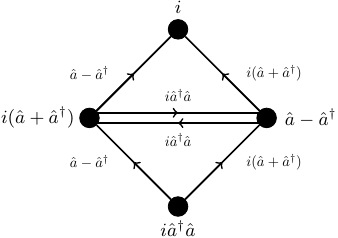}
        \caption{Depiction of the graph associated with the Wigner-Heisenberg algebra $\mathfrak{wh}_2$, constructed using the basis $\mathcal{B}=\{i\hat{a}^\dagger \hat{a},i(\hat{a}+\hat{a}^\dagger),\hat{a}-\hat{a}^\dagger,i\}$ and Algorithm~\ref{alg:creating:graph}. Here, we used, for visual simplicity, the convention to depict vertices by black circles instead of writing the name of the vertex inside a circle that represents it, as would be the correct convention for this graph.}
        \label{fig:wigner:heisenberg:just}
    \end{figure}
    \item \textbf{Solvability analysis:} If the Hamiltonian Lie algebra $\lie{\{\hat{g}_j\}_{j\in\mathcal{J}}}$ is graph-admissible, one can construct an associated graph and employ Theorem~\ref{thm:non:solvability:condition:strong} to decide whether the algebra is solvable or not. This, in turn allows us to infer whether a global solution to the Schrödinger equation exists under the Wei-Norman approach. Returning to the example of the displaced harmonic oscillator and the Lie algebra $\mathfrak{wh}_2$, we observe that the graph depicted in Figure~\ref{fig:wigner:heisenberg:just} contains no closed directed walks that induce self-contained subgraphs. Thus, the Lie algebra $\mathfrak{wh}_2$ is solvable and a global solution of the time evolution can be obtained.
    \item \textbf{Similarity transformations:} When calculating the time-evolution operator $\hat{U}(t)$ using the Wei-Norman method, one needs to determine the real time-dependent function $f_k(t)$, which rely on computing multiple similarity relations. As explained in the literature \cite{Adesso:Ragy:2014}, the following identity holds:
    \begin{align*}
        \exp(-s_j\hat{g}_j)\hat{g}_k\exp(s_j\hat{g}_j)&=\sum_{n=0}^\infty\frac{(-s_j)^n}{n!}\operatorname{ad}_{\hat{g}_j}^n(\hat{g}_k),
    \end{align*}
    where $\operatorname{ad}_{x}^{n+1}(y)=[x,\operatorname{ad}_x^n(y)]$, and one adopts the convention $\operatorname{ad}_{\hat{g}_j}^0(\hat{g}_k)\equiv\hat{g}_k$.
    
    The associated graph provides a first intuitive interpretation of this relation: the similarity transformation $\exp(-\hat{g}_j)\hat{g}_k\exp(\hat{g}_j)$ can be visualized as an infinite walk through the graph. Starting at the vertex corresponding to $\hat{g}_k$, one iteratively traverses edges labeled by $\hat{g}_j$, reaching each time the resulting vertex. The contribution of each step in the walk is weighted by the constants $\alpha_{jk}$. To formalize this, let $\g$ be a complex graph-admissible Lie algebra with a (possibly overcomplete) basis $\mathcal{B}=\{v_j\}_{j\in\mathcal{J}}$ satisfying relations \eqref{eqn:desired:basis} (or \eqref{eqn:desired:basis:overcomplete}), and let $G(V,E)$ be its associated graph. Define the weight function $\omega:E\to \C$ with concrete action
    \begin{align*}
    \omega:e\mapsto\omega(e):=\alpha_{jk},\qquad\text{where}\qquad[\varpi_\mathrm{s}(e),\varpi_\mathrm{l}(e)]=\alpha_{jk}\varpi_\mathrm{e}(e).
    \end{align*}
    To compute the similarity transformation $\exp(-v_j)v_k\exp(v_j)$, one starts by determining the walk $W=(w_0,e_0,w_1,\ldots)$, which starts at the vertex $w_0=v_k$ and traverses the edge $e_0$ labeled by $v_j$. Then, one repeats this process either indefinitely or until no further edge labeled by $v_j$ exists that originates from the vertex reached in the previous step. The similarity transformation then reads:
    \begin{align*}
        \exp(-v_j)v_k\exp(v_k)&=\sum_{n=0}^\infty\frac{1}{n!}\left(\prod_{j=0}^{n-1}\omega(e_j)\right)v_n=v_k+\omega(e_0)v_1+\frac{1}{2}\omega(e_0)\omega(e_1)v_2+\frac{1}{6}\omega(e_0)\omega(e_1)\omega(e_2)v_3+\ldots
    \end{align*}
    This graphical interpretation not only aids in developing intuition for these calculations but also simplifies practical computation, especially when truncating the series is sufficient for a given application when the error budget is non-vanishing. Furthermore, the finiteness of the graph and the edge rules established in Proposition~\ref{prop:conditions:for:edges} guarantee that the walk either terminates or loops after finitely many steps, where the former is a property that always holds for nilpotent Lie algebras. This makes the approach particularly useful for systems where the underlying algebra is nilpotent or solvable. 
    
    Here, we also want to emphasize the implications of this approach when generalized to infinite-dimensional Lie algebras and infinite graphs. In such cases, a non-repeating and non-terminating walk clearly illustrates the challenges in analyzing convergence behavior of the series expansion. Moreover, this graphical perspective simplifies the computation of similarity transformations when approximating the dynamics by truncating the expansion: instead of explicitly calculating nested commutators, one can simply follow the corresponding walk in the graph and apply the appropriate weights. This not only reduces the computational complexity but also provides an intuitive framework for understanding the structure of the approximation.
\end{enumerate}

\begin{tcolorbox}[breakable, colback=Cerulean!3!white,colframe=Cerulean!85!black,title=\textbf{Example}: Obtaining similarity relations using walks in the associated graphs]
    \begin{example}\label{example:wigner:heisenberg:walks}
        We now demonstrate how to compute the similarity transformations for the case of the Wigner-Heisenberg algebra $\mathfrak{wh}_2$. We use again the basis $\mathcal{B}=\{i\hat{a}^\dagger\hat{a},i(\hat{a}+\hat{a}^\dagger),\hat{a}-\hat{a}^\dagger,i\}$, and for completeness, we add the corresponding weights to the edges as a second label. This results in the graph depicted in Figure~\ref{fig:wigner:heisenberg:wlaks:similarity}.
        
        Following the approach described above, we associate the walk $W=(v_0,e_0,v_1,e_1,v_2,e_2,\ldots)$, defined by:
        \begin{align*}
            v_{2\ell}:=\hat{a}-\hat{a}^\dagger,\quad e_{2\ell}:=(\hat{a}-\hat{a}^\dagger,(ia^\dagger a,1),i(\hat{a}+\hat{a}^\dagger),\quad v_{2\ell+1}:= i(\hat{a}+\hat{a}^\dagger),\quad e_{2\ell+1}:= (i(\hat{a}+\hat{a}^\dagger),(ia^\dagger a,-1),\hat{a}-\hat{a}^\dagger),
        \end{align*}
        for all $\ell\in\N_{\geq0}$, with the calculation of the similarity transformation $\exp(-i\hat{a}^\dagger\hat{a})(\hat{a}-\hat{a}^\dagger)\exp(i\hat{a}^\dagger\hat{a})$. An illustration of the corresponding walk can be found in Figure~\ref{fig:wigner:heisenberg:wlaks:similarity}. For the edge weights, we note that $\omega(e_{2\ell})=1=-\omega(e_{2\ell+1})$ for all $\ell\in\N_{\geq0}$, and consequently
        \begin{align}
            \prod_{j=0}^{n-1}\omega(e_j)=(-1)^{\lfloor\frac{n}{2}\rfloor}.\label{eqn:wigner:heeisenberg:weight:formula}
        \end{align}
        This allows us to compute:
        \begin{align*}
            \exp(-i\hat{a}^\dagger\hat{a})(\hat{a}-\hat{a}^\dagger)\exp(i\hat{a}^\dagger\hat{a})&=\sum_{n=0}^\infty\frac{1}{n!}\left(\prod_{j=0}^{n-1}\omega(e_j)\right)v_n.
        \end{align*}
        Splitting the series into terms of even and odd ordered yields together with the application of \eqref{eqn:wigner:heeisenberg:weight:formula}:
        \begin{align*}
            \exp(-i\hat{a}^\dagger\hat{a})(\hat{a}-\hat{a}^\dagger)\exp(i\hat{a}^\dagger\hat{a})&=\sum_{n=0}^\infty\frac{1}{(2n)!}(-1)^{\lfloor\frac{2n}{2}\rfloor}v_{2n}+\sum_{n=0}^\infty\frac{1}{(2n+1)!}(-1)^{\lfloor\frac{2n+1}{2}\rfloor}v_{2n+1}\\
            &=\sum_{n=0}^\infty\frac{(-1)^n}{(2n)!}v_0+\sum_{n=0}^\infty\frac{(-1)^{n}}{(2n+1)!}v_1.
        \end{align*}
        Substituting $v_0=\hat{a}-\hat{a}^\dagger$ and $v_1=i(\hat{a}+\hat{a}^\dagger)$, we obtain:
        \begin{align*}
            \exp(-i\hat{a}^\dagger\hat{a})(\hat{a}-\hat{a}^\dagger)\exp(i\hat{a}^\dagger\hat{a})&=\cos(1)\left(\hat{a}-\hat{a}^\dagger\right)+i\sin(1)\left(\hat{a}+\hat{a}^\dagger\right)=e^{i}\hat{a}-e^{-i}\hat{a}^\dagger.
        \end{align*}
        This matches the expected result, as it is on agreement with the literature \cite{A1:project}. This example illustrates a walk that loops the same sequence indefinitely, which can be treated by splitting the series into the terms that have a common vertex label. 
        
        To contrast the previous example, consider the similarity transformation $\exp(-i(\hat{a}+\hat{a}^\dagger))(i\hat{a}^\dagger\hat{a})\exp(i(\hat{a}+\hat{a}^\dagger))$. Here, the corresponding walk $W'$ starts at the vertex $v_0=i\hat{a}^\dagger\hat{a}$. Following the edge labeled with $i(\hat{a}+\hat{a}^\dagger)$, one reaches the vertex labeled by $\hat{a}-\hat{a}^\dagger$. Continuing along the next edge, also labeled by $i(\hat{a}+\hat{a}^\dagger)$, one arrives at the vertex labeled with $i$, where the walk terminates because no further edge originates from this vertex, especially no edge labeled by $i(\hat{a}+\hat{a}^\dagger)$. This yields:
        \begin{align*}
            \exp(-i(\hat{a}+\hat{a}^\dagger))(i\hat{a}^\dagger\hat{a})\exp(i(\hat{a}+\hat{a}^\dagger))&=i\hat{a}^\dagger\hat{a}+\omega(e_0) \left(\hat{a}-\hat{a}^\dagger\right)+\frac{1}{2}\omega(e_0)\omega(e_1)i,
        \end{align*}
        where $\omega(e_0)=1$ and $\omega(e_1)=2$. Substituting these values yields:
        \begin{align*}
            \exp(-i(\hat{a}+\hat{a}^\dagger))(i\hat{a}^\dagger\hat{a})\exp(i(\hat{a}+\hat{a}^\dagger))&=i\hat{a}^\dagger\hat{a}+\left(\hat{a}-\hat{a}^\dagger\right)+i.
        \end{align*}
        The corresponding walk is also depicted in Figure~\ref{fig:wigner:heisenberg:wlaks:similarity}.
        \begin{figure}[H]
            \centering
            \includegraphics[width=0.45\linewidth]{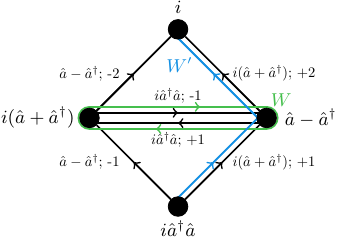}
            \caption{Depiction of the graph associated with the Wigner Heisenberg algebra $\mathfrak{wh}_2$ from Figure~\ref{fig:wigner:heisenberg:just}, now including the edge weights as secondary labels. The two walks $W$ (shown in green) and $W'$ (shown in blue) from Example~\ref{example:wigner:heisenberg:walks} are also illustrated.}
            \label{fig:wigner:heisenberg:wlaks:similarity}
        \end{figure}
    \end{example}
\end{tcolorbox}


\section{Conclusion}\label{sec:outlook}

The primary aspiration of this work was to develop a visualization framework that facilitates intuitive understanding of the structural properties of Lie algebras. It is in this spirit that we created a tool that allows to immediately identify fundamental algebraic properties, such as solvability, nilpotency, and the presence of ideals, by providing a graphical representation of the algebras. A strong motivation for our work lies in the programme recently initiated to study Hamiltonian Lie algebras that are key to quantum dynamics \cite{Bruschi:Xuereb:2024}, together with the ongoing efforts to determine finite-dimensional Lie subalgebras of the skew-hermitian Weyl algebra $\hat{A}_n$ generated by elements of significance to the dynamics of physical processes \cite{A1:project}. In this context, a visualization tool is expected to be helpful for understanding key aspects of the Lie algebras of interest.

We started by restricting our attention to finite-dimensional Lie algebras that satisfy a constraint related to the simultaneous diagonalization condition for square matrices whereby the adjoint actions of certain elements must obey a specific sparsity constraint. Concretely, the Lie bracket of any two vertices must be proportional to a vertex, or vanish. Lie algebras fulfilling this requirement were termed \enquote{graph-admissible}, whereas those that did not were classified as \enquote{non-graph-admissible}. This condition is motivated by the idea of associating a labeled directed graph with a given Lie algebra and encoding the Lie bracket structure within the vertex-edge configuration. Graph-admissible Lie algebras admit a further subclassification: those Lie algebras that can be represented by a graph in which the number of nodes coincides with the dimension of the Lie algebra. These graphs provide a \enquote{faithful} representation and the corresponding Lie algebras were called \enquote{minimal-graph-admissible}. Conversely, those that cannot are referred to as \enquote{redundant-graph-admissible}, since their associated graphs exhibit redundancies: in particular, the vertex set forms only a spanning set rather than a basis. 
We demonstrated that the classification into non-graph-admissible, minimal-graph-admissible and redundant-graph-admissible (but not minimal-graph-admissible) is well founded by providing explicit examples of Lie algebras in each category.  This was achieved  by analyzing the family of solvable three-dimensional Lie algebras $L_\alpha^3$ (cf. \cite{DeGraaf:2004}). Furthermore, we established that a broad class of mathematically and physically significant Lie algebras, including all complex semisimple Lie algebras, all nilpotent Lie algebras, and all finite-dimensional Lie subalgebras of the Weyl algebra $A_1$, are graph-admissible.

The graphs that can be associated with graph-admissible Lie algebras must satisfy specific general structural properties reflecting the antisymmetry of the Lie bracket, as well as the constraints imposed by the Jacobi identity. In this work we provided relevant preliminary results that restrict the number of admissible graphs. However, we were unable to determine the exact number of graphs that, for a given number of $n$ vertices, can be associated with an $n$-dimensional Lie algebra, except for the case $n\leq 3$. The conditions of interest were mostly derived for minimal-graph-admissible Lie algebras associated with minimal graphs only, but not redundant graphs associated with redundant-graph-admissible Lie algebras. Moreover, they only provide a weak criterion  for proper graphs, i.e., graphs that can be associated with a Lie algebra, dictated by the Jacobi identity. Further work is required to refine and generalize these criteria.

We then provided a methodology to identify key structural properties of a given Lie algebra solely via their associated graphs. Specifically, we formulated graph-theoretic criteria for identifying abelian elements, solvability, nilpotency, the existence of ideals, as well as simplicity, semisimplicity, and reductiveness. The criteria for solvability and nilpotency were inspired by classical algebraic constructions: we mirrored the formation of the derived series and lower central series with corresponding sequences of graphs, and provided accessible pruning-algorithms to compute them. These constructions enabled us to establish strong necessary and sufficient conditions for solvability and nilpotency based exclusively on the initial graph, without requiring iterative graph reductions. These results present a significant step toward bridging algebraic and combinatorial perspectives as they allow structural classifications using purely graphical features. In order to illustrate the practical perspectives of these methods, we applied them to three important Lie algebras that are key also to physics: the Schrödinger algebra $\sl{2}{\R}\ltimes\mathfrak{h}_m$, the Lorentz algebra $\mathfrak{so}(3;1)_\C$, and a finite-dimensional subalgebra of the skew-hermitian Weyl algebra $\hat{A}_n$. These examples demonstrate the effectiveness of our framework in capturing deep structural properties through visualizations alone.

We concluded our analysis by examining the strengths of the proposed approach and outlining potential generalizations of the framework. One particularly promising direction lies in exploring the relationship between graph-admissibility and the concept of Lie algebra gradings, which could lead to deeper structural insights, new classification techniques, and a broadening of the applicability of the current framework. In addition, we discussed several avenues for future research, presenting open questions and conjectures and highlighting their connections to broader mathematical efforts. Finally, we emphasized the practical significance of these results by considering the role of solvability in determining the time evolution of certain physical systems. In this context, our approach can assist in better understanding concepts such as similarity transformations, since it allows to map computations of sequences of nested commutators within algebras of interest to walks on the associated graphs. This perspective helps clarifying why factorization of the time evolution of the system into a sequence of basic operations is typically difficult to achieve within infinite-dimensional settings, offering a conceptual bridge between abstract algebraic theory and practical applications in physics.

\clearpage

	\section*{Data availability statement}
	
	Data sharing is not applicable to this work as no datasets are used to support the presented results.

    \section*{Acknowledgments}
    We thank Robert Zeier for helpful comments and suggestions. We also wish to acknowledge the independent use,  found in \cite{Kazi:2025}, of the particular labelling of algebraic graphs considered here.

	\section*{Conflict of interest}
	
	The authors have no relevant financial or non-financial interest to disclose.

	\section*{Author contributions}
	TCH and DEB conceived the of graphic representation of algebras with the chosen labelling approach. TCH proved all results. DEB supervised the project, provided constructive feedback for all claims, and proposed a subset of the claims. TCH generated all figures. All authors contributed to writing the manuscript.

	\section*{Funding}
    TCH acknowledges support from the joint project No. 13N15685 ``German Quantum Computer based on Superconducting
	Qubits (GeQCoS)'' sponsored by the German Federal Ministry of Education and Research (BMBF) under the framework programme “Quantum technologies -- from basic research to the market”.

\bibliography{main}

\onecolumngrid
\newpage
\appendix

\section{Further remarks on choice-dependent graphs }\label{app:field:dependence:choice:dependent:graphs}
In section~\ref{sec:valid:graphs} we investigated the rules for distinguishing valid proper graphs, i.e., labeled directed graphs that can be associated with a given graph-admissible Lie algebra, from improper graphs, i.e., graphs that cannot. We demonstrated that certain graphs can only be associated with a given Lie algebra if edges of the form $e=(v_\mathrm{s},v_\mathrm{l},v_\mathrm{e})$ are translated in Lie brackets $[v_\mathrm{s},v_\mathrm{l}]=\kappa v_\mathrm{e}$ with $\kappa\in\mathbb{F}^*$ under the conditions that the coefficients $\kappa$ satisfy specific constraints. This makes corresponding graphs coefficient-dependent. Importantly, this dependence on the proportionality constants is affected by the characteristic of the underlying field $\mathbb{F}$, since the Jacobi identity
\begin{align*}
    0=[x,[y,z]]+[y,[z,x]]+[z,[x,y]]
\end{align*}
holds also when the right-hand side coincides with $k\times\operatorname{char}(\mathbb{F})$ additions of the unit element, for $k\in\mathbb{N}$. Consequently, any criterion that differentiates between choice-independent and choice-dependent graphs must account for this subtlety. To illustrate this point consider Example~\ref{exa:derived:algebra:of:solvable:algebra:not:nilpotent} below, which demonstrates how the validity of The Jacobi identity can hinge on the choice of coefficients and the field's characteristic. 

\begin{tcolorbox}[breakable, colback=Cerulean!3!white,colframe=Cerulean!85!black,title=\textbf{Example}: The derived algebra of a solvable Lie algebra]
    Let $\g$ be a solvable Lie algebra over a field $\mathbb{F}$. If $\operatorname{char}(\mathbb{F})=0$, then its derived algebra $\mathcal{D\g}$ is nilpotent \cite{Jacobson:1966}. However, when $\operatorname{char}(\mathbb{F})\neq0$, this property generally does not hold \cite{Jacobson:1966}. The following example illustrates a solvable yet graph-admissible Lie algebra whose first derived algebra fails to be nilpotent for $\operatorname{char}(\mathbb{F})>0$. This example underscores the interplay between choice-dependent graphs and the selection of coefficients, demonstrating that these choices are inherently linked to the characteristic of the underlying field $\mathbb{F}$.
    \begin{example}\label{exa:derived:algebra:of:solvable:algebra:not:nilpotent}
        Let $\mathbb{F}$ be a field of characteristic $p$, and let $\g$ denote the $p$-dimensional Lie algebra spanned by the basis elements $\{x,y,e_j\}_{j=1}^p$, which satisfy the following bracket relations, see \cite{Jacobson:1966} for reference:
        \begin{align}\label{eqn:lie:algebra:characteristic:dependent}
            [x,y]&=x,\;&\;[x,e_j]&=-e_{j+1}\text{ for }j\leq p-1,\;&\;[x,e_p]&=-e_1,\;&\;[y,e_j]&=-(j-1)e_j.
        \end{align}
       Then, the Jacobi identity yields:
        \begin{subequations}
        \begin{align}
            [e_j,[x,y]]+[x,[y,e_j]]+[y,[e_j,x]]&=e_{j+1}+(j-1)e_{j+1}-je_{j+1}=0,&&\text{for }j\leq p-1\\ 
            [e_p,[x,y]]+[x,[y,e_p]]+[y,[e_p,x]]&=e_1+(p-1)e_1=pe_1=0,&&\text{for }j= p.\label{eqn:jacobi:identity:depending:on:char}
        \end{align}
        \end{subequations}
        All other instances of the Jacobi clearly vanish identically. Thus, one can construct a graph for this Lie algebra using the basis $\mathcal{B}=\{x,y,e_j\}_{j=1}^p$. The corresponding graph is shown in Figure~\ref{fig:field:charactersitc:dependent}. However, if we change the field $\mathbb{F}$ to another field $\mathbb{F}'$ with $\operatorname{char}(\mathbb{F}')=q\neq p$, the Jacobi identity computed in \eqref{eqn:jacobi:identity:depending:on:char} would be violated. Consequently, the graph $G(V,E)$ cannot be associated with a Lie algebra spanned by elements in $\mathcal{B}$ which satisfy the bracket relations \eqref{eqn:lie:algebra:characteristic:dependent}. This failure directly affects the parameter choices for translating edges $e\in E$ into valid Lie bracket relations. 
        
        Such a situation can never occur over a field of characteristic zero, since no  nonzero multiple of the unit element ever vanishes. This observation emphasizes the choice-dependent nature of the graph representation, which is inherently tied to the characteristic of the underlying field. 
    \end{example}

    \begin{figure}[H]
        \centering
        \includegraphics[width=0.95\linewidth]{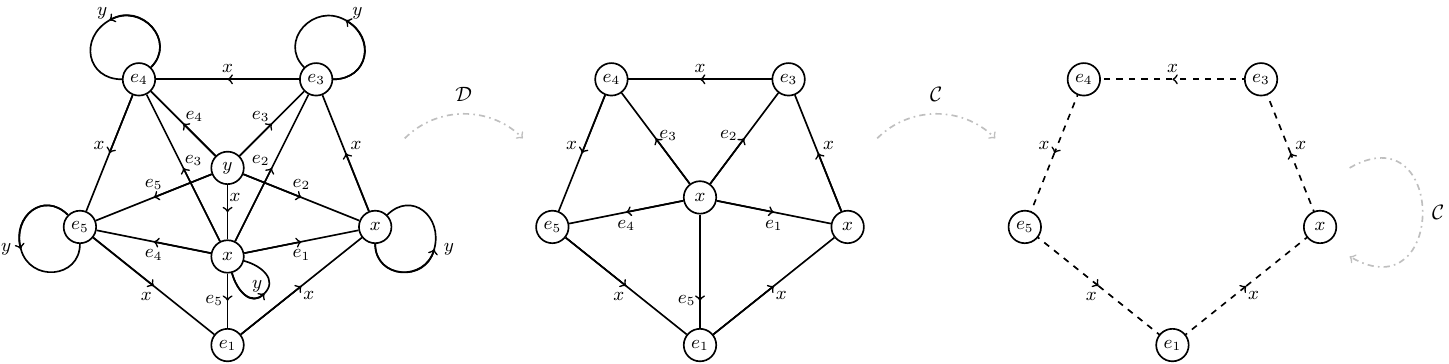}
        \caption{Illustration of the graph constructed in Example~\ref{exa:derived:algebra:of:solvable:algebra:not:nilpotent} along with the first derived graph and its lower central series shown for the case $p=5$.}
        \label{fig:field:charactersitc:dependent}
    \end{figure}
\end{tcolorbox}

This example motivates the following statement:

\begin{corollary}\label{cor:solbaleb:but:derived:not:nilpotent}
    Let $\g$ be a finite-dimensional graph-admissible Lie algebra over a field $\mathbb{F}$ associated with the labeled directed graph $G(V,E)$. If $G(V,E)$ contains no self-contained subgraph induced by a closed directed walk, but the derived graph $\mathcal{D}G(V,E)$ does contain a closed directed walk, then $\operatorname{char}(\mathbb{F})\neq 0$.
\end{corollary}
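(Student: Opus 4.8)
The plan is to argue by contradiction. Assume that $\operatorname{char}(\mathbb{F})=0$. The hypothesis gives us a finite-dimensional graph-admissible Lie algebra $\g$ over $\mathbb{F}$ associated with a graph $G(V,E)$ that contains no self-contained subgraph induced by a closed directed walk. By Theorem~\ref{thm:non:solvability:condition:strong}, this is equivalent to $\g$ being solvable. The other hypothesis is that the derived graph $\mathcal{D}G(V,E)$ (constructed by the pruning procedure of Algorithm~\ref{alg:generating:generating:the:graph:derived:altered}) does contain a closed directed walk. My first step would be to invoke Lemma~\ref{lem:derived:series:graph:alg:valid}, which tells us that $\mathcal{D}G(V,E)$ can be associated with the first derived algebra $\mathcal{D}^1\g=[\g,\g]$, together with Lemma~\ref{lem:if:g:graph:admissible:so:derived:algebras}, which guarantees that $\mathcal{D}^1\g$ is itself graph-admissible and that $\mathcal{D}G(V,E)$ is a legitimate associated graph for it.

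Next, I would apply Theorem~\ref{thm:nilpotency:criteria:strong} to the Lie algebra $\mathcal{D}^1\g$ and its associated graph $\mathcal{D}G(V,E)$: since $\mathcal{D}G(V,E)$ contains a closed directed walk, the theorem tells us that $\mathcal{D}^1\g$ is \emph{not} nilpotent. So we have reached the situation of a solvable Lie algebra whose derived algebra is not nilpotent. The final and crucial step is to invoke the classical algebraic fact --- already cited in the excerpt via \cite{Jacobson:1966} and discussed in Example~\ref{exa:derived:algebra:of:solvable:algebra:not:nilpotent} --- that over a field of characteristic zero, the derived algebra of any solvable Lie algebra is nilpotent. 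This directly contradicts the conclusion of the previous step, and hence the assumption $\operatorname{char}(\mathbb{F})=0$ is untenable; therefore $\operatorname{char}(\mathbb{F})\neq 0$.

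The main obstacle --- really the only non-bookkeeping point --- is making sure that the chain of equivalences is deployed in the right direction: Theorem~\ref{thm:non:solvability:condition:strong} is used to translate the absence of a self-contained closed-walk subgraph in $G(V,E)$ into solvability of $\g$, while Theorem~\ref{thm:nilpotency:criteria:strong} is used to translate the presence of \emph{any} closed directed walk in $\mathcal{D}G(V,E)$ into non-nilpotency of $\mathcal{D}^1\g$. One must be careful that the hypothesis only mentions a ``closed directed walk'' in $\mathcal{D}G(V,E)$ (not a self-contained one), which is exactly the condition appearing in Theorem~\ref{thm:nilpotency:criteria:strong}, so no strengthening is smuggled in. Everything else is a direct citation: the graph-admissibility of the derived algebra (Lemma~\ref{lem:if:g:graph:admissible:so:derived:algebras}), the validity of $\mathcal{D}G(V,E)$ as its associated graph (Lemma~\ref{lem:derived:series:graph:alg:valid}), and the characteristic-zero result on solvable Lie algebras \cite{Jacobson:1966}. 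I would write the proof in three or four sentences, essentially as a one-line syllogism made explicit, with the appropriate references attached to each implication.
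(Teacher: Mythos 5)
Your proposal is correct and follows essentially the same route as the paper: it combines Theorem~\ref{thm:non:solvability:condition:strong} (solvability of $\g$ from the absence of a self-contained closed-walk subgraph), Theorem~\ref{thm:nilpotency:criteria:strong} applied to $\mathcal{D}^1\g$ via its associated graph $\mathcal{D}G(V,E)$ (non-nilpotency), and the classical characteristic-zero result from \cite{Jacobson:1966}. Your additional explicit citations of Lemmas~\ref{lem:if:g:graph:admissible:so:derived:algebras} and~\ref{lem:derived:series:graph:alg:valid} only make precise a step the paper leaves implicit.
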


It is furthermore clear that a graph described by the conditions in Corollary~\ref{cor:solbaleb:but:derived:not:nilpotent} is generally choice-dependent.

\begin{proof}
    This claim is a direct consequence Theorems~\ref{thm:non:solvability:condition:strong} and~\ref{thm:nilpotency:criteria:strong}, since Theorem~\ref{thm:non:solvability:condition:strong} implies that $\g$ is solvable if $G(V,E)$ contains no self-contained subgraph induced by a closed directed walk. Furthermore, Theorem~\ref{thm:nilpotency:criteria:strong} implies that the derived algebra $\mathcal{D}\g$ is not nilpotent, since the derived graph $\mathcal{D}G(V,E)$ contains a closed directed walk. However, such a situation is prohibited if $\g$ is defined over a field $\mathbb{F}$ of characteristic $\operatorname{char}(\mathbb{F})=0$ \cite{Jacobson:1966}.
\end{proof}

\section{Remaining algorithms}\label{app:remaining:algorithms}
This section is dedicated to the remaining algorithms mentioned in the main text. These algorithms complement the constructions introduced earlier and provide explicit procedures for generating graphs associated with the derived and lower central series of Lie algebras. Their inclusion ensures completeness and facilitates practical implementation of the theoretical framework developed in the main text.

\subsection{Generating labeled directed graphs associated with the derived series of a given Lie algebra}
In section~\ref{sec:structural:properties:derived:series} we introduced an approach for associating a labeled directed graph with the $\ell$-th derived algebra of a given Lie algebra $\g$. The procedure starts with the construction of a graph $G(V,E)$ using Algorithm~\ref{alg:creating:graph} followed by iterative pruning rules that yield a graph $\mathcal{D}^\ell G(V,E)$ corresponding to $\mathcal{D}^\ell\g$. These rules remove vertices and edges that fail to satisfy specific structural conditions.

This construction is informed by the recursively defined index sets
\begin{align*}
    \mathcal{M}_{\mathrm{D}}^{(\ell+1)}:=\delta\left(\left\{(j,k)\in\mathcal{M}_{\mathrm{D}}^{(\ell)}\times\mathcal{M}_{\mathrm{D}}^{(\ell)}\mid \;\Tilde{\alpha}_{jk}\neq0\right\}\right)\quad\text{for all }\ell\in\N_{\geq0},\quad\text{where}\quad\mathcal{M}_{\mathrm{D}}^{(0)}:=\mathcal{M}.
\end{align*}
Consequently, one can directly employ these index sets to obtain a graph associated with any derived algebra $\mathcal{D}^\ell\g$, provided that $\g$ is graph-admissible and a suitable (possibly overcomplete) basis $\mathcal{B}$ satisfying the relations \eqref{eqn:desired:basis} (or \eqref{eqn:desired:basis:overcomplete}) is known. A graph assocaited with $\mathcal{D}^\ell\g$ can then be constructed by applying Algorithm~\ref{alg:creating:graph} to the basis $\mathcal{B}$ when restricted to the elements with indices from $\mathcal{M}_\mathrm{D}^{(\ell)}$. For completeness, this procedure is formalized in Algorithm~\ref{alg:generating:generating:the:graph:derived:unaltered}.
\begin{algorithm}[htpb]
        \DontPrintSemicolon
        \KwData{A (possibly overcomplete) basis $\mathcal{B}=\{\Tilde{x}_j\}_{j\in\mathcal{M}}$ of the graph-admissible Lie algebra $\g$, satisfying the Lie bracket relations \eqref{eqn:desired:basis:overcomplete} along with the determining matrix $\boldsymbol{\alpha}$, the index function $\delta$ and a non-negative integer $\ell$.}
        \KwResult{A labeled directed graph associated with the derived algebra $\mathcal{D}^\ell\g$.}
        \SetKwData{Left}{left}\SetKwData{This}{this}\SetKwData{Up}{up}
        \SetKwFunction{Union}{Union}\SetKwFunction{FindCompress}{FindCompress}
        \SetKwInOut{Input}{input}\SetKwInOut{Output}{output}
    
        $V$ $\leftarrow$ $\emptyset$\tcc*[h]{Initialize an empty vertex set}\;
        $E$ $\leftarrow$ $\emptyset$\tcc*[h]{Initialize an empty edge set}\;
        $\mathcal{M}_\mathrm{D}$ $\leftarrow$ $\mathcal{M}$\tcc*[h]{Initialize the index set $\mathcal{M}_\mathrm{D}^{(0)}$}\;
        $j$ $\leftarrow 0$\tcc*[h]{Initialize a counting index}\;
        \While(\tcc*[h]{Calculate the set $\mathcal{M}_\mathrm{D}^{(\ell)}$}){$j<\ell$}{
            $\mathcal{M}_\mathrm{D}$ $\leftarrow$ $\delta(\mathcal{M}_\mathrm{D},\mathcal{M}_\mathrm{D})\setminus\{0\}$\tcc*[h]{Compute $\mathcal{M}_\mathrm{D}^{(j+1)}$ using the previous set $\mathcal{M}_\mathrm{D}^{(j)}$}\;
            $j$ $\leftarrow$ $j+1$
        }
        \ForEach{$j\in\mathcal{M}_\mathrm{D}$}{
            $V$ $\leftarrow$ $V\cup\{\Tilde{x}_j\}$
        }
        \BlankLine
        \ForEach(\tcc*[h]{Add all relevant edges to edge set $E$}){$j,k\in \mathcal{M}_{\mathrm{D}}$ with $j<k$}{
            \If{$[\Tilde{x}_j,\Tilde{x}_k]=\Tilde{\alpha}_{jk} \Tilde{x}_{\delta(j,k)}$ with $\alpha_{jk}\neq 0$}{
                $E$ $\leftarrow$ $(\Tilde{x}_j,\Tilde{x}_k,\Tilde{x}_{\delta(j,k)})\cup(\Tilde{x}_k,\Tilde{x}_j,\Tilde{x}_{\delta(j,k)})$
                \tcc*[h]{Add edge to edge set $E$; each edge is an ordered triple: (start vertex, edge-label, end vertex)}\;
            }
        }
        \Return $\mathcal{D}^\ell G(V,E)= G(V^{(\ell)},E^{(\ell)})\equiv G(V,E)$\tcc*[h]{Return labeled directed graph $\mathcal{D}^\ell G(V,E)$ associated with $\mathcal{D}^\ell\g$}\;
\caption{Algorithm for generating a minimal graph that represents the $\ell$-th derived algebra $\mathcal{D}^\ell\g$ for a given finite-dimensional minimal-graph-admissible  Lie algebra}\label{alg:generating:generating:the:graph:derived:unaltered}
\end{algorithm}

It is straightforward to verify that the graphs produced by Algorithm~\ref{alg:generating:generating:the:graph:derived:unaltered} coincide with the graphs obtained via Algorithm~\ref{alg:generating:generating:the:graph:derived:altered}, provided the former is applied iteratively for all $\ell\in\N_{\geq0}$. This equivalence follows from the observation that
\begin{align*}
    \mathcal{M}_\mathrm{D}^{(\ell+1)}=\delta\left(\mathcal{M}_\mathrm{D}^{(\ell)},\mathcal{M}_\mathrm{D}^{(\ell)}\right)\setminus\{0\}
\end{align*}
for all $\ell\in\N_{\geq0}$, since $\alpha_{jk}=0$ if and only if $\delta(j,k)=0$.

\subsection{Improved algorithm for generating derived graphs from redundant initial graphs}
The applyication of Algorithm~\ref{alg:generating:generating:the:graph:derived:altered}, when working with redundant initial graphs, typically yields redundant graphs associated with the derived algebras. However, in certain cases Algorithm~\ref{alg:generating:generating:the:graph:derived:altered} may inadvertently remove redundant elements, resulting in a minimal graph. If the goal is to preserve the redundant nature of the representation, an improved procedure is required.

Algorithm~\ref{alg:generating:generating:the:graph:derived:altered:redundant:included} addresses this issue by modifying the pruning strategy of Algorithm~\ref{alg:generating:generating:the:graph:derived:altered}. Specifically, edges are removed at a later stage, after reintroducing vertices that belong to $\mathcal{D}^\ell\g$ but were previously eliminated. This ensures that the resulting graph remains a faithful representation of the derived algebra when redundancy is maintained.
Unlike Algorithm~\ref{alg:generating:generating:the:graph:derived:altered}, the improved algorithm does not discard edges assocaited with re-added vertices, as doing so would compromise the structural integrity of the representation. As discussed in Section~\ref{sec:structural:properties:derived:series} (see remarks following Algorithm~\ref{alg:generating:generating:the:graph:derived:altered}), this approach guarantees that the resulting graph can be consistently associated with $\mathcal{D}^\ell\g$.

\begin{algorithm}[htpb]
        \DontPrintSemicolon
        \KwData{A (overcomplete) basis $\mathcal{B}=\{\Tilde{x}_j\}_{j\in\mathcal{M}}$ of elements $\{x_j\}_{j\in\mathcal{N}}$ of the graph-admissible Lie algebra $\g$, satisfying the Lie bracket relations \eqref{eqn:desired:basis}}
        \KwResult{A sequence of labeled directed graphs $\mathcal{D}^\ell G(V,E)$ associated with the derived algebras $\mathcal{D}^\ell\g$ of the Lie algebra $\g$ for all $\ell\geq0$, constructed with respect to the given basis $\mathcal{B}$.}
        \SetKwData{Left}{left}\SetKwData{This}{this}\SetKwData{Up}{up}
        \SetKwFunction{Union}{Union}\SetKwFunction{FindCompress}{FindCompress}
        \SetKwInOut{Input}{input}\SetKwInOut{Output}{output}
    
        \BlankLine
        $\mathcal{D}$ $\leftarrow$ $\emptyset$
        \tcc*[h]{Initialize the set  contain  all graphs $\mathcal{D}^\ell G(V,E)$ associated with each derived algebra $\mathcal{D}^\ell\g$}\;
        \BlankLine
        $V$ $\leftarrow$ $\mathcal{B}$\tcc*[h]{Initialize the vertex set for $G(V,E)$ with the basis $\mathcal{B}$}\;
        $E$ $\leftarrow$ $\emptyset$\tcc*[h]{Initialize the edge set for $G(V,E)$}\;
        \BlankLine
        \ForEach(\tcc*[h]{Add all relevant edges to edge set $E$}){$(j,k)\in \mathcal{M}\times\mathcal{M}$ with $j<k$}{
            \If{$[\Tilde{x}_j,\Tilde{x}_k]=\Tilde{\alpha}_{jk} x_{\delta(j,k)}$ with $\alpha_{jk}\neq 0$}{
                $E$ $\leftarrow$ $E\cup (x_j,x_k,x_{\delta(j,k)})\cup\leftarrow$ $E\up (x_k,x_j,x_{\delta(j,k)})$
                \tcc*[h]{Add edge to edge set $E$; each edge is an ordered triple: (start vertex, edge-label, end vertex)}\;
            }
        }
        $\mathcal{D}$ $\leftarrow$ $G(V,E)$\tcc*[h]{Add initial graph associated with $\g=\mathcal{D}^0\g$ to $\mathcal{D}$}\;
        \BlankLine
        \ForEach{$\ell\in\N_{\geq1}$}{
            $V_\mathrm{r}$ $\leftarrow$ $\emptyset$ \tcc*[h]{Initialize set containing removed vertices}\;
            \ForEach{vertex $v\in V$}{
                \If{there exists no edge $e\in E$ such that $\varpi_\mathrm{l}(e)=v$}{
                    $V$ $\leftarrow$ $V\setminus\{v\}$\tcc*[h]{Remove $v$ from $V$}\;
                    $V_\mathrm{r}$ $\leftarrow$ $V\cup \{v\}$\tcc*[h]{Add $v$ to $v_\mathrm{r}$}\;
                }
            }
            \ForEach{vertex $v\in V_\mathrm{r}$}{
                \If{$v\in \spn\{ V\}$}{
                    $V$ $\leftarrow$ $V\cup\{v\}${Add $v$ to $v$}\;
                }
                \ForEach{edge $e\in E$}{
                    \If(\tcc*[h]{edge is of the form $(\cdot,\cdot,v)$ or $(\cdot,v,\cdot)$ or $(v,\cdot,\cdot)$}){$v\in e$}{
                        $E$ $\leftarrow$ $E\setminus\{e\}$\tcc*[h]{Remove $e$ from $E$}
                    }
                }
            }
            $\mathcal{D}$ $\leftarrow$ $\mathcal{D}^\ell G(V,E)=G(V^{(\ell-1)},E^{(\ell-1)})$\;
        }
        \Return $\mathcal{D}$\;
\caption{Algorithm for generating a directed graph associated with the $\ell$-th derived algebra $\mathcal{D}^\ell\g$ for a given finite-dimensional graph-admissible Lie algebra $\g$}\label{alg:generating:generating:the:graph:derived:altered:redundant:included}
\end{algorithm}

\subsection{Modifications for graded Lie algebras}
In Section~\ref{sec:generalization}, we discussed a possible generalization of the graph-theoretic approach of representing Lie algebras using labeled directed graphs by considering Lie algebra gradations rather than bases that satisfy the requirements \eqref{eqn:desired:basis} or \eqref{eqn:desired:basis:overcomplete}. In this extended setting, Algorithm~\ref{alg:creating:graph} was adapted to accommodate gradations. We now present the corresponding modifications to Algorithms~\ref{alg:generating:generating:the:graph:derived:altered} and~\ref{alg:generating:lower:central:series}, which compute, under certain conditions, graphs associated with the derived and lower central series.

We begin with the algorithm for generating graphs associated with the derived series under a grading structure. The recursive procedure defining the derived series naturally extends to this setting, thus allowing the construction of derived graphs by iteratively applying the pruning process. While this approach is immediately valid for certain classes of Lie algebras, its general applicability depends on Conjecture~\ref{con:final:gradation:generalization}, which posits that the method works for any Lie algebra graded by an abelian magma of finest granularity. The modified procedure is formalized in Algorithm~\ref{alg:generating:generating:the:graph:derived:altered:modified}. It is conceptually analogous to Algorithm~\ref{alg:generating:generating:the:graph:derived:altered}, but operates on graded components rather than individual basis elements. Specifically:
\begin{itemize}
    \item If no edge targets a vertex labeled by the subspace $\g_j$, then $\g_j\cap\mathcal{D}\g=\{0\}$.
    \item Conversely, if an edge targets $\g_j$, then $\g_j\cap\mathcal{D}\g\neq\{0\}$, and $\mathcal{D}\g$ is contained in the direct sum of all such targeted subspaces.
\end{itemize}
When restricted to Lie algebras $\g$ satisfying $\operatorname{fg}(\g)=\dim(\g)$ and graded by an abelian magma $\mathcal{M}$ of finest granularity, the situation simplifies: $[\g_j,\g_k]=\g_{\delta(j,k)}$ or $[\g_j,\g_k]=\{0\}$, where each $\g_j$ is one dimensional. In this case, $\mathcal{D}\g$ coincides with the direct sum of all targeted vertices. Therefore, for the pruned graph to represent $\mathcal{D}^\ell\g$ faithfully, one must remove all vertices that are not targeted by any edge and all edges labeled by these vertices. This ensures consistency between the graph structure and the graded decomposition of the derived algebra. 
Note that this observation might indicate that Algorithm~\ref{alg:generating:generating:the:graph:derived:altered:modified} is only applicable for gradations that satisfy that either $[\g_j,\g_k]=\g_{\delta(j,k)}$ or $[\g_j,\g_k]=\{0\}$, but not both for all $j,k\in\mathcal{M}$. However, as demonstrated in Appendix~\ref{app:generalization} for the real Lie algebra $\gl_{3,2}$, this condition is not strictly necessary for Algorithm~\ref{alg:generating:generating:the:graph:derived:altered:modified} to produce graphs that faithfully represent the associated derived algebras.
\begin{algorithm}
        \DontPrintSemicolon
        \KwData{An abelian magma $(\mathcal{M},\delta)$ and a Lie algebra $\g$ that is $\mathcal{M}$-magma-graded with $\g=\bigoplus_{j\in\mathcal{M}}\g_j$ and $[\g_j,\g_k]\subseteq \g_{\delta(j,k)}$.}
        \KwResult{A sequence of labeled directed graphs $\mathcal{D}^\ell G(V,E)$ associated with the derived algebras $\mathcal{D}^\ell\g$ of the Lie algebra $\g$ for all $\ell\geq0$, constructed with respect to the magma $\mathcal{M}$.}
        \SetKwData{Left}{left}\SetKwData{This}{this}\SetKwData{Up}{up}
        \SetKwFunction{Union}{Union}\SetKwFunction{FindCompress}{FindCompress}
        \SetKwInOut{Input}{input}\SetKwInOut{Output}{output}
    
        \BlankLine
        $\mathcal{D}$ $\leftarrow$ $\emptyset$
        \tcc*[h]{Initialize the set containing all graphs $\mathcal{D}^\ell G(V,E)$ associated with each derived algebra $\mathcal{D}^\ell\g$}\;
        \BlankLine
        $V$ $\leftarrow$ $\{\g_j\}_{j\in\mathcal{M}}$
        \tcc*[h]{Initialize the vertex set for $G(V,E)$ with the subspaces $\g_j$}\;
        $E$ $\leftarrow$ $\emptyset$
        \tcc*[h]{Initialize an empty edge set}\;
        \BlankLine
        \ForEach(\tcc*[h]{Add all relevant edges to edge set $E$}){$(j,k)\in \mathcal{M}\times\mathcal{M}$ with $j\leq k$}{
            \If{$[\g_j,\g_k]\subseteq\g_{\delta(j,k)}$ and $[\g_j,\g_k]\neq\{0\}$}{
                $E$ $\leftarrow$ $E\cup (\g_j,\g_k,\g_{\delta(j,k)})\cup (\g_k,\g_j,\g_{\delta(j,k)})$
                \tcc*[h]{Add edge to edge set $E$; each edge is an orderd triple: (start vertex, edge-label, end vertex)}\;
            }
        }
        $\mathcal{D}$ $\leftarrow$ $\{G(V,E)\}$\tcc*[h]{Add initial graph associated with $\g=\mathcal{D}^0\g$ to $\mathcal{D}$}\;
        \BlankLine
        \ForEach{$\ell\in\N_{\geq1}$}{
            \ForEach{vertex $v\in V $}{
                \If{there exists no edge $e\in E$ such that $\varpi_\mathrm{l}(e)=v$}{
                    $V$ $\leftarrow$ $V\setminus\{v\}$\tcc*[h]{Remove $v$ from $V$}\;
                    \ForEach{edge $e\in E$}{
                        \If(\tcc*[h]{i.e., edge is of the form $(\cdot,\cdot,v)$ or $(\cdot,v,\cdot)$ or $(v,\cdot,\cdot)$}){$v\in e$}{
                            $E$ $\leftarrow$ $E\setminus\{e\}$\tcc*[h]{Remove $e$ from $E$}
                        }
                    }
                }
            }
            $V^{(\ell)}\equiv\mathcal{D}^\ell V\leftarrow V$\;
            $E^{(\ell)}\equiv\mathcal{D}^\ell E\leftarrow E$\;
            $\mathcal{D}$ $\leftarrow$ $\mathcal{D}\cup\{\mathcal{D}^\ell G(V,E)=G(V^{(\ell)},E^{(\ell)})\}$\tcc*[h]{Add updated graph $\mathcal{D}^\ell G(V,E)$}\;
        }
        \Return $\mathcal{D}$ \tcc*[h]{Return set of all graphs $\mathcal{D}^\ell G(V,E)$ associated with the derived algebras $\mathcal{D}^\ell \g$ for all $\ell\geq0$}\;
\caption{Algorithm for generating a labeled directed graph associated with the $\ell$-th derived algebra $\mathcal{D}^\ell\g$ for an $\mathcal{M}$-magma-graded Lie algebra $\g$}\label{alg:generating:generating:the:graph:derived:altered:modified}
\end{algorithm}

Finally, we present the modification of the algorithm that generates graphs associated with the lower central series when the Lie algebra is equipped with a grading structure. This procedure is formalized in Algorithm~\ref{alg:generating:lower:central:series:modified}, and its justification follows analogously to reasoning for Algorithm~\ref{alg:generating:generating:the:graph:derived:altered:modified}.
\begin{algorithm}[htpb]
        \DontPrintSemicolon
        \KwData{An abelian magma $(\mathcal{M},\delta)$ and a Lie algebra $\g$ that is $\mathcal{M}$-magma-graded with $\g=\bigoplus_{j\in\mathcal{M}}\g_j$ and $[\g_j,\g_k]\subseteq\g_{\delta(j,k)}$.}
        \KwResult{A sequence of labeled directed graphs $\mathcal{C}^\ell G(V,E)$ associated with Lie algebras $\mathcal{C}^\ell\g$ of the lower central series for the Lie algebra $\g$ for all $\ell\geq 0$, constructed with respect to the magma $\mathcal{M}$.}
        \SetKwData{Left}{left}\SetKwData{This}{this}\SetKwData{Up}{up}
        \SetKwFunction{Union}{Union}\SetKwFunction{FindCompress}{FindCompress}
        \SetKwInOut{Input}{input}\SetKwInOut{Output}{output}
    
        \BlankLine
        $\mathcal{C}$ $\leftarrow$ $\emptyset$
        \tcc*[h]{Initialize the set that containing all graphs $\mathcal{C}^\ell G(V,E,E_{\mathrm{r}})$ associated with each Lie algebra $\mathcal{C}^\ell \g$ of the lower central series}\;
        \BlankLine
        $V$ $\leftarrow$ $\{\g_j\}_{j\in\mathcal{M}}$\tcc*[h]{Initialize the vertex set for $G(V,E)$ with the subspaces $\g_j$}\;
        $E$ $\leftarrow$ $\emptyset$\tcc*[h]{Initialize an empty edge set}\;
        $E_{\mathrm{r}}^{(0)}$ $\leftarrow$ $\emptyset$\tcc*[h]{Initialize an auxiliary empty edge set}\;
        \BlankLine
        \ForEach(\tcc*[h]{add all relevant edges to edge set $E$}){$(j,k)\in \mathcal{M}\times\mathcal{M}$ with $j\leq k$}{
            \If{$[\g_j,\g_k]\subseteq\g_{\delta(j,k)}$ with $[\g_j,\g_k]\neq\{0\}$}{
                $E$ $\leftarrow$ $E\cup \{(\g_j,\g_k,\g_{\delta(j,k)})\}\cup \{(\g_k,\g_j,\g_{\delta(j,k)})\}$
                \tcc*[h]{Add edge to edge set $E$; each edge is an ordered triple: (start vertex, edge-label end vertex)}\;
            }
        }
        $\mathcal{C}$ $\leftarrow$ $\{G(V,E)\}$\tcc*[h]{Add initial graph associated with $\g=\mathcal{C}^0\g$ to $\mathcal{C}$}\;
        \BlankLine
        \ForEach{$\ell\in\N_{\geq1}$}{
            \ForEach{vertex $v\in V$}{
                \If{there exists no edge $e\in E\cup E_\mathrm{r}$ such that $\varpi_\mathrm{e}(e)=v$}{
                    $V$ $\leftarrow$ $V\setminus\{v\}$\tcc*[h]{Remove $v$ from $V$}
                }
            }
            \ForEach{edge $e\in E_{\mathrm{r}}$}{
                \If{there exists no vertices $v_\mathrm{s},v_\mathrm{e}\in V$ such that $\varpi_\mathrm{s}(e)=v_s$ and $\varpi_\mathrm{e}(e)=v_e$}{
                    $E_\mathrm{r}$ $\leftarrow$ $\mathrm{E}_\mathrm{r}\setminus \{e\}$\tcc*[h]{Remove $e$ from $E_\mathrm{r}$}\;
                }
            }
            \ForEach{edge $e\in E$}{
                \uIf{there exists no vertices $ v_\mathrm{s},v_\mathrm{l},v_\mathrm{e}\in V$ such that $e=(v_\mathrm{s},v_\mathrm{l},v_\mathrm{e})$}{
                    $ E$ $\leftarrow$ $E\setminus\{e\}$\tcc*[h]{Remove $e$ from $E$}\;     
                }
                \uElseIf{there exists vertices $ v_\mathrm{s},v_\mathrm{e}\in V$ such that $\varpi_\mathrm{s}(e)=v_s$ and $\varpi_\mathrm{e}(e)=v_e$}{
                    $E_\mathrm{r}$ $\leftarrow$ $\mathrm{E}_\mathrm{r}\cup \{e\}$\tcc*[h]{Add $e$ to $E_\mathrm{r}$}\;
                }
            }
            $V^{(\ell)}\equiv\mathcal{C}^\ell V\leftarrow V$\;
            $E^{(\ell)}\equiv \mathcal{C}^\ell E\leftarrow E$\;
            $E_\mathrm{r}^{(\ell)}$ $\leftarrow$ $ E_\mathrm{r}$\;
            $\mathcal{C}$ $\leftarrow$ $\mathcal{C}\cup\{\mathcal{C}^\ell G(V,E,E_{\mathrm{r}})=G(\mathcal{C}^{\ell}V,\mathcal{C}^{\ell}E, E_{\mathrm{r}}^{(\ell)})\}$\tcc*[h]{Add updated graph $\mathcal{C}^\ell G(V,E)$}\;
        }
        \Return $\mathcal{C}$ \tcc*[h]{Return the set of all graphs $\mathcal{C}^\ell G(V,E)$ associated with the Lie algebras $\mathcal{C}^\ell\g$ of the lower central series for all $\ell\geq 0$}\;
\caption{Algorithm for generating a labeled directed graph associated with $\ell$-th Lie algebra of the lower central series $\mathcal{C}^\ell\g$ for an $\mathcal{M}$-magma-graded Lie algebra $\g$}\label{alg:generating:lower:central:series:modified}
\end{algorithm}

\section{Realizing the Schrödinger algebra $\sl{2}{\C}\ltimes\mathfrak{h}_m$ within the Weyl algebra $A_m$}\label{app:realization:of:schroedinger:algebra}
It has been shown that the complex Schrödinger algebra $\sl{2}{\C}\ltimes\mathfrak{h}_1$ admits a faithful realization within the complex Weyl algebra $A_1$, see \cite{TST:2006}. Furthermore, it has been demonstrated that the real Schrödinger algebra $\sl{2}{\R}\ltimes\mathfrak{h}_1$ can be faithfully realized within the skew-hermitian Weyl algebra $\hat{A}_1$, see \cite{A1:project}. In this section, we aim to prove that the real Schrödinger algebra $\sl{2}{\R}\ltimes\mathfrak{h}_m$ can be faithfully realized within the skew-hermitian Weyl algebra $\hat{A}_m$, and, consequently, that the complex Schrödinger algebra $\sl{2}{\C}\ltimes\mathfrak{h}_m$ admits a faithful realization within the complex Weyl algebra $A_m$. Finally, we comment on the physical significance of the real Schrödinger algebra $\sl{2}{\R}\ltimes\mathfrak{h}_m$ in the context of quantum mechanics.

Before constructing the realization, let us recall the most essential definitions. The bosonic creation and annihilation operators, denoted by $\hat{a}_j^\dagger$ and $\hat{a}_j$, have the following \emph{canonical commutation relations}\footnote{Note that these commutation relations $[A,B]:=AB-BA$ define a Lie bracket.}:
\begin{align*}
    [\hat{a}_j,\hat{a}_k]&=0,\;&\;[\hat{a}_j,\hat{a}_k^\dagger]&=\delta_{jk}\hat{\mathds{1}},\;&\;[\hat{a}_j^\dagger,\hat{a}_k^\dagger]&=0,
\end{align*}
where $\hat{\mathds{1}}$ is a central element, i.e., an element that commutes with every creation and annihilation operator. These operators span the Heisenberg algebra $\mathfrak{h}_m=\lie{\{\hat{a}_j,\hat{a}_k,\hat{\mathds{1}}\}_{j=1}^m}$, which is typically considered over either the real or complex numbers. The complex Weyl algebra $A_m$ can be regarded as the universal enveloping algebra $U(\mathfrak{h}_m)$ of the complex Heisenberg algebra $\mathfrak{h}_m$, under the restriction that $\hat{\mathds{1}}$ acts as the multiplicative identity. This algebra is equipped with the hermitian conjugation $(\cdot)^\dagger:A_m\to A_m$ with concrete action $(\cdot)^\dag:p\mapsto p^\dagger$, which is a linear anti-automorphism satisfying:
\begin{align*}
    (\hat{\mathds{1}})^\dagger&=\hat{\mathds{1}},\;&\; (i\hat{\mathds{1}})^\dagger&=-i\hat{\mathds{1}},\;&\; (\hat{a}_j^\dagger)^\dagger&=\hat{a}_j,\;&\; (\hat{a}_j)^\dagger=\hat{a}_j^\dagger,
\end{align*}
and reversing the order of multiplication, i.e., $(\hat{p}_1\hat{p}_2)^\dagger=\hat{p}_2^\dagger\hat{p}_1^\dagger$.

The Poincaré-Birkhoff-Witt theorem guarantees that the normal-ordered monomials
\begin{align*}
    \left(\prod_{j=1}^m(\hat{a}_j^\dagger)^{\alpha_j}\right)\left(\prod_{j=1}^m\hat{a}_j^{\beta_j}\right)=\prod_{j=1}^m(\hat{a}_j^\dagger)^{\alpha_j}\hat{a}_j^{\beta_j},\qquad\text{with}\qquad\alpha_j,\beta_j\in\N_{\geq0}
\end{align*}
form a basis of the Weyl algebra $A_m$, see \cite{PBW:Thm}. However, since bosonic dynamics are governed by hermitian Hamiltonians $\hat{H}(t)$, where the time-evolution operator $\hat{U}(t)$ satisfies the differential equation
\begin{align*}
    \der{t}\hat{U}(t)=-i\hat{H}(t)\hat{U}(t),\qquad\text{with}\qquad\hat{U}(0)=\hat{\mathds{1}},
\end{align*}
it is natural to restrict attention to the skew-hermitian Weyl algebra $\hat{A}_m$. This algebra consists of all skew-hermitian polynomials in $A_m$, i.e., polynomials $\hat{p}\in A_m$ satisfying $\hat{p}^\dagger=-\hat{p}$, and forms a real Lie algebra under the commutator bracket. A basis of $\hat{A}_m$ can be described by the following elements:
\begin{align*}
g+=i\left(\left(\prod_{j=1}^m(\hat{a}_j^\dagger)^{\alpha_j}\hat{a}_j^{\beta_j}\right)^\dagger+\prod_{j=1}^m(\hat{a}_j^\dagger)^{\alpha_j}\hat{a}_j^{\beta_j}\right)\qquad\text{and}\qquad g_-=\left(\prod_{j=1}^m(\hat{a}_j^\dagger)^{\alpha_j}\hat{a}_j^{\beta_j}\right)^\dagger-\prod_{j=1}^m(\hat{a}_j^\dagger)^{\alpha_j}\hat{a}_j^{\beta_j},
\end{align*}
where the integers $\alpha_j,\beta_j\in\N_{\geq0}$ satisfy the following conditions:

\begin{itemize}
    \item $g_+$ term: If there exists an index $k\in\{1,\ldots,m\}$ such that $\alpha_k<\beta_k$, then there exists an index $j\in\{1,\ldots,k-1\}$ such that $\alpha_j>\beta_j$ and $\alpha_\ell\geq\beta_\ell$ for all $\ell\in\{1,\ldots,k-1\}$.
    \item $g_-$ term: $\alpha_j\neq\beta_j$ for at least one index $j\in\{1,\ldots,m\}$, and if there exists an index $k\in\{1,\ldots,m\}$ such that $\alpha_k<\beta_k$, then there exists an index $j\in\{1,\ldots,k-1\}$ such that $\alpha_j>\beta_j$ and $\alpha_\ell\geq\beta_\ell$ for all $\ell\in\{1,\ldots,k-1\}$.
\end{itemize}
This concludes the repetition of the most essential definitions. More details can be found in the relevant literature~\cite{Bruschi:Xuereb:2024}.
Let us proceed with showing the following claim:

\begin{proposition}\label{prop:faithful:realization:of:schroedinger:algebra}
    The real Schrödinger algebra $\sl{2}{\R}\ltimes\mathfrak{h}_m$ can be faithfully realized within the skew-hermitian Weyl algebra $\hat{A}_m$.
\end{proposition}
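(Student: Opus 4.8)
The plan is to construct an explicit faithful representation $\rho:\sl{2}{\R}\ltimes\mathfrak{h}_m\to\hat{A}_m$ by writing down concrete skew-hermitian polynomials in the bosonic operators $\hat{a}_j,\hat{a}_j^\dagger$ and verifying that they satisfy the defining bracket relations \eqref{eqn:basis:schroedinger:algebra}. First I would handle the $\mathfrak{h}_m$ part: the obvious candidates are $q_j:=\hat{a}_j-\hat{a}_j^\dagger$, $p_j:=i(\hat{a}_j+\hat{a}_j^\dagger)$, and $z:=i\hat{\mathds{1}}$, all of which are manifestly skew-hermitian and satisfy $[q_j,p_k]=[\hat{a}_j-\hat{a}_j^\dagger,i(\hat{a}_k+\hat{a}_k^\dagger)]=i([\hat{a}_j,\hat{a}_k^\dagger]-[\hat{a}_j^\dagger,\hat{a}_k])=2i\delta_{jk}\hat{\mathds{1}}$, so up to an irrelevant rescaling of $z$ (or of the $q_j,p_j$) this reproduces the Heisenberg relations. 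The key structural point is that the global phase operator $\hat{\mathds{1}}$ is genuinely central in $A_m$, so $z$ is central in the image, matching the center of $\mathfrak{h}_m$.

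Next I would supply the $\sl{2}{\R}$ triple. The natural realization of $\sl{2}{\R}$ inside $\hat{A}_m$ uses quadratic operators: $h:=-\sum_{j=1}^m(\hat{a}_j^\dagger)^2 + \sum_{j=1}^m \hat{a}_j^2$ rescaled appropriately, together with $x$ and $y$ built from $i\sum_j\big((\hat{a}_j^\dagger)^2+\hat{a}_j^2\big)$ and $\sum_j\hat{a}_j^\dagger\hat{a}_j$ type combinations — essentially the standard metaplectic (two-photon) representation of $\sl{2}$ summed over the $m$ modes, with signs chosen to give the non-compact real form $\sl{2}{\R}$ rather than $\mathfrak{su}(1,1)$'s compact subalgebra issues; one checks $[h,x]=2x$, $[h,y]=-2y$, $[x,y]=h$ by direct commutator computation using $[\hat{a}_j^2,(\hat{a}_j^\dagger)^2]=4\hat{a}_j^\dagger\hat{a}_j+2\hat{\mathds{1}}$ and similar identities. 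Then I would verify the cross relations $[h,q_j]=q_j$, $[y,q_j]=p_j$, $[h,p_j]=-p_j$, $[x,p_j]=q_j$, which follow from the fact that the quadratic operators act on the linear operators $\hat{a}_j,\hat{a}_j^\dagger$ by the fundamental (two-dimensional) $\sl{2}$ representation on each mode; here I would be careful that the $\sl{2}$ subalgebra spanned by $h,x,y$ acts on $\spn\{q_j,p_j\}$ for each fixed $j$ and annihilates the central $z$, which is exactly the semidirect sum structure $\sl{2}{\R}\ltimes\mathfrak{h}_m$.

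Finally, faithfulness: since $\sl{2}{\R}\ltimes\mathfrak{h}_m$ has a one-dimensional center $\spn\{z\}$, the kernel of $\rho$ is an ideal, and any nonzero proper ideal must either contain $z$ or not. If it contains $z$ then it contains $[\g,[\g,z]]$-type brackets... but $z$ is central so that gives nothing; instead I would argue directly that $\rho(z)=i\hat{\mathds{1}}\neq 0$, $\rho(q_j),\rho(p_j)$ are linearly independent degree-one polynomials, and $\rho(h),\rho(x),\rho(y)$ are linearly independent degree-two polynomials not in the span of the degree-$\leq 1$ part, so $\rho$ is injective on the given basis and hence (being linear) injective on $\g$. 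The main obstacle I anticipate is purely computational bookkeeping: getting all signs and numerical factors consistent so that the realization lands in the \emph{non-compact} real form $\sl{2}{\R}$ (as opposed to accidentally realizing $\mathfrak{su}(2)\ltimes\mathfrak{h}_m$), and verifying the cross-bracket relations $[x,p_j]=q_j$ etc.\ with the correct normalization; the complex case $\sl{2}{\C}\ltimes\mathfrak{h}_m\subseteq A_m$ then follows by complexification, since $A_m=\hat{A}_m\oplus i\hat{A}_m$ as a complex vector space and complexifying a faithful real representation yields a faithful complex one.
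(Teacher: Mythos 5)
Your proposal is correct and follows essentially the same route as the paper: degree-one skew-hermitian operators $\hat{a}_j-\hat{a}_j^\dagger$, $i(\hat{a}_j+\hat{a}_j^\dagger)$ and a central multiple of $i\hat{\mathds{1}}$ realize $\mathfrak{h}_m$, while the mode-summed (diagonal) quadratic combinations of $\hat{a}_j^2$, $(\hat{a}_j^\dagger)^2$ and $\hat{a}_j^\dagger\hat{a}_j$ realize the $\sl{2}{\R}$ factor, with the semidirect structure and faithfulness checked by the cross brackets and linear independence. The paper does exactly this, importing the single-mode operators and their relations from the cited prior work and then summing over modes, so the only difference is that your degree-grading argument for injectivity is spelled out slightly more explicitly than in the paper.
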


\begin{proof}
    It has been shown that the following six elements of $\hat{A}_m$
    \begin{align*}
        \hat{h}_j&=\frac{1}{2}\left(\hat{a}_j^2-(\hat{a}_j^\dagger)^2\right),\;&\;\hat{x}_j&=-\frac{1}{4}\left(2i\left(\hat{a}_j^\dagger\hat{a}_j+\frac{1}{2}\right)+i\left(\hat{a}_j^2+(\hat{a}_j^\dagger)^2\right)\right),\;&\;y_j&=\frac{1}{4}\left(2i\left(\hat{a}_j^\dagger\hat{a}_j+\frac{1}{2}\right)-i\left(\hat{a}_j^2+(\hat{a}_j^\dagger)^2\right)\right),\\
        \hat{p}_j&=\hat{a}_j-\hat{a}_j^\dagger,\;&\;\hat{q}_j&=\hat{a}_j-\hat{a}_j^\dagger,\;&\;\hat{z}&=-2i\hat{\mathds{1}},
    \end{align*}
    satisfy the commutation relations:
    \begin{align*}
        [\hat{h}_j,\hat{x}_j]&=2\hat{x}_j,\;&\;[\hat{h}_j,\hat{y}_j]&=-2\hat{y}_j,\;&\;[\hat{x}_j,\hat{y}_j]&=\hat{h}_j,\;&\;[\hat{h}_j,\hat{q}_j]&=\hat{q}_j,\;&\;[\hat{h}_j,\hat{p}_j]&=-\hat{p}_j,\\
        [\hat{q}_j,\hat{p}_j]&=\hat{z},\;&\;[\hat{x}_j,\hat{q}_j]&=0,\;&\;[\hat{x}_j,\hat{p}_j]&=\hat{q}_j,\;&\;[\hat{y}_j,\hat{q}_j]&=\hat{p}_j,\;&\;[\hat{y}_j,\hat{p}_j]&=0,
    \end{align*}
    while $\hat{z}$ commutes with every other element \cite{A1:project}. This extends naturally to the multi-mode case:
    \begin{align*}
        [\hat{h}_j,\hat{x}_k]&=2\delta_{jk}\hat{x}_j,\;&\;[\hat{h}_j,\hat{y}_k]&=-2\delta_{jk}\hat{y}_j,\;&\;[\hat{x}_j,\hat{y}_k]&=\delta_{jk}\hat{h}_j,\;&\;[\hat{h}_j,\hat{q}_k]&=\delta_{jk}\hat{q}_j,\;&\;[\hat{h}_j,\hat{p}_k]&=-\delta_{jk}\hat{p}_j,\\
        [\hat{q}_j,\hat{p}_k]&=\delta_{jk}\hat{z},\;&\;[\hat{x}_j,\hat{q}_k]&=0,\;&\;[\hat{x}_j,\hat{p}_k]&=\delta_{jk}\hat{q}_j,\;&\;[\hat{y}_j,\hat{q}_k]&=\delta_{jk}\hat{p}_j,\;&\;[\hat{y}_j,\hat{p}_k]&=0,
    \end{align*}
    with $\hat{z}$ being a central element. We can now define the three elements:
    \begin{align*}
        \hat{h}&:=\sum_{j=1}^m\hat{h}_j,\;&\;\hat{x}&:=\sum_{j=1}^m\hat{x}_j,\;&\;\hat{x}&:=\sum_{j=1}^m\hat{y}_j.
    \end{align*}
    Then the commutation relations become:
    \begin{align*}
        [\hat{h},\hat{x}]&=2\hat{x},\;&\;[\hat{h},\hat{y}]&=-2\hat{y},\;&\;[\hat{x},\hat{y}]&=\hat{h},\;&\;[\hat{h},\hat{q}_j]&=\hat{q}_j,\;&\;[\hat{h},\hat{p}_j]&=-\hat{p}_j,\\
        [\hat{q}_j,\hat{p}_k]&=\delta_{jk}\hat{z},\;&\;[\hat{x},\hat{q}_j]&=0,\;&\;[\hat{x},\hat{p}_j]&=\hat{q}_j,\;&\;[\hat{y},\hat{q}_j]&=\hat{p}_j,\;&\;[\hat{y},\hat{p}_k]&=0,
    \end{align*}
    where $\hat{z}$ belongs to the center. Thus, comparing these bracket relations to those found in \cite{Tao:2022}, one concludes $$\lie{\{\hat{h},\hat{x},\hat{y},\hat{p}_j,\hat{q}_j,\hat{z}\}_{j=1}^m}\cong\sl{2}{\R}\ltimes\mathfrak{h}_m,$$ which shows that the Schrödinger algebra can be faithfully realized within the skew-hermitian Weyl algebra $\hat{A}_m$.
\end{proof}

\begin{corollary}
    The complex Schrödinger algebra $\sl{2}{\C}\ltimes\mathfrak{h}_m$ can be faithfully realized within the Weyl algebra $A_m$.
\end{corollary}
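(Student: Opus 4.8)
The plan is to obtain the complex statement as an immediate consequence of Proposition~\ref{prop:faithful:realization:of:schroedinger:algebra} by a complexification argument. First I would recall that $\sl{2}{\C}\ltimes\mathfrak{h}_m$ is, as a complex Lie algebra, the complexification of the real Schrödinger algebra, i.e. $\sl{2}{\C}\ltimes\mathfrak{h}_m \cong (\sl{2}{\R}\ltimes\mathfrak{h}_m)\otimes_\R\C$; this follows directly by comparing the defining structure constants in \eqref{eqn:basis:schroedinger:algebra}, which are the same real numbers whether the span is taken over $\R$ or over $\C$. Likewise, the complex Weyl algebra $A_m$ is the complexification of the real skew-hermitian Weyl algebra $\hat{A}_m$: indeed, any $\hat p\in A_m$ decomposes uniquely as $\hat p = \hat p_- + i(-i\hat p)_-$ type combination, or more cleanly, $A_m = \hat{A}_m \oplus i\hat{A}_m$ as a real vector space, since every polynomial splits into its skew-hermitian part $(\hat p - \hat p^\dagger)/2$ and $i$ times another skew-hermitian part $(\hat p + \hat p^\dagger)/(2i)$, and this is a decomposition of real Lie algebras with $[\hat{A}_m,\hat{A}_m]\subseteq\hat{A}_m$.

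Given these two complexification identifications, the second step is to promote the real faithful realization to a complex one. By Proposition~\ref{prop:faithful:realization:of:schroedinger:algebra} there is an injective real Lie-algebra homomorphism $\varphi:\sl{2}{\R}\ltimes\mathfrak{h}_m \hookrightarrow \hat{A}_m$. Tensoring with $\C$ (or equivalently extending $\C$-linearly) yields a map $\varphi_\C := \varphi\otimes\mathrm{id}_\C : (\sl{2}{\R}\ltimes\mathfrak{h}_m)\otimes_\R\C \to \hat{A}_m\otimes_\R\C$, which is a homomorphism of complex Lie algebras because $\varphi$ preserves the (real) bracket and complexification is functorial. Injectivity is preserved under the flat base change $-\otimes_\R\C$, so $\varphi_\C$ is injective. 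Composing with the isomorphisms $\sl{2}{\C}\ltimes\mathfrak{h}_m \cong (\sl{2}{\R}\ltimes\mathfrak{h}_m)\otimes_\R\C$ and $\hat{A}_m\otimes_\R\C \cong A_m$ established in the first step gives the desired faithful realization of $\sl{2}{\C}\ltimes\mathfrak{h}_m$ inside $A_m$.

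Alternatively — and this is the more hands-on route I would actually write out — one simply reuses the explicit elements $\hat h,\hat x,\hat y,\hat p_j,\hat q_j,\hat z\in\hat{A}_m\subseteq A_m$ constructed in the proof of Proposition~\ref{prop:faithful:realization:of:schroedinger:algebra}. They already satisfy exactly the bracket relations \eqref{eqn:basis:schroedinger:algebra} with real structure constants, so the complex span $\mathfrak{s}:=\mathrm{span}_\C\{\hat h,\hat x,\hat y,\hat p_j,\hat q_j,\hat z\}_{j=1}^m$ is a complex Lie subalgebra of $A_m$ isomorphic to $\sl{2}{\C}\ltimes\mathfrak{h}_m$. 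The only thing to check is that these $2m+3$ operators remain $\C$-linearly independent in $A_m$: this is immediate because they were already $\R$-linearly independent in $\hat{A}_m$ as distinct normal-ordered (skew-hermitian) monomials, hence by the Poincaré-Birkhoff-Witt basis of $A_m$ they are $\C$-linearly independent there. Thus $\dim_\C\mathfrak{s}=2m+3=\dim_\C(\sl{2}{\C}\ltimes\mathfrak{h}_m)$, and the realization is faithful.

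I do not anticipate a serious obstacle here; the statement is genuinely a corollary. The one point requiring a sentence of care — and the closest thing to a "hard part" — is the claim that $A_m = \hat{A}_m \oplus i\,\hat{A}_m$ as real vector spaces, so that $A_m$ really is the complexification of $\hat{A}_m$ as a real Lie algebra: one must verify that hermitian conjugation $(\cdot)^\dagger$ is a conjugate-linear involution on $A_m$ whose $(-1)$-eigenspace is precisely $\hat{A}_m$, and that the bracket is compatible, i.e. $[\hat{A}_m,\hat{A}_m]\subseteq\hat{A}_m$ (true since $[\hat p,\hat q]^\dagger = [\hat q^\dagger,\hat p^\dagger] = [-\hat q,-\hat p] = -[\hat p,\hat q]$ for skew-hermitian $\hat p,\hat q$). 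With that observation in place, the functoriality of complexification finishes the argument, and the hands-on version above sidesteps even this by never leaving $A_m$.
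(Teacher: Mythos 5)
Your proposal is correct and, in its ``hands-on'' form, is essentially the paper's own argument: the paper proves the corollary in one line by combining Proposition~\ref{prop:faithful:realization:of:schroedinger:algebra} with the observation that $\hat{A}_m$ is a real subalgebra of the complex Weyl algebra $A_m$, so the same explicit operators span a complex copy of $\sl{2}{\C}\ltimes\mathfrak{h}_m$ inside $A_m$ (your more formal complexification route $A_m\cong\hat{A}_m\otimes_\R\C$ is just a dressed-up version of this). Only a trivial slip: the algebra has dimension $2m+4$ (three $\sl{2}$ generators, $2m$ elements $\hat p_j,\hat q_j$, plus $\hat z$), not $2m+3$, which does not affect the argument.
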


\begin{proof}
    This is an immediate consequence of Proposition~\ref{prop:faithful:realization:of:schroedinger:algebra} and the observation that $\hat{A}_m$ is a real subalgebra of the complex Weyl algebra $A_m$.
\end{proof}

Let us continue by discussing the physical relevance of this result. Bosonic systems, such as modes of the electromagnetic field \cite{Sakurai:2020}, phononic excitations in crystals \cite{Ashcroft:Mermin:1976}, or polaritons in many body physics \cite{Dalfovo:Giorgini:1999}, are typically modeled by a collection of bosonic harmonic oscillators. When analyzing the time evolution of such systems, one often performs a basis change that transforms the Hamiltonian into one describing a set of uncoupled harmonic oscillators \cite{Bruschi:2021}. A corresponding time-dependent Hamiltonian $\hat{H}(t)$ for an $m$-mode bosonic system takes the form:
\begin{align}
    \hat{H}(t)&=\sum_{j=1}^m\left(\omega_j\hat{a}_j^\dagger\hat{a}_j+s_j(t)\hat{a}_j^2+s_j^*(t)(\hat{a}_j^\dagger)^2\right),\label{eqn:collection:single:mode:squeezed:oscillators}
\end{align}
where $\omega_j>0$ are the oscillator frequencies, and $s_j(t)$ complex-valued functions known as single-mode squeezing parameters \cite{Heib:2025:RWA,Ferraro:2005}. This Hamiltonian structure naturally involves quadratic combinations of creation and annihilation operators, i.e., precisely the same building blocks used in the realization of the Schrödinger algebra within the Weyl algebra presented in the proof of Proposition~\ref{prop:faithful:realization:of:schroedinger:algebra}. The terms $\omega_j\hat{a}_j^\dagger\hat{a}_j$ model a single free uncoupled harmonic oscillator, while the terms $s_j(t)\hat{a}_j^2+s_j^*(t)(\hat{a}_j^\dagger)^2$ correspond to active operations known as single-mode squeezing operators \cite{Ferraro:2005}. These operations are fundamental in continuous variable quantum information theory and quantum optics \cite{Adesso:Ragy:2014,Heib:2025:RWA}. The fact that this Hamiltonian models uncoupled oscillators is evident from the absence of cross-mode interaction terms, i.e., there are no products involving creation and annihilation operators from different modes.

The Hamiltonian \eqref{eqn:collection:single:mode:squeezed:oscillators} can be expressed in terms of the operators $\hat{h}_j,\hat{x}_j,\hat{y}_j$, up to an additive term proportional to $i\hat{\mathds{1}}$ (equivalently $\hat{z}$), which corresponds to a physically irrelevant global phase factor. This phase does not affect observable quantities and can therefore be omitted in the analysis. The resulting expression is:
\begin{align*}
    i\hat{H}(t)&=-\sum_{j=1}^m\left(2\im{s_j(t)}\hat{h}_j+\left(2\re{s_j(t)}+\omega_j\right)\hat{x}_j+\left(2\re{s_j(t)}-\omega_j\right)\hat{y}_j\right).
\end{align*}
For simplicity of exposition, we impose the assumptions $s_j(t)=s_k(t)\equiv s(t)$ and $\omega_j=\omega_k\equiv \omega$ for all $t\in\R$ and $j,k\in\{1,\ldots,m\}$. These condition corresponds to a system of identical modes, which significantly reduces the complexity of the problem by introducing the symmetry among the oscillators. However, to broaden the applicability of the analysis, we incorporate displacement operator on each mode, as described in \cite{Ferraro:2005}. These operations introduce different mode-dependent shifts and are parameterized by complex displacement coefficients $\alpha_j$ for each mode. This displacement operator acts as follows: $\hat{\mathds{1}}\mapsto\hat{\mathds{1}}$, $\hat{a}_j\mapsto\hat{a}_j+\alpha_j\hat{\mathds{1}}$, and $\hat{a}_j^\dagger\mapsto\hat{a}_j^\dagger+\alpha_j^*\hat{\mathds{1}}$. The resulting displaced Hamiltonian acquires additional linear terms in the operators $\hat{p}_j$ and $\hat{q}_j$ (resp. $\hat{a}_j$ and $\hat{a}_j^\dagger$) and reads consequently:
\begin{align*}
    i\hat{H}_\mathrm{f}(t)&=-2\im{s(t)}\sum_{j=1}^m\hat{h}-\left(2\re{s(t)}-\omega\right)\hat{x}-\left(2\re{s(t)}-\omega\right)\hat{y}+\sum_{j=1}^m\left(\lambda_j(t)\hat{p}_j+\kappa_j(t)\hat{q}_j\right),
\end{align*}
modulo an irrelevant global phase factor proportional to $\hat{z}$. The real-valued functions $\lambda_j(t),\kappa_j(t)$ depend on the displacement parameters $\alpha_j$, the squeezing parameter $s(t)$, as well as the frequency $\omega$. Their explicit form can be computed straightforwardly but is omitted here as it is not essential for the present discussion. Following the discussion in Sections~\ref{sec:prominent:examples:linear:quantum} and~\ref{sec:discussion:application:to:quantum:physics}, the Lie algebra required to compute the time-evolution of this system via a factorization ansatz is generated by $\mathcal{G}=\{\hat{h},\hat{x},\hat{y},\hat{p}_j,\hat{q}_j\}_{j=1}^m$, whose Lie closure satisfies, by the proof of Proposition~\ref{prop:faithful:realization:of:schroedinger:algebra}, $\lie{\mathcal{G}}=\sl{2}{\R}\ltimes\mathfrak{h}_m$, underscoring the importance of the Schrödinger algebra in describing the dynamics of multi-mode bosonic systems. 
Moreover, it is clear that the simplifying assumptions $s_j(t)=s_k(t)\equiv s(t)$ and $\omega_j=\omega_k\equiv \omega$  for every $t\in\R$ and $j,k\in\{1,\ldots,m\}$ dramatically reduce the dimensionality of the problem by collapsing  the system into a symmetric configuration. Without these assumptions, one would need to consider the full Lie algebra $\lie{\{\hat{h}_j,\hat{x}_j,\hat{y}_j,\hat{p}_j,\hat{q}_j\}_{j=1}^m}$, which is substantially larger (i.e., $\dim(\lie{\{\hat{h}_j,\hat{x}_j,\hat{y}_j,\hat{p}_j,\hat{q}_j\}_{j=1}^m})=5m+1$ against $\dim(\lie{\{\hat{h},\hat{x},\hat{y},\hat{p}_j,\hat{q}_j\}_{j=1}^m})=2m+4$) and more complex to handle computationally. However, using the graph-theoretic approach developed in this work it becomes evident that this Lie algebra is isomorphic to
\begin{align}
    \lie{\{\hat{h}_j,\hat{x}_j,\hat{y}_j,\hat{p}_j,\hat{q}_j,\hat{z}\}_{j=1}^m}\cong\left(\bigoplus_{j=1}^m\sl{2}{\R}\right)\ltimes\mathfrak{h}_m.\label{eqn:decomposition:generalized:schroedinger:algebra}
\end{align}
Let us explain this in more detail: First, one constructs the graph associated with $\lie{\{\hat{h}_j,\hat{x}_j,\hat{y}_j,\hat{p}_j,\hat{q}_j,\hat{z}\}_{j=1}^m}$ using the basis $\{\hat{h}_j,\hat{x}_j,\hat{y}_j,\hat{p}_j,\hat{q}_j,\hat{z}\}_{j=1}^m$. This graph is minimal because the basis is not overcomplete. Next, one recognizes that, according to Lemma~\ref{lem:ideal:span}, Lemma~ \ref{lem:lower:central:series:of:graphs} and Algorithm~\ref{alg:generating:lower:central:series}, the vertices labeled by the elements $\{\hat{p}_j,\hat{q}_j,\hat{z}\}_{j=1}^m$ span a two-step nilpotent ideal, which is precisely the Heisenberg algebra $\mathfrak{h}_m$. Furthermore, Corollary~\ref{cor:simple:subalgebra}, implies that the triples $\{\hat{h}_j,\hat{x}_j,\hat{y}_j\}$ span, for every $j\in\{1,\ldots,m\}$, a simple three-dimensional subalgebra. Employing then the Bianchi classification \cite{Bianchi:1903}, we know that there exist only two such algebras in the real case, namely $\sl{2}{\R}$ and $\mathfrak{su}(2)$. However, Theorem~4.4 from \cite{Joseph:1972} excludes the case $\mathfrak{su}(2)\cong\lie{\{\hat{h}_j,\hat{x}_j,\hat{y}_j\}}$, since these algebras can be realized within $\hat{A}_1$. Moreover, the commutator between two such algebras vanishes, i.e., $[\lie{\{\hat{h}_j,\hat{x}_j,\hat{y}_j\}},\lie{\{\hat{h}_k,\hat{x}_k,\hat{y}_k\}}]=\{0\}$ if $j\neq k$, which shows that the subalgebra generated by all such triples is a direct sum of $m$ copies of $\sl{2}{\R}$, i.e., $$\lie{\{\hat{h}_j,\hat{x}_j,\hat{y}_j\}_{j=1}^m}\cong\bigoplus_{j=1}^m\sl{2}{\R}.$$ Since the graph is minimal and each algebra $\sl{2}{\R}$ is as simple and non-solvable Lie algebra, the Heisenberg algebra $\mathfrak{h}_m$ forms the radical of the full Lie algebra $\lie{\{\hat{h}_j,\hat{x}_j,\hat{y}_j,\hat{p}_j,\hat{q}_j,\hat{z}\}_{j=1}^m}$. This leads to the Levi-Mal'tsev decomposition stated in \eqref{eqn:decomposition:generalized:schroedinger:algebra}, as the radical does not form the center of the Lie algebra.

\section{The Poincaré algebra}\label{app:poincare}
The \emph{Poincaré group}, also referred to as the \emph{inhomogeneous Lorentz group} \cite{Weinberg:1995}, comprises all coordinate transformations of Minkowski space that preserve the form of all physical laws \cite{Carroll:2019}. This group is a ten-dimensional and non-abelian Lie group that is generated by the following operators:
\begin{itemize}
    \item The Hamiltonian operator $H$,
    \item The momentum operators $P_1,P_2,P_3$,
    \item The angular momentum operators $J_1,J_2,J_3$,
    \item The three boost operators $K_1,K_2,K_3$.
\end{itemize}
These ten elements span the Poincaré algebra $\mathfrak{p}(1,3)$ \cite{Fushchich:1987}, which is also denoted by $\mathfrak{iso}(1,3)$ \cite{Goze:2007}. The commutation relations, and therefore the Lie bracket relations, among these basis elements are
\begin{align*}
    [J_j,J_k]&=i\sum_{\ell=1}^3\epsilon_{jk\ell}J_\ell,\;&\;[J_j,K_k]&=i\sum_{\ell=1}^3\epsilon_{jk\ell}K_\ell,\;&\;[K_j,K_k]&=-i\sum_{\ell=1}^3\epsilon_{jk\ell}J_\ell,\\
    [J_j,P_k]&=i\sum_{\ell=1}^3\epsilon_{jk\ell}P_\ell,\;&\;[K_j,P_k]&=-iH\delta_{jk},\;&\;[K_j,H]&=-iP_j,
\end{align*}
while all other commutators vanish \cite{Weinberg:1995}. The corresponding graph $G(V,E)$ is depicted in Figure~\ref{fig:poincare:algebra}. 
\begin{figure}[htpb]
    \centering
    \includegraphics[width=0.75\linewidth]{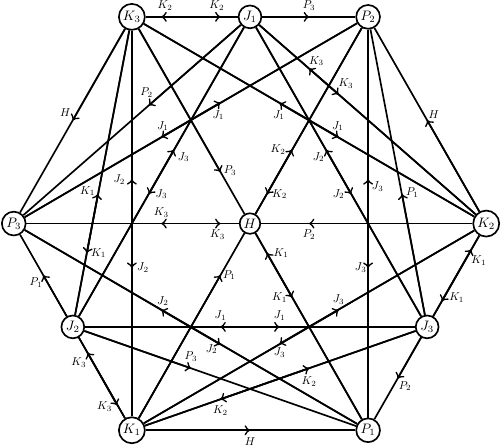}
    \caption{Depiction of the graph $G(V,E)$ associated with the Poincaré algebra $\mathfrak{p}(1,3)$ utilizing the basis $\{H,P_j,J_j,K_j\}_{j=1}^3$.}
    \label{fig:poincare:algebra}
\end{figure}

When considered over the complex numbers, the subalgebra spanned by the three boost and three angular momentum operators is the Lorentz algebra $\mathfrak{so}(3;1)_\C\cong\sl{2}{\C}\oplus\sl{2}{\C}$ (cf. Section~\ref{sec:prominent:examples:subsec:Lorentz:algebra}). The application of Lemma~\ref{lem:ideal:span} to the graph $G(V,E)$  implies that the three vertices labeled with the three momentum operators $P_1,P_2,P_3$ and the energy operator $H$ span together an ideal of the Poincaré algebra $\mathfrak{p}(1,3)$. Furthermore, since the subgraph containing only vertices  and edges labeled by these four operators contains no edges, it follows from Lemma~\ref{lem:abelian:criterion}, that the Lie algebra $\lie{\{P_1,P_2,P_3,H\}}$ is abelian. Moreover, because the graph $G(V,E)$ is minimal and the algebra $\mathfrak{so}(3;1)$ is semisimple, and therefore non-solvable, one has $\lie{\{P_1,P_2,P_3,H\}}=\operatorname{rad}(\mathfrak{p}(1,3))$. Thus, we can conclude that
\begin{align*}
    \mathfrak{p}(1,3)=(\sl{2}{\C}\oplus\sl{2}{\C})\ltimes\left(\bigoplus_{j=1}^4\mathfrak{a}\right)\cong(\sl{2}{\C}\oplus\sl{2}{\C})\ltimes\C^4,
\end{align*}
where $\mathfrak{a}$ denotes the one-dimensional abelian Lie algebra isomorphic to the complex numbers, i.e., $\mathfrak{a}\cong\C$. 

The Poincaré algebra plays a fundamental role in encoding the Lorentz invariance of inertial systems within the relativistic framework. However, in the classical limit (i.e., for velocities $v$ that satisfy $(v/c)^2\ll1$, where $c$ is the speed of light) this algebra must reduce to the Galileo group, thus capturing the Galileo invariance of classical mechanical systems. To understand this demand, one needs to  reintroduce the relevant physical constants into the bracket relations, rather than working in natural units, where the speed of light is normally set to $c\equiv1$ \cite{Jackson:2014}. The correspondingly modified relations read then:
\begin{align*}
    [J_j',J_k']&=i\sum_{\ell=1}^3\epsilon_{jk\ell}J_\ell',\;&\;[J_j',K_k']&=i\sum_{\ell=1}^3\epsilon_{jk\ell}K_\ell',\;&\;[K_j',K_k']&=-\frac{1}{c^2}i\sum_{\ell=1}^3\epsilon_{jk\ell}J_\ell',\\
    [J_j',P_k']&=i\sum_{\ell=1}^3\epsilon_{jk\ell}P_\ell',\;&\;[K_j',P_k']&=-\frac{1}{c^2}iH'\delta_{jk},\;&\;[K_j',H']&=-iP_j,
\end{align*}
as discussed in \cite{Heranz:2002}, where the real form of this algebra is considered. This real form can be obtained, by multiplying each operator with $-i$. As shown in \cite{Heranz:2002}, taking the classical limit $c\to\infty$ yields the Galileo algebra, which obeys the same bracket relations as above, except that $[K_j',K_k']=0=[K_j',P_k']$. In this case, the operators $P_j'$ represent spatial translations, $J_j'$ represent rotations, $K_j'$ correspond to Galilean boosts, and $H$ represents time translations. Note that this Galileo algebra is also known as the reduced classical Galileo algebra $A\Bar{G}_1(3)$ \cite{Nesterenko:2016}. The corresponding graph can be found in Figure~\ref{fig:galileo:algebra}.
\begin{figure}[htpb]
    \centering
    \includegraphics[width=0.75\linewidth]{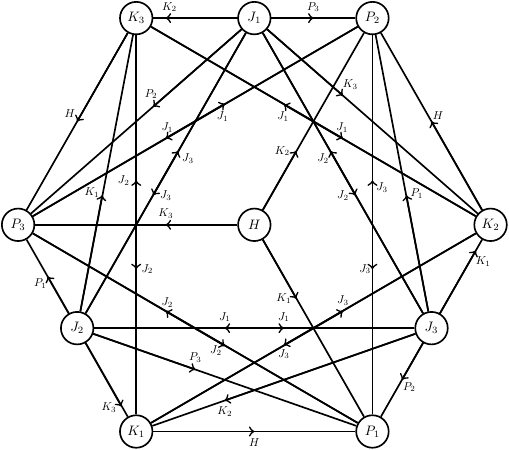}
    \caption{Depiction of the graph $G(V,E)$ associated with the Galileo algebra utilizing the basis $\{H',P_j',J_j',K_j'\}_{j=1}^3$ in the non-relativistic limit $c\to\infty$.}
    \label{fig:galileo:algebra}
\end{figure}
This graph can again be analyzed using the graph-theoretic framework developed in this work. Corollary~\ref{cor:simple:subalgebra} guarantees that this Galileo algebra contains a simple complex three-dimensional Lie subalgebra spanned by the three rotation operators $\{J_j'\}_{j=1}^3$. The only such algebra is $\sl{2}{\C}$ \cite{A1:project}, i.e., $\lie{\{J_j'\}_{j=1}^3}\cong\sl{2}{\C}$. Furthermore, the radical of this Galileo algebra, which we denote by $\mathfrak{ga}:=\lie{\{H',P_j',J_j',K_j'\}_{j=1}^3}$, is given by $\operatorname{rad}(\mathfrak{ga})=\lie{\{H',P_j',K_j'\}_{j=1}^3}$, since the corresponding vertices span by Lemma~\ref{lem:ideal:span} an ideal, which is by Theorem~\ref{thm:nilpotency:criteria:strong} nilpotent and therefore solvable. This ideal is furthermore the maximal solvable ideal of $\mathfrak{ga}$, since the graph is minimal and the three remaining vertices span a simple Lie algebra. Moreover, by computing the lower central series of the associated graphs, we find that the radical is a two-step nilpotent Lie algebra, where the first derived algebra is three-dimensional. The edge structure of the graph associated with the radical allows also a direct identification of $\operatorname{rad}(\mathfrak{ga})=\mathfrak{tn}_{3,7_A}$, where $\mathfrak{tn}_{3,7_A}$ denotes the indecomposable two-step nilpotent Lie algebra labeled by $3,7_A$ in \cite{Seeley:1993}. This identification follows because every edge terminating at a sinkhole either originates from the vertex labeled by $H'$ or is itself labeled by $H'$. The Levi-Mal'tsev decomposition theorem allows therefore the conclusion that:
\begin{align*}
    \mathfrak{ga}\cong\sl{2}{\C}\ltimes\mathfrak{tn}_{3,7_A}.
\end{align*}

\section{Generalizing the approach using Lie algebra gradations }\label{app:generalization}
This appendix is devoted to discussing claims presented in Section~\ref{sec:generalization} of the main text, where the focus was on generalizing the graph-theoretic framework developed earlier to broader classes if Lie algebras using Lie algebra gradations by abelian magmas. In particular, we explore the validity of Conjecture~\ref{con:extension:conjecture:weak} for real Lie algebras of low dimension as an illustrative case.

The following proposition below formalizes a simple yet useful idea that will play a role in the verification strategy employed later in this appendix.

\begin{proposition}\label{prop:direct:sum:of:graph:admissible}
    Let $\g$ be a finite-dimensional Lie algebra. If $\g$ can be decomposed as the direct sum of graph-admissible Lie algebras, then $\g$ is graph-admissible. Furthermore, if $\g$ can be decomposed as the direct sum of minimal-graph-admissible Lie algebras, then $\g$ is minimal-graph-admissible.
\end{proposition}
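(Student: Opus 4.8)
The plan is to prove Proposition~\ref{prop:direct:sum:of:graph:admissible} directly from the definition of graph-admissibility (Definition~\ref{def:graph:admissible}) by gluing together the bases of the summands. Suppose $\g = \g^{(1)}\oplus\cdots\oplus\g^{(r)}$ where each $\g^{(s)}$ is graph-admissible. Then each $\g^{(s)}$ admits a (possibly overcomplete) basis $\mathcal{B}^{(s)}=\{\Tilde{x}^{(s)}_j\}_{j\in\mathcal{M}^{(s)}}$ of non-zero elements satisfying the bracket relations \eqref{eqn:desired:basis:overcomplete} with an antisymmetric matrix $\Tilde{\boldsymbol{\alpha}}^{(s)}$ and a symmetric index function $\delta^{(s)}$, and with $\Tilde{x}^{(s)}_j\propto\Tilde{x}^{(s)}_k$ iff $j=k$. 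First I would take the disjoint union $\mathcal{B}:=\bigsqcup_{s=1}^r\mathcal{B}^{(s)}$, indexed by the disjoint union $\mathcal{M}:=\bigsqcup_{s=1}^r\mathcal{M}^{(s)}$, and observe that $\mathcal{B}$ spans $\g$ since the $\mathcal{B}^{(s)}$ span the respective summands and $\g$ is their direct sum.

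Next I would verify that $\mathcal{B}$ satisfies \eqref{eqn:desired:basis:overcomplete}. The key point is that the direct-sum structure forces $[\Tilde{x}^{(s)}_j,\Tilde{x}^{(t)}_k]=0$ whenever $s\neq t$, because $[\g^{(s)},\g^{(t)}]\subseteq\g^{(s)}\cap\g^{(t)}=\{0\}$ for a Lie-algebra direct sum. Hence for any pair of basis elements in $\mathcal{B}$ the bracket either vanishes (when they come from different summands, or when their $\delta^{(s)}$ value is $0$) or equals $\Tilde{\alpha}^{(s)}_{jk}\Tilde{x}^{(s)}_{\delta^{(s)}(j,k)}$, which is proportional to an element of $\mathcal{B}$. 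Concretely I would define the combined antisymmetric matrix $\Tilde{\boldsymbol{\alpha}}$ as the block-diagonal matrix with blocks $\Tilde{\boldsymbol{\alpha}}^{(s)}$ and the combined symmetric index function $\delta$ by $\delta(j,k):=\delta^{(s)}(j,k)$ if $j,k$ both lie in $\mathcal{M}^{(s)}$ and $\delta(j,k):=0$ otherwise; one checks $\Tilde{\boldsymbol{\alpha}}$ is antisymmetric and $\delta$ is symmetric by construction. The non-proportionality condition $\Tilde{x}_j\propto\Tilde{x}_k\Leftrightarrow j=k$ is inherited within each block, and across blocks it holds because a nonzero element of $\g^{(s)}$ can never be a nonzero scalar multiple of a nonzero element of $\g^{(t)}$ for $s\neq t$ (their only common element is $0$). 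This establishes that $\g$ is redundant-graph-admissible, hence graph-admissible.

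For the second claim, assume each $\g^{(s)}$ is minimal-graph-admissible, so each $\mathcal{B}^{(s)}$ is in fact a basis of $\g^{(s)}$ (not merely overcomplete), meaning $|\mathcal{M}^{(s)}|=\dim(\g^{(s)})$ and the elements are linearly independent. Then $\mathcal{B}=\bigsqcup_s\mathcal{B}^{(s)}$ has cardinality $\sum_s\dim(\g^{(s)})=\dim(\g)$, and it is linearly independent: a vanishing linear combination, after projecting onto each summand $\g^{(s)}$, forces the $\g^{(s)}$-component coefficients to vanish by linear independence of $\mathcal{B}^{(s)}$. Hence $\mathcal{B}$ is a genuine basis of $\g$ satisfying the minimal bracket relations \eqref{eqn:desired:basis}, so $\g$ is minimal-graph-admissible. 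There is no serious obstacle here; the only point requiring a moment of care is the cross-summand vanishing of brackets and the cross-summand non-proportionality, both of which follow immediately from the definition of a Lie-algebra direct sum, so the proof is essentially a bookkeeping argument combining the data of the summands.
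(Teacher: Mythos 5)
Your proof is correct and is exactly the direct-from-the-definitions verification that the paper itself invokes (the paper simply states the claim "is straightforward and can be verified directly from the given definitions" without writing it out). Your bookkeeping — block-diagonal $\Tilde{\boldsymbol{\alpha}}$, $\delta=0$ across summands, cross-summand brackets and proportionalities ruled out by $\g^{(s)}\cap\g^{(t)}=\{0\}$, and the dimension count in the minimal case — fills in precisely the details the paper leaves to the reader.
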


\begin{proof}
    The assertion is straightforward and can be verified directly from the given definitions.
\end{proof}

We now aim to verify  Conjecture~\ref{con:extension:conjecture:weak} for real Lie algebras in the low-dimensional case. Specifically:

\begin{proposition}\label{prop:existence:dim:four:lower:magma:suitable:gradation}
    Let $\g$ be a real Lie algebra with $\dim(\g)\leq 4$. Then, there exists an abelian magma $(\mathcal{M},\delta)$ such that $\g$ is  $\mathcal{M}$-magma-graded and the graph $G_\g(V,E)$ associated with the pair $(\g,(\mathcal{M},\delta))$, obtained via Algorithm~\ref{alg:creating:graph:modified}, faithfully represents the inner structure of $\g$, in the sense that the following assertions hold:
    \begin{enumerate}[label = (\roman*)]
        \item \textbf{Solvability.}  The Lie algebra $\g$ is non-solvable if and only if $G_\g(V,E)$ contains a self-contained subgraph $G_W\equiv G_W(\Tilde{V},\Tilde{E})$ that is induced by a closed directed walk $W$.
        \item \textbf{Derived Series.} The sequence of graphs obtained via Algorithm~\ref{alg:generating:generating:the:graph:derived:altered:modified} can be associated with the Lie algebras of the derived series of $\g$, in the sense that for every $\ell\in\N_{\geq0}$, the direct sum of the subspaces labeling the vertices of the graphs $\mathcal{D}^\ell G_\g(V,E)$ coincide with derived algebras $\mathcal{D}^\ell\g$.
        \item \textbf{Nilpotency.} The Lie algebra $\g$ is nilpotent if and only if $G(V,E)$ contains no closed directed walk.
        \item \textbf{Lower central series.} The sequence of graphs obtained via Algorithm~\ref{alg:generating:lower:central:series:modified} can be associated with the lower central series of $\g$, in the sense that for every $\ell\in\N_{\geq0}$ the direct sum of the subspaces labeling the vertices of the graphs $\mathcal{C}^\ell G_\g(V,E)$ coincide with Lie algebras $\mathcal{D}^\ell\g$ of the lower central series.
        \item \textbf{Ideals.}  If  $W\subseteq V$ is a subset that satisfies the ideal-graph-property, then the direct sum of the subspaces labeling the vertices in $W$ span an ideal of $\g$.
        \item \textbf{Simplicity.} If $\g$ is simple, then there exists a closed directed walk $W$ that induces a subgraph $G(\Tilde{V},\Tilde{E})$ with $V=\Tilde{V}$.
        \item \textbf{Semisimplicity.} If $\g$ is semisimple, then every vertex is part of a closed directed walk that induces a self-contained subgraph.
    \end{enumerate}
\end{proposition}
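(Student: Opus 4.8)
\textbf{Proof strategy for Proposition~\ref{prop:existence:dim:four:lower:magma:suitable:gradation}.}

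The plan is to proceed by a finite case analysis, leveraging the known classification of real Lie algebras of dimension at most four (see, e.g., \cite{DeGraaf:2004}). The dimensions $1$ and $2$ are immediate: every such algebra is minimal-graph-admissible (being abelian, or the two-dimensional non-abelian algebra $\mathfrak{aff}(\R)$ which satisfies \eqref{eqn:desired:basis}), so by Proposition~\ref{prop:mapping:minimal:graph:to:minimal:magma} the graph produced by Algorithm~\ref{alg:creating:graph:modified} is, up to the bijection $\Phi$, the same as the one produced by Algorithm~\ref{alg:creating:graph}, and all assertions (i)--(vii) follow directly from the corresponding results already established in the main text (Theorems~\ref{thm:non:solvability:condition:strong}, \ref{thm:nilpotency:criteria:strong}, Lemmas~\ref{lem:derived:series:graph:alg:valid}, \ref{lem:lower:central:series:of:graphs}, \ref{lem:ideal:span}, and Proposition~\ref{prop:weak:simplicity:cond:graph:based}). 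For dimension $3$, I would invoke Lemma~\ref{lem:L:alpha:three:graph:admissible:if:and:only:if} together with the Bianchi classification: all three-dimensional real Lie algebras except the family $L_\alpha^3$ with $\alpha \le -1/4$ (and irrational associated angle) are graph-admissible, so for those the statement again reduces to the main-text results via $\Phi$; the remaining algebras $L_\alpha^3$ are handled using the explicit gradation $\g_1 = \spn\{e_1,e_2\}$, $\g_2 = \spn\{e_3\}$ exhibited in Example~\ref{exa:graph:associated:to:non:graph:admissible:lie:algebra}, where assertions (i)--(vii) were already verified by direct inspection.

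The substantive work is in dimension $4$. Here I would split into the decomposable and indecomposable cases. For decomposable algebras $\g = \g' \oplus \g''$ with $\dim \g' + \dim \g'' = 4$ and each summand of dimension at most $3$, Proposition~\ref{prop:direct:sum:of:graph:admissible} and the gradation-combining construction show that if each summand admits a suitable magma-gradation then so does $\g$: one takes $\mathcal{M} = \mathcal{M}' \sqcup \mathcal{M}''$ with $\delta$ acting componentwise (sending cross-terms to $\{0\}$), and the associated graph is the disjoint union of the two graphs, which are unconnected subgraphs; each of assertions (i)--(vii) then follows from the summand-wise statements together with the observation that solvability, nilpotency, ideals, and the derived/lower-central series of a direct sum decompose accordingly, while a direct sum of two nonzero algebras is never simple. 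For indecomposable four-dimensional real Lie algebras, I would go through the classification list (the solvable families $\gl_{4,j}$ in the notation of \cite{DeGraaf:2004,Popovych:2003}, noting there are no indecomposable non-solvable algebras of dimension $4$, since the smallest non-solvable algebras are the three-dimensional simple ones and $\dim \mathfrak{aff}(\R) \ltimes (\text{radical})$ type constructions of dimension $4$ are decomposable or already covered). For each such algebra I would either exhibit a basis satisfying \eqref{eqn:desired:basis} (so that minimal-graph-admissibility holds and the main-text results apply), or, where no such basis exists, construct an explicit abelian magma $(\mathcal{M},\delta)$ of suitable granularity by grouping basis elements so that $[\g_j,\g_k] \subseteq \g_{\delta(j,k)}$, typically isolating the derived algebra and the center into their own graded pieces, and then verify (i)--(vii) on the resulting small graph by hand — as was done for $L_\alpha^3$ and (in this appendix) for $\gl_{3,2}$.

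The main obstacle I anticipate is twofold. First, one must confirm that for every four-dimensional algebra that is \emph{not} minimal-graph-admissible, there actually exists a magma-gradation fine enough that the pruning algorithms (Algorithms~\ref{alg:generating:generating:the:graph:derived:altered:modified} and~\ref{alg:generating:lower:central:series:modified}) correctly reproduce the derived and lower central series — this is not automatic, because a gradation may satisfy $\{0\} \neq [\g_j,\g_k] \subsetneq \g_{\delta(j,k)}$ for some pairs, and in such cases the naive reading "target vertex $\Rightarrow$ contained in derived algebra" can over- or under-count; the fix is to choose the gradation so that all proper strict inclusions land inside a piece (e.g.\ the center or $[\g,\g]$) whose membership in the derived/central series is forced anyway, and to check this ad hoc for each algebra on the list. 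Second, the case analysis itself is lengthy, and care is required to ensure the list of four-dimensional real Lie algebras used is complete and to handle the one-parameter families uniformly rather than parameter-value by parameter-value; I would organize the families by their derived-series type (abelian derived algebra versus two-step nilpotent derived algebra) to reduce the number of genuinely distinct graph shapes that need checking. Since the proposition only asserts \emph{existence} of a suitable gradation, I retain the freedom to pick whichever gradation is most convenient for each algebra, which keeps each individual verification short.
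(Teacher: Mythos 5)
Your overall strategy coincides with the paper's: both proceed by a finite case analysis over the classification of real Lie algebras of dimension at most four, dispatching every minimal-graph-admissible algebra through Proposition~\ref{prop:mapping:minimal:graph:to:minimal:magma} (so that (i)--(vii) follow from Theorems~\ref{thm:non:solvability:condition:strong}, \ref{thm:nilpotency:criteria:strong}, Lemmas~\ref{lem:derived:series:graph:alg:valid}, \ref{lem:lower:central:series:of:graphs}, \ref{lem:ideal:span} and Proposition~\ref{prop:weak:simplicity:cond:graph:based}), and handling the remaining algebras by an explicit coarse gradation --- typically $\g_1$ a two-dimensional invariant subspace and $\g_2$ a complement --- with (i)--(vii) verified by hand on the resulting two- or three-vertex graph, exactly as the paper does for $\gl_{3,2}$, $\gl_{3,5}^{(\beta)}$, $\gl_{4,2}^{(\beta)}$, $\gl_{4,4}$, $\gl_{4,6}^{(\alpha,\beta)}$, $\gl_{4,7}$ and $\gl_{4,9}^{(\alpha)}$. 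The one genuine difference is organizational: you propose a general direct-sum lemma (disjoint union of magma gradations, componentwise $\delta$, graphs glued as unconnected components) to cover all decomposable four-dimensional algebras at once, whereas the paper simply walks through its twenty-case Mubarakzyanov list and re-verifies $\gl_{3,2}\oplus\gl_1$ and $\gl_{3,5}^{(\beta)}\oplus\gl_1$ explicitly; your route buys brevity at the cost of having to check, once, that solvability, the two series, ideals, and (semi)simplicity all decompose componentwise, which is routine. Your observation that there are no indecomposable non-solvable algebras in dimension four is correct and is implicitly used by the paper too.

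Two points in your write-up need repair, though both are fixable. First, the reduction via $\Phi$ requires \emph{minimal}-graph-admissibility, not mere graph-admissibility: a redundant graph does not arise from any magma gradation, since its vertices are not linearly independent and so do not furnish a direct-sum decomposition. Hence the members of $L^3_\alpha$ with $\alpha\le -1/4$ that are redundant-graph-admissible (rational associated angle, e.g.\ $L_{-1}^3$) cannot be routed through $\Phi$; they must go through the same coarse gradation $\g_1=\spn\{x_1,x_2\}$, $\g_2=\spn\{x_3\}$ as the non-graph-admissible ones --- which is harmless, since Example~\ref{exa:graph:associated:to:non:graph:admissible:lie:algebra} and Propositions~\ref{prop:possible:grapha:L:alpha:3:leq:minus:quarter:with:exceotions}, \ref{prop:possible:grapha:L:minus_quarter:3} cover all $\alpha\le -1/4$ uniformly (indeed $\operatorname{fg}(L_\alpha^3)=2$ there, so no finer gradation exists). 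Your parenthetical ``(and irrational associated angle)'' suggests you intended to treat the rational-angle cases via $\Phi$, which would be a gap; delete that restriction and send all of $\alpha\le-1/4$ through the coarse gradation. Second, in the direct-sum construction $\delta$ must map into $\mathcal{M}$, so cross-terms cannot be sent to ``$\{0\}$''; since $[\g_j',\g_k'']=\{0\}\subseteq\g_\ell$ for every $\ell$, simply assign them to an arbitrary fixed element of $\mathcal{M}$, exactly as in the convention used in the proof of Lemma~\ref{lem:minimal:magma:graded:minimal:graph:admissible}.
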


This proposition can be verified by considering every real Lie algebra $\g$ with $\dim(\g)\leq 4$. For this purpose, we employ Mubarakzyanov's classification of low-dimensional real Lie algebras and adopt the same notation \cite{Mubarakzyanov:1963,Popovych:2003}. In particular, we denote the basis elements of the corresponding $n$-dimensional Lie algebras as $e_1,\ldots,e_n$ and only adduce non-trivial Lie brackets.

\begin{proof}
    Establishing the existence of an abelian magma $(\mathcal{M},\delta)$ for grading a Lie algebra in the following decompositions is straightforward: it can be constructed by assigning a suitable collection of basis elements (and, therefore, the corresponding subspace) of the Lie algebra to a distinct element of $\mathcal{M}$, and defining $\delta$ according to the Lie bracket structure. This ensures that the grading accurately reflects the structure of the algebra.

    \vspace{0.2cm}
    
    \noindent* \textbf{One-dimensional case.} 
    
    In this case, only the abelian Lie algebra $\gl_1\cong\R$ exists \cite{Bianchi:1903}. This Lie algebra is clearly minimal-graph-admissible, and the claim follows directly from Proposition~\ref{prop:mapping:minimal:graph:to:minimal:magma}.

    \vspace{0.2cm}
    
    \noindent* \textbf{Two-dimensional case.}
    
    In this case there exist only two distinct real Lie algebras: the abelian Lie algebra $\gl_1\oplus\gl_1\cong\R^2$ and the affine Lie algebra $\gl_{2,1}\cong\mathfrak{aff}(1,\R)$ \cite{Andrada:2005}. The former is trivially minimal-graph-admissible, while the latter satisfies the non-trivial relation $[e_1,e_2]=e_2$ \cite{Andrada:2005}, making it minimal-graph-admissible as well. Therefore, analogous to the one-dimensional case, the claim follows from Proposition~\ref{prop:mapping:minimal:graph:to:minimal:magma}.

    \vspace{0.2cm}
    
    \noindent* \textbf{Three-dimensional case.}
    
    In this case, the procedure can be dived into the following nine subcases:
   
    \begin{itemize}
        \item $\gl_1\oplus\gl_1\oplus\gl_1$: This Lie algebra is abelian and therefore clearly minimal-graph-admissible. The claim follows consequently directly from Proposition~\ref{prop:mapping:minimal:graph:to:minimal:magma}.
        \item $\gl_{2,1}\oplus\gl_1$: This Lie algebras is, by Proposition~\ref{prop:direct:sum:of:graph:admissible}, obtained as the direct sum of minimal-graph-admissible Lie algebras that are themselves minimal-graph-admissible. Thus, the claim follows from Proposition~\ref{prop:mapping:minimal:graph:to:minimal:magma}.
        \item $\gl_{3,1}$: This Lie algebra is known as the Heisenberg algebra $\mathfrak{h}_1$ \cite{Gosson:2006,Kac:1990} and admits a basis $\{e_1,e_2,e_3\}$ such that $[e_2,e_3]=e_1$. Thus, $\gl_{3,1}$ is minimal-graph-admissible, and the claim follows from Proposition~\ref{prop:mapping:minimal:graph:to:minimal:magma}.
        \item $\gl_{3,2}$. This Lie algebra admits a basis $\{e_1,e_2,e_3\}$ such that $[e_1,e_3]=e_1$, and $[e_2,e_3]=e_1+e_2$. Thus, we can choose $\mathcal{M}=\{1,2\}$, $\g_1:=\spn\{e_1,e_2\}$, $\g_2:=\spn\{e_3\}$, and define $\delta:\mathcal{M}\times\mathcal{M}\to\mathcal{M}$, such that $[\g_1,\g_1]\subseteq\g_2$, $[\g_1,\g_2]=[\g_2,\g_1]=\g_{1}$, and $[\g_2,\g_2]\subseteq\g_1$, where $[\g_1,\g_1]=\{0\}=[\g_2,\g_2]$. The resulting graph $G_{\gl_{3,2}}(V,E)$ therefore reads:\par\noindent
       \begin{minipage}{\linewidth}
            \begin{figure}[H]
                \centering
                \includegraphics[width=0.3\linewidth]{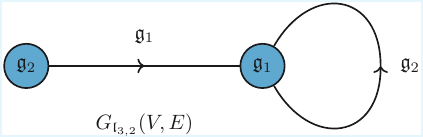}
            \end{figure}
            \end{minipage}
            \vspace{0.3cm}
            
        \noindent We can now verify the assertions laid out before:
        \begin{enumerate}[label = (\roman*)]
            \item The labeled directed graph $G_{\gl_{3,2}}(V,E)$ contains no self-contained subgraph that is induced by a closed direct walk, and therefore the Lie algebra $\gl_{3,2}$ is solvable \cite{Popovych:2003}, thus conforming the validity of claim (i) in this case.
            \item The derived graphs of  $G_{\gl_{3,2}}(V,E)$ are given by:\par\noindent
            \begin{minipage}{\linewidth}
            \begin{figure}[H]
                \centering
                \includegraphics[width=0.8\linewidth]{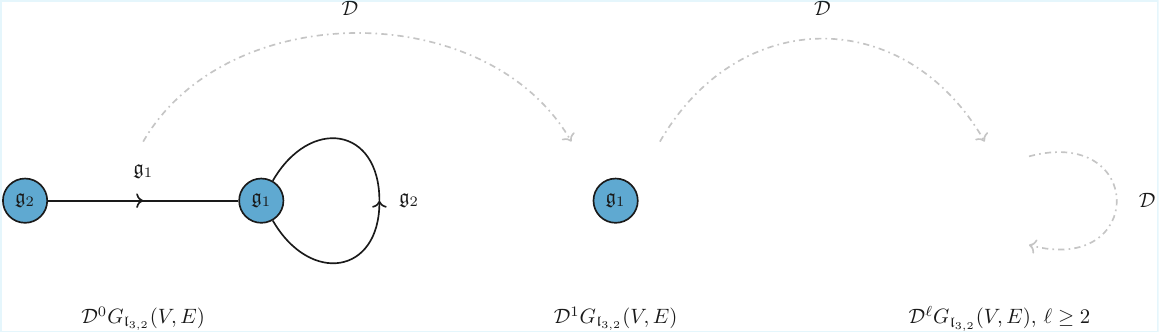}
            \end{figure}
            \end{minipage}
            \vspace{0.3cm}
            
            \noindent Statement (ii) now claims that the derived graphs $\mathcal{D}^\ell G_{\gl{_{3,2}}}(V,E)$ of $ G_{\gl{_{3,2}}}(V,E)$ correspond to the derived series of $\gl_{3,2}$. Specifically: $\mathcal{D}^0\gl_{3,2}=\g_1\oplus\g_2=\gl_{3,2}$, $\mathcal{D}^1\gl_{3,2}=\g_1\cong\gl_1\oplus\gl_1\cong\R^2$, and $\mathcal{D}^\ell\gl_{3,2}=\{0\}$. This is straightforward to verify.
            \item The graph $G_{\gl_{3,2}}(V,E)$ contains a closed direct walk and the Lie algebra $\gl_{3,2}$ is non-nilpotent, since $[\g_1\oplus\g_2,\g_1]=\g_1$ implies that $\mathcal{C}^\ell\gl_{3,2}\subseteq\g_1$ for all $\ell\in\N_{\geq1}$.
            \item The lower central series of the graph $G_{\gl_{3,2}}(V,E)$ is given by:\par\noindent
            \begin{minipage}{\linewidth}
            \begin{figure}[H]
                \centering
                \includegraphics[width=0.65\linewidth]{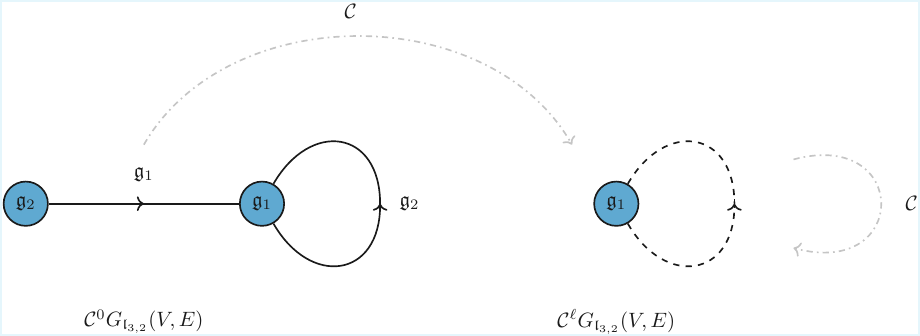}
            \end{figure}
            \end{minipage}
            \vspace{0.3cm}
            
            \noindent Statement (iv) now claims that the lower central series of graphs $\mathcal{C}^\ell G_{\gl{_{3,2}}}(V,E)$ correspond to the algebras of the lower central series of $\gl_{3,2}$. Specifically:  $\mathcal{C}^0\gl_{3,2}=\g_1\oplus\g_2=\gl_{3,2}$, and $\mathcal{C}^\ell\gl_{3,2}=\g_1\cong\R^2$. This is correct and easy to verify.
            \item The only subsets of $V$ that satisfy the ideal graph property are $\{\g_1\}$ and $\{\g_1,\g_2\}$. It is straightforward to verify that the spaces $\g_1$ and $\g_1\oplus\g_2$ are ideals of $\g_{3,2}$, confirming the validity of claim (v) in this case.
            \item There exists no closed directed walk $W$ in $G_{\gl_{3,2}}(V,E)$ that induces a subgraph $G_{\gl_{3,2}}(\Tilde{V},\Tilde{E})$ with $\Tilde{V}=V$, and thus $\gl_{3,2}$ is not simple \cite{Popovych:2003}, confirming the validity of claim (vi) in this case.
            \item The vertex labeled with the space $\g_1$ is never part of a closed directed walk and thus never part of a closed directed walk that induces a self-contained subgraph $G_{\gl_{3,2}}(\Tilde{V},\Tilde{E})$. Moreover, $\gl_{3,2}$ is not semisimple, as it is not a simple Lie algebra \cite{Popovych:2003}, confirming the validity of claim (vii) in this case.
        \end{enumerate}
        \item $\gl_{3,3}$: This Lie algebra admits a basis $\{e_1,e_2,e_3\}$ such that $[e_1,e_3]=e_1$ and $[e_2,e_3]=e_2$. Thus, $\gl_{3,3}$ is minimal-graph-admissible and the claim follows from Proposition~\ref{prop:mapping:minimal:graph:to:minimal:magma}.
        \item $\gl_{3,4}^{(\alpha)}$: In the special case $\alpha=-1$, this Lie algebra coincides with the Poincaré algebra $\mathfrak{p}(1,1)$ \cite{Nesterenko:2015}, and this family generally admits a basis $\{e_1,e_2,e_3\}$ such that $[e_1,e_3]=e_1$ and $[e_2,e_3]=\alpha e_2$ with $-1\leq \alpha<1$ and $\alpha\neq 0$. Thus, $\gl_{3,4}^{(\alpha)}$ is minimal-graph-admissible, and the claim follows from Proposition~\ref{prop:mapping:minimal:graph:to:minimal:magma}.
        \item $\gl_{3,5}^{(\beta)}$:  This family of Lie algebras admits a basis $\{e_1,e_2,e_3\}$ such that $[e_1,e_3]=\beta e_1-e_2$ and $[e_2,e_3]=e_1+\beta e_2$ with $\beta\geq 0$.
        Thus, we can choose $\mathcal{M}=\{1,2\}$, $\g_1:=\spn\{e_1,e_2\}$, $\g_2:=\spn\{e_3\}$, and define $\delta:\mathcal{M}\times\mathcal{M}\to \mathcal{M}$, such that $[\g_1,\g_1]\subseteq\g_2$, $[\g_1,\g_2]=[\g_2,\g_1]=\g_1$, and $[\g_2,\g_2]\subseteq\g_1$, where $[\g_1,\g_1]=\{0\}=[\g_2,\g_2]$. The resulting graph $G_{\g_{3,5}^{(\beta)}}(V,E)$ is consequently equivalent to the graph $G_{\gl_{3,2}}(V,E)$ obtained earlier for the Lie algebra $\gl_{3,2}$. Therefore, we only need to confirm that the same assertions as for the Lie algebra $\gl_{3,2}$ hold:
        \begin{enumerate}[label = (\roman*)]
            \item The Lie algebra $\gl_{3,5}^{(\beta)}$ is solvable \cite{Popovych:2003}, confirming the validity of claim (i) in this case.
            \item The derived Lie algebras of $\gl_{3,5}^{(\beta)}$ are: $\mathcal{D}^0\gl_{3,5}^{(\beta)}=\g_1\oplus\g_2=\gl_{3,5}^{(\beta)}$, $\mathcal{D}^1\gl_{3,5}^{(\beta)}=\g_1\cong\R^2$, and $\mathcal{D}^1\gl_{3,5}^{(\beta)}=\{0\}$ for all $\ell\in\N_{\geq2}$, which conforms the validity of claim (ii) in this case.
            \item The Lie algebra $\gl_{3,5}^{(\beta)}$ is not nilpotent, since $[\g_1\oplus\g_2,\g_1]=\g_1$, conforming claim (iii) in this case.
            \item The Lie algebras of the lower central series of $\gl_{3,5}^{(\beta)}$ are $\mathcal{C}^0\gl_{3,5}^{(\beta)}=\gl_{3,5}^{(\beta)}$ and $\mathcal{C}^\ell\gl_{3,5}^{(\beta)}=\g_1\cong\R^2$ for all $\ell\in\N_{\geq1}$, confirming the validity of claim (iv) in this case.
            \item The spaces $\g_1$ and $\g_1\oplus\g_2$ are ideals of $\g_{3,2}$, establishing the validity of claim (v) in this case.
            \item The Lie algebra $\gl_{3,5}^{(\beta)}$ is not simple, as it contains the proper non-zero ideal $\g_1$, which establishes the validity of claim (vi) in this case.
            \item The Lie algebra $\gl_{3,5}^{(\beta)}$ is not semisimple, as it contains the proper non-zero solvable ideal $\g_1$, which establishes the validity of claim (vii).
        \end{enumerate}
        \item $\gl_{3,6}$: This Lie algebra is isomorphic to the special linear Lie algebra $\sl{2}{\R}$ \cite{Pfeifer:2003}. In Example~\ref{exa:second:example:graphs:assocaited:with:algebras}, we have shown that this algebra is minimal-graph-admissible, and therefore the claim follows from Proposition~\ref{prop:mapping:minimal:graph:to:minimal:magma}.
        \item $\gl_{3,7}$: This Lie algebra is isomorphic to the special unitary Lie algebra $\mathfrak{su}(2)$ \cite{helgason2024differential}. In Example~\ref{exa:first:example:graphs:assocaited:with:algebras}, we demonstrated that this algebra is minimal-graph-admissible, and therefore the claim follows from Proposition~\ref{prop:mapping:minimal:graph:to:minimal:magma}.
    \end{itemize}

    \noindent\textbf{Four-dimensional case.}
    
    Here, the procedure can be subdivided into the following twenty subcases:

    \begin{itemize}
        \item $\gl_1\oplus\gl_1\oplus\gl_1\oplus\gl_1$: This Lie algebra is abelian and therefore clearly minimal-graph-admissible. The claim follows consequently from Proposition~\ref{prop:mapping:minimal:graph:to:minimal:magma}.
        \item $\gl_{2,1}\oplus\gl_1\oplus\gl_1$: This Lie algebra is the direct sum of minimal-graph-admissible Lie algebras, as shown above. Thus, the claim follows from Proposition~\ref{prop:mapping:minimal:graph:to:minimal:magma}.
        \item $\gl_{2,1}\oplus\gl_{2,1}$:  This Lie algebra is the direct sum of minimal-graph-admissible Lie algebras, as shown above. Hence, the claim follows from Proposition~\ref{prop:mapping:minimal:graph:to:minimal:magma}.
        \item $\gl_{3,1}\oplus\g_1$:  This Lie algebra is the direct sum of minimal-graph-admissible Lie algebras, as shown above. Therefore, the claim follows from Proposition~\ref{prop:mapping:minimal:graph:to:minimal:magma}.
        \item $\gl_{3,2}\oplus\gl_1$: This Lie algebra admits a basis $\{e_1,e_2,e_3,e_4\}$ such that $[e_1,e_3]=e_1$ and $[e_2,e_3]=e_1+ e_2$. Thus, we can choose $\mathcal{M}=\{1,2,3\}$, $\g_1:=\spn\{e_1,e_2\}$, $\g_2:=\spn\{e_3\}$, $\g_3:=\spn\{e_4\}$ and define $\delta:\mathcal{M}\times\mathcal{M}\to\mathcal{M}$, such that $[\g_1,\g_2]=[\g_2,\g_1]=\g_{1}$, while $[\g_j,\g_k]=\{0\}$ for all remaining pairs of $(j,k)\in\mathcal{M}\times\mathcal{M}$. The resulting graph $G_{\gl_{3,2}\oplus\gl_1}(V,E)$ associated with $\gl_{3,2}\oplus\gl_1$ is consequently:\par\noindent
            \begin{minipage}{\linewidth}
            \begin{figure}[H]
                \centering
                \includegraphics[width=0.3\linewidth]{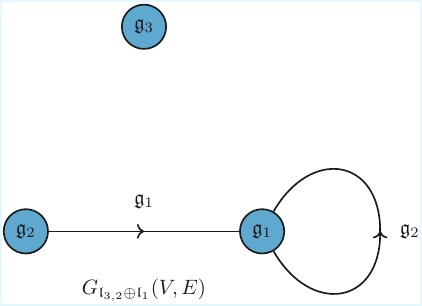}
            \end{figure}
            \end{minipage}
            \vspace{0.3cm}
            
            \noindent  We can now verify the following assertions:
        \begin{enumerate}[label = (\roman*)]
            \item The labeled directed graph $G_{\gl_{3,2}\oplus\gl_1}(V,E)$ contains no self-contained subgraph that is induced by a closed direct walk, and thus the Lie algebra $\gl_{3,2}\oplus\gl_1$ is solvable \cite{Popovych:2003}, confirming the validity of claim (i) in this case.
            \item The derived graphs of  $G_{\gl_{3,2}\oplus\gl_1}(V,E)$ are:\par\noindent
            \begin{minipage}{\linewidth}
            \begin{figure}[H]
                \centering
                \includegraphics[width=0.9\linewidth]{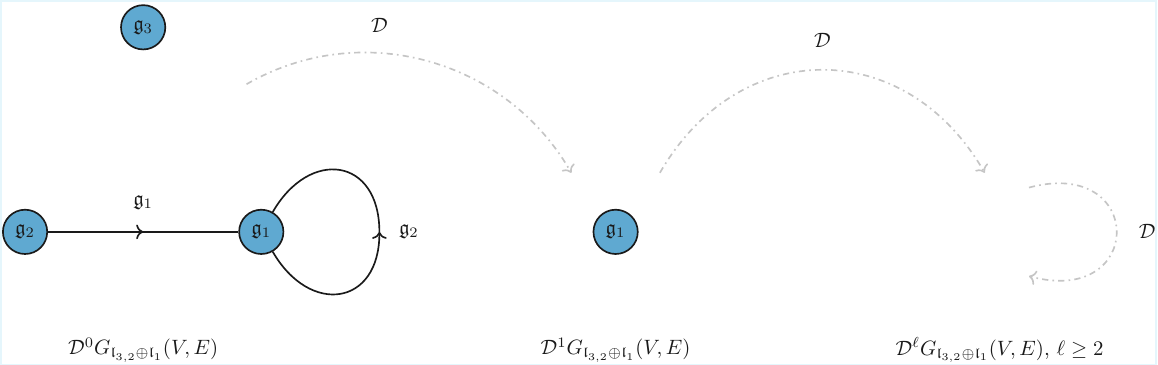}
            \end{figure}
            \end{minipage}
            \vspace{0.3cm}
            
            \noindent  Statement (ii) now claims that the derived graphs $\mathcal{D}^\ell G_{\gl{_{3,2}}\oplus\gl_1}(V,E)$ of $ G_{\gl{_{3,2}}\oplus\gl_1}(V,E)$ correspond to the derived series of $\gl_{3,2}\oplus\gl_1$. Specifically: $\mathcal{D}^0(\gl_{3,2}\oplus\gl_1)=\g_1\oplus\g_2\oplus\gl_3=\gl_{3,2}$, $\mathcal{D}^1(\gl_{3,2}\oplus\gl_1)=\g_1\cong\R^2$, and $\mathcal{D}^\ell(\gl_{3,2}\oplus\gl_1)=\{0\}$. It is straightforward to verify that this is correct.
            \item The graph $G_{\gl_{3,2}\oplus\gl_1}(V,E)$ contains a closed direct walk and thus the Lie algebra $\gl_{3,2}\oplus\gl_1$ is non-nilpotent, since $[\g_1\oplus\g_2\oplus\g_3,\g_1]=\g_1$ implies that $\mathcal{C}^\ell (\gl_{3,2}\oplus\gl_1)\subseteq\g_1\neq\{0\}$ for all $\ell\in\N_{\geq1}$.
            \item The lower central series of the graph $G_{\gl_{3,2}\oplus\gl_1}(V,E)$ is:\par\noindent
            \begin{minipage}{\linewidth}
            \begin{figure}[H]
                \centering
                \includegraphics[width=0.7\linewidth]{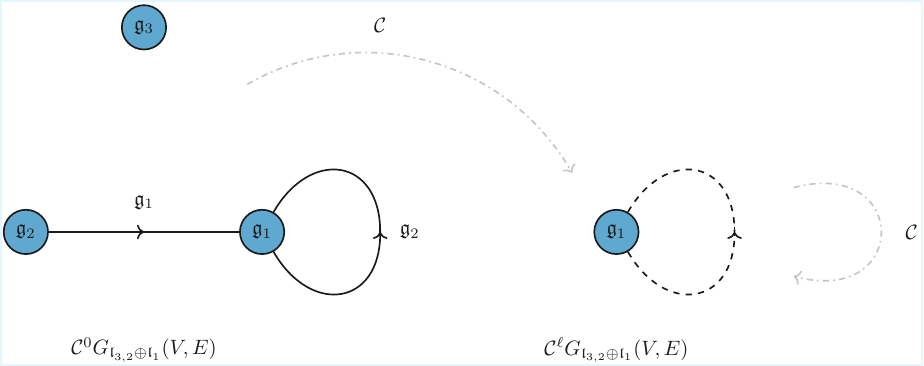}
            \end{figure}
            \end{minipage}
            \vspace{0.3cm}
            
            \noindent Statement (iv) now claims that the lower central series of graphs $\mathcal{C}^\ell G_{\gl{_{3,2}}\oplus\gl_1}(V,E)$ correspond to the algebras of the lower central series of $\gl_{3,2}\oplus\gl_1$. Specifically:   $\mathcal{C}^0(\gl_{3,2}\oplus\gl_1)=\g_1\oplus\g_2\oplus\gl_3=\gl_{3,2}\oplus\gl_1$, and $\mathcal{C}^\ell(\gl_{3,2}\oplus\gl_1)=\g_1\cong\R^2$. It is straightforward to verify that this is correct.
            \item The only non-empty subsets of $V$ that satisfy the ideal graph property are $\{\g_1\}$, $\{\g_3\}$, $\{\g_1,\g_2\}$, $\{\g_1,\g_3\}$, and $\{\g_1,\g_2,\g_3\}$. It is straightforward to verify that the corresponding spaces $\g_1$, $\g_3$, $\g_1\oplus\g_2$, $\g_1\oplus\g_3$, and $\g_1\oplus\g_2\oplus\g_3$ are ideals of $\gl_{3,2}$, confirming the validity of claim (v) in this case.
            \item There exists no closed directed walk $W$ in $G_{\gl_{3,2}\oplus\gl_1}(V,E)$ that induces a subgraph $G_{\gl_{3,2}\oplus\gl_1}(\Tilde{V},\Tilde{E})$ with $\Tilde{V}=V$, and thus ${\gl_{3,2}\oplus\gl_1}$ is not simple, as it is solvable. This confirms the validity of claim (vi) in this case.
            \item The vertex labeled with the space $\g_2$ is never part of a closed directed walk and thus never part of a closed directed walk that induces a self-contained subgraph of $G_{\gl_{3,2}\oplus\gl_1}(\Tilde{V},\Tilde{E})$. Moreover, $\gl_{3,2}\oplus\gl_1$ is not semisimple, since it contains the abelian ideal $\gl_1$, confirming the validity of claim (vii) in this case.
        \end{enumerate}
        \item $\gl_{3,3}\oplus\gl_1$: This Lie algebra is the direct sum of minimal-graph-admissible Lie algebras, as shown above. Consequently, the claim follows from Proposition~\ref{prop:mapping:minimal:graph:to:minimal:magma}.
        \item $\gl_{3,4}^{(\alpha)}\oplus\gl_1$: These Lie algebras are always the direct sum of minimal-graph-admissible Lie algebras, as shown above. Therefore, the claim follows from Proposition~\ref{prop:mapping:minimal:graph:to:minimal:magma}.
        \item $\gl_{3,5}^{(\beta)}\oplus \gl_1$. These Lie algebras are always the direct sum of two Lie algebras that have been considered previously, and the claim follows analogously to the case of the Lie algebra ${\gl_{3,2}\oplus\gl_1}$.
        \item $\gl_{3,6}\oplus\gl_1$: This Lie algebra is the direct sum of minimal-graph-admissible Lie algebras, as shown above. Hence, the claim follows from Proposition~\ref{prop:mapping:minimal:graph:to:minimal:magma}.
        \item $\gl_{3,7}\oplus\gl_1$: This Lie algebra is the direct sum of minimal-graph-admissible Lie algebras, as shown above. Thus, the claim follows from Proposition~\ref{prop:mapping:minimal:graph:to:minimal:magma}.
        \item $\gl_{4,1}$: This Lie algebra admits a basis $\{e_1,e_2,e_3,e_4\}$ such that $[e_2,e_4]=e_1$ and $[e_3,e_4]=e_2$, making it minimal-graph-admissible and the claim follows from Proposition~\ref{prop:mapping:minimal:graph:to:minimal:magma}.
        \item $\gl_{4,2}^{(\beta)}$: This family of Lie algebras admits a basis such that $[e_1,e_4]=\beta e_1$, $[e_2,e_4]=e_2$, and $[e_3,e_4]=e_2+e_3$, where $\beta\neq 0$. Thus, we can choose $\mathcal{M}=\{1,2,3\}$, $\g_1:=\spn\{e_1\}$, $\g_2:=\spn\{e_2,e_3\}$ and $\g_3:=\spn\{e_4\}$, and define $\delta:\mathcal{M}\times\mathcal{M}\to\mathcal{M}$, such that $[\g_1,\g_3]=\g_1=[\g_3,\g_1]$, and $[\g_2,\g_3]=\g_2=[\g_3,\g_2]$, while $[\g_j,\g_k]=\{0\}$ for all remaining pairs of $(j,k)\in\mathcal{M}\times\mathcal{M}$. The resulting graph $G_{\gl_{4,2}^{(\beta)}}(V,E)$ associated with $\gl_{4,2}^{(\beta)}$ is consequently:\par\noindent
            \begin{minipage}{\linewidth}
            \begin{figure}[H]
                \centering
                \includegraphics[width=0.3\linewidth]{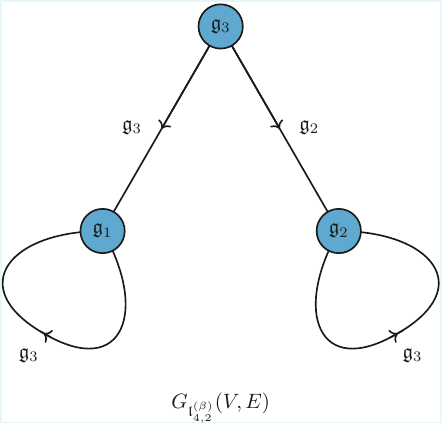}
            \end{figure}
            \end{minipage}
            \vspace{0.3cm}
            
            \noindent We can now verify the following assertions:
        \begin{enumerate}[label = (\roman*)]
            \item The labeled directed graph $G_{\gl_{4,2}^{(\beta)}}(V,E)$ contains no self-contained subgraph that is induced by a closed direct walk, and thus the Lie algebra $\gl_{4,2}^{(\beta)}$ is solvable \cite{Popovych:2003}, establishing the validity of claim (i) in this case.
            \item The derived graphs of $\gl_{4,2}^{(\beta)}$ are:\par\noindent
            \begin{minipage}{\linewidth}
            \begin{figure}[H]
                \centering
                \includegraphics[width=0.9\linewidth]{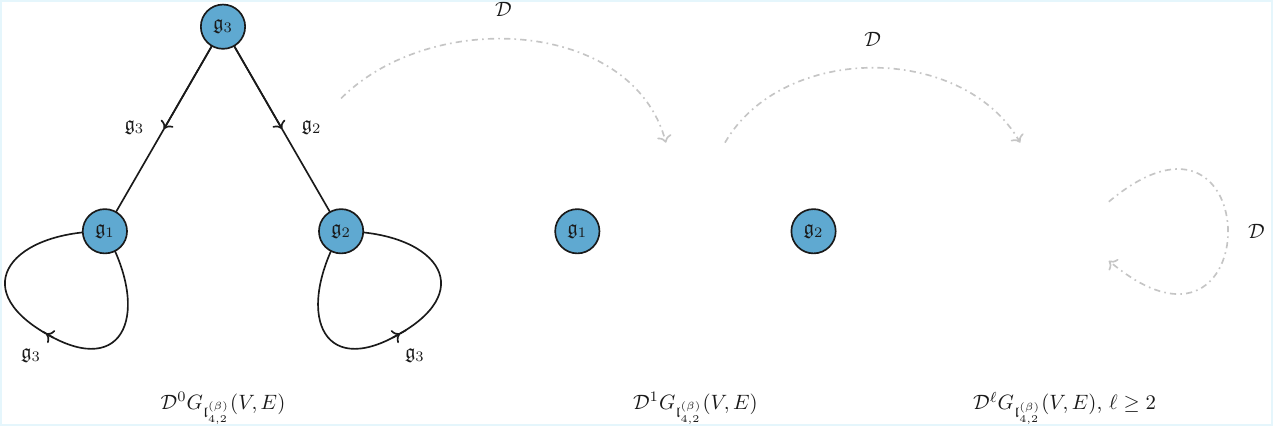}
            \end{figure}
            \end{minipage}
            \vspace{0.3cm}
            
            \noindent Statement (ii) now claims that the derived graphs $\mathcal{D}^\ell G_{\gl_{4,2}^{(\beta)}}(V,E)$ of $ G_{\gl_{4,2}^{(\beta)}}(V,E)$ correspond to the derived series of $\gl_{4,2}^{(\beta)}$. Specifically: $\mathcal{D}^0\gl_{4,2}^{(\beta)}=\g_1\oplus\g_2\oplus\g_3=\gl_{4,2}^{(\beta)}$, $\mathcal{D}^1\gl_{4,2}^{(\beta)}=\g_1\oplus\g_2$, and $\mathcal{D}^\ell \gl_{4,2}^{(\beta)}=\{0\}$ for all $\ell\in\N_{\geq2}$. This is straightforward to confirm.
            \item The graph $G_{\gl_{4,2}^{(\beta)}}(V,E)$ contains at least one closed directed walk, and thus the Lie algebra $\gl_{4,2}^{(\beta)}$ is not nilpotent, since $[\g_1\oplus\g_2\oplus\g_3,\g_1\oplus\g_2]=\g_1\oplus\g_2$ implies that $\mathcal{C}^\ell\gl_{4,2}^{(\beta)}\subseteq\g_1\oplus\g_2\neq\{0\}$ for all $\ell\in\N_{\geq1}$.
            \item The lower central series of the graph $G_{\gl_{4,2}^{(\beta)}}(V,E)$ is given by:\par\noindent
            \begin{minipage}{\linewidth}
            \begin{figure}[H]
                \centering
                \includegraphics[width=0.7\linewidth]{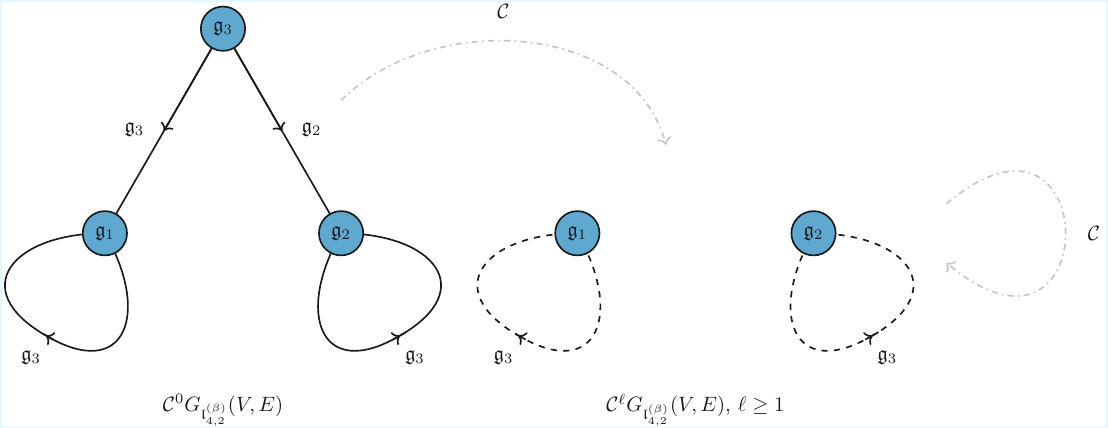}
            \end{figure}
            \end{minipage}
            \vspace{0.3cm}
            
            \noindent Statement (iv) now claims that the lower central series of graphs $\mathcal{C}^\ell G_{\gl_{4,2}^{(\beta)}}(V,E)$ correspond to the algebras of the lower central series of $\gl_{4,2}^{(\beta)}$. Specifically: $\mathcal{C}^0\gl_{4,2}^{(\beta)}=\g_1\oplus\g_2\oplus\g_3=\gl_{4,2}^{(\beta)}$ and $\mathcal{C}^\ell\gl_{4,2}^{(\beta)}=\g_1\oplus\g_2$ for all $\ell\in\N_{\geq1}$. It is straightforward to confirm this.
            \item The only non-empty subsets of $V$ that satisfy the ideal-graph-property are: $\{\g_1\}$, $\{\g_2\}$, $\{\g_1,\g_2\}$, and $\{\g_1,\g_2,\g_3\}$. It is straightforward to verify that the corresponding spaces $\g_1$, $\g_2$, $\g_1\oplus\g_2$, and $\g_1\oplus\g_2\oplus\g_3$ are ideals of $\gl_{4,2}^{(\beta)}$.
            \item There exists no closed directed walk $W$ in $G_{\gl_{4,2}^{(\beta)}}(V,E)$ that induces a subgraph $G_{\gl_{4,2}^{(\beta)}}(\Tilde{V},\Tilde{E})$ with $\Tilde{V}=V$, and thus ${\gl_{4,2}^{(\beta)}}$ is not simple, as it is solvable, confirming the validity of claim (vi) in this case.
            \item The vertex labeled with the space $\g_2$ is never part of a closed directed walk, and thus never part of a closed directed walk that induces a self-contained subgraph $G_{\gl_{4,2}^{(\beta)}}(\Tilde{V},\Tilde{E})$. Moreover, $\gl_{4,2}^{(\beta)}$ is not semisimple, as it is solvable, confirming the validity of claim (vii) in this case.
        \end{enumerate}
        \item $\gl_{4,3}$: This Lie algebra admits a basis $\{e_1,e_2,e_3,e_4\}$, such that $[e_1,e_4]=e_1$ and $[e_3,e_4]=e_2$, making it minimal-graph-admissible, and the claim follows by Proposition~\ref{prop:mapping:minimal:graph:to:minimal:magma}.
        \item $\gl_{4,4}$: This Lie algebra admits a basis $\{e_1,e_2,e_3,e_4\}$ such that $[e_1,e_4]=e_1$, $[e_2,e_4]=e_1+e_2$, and $[e_3,e_4]=e_2+e_3$. Thus, we can choose $\mathcal{M}=\{1,2\}$, $\g_1:=\spn\{e_1,e_2,e_3\}$, $\g_2:=\spn\{e_4\}$, and define $\delta:\mathcal{M}\times\mathcal{M}\to\mathcal{M}$, such that $[\g_1,\g_2]=[\g_2,\g_1]=\g_1$, while $[\g_j,\g_k]=\{0\}$ for all remaining pairs of $(j,k)\in\mathcal{M}\times\mathcal{M}$. The resulting graph $G_{\gl_{4,4}}(V,E)$ associated with $\gl_{4,4}$ is consequently:\par\noindent
            \begin{minipage}{\linewidth}
            \begin{figure}[H]
                \centering
                \includegraphics[width=0.3\linewidth]{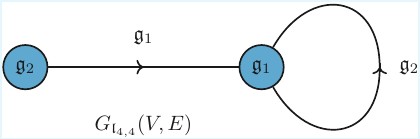}
            \end{figure}
            \end{minipage}
            \vspace{0.3cm}
            
            \noindent We can now verify the following assertions:
        \begin{enumerate}[label = (\roman*)]
            \item The labeled directed graph $G_{\gl_{4,4}}(V,E)$ contains no self-contained subgraph that is induced by a closed direct walk, and thus the Lie algebra $\gl_{4,4}$ is solvable \cite{Popovych:2003}, confirming the validity of claim (i) in this case.
            \item The derived graphs of $\gl_{4,4}$ are:\par\noindent
            \begin{minipage}{\linewidth}
            \begin{figure}[H]
                \centering
                \includegraphics[width=0.8\linewidth]{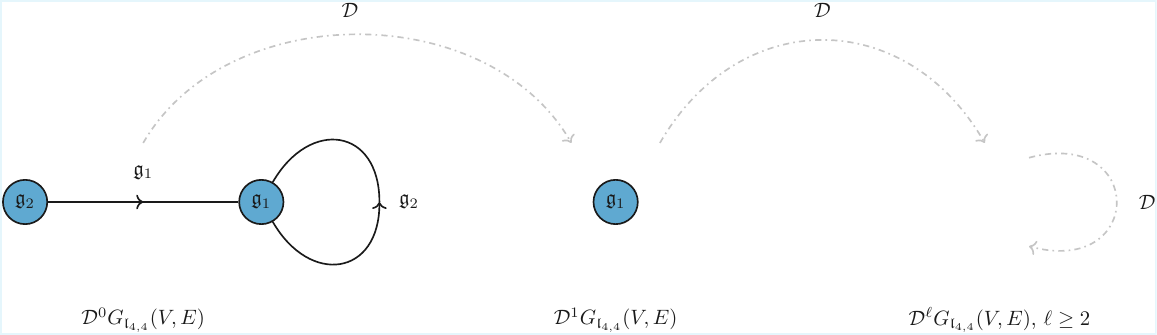}
            \end{figure}
            \end{minipage}
            \vspace{0.3cm}
            
            \noindent Statement (ii) now claims that the derived graphs $\mathcal{D}^\ell G_{\gl_{4,4}}(V,E)$ of $ G_{\gl_{4,4}}(V,E)$ correspond to the derived series of $\gl_{4,4}$. Specifically: $\mathcal{D}^0\gl_{4,4}=\g_1\oplus\g_2=\gl_{4,4}$, $\mathcal{D}^1\gl_{4,4}=\g_1$, and $\mathcal{D}^\ell \gl_{4,4}=\{0\}$ for all $\ell\in\N_{\geq2}$. This is straightforward to confirm.
            \item The graph $G_{\gl_{4,4}}(V,E)$ contains at least one closed directed walk and is not nilpotent, since $[\g_1\oplus\g_2,\g_1]=\g_1$ implies that $\mathcal{C}^\ell\gl_{4,4}\subseteq\g_1\neq\{0\}$ for all $\ell\in\N_{\geq1}$.
            \item The lower central series of the graph $G_{\gl_{4,4}}(V,E)$ is given by:\par\noindent
            \begin{minipage}{\linewidth}
            \begin{figure}[H]
                \centering
                \includegraphics[width=0.65\linewidth]{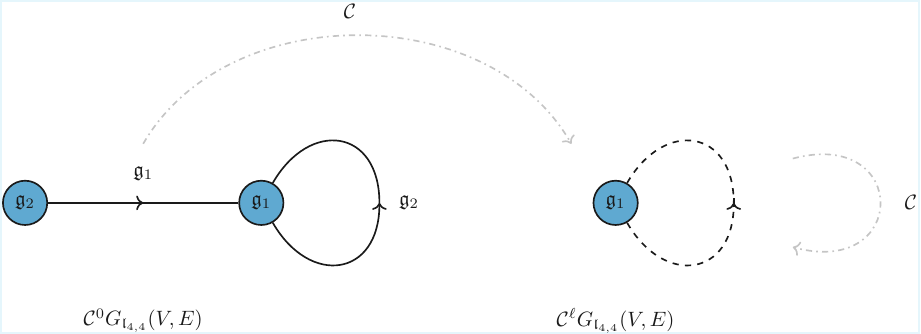}
            \end{figure}
            \end{minipage}
            \vspace{0.3cm}
            
            \noindent Statement (iv) now claims that the lower central series of graphs $\mathcal{C}^\ell G_{\gl_{4,4}}(V,E)$ correspond to the algebras of the lower central series of $\gl_{4,4}$. Specifically: $\mathcal{C}^0\gl_{4,4}=\g_1\oplus\g_2=\gl_{4,4}$ and $\mathcal{C}^\ell\gl_{4,4}=\g_1$ for all $\ell\in\N_{\geq1}$. It is straightforward to verify this.
            \item The only non-empty subsets of $V$ that satisfy the ideal-graph-property are: $\{\g_1\}$ and $\{\g_1,\g_2\}$. It is straightforward to verify that the corresponding spaces $\g_1$ and $\g_1\oplus\g_2$ are ideals of $\gl_{4,4}$.
            \item There exists no closed directed walk $W$ in $G_{\gl_{4,4}}(V,E)$ that induces a subgraph $G_{\gl_{4,4}}(\Tilde{V},\Tilde{E})$ with $\Tilde{V}=V$, and thus $\gl_{4,4}$ is not simple, as it is solvable, confirming the validity of claim (vi) in this case.
            \item The vertex labeled with the space $\g_2$ is never part of a closed directed walk, and thus never part of a closed directed walk that induces a self-contained subgraph $G_{\gl_{4,4}}(\Tilde{V},\Tilde{E})$. Moreover, $\gl_{4,4}$ is not semisimple, as it is a solvable Lie algebra, confirming the validity of claim (vii) in this case.
        \end{enumerate}
        \item $\gl_{4,5}^{(\alpha,\beta,\gamma)}$: This family of Lie algebras admits a basis $\{e_1,e_2,e_3,e_4\}$ such that $[e_1,e_4]=\alpha e_1$, $[e_2,e_4]=\beta e_2$, and $[e_3,e_4]=\gamma e_3$, where $\alpha\beta\gamma\neq 0$. These Lie algebras are therefore minimal-graph-admissible and the claim follows from Proposition~\ref{prop:mapping:minimal:graph:to:minimal:magma}.
        \item $\gl_{4,6}^{(\alpha,\beta)}$: This family of Lie algebras admits a basis $\{e_1,e_2,e_3,e_4\}$ such that $[e_1,e_4]=\alpha e_1$, $[e_2,e_4]=\beta e_2-e_3$, and $[e_3,e_4]=e_2+\beta e_3$, where $\alpha>0$. Thus, we can choose $\mathcal{M}=\{1,2,3\}$, $\g_1:=\spn\{e_1\}$, $\g_2:=\spn\{e_2,e_3\}$, $\g_3:=\spn\{e_4\}$, and $\delta:\mathcal{M}\times\mathcal{M}\to\mathcal{M}$, such that $[\g_1,\g_3]=[\g_3,\g_1]=\g_1$ and $[\g_2,\g_3]=[\g_3,\g_2]=\g_2$, while $[\g_j,\g_k]=\{0\}$ for all remaining pairs of $(j,k)\in\mathcal{M}\times\mathcal{M}$. The resulting graph $G_{\gl_{4,6}^{(\alpha,\beta)}}(V,E)$ associated with $\gl_{4,6}^{(\alpha,\beta)}$ is consequently:\par\noindent
            \begin{minipage}{\linewidth}
            \begin{figure}[H]
                \centering
                \includegraphics[width=0.3\linewidth]{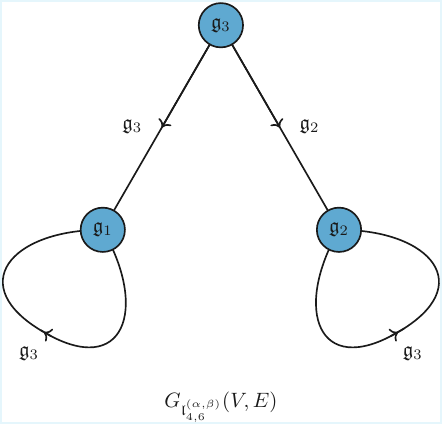}
            \end{figure}
            \end{minipage}
            \vspace{0.3cm}
            
            \noindent We can now verify the following:
        \begin{enumerate}[label = (\roman*)]
            \item The labeled directed graph $G_{\gl_{4,6}^{(\alpha,\beta)}}(V,E)$ contains no self-contained subgraph that is induced by a closed direct walk, and thus the Lie algebra $\gl_{4,6}^{(\alpha,\beta)}$ is solvable \cite{Popovych:2003}, establishing the validity of claim (i) in this case.
            \item The derived graphs of $\gl_{4,6}^{(\alpha,\beta)}$ are:\par\noindent
            \begin{minipage}{\linewidth}
            \begin{figure}[H]
                \centering
                \includegraphics[width=0.9\linewidth]{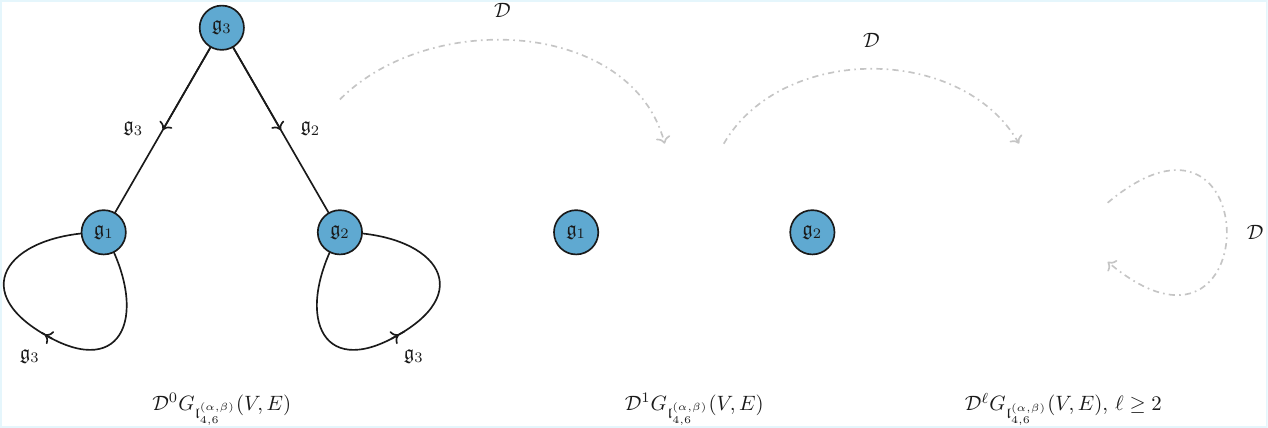}
            \end{figure}
            \end{minipage}
            \vspace{0.3cm}
            
            \noindent Statement (ii) now claims that the derived Lie algebras of $\gl_{4,6}^{(\alpha,\beta)}$ are $\mathcal{D}^0\gl_{4,6}^{(\alpha,\beta)}=\g_1\oplus\g_2\oplus\g_3=\gl_{4,6}^{(\alpha,\beta)}$, $\mathcal{D}^1\gl_{4,6}^{(\alpha,\beta)}=\g_1\oplus\g_2$, and $\mathcal{D}^\ell \gl_{4,6}^{(\alpha,\beta)}=\{0\}$ for all $\ell\in\N_{\geq2}$. It is straightforward to confirm this.
            \item The graph $G_{\gl_{4,6}^{(\alpha,\beta)}}(V,E)$ contains one closed directed walk and is not nilpotent, since $[\g_1\oplus\g_2\oplus\g_3,\g_1\oplus\g_2]=\g_1\oplus\g_2$ implies that $\mathcal{C}^\ell\gl_{4,6}^{(\alpha,\beta)}\subseteq\g_1\oplus\g_2\neq\{0\}$ for all $\ell\in\N_{\geq1}$.
            \item The lower central series of the graph $G_{\gl_{4,6}^{(\alpha,\beta)}}(V,E)$ is given by:\par\noindent
            \begin{minipage}{\linewidth}
            \begin{figure}[H]
                \centering
                \includegraphics[width=0.7\linewidth]{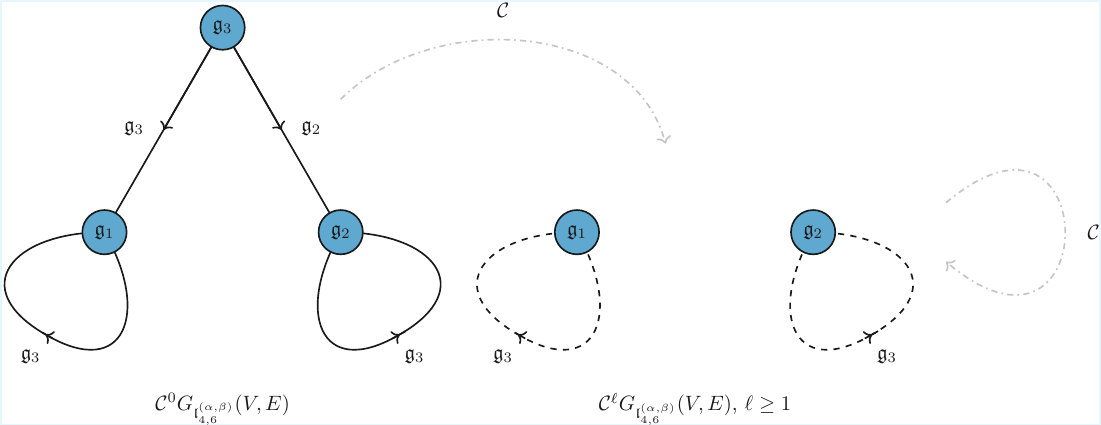}
            \end{figure}
            \end{minipage}
            \vspace{0.3cm}
            
            \noindent Statement (iv) now claims that the Lie algebras of the lower central series $\gl_{4,6}^{(\alpha,\beta)}$ are $\mathcal{C}^0\gl_{4,6}^{(\alpha,\beta)}=\g_1\oplus\g_2\oplus\g_3=\gl_{4,6}^{(\alpha,\beta)}$ and $\mathcal{C}^\ell\gl_{4,6}^{(\alpha,\beta)}=\g_1\oplus\g_2$ for all $\ell\in\N_{\geq1}$. It is straightforward to confirm this.
            \item The only non-empty subsets of $V$ that satisfy the ideal-graph-property are: $\{\g_1\}$, $\{g_2\}$, $\{\g_1,\g_2\}$, and $\{\g_1,\g_2,\g_3\}$. It is straightforward to verify that the corresponding spaces $\g_1$, $\g_2$, $\g_1\oplus\g_2$, and $\g_1\oplus\g_2\oplus\g_3$ are ideals of $\gl_{4,6}^{(\alpha,\beta)}$.
            \item There exists no closed directed walk $W$ in $G_{\gl_{4,6}^{(\alpha,\beta)}}(V,E)$ that induces a subgraph $G_{\gl_{4,6}^{(\alpha,\beta)}}(\Tilde{V},\Tilde{E})$ with $\Tilde{V}=V$ and thus $\gl_{4,6}^{(\alpha,\beta)}$ is not simple, as it is solvable,  establishing the validity of claim (vi) in this case.
            \item The vertex labeled with the space $\g_3$ is never part of a closed directed walk, and thus never part of a closed directed walk that induces a self-contained subgraph $G_{\gl_{4,6}^{(\alpha,\beta)}}(\Tilde{V},\Tilde{E})$. Moreover, $\gl_{4,6}^{(\alpha,\beta)}$ is not semisimple, as it is not a simple Lie algebra, establishing the validity of claim (vii) in this case.
        \end{enumerate}
        \item $\gl_{4,7}$: This Lie algebra admits a basis $\{e_1,e_2,e_3,e_4\}$ such that $[e_2,e_3]=e_1$, $[e_1,e_4]=2e_1$, $[e_2,e_4]=e_2$, $[e_3,e_4]=e_2+e_3$. Thus, we can choose $\mathcal{M}=\{1,2,3\}$, $\g_1:=\spn\{e_1\}$, $\g_2:=\spn\{e_2,e_3\}$, $\g_3:=\spn\{e_4\}$, and define $\delta:\mathcal{M}\times\mathcal{M}$ such that $[\g_1,\g_3]=\g_1=[\g_3,\g_1]$, $[\g_2,\g_2]=\g_1$, $[\g_2,\g_3]=\g_2=[\g_3,\g_2]$, while $[\g_j,\g_k]=\{0\}$ for all remaining pairs of $(j,k)\in\mathcal{M}\times\mathcal{M}$. The resulting graph $G_{\gl_{4,7}}(V,E)$ associated with $\gl_{4,7}$ is consequently:\par\noindent
            \begin{minipage}{\linewidth}
            \begin{figure}[H]
                \centering
                \includegraphics[width=0.3\linewidth]{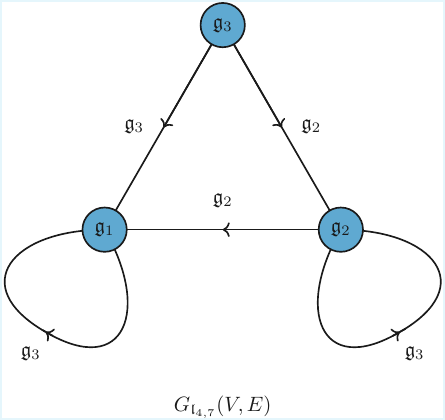}
            \end{figure}
            \end{minipage}
            \vspace{0.3cm}
            
            \noindent We can now verify the following assertions:
        \begin{enumerate}[label = (\roman*)]
            \item The labeled directed graph $G_{\gl_{4,7}}(V,E)$ contains no self-contained subgraph that is induced by a closed direct walk, and thus the Lie algebra $\gl_{4,7}$ is solvable \cite{Popovych:2003}, confirming the validity of claim (i) in this case.
            \item The derived graphs of $\gl_{4,7}$ are:\par\noindent
            \begin{minipage}{\linewidth}
            \begin{figure}[H]
                \centering
                \includegraphics[width=0.95\linewidth]{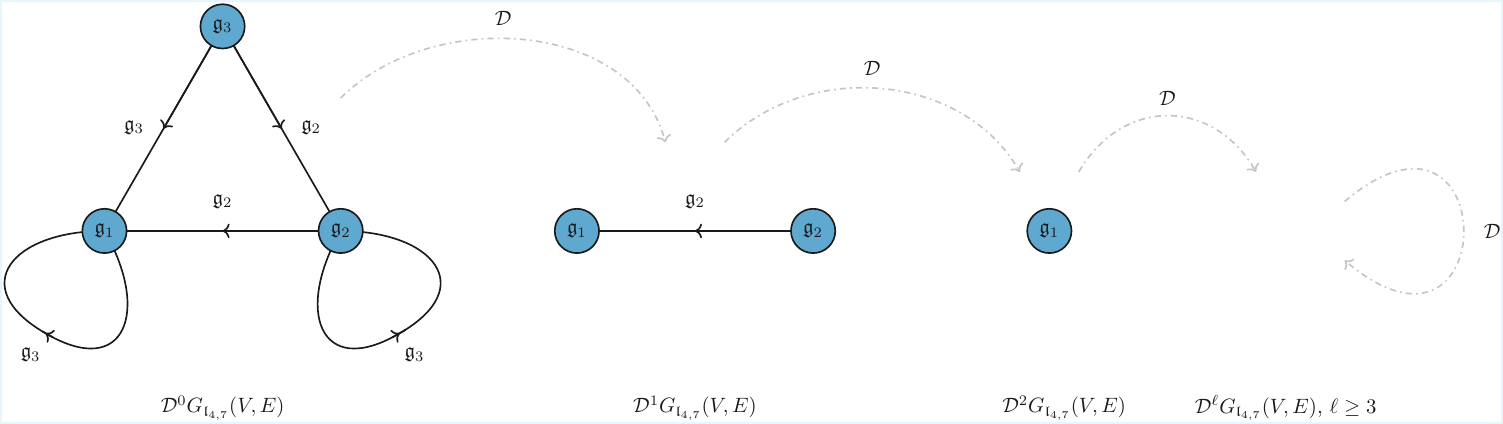}
            \end{figure}
            \end{minipage}
            \vspace{0.3cm}
            
            \noindent Statement (ii) now claims that the derived graphs $\mathcal{D}^\ell G_{\gl_{4,7}}(V,E)$ of $ G_{\gl_{4,7}}(V,E)$ correspond to the derived series of $\gl_{4,7}$. Specifically: $\mathcal{D}^0\gl_{4,7}=\g_1\oplus\g_2\oplus\g_3=\gl_{4,7}$, $\mathcal{D}^1\gl_{4,7}=\g_1\oplus\g_2$, $\mathcal{D}^2\gl_{4,7}=\g_1$, and $\mathcal{D}^\ell \gl_{4,7}=\{0\}$ for all $\ell\in\N_{\geq3}$. It is straightforward to verify this.
            \item The graph $G_{\gl_{4,7}}(V,E)$ contains at least one closed directed walk and is not nilpotent, since $[\g_1\oplus\g_2\oplus\g_3,\g_1\oplus\g_2]=\g_1\oplus\g_2$ implies that $\mathcal{C}^\ell\gl_{4,7}\subseteq\g_1\oplus\g_2\neq\{0\}$ for all $\ell\in\N_{\geq1}$.
            \item The lower central series of the graph $G_{\gl_{4,7}}(V,E)$ is given by:\par\noindent
            \begin{minipage}{\linewidth}
            \begin{figure}[H]
                \centering
                \includegraphics[width=0.7\linewidth]{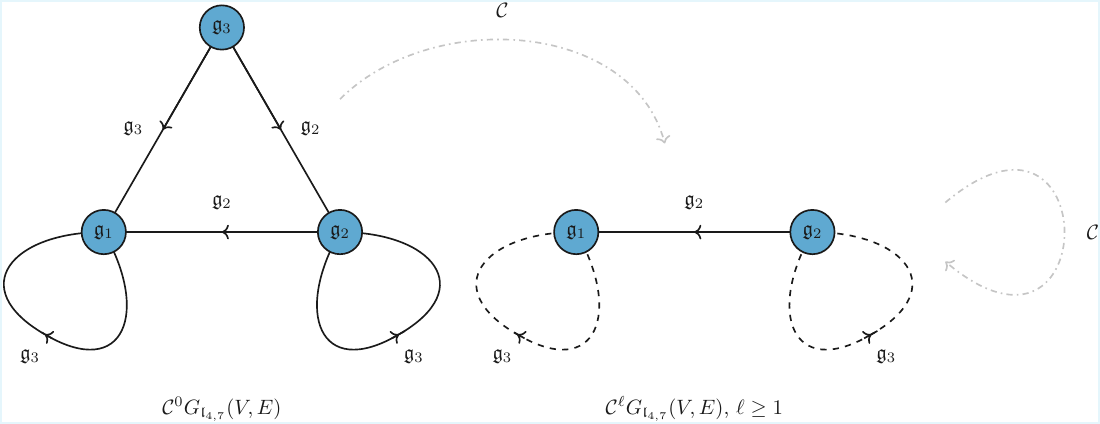}
            \end{figure}
            \end{minipage}
            \vspace{0.3cm}
            
            \noindent Statement (iv) now claims that the lower central series of graphs $\mathcal{C}^\ell G_{\gl_{4,7}}(V,E)$ correspond to the algebras of the lower central series of $\gl_{4,7}$. Specifically: $\mathcal{C}^0\gl_{4,7}=\g_1\oplus\g_2\oplus\g_3=\gl_{4,7}$ and $\mathcal{C}^\ell\gl_{4,7}=\g_1\oplus\g_2$ for all $\ell\in\N_{\geq1}$. This is straightforward to confirm.
            \item The only non-empty subsets of $V$ that satisfy the ideal-graph-property are: $\{\g_1\}$, $\{\g_1,\g_2\}$, and $\{\g_1,\g_2,\g_3\}$. It is straightforward to verify that the corresponding spaces $\g_1$, $\g_1\oplus\g_2$, and $\g_1\oplus\g_2\oplus\g_3$ are ideals of $\gl_{4,7}$.
            \item There exists no closed directed walk $W$ in $G_{\gl_{4,7}}(V,E)$ that induces a subgraph $G_{\gl_{4,7}}(\Tilde{V},\Tilde{E})$ with $\Tilde{V}=V$, and thus $\gl_{4,7}$ is not simple, as it is solvable. This confirms the validity of claim (vi) in this case.
            \item The vertex labeled with the space $\g_3$ is never part of a closed directed walk, and thus never part of a closed directed walk that induces a self-contained subgraph $G_{\gl_{4,7}}(\Tilde{V},\Tilde{E})$. Moreover, $\gl_{4,7}$ is not semisimple, as it is a solvable Lie algebra, confirming the validity of claim (vii) in this case.
        \end{enumerate}
        \item $\gl_{4,8}^{(\beta)}$: This family of Lie algebras admits a basis $\{e_1,e_2,e_3,e_4\}$ such that $[e_2,e_3]=e_1$, $[e_1,e_4]=(1+\beta)e_1$, $[e_2,e_4]=e_2$, and $[e_3,e_4]=\beta e_3$, where $-1\leq\beta\leq 1$. These Lie algebras are consequently minimal-graph-admissible and the claim follows by Proposition~\ref{prop:mapping:minimal:graph:to:minimal:magma}.
        \item $\gl_{4,9}^{(\alpha)}$: This family of Lie algebras admits a basis $\{e_1,e_2,e_3,e_4\}$ such that $[e_2,e_3]=e_1$, $[e_1,e_4]=2\alpha e_1$, $[e_2,e_4]=\alpha e_2-e_3$, and $[e_3,e_4]=e_2+\alpha e_3$, where $\alpha> 0$. Thus, we can choose $\mathcal{M}=\{1,2,3\}$, $\g_1:=\spn\{e_1\}$, $\g_2:=\spn\{e_2,e_3\}$, $\g_3:=\spn\{e_4\}$, and define $\delta:\mathcal{M}\times\mathcal{M}$ such that $[\g_1,\g_3]=\g_1=[\g_3,\g_1]$, $[\g_2,\g_2]=\g_1$, $[\g_2,\g_3]=\g_2=[\g_3,\g_2]$, while $[\g_j,\g_k]=\{0\}$ for all remaining pairs of $(j,k)\in\mathcal{M}\times\mathcal{M}$. The resulting graph $G_{\gl_{4,9}^{(\alpha)}}(V,E)$ associated with $\gl_{4,9}^{(\alpha)}$ is consequently:\par\noindent
            \begin{minipage}{\linewidth}
            \begin{figure}[H]
                \centering
                \includegraphics[width=0.3\linewidth]{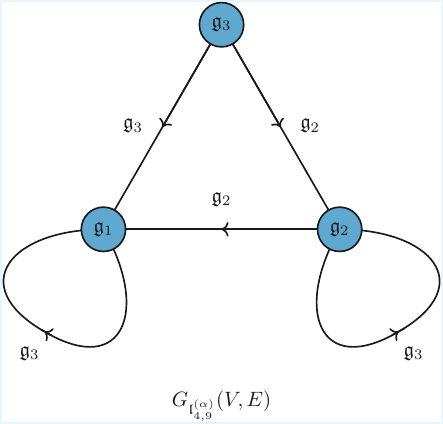}
            \end{figure}
            \end{minipage}
            \vspace{0.3cm}
            
            \noindent  We can now verify the following assertions:
        \begin{enumerate}[label = (\roman*)]
            \item The labeled directed graph $G_{\gl_{4,9}^{(\alpha)}}(V,E)$ contains no self-contained subgraph that is induced by a closed direct walk, and thus the Lie algebra $\gl_{4,9}^{(\alpha)}$ is solvable \cite{Popovych:2003}, confirming the validity of claim (i) in this case.
            \item The derived graphs of $\gl_{4,9}^{(\alpha)}$ are:\par\noindent
            \begin{minipage}{\linewidth}
            \begin{figure}[H]
                \centering
                \includegraphics[width=0.95\linewidth]{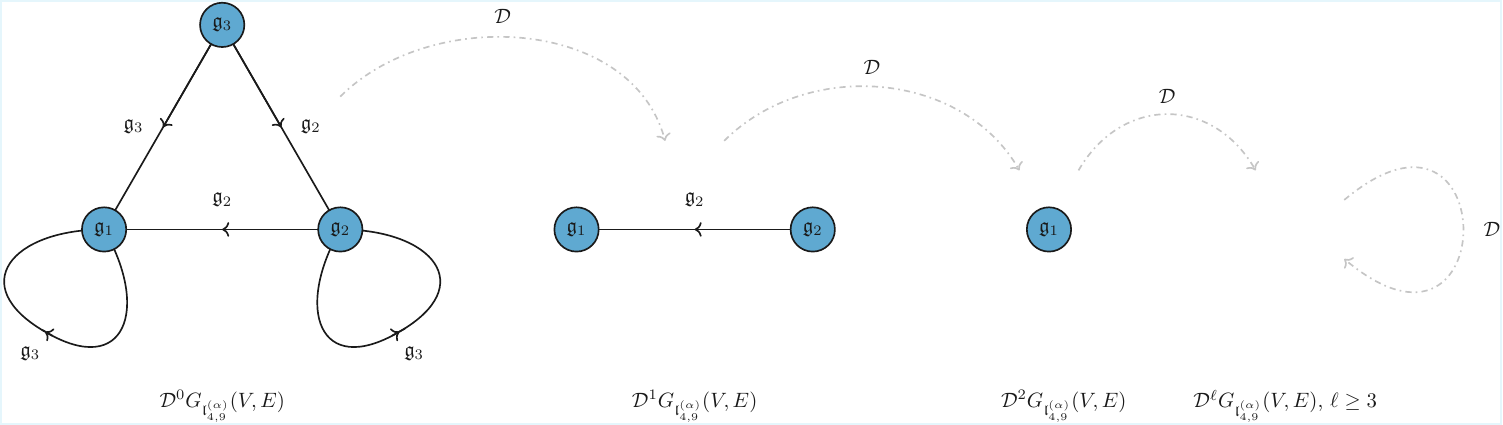}
            \end{figure}
            \end{minipage}
            \vspace{0.3cm}
            
            \noindent Statement (ii) now claims that the derived graphs $\mathcal{D}^\ell G_{\gl_{4,9}^{(\alpha)}}(V,E)$ of $ G_{\gl_{4,9}^{(\alpha)}}(V,E)$ correspond to the derived series of $\gl_{4,9}^{(\alpha)}$. Specifically: $\mathcal{D}^0\gl_{4,9}^{(\alpha)}=\g_1\oplus\g_2\oplus\g_3=\gl_{4,9}^{(\alpha)}$, $\mathcal{D}^1\gl_{4,9}^{(\alpha)}=\g_1\oplus\g_2$, $\mathcal{D}^2\gl_{4,9}^{(\alpha)}=\g_1$, and $\mathcal{D}^\ell \gl_{4,9}^{(\alpha)}=\{0\}$ for all $\ell\in\N_{\geq3}$. This is straightforward to verify.
            \item The graph $G_{\gl_{4,9}^{(\alpha)}}(V,E)$ contains at least one closed directed walk and is not nilpotent, since $[\g_1\oplus\g_2\oplus\g_3,\g_1\oplus\g_2]=\g_1\oplus\g_2$ implies that $\mathcal{C}^\ell\gl_{4,9}^{(\alpha)}\subseteq\g_1\oplus\g_2\neq\{0\}$ for all $\ell\in\N_{\geq1}$.
            \item The lower central series of the graph $G_{\gl_{4,9}^{(\alpha)}}(V,E)$ is given by:\par\noindent
            \begin{minipage}{\linewidth}
            \begin{figure}[H]
                \centering
                \includegraphics[width=0.7\linewidth]{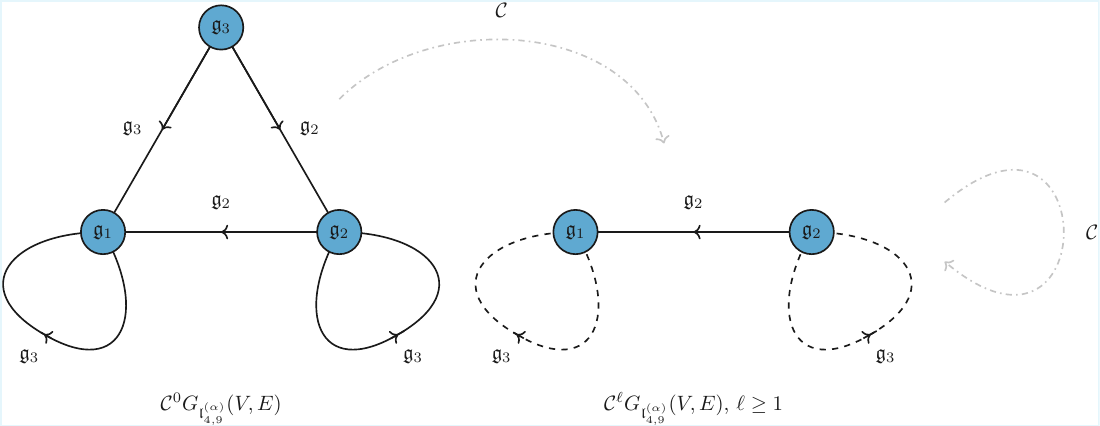}
            \end{figure}
            \end{minipage}
            \vspace{0.3cm}
            
            \noindent Statement (iv) now claims that the lower central series of graphs $\mathcal{C}^\ell G_{\gl_{4,9}^{(\alpha)}}(V,E)$ correspond to the algebras of the lower central series of $\gl_{4,9}^{(\alpha)}$. Specifically: $\mathcal{C}^0\gl_{4,9}^{(\alpha)}=\g_1\oplus\g_2\oplus\g_3=\gl_{4,9}^{(\alpha)}$ and $\mathcal{C}^\ell\gl_{4,9}^{(\alpha)}=\g_1\oplus\g_2$ for all $\ell\in\N_{\geq1}$. It is straightforward to confirm this.
            \item The only non-empty subsets of $V$ that satisfy the ideal-graph-property are: $\{\g_1\}$, $\{\g_1,\g_2\}$, and $\{\g_1,\g_2,\g_3\}$. It is straightforward to verify that the corresponding spaces $\g_1$, $\g_1\oplus\g_2$, and $\g_1\oplus\g_2\oplus\g_3$ are ideals of $\gl_{4,7}$.
            \item There exists no closed directed walk $W$ in $G_{\gl_{4,9}^{(\alpha)}}(V,E)$ that induces a subgraph $G_{\gl_{4,9}^{(\alpha)}}(\Tilde{V},\Tilde{E})$ with $\Tilde{V}=V$ and $thus \gl_{4,9}^{(\alpha)}$ is not simple, as it is solvable. This confirms the validity of claim (vi) in this case.
            \item The vertex labeled with the space $\g_3$ is never part of a closed directed walk, and thus never part of a closed directed walk that induces a self-contained subgraph $G_{\gl_{4,9}^{(\alpha)}}(\Tilde{V},\Tilde{E})$. Moreover, $\gl_{4,9}^{(\alpha)}$ is not semisimple, as it is a solvable Lie algebra, confirming the validity of claim (vii) in this case.
        \end{enumerate}
        \item $\gl_{4,9}^{(0)}$: This Lie algebra admits a basis $\{e_1,e_2,e_3,e_4\}$ such that $[e_2,e_3]=e_1$, $[e_2,e_4]=-e_3$, and $[e_3,e_4]=e_2$. 
        Therefore, the Lie algebra $\gl_{4,9}^{(0)}$ is minimal-graph-admissible and the claim follows from Proposition~\ref{prop:mapping:minimal:graph:to:minimal:magma}.
        \item $\gl_{4,10}$: This Lie algebra admits a basis $\{e_1,e_2,e_3,e_4\}$ such that $[e_1,e_3]=e_1$, $[e_2,e_3]=e_2$, $[e_1,e_4]=-e_2$, and $[e_2,e_4]=e_1$, making it minimal-graph-admissible and the claim follows from Proposition~\ref{prop:mapping:minimal:graph:to:minimal:magma}.
    \end{itemize}
\end{proof}

To better understand Conjecture~\ref{con:final:gradation:generalization}, we now consider, as an illustrative example, the family of real Lie algebras $L_\alpha^3$, spanned by the three basis elements $\{x_1,x_2,x_3\}$ that satisfy the following Lie bracket relations:
\begin{align}\label{eqn:appendix:L:alpha:3:bracket}
    [x_1,x_2]&=0,\;&\;[x_1,x_3]&=-x_2,\;&\;[e_2,e_3]&=-\alpha x_1-x_2,
\end{align}
where $\alpha\in\R$ is a real parameter. Our objective is to determine all possible labeled directed graphs $G(V,E)$ that can be associated with $L_\alpha^3$, under the condition that the corresponding $\mathcal{M}$-magma-grading exhibits the finest granularity. Let us begin with the preliminary case $\alpha=0$.

\begin{proposition}\label{prop:possible:grapha:L:0:3}
    The real Lie algebra $L_0^3$ satisfies $\operatorname{fg}(L_0^3)=3$, and every $\mathcal{M}$-magma-gradation of $L_0^3$ that satisfies $\operatorname{gra}(L_0^3,\mathcal{M})=\operatorname{fg}(L_0^3)$ produces, via Algorithm~\ref{alg:creating:graph:modified}, a graph that is equivalent to one of Type III, V, VII, or XI.
\end{proposition}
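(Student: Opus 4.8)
The plan is to combine Lemma~\ref{lem:minimal:magma:graded:minimal:graph:admissible} with the already-known classification of the three-vertex graphs (Theorem~\ref{thm:neccersary:conditions:proper:graph} and Corollaries~\ref{cor:proper:choice:dependent} and~\ref{cor:3:graphs:Bianchi:identification}). First I would establish $\operatorname{fg}(L_0^3)=3$. By Lemma~\ref{lem:granularities:first:obvious:observation}(b), this is equivalent to showing $L_0^3$ is minimal-graph-admissible. From the bracket relations \eqref{eqn:appendix:L:alpha:3:bracket} at $\alpha=0$ one has $[x_1,x_2]=0$, $[x_1,x_3]=-x_2$, $[x_2,x_3]=-x_2$; in the example in the main text (Case $\alpha=0$ of Example~\ref{exa:different:L:alpha:3:graphs}) it is observed that this basis already satisfies \eqref{eqn:desired:basis}, so $L_0^3$ is minimal-graph-admissible and $\operatorname{fg}(L_0^3)=3$.

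Next, suppose $(\mathcal{M},\delta)$ is an abelian magma with $\operatorname{gra}(L_0^3,\mathcal{M})=3$ and $L_0^3$ is $\mathcal{M}$-magma-graded. By Lemma~\ref{lem:minimal:magma:graded:minimal:graph:admissible} (more precisely the direction and the argument in its proof), such a grading forces each graded component $\g_j$ to be one-dimensional, so there is a basis $\{e_j\}_{j\in\mathcal{M}}$ of $L_0^3$ with $\g_j=\spn\{e_j\}$ and $[e_j,e_k]=\alpha_{jk}e_{\delta(j,k)}$, i.e. this basis satisfies \eqref{eqn:desired:basis}. Moreover, by Proposition~\ref{prop:mapping:minimal:graph:to:minimal:magma}, the graph produced by Algorithm~\ref{alg:creating:graph:modified} from $(L_0^3,(\mathcal{M},\delta))$ is isomorphic (after identifying $\g_j\leftrightarrow e_j$) to the graph produced by Algorithm~\ref{alg:creating:graph} from that basis $\{e_j\}$. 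So the task reduces to: classify all minimal graphs associated with $L_0^3$ up to equivalence. Since $L_0^3$ is three-dimensional, any such graph has exactly three vertices (Proposition~\ref{prop:criteria:number:of:vertices:and:edges}), and by Theorem~\ref{thm:neccersary:conditions:proper:graph} it must be equivalent to one of the eleven graphs in Figure~\ref{fig:all:3:vertex:subgraphs}; by Corollary~\ref{cor:proper:choice:dependent} it must in fact be a proper graph, hence one of Types I–VII, IX, X, XI. Now $L_0^3$ is solvable and non-nilpotent: its derived series is $\mathcal{D}^1 L_0^3=\spn\{x_2\}\neq\{0\}$, $\mathcal{D}^2 L_0^3=\{0\}$, while its lower central series stabilizes at $\spn\{x_2\}$. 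Using the algebraic identifications in Corollary~\ref{cor:3:graphs:Bianchi:identification}, Types I, II, IV, VI, IX, X correspond to abelian, nilpotent, or simple Lie algebras, none of which is isomorphic to $L_0^3$; this leaves only Types III, V, VII, XI, all of which are identified with $\gl_{2,1}\oplus\gl_1 = \mathfrak{aff}(1,\R)\oplus\R$. Finally I would verify that $L_0^3\cong\mathfrak{aff}(1,\R)\oplus\R$: the change of basis $\tilde x_1=x_2$, $\tilde x_2=x_3$, $\tilde x_3=x_1-x_2$ gives $[\tilde x_1,\tilde x_2]=[x_2,x_3]=-x_2=-\tilde x_1$ (after rescaling, $[\tilde x_1,\tilde x_2]=\tilde x_1$), $[\tilde x_1,\tilde x_3]=[x_2,x_1-x_2]=0$, $[\tilde x_2,\tilde x_3]=[x_3,x_1-x_2]=x_2-(-x_2)\cdot 0 = \ldots$ — the precise linear-algebra bookkeeping here is routine and just needs to exhibit an explicit isomorphism onto $\mathfrak{aff}(1,\R)\oplus\R$, so that the four remaining graph types really are realized by $L_0^3$ and no others are.

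The main obstacle I anticipate is twofold. First, one must be careful that "every grading of finest granularity yields a graph equivalent to one of these four types" is a statement about all such graphs, not just one representative; the reduction to Algorithm~\ref{alg:creating:graph} via Proposition~\ref{prop:mapping:minimal:graph:to:minimal:magma} handles this cleanly provided one genuinely ranges over all bases satisfying \eqref{eqn:desired:basis}, which is exactly what the classification in Theorem~\ref{thm:neccersary:conditions:proper:graph} and Corollary~\ref{cor:3:graphs:Bianchi:identification} does. Second — and this is the real content — one needs the converse inclusion: that each of Types III, V, VII, XI is actually attainable by $L_0^3$, not merely not excluded. For Type V, VII, XI this is where one invokes the alternative bases of $\gl_{2,1}\oplus\gl_1$ exhibited in the proof of Corollary~\ref{cor:3:graphs:Bianchi:identification} (bases $\{e_1,e_1+e_3,e_2\}$, $\{e_1,e_2+e_3,e_2\}$, $\{e_2,e_1-e_2,e_1+e_2+e_3\}$), pulled back through the isomorphism $L_0^3\cong\gl_{2,1}\oplus\gl_1$; for Type III the identity-component basis suffices. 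Thus the genuinely nontrivial step is pinning down the isomorphism $L_0^3\cong\mathfrak{aff}(1,\R)\oplus\R$ and checking the four graded decompositions are inequivalent yet all valid — everything else is an appeal to results already in the paper.
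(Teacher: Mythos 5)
There is a genuine gap in your exclusion step. You rule out Types I, II, IV, VI, IX, X by reading Corollary~\ref{cor:3:graphs:Bianchi:identification} as if it told you \emph{which} Lie algebra a given graph type represents, and then noting that those algebras are not isomorphic to $L_0^3$. But that corollary only exhibits \emph{one} algebra per proper type, and the paper explicitly warns that it "does not claim completeness of identification or classification" (e.g.\ $\gl_{3,4}$ also yields a Type VI graph, and $\mathfrak{su}(2)$ and $\sl{2}{\R}$ share a Type IX graph). Since the graph--algebra correspondence is many-to-many, "Type T is listed with algebra $A$, and $A\not\cong L_0^3$" does not imply that $L_0^3$ cannot produce a Type T graph; you must actually prove the impossibility for each excluded type. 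On top of this, your categorization is factually off: Types IV and VI are listed with $\gl_{3,5}^{\beta=0}$ (Bianchi VII$_0$) and $\gl_{3,3}$ (Bianchi V), which are solvable and non-nilpotent, not "abelian, nilpotent, or simple" --- so solvability/nilpotency invariants alone cannot eliminate them, and your stated reason collapses exactly for the two types that need the most work.

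The paper closes this gap with a structural observation you have available but do not use: since $[L_0^3,L_0^3]=\spn\{x_2\}$ is one-dimensional, every nonzero bracket of basis elements of a basis satisfying \eqref{eqn:desired:basis} is proportional to a single element $\Tilde{x}_2\propto x_2$, so \emph{every} edge of the associated minimal graph terminates at the same vertex. This immediately kills Types IV, VI, IX, X (and VIII), which have at least two distinct edge targets; Type I is excluded because $L_0^3$ is non-abelian, and Type II needs a separate argument --- in the paper, that one of the remaining basis elements must have support on $x_3$ and hence cannot commute with $\Tilde{x}_2$ (alternatively, a Type II graph has no closed directed walk, so by Theorem~\ref{thm:nilpotency:criteria:strong} it could only come from a nilpotent algebra). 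Your realization step for Types III, V, VII, XI (pulling back the bases from the proof of Corollary~\ref{cor:3:graphs:Bianchi:identification} through an isomorphism $L_0^3\cong\gl_{2,1}\oplus\gl_1$) is workable in principle --- the isomorphism does hold, via $e_1=x_3$, $e_2=x_2$ and the central element $x_1-x_2$ --- but as written you leave both the isomorphism and the pullback unfinished, whereas the paper simply writes down explicit bases in terms of $x_1,x_2,x_3$ and checks the brackets; that would be the cleaner route once the exclusion step is repaired.
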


\begin{proof}
    For $\alpha=0$, the bracket relations \eqref{eqn:appendix:L:alpha:3:bracket} read $[x_1,x_2]=0$, $[x_1,x_3]=-x_2$, and $[x_2,x_3]=-x_2$, making the Lie algebra $L_0^3$ minimal-graph-admissible. By Lemma~\ref{lem:granularities:first:obvious:observation}, one has $\operatorname{fg}(L_0^3)=\dim(L_0^3)=3$, and thus we need to consider bases $\mathcal{B}$ of $L_0^3$ that satisfy the relations \eqref{eqn:desired:basis}. Since $L_0^3$ is clearly non-abelian, the basis $\mathcal{B}$ must contain at least one element $\Tilde{x}_2\in\mathcal{B}$ such that $\Tilde{x}_2\propto x_2$. This follows from the observation that $[x,y]=0$ or $[x,y]\propto \Tilde{x}_2$ for all $x,y\in L_0^3$, while $[L_0^3,L_0^3]=\spn\{x_2\}$. Consequently, if $G(V,E)$ denotes the corresponding labeled directed graph, then $\varpi_\mathrm{e}(e)=\Tilde{x}_2$ for all $e\in E\neq\emptyset$, which excludes graphs of Type I, IV, VI, VIII, IX, and X. This leaves only graphs of Type II, III, V, VII, and XI. We now examine these cases individually:
    \begin{itemize}
        \item \textbf{Type II:} This would require that the Lie bracket of $\Tilde{x}_2$ and the two remaining basis elements $\Tilde{x}_1,\Tilde{x}_3\in\mathcal{B}$ vanishes identically. However, for $\mathcal{B}$ to span $L_0^3$, at least one of these elements must have non-trivial support in $\spn\{x_3\}$. In that situation, the Lie bracket of $\Tilde{x}_2$ and the corresponding element does not vanish, leading to a contradiction. Therefore the graph $G(V,E)$ cannot be of Type II.
        \item \textbf{Type III:} Consider the basis $\Tilde{x}_1:=x_3$, $\Tilde{x}_2:=x_1-x_2$, and $\Tilde{x}_3:=x_2$. This basis produces, via Algorithm~\ref{alg:creating:graph}, a graph of Type III, because the Lie brackets satisfy:
        \begin{align*}
            [\Tilde{x}_1,\Tilde{x}_2]&=0,\;&\;[\Tilde{x}_1,\Tilde{x}_3]&=-\Tilde{x}_3,\;&\;[\Tilde{x}_2,\Tilde{x}_3]&=0.
        \end{align*}
        \item \textbf{Type V:} Consider the basis $\Tilde{x_1}:=x_3+2x_1+x_2$, $\Tilde{x}_2:=x_3+x_1+2x_2$, and $\Tilde{x}_3:=x_2$. These elements yield, via Algorithm~\ref{alg:creating:graph}, a graph of Type V, since:
        \begin{align*}
            [\Tilde{x}_1,\Tilde{x}_2]&=(1+2-2-1)=0,\;&\;[\Tilde{x}_1,\Tilde{x}_3]&=-\Tilde{x}_3,\;&\;[\Tilde{x}_2,\Tilde{x}_3]&=-\Tilde{x}_3.
        \end{align*}
        \item \textbf{Type VII:} Consider the basis $\Tilde{x_1}:=x_3$, $\Tilde{x}_2:=x_1$, and $\Tilde{x}_3:=x_2$. The resulting graph is of Type VII, because:
        \begin{align*}
            [\Tilde{x}_1,\Tilde{x}_2]&=-\Tilde{x}_3,\;&\;[\Tilde{x}_1,\Tilde{x}_3]&=-\Tilde{x}_3,\;&\;[\Tilde{x}_2,\Tilde{x}_3]&=0.
        \end{align*}
        \item \textbf{Type XI:} Consider the basis $\Tilde{x_1}:=x_3$, $\Tilde{x}_2:=x_3+x_1$, and $\Tilde{x}_3:=x_2$. This basis produces a graph of Type XI, as:
        \begin{align*}
            [\Tilde{x}_1,\Tilde{x}_2]&=-\Tilde{x}_3,\;&\;[\Tilde{x}_1,\Tilde{x}_3]&=-\Tilde{x}_3,\;&\;[\Tilde{x}_2,\Tilde{x}_3]&=-\Tilde{x}_3.
        \end{align*}
    \end{itemize}
\end{proof}

\begin{proposition}\label{prop:possible:grapha:L:alpha:3:geq:minus:quarter:and:exceptions}
    The real Lie algebra $L_\alpha^3$ for $\alpha> -1/4$ and $\alpha\notin\{ 0,1\}$, satisfies $\operatorname{fg}(L_\alpha^3)=3$. Moreover, every $\mathcal{M}$-magma-gradation of $L_\alpha^3$ for which $\operatorname{gra}(L_\alpha^3,\mathcal{M})=\operatorname{fg}(L_\alpha^3)$ produces, via Algorithm~\ref{alg:creating:graph:modified}, a graph that is equivalent to one of Type~VI.
\end{proposition}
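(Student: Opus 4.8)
The plan is to reduce the classification of finest-granularity gradations of $L_\alpha^3$ to the already-established classification of three-vertex graphs, and then to eliminate every graph type except Type~VI using a short list of isomorphism invariants of $L_\alpha^3$. So first I would settle $\operatorname{fg}(L_\alpha^3)$. Since $\alpha>-1/4$, Lemma~\ref{lem:L:alpha:three:graph:admissible:if:and:only:if} gives that $L_\alpha^3$ is minimal-graph-admissible; hence by Lemma~\ref{lem:minimal:magma:graded:minimal:graph:admissible} it is minimal-magma-graded, and Lemma~\ref{lem:granularities:first:obvious:observation}(b) then yields $\operatorname{fg}(L_\alpha^3)=\dim(L_\alpha^3)=3$. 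Consequently any gradation $L_\alpha^3=\bigoplus_{j\in\mathcal M}\g_j$ with $\operatorname{gra}(L_\alpha^3,\mathcal M)=\operatorname{fg}(L_\alpha^3)=3$ has all three components one-dimensional, so choosing $v_j\in\g_j\setminus\{0\}$ produces, exactly as in the proof of Lemma~\ref{lem:minimal:magma:graded:minimal:graph:admissible}, a basis $\{v_1,v_2,v_3\}$ satisfying the relations \eqref{eqn:desired:basis}; and by Proposition~\ref{prop:mapping:minimal:graph:to:minimal:magma} together with the converse statement following it, the graph produced by Algorithm~\ref{alg:creating:graph:modified} from $(L_\alpha^3,(\mathcal M,\delta))$ is equivalent to the graph produced by Algorithm~\ref{alg:creating:graph} from that basis. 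This is a proper-minimal graph on exactly three vertices, so Theorem~\ref{thm:neccersary:conditions:proper:graph} forces it to be equivalent to one of the eleven graphs of Figure~\ref{fig:all:3:vertex:subgraphs}.

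Next I would eliminate the ten competitors. Type~VIII is improper by Corollary~\ref{cor:proper:choice:dependent}, hence cannot be an associated graph at all. Type~I has empty edge set, so by Lemma~\ref{lem:abelian:criterion} it would force $L_\alpha^3$ abelian, contradicting $[x_1,x_3]=-x_2\neq0$. Types~IX and X each contain, namely as the whole graph, a Type~IX or Type~X subgraph on three linearly independent vertices, so by Corollary~\ref{cor:simple:subalgebra} they would force $L_\alpha^3$ to contain a simple three-dimensional subalgebra and hence be non-solvable, contradicting the solvability of $L_\alpha^3$. For Types~II, III, V, VII and XI, reading the bracket relations off the graph (an edge $(v_\mathrm s,v_\mathrm l,v_\mathrm e)$ encoding $[v_\mathrm s,v_\mathrm l]\propto v_\mathrm e$) shows that all non-vanishing brackets among $v_1,v_2,v_3$ land in a single one-dimensional span, so $\mathcal D^1 L_\alpha^3=[L_\alpha^3,L_\alpha^3]$ would be one-dimensional; but a direct computation gives $\mathcal D^1 L_\alpha^3=\spn\{x_2,\alpha x_1+x_2\}=\spn\{x_1,x_2\}$, which is two-dimensional because $\alpha\neq0$ — a contradiction. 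The only remaining possibility is Type~VI, and this possibility is actually realized, so the statement is non-vacuous: the eigenbasis $\tilde x_1=\kappa_-x_1+x_2$, $\tilde x_2=x_3$, $\tilde x_3=\kappa_+x_1+x_2$ constructed in the proof of Lemma~\ref{lem:L:alpha:three:graph:admissible:if:and:only:if} satisfies $[\tilde x_1,\tilde x_2]=\kappa_+\tilde x_1$, $[\tilde x_1,\tilde x_3]=0$, $[\tilde x_3,\tilde x_2]=\kappa_-\tilde x_3$, which, up to relabeling, is precisely a Type~VI graph.

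The genuinely delicate case, and the main obstacle, is Type~IV: it shares with Type~VI the invariants used above, since a Type~IV-structured basis also has a two-dimensional abelian derived algebra and is solvable but not nilpotent, so none of $\dim\mathcal D^1$, solvability or nilpotency separates the two. To rule it out I would invoke unimodularity. If a basis $\{v_a,v_b,v_c\}$ carries the Type~IV bracket pattern $[v_b,v_c]=\mu_1 v_a$, $[v_a,v_c]=\mu_2 v_b$, $[v_a,v_b]=0$ with $\mu_1,\mu_2\in\R^*$, then in this basis each of $\operatorname{ad}_{v_a}$, $\operatorname{ad}_{v_b}$, $\operatorname{ad}_{v_c}$ is represented by a strictly off-diagonal matrix, hence has zero trace, so the linear functional $u\mapsto\operatorname{tr}(\operatorname{ad}_u)$ vanishes identically on $L_\alpha^3$. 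But in the standard basis $\operatorname{ad}_{x_3}$ has matrix $\left(\begin{smallmatrix}0&\alpha&0\\1&1&0\\0&0&0\end{smallmatrix}\right)$, so $\operatorname{tr}(\operatorname{ad}_{x_3})=1\neq0$; since $u\mapsto\operatorname{tr}(\operatorname{ad}_u)$ is basis-independent, this is a contradiction. Hence no finest-granularity gradation of $L_\alpha^3$ yields a Type~IV graph, and the classification is complete.

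I would close with a remark that the argument only uses $\alpha>-1/4$ (for graph-admissibility, hence $\operatorname{fg}=3$) and $\alpha\neq0$ (to guarantee $\dim\mathcal D^1 L_\alpha^3=2$), so it in fact also covers the value $\alpha=1$; the exclusion of $\alpha=1$ in the statement is therefore not needed for this proof and can be regarded as retained only to keep the present case disjoint from the explicit treatment of $L_1^3$ in Example~\ref{exa:different:L:alpha:3:graphs}.
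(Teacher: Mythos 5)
Your proof is correct, and it reaches the classification by a genuinely different route than the paper. The paper's argument, after establishing $\operatorname{fg}(L_\alpha^3)=3$ exactly as you do, is a hands-on case analysis on a basis satisfying \eqref{eqn:desired:basis}: it first removes the non-solvable and nilpotent types, then splits according to how many basis elements have support outside $\mathfrak{v}=\spn\{x_1,x_2\}$, kills Type V by a coefficient computation, kills Type XI by a linear-dependence determinant computation (the only place its hypothesis $\alpha\neq1$ enters), and disposes of the remaining bracket patterns by eigenvector arguments for $\operatorname{ad}_{\tilde x_1}|_{\mathfrak{v}}$, leaving Type VI together with the explicit eigenbasis. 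You instead funnel everything through Theorem~\ref{thm:neccersary:conditions:proper:graph} and then separate the eleven types by basis-independent invariants of $L_\alpha^3$: impropriety (VIII), abelianness (I), existence of a simple subalgebra versus solvability (IX, X), $\dim\mathcal{D}^1L_\alpha^3=2$ (II, III, V, VII, XI), and non-unimodularity, i.e.\ $\operatorname{tr}(\operatorname{ad}_{x_3})=1\neq0$ against the traceless adjoints forced by the all-wedge Type IV pattern --- essentially the Bianchi class A/B distinction. Your version is shorter, avoids the coefficient bookkeeping, and makes explicit which invariant separates Type VI from each competitor; the paper's computation has the advantage that its case split is reused verbatim in the companion Propositions~\ref{prop:possible:grapha:L:0:3}, \ref{prop:possible:grapha:L:one:3} and~\ref{prop:possible:grapha:L:alpha:3:leq:minus:quarter:with:exceotions}.

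Your closing remark is also substantively correct, and it surfaces a genuine tension with the paper: a Type XI bracket pattern on a basis forces $\dim\mathcal{D}^1\leq 1$, whereas $\dim\mathcal{D}^1L_1^3=2$, so Type XI cannot occur at $\alpha=1$ either; this contradicts the existence claim made in the proof of Proposition~\ref{prop:possible:grapha:L:one:3}. The discrepancy traces to the paper's determinant step: with $\lambda_\pm$ the roots of $X^2-X-\alpha$ one gets $\det(\boldsymbol{M})=-(\kappa/\alpha)\bigl(\lambda_\pm^2-\lambda_\pm-\alpha\bigr)=0$ for every $\alpha$, not $-(\kappa/\alpha)(1-\alpha)$, while the basis property requires $\det(\boldsymbol{M})\neq0$, so the obstruction holds uniformly --- in agreement with your invariant argument. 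Hence $\alpha\neq1$ is indeed dispensable for your proof; just be aware that saying so amounts to flagging an error in the neighboring proposition's proof rather than a mere redundancy in the hypotheses.
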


\begin{proof}
    From the proof of Theorem~\ref{thm:existncae:of:non:minimal:graph:admissible:} we know that $L_\alpha^3$ is minimal-graph-admissible when $\alpha>-1/4$. Consequently, by Lemma~\ref{lem:granularities:first:obvious:observation}, one has $\operatorname{fg}(L_\alpha^3)=\dim(L_\alpha^3)=3$ for all $\alpha> -1/4$. We therefore consider a basis $\mathcal{B}=\{\Tilde{x}_1,\Tilde{x}_2,\Tilde{x}_3\}$ satisfying the relations \eqref{eqn:desired:basis}. Since $L_\alpha^3$ is solvable \cite{DeGraaf:2004}, the corresponding labeled directed graph $G(V,E)$ cannot, according to Theorem~\ref{thm:non:solvability:condition:strong}, be of Type XI or X. Moreover, because $L_\alpha^3$ is not nilpotent, Theorem~\ref{thm:nilpotency:criteria:strong} excludes graphs of Type I and II.

    For latter convenience, define $\mathfrak{v}:=\spn\{x_1,x_2\}$.
    Since $\mathcal{B}$ is a basis of $L_\alpha^3$, we may assume, without loss of generality, that $\Tilde{x}_1=x_3+v_1$, where $v_1\in\mathfrak{v}$. Suppose another element $y\in\mathcal{B}$ also satisfies $y\notin\mathfrak{v}$. In this case, we can likewise assume that $\Tilde{x}_2=x_3+v_2$, where $v_2\in\mathfrak{v}$. By inspection of the bracket relations \eqref{eqn:appendix:L:alpha:3:bracket}, it is clear that $\operatorname{ad}_{\Tilde{x}_1}(L_\alpha^3)\subseteq \mathfrak{v}\supseteq \operatorname{ad}_{\Tilde{x}_2}(L_\alpha^3)$, and consequently $\Tilde{x}_3\in\mathfrak{v}$. Moreover, every edge $e\in E$ must satisfy $\varpi_\mathrm{e}(e)=\Tilde{x}_3$. Additionally, one has $[\Tilde{x}_1,\Tilde{x}_3]\neq0\neq[\Tilde{x}_2,\Tilde{x}_3]$, where $[\Tilde{x}_1,\Tilde{x}_3]\propto \Tilde{x}_3$ and $[\Tilde{x}_2,\Tilde{x}_3]\propto\Tilde{x}_3$. hence, the corresponding graph can only be of Type V or XI. Let us consider these cases separately:
    \begin{itemize}
        \item \textbf{Type V:} This case requires $[\Tilde{x}_1,\Tilde{x}_2]=0$. We can express the corresponding elements as $\Tilde{x}_1=x_3+c_{11}x_1+c_{12}x_2$ and $\Tilde{x}_2=x_3+c_{21}x_1+c_{22}x_2$, where $c_{11},c_{12},c_{21},c_{22}\in\R$. We the Lie bracket of these elements and impose that it vanishes, obtaining
        \begin{align*}
            [\Tilde{x}_1,\Tilde{x}_2]&=c_{21}x_2+c_{22}(\alpha x_1+x_2)-c_{11}x_2-c_{12}(\alpha x_1+x_2)=0.
        \end{align*}
        Since the set $\{x_1,x_2,x_3\}$ forms a basis of $L_\alpha^3$, it follows that $c_{22}=c_{12}$ (because $\alpha\neq0$), and consequently also $c_{21}=c_{12}$. This, however, would imply that $\Tilde{x}_1=\Tilde{x}_2$ which is not allowed. Therefore, the corresponding graph cannot be of Type V.
        \item \textbf{Type XI:} We begin by noting that the vectors $v_\pm:=-\lambda_\mp x_1+x_2$, with $\lambda_\pm:=(1\pm\sqrt{1+4\alpha})/2\neq0$ are the two simultaneous eigenvectors of $\operatorname{ad}_{\Tilde{x}_1}|_{\mathfrak{v}}$ and $\operatorname{ad}_{\Tilde{x}_3}|_{\mathfrak{v}}$ when restricted to the space $\mathfrak{v}$. This can be verified by checking the the characteristic polynomial of both $\operatorname{ad}_{\Tilde{x}_1}|_{\mathfrak{v}}$ and $\operatorname{ad}_{\Tilde{x}_2}|_{\mathfrak{v}}$ is given by $X^2-X-\alpha$. The roots of this polynomial are $\lambda_\pm$. The form of the eigenvectors can be verified by a straightforward computation, where it is helpful to notice that $1-\lambda_\pm=\lambda_\mp$ and $\alpha=-\lambda_+\lambda_-$. 

        Since $[\Tilde{x}_1,\Tilde{x}_3],[\Tilde{x}_2,\Tilde{x}_3]\in\mathfrak{v}\setminus\{0\}$, one must have that $\Tilde{x}_3$ is an eigenvector of $\operatorname{ad}_{\Tilde{x}_1}|_{\mathfrak{v}}$ and $\operatorname{ad}_{\Tilde{x}_2}|_{\mathfrak{v}}$, in order for $\mathcal{B}$ to satisfy \eqref{eqn:desired:basis}. As noted earlier, the Lie bracket $[\Tilde{x}_1,\Tilde{x}_2]$ vanishes only if $\Tilde{x}_1=\Tilde{x}_2$, which is prohibited by the condition that $\mathcal{B}=\{\Tilde{x}_1,\Tilde{x}_2,\Tilde{x}_3\}$ forms a basis of $L_\alpha^3$. Thus, one must have $[\Tilde{x}_1,\Tilde{x}_2]=\kappa \Tilde{x}_3$ with $\kappa\in\R^*$ for $\mathcal{B}$ to satisfy \eqref{eqn:desired:basis}. Writing $\Tilde{x}_1=x_3+c_{11}x_1+c_{12}x_2$, $\Tilde{x}_2=x_3+c_{21}x_1+c_{22}x_2$ for real coefficients $c_{11},c_{12},c_{21},c_{22}\in\R$, we have
        \begin{align*}
            [\Tilde{x}_1,\Tilde{x}_2]&=\alpha (c_{22}-c_{12})x_1+(c_{21}-c_{11}+ (c_{22}-c_{12}))x_2. 
        \end{align*}
        Matching this result with $\kappa(-\lambda_\pm x_1+x_2)$ yields the two equations $\alpha (c_{22}-c_{12})=-\lambda_\pm\kappa$ and $c_{21}-c_{11}=(1+\lambda_\pm/\alpha)\kappa$. 
        However, since $\mathcal{B}$ forms a basis, the equation $\kappa_1\Tilde{x}_1+\kappa_2\Tilde{x}_2+\kappa_3\Tilde{x_3}=0$ can only have the trivial solution $\kappa_1=\kappa_2=\kappa_3=0$. Expanding the elements $\Tilde{x}_1,\Tilde{x}_2,\Tilde{x}_3$ in terms of the three basis elements $x_1,x_2,x_3$ yields the following system of equations:
        \begin{align*}
            \left\{\begin{matrix*}[l]
                0&=\kappa_1+\kappa_2,\\
                0&=c_{11}\kappa_1+c_{21}\kappa_2-\lambda_\pm \kappa_3,\\
                0&=c_{12}\kappa_1+c_{22}\kappa_2+\kappa_3,
            \end{matrix*}\right.
        \end{align*}
        which can be rewritten in matrix form as follows:
        \begin{align*}
            \underbrace{\begin{pmatrix}
                1 & 1 & 0\\
                c_{11} & c_{21} & -\lambda_\pm\\
                c_{12} & c_{22} & 1
            \end{pmatrix}}_{=:\boldsymbol{M}}\begin{pmatrix}
                \kappa_1\\
                \kappa_2\\
                \kappa_3
            \end{pmatrix}=\begin{pmatrix}
                0\\
                0\\
                0
            \end{pmatrix}.
        \end{align*}
        Thus, $\det(\boldsymbol{M})= 0$ is required. However, explicit calculations yield the following expression for the determinant:
        \begin{align*}
            \det(\boldsymbol{M})&=c_{21}+\lambda_\pm c_{22}-(c_{11}+c_{12}\lambda_\pm)=(c_{21}-c_{11})+\lambda_\pm(c_{22}-c_{12})=\left(1+\frac{\lambda_\pm}{\alpha}\right)\kappa-\frac{\lambda_\pm^2}{\alpha}\kappa=-\frac{\kappa}{\alpha}\left(\lambda_\pm^2-\lambda_\pm-\alpha\right)\\
            &=-\frac{\kappa}{\alpha}(1-\alpha).
        \end{align*}
        To obtain this result, one needs to recall that the eigenvalues $\lambda_\pm$ are the solutions to the equation $x^2-x-1=0$. Thus, $\det(\boldsymbol{M})=0$ only if $\alpha=1$, which is not allowed in this case. Therefore, the corresponding graph cannot be of Type XI.
    \end{itemize}
    This leaves us with the case where $\Tilde{x}_2,\Tilde{x}_3\in\mathfrak{v}$. In this situation, it is clear that $[\Tilde{x}_1,\Tilde{x}_2]\neq0\neq[\Tilde{x}_1,\Tilde{x}_3]$, while $[\Tilde{x}_2,\Tilde{x}_3]=0$. Since $\operatorname{ad}_{\Tilde{x}_3}(L_\alpha^3)\subseteq\mathfrak{v}$, the remaining possibilities are:
    \begin{enumerate}[label = (\alph*)]
        \item $[\Tilde{x}_1,\Tilde{x}_2]\propto \Tilde{x}_2$ and $[\Tilde{x}_1,\Tilde{x}_3]\propto \Tilde{x}_3$; 
        \item $[\Tilde{x}_1,\Tilde{x}_2]\propto \Tilde{x}_3$ and $[\Tilde{x}_1,\Tilde{x}_3]\propto \Tilde{x}_2$; or
        \item $[\Tilde{x}_1,\Tilde{x}_2]\propto \Tilde{x}_3$ and $[\Tilde{x}_1,\Tilde{x}_3]\propto \Tilde{x}_3$ (or $[\Tilde{x}_1,\Tilde{x}_2]\propto \Tilde{x}_2$ and $[\Tilde{x}_1,\Tilde{x}_3]\propto \Tilde{x}_2$).
    \end{enumerate}
    Case (c) is impossible. To see this, one needs to recognize $\Tilde{x}_3$ is an eigenvector of $\operatorname{ad}_{\Tilde{x}_1}|_{\mathfrak{v}}$ corresponding to a non-zero eigenvalue, and $\operatorname{ad}_{\Tilde{x}_1}|_{\mathfrak{v}}^2(\Tilde{x}_2)\propto\operatorname{ad}_{\Tilde{x}_1}|_{\mathfrak{v}}(\Tilde{x}_2)\propto \Tilde{x}_3$. However, since $\operatorname{ad}_{\Tilde{x}_1}|_{\mathfrak{v}}$ is invertible, when restricted to $\mathfrak{v}$, one must have $\operatorname{ad}_{\Tilde{x}_1}|_{\mathfrak{v}}(\Tilde{x}_2)\propto\Tilde{x}_2\propto\Tilde{x}_3$, which is prohibited by Definition~\ref{def:graph:admissible}. 
    
    Case (b) can be ruled out by a similar argument: a direct calculation shows that $\operatorname{ad}_{\Tilde{x}_1}|_{\mathfrak{v}}$ and $\operatorname{ad}_{\Tilde{x}_1}|_{\mathfrak{v}}^2$ share the same eigenvectors. Thus, if $\Tilde{x}_2$ is an eigenvector of $\operatorname{ad}_{\Tilde{x}_1}|_{\mathfrak{v}}^2$, as demanded in case (b), it is also an eigenvector of $\operatorname{ad}_{\Tilde{x}_1}|_{\mathfrak{v}}$, which leads to a contradiction, as it would follow $\Tilde{x}_2\propto\Tilde{x}_3$, which is impossible by Definition~\ref{def:graph:admissible}. Therefore, the graph can only by of Type VI. A suitable basis exists; for example: $\Tilde{x}_1:=x_3$, $\Tilde{x}_2:=-\lambda_+x_1+\alpha x_2$ and $\Tilde{x}_3:=-\lambda_- x_1+\alpha x_2$. Here, it is important to note that $\lambda_+\neq\lambda_-$, since $\alpha\neq -1/4$.
\end{proof}

\begin{proposition}\label{prop:possible:grapha:L:one:3}
    The real Lie algebra $L_1^3$, satisfies $\operatorname{fg}(L_1^3)=3$. Moreover, every $\mathcal{M}$-magma-gradation of $L_1^3$ for which $\operatorname{gra}(L_1^3,\mathcal{M})=\operatorname{fg}(L_1^3)$ produces, via Algorithm~\ref{alg:creating:graph:modified}, a graph that is equivalent to one of Type~VI or Type~XI.
\end{proposition}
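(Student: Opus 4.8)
The plan is to follow the same strategy used in the proof of Proposition~\ref{prop:possible:grapha:L:alpha:3:geq:minus:quarter:and:exceptions}, but now tracking the extra family of admissible graphs that emerges precisely at $\alpha=1$. First I would recall from the proof of Theorem~\ref{thm:existncae:of:non:minimal:graph:admissible:} (or from Lemma~\ref{lem:L:alpha:three:graph:admissible:if:and:only:if}) that $L_1^3$ is minimal-graph-admissible, so by Lemma~\ref{lem:granularities:first:obvious:observation} we immediately get $\operatorname{fg}(L_1^3)=\dim(L_1^3)=3$; hence any $\mathcal{M}$-magma-gradation attaining the finest granularity corresponds, via Lemma~\ref{lem:minimal:magma:graded:minimal:graph:admissible} and Proposition~\ref{prop:mapping:minimal:graph:to:minimal:magma}, to a basis $\mathcal{B}=\{\Tilde{x}_1,\Tilde{x}_2,\Tilde{x}_3\}$ satisfying the relations \eqref{eqn:desired:basis}, and the associated graph is one obtained from Algorithm~\ref{alg:creating:graph}. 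Since $L_1^3$ is solvable but not nilpotent \cite{DeGraaf:2004}, Theorems~\ref{thm:non:solvability:condition:strong} and~\ref{thm:nilpotency:criteria:strong} rule out Types~I, II, IX, and X, exactly as before.

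Next I would reuse the case analysis of Proposition~\ref{prop:possible:grapha:L:alpha:3:geq:minus:quarter:and:exceptions}, organized by how many basis elements lie outside $\mathfrak{v}:=\spn\{x_1,x_2\}$. In the subcase where two basis elements lie outside $\mathfrak{v}$, the only candidates are Type~V and Type~XI; the Type~V computation shows $\Tilde{x}_1=\Tilde{x}_2$ (impossible) exactly as before, while the Type~XI computation reduces to the determinant condition $\det(\boldsymbol M)=-\tfrac{\kappa}{\alpha}(1-\alpha)=0$. The key difference from the earlier proposition is that at $\alpha=1$ this determinant \emph{does} vanish, so the Type~XI configuration is now genuinely realizable — I would exhibit an explicit basis (for instance using the golden-ratio basis $\{x_3,\,x_1+\varphi x_2,\,x_1+\vartheta x_2\}$ of Example~\ref{exa:different:L:alpha:3:graphs}, suitably reordered, or a small variant producing the loop-type edges of Type~XI) and verify the bracket relations by direct computation. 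In the remaining subcase where $\Tilde{x}_2,\Tilde{x}_3\in\mathfrak{v}$ and only $\Tilde{x}_1\notin\mathfrak{v}$, the same eigenvector argument as in Proposition~\ref{prop:possible:grapha:L:alpha:3:geq:minus:quarter:and:exceptions} applies verbatim: cases (b) and (c) are excluded because $\operatorname{ad}_{\Tilde{x}_1}|_{\mathfrak{v}}$ and its square share eigenvectors and $\operatorname{ad}_{\Tilde{x}_1}|_{\mathfrak{v}}$ is invertible on $\mathfrak{v}$, leaving only Type~VI, which is realized by diagonalizing $\operatorname{ad}_{\Tilde{x}_1}|_{\mathfrak{v}}$ (note $\alpha=1>-1/4$ guarantees distinct real eigenvalues $\lambda_\pm=(1\pm\sqrt5)/2$).

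The main obstacle — really the only place where a new argument is needed — is to confirm that \emph{no} graph types beyond VI and XI can arise at $\alpha=1$; in particular one must check that Types~III, IV, V, VII, and VIII are still impossible, and that Type~XI is the only additional type that the vanishing determinant unlocks. I would handle this by observing that $[L_1^3,L_1^3]=\mathfrak{v}$ is two-dimensional (unlike the $\alpha=0$ case where it is one-dimensional), so any graph $G(V,E)$ associated to $L_1^3$ must have at least two vertices that are targets of edges and the full subgraph induced by those two vertices must be self-contained and non-nilpotent-looking; a short check against the eleven types in Figure~\ref{fig:all:3:vertex:subgraphs} eliminates every type except VI and XI. I would also double-check the Type~VIII exclusion separately, since Type~VIII is improper by Corollary~\ref{cor:proper:choice:dependent} and therefore cannot be associated with any Lie algebra at all. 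This completes the classification: every finest-granularity $\mathcal{M}$-magma-gradation of $L_1^3$ yields, via Algorithm~\ref{alg:creating:graph:modified}, a graph equivalent to Type~VI or Type~XI, and both occur.
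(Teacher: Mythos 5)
Your overall route is the same as the paper's: the published proof of Proposition~\ref{prop:possible:grapha:L:one:3} simply repeats the case analysis of Proposition~\ref{prop:possible:grapha:L:alpha:3:geq:minus:quarter:and:exceptions} and observes that the determinant obstruction to Type~XI disappears at $\alpha=1$. The parts of your argument establishing $\operatorname{fg}(L_1^3)=3$, the exclusion of Types I, II, IX, X, the Type~V exclusion, and the Type~VI realization via the eigenbasis of $\operatorname{ad}_{x_3}|_{\mathfrak{v}}$ are fine and match the paper.

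The gap sits exactly in the step you yourself flag as the only new ingredient: actually realizing Type~XI. First, your proposed witness is not one: the golden-ratio basis $\{x_3,\,x_1+\varphi x_2,\,x_1+\vartheta x_2\}$ of Example~\ref{exa:different:L:alpha:3:graphs} satisfies $[x_1+\varphi x_2,x_1+\vartheta x_2]=0$ and has two distinct loop targets, so in any ordering it produces a Type~VI graph, not Type~XI; the ``small variant'' is left unspecified. Second, and more seriously, the inference ``$\det(\boldsymbol M)=0$, hence Type~XI is genuinely realizable'' is inverted: $\det(\boldsymbol M)=0$ means the homogeneous system $\kappa_1\Tilde{x}_1+\kappa_2\Tilde{x}_2+\kappa_3\Tilde{x}_3=0$ has a nontrivial solution, i.e.\ the candidate triple is linearly \emph{dependent}, so it cannot be a basis. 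In fact a direct check rules Type~XI out for $L_1^3$ altogether: in a Type~XI configuration $\Tilde{x}_3$ must lie in $\mathfrak{v}=[L_1^3,L_1^3]$ and be an eigenvector of $\operatorname{ad}_{x_3}|_{\mathfrak{v}}$ (eigenvalues $\varphi,\vartheta\neq 0$, and $\det(\operatorname{ad}_{x_3}|_{\mathfrak{v}})=-\alpha\neq 0$), while $\Tilde{x}_1=\mu_1x_3+v_1$ and $\Tilde{x}_2=\mu_2x_3+v_2$ must both have $\mu_i\neq 0$; then $[\Tilde{x}_1,\Tilde{x}_2]=[x_3,\mu_1v_2-\mu_2v_1]\propto\Tilde{x}_3$ forces $\mu_1v_2-\mu_2v_1\propto\Tilde{x}_3$ by invertibility, i.e.\ $\mu_1\Tilde{x}_2-\mu_2\Tilde{x}_1\propto\Tilde{x}_3$, a linear dependence. (This is precisely why the Type~XI basis exists at $\alpha=0$, where $\operatorname{ad}_{x_3}|_{\mathfrak{v}}$ is singular, but not for $\alpha\neq 0$.) So your proof --- like the paper's, which computes $\det(\boldsymbol M)$ using $\lambda_\pm^2=\lambda_\pm+1$ in place of the correct relation $\lambda_\pm^2=\lambda_\pm+\alpha$ and then reads $\det(\boldsymbol M)=0$ as permitting independence --- does not establish that Type~XI occurs; at best the disjunction in the statement survives because every finest-granularity gradation of $L_1^3$ yields Type~VI. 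To close the argument you must either exhibit a genuine Type~XI basis (which the computation above excludes) or restate the classification with Type~VI only.
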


\begin{proof}
    This proof follows the exact same reasoning as the proof for Proposition~\ref{prop:possible:grapha:L:alpha:3:geq:minus:quarter:and:exceptions}, with one key exception: the argument excluding the possibility of an $\mathcal{M}$-magma-gradation of $L_\alpha^3$ that produces, via Algorithm~\ref{alg:creating:graph:modified}, a graph that is equivalent to one of Type~XI does not apply when $\alpha=1$. In this case, $\det(\boldsymbol{M})=0$, which allows for the existence of suitable choices of $\tilde{x}_1,\tilde{x}_2,\Tilde{x}_3$ exist that generate the desired graph.
\end{proof}

\begin{proposition}\label{prop:possible:grapha:L:alpha:3:leq:minus:quarter:with:exceotions}
    The real Lie algebra $L_\alpha^3$ for $\alpha<-1/4$ satisfies $\operatorname{fg}(L_\alpha^3)=2$. Moreover, every $\mathcal{M}$-magma-gradation of $L_{\alpha}^3$ for which $\operatorname{gra}(L_\alpha^3,\mathcal{M})=\operatorname{fg}(L_\alpha^3)$ produces, via Algorithm~\ref{alg:creating:graph:modified}, a graph that is equivalent to the one depicted in Example~\ref{exa:graph:associated:to:non:graph:admissible:lie:algebra}.
\end{proposition}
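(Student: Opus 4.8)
The plan is to pin down $\operatorname{fg}(L_\alpha^3)$ first and then classify all $2$‑element magma‑gradations. By Lemma~\ref{lem:granularities:first:obvious:observation}(a) one has $\operatorname{fg}(L_\alpha^3)\le\dim(L_\alpha^3)=3$, and by part (b) of the same lemma equality would force $L_\alpha^3$ to be minimal‑graph‑admissible, which fails for $\alpha<-1/4$ by the proof of Theorem~\ref{thm:existncae:of:non:minimal:graph:admissible:} (equivalently by Lemma~\ref{lem:L:alpha:three:graph:admissible:if:and:only:if}). Hence $\operatorname{fg}(L_\alpha^3)\le2$. For the matching lower bound I would exhibit the gradation $\g_1:=\mathfrak{v}:=\spn\{x_1,x_2\}$, $\g_2:=\spn\{x_3\}$: since $\alpha\neq0$ one checks $[\g_1,\g_1]=\{0\}$, $[\g_1,\g_2]=[\mathfrak{v},x_3]=\mathfrak{v}=\g_1$, $[\g_2,\g_2]=\{0\}$, so $L_\alpha^3$ is $\mathcal{M}$‑magma‑graded over a $2$‑element magma and $\operatorname{fg}(L_\alpha^3)=2$.

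\emph{Two facts to lean on.} The classification of all $2$‑element magma‑gradations rests on the following observations about $L_\alpha^3$ for $\alpha<-1/4$: (i) $\mathfrak{v}=[L_\alpha^3,L_\alpha^3]$ is a $2$‑dimensional abelian ideal with $L_\alpha^3/\mathfrak{v}$ one‑dimensional; (ii) for every $z\in L_\alpha^3\setminus\mathfrak{v}$ the operator $\operatorname{ad}_z|_{\mathfrak{v}}$ equals a nonzero scalar multiple of $\operatorname{ad}_{x_3}|_{\mathfrak{v}}$ (elements of $\mathfrak{v}$ act trivially on $\mathfrak{v}$), whose matrix in the basis $(x_1,x_2)$ is $\left(\begin{smallmatrix}0&\alpha\\1&1\end{smallmatrix}\right)$, with characteristic polynomial $\lambda^2-\lambda-\alpha$; for $\alpha<-1/4$ this is irreducible over $\R$ with roots of nonzero real part, so $\operatorname{ad}_z|_{\mathfrak{v}}$ is invertible, has no real eigenvector, and has nonzero trace. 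Now let $L_\alpha^3=\g_1\oplus\g_2$ be any $2$‑magma‑gradation; the components are nonzero and $\dim L_\alpha^3=3$, so one is $2$‑dimensional and the other $1$‑dimensional. Since the graph produced by Algorithm~\ref{alg:creating:graph:modified} depends only on the unordered pair $\{\g_1,\g_2\}$ and the bracket (the end‑vertex of each edge being the unique component containing the corresponding nonzero bracket), it suffices to treat $\dim\g_1=2$, $\dim\g_2=1$.

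\emph{Forcing $\g_1=\mathfrak{v}$.} Suppose $\g_1\neq\mathfrak{v}$. Then $\g_1\cap\mathfrak{v}$ is $1$‑dimensional, say $=\spn\{u\}$, and $\g_1=\spn\{u,z\}$ with $z\notin\mathfrak{v}$. Since $\mathfrak{v}$ is abelian, $[\g_1,\g_1]=\spn\{[u,z]\}$ spans the same line as $\operatorname{ad}_{x_3}(u)$, a nonzero vector of $\mathfrak{v}$. The gradation condition puts this line in $\g_1$ or in $\g_2$. If in $\g_1$, it lies in $\g_1\cap\mathfrak{v}=\spn\{u\}$, so $u$ would be a real eigenvector of $\operatorname{ad}_{x_3}|_{\mathfrak{v}}$, contradicting (ii). So $[\g_1,\g_1]\subseteq\g_2$. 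If $\g_2\cap\mathfrak{v}=\{0\}$, this forces $[\g_1,\g_1]\subseteq\g_2\cap\mathfrak{v}=\{0\}$, contradicting invertibility of $\operatorname{ad}_{x_3}|_{\mathfrak{v}}$; hence $\g_2=\spn\{v\}$ with $v\in\mathfrak{v}$ and $\{u,v\}$ a basis of $\mathfrak{v}$, and $\operatorname{ad}_{x_3}(u)\in\spn\{v\}$. Applying the same reasoning to $[\g_1,\g_2]=\spn\{[z,v]\}$, which spans the line of $\operatorname{ad}_{x_3}(v)$: it cannot lie in $\g_2$ (again a real eigenvector), so it lies in $\g_1\cap\mathfrak{v}=\spn\{u\}$, i.e.\ $\operatorname{ad}_{x_3}(v)\in\spn\{u\}$. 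Then $\operatorname{ad}_{x_3}|_{\mathfrak{v}}$ interchanges the lines $\spn\{u\}$ and $\spn\{v\}$, so in the basis $(u,v)$ it is off‑diagonal and has trace $0$, contradicting (ii). Hence $\g_1=\mathfrak{v}$ and $\g_2=\spn\{w\}$ for some $w=cx_3+v_0$ with $c\neq0$, $v_0\in\mathfrak{v}$.

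\emph{Reading off the graph, and the main obstacle.} Feeding this gradation into Algorithm~\ref{alg:creating:graph:modified}: $[\g_1,\g_1]=[\mathfrak{v},\mathfrak{v}]=\{0\}$ and $[\g_2,\g_2]=\{0\}$ yield no edges, while $[\g_1,\g_2]=c[\mathfrak{v},x_3]=c\,\mathfrak{v}=\g_1\neq\{0\}$ yields the two edges $(\g_1,\g_2,\g_1)$ and $(\g_2,\g_1,\g_1)$, i.e.\ a self‑loop at $\g_1$ labelled $\g_2$ together with an edge $\g_2\to\g_1$ labelled $\g_1$. This is exactly the graph of Example~\ref{exa:graph:associated:to:non:graph:admissible:lie:algebra} (the value of $\delta$ on the diagonal pairs is irrelevant since both diagonal brackets vanish), and any other $2$‑magma‑gradation yields a graph isomorphic to it, hence equivalent in the sense of Definition~\ref{def:equivalent:graphs}. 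The main obstacle is the third step: one must systematically exclude the ``twisted'' decompositions with $\g_1\neq\mathfrak{v}$, and the decisive input is not merely that $\operatorname{ad}_{x_3}|_{\mathfrak{v}}$ has complex spectrum — that only disposes of the eigenvector subcases — but that its eigenvalues have nonzero real part, which kills the line‑interchange subcase; this is precisely the point at which the hypothesis $\alpha<-1/4$ (rather than $\alpha\le-1/4$ with an $\arctan$‑ratio condition, or $\alpha=-1/4$) enters.
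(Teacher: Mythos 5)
Your proof is correct, and its overall architecture matches the paper's: bound $\operatorname{fg}(L_\alpha^3)\le 2$ via Lemma~\ref{lem:granularities:first:obvious:observation} and non-minimal-graph-admissibility, split any two-component gradation as $\dim\g_1=2$, $\dim\g_2=1$, force the two-dimensional component to equal $\mathfrak{v}=\spn\{x_1,x_2\}$ by a contradiction argument resting on spectral properties of $\operatorname{ad}_{x_3}|_{\mathfrak{v}}$, and then read off the graph. The difference lies in the final exclusion: the paper assumes $[\g_1,\g_1]\neq\{0\}$, reduces to the case $z\in\mathfrak{v}$, and kills the surviving configuration by showing $\operatorname{ad}_y^2|_{\mathfrak{v}}$ (characteristic polynomial $X^2-(2\alpha+1)X+\alpha^2$, discriminant $4\alpha+1<0$) has no real eigenvectors, so $[y,[y,x]]\notin\g_2$ although the grading would require it; you instead kill the "line-interchange" configuration with the basis-independent trace of $\operatorname{ad}_{x_3}|_{\mathfrak{v}}$, which equals $1$. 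Your device is arguably cleaner and, unlike the paper's, does not invoke $\alpha<-1/4$ a second time. This also exposes the one inaccuracy in your closing commentary: the trace (equivalently, the nonzero real part of the eigenvalues) is $1$ for \emph{every} $\alpha$, so it is not where the hypothesis $\alpha<-1/4$ enters; in both your argument and the paper's, that hypothesis is used exactly in guaranteeing that $\operatorname{ad}_z|_{\mathfrak{v}}$ has no real eigenvector (discriminant $1+4\alpha<0$), which is what disposes of the eigenline subcases and what fails at $\alpha=-1/4$, where Proposition~\ref{prop:possible:grapha:L:minus_quarter:3} requires a separate treatment. This misattribution does not affect the validity of the proof, since every property you actually use (invertibility, absence of real eigenvectors, nonzero trace) is correctly established under the stated hypothesis.
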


\begin{proof}
    Suppose $\alpha<-1/4$. In Lemma~\ref{lem:L:alpha:three:graph:admissible:if:and:only:if} and the proof of Theorem~\ref{thm:existncae:of:non:minimal:graph:admissible:}, we have established that $L_\alpha^3$ is not minimal-graph-admissible in this case. Consequently, by Lemma~\ref{lem:granularities:first:obvious:observation}, one has $\operatorname{fg}(L_\alpha^3)<3$. Consider now a vector space decomposition $L_\alpha^3=\g_1\oplus\g_2$ that provides $\mathcal{M}$-magma-grading of $L_\alpha^3$. Then, we can, assume without loss of generality, that $\dim(\g_1)=2$ and $\dim(\g_2)=1$. Clearly, $[\g_2,\g_2]=\{0\}$, due to the bilinearity and antisymmetry of the Lie bracket. Assume now that  $[\g_1,\g_1]\neq \{0\}$. In this case, dictated by the bracket relations \eqref{eqn:appendix:L:alpha:3:bracket}, there must exit an element $y\in \g_1$ such that $y=\kappa x_3+v_1$, where $\kappa\in \R^*$ and $v_1\in\spn\{x_1,x_2\}$. For simplicity, we denote the space $\spn\{x_1,x_2\}=:\mathfrak{v}$ and assume, without loss of generality, that $\kappa=1$. Furthermore, there must exist an element $z\in \g_1$ such that $z\notin \spn\{y\}$ and $z=z_1+z_2$, where $z_1\in\spn\{x_3\}$ and $z_2\in \mathfrak{v}$. According to the requirement $[\g_1,\g_1]\neq\{0\}$ and $z\notin\spn\{y\}$, we may assume without loss of generality $z_2\neq 0$, otherwise one can exchange the roles of $y$ and $z$. We now compute the bracket $[y,z]$ to obtain information about $[\g_1,\g_1]$. Before doing so, note that all eigenvalues of the adjoint map $\operatorname{ad}_y$ are either zero (corresponding to the eigenspace spanned by $y$ itself) or imaginary, since the characteristic polynomial of the map $\operatorname{ad}_{y}$ is given by $-X(X^2-X-\alpha)$. The discriminant of the polynomial $X^2-X-\alpha$ is $\Delta=1+4\alpha$ which is negative if $\alpha<-1/4$ \cite{Cox:2025}. Thus, the only real eigenvalue of $\operatorname{ad}_y$ is given by zero corresponding to eigenspace spanned by $y$. Hence, $z$ is not an eigenvector of $\operatorname{ad}_y$ and $[y,z]\notin\spn\{z\}$. Consider now the two cases $z_1\neq0$ and $z_1=0$ separately, in ordered to detect a contradiction:  
    \begin{itemize}
        \item \textbf{1. Case:} $z_1\neq 0$. Here, we can assume, without loss of generality, that $z_1=x_3+z_2$. Since $z\notin \spn\{y\}$, one has $y-z=v_1-z_2\in \g_1$, where $y-z\notin\spn\{y\}$, $y-z\neq 0$, $[y,y-z]=-[y,z]\neq 0$, and $y-z\in\mathfrak{v}$. Thus, we can replace $z$ with $y-z$, reducing the problem to the second case.
        \item \textbf{2. Case:} $z_1= 0$. In this case, it is straightforward to verify that $[y,z]\in\g_2$. To see this, recall that $z$ is not an eigenvector of $\operatorname{ad}_y$, that $\operatorname{ad}_y(L_\alpha^3)\subseteq\mathfrak{v}$, and that $z\in\mathfrak{v}$, while $y=x_3+v_1$ with $v_1\in\mathfrak{v}$. This implies that $[y,z]\notin \g_1$. However, by the grading structure one has either $[\g_1,\g_1]\subseteq \g_1$ or $[\g_1,\g_1]\subseteq\g_2$. Hence, $[\g_1,\g_1]\subseteq \g_2$, which requires $\g_1\subseteq \mathfrak{v}$. By the same logic, one must have $[\g_1,\g_2]\subseteq \g_1$, as any non-zero $x\in \g_2\setminus\{0\}$ satisfies, according to the relations \eqref{eqn:appendix:L:alpha:3:bracket}, $[y,x]\neq 0$. However, $x$ is not an eigenvector of $\operatorname{ad}_y$. Thus $[y,x]\in \g_1$. Next, compute the eigenvalues of $\operatorname{ad}_y^2|_{\mathfrak{v}}$. Its characteristic polynomial reads:
        \begin{align*}
            X^2-(2\alpha+1)X+\alpha^2=\left(X-\frac{1+2\alpha+i\sqrt{|1+4\alpha|}}{2}\right)\left(X-\frac{1+2\alpha-i\sqrt{|1+4\alpha|}}{2}\right),
        \end{align*}
        which has no real roots. Therefore, $\operatorname{ad}_y^2|_{\mathfrak{v}}$ has no real eigenvectors\footnote{In the context of Lemma~\ref{lem:L:alpha:three:graph:admissible:if:and:only:if}, it is essential to realize that $\operatorname{ad}_y^\ell|_{\mathfrak{v}}$ has real eigenvalues for some $\ell\geq 3$ and certain values of $\alpha<-1/4$.}. Thus, $0\neq [y,[y,x]]\notin \g_2$, although $[\g_1,[\g_1,\g_2]]\subseteq \g_2$, which is a contradiction, as it would imply that $x$ is an eigenvector of $y$ with a non-zero eigenvalue.
    \end{itemize}
    This analysis shows that the initial assumption was incorrect. Hence, $[\g_1,\g_1]=\{0\}$. Inspection of the defining relations \eqref{eqn:appendix:L:alpha:3:bracket}, one concludes that $\g_1= \mathfrak{v}$, since $\g_1$ is two-dimensional. The space $\g_2$ must consequently be spanned by an element $\Tilde{y}=x_3+v$ with $v\in\mathfrak{v}$. The resulting $\mathcal{M}$-magma gradation satisfies $[\g_1,\g_1]=[\g_2,\g_2]=\{0\}$, $[\g_1,\g_2]=[\g_2,\g_1]=\g_1$. This completes the proof, as we have also shown that $\operatorname{fg}(L_\alpha^3)\geq 2$. Combined with the earlier result $\operatorname{fg}(L_\alpha^3)<3$, we conclude $\operatorname{fg}(L_\alpha^3)=2$ when $\alpha<-1/4$.
\end{proof}

\begin{proposition}\label{prop:possible:grapha:L:minus_quarter:3}
    The real Lie algebra $L_{-1/4}^3$ satisfies $\operatorname{fg}(L_{-1/4}^3)=2$. Moreover, every $\mathcal{M}$-magma-gradation of $L_{-1/4}^3$ for which $\operatorname{gra}(L_{-1/4}^3,\mathcal{M})=\operatorname{fg}(L_{-1/4}^3)$ produces, via Algorithm~\ref{alg:creating:graph:modified}, a graph that is equivalent to the one depicted in Example~\ref{exa:graph:associated:to:non:graph:admissible:lie:algebra}.
\end{proposition}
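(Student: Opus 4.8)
The plan is to follow essentially the same strategy as in the proof of Proposition~\ref{prop:possible:grapha:L:alpha:3:leq:minus:quarter:with:exceotions}, specializing the argument to $\alpha = -1/4$, since the structural situation is closely parallel but the eigenvalue analysis degenerates (the map $\operatorname{ad}_y|_{\mathfrak{v}}$ now has a single doubly-degenerate eigenvalue rather than a complex-conjugate pair). First I would invoke Lemma~\ref{lem:L:alpha:three:graph:admissible:if:and:only:if} (together with the final case of its proof, which treats $\alpha = -1/4$ explicitly) to conclude that $L_{-1/4}^3$ is non-graph-admissible, hence certainly not minimal-graph-admissible; by Lemma~\ref{lem:granularities:first:obvious:observation} this already gives $\operatorname{fg}(L_{-1/4}^3) < 3$. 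The remaining task is to show that $\operatorname{fg}(L_{-1/4}^3) \geq 2$ and that any magma-gradation attaining this granularity has the claimed form, i.e.\ is the decomposition $\g_1 = \mathfrak{v} := \spn\{x_1,x_2\}$, $\g_2 = \spn\{x_3 + v\}$ for some $v \in \mathfrak{v}$, with bracket relations $[\g_1,\g_1] = [\g_2,\g_2] = \{0\}$ and $[\g_1,\g_2] = [\g_2,\g_1] = \g_1$.

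Second, for the classification I would take an arbitrary vector-space decomposition $L_{-1/4}^3 = \g_1 \oplus \g_2$ providing an $\mathcal{M}$-magma-gradation with $\operatorname{gra} = 2$, and without loss of generality set $\dim(\g_1) = 2$, $\dim(\g_2) = 1$. As before, $[\g_2,\g_2] = \{0\}$ is automatic. The heart of the argument is to rule out $[\g_1,\g_1] \neq \{0\}$: assuming the contrary forces the existence of $y = x_3 + v_1 \in \g_1$ (with $v_1 \in \mathfrak{v}$) and a second element $z \in \g_1$ not proportional to $y$ with nontrivial $\mathfrak{v}$-component. Here the key computational input changes from the $\alpha < -1/4$ case: the characteristic polynomial of $\operatorname{ad}_y|_{\mathfrak{v}}$ is $X^2 - X + 1/4 = (X - 1/2)^2$, so $\operatorname{ad}_y|_{\mathfrak{v}}$ has the single eigenvalue $1/2$ with a one-dimensional eigenspace — recall from the proof of Lemma~\ref{lem:L:alpha:three:graph:admissible:if:and:only:if} that this eigenspace is spanned by $x_1 - 2x_2$. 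I would then reproduce the two-case split ($z$ having a nonzero $x_3$-component, which reduces to the other case by replacing $z$ with $y - z$; and $z \in \mathfrak{v}$), deriving in each case that the grading constraints $[\g_1,\g_1] \subseteq \g_1$ or $[\g_1,\g_1] \subseteq \g_2$ together with $[\g_1,[\g_1,\g_2]] \subseteq \g_2$ cannot be met. The crucial non-degeneracy fact needed is that $\operatorname{ad}_y^2|_{\mathfrak{v}}$ still has no eigenvector outside the line $\spn\{x_1 - 2x_2\}$; since $\operatorname{ad}_y|_{\mathfrak{v}} = \tfrac12 I + N$ with $N$ a nonzero nilpotent, one computes $\operatorname{ad}_y^2|_{\mathfrak{v}} = \tfrac14 I + N$ (using $N^2 = 0$), which is again a single Jordan block with one-dimensional eigenspace, so a generic $x$ is not an eigenvector and $\operatorname{ad}_y^2(x) \notin \spn\{x\}$ — this is exactly what is needed to contradict $[\g_1,[\g_1,\g_2]] \subseteq \g_2$. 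Once $[\g_1,\g_1] = \{0\}$ is established, inspection of the bracket relations \eqref{eqn:appendix:L:alpha:3:bracket} forces $\g_1 = \mathfrak{v}$ (the only two-dimensional abelian subspace that is also the image of $\operatorname{ad}$), hence $\g_2 = \spn\{x_3 + v\}$, and the bracket table follows immediately. Finally, exhibiting this decomposition explicitly shows $\operatorname{fg}(L_{-1/4}^3) \geq 2$, closing the granularity computation, and applying Algorithm~\ref{alg:creating:graph:modified} to $(\g_1, \g_2)$ with $\delta(1,1) = \delta(2,2) = 0$ (resp.\ $\overline{0}$), $\delta(1,2) = \delta(2,1) = 1$ yields exactly the two-vertex graph of Example~\ref{exa:graph:associated:to:non:graph:admissible:lie:algebra}.

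The main obstacle I anticipate is handling the degenerate Jordan-block structure carefully: unlike the $\alpha < -1/4$ case, where the absence of real eigenvectors of $\operatorname{ad}_y|_{\mathfrak{v}}$ (and of its iterates for low powers) made the contradictions clean, at $\alpha = -1/4$ there genuinely \emph{is} a one-dimensional real eigenspace for $\operatorname{ad}_y|_{\mathfrak{v}}$, so I must argue that $z$ (or the relevant element $x \in \g_2$) cannot be made to lie in it consistently with the grading axioms, and that passing to $\operatorname{ad}_y^2$ does not create new invariant directions. The footnote in the proof of Lemma~\ref{lem:L:alpha:three:graph:admissible:if:and:only:if} and the parallel footnote in Proposition~\ref{prop:possible:grapha:L:alpha:3:leq:minus:quarter:with:exceotions} both flag this subtlety; the argument here is actually slightly cleaner because $\operatorname{ad}_y|_{\mathfrak{v}}$ and all its powers share the same single eigenline, so no case-by-$\ell$ distinction is needed. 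I would also need to double-check the trivial-combination / determinant-type linear-independence bookkeeping analogous to the Type~XI analysis, but I expect this to be routine. Overall the proof will be a near-verbatim adaptation of Proposition~\ref{prop:possible:grapha:L:alpha:3:leq:minus:quarter:with:exceotions} with the eigenvalue computation replaced by the $(X-1/2)^2$ Jordan-block analysis, so it can reasonably be compressed by citing that proof and highlighting only the points of divergence.
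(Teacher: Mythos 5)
Your proposal is correct and follows essentially the same route as the paper: bound $\operatorname{fg}(L_{-1/4}^3)<3$ via the failure of minimal-graph-admissibility from Lemma~\ref{lem:L:alpha:three:graph:admissible:if:and:only:if}, analyze an arbitrary $2+1$ decomposition $\g_1\oplus\g_2$, rule out $[\g_1,\g_1]\neq\{0\}$ using the adjoint action of $y=x_3+v_1$ on $\mathfrak{v}$ with its doubly degenerate eigenvalue $1/2$ and single eigenline (the paper uses $-x_1/2+x_2\propto x_1-2x_2$ and the fact that $\operatorname{ad}_y|_{\mathfrak{v}}$ and $\operatorname{ad}_y^2|_{\mathfrak{v}}$ share eigenvectors, which your identity $\operatorname{ad}_y^2|_{\mathfrak{v}}=\tfrac14 I+N$ reproduces), and then conclude $\g_1=\mathfrak{v}$, $\g_2=\spn\{x_3+v\}$, exhibiting the gradation that gives $\operatorname{fg}=2$ and the graph of Example~\ref{exa:graph:associated:to:non:graph:admissible:lie:algebra}. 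The only difference is cosmetic: the paper organizes the contradiction by cases on whether $[y,z]$ lands in $\g_1$ or $\g_2$ (with subcases on $[\g_1,\g_2]$), whereas you package the same eigenline/Jordan-block facts following the case split of Proposition~\ref{prop:possible:grapha:L:alpha:3:leq:minus:quarter:with:exceotions}.
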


\begin{proof}
    First, note that $L_{-1/4}^3$ not minimal-graph-admissible, as established in the proof of Lemma~\ref{lem:L:alpha:three:graph:admissible:if:and:only:if}. Therefore, by Lemma~\ref{lem:granularities:first:obvious:observation}, one has $\operatorname{fg}(L_{-1/4}^3)<3$. The remainder of this proof is a similar to approach of the proof for Proposition~\ref{prop:possible:grapha:L:alpha:3:leq:minus:quarter:with:exceotions}. Assume a gradation $L_{-1/4}^3=\g_1\oplus\g_2$, where $\g_1$ is two-dimensional and $\g_2$ one-dimensional. Clearly, $[\g_2,\g_2]=\{0\}$. Suppose now that $[\g_1,\g_1]\neq\{0\}$. Then, there exists an element $y\in \g_1$ such that $y=x_3+v_1$ with $v_1\in\spn\{x_1,x_2\}=:\mathfrak{v}$. Furthermore, there must exist an element $z\in\g_1$ which satisfies $z\notin\spn\{y\}$. It follows that $[y,z]\neq 0$, leaving two possibilities: (a) $[y,z]\in\g_1$ and consequently $[\g_1,\g_1]\subseteq \g_1$; or (b) $[y,z]\in\g_2$ and consequently $[\g_1,\g_1]\subseteq \g_2$. Examine these two cases separately:
    \begin{enumerate}[label = (\alph*)]
        \item $\boldsymbol{[\g_1,\g_1]\subseteq \g_2}.$ Since $[y,z]\in\mathfrak{v}\setminus\{0\}$ and $[y,[y,z]]\in(\mathfrak{v}\cap \g_1)\setminus \{0\}$, it follows $[y,z]$ lies in an eigenspace of $\operatorname{ad}_y$ corresponding to the non-zero eigenvalue $1/2$. This element is consequently proportional to the element $z_2:=-x_1/2+x_2$. Moreover, one must have that $\g_1=\spn\{y,z_2\}$. Now let $x\in\g_2$, which implies that $x\notin\spn\{z_2\}$, and henceforth $0\neq[y,x]\notin\g_2$. This implies $[y,x]\in \g_1$. However, since $\operatorname{ad}_y(L_{-1/4}^3)\subseteq\mathfrak{v}$ and $y=x_3+v_1$, one must have $[y,x]\in\spn\{z_2\}$. This would clearly require $x=\kappa_1 y+\kappa_2 z_2$, leading to a contradiction, as the grading requires $\g_1\cap\g_1=\{0\}$.
        \item $\boldsymbol{[\g_1,\g_1]\subseteq \g_2}.$ This implies that $\g_2=\spn\{[y,z]\}$, since $0\neq [y,z]\in \g_2$. We can now consider the following two subcases:
        \begin{enumerate}[label = (\greek*)]
            \item $\boldsymbol{[\g_1,\g_2]\subseteq \g_2}.$ In this case, one must have that $\g_2$ is the eigenspace of $\operatorname{ad}_y$, since $[y,z]\in\mathfrak{v}\setminus\{0\}\subseteq \g_2$ and $[y,[y,z]]\in \g_2\setminus\{0\}$. Two possibilities arise in this case: either (i) $z\in \mathfrak{v}$ or (ii) $z\notin\mathfrak{v}$. In case (i), one needs to have $[y,z]=\operatorname{ad}_y|_{\mathfrak{v}}(z)\propto -x_1/2+x_2$. Since $\det(\operatorname{ad}_y|_{\mathfrak{v}})=1/4\neq0$, the linear map $\operatorname{ad}_{y}|_{\mathfrak{v}}$ is invertible. Thus $z\propto (\operatorname{ad}_y|_{\mathfrak{v}})^{-1}(-x_1/2+x_2)\propto -x_1/2+x_2$, contradicting the assumption that $\g_1\cap\g_2=\{0\}$.
            \item $\boldsymbol{[\g_1,\g_2]\subseteq\g_1}.$ Here, one has $[y,[y,z]]\in \g_1\setminus\{0\}$, since $[y,z]\in\g_2\setminus\{0\}$ and $[y,v]\neq0$ for all $v\in\mathfrak{v}\setminus\{0\}$. Moreover, because $\operatorname{ad}_y(L_{-1/4}^3)\subseteq\mathfrak{v}$, one has $[y,[y,z]]\in\mathfrak{v}$, and hence $\g_1=\spn\{y,[y,[y,z]]\}$, since $y=x_3+v_1$ with $v_1\in\mathfrak{v}$. However, this also implies $[y,[y,[y,[y,z]]]]\propto[y,[y,z]]$, showing that $[y,[y,z]]$ is an eigenvector of $\operatorname{ad}_y^2|_{\mathfrak{v}}$ with a non-zero eigenvalue. A straightforward calculation reveals that the eigenvectors of $\operatorname{ad}_y^2|_{\mathfrak{v}}$ and $\operatorname{ad}_y|_{\mathfrak{v}}$ coincide, both to non-zero eigenvalues. Moreover, since $[y,[y,z]]\in\mathfrak{v}$, one must have that $[y,[y,z]]$ is eigenvector of $\operatorname{ad}_y$ to a non-zero eigenvalue. Consequently $[y,[y,z]]\in\g_2$, which is forbidden by the condition that $\g_1\cap\g_2=\{0\}$.
        \end{enumerate}
        Therefore, both subcases ($\alpha$) and ($\beta$) are impossible.
    \end{enumerate}
    We conclude that $[\g_1,\g_1]=\{0\}$. The remaining steps follow analogously to the proof of Proposition~\ref{prop:possible:grapha:L:alpha:3:leq:minus:quarter:with:exceotions}.
\end{proof}

We can now combine these results to verify Conjecture~\ref{con:final:gradation:generalization} in case of the real Lie algebra $L_\alpha^3$:
\begin{theorem}\label{thm:final:generalization:conjectore:for:example}
    Let $L_\alpha^3$ be the real three-dimensional Lie algebra defined by the bracket relations \eqref{eqn:appendix:L:alpha:3:bracket} and $(\mathcal{M},\delta)$ an abelian magma such that $L_\alpha^3$ is $\mathcal{M}$-magma-graded, where the $\mathcal{M}$-gradation has the finest granularity. Then, the graph $G_{L_\alpha^3}(V,E)$ associated with the pair $(L_\alpha^3,(\mathcal{M},\delta))$, obtained via Algorithm~\ref{alg:creating:graph:modified}, faithfully represents the internal structure of $L_\alpha^3$, in the sense that the following assertions hold:
    \begin{enumerate}[label = (\roman*)]
        \item \textbf{Solvability.}  The Lie algebra $L_\alpha^3$ is non-solvable if and only if $G_{L_\alpha^3}(V,E)$ contains a self-contained subgraph $G_W\equiv G_W(\Tilde{V},\Tilde{E})$ that is induced by a closed directed walk $W$.
        \item \textbf{Derived Series.} The sequence of graphs obtained via Algorithm~\ref{alg:generating:generating:the:graph:derived:altered:modified} can be associated with the Lie algebras of the derived series of $L_\alpha^3$, in the sense that for every $\ell\in\N_{\geq0}$, the direct sum of the subspaces labeling the vertices of the graphs $\mathcal{D}^\ell G_{L_\alpha^3}(V,E)$ coincide with derived algebras $\mathcal{D}^\ell\g$.
        \item \textbf{Nilpotency.} The Lie algebra $L_\alpha^3$ is nilpotent if and only if $G(V,E)$ contains no closed directed walk.
        \item \textbf{Lower central series.} The sequence of graphs obtained via Algorithm~\ref{alg:generating:lower:central:series:modified} can be associated with the lower central series of $L_\alpha^3$, in the sense that for every $\ell\in\N_{\geq0}$ the direct sum of the subspaces labeling the vertices of the graphs $\mathcal{C}^\ell G_{L_\alpha^3}(V,E)$ coincide with Lie algebras $\mathcal{D}^\ell\g$ of the lower central series.
        \item \textbf{Ideals.}  If  $W\subseteq V$ is a subset that satisfies the ideal-graph-property, then the direct sum of the subspaces labeling the vertices in $W$ span an ideal of $L_\alpha^3$.
        \item \textbf{Simplicity.} If $L_\alpha^3$ is simple, then there exists a closed directed walk $W$ that induces a subgraph $G(\Tilde{V},\Tilde{E})$ with $V=\Tilde{V}$.
        \item \textbf{Semisimplicity.} If $L_\alpha^3$ is semisimple, then every vertex is part of a closed directed walk that induces a self-contained subgraph.
    \end{enumerate}
\end{theorem}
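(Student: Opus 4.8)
The plan is to exhaust the parameter $\alpha\in\R$ according to the five regimes already isolated in Propositions~\ref{prop:possible:grapha:L:0:3}, \ref{prop:possible:grapha:L:alpha:3:geq:minus:quarter:and:exceptions}, \ref{prop:possible:grapha:L:one:3}, \ref{prop:possible:grapha:L:alpha:3:leq:minus:quarter:with:exceotions}, and \ref{prop:possible:grapha:L:minus_quarter:3}, namely $\alpha=0$; $\alpha>-1/4$ with $\alpha\notin\{0,1\}$; $\alpha=1$; $\alpha<-1/4$; and $\alpha=-1/4$. Those propositions already pin down, up to graph equivalence, precisely which graphs $G_{L_\alpha^3}(V,E)$ arise from a finest-granularity gradation in each regime, and since every one of the seven assertions (i)--(vii) is invariant under graph isomorphism, it suffices to check each of the finitely many admissible graph types once per regime. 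I would also record at the outset two facts that dispose of several assertions uniformly: $L_\alpha^3$ is solvable for every $\alpha$ (see~\cite{DeGraaf:2004}), and it is never nilpotent (a direct computation gives $\mathcal{C}^1 L_\alpha^3=\spn\{x_1,x_2\}$ for $\alpha\neq0$ and $\mathcal{C}^1 L_0^3=\spn\{x_2\}$, and in both cases $\mathcal{C}^{\ell+1}L_\alpha^3=\mathcal{C}^\ell L_\alpha^3$ for $\ell\ge1$), so $L_\alpha^3$ is in particular neither simple nor semisimple and assertions (vi) and (vii) hold vacuously throughout.

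First I would treat the regime $\alpha>-1/4$ (covering the sub-cases $\alpha=0$, $\alpha=1$, and $\alpha\notin\{0,1\}$). By the proof of Theorem~\ref{thm:existncae:of:non:minimal:graph:admissible:} and Lemma~\ref{lem:L:alpha:three:graph:admissible:if:and:only:if}, $L_\alpha^3$ is minimal-graph-admissible here, so $\operatorname{fg}(L_\alpha^3)=3$ by Lemma~\ref{lem:granularities:first:obvious:observation}, and by Lemma~\ref{lem:minimal:magma:graded:minimal:graph:admissible} a finest-granularity gradation is exactly a minimal-magma-gradation; by Proposition~\ref{prop:mapping:minimal:graph:to:minimal:magma} the associated graph produced by Algorithm~\ref{alg:creating:graph:modified} coincides, under the label identification $\g_j\leftrightarrow e_j$, with the graph obtained from Algorithm~\ref{alg:creating:graph} applied to a basis satisfying~\eqref{eqn:desired:basis}, and the same correspondence carries Algorithms~\ref{alg:generating:generating:the:graph:derived:altered:modified} and~\ref{alg:generating:lower:central:series:modified} onto Algorithms~\ref{alg:generating:generating:the:graph:derived:altered} and~\ref{alg:generating:lower:central:series}. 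Hence all seven assertions reduce to results already in the main text: (i) is Theorem~\ref{thm:non:solvability:condition:strong} together with solvability of $L_\alpha^3$; (ii) is Lemma~\ref{lem:derived:series:graph:alg:valid}; (iii) is Theorem~\ref{thm:nilpotency:criteria:strong} together with non-nilpotency of $L_\alpha^3$; (iv) is Lemma~\ref{lem:lower:central:series:of:graphs}; (v) is Lemma~\ref{lem:ideal:span}; and (vi)--(vii) are vacuous. The concrete graph types produced (Type~III/V/VII/XI for $\alpha=0$, Type~VI for $\alpha\notin\{0,1\}$, Type~VI/XI for $\alpha=1$) can be cross-checked against Figure~\ref{fig:all:3:vertex:subgraphs}, but are not logically needed in this regime.

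It remains to handle $\alpha\le-1/4$, where the genuine content lies. By Propositions~\ref{prop:possible:grapha:L:alpha:3:leq:minus:quarter:with:exceotions} and~\ref{prop:possible:grapha:L:minus_quarter:3}, $\operatorname{fg}(L_\alpha^3)=2$ and every finest-granularity gradation is equivalent to the one of Example~\ref{exa:graph:associated:to:non:graph:admissible:lie:algebra}, with $\g_1:=\spn\{x_1,x_2\}$, $\g_2:=\spn\{x_3\}$, the only nonvanishing bracket being $[\g_1,\g_2]=[\g_2,\g_1]=\g_1$; crucially this bracket exhausts its target subspace. The graph $G_{L_\alpha^3}(V,E)$ from Algorithm~\ref{alg:creating:graph:modified} then has vertex set $\{\g_1,\g_2\}$ and edge set $\{(\g_1,\g_2,\g_1),(\g_2,\g_1,\g_1)\}$, and I would verify the assertions by inspection: (i) the only closed directed walks are the loop $(\g_1,(\g_1,\g_2,\g_1),\g_1)$, whose edge is labeled by $\g_2\notin\{\g_1\}$, so no self-contained subgraph is induced by a closed directed walk, consistent with solvability; (ii) Algorithm~\ref{alg:generating:generating:the:graph:derived:altered:modified} deletes $\g_2$ (which has no incoming edge) and all edges at the first step, giving $\mathcal{D}^1V=\{\g_1\}$, $\mathcal{D}^1E=\emptyset$, then $\mathcal{D}^2V=\emptyset$, matching $\mathcal{D}^0L_\alpha^3=L_\alpha^3$, $\mathcal{D}^1L_\alpha^3=\g_1$, $\mathcal{D}^\ell L_\alpha^3=\{0\}$ for $\ell\ge2$; (iii) the graph contains the loop above, consistent with non-nilpotency; (iv) Algorithm~\ref{alg:generating:lower:central:series:modified} retains $\g_1$ and the loop at every step, matching $\mathcal{C}^0L_\alpha^3=L_\alpha^3$, $\mathcal{C}^\ell L_\alpha^3=\g_1$ for $\ell\ge1$; (v) the subsets of $\{\g_1,\g_2\}$ with the ideal-graph-property are exactly $\{\g_1\}$ and $\{\g_1,\g_2\}$, with spans $\g_1$ and $L_\alpha^3$, both ideals, while $\{\g_2\}$ fails since $(\g_2,\g_1,\g_1)$ leaves it; and (vi)--(vii) are vacuous. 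This reproduces the computations recorded in Example~\ref{exa:graph:associated:to:non:graph:admissible:lie:algebra}.

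The step I expect to be the main obstacle is items (ii) and (iv) in the $\alpha\le-1/4$ regime: one must argue that the pruning rules of the modified Algorithms~\ref{alg:generating:generating:the:graph:derived:altered:modified} and~\ref{alg:generating:lower:central:series:modified}, proven correct only in the minimal-graph-admissible setting, still track the derived and lower central series here. The key point to make explicit is that, because the single nonzero bracket $[\g_1,\g_2]$ equals its target subspace $\g_1$, a vertex is deleted in the derived (resp. lower central) construction exactly when the corresponding subspace leaves $\mathcal{D}^{\ell+1}L_\alpha^3$ (resp. $\mathcal{C}^{\ell+1}L_\alpha^3$); with this established the remaining bookkeeping is routine. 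Everything else — the solvability/nilpotency dichotomy via Theorems~\ref{thm:non:solvability:condition:strong} and~\ref{thm:nilpotency:criteria:strong}, the ideal count via Lemma~\ref{lem:ideal:span}, and the vacuity of the simplicity and semisimplicity clauses — is immediate.
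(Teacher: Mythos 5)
Your proposal is correct and follows essentially the same route as the paper: it reduces the statement, via Propositions~\ref{prop:possible:grapha:L:0:3}--\ref{prop:possible:grapha:L:minus_quarter:3}, to finitely many graph types, invokes the main-text results (Theorem~\ref{thm:non:solvability:condition:strong}, Lemma~\ref{lem:derived:series:graph:alg:valid}, Theorem~\ref{thm:nilpotency:criteria:strong}/\ref{thm:nilpotebt:if:graph:series.Terminates}, Lemmas~\ref{lem:lower:central:series:of:graphs} and~\ref{lem:ideal:span}) in the minimal-graph-admissible regime $\alpha>-1/4$, and verifies $\alpha\le-1/4$ by the explicit two-vertex computation of Example~\ref{exa:graph:associated:to:non:graph:admissible:lie:algebra}. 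Your only deviations are cosmetic: you dispose of (vi)--(vii) by vacuity (solvability of $L_\alpha^3$) where the paper cites Proposition~\ref{prop:weak:simplicity:cond:graph:based} and the solvable-not-semisimple argument, and you make the correspondence between the modified and original algorithms explicit via Lemma~\ref{lem:minimal:magma:graded:minimal:graph:admissible} and Proposition~\ref{prop:mapping:minimal:graph:to:minimal:magma}, which the paper leaves implicit.
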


\begin{proof}
    This theorem is a direct consequence of the preceding Propositions~\ref{prop:possible:grapha:L:0:3}, \ref{prop:possible:grapha:L:alpha:3:geq:minus:quarter:and:exceptions}, \ref{prop:possible:grapha:L:alpha:3:leq:minus:quarter:with:exceotions}, and~\ref{prop:possible:grapha:L:minus_quarter:3} which classify all possible $\mathcal{M}$-magma-gradation of $L_\alpha^3$ satisfying $\operatorname{gra}(L_\alpha^3,\mathcal{M})=\operatorname{fg}(L_\alpha^3)$ in terms of their associated graphs. Therefore, the claim can be verified by examining each possible graph individually.
    
    The case $\alpha\leq -1/4$ is addressed in  Example~\ref{exa:graph:associated:to:non:graph:admissible:lie:algebra}. For $\alpha>-1/4$,the assertions (i)-(vii) follow immediately from the results in the main text: (i) Solvability: Theorem~\ref{thm:non:solvability:condition:strong}; (ii) Derived Series: Lemma~\ref{lem:derived:series:graph:alg:valid}; (iii) Nilpotency: Theorem~\ref{thm:nilpotebt:if:graph:series.Terminates}; (iv) Lower central series: Lemma~\ref{lem:lower:central:series:of:graphs}; (v) Ideals: Lemma~\ref{lem:ideal:span}; (vi) Simplicity: Proposition~\ref{prop:weak:simplicity:cond:graph:based}; (vii) Semisimplicity: Follows by considering the possible graphs and Theorem~\ref{thm:non:solvability:condition:strong}, since any solvable Lie algebra is not semisimple.
\end{proof}

\section{Table for the group grading of $\sl{3}{\C}$}\label{app:table:for:group:grading}

In this appendix, we present two tables summarizing the composition rules for the magma structure arising from the group grading of the Lie algebra $\sl{3}{\C}$. These gradings are based on its root system of type $A_2$ \cite{Hall:Lie:groups:16}. The binary operation $\delta:(A_2\cup\{0\})\times (A_2\cup\{0\})\to A_2\cup\{0\}$ satisfies the symmetry property $\delta(x,y)=\delta(y,x)$ for all $x,y\in A_2\cup\{0\}$, which allows us to omit redundant entries in the tables.

The first table corresponds to the standard grading on the set $A_2\cup\{0\}$, where $0$ denotes a neutral element. Thus representation captures the basis structure of the composition rule assocaited with the root system.

\begin{table}[H]
		\centering
		{\small
			\begin{tblr}{ |c||c|c|c|c|c|c|c|  }
				\hline
				\SetCell[c=8]{c}{full composition table of $(A_2\cup\{0\},\delta)$} \\
				\hline
				\hline
				$\delta(\downarrow,\rightarrow)$   & $0$ & $\alpha$ & $\beta$  & $\alpha+\beta$  & $-\alpha$  & $-\beta$ & $-(\alpha+\beta)$  \\
				\hline
				\hline
				$0$ & $0$ & $\alpha$ & $\beta$  & $\alpha+\beta$  & $-\alpha$  & $-\beta$ & $-(\alpha+\beta)$   \\
				\hline
				$\alpha$   &    &  $0$  &  $\alpha+\beta$ & $0$ & $0$ & $0$ & $-\beta$  \\
				\hline
                $\beta$  &  &   & $0$   &   $0$ &   $0$ &   $0$ &   $-\alpha$ \\
                \hline
                $\alpha+\beta$  &   &   &   &   $0$ &   $\beta$ &   $\alpha$    &   $0$\\
                \hline
                $-\alpha$   &   &   &   &   &   $0$ &   $-(\alpha+\beta)$   &   $0$\\
                \hline
                $-\beta$    &   &   &   &   &   &   $0$ &   $0$\\
                \hline
                $-(\alpha+\beta)$   &   &    &   &   &   &   &   $0$\\
                \hline
			\end{tblr}
		}
		\caption{Composition table for the magma obtained by the group grading based on the root system of $\sl{3}{\C}$. Note that the remaining entries follow from the symmetry $\delta(x,y)=\delta(y,x)$ of the composition rule.}
		\label{tab:sl3:first:composition:rule}
	\end{table}\noindent

    However, as discussed in Example~\ref{exa:graph:associated:semisimple:lie:algebra:sl3C}, it is often advantageous to refine the grading structure by distinguishing two neutral elements, denoted by $0$ and $\Bar{0}$. This modification improves the representation of certain symmetries and facilitates the construction of associated graphs. The resulting composition table for the modified magma is given below

    \begin{table}[H]
		\centering
		{\small
			\begin{tblr}{ |c||c|c|c|c|c|c|c|c|  }
				\hline
				\SetCell[c=9]{c}{full composition table of $(A_2\cup\{0\},\delta)$} \\
				\hline
				\hline
				$\delta(\downarrow,\rightarrow)$ & $\Bar{0}$  & $0$ & $\alpha$ & $\beta$  & $\alpha+\beta$  & $-\alpha$  & $-\beta$ & $-(\alpha+\beta)$  \\
				\hline
				\hline
                $\Bar{0}$   &   $\Bar{0}$  & $0$ & $\alpha$ & $\beta$  & $\alpha+\beta$  & $-\alpha$  & $-\beta$ & $-(\alpha+\beta)$  \\
                \hline
				$0$ &   & $\Bar{0}$ & $\alpha$ & $\beta$  & $\alpha+\beta$  & $-\alpha$  & $-\beta$ & $-(\alpha+\beta)$   \\
				\hline
				$\alpha$    &   &    &  $\Bar{0}$  &  $\alpha+\beta$ & $\Bar{0}$ & $0$ & $\Bar{0}$ & $-\beta$  \\
				\hline
                $\beta$ &  &  &   & $\Bar{0}$   &   $\Bar{0}$ &   $\Bar{0}$ &   $0$ &   $-\alpha$ \\
                \hline
                $\alpha+\beta$  &  &   &   &   &   $\Bar{0}$ &   $\beta$ &   $\alpha$    &   $0$\\
                \hline
                $-\alpha$   &   &   &   &   &   &   $\Bar{0}$ &   $-(\alpha+\beta)$   &   $\Bar{0}$\\
                \hline
                $-\beta$    &   &   &   &   &   &   &   $\Bar{0}$ &   $\Bar{0}$\\
                \hline
                $-(\alpha+\beta)$   &   &   &    &   &   &   &   &   $\Bar{0}$\\
                \hline
			\end{tblr}
		}
		\caption{Composition table for the modified magma obtained by the group grading based on the root system of $\sl{3}{\C}$. Note that the remaining entries follow from the symmetry $\delta(x,y)=\delta(y,x)$ of the composition rule.}
		\label{tab:sl3:first:composition:rule:modified}
	\end{table}\noindent

\section{Further remarks on symmetries}\label{app:symmetries}

In the main text, particularly in Section~\ref{sec:prominent:examples:linear:quantum}, we recognized that certain proper labeled directed graphs admit certain symmetries. This appendix shall be dedicated to a more detailed examination of these symmetries and their interplay with the graphical representations of Lie algebras. For clarity and simplicity, we restrict our attention to minimal-graph-admissible Lie algebras and their rotational symmetries.

\begin{definition}\label{def:symmetry:admission}
    Let $G(V,E)$ be a minimal graph associated with an $n$-dimensional minimal-graph-admissible Lie algebra. We say that $G(V,E)$ \emph{admits a $\Z_n$-symmetry} if the following holds: There exists an equidistant distribution of the vertices $v\in V$ on a circle in the euclidean plane, such that the graphical representation of the graph $G(V,E)$ is invariant under rotation by integer multiples of $360^\circ/n$, combined with appropriate relabeling of vertices and edges.
\end{definition}

Our work leads us to believe that rotational symmetries of minimal graphs reflect underlying algebraic invariances. To illustrate this concept, we provide an example below.
\begin{tcolorbox}[breakable, colback=Cerulean!3!white,colframe=Cerulean!85!black,title=\textbf{Example}: Rotational symmetries of the graph associated with $\mathfrak{su}(2)$]
    Let us examine how this definition works in practice:
    \begin{example}\label{exa:symmetry:discussion}
        Consider the real Lie algebra $\mathfrak{su}(2)$ discussed in Example~\ref{exa:first:example:graphs:assocaited:with:algebras}. Here, we can choose the basis $\{e_1,e_2,e_3\}$ satisfying the bracket relations:
        \begin{align*}
            [e_1,e_2]&=e_3,\;&\;[e_2,e_3]&=e_1,\;&\;[e_3,e_1]&=e_2.
        \end{align*}
        The corresponding graph consist of three vertices and six edges. A possible representation of this graph in the euclidean plane, with the vertices are equidistantly distributed on a circle, is depicted in Figure~\ref{fig:rotations:of:su2}. To verify that this graph admits a $\Z_3$-symmetry, observe that rotating the euclidean plane counterclockwise by $120^\circ$ and $240^\circ$ respectively, combined with appropriate vertex and permutations, leaves the illustration invariant. Specifically:
        \begin{itemize}
            \item A rotation by $120^\circ$ paired with the permutation $\sigma:e_1\mapsto e_3\mapsto e_2\mapsto e_1$ preserves the graph representation.
            \item A rotation by $240^\circ$ paired with the permutation $\sigma:e_1\mapsto e_2\mapsto e_3\mapsto e_1$ preserves the graph representation.
        \end{itemize}
        \begin{figure}[H]
            \centering
            \includegraphics[width=0.95\linewidth]{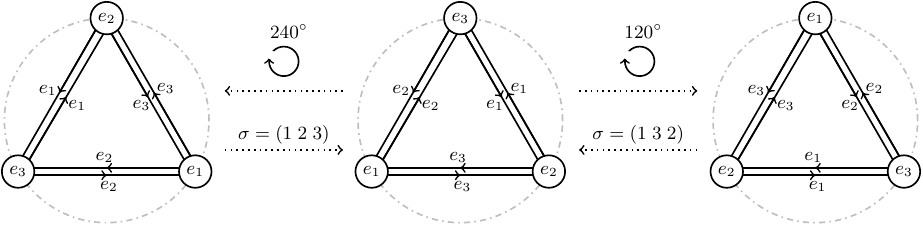}
            \caption{Illustration of the graph associated with $\mathfrak{su}(2)$, which is represented in the Euclidean plane with vertices placed equidistantly on a circle (indicated by the gray dash-dotted line). This configuration highlights the $\Z_3$-symmetry of the graph.}
            \label{fig:rotations:of:su2}
        \end{figure}
    \end{example}
\end{tcolorbox}

The following propositions formalize some immediate consequences for the symmetry concept introduced above. They highlight the relationship between rotational symmetries of minimal graphs and structural properties of the corresponding Lie algebras.

We begin by noting the following fundamental observations:
\begin{proposition}
    Let $\g$ be an $n$-dimensional abelian Lie algebra. Then every minimal graph $G(V,E)$ associated with $\g$ admits a $\Z_n$-symmetry
\end{proposition}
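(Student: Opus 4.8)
The plan is to reduce the statement to the triviality that a graph with no edges is invariant under every relabeling of its vertices. First I would invoke Lemma~\ref{lem:abelian:criterion}, which asserts that a finite-dimensional graph-admissible Lie algebra $\g$ is abelian if and only if the edge set of any associated graph is empty. Hence, for our abelian $\g$, every associated graph $G(V,E)$ satisfies $E=\emptyset$. Second, since $G(V,E)$ is assumed to be a \emph{minimal} graph associated with $\g$, Definition~\ref{def:assocaited:graph} gives $|V|=\dim(\g)=n$, so $G(V,E)$ consists of exactly $n$ isolated vertices.

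Next I would exhibit the required geometric configuration: place the $n$ vertices of $V$ at the $n$-th roots of unity on a circle in the Euclidean plane, i.e.\ equidistantly. The counterclockwise rotation by $360^\circ/n$ maps this set of points bijectively onto itself, cyclically permuting the vertices; call this permutation $\sigma$. According to Definition~\ref{def:symmetry:admission}, to conclude that $G(V,E)$ admits a $\Z_n$-symmetry it suffices to check that the picture is invariant under each rotation by an integer multiple of $360^\circ/n$ after relabeling vertices (and edges) by the corresponding power of $\sigma$. Because $E=\emptyset$, there are no edges to relabel or to match up, and the image of the empty edge set under any bijection of $V$ is again empty; thus the condition holds for every power of $\sigma$. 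This establishes the claim.

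There is essentially no obstacle here: the only point requiring (minimal) care is to note that Definition~\ref{def:symmetry:admission} genuinely permits the relabeling of vertices and edges, so that a mere cyclic renaming of the $n$ isolated vertices counts as a symmetry, and that an equidistant placement on a circle always exists for any $n\ge 1$. Once these bookkeeping points are acknowledged, the proof is immediate from Lemma~\ref{lem:abelian:criterion} together with the minimality hypothesis $|V|=n$.
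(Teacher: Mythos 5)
Your proposal is correct and follows the same route as the paper: the paper simply observes that by Lemma~\ref{lem:abelian:criterion} a minimal graph associated with an abelian Lie algebra has no edges, so the $\Z_n$-symmetry is immediate. Your write-up merely spells out the equidistant placement and the triviality of the edge-matching condition, which the paper leaves implicit.
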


\begin{proof}
    This is immediate, since any minimal graph $G(V,E)$ associated with an abelian Lie algebra contains by Lemma~\ref{lem:abelian:criterion} no edges.
\end{proof}

\begin{proposition}\label{prop:Zn:symmetry:only:if:semisimple:or:ablelian}
    Let $\g$ be a complex, non-abelian, $n$-dimensional minimal-graph-admissible Lie algebra associated with a minimal graph $G(V,E)$ that admits a $\Z_n$-symmetry. Then $\g$ is semisimple, provided Conjecture~\ref{con:non:zero:center:proper:igp:sets} is correct.
\end{proposition}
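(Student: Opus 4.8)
The plan is to argue by contradiction, extracting from the symmetry hypothesis one global constraint (the algebra must be \emph{perfect}) and one transport principle (any hypothetical proper solvable ideal can be ``rotated around'' to fill up all of $\g$). These two facts clash, which forces semisimplicity once Conjecture~\ref{con:non:zero:center:proper:igp:sets} is in force.

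\textbf{Step 1: extract a transitive graph automorphism.} First I would unpack Definition~\ref{def:symmetry:admission}. Placing the $n=|V|$ vertices equidistantly on a circle, a rotation by $360^\circ/n$ acts on the positions as a single $n$-cycle; for the picture to be invariant, the accompanying relabeling must be a vertex permutation $\sigma\colon V\to V$ that is conjugate (via the position-to-label bijection) to this $n$-cycle. Hence $\sigma$ is a single $n$-cycle on $V$ — in particular it acts transitively — and, since the relabeled–rotated picture coincides with the original, $\sigma$ induces a graph automorphism of $G(V,E)$, i.e. the map on triples $(v_\mathrm{s},v_\mathrm{l},v_\mathrm{e})\mapsto(\sigma v_\mathrm{s},\sigma v_\mathrm{l},\sigma v_\mathrm{e})$ sends $E$ onto $E$. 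Consequently the only $\sigma$-invariant subsets of $V$ are $\emptyset$ and $V$.

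\textbf{Step 2: $\g$ is perfect.} By Lemma~\ref{lem:derived:series:graph:alg:valid} the vertex set $\mathcal{D}^1V$ of the first derived graph spans $\mathcal{D}^1\g=[\g,\g]$, and $\mathcal{D}^1V$ is precisely the set of vertices that occur as the endpoint $\varpi_\mathrm{e}(e)$ of some edge $e\in E$. Since $\sigma$ maps edges to edges, this set is $\sigma$-invariant. Because $\g$ is non-abelian, $E\neq\emptyset$ by Lemma~\ref{lem:abelian:criterion}, so $\mathcal{D}^1V\neq\emptyset$; by Step~1 this forces $\mathcal{D}^1V=V$, hence $\mathcal{D}^1\g=\g$. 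Thus $\g$ is a nonzero perfect Lie algebra, which in particular cannot be solvable (a nonzero $\g$ with $\mathcal{D}^1\g=\g$ has $\mathcal{D}^k\g=\g$ for all $k$).

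\textbf{Step 3: the contradiction.} Suppose $\g$ is not semisimple. Since we assume Conjecture~\ref{con:non:zero:center:proper:igp:sets}, Lemma~\ref{lem:nec:condition:first:simplicity:minimal:with:conj} yields a proper non-empty $W\subsetneq V$ satisfying the ideal-graph-property and spanning a solvable ideal $\mathfrak{i}:=\spn\{W\}$. For each $k$, the set $\sigma^k(W)$ again satisfies the ideal-graph-property (it is the image of such a set under a graph automorphism), hence spans an ideal $\mathfrak{i}_k:=\spn\{\sigma^k(W)\}$ by Lemma~\ref{lem:ideal:span}; moreover the subgraph induced by $\sigma^k(W)$ is isomorphic to the one induced by $W$, so by Lemma~\ref{lem:subalgebra:graph:weak} together with Theorem~\ref{thm:non:solvability:condition:strong} (solvability of $\mathfrak{i}$ is equivalent to the absence of a self-contained subgraph induced by a closed directed walk, a property preserved under graph isomorphism; the cases $\dim\mathfrak{i}\le 2$ being trivially solvable) each $\mathfrak{i}_k$ is solvable. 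Every solvable ideal lies in $\operatorname{rad}(\g)$, so $\sum_k\mathfrak{i}_k\subseteq\operatorname{rad}(\g)$; but $\bigcup_k\sigma^k(W)$ is $\sigma$-invariant and non-empty, hence equals $V$ by Step~1, so $\sum_k\mathfrak{i}_k=\spn\{V\}=\g$. Therefore $\g=\operatorname{rad}(\g)$ is solvable, contradicting Step~2. Hence $\g$ is semisimple.

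\textbf{Expected main obstacle.} The delicate point is the claim that $\mathfrak{i}_k=\spn\{\sigma^k(W)\}$ is solvable: $\sigma$ is only a \emph{graph} automorphism and need not preserve structure constants, so it does not in general induce a Lie-algebra automorphism, and one cannot transport solvability along $\sigma$ as an algebra isomorphism. The argument must instead be routed through the graph-theoretic solvability criterion (Theorem~\ref{thm:non:solvability:condition:strong}) and the subgraph–subalgebra correspondence (Lemma~\ref{lem:subalgebra:graph:weak}), with the verification that ``contains a self-contained subgraph induced by a closed directed walk'' is invariant under graph isomorphism. A secondary bookkeeping point is to pin down $\mathcal{D}^1V$ as the set of edge-endpoints and confirm both its $\sigma$-invariance and that graph automorphisms preserve the ideal-graph-property.
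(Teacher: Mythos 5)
Your argument is correct (including the $\dim\le 2$ edge case for Lemma~\ref{lem:subalgebra:graph:weak}, which you handle), but it follows a genuinely different route from the paper's. The paper does not extract the symmetry as an explicit $n$-cycle automorphism; it instead uses the equidistant placement and a gap-counting construction to show that every vertex lies on a closed directed walk, merges walks so that these walks induce self-contained subgraphs, deduces perfectness via the derived-graph algorithm, then invokes the Levi--Mal'tsev decomposition together with an external result that a perfect Lie algebra has nilpotent radical, converts a nonzero radical into a nonzero center, and only then applies Conjecture~\ref{con:non:zero:center:proper:igp:sets}: the resulting proper non-empty subset satisfying the ideal-graph-property and spanning a solvable ideal contradicts the everywhere-self-contained-walk property. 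You obtain perfectness far more cheaply ($\sigma$-invariance of the set of edge endpoints plus transitivity of the $n$-cycle), and you replace the entire Levi/center detour by an orbit argument: rotate a hypothetical solvable-ideal-spanning ideal-graph-property subset by $\sigma^k$, transport solvability through the purely graph-theoretic criterion of Theorem~\ref{thm:non:solvability:condition:strong} (a point you correctly flag, since $\sigma$ need not be a Lie-algebra automorphism), and conclude $\operatorname{rad}(\g)=\g$, contradicting perfectness. Your route buys brevity and self-containedness within the paper's own toolkit --- no Levi--Mal'tsev, no external nilpotent-radical result, no delicate walk constructions --- and a cleaner use of the hypothesis via Lemma~\ref{lem:nec:condition:first:simplicity:minimal:with:conj}; the paper's route buys the stronger intermediate fact that every vertex lies on a closed directed walk inducing a self-contained subgraph, which connects the proposition to Conjecture~\ref{con:semisimple}. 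Both arguments rest on the same reading of the informal Definition~\ref{def:symmetry:admission}, namely that rotation-plus-relabeling acts componentwise on edge triples and permutes $E$; you make this explicit in Step~1, while the paper uses it implicitly.
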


\begin{proof}
   Assume Conjecture~\ref{con:non:zero:center:proper:igp:sets} holds. Let now $\g$ be a non-abelian, $n$-dimensional, minimal-graph-admissible Lie algebra associated with the minimal graph $G(V,E)$ that admits a $\Z_n$-symmetry. According to Lemma~\ref{lem:abelian:criterion}, this implies that $E\neq\emptyset$ and there exists at least one edge that targets a vertex $v_j$. This edge $e$ cannot satisfy $\varpi_\mathrm{s}(e)=\varpi_\mathrm{l}(e)=\varpi_\mathrm{e}(e)=v_j$, due to Proposition~\ref{prop:conditions:for:edges}. Therefore, there exists at least one edge $e'$ such that $\varpi_\mathrm{s}(e')\neq \varpi_\mathrm{e}(e')$. By the symmetry requirement, this property must be true for all vertices $v\in V$. Consequently, there exists a closed directed walk $W_v$ for every vertex $v\in V$ such that $W_v$ visits $v$ at least once. A closer inspection leads us to conclude that these walks must induce self-contained subgraphs. We explain this conclusion in detail below:
   \begin{enumerate}[label = (\roman*)]
       \item \textbf{Claim:} \emph{There exists a closed directed walk $W_v$ for every vertex $v\in V$ such that $W_v$ visits $v$ at least once.} We have seen that for every vertex $v\in V$ there exists at least one edge $e_v\in E$ such that $\varpi_\mathrm{e}(e_v)=v$. Now assume that there exists a vertex $w\in V$ such that every edge $e\in E$ that satisfies $\varpi_\mathrm{s}(e)=w$ also obeys $\varpi_\mathrm{e}(e)=w$. Since $G(V,E)$ is assumed to admit a $\Z_n$-symmetry, we can therefore rotate the graph by $360^\circ/n$ and must, modulo a permutation of the labels, find the same graph. This implies that every vertex must be of the same form and consequently every edge $e\in E$ satisfies $\varpi_\mathrm{s}(e)=\varpi_\mathrm{e}(e)$. This contradicts Proposition~\ref{prop:conditions:for:edges}, as any edge $e''\in E$ must satisfy $\varpi_\mathrm{s}(e'')\neq\varpi_\mathrm{l}(e'')$, and since an edge $e''$ belongs to $E$ if and only if the edge $e'''=(\varpi_\mathrm{l}(e''),\varpi_\mathrm{s}(e''),\varpi_\mathrm{e}(e''))$ also belongs to $E$. This edge $e'''$ does now not satisfy $\varpi_\mathrm{s}(e''')=\varpi_\mathrm{l}(e'')\neq \varpi_\mathrm{s}(e'')=\varpi_\mathrm{e}(e''')$ as required leading to a contradiction. We can conclude that for every vertex $v\in V$, there exists an edge $e\in E$ such that $\varpi_\mathrm{s}(e)=v\neq\varpi_\mathrm{e}(e)$. It is now straightforward to construct for every vertex $v\in V$ a closed directed walk $W_v$ that visits $v$ at least once: Start at the vertex $v$ follow one of the edges that satisfies $\varpi_\mathrm{s}(e)=v\neq \varpi_\mathrm{e}(e)$. The vertices of the graph $G(V,E)$ are equidistantly distributed on a circle. Thus, one can count the number of vertices between $v$ and $\varpi_\mathrm{e}(e)$ in a counterclockwise direction. Let us denote this number by $N\equiv N(v,\varpi_\mathrm{e}(e))$, where $N(v,w)=0$ means that $v$ and $w$ are next neighbors and define for convenience $N(v,v):=-\infty$ for all $v\in V$. Note furthermore that $N(v,w)=n-N(w,v)$ if $v\neq w$, as then $N(v,w)+N(w,v)=n$. The rotation-invariance of the graph under rotations by integer multiples of $360^\circ/n$ reveals now that for the vertex $\varpi_\mathrm{e}(e)$ there exists an edge $e'\in E$ such that $\varpi_\mathrm{s}(e')=\varpi_\mathrm{e}(e)\neq \varpi_\mathrm{e}(e')$ and $N(\varpi_\mathrm{e}(e),\varpi_\mathrm{e}(e'))=N(v,\varpi_\mathrm{e}(e))$. Thus one can append the corresponding edge $e'$ and vertex $\varpi_\mathrm{e}(e)$ to the walk $W_v=(v,e,\varpi_\mathrm{e}(e))$ and continue this process. Here it is important to recognize that $N(v,\varpi_\mathrm{e}(e))>-\infty$ such that this process must after at most $n$ steps yield a closed directed walk that visits $v$ at least once. This is exemplary visualized in Figure~\ref{fig:rotatio:symmetry:closed:walk:requirement}. Note that by Definition~\ref{def:symmetry:admission} one cannot assume that $N(v,\varpi_\mathrm{e}(e))=1$ holds, as one is only guaranteed that one equidistant distribution of the vertices on a circle exists that satisfies the symmetry constraint, but not that all distributions satisfy the desired symmetry condition.
       \begin{figure}[htpb]
           \centering
           \includegraphics[width=0.85\linewidth]{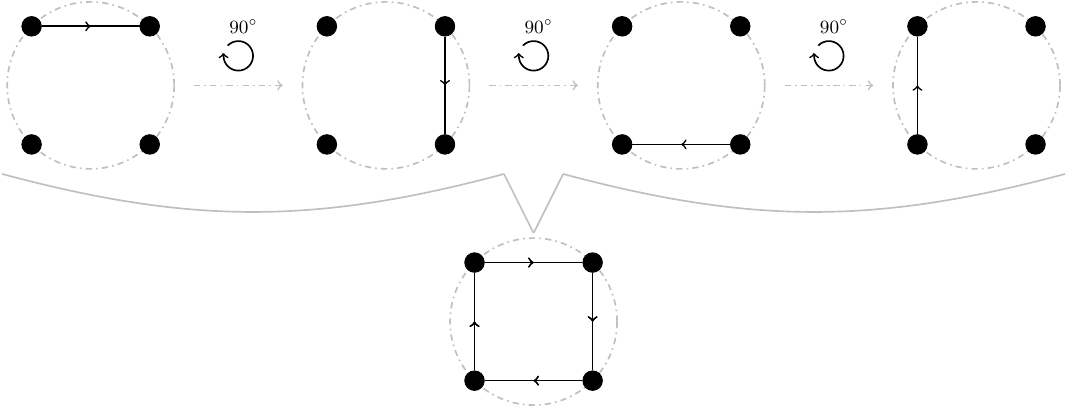}
           \caption{Illustration of a directed graph with four vertices. The graph is assumed to admit a $\Z_4$-symmetry. For clarity, instance is shown with only one edge, while the remaining edges are omitted. The graph is assumed to contain an edge that connects two distinct vertices, as illustrated in the first instance. The symmetry condition requires the graph remain invariant under successive rotations by $90^\circ$. Thus, the complete graph must include the four edges that are obtained via the depicted rotations. Note that the labels are not displayed, as they are irrelevant for this argument.}
           \label{fig:rotatio:symmetry:closed:walk:requirement}
       \end{figure}   
       \item \textbf{Claim:} \emph{There exists a closed directed walk $W_v$ for every vertex $v\in V$ such that $W_v$ visits $v$ at least once and induces a self-contained subgraph.} The first part of this statement has been shown before. Thus consider such a closed directed walk $W_v$ for an arbitrary vertex $v\in V$ and the induced graph $G(\Tilde{V},\Tilde{E)}$. By the construction above, every edge $e\in\Tilde{E}$ belonging to $W_v$ satisfies $\varpi_\mathrm{s}(e)\neq \varpi_\mathrm{e}(e)$. Now suppose that some of these edges satisfy $\varpi_\mathrm{l}(e)\notin \Tilde{E}$. Such edges must be wedge-type and not loop-type, since any loop-type edge that does satisfy $\varpi_\mathrm{s}(e)\neq\varpi_\mathrm{e}(e)$ must satisfy $\varpi_\mathrm{e}(e)=\varpi_\mathrm{l}(e)$, implying that such edges satisfy trivially $\varpi_\mathrm{l}(e)\in\Tilde{V}$. This allows us to construct corresponding closed directed walks $W_{\varpi_\mathrm{e}(e)}$ that visit the respective vertices $\varpi_\mathrm{e}(e)\in \Tilde{V}$ utilizing the construction above. One can now construct a modified closed directed walk $W_v'$ that starts at $v$ continues along $W_v$ until encountering the first such edge $e$. Then one can interject the corresponding walk $W_{\varpi_\mathrm{e}(e)}$ before continuing along $W_v$ until encountering the next such vertex. This can be iterated until every such walk $W_{\varpi_\mathrm{e}(e)}$ is added to $W_v$. The resulting walk $W_v'$ is by construction a closed directed walk that visits $v$ at least once. Furthermore, the induced graph $G(\Tilde{V}',\Tilde{E}')$ satisfies $\Tilde{ V}\subseteq \Tilde{V}'$ and $\Tilde{E}\subseteq \Tilde{E}'$, and every edge $e\in\Tilde{E}$ that was in the previous walk labeled by an element outside of $\Tilde{V}$ is now labeled by an element in $\Tilde{V}'$. One can now repeat this procedure until the resulting walk induces a self-contained subgraph, where the finiteness of $V$ guarantees that this procedure terminates eventually. See for a visualization of this procedure Figure~\ref{fig:closed:directed:walk:under:Z:n:symmetry}.
       \begin{figure}[htpb]
           \centering
           \includegraphics[width=0.4\linewidth]{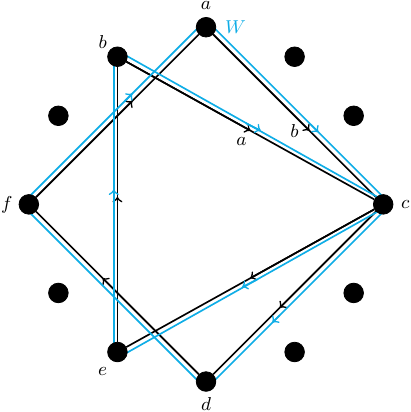}
           \caption{Illustration of a graph with 12 vertices. The graph contains two closed directed walks: $W_1$, which visits the vertices $c,d,f,a$, and $W_2$, which visits the vertices $c,e,b$. However, each walk induces a graph that is not self-contained, because the edges targeting vertex $c$ are labeled by vertices not belonging to the respective walk. By joining these two walks into the blue walk $W$, which visits the vertices $c,e,b,c,d,f,a,c$ in this sequence, the resulting induced graph contains the same edges but also includes the vertices labeling these edges.}
           \label{fig:closed:directed:walk:under:Z:n:symmetry}
       \end{figure}
   \end{enumerate}
   It follows immediately that $\g$ is perfect \cite{Burde:2024}, i.e., $\mathcal{D}\g=\g$, which can be verified using Algorithm~\ref{alg:generating:generating:the:graph:derived:altered} and Lemma~\ref{lem:derived:series:graph:alg:valid}. Thus, by the Levi-Mal'tsev decomposition theorem $\g=\mathfrak{s}\ltimes\mathfrak{r}$, where $\mathfrak{s}$ is semisimple and $\mathfrak{r}$ is solvable \cite{Kuzmin:1977}. Furthermore, by \cite{Souris:2024}, since $\g$ is perfect, $\mathfrak{r}$ is nilpotent. Thus, $\mathcal{Z}(\g)=\{0\}$ if and only if $\mathfrak{r}=\{0\}$. If $\mathcal{Z}(\g)=\{0\}$, then by Conjecture~\ref{con:non:zero:center:proper:igp:sets} it would follow that $G(V,E)$ contains a proper non-empty subset that satisfies the ideal-graph-property, which contradicts the condition that $G(V,E)$ contains for every vertex $v\in V$ a closed directed walk $W_v$ that visits $v$ least once and induces a self-contained subgraph. Thus $\mathfrak{r}=\{0\}$, and $\g$ is semisimple. 
\end{proof}

Building upon the previous propositions, we now strengthen the results by showing that under certain dimensional constraints, the presence of rotational symmetry in the associated minimal graph implies simplicity of the Lie algebra.

\begin{theorem}
    Let $\g$ be a complex $n$-dimensional minimal-graph-admissible Lie algebra associated with a minimal graph $G(V,E)$ that admits a $\Z_n$ symmetry, where $n\leq 4$. Then $\g$ is simple.
\end{theorem}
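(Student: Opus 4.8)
The plan is to treat the four values $n=1,2,3,4$ separately and to reduce each to results already available; the organising principle is that a $\Z_n$-symmetric minimal graph forces the associated algebra to be perfect. Throughout I work under the tacit (and, strictly speaking, necessary) hypothesis that $\g$ is non-abelian, since an abelian Lie algebra is never simple in the sense of Definition~\ref{def:simple:and:semisimple}; this is consistent with Proposition~\ref{prop:Zn:symmetry:only:if:semisimple:or:ablelian}.

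First I would dispose of $n\le 2$ by a finite edge count based on Proposition~\ref{prop:conditions:for:edges}. For $n=1$ no edge can exist, because every edge $e$ satisfies $\varpi_\mathrm{s}(e)\ne\varpi_\mathrm{l}(e)$; hence $E=\emptyset$, $\g$ is abelian, and there is no admissible non-abelian $\g$, so the statement is vacuously true. For $n=2$, writing $V=\{a,b\}$, the conditions of Proposition~\ref{prop:conditions:for:edges} (together with the fact that two distinct vertices cannot both be proportional to $[a,b]$) leave only the two equivalent non-abelian possibilities $E=\{(a,b,a),(b,a,a)\}$ and $E=\{(a,b,b),(b,a,b)\}$, each being a minimal graph of $\gl_{2,1}$. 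A rotation by $180^\circ$ interchanges the two vertex positions and carries either of these edge sets to the other, so no $\Z_2$-symmetry is possible, and the statement is again vacuously true.

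The substantive case is $n\in\{3,4\}$. Here I would invoke the argument in the proof of Proposition~\ref{prop:Zn:symmetry:only:if:semisimple:or:ablelian}: since $\g$ is non-abelian and $G(V,E)$ admits a $\Z_n$-symmetry, every vertex lies on a closed directed walk that induces a self-contained subgraph, and therefore $\g$ is perfect, i.e.\ $\mathcal{D}^1\g=\g$. A nonzero solvable Lie algebra satisfies $\mathcal{D}^1\g\subsetneq\g$, so perfectness rules out solvability; hence in a Levi--Mal'tsev decomposition $\g=\mathfrak{s}\ltimes\mathfrak{r}$ the semisimple part $\mathfrak{s}$ is nonzero, and since the smallest complex simple Lie algebra $\sl{2}{\C}$ has dimension $3$, we conclude $\dim\mathfrak{s}\ge 3$.

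It then remains to split on $\dim\g$. If $\dim\g=3$ then $\dim\mathfrak{s}=3$, which forces $\mathfrak{r}=\{0\}$ and $\g\cong\sl{2}{\C}$, a simple Lie algebra --- exactly the claim. If $\dim\g=4$ then necessarily $\dim\mathfrak{s}=3$ (so $\mathfrak{s}\cong\sl{2}{\C}$) and $\dim\mathfrak{r}=1$; but a one-dimensional module over $\sl{2}{\C}$ is trivial, so $[\mathfrak{s},\mathfrak{r}]=\{0\}$, whence $\mathfrak{r}\subseteq\mathcal{Z}(\g)$, $\g\cong\sl{2}{\C}\oplus\C$, and $\mathcal{D}^1\g=\sl{2}{\C}\subsetneq\g$, contradicting perfectness. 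Thus $n=4$ does not occur and the statement holds there vacuously, and the theorem follows. I expect the main obstacle to be the bookkeeping: making the edge enumeration in the cases $n\le 2$ airtight and verifying that the $\Z_2$-symmetry genuinely fails for $\gl_{2,1}$, and then getting the Levi--Mal'tsev dimension count exactly right so as to exclude $n=4$; all other steps are short deductions from prior results together with the elementary representation theory of $\sl{2}{\C}$. As an independent check for $n=3$, one may instead run through the eleven admissible three-vertex graphs of Theorem~\ref{thm:neccersary:conditions:proper:graph} and observe that, among the non-empty ones, only the Type~IX graph is invariant under the cyclic relabeling induced by a $120^\circ$ rotation, and that Type~IX is associated precisely with $\sl{2}{\C}$.
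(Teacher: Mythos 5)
Your proof is correct, but for the substantive cases $n\in\{3,4\}$ it takes a genuinely different route from the paper. The paper handles $n=3$ by running through the eleven admissible three-vertex graphs of Theorem~\ref{thm:neccersary:conditions:proper:graph} and observing that only Type~IX can be $\Z_3$-symmetric (your ``independent check''), and it handles $n=4$ by inspecting the explicit bases from the proof of Proposition~\ref{prop:existence:dim:four:lower:magma:suitable:gradation} to see that some vertex is never targeted; you instead extract the unconditional intermediate step of the proof of Proposition~\ref{prop:Zn:symmetry:only:if:semisimple:or:ablelian} (every vertex lies on a closed directed walk inducing a self-contained subgraph, hence $\mathcal{D}^1\g=\g$) and then finish by Levi--Mal'tsev plus elementary dimension counting and the triviality of one-dimensional $\sl{2}{\C}$-modules. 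Your route buys two things: it makes explicit that the conclusion does not depend on Conjecture~\ref{con:non:zero:center:proper:igp:sets} (which only enters Proposition~\ref{prop:Zn:symmetry:only:if:semisimple:or:ablelian} after the perfectness step), and it replaces the exhaustive graph/basis enumerations by the single structural fact that no complex Lie algebra of dimension $\leq 2$ or $=4$ is perfect; the paper's enumeration, in exchange, identifies concretely which graph occurs (Type~IX) in the only non-vacuous case. Your $n\leq 2$ edge-count and the $180^\circ$-rotation argument for $\gl_{2,1}$ are equivalent to the paper's antisymmetry argument, and your explicit remark that the non-abelian hypothesis is tacitly needed matches how the paper's own proof (and its abelian-symmetry proposition) must be read. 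The only step you leave implicit is that $\dim\mathfrak{s}=4$ is impossible because complex semisimple algebras are direct sums of simples of dimension at least three; spelling that half-line out would make the $n=4$ exclusion airtight.
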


\begin{proof}
    We denote the graph of $\g$ with $G(V,E)$ and proceed by considering the cases, where $\dim(\g)\in\{1,2,3,4\}$:
    
    \begin{itemize}
        \item \textbf{One-dimensional case.} There does not exist any non-abelian one-dimensional Lie algebra.
        \item  \textbf{Two-dimensional case.} The only complex non-abelian two-dimensional Lie algebra is $\mathfrak{aff}(\C)$. For the associated graph to admit a $\Z_2$ symmetry, there would need to exists an edge $e\in E$ that satisfies $\varpi_\mathrm{s}(e)=v_1$ and $\varpi_\mathrm{e}(e)=v_2$, since this algebra is non-abelian and due to Proposition~\ref{prop:conditions:for:edges}. By the symmetry requirement, there must also exist an edge $e'\in E$ satisfying $\varpi_\mathrm{s}(e')=v_2$ and $\varpi_\mathrm{e}(e')=v_1$. This configuration violates the antisymmetry of the Lie bracket. Therefore, no graph associated with $\mathfrak{aff}(\C)$ can admit a $\Z_2$ symmetry.
        \item \textbf{Three-dimensional case.} In Figure~\ref{fig:all:3:vertex:subgraphs}, we classified all minimal graphs with three vertices. A close inspection of these graphs reveals that the only graph admitting a $\Z_3$-symmetry is of Type IX, which is always associated with a simple Lie algebra, as the corresponding Lie algebra is by Theorem~\ref{thm:non:solvability:condition:strong} non-solvable, and the only non-solvable three-dimensional Lie algebras are by \cite{Bianchi:1903} simple.
        \item \textbf{Four-dimensional case.} Here, we can simply consider the bases from the proof of Proposition~\ref{prop:existence:dim:four:lower:magma:suitable:gradation}. It is immediate to verify that no basis can be constructed in such a way that every vertex $v\in V$ is target by at least one edge $e\in E$, since one element $v\in V$ must have non-trivial support in the corresponding spaces $\spn\{e_4\}$, and $[\g,\g]\subseteq \spn\{e_1,e_2,e_3\}$ for all four-dimensional Lie algebras. Thus $v$ can never be target by any edge $e\in E$. However, by the symmetry condition, as well as the requirement that $\g$ is non-abelian, every vertex must be targeted at least once, which has been shown in the proof of Proposition~\ref{prop:Zn:symmetry:only:if:semisimple:or:ablelian}. Hence, no minimal graph associated with a non-abelian Lie alegbra can admit a $\Z_4$ symmetry.
    \end{itemize}
\end{proof}

We present here some final observations. If Conjecture~\ref{con:non:zero:center:proper:igp:sets} is correct it follows that there exists no non-abelian five-dimensional Lie algebra that can be associated with a minimal graph that admits a $\Z_5$-symmetry. Furthermore, the only non-abelian six-dimensional Lie algebra that can be associated with a graph that admits a $\Z_6$-symmetry is in the complex case the algebra $\sl{2}{\C}\oplus\sl{2}{\C}$. This occurs because there exists no semisimple Lie algebra of dimension five, and the only semisimple Lie algebra if dimension six is $\sl{2}{\C}\oplus\sl{2}{\C}$. 

Moreover, Definition~\ref{def:symmetry:admission} can clearly be generalized to arbitrary symmetry groups. For instance, the graph depicted in Figure~\ref{fig:rotations:of:su2} is also invariant under three reflections along the bisectors of the triangle formed by the three vertices, combined with appropriate permutations of the vertex and edge labels. Hence, the corresponding graph $G(V,E)$ admits a $D_3$-symmetry, where $D_3$ is the dihedral group generated by the aforementioned rotations and reflections~\cite{Seiler:2008}.

\subsection{On the Killing form}

In this context, it could be helpful to investigate the role of the Killing form \cite{Kirillov:2008}. This is a symmetric bilinear form that arises naturally in the structure theory of Lie algebras and Lie groups \cite{Knapp:1996}. It is commonly defined as $B(x,y):=\operatorname{Tr}(\operatorname{ad}_x\circ\operatorname{ad}_y)$ for every pair of elements $x,y\in \g$ \cite{Knapp:1996}. One important property is that $B$ has an invariance property in the sense that $B(\operatorname{ad}_x(y),z)=-B(y,\operatorname{ad}_x(z))$ or equivalently $B([x,y],z)=B(x,[y,z])$ for all $x,y,z\in\g$ \cite{Knapp:1996}. The Killing form is of particular interest for Lie algebras over fields of characteristic zero, as in this case, Cartan's criterion states that $B$ is nondegenerate if and only if the Lie algebra $\g$ is semisimple \cite{Dieudonne:1953}. 

Consider now a basis $\mathcal{B}$ of a finite-dimensional minimal-graph-admissible Lie algebra $\g$ satisfying the relation \eqref{eqn:desired:basis}, i.e., $\operatorname{ad}_{x_k}(x_\ell)=[x_k,x_\ell]=\alpha_{k\ell}x_{\delta(k,\ell)}$. Applying now $\operatorname{ad}_{x_j}$ to this element yields: $$\operatorname{ad}_{x_j}(\operatorname{ad}_{x_k}(x_\ell))=[x_j,[x_k,x_\ell]]=\alpha_{k\ell}\alpha_{j,\delta(k,\ell)}x_{\delta(j,\delta(k,\ell))}.$$ The entries of the Killing matrix are given by $B_{jk}=\operatorname{Tr}(\operatorname{ad}_{x_j}\circ\operatorname{ad}_{x_k})$. Since the trace of a linear map is the sum of the coefficients of $x_\ell$'s in the image of the $x_\ell$'s, only terms with $\delta(j,\delta(k,\ell))=\ell$ contribute. Hence:
$$B_{jk}=\sum_{\ell\in\mathcal{N}} \alpha_{k\ell}\alpha_{j,\delta(k,\ell)}\delta_{\ell,\delta(j,\delta(k,\ell))}, $$ where $\delta_{\ell,\delta(j,\delta(k,\ell))}$ denotes the Kronecker delta that evaluates to one if $\delta(j,\delta(k,\ell))=\ell$, and zero otherwise. As we stated before, the Lie algebra $\g$ is semisimple if and only if the Killing form is non-degenerate \cite{Dieudonne:1953}. This motivates the following observation for the entries $B_{jk}$ of the Killing matrix:
\begin{lemma}\label{lem:killing:form:helper}
    Let $\g$ be a finite-dimensional minimal-graph-admissible Lie algebra associated with the minimal graph $G(V,E)$.  Fix $j,k\in\mathcal{N}$ and let $v_j,v_k\in V$ be the corresponding vertices. Then $B_{jk}\neq 0$ if and only if there exists $v_\ell,v_{\ell'}\in V$ and two edges $e_1,e_2\in E$ such that $\varpi_\mathrm{s}(e_1)=v_\ell=\varpi_\mathrm{e}(e_2)$, $\varpi_\mathrm{l}(e_1)=v_k$ and $\varpi_\mathrm{l}(e_2)=v_j$.
\end{lemma}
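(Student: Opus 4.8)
The plan is to prove the claim directly from the explicit formula for the Killing matrix entries derived in the surrounding text, namely
\begin{align*}
    B_{jk}=\sum_{\ell\in\mathcal{N}}\alpha_{k\ell}\,\alpha_{j,\delta(k,\ell)}\,\delta_{\ell,\delta(j,\delta(k,\ell))},
\end{align*}
and to translate each nonzero summand into a pair of edges in the associated graph $G(V,E)$, using Algorithm~\ref{alg:creating:graph} as the dictionary between structure constants and edges. First I would fix $j,k\in\mathcal{N}$ and observe that $B_{jk}\neq 0$ holds if and only if the sum above is nonzero, which (since the field has characteristic zero, as assumed throughout Section~\ref{sec:structural:properties}) is \emph{not} automatically equivalent to the existence of a single nonzero summand — cancellation between summands is a priori possible. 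This is the point I expect to be the main obstacle, and I address it below; modulo it, the forward-and-backward translation is routine.

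For the ``if'' direction I would argue as follows. Suppose there exist vertices $v_\ell,v_{\ell'}\in V$ and edges $e_1,e_2\in E$ with $\varpi_\mathrm{s}(e_1)=v_\ell$, $\varpi_\mathrm{l}(e_1)=v_k$, $\varpi_\mathrm{e}(e_1)=v_{\ell'}$ and $\varpi_\mathrm{s}(e_2)=v_{\ell'}$, $\varpi_\mathrm{l}(e_2)=v_j$, $\varpi_\mathrm{e}(e_2)=v_\ell$. By the construction of the associated graph, $e_1\in E$ means $[x_\ell,x_k]=\alpha_{\ell k}x_{\delta(\ell,k)}$ with $\alpha_{\ell k}\neq 0$ and $\delta(\ell,k)=\ell'$, while $e_2\in E$ means $[x_{\ell'},x_j]=\alpha_{\ell' j}x_{\delta(\ell',j)}$ with $\alpha_{\ell' j}\neq 0$ and $\delta(\ell',j)=\ell$. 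Using antisymmetry of $\boldsymbol{\alpha}$ and symmetry of $\delta$, this is exactly the statement that the summand indexed by $\ell$ in the formula for $B_{jk}$ equals $\alpha_{k\ell}\alpha_{j,\delta(k,\ell)}\neq 0$ and that $\delta(j,\delta(k,\ell))=\ell$, so the Kronecker delta is $1$. Thus the $\ell$-th summand is nonzero. To rule out cancellation here I would invoke the following: in a minimal graph, by Proposition~\ref{prop:conditions:for:edges}(c) the pair $(v_\ell,v_k)$ determines at most one edge, so the contribution ``$\operatorname{ad}_{x_j}\circ\operatorname{ad}_{x_k}$ sends $x_\ell$ to a multiple of $x_\ell$'' is carried by precisely one term; more robustly, I would phrase the ``if'' and ``only if'' directions symmetrically so that the biconditional reads: $B_{jk}\neq 0$ if and only if the total coefficient $\sum_\ell \alpha_{k\ell}\alpha_{j,\delta(k,\ell)}\delta_{\ell,\delta(j,\delta(k,\ell))}$ is nonzero, and then show the latter is \emph{equivalent} to the existence of at least one graph-theoretic witness — the honest resolution of the obstacle is that cancellation is indeed possible, so the correct reading of the lemma is ``there exists such a pair of edges'' as a \emph{necessary} condition for $B_{jk}\neq 0$, together with the converse being understood up to the non-cancellation that the author evidently intends (or which one proves using that each diagonal contribution $\operatorname{ad}_{x_j}\operatorname{ad}_{x_k}(x_\ell)\propto x_\ell$ is unambiguous).

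For the ``only if'' direction I would run the computation in reverse: if $B_{jk}\neq 0$ then the sum is nonzero, hence at least one index $\ell\in\mathcal{N}$ contributes a nonzero summand, which forces $\alpha_{k\ell}\neq 0$, $\alpha_{j,\delta(k,\ell)}\neq 0$, and $\delta(j,\delta(k,\ell))=\ell$. Setting $\ell':=\delta(k,\ell)=\delta(\ell,k)\in\mathcal{N}$ (nonzero precisely because $\alpha_{k\ell}\neq 0$), the first nonvanishing structure constant gives, via Algorithm~\ref{alg:creating:graph}, an edge $e_1=(v_\ell,v_k,v_{\ell'})\in E$, and the second gives an edge $e_2=(v_{\ell'},v_j,v_\ell)\in E$, because $\delta(\ell',j)=\delta(j,\ell')=\delta(j,\delta(k,\ell))=\ell$. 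These are exactly the edges demanded by the statement, with $\varpi_\mathrm{l}(e_1)=v_k$ and $\varpi_\mathrm{l}(e_2)=v_j$, and $\varpi_\mathrm{s}(e_1)=\varpi_\mathrm{e}(e_2)=v_\ell$. I would close by remarking that, strictly, the argument shows the graph-theoretic condition is necessary for $B_{jk}\neq 0$ and sufficient for the $\ell$-th summand of $B_{jk}$ to be nonzero; the biconditional as stated in the lemma holds under the (generic, and for minimal graphs automatic from Proposition~\ref{prop:conditions:for:edges}) assumption that distinct admissible index triples do not produce mutually cancelling diagonal contributions — this caveat is where any subtlety lives, and I would flag it rather than hide it.
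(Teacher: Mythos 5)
Your proposal follows the same route as the paper's own proof: expand $B_{jk}=\sum_{\ell\in\mathcal{N}}\alpha_{k\ell}\,\alpha_{j,\delta(k,\ell)}\,\delta_{\ell,\delta(j,\delta(k,\ell))}$ and translate a nonzero summand into the two-edge pattern via Algorithm~\ref{alg:creating:graph}. Your ``only if'' direction is exactly the paper's argument (a nonzero trace forces some $\ell$ with $\alpha_{k\ell}\neq0$, $\alpha_{j,\delta(k,\ell)}\neq0$ and $\delta(j,\delta(k,\ell))=\ell$, which produces the edges $e_1=(v_\ell,v_k,v_{\ell'})$ and $e_2=(v_{\ell'},v_j,v_\ell)$), and it is complete.

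The cancellation issue you flag in the ``if'' direction is genuine, and you should be aware that the paper's proof does not resolve it either: it simply asserts that $B_{jk}\neq0$ holds if and only if some index $\ell$ satisfies $\delta(j,\delta(k,\ell))=\ell$, which only says that a single summand is nonzero. Your attempted patch via Proposition~\ref{prop:conditions:for:edges}(c) does not close the gap: uniqueness of the edge determined by the pair $(v_\ell,v_k)$ guarantees that each $\ell$ contributes one unambiguous term, but says nothing about distinct values of $\ell$ producing terms that cancel against each other. They can: take the four-dimensional real Lie algebra spanned by $h_1,h_2,x_1,x_2$ with $[h_1,x_1]=x_1$, $[h_1,x_2]=x_2$, $[h_2,x_1]=x_1$, $[h_2,x_2]=-x_2$ and all other brackets zero. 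It is minimal-graph-admissible, the two-edge pattern for $(v_j,v_k)=(h_1,h_2)$ exists (the loop-type edges $(x_1,h_2,x_1)$ and $(x_1,h_1,x_1)$ with $v_\ell=v_{\ell'}=x_1$), yet $B_{h_1h_2}=1\cdot1+1\cdot(-1)=0$. So the sufficiency direction of the biconditional fails as literally stated; your hedged reading — the edge pattern is necessary for $B_{jk}\neq0$ and sufficient only for an individual summand to be nonzero — is the mathematically accurate statement, and flagging the caveat rather than hiding it was the right call. Just be clear that, with that caveat, you have not proved (and cannot prove) the lemma's ``if'' direction without an additional non-cancellation hypothesis, a weakness shared by the paper's own proof.
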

Employing Proposition~\ref{prop:conditions:for:edges},  this trace condition can be illustrated graphically by the subgraph substructure in Figure~\ref{fig:Killing:form:condition}. This structure witnesses a nonzero trace contribution: the trace contribution is nonzero if and only if there exists a vertex $v_\ell$ that returns to itself by traversing first an edge labeled by $v_k$ and then an edge labeled by $v_j$, i.e., along the two-step pattern visualized in Figure~\ref{fig:Killing:form:condition}.

\begin{figure}
    \centering
    \includegraphics[width=0.25\linewidth]{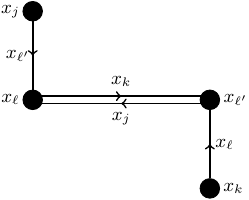}
    \caption{Illustration of the subgraph that appears in every graph associated with every semisimple Lie algebra for a given $\ell\in\mathcal{N}$, provided the associated graph is minimal. Note that multiple vertices may coincide, as for example $k=j$ may occur.}
    \label{fig:Killing:form:condition}
\end{figure}

\begin{proof}
    Let the Lie algebra $\g$ and the associated graph $G(V,E)$ as stated in the claim above. By the computation above, $[x_j,[x_k,x_\ell]]=\alpha_{k\ell}\alpha_{j,\delta(k,\ell)}x_{\delta(j,\delta(k,\ell))}$. Thus, by the definition of the trace, $B_{jk}\neq 0$ if and only if there exist an index $\ell$ such that $\delta(j,\delta(k,\ell))$, i.e., a vertex $v_\ell$ such that $[x_k,x_\ell]\propto x_{\ell'}$ and $[x_j,x_{\ell'}]\propto x_\ell$. Under the graph construction rules, these relations are exactly the existence of edges $e_1=(v_\ell,v_k,v_{\ell'})$ and $e_2=(v_{\ell'},v_j,v_\ell)$, i.e., the condition stated in the lemma. Note that the prefactor $\alpha_{k\ell}\alpha_{j,\delta(k,\ell)}$ is, by the definition of the matrix $\boldsymbol{\alpha}$ and the index-function $\delta$, non-zero in this case, as $\ell\in\mathcal{N}$ and $\ell\neq 0$.  Conversely, if no such $\ell$ exists, then every Kronecker-delta factor vanishes and $B_{jk}=0$.
\end{proof}

We can now connect this result to Cartan's criterion for simisimplicity. Indeed, if for some  $j\in\mathcal{N}$ one had $B_{jk}=0$ for all $k\in\mathcal{N}$, then the $j$-th column of the Killing matrix would be zero, hence the Killing form would be degenerate. Therefore, if $\g$ is semisimple, for every $j\in\mathcal{N}$ there must exist $k,\ell\in\mathcal{N}$ such that $[x_j,[x_k,x_\ell]]\propto x_\ell$. This allows us to provide the following necessary subgraph condition for semisimplicity in minimal graphs:
\begin{proposition}\label{prop:semisimple:killing:form:condition}
    Let $\g$ be a finite-dimensional minimal-graph-admissible Lie algebra associated with the minimal graph $G(V,E)$. If $\g$ is semisimple, then, for every vertex $v_j\in V$, there exists vertices $v_k,v_\ell$ such that the Figure~\ref{fig:Killing:form:condition} structure occurs in $G(V,E)$. 
\end{proposition}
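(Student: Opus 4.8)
The plan is to combine Cartan's criterion for semisimplicity with the explicit description of the Killing-matrix entries derived in the discussion preceding Lemma~\ref{lem:killing:form:helper}, together with the graph-theoretic translation provided by that lemma. The key point is that semisimplicity forces the Killing form to be nondegenerate, and nondegeneracy forbids any zero column (equivalently, zero row, by symmetry of $B$) in the Killing matrix $\boldsymbol{B}=(B_{jk})_{j,k\in\mathcal{N}}$ computed with respect to the minimal basis $\{x_j\}_{j\in\mathcal{N}}$ satisfying the relations \eqref{eqn:desired:basis}.

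First I would recall that, since $\g$ is minimal-graph-admissible and the associated graph $G(V,E)$ is minimal, the vertex set $V=\{x_j\}_{j\in\mathcal{N}}$ is a basis of $\g$ obeying \eqref{eqn:desired:basis}. The characteristic-zero assumption in effect throughout Section~\ref{sec:structural:properties} guarantees that Cartan's criterion applies: $\g$ semisimple implies the Killing form $B$ is nondegenerate, hence the matrix $\boldsymbol{B}$ is invertible. Consequently, for each fixed $j\in\mathcal{N}$, the $j$-th column of $\boldsymbol{B}$ cannot be identically zero, so there exists some $k\in\mathcal{N}$ with $B_{jk}\neq 0$.

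Next I would invoke Lemma~\ref{lem:killing:form:helper}, which states that $B_{jk}\neq 0$ holds if and only if there exist vertices $v_\ell, v_{\ell'}\in V$ and edges $e_1,e_2\in E$ with $\varpi_\mathrm{s}(e_1)=v_\ell=\varpi_\mathrm{e}(e_2)$, $\varpi_\mathrm{l}(e_1)=v_k$, and $\varpi_\mathrm{l}(e_2)=v_j$. This is precisely the two-step return configuration depicted in Figure~\ref{fig:Killing:form:condition}: starting from $v_\ell$, one traverses an edge labeled by $v_k$ to reach $v_{\ell'}$, then an edge labeled by $v_j$ back to $v_\ell$. Combining this with the previous paragraph: for every vertex $v_j\in V$ there exist $v_k,v_\ell\in V$ (with $v_{\ell'}$ the intermediate vertex) such that this subgraph structure occurs in $G(V,E)$. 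This is exactly the assertion of Proposition~\ref{prop:semisimple:killing:form:condition}, so the proof is complete.

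There is essentially no serious obstacle here, since the main technical work has already been packaged into Lemma~\ref{lem:killing:form:helper}; the only points requiring a line of care are (i) confirming that the minimal basis really does let us read off the Killing matrix coordinate-wise in the way the pre-Lemma computation claims — this is a routine trace computation using $\operatorname{ad}_{x_j}\circ\operatorname{ad}_{x_k}$ applied to basis elements and picking out diagonal contributions — and (ii) making sure the logical direction is right: we only claim necessity (semisimple $\Rightarrow$ subgraph exists for every vertex), not sufficiency, so we need only the ``nondegenerate $\Rightarrow$ no zero column'' half of Cartan's criterion, which is the elementary direction. I would also note explicitly that the quantifier structure is ``for every $v_j$ there exist $v_k, v_\ell$'', matching the statement, and that $v_j, v_k, v_\ell, v_{\ell'}$ need not be distinct (e.g.\ $k=j$ is allowed), as already remarked in the caption of Figure~\ref{fig:Killing:form:condition}.
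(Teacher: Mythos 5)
Your proposal is correct and follows essentially the same route as the paper's own proof: Cartan's criterion gives nondegeneracy of the Killing form, so no column of the Killing matrix vanishes, and Lemma~\ref{lem:killing:form:helper} then translates $B_{jk}\neq 0$ into the required Figure~\ref{fig:Killing:form:condition} subgraph for every vertex $v_j$. Your additional remarks on the quantifier structure and on which direction of Cartan's criterion is needed are accurate but not needed beyond what the paper's argument already uses.
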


\begin{proof}
    By Cartan's criterion, $\g$ is semisimple if and only if the Killing form $B$ is nondegenerate. This implies that the Killing matrix is nondegenerate and thus contains no zero column. If one now fixes an index $j\in\mathcal{N}$, one cannot have that $B_{jk}=0$ for all $k\in\mathcal{N}$. Hence, the must exists an index $k\in\mathcal{N}$ such that $B_{jk}\neq 0$. By Lemma~\ref{lem:killing:form:helper}, the Figure~\ref{fig:Killing:form:condition} subgraph structure must occur for $(v_j,v_k)$, completing the proof.
\end{proof}

We can deduce the following practical rule following Lemma~\ref{lem:killing:form:helper}: Fix $j,k\in\mathcal{N}$ and define the set $$\kap_{jk}:=\{\ell\in\mathcal{N}\;\mid\,\delta(j,\delta(k,\ell))=\ell\}.$$ Then $\kap_{jk}=\emptyset$ (and hence $B_{jk}=0$) if and only if there is no Figure~\ref{fig:Killing:form:condition} subgraph structure compatible with the vertices $v_j$ and $v_k$.

This subgraph structure provides a necessary local constraint on minimal graphs of semisimple Lie algebras and yields the support of the Killing matrix (i.e., the set of indices $(j,k)\in\mathcal{N}\times\mathcal{N}$ with $B_{jk}\neq 0$) efficiently. However, nondegeneracy is a global property, even if no column is identically zero, the determinant may vanish due to cancellations among different contributions.

To illustrate the comment on the support of the Killing matrix, consider
the graph depicted in Figure~\ref{fig:rotations:of:su2}. This corresponds to a strictly diagonal Killing matrix, as for $j\neq k$, the graph lacks any $x_\ell$ that returns to itself under $\operatorname{ad}_{x_k}\circ\operatorname{ad}_{x_k}$, hence $B_{jk}=0$ by Lemma~\ref{lem:killing:form:helper}. If, on the other hand, $j=k$,  the subgraph structure from Figure~\ref{fig:Killing:form:condition} exists within the graph, showing that the Killing form has nonzero diagonal elements.

\end{document}